\keywords{MSO transductions, implicit complexity, Dialectica categories, Church encodings}
\tikzstyle{item} = [ellipse, draw, node distance=1.5cm, align=center]
\DeclareSymbolFont{AMSb}{U}{msb}{m}{n}
\DeclareSymbolFontAlphabet{\mathbb}{AMSb}
\DeclareSymbolFontAlphabet{\mathbbl}{bbold}
\newcommand\cA{\mathcal{A}}
\newcommand\cB{\mathcal{B}}
\newcommand\cC{\mathcal{C}}
\newcommand\cD{\mathcal{D}}
\newcommand\cE{\mathcal{E}}
\newcommand\cF{\mathcal{F}}
\newcommand\cG{\mathcal{G}}
\newcommand\cH{\mathcal{H}}
\newcommand\cI{\mathcal{I}}
\newcommand\cJ{\mathcal{J}}
\newcommand\cK{\mathcal{K}}
\newcommand\cL{\mathcal{L}}
\newcommand\cM{\mathcal{M}}
\newcommand\cN{\mathcal{N}}
\newcommand\cO{\mathcal{O}}
\newcommand\cP{\mathcal{P}}
\newcommand\cQ{\mathcal{Q}}
\newcommand\cR{\mathcal{R}}
\newcommand\cS{\mathcal{S}}
\newcommand\cT{\mathcal{T}}
\newcommand\cU{\mathcal{U}}
\newcommand\cV{\mathcal{V}}
\newcommand\cW{\mathcal{W}}
\newcommand\cX{\mathcal{X}}
\newcommand\cY{\mathcal{Y}}
\newcommand\cZ{\mathcal{Z}}
\newcommand\bA{\mathbb{A}}
\newcommand\bB{\mathbb{B}}
\newcommand\bC{\mathbb{C}}
\newcommand\bD{\mathbb{D}}
\newcommand\bE{\mathbb{E}}
\newcommand\bF{\mathbb{F}}
\newcommand\bG{\mathbb{G}}
\newcommand\bH{\mathbb{H}}
\newcommand\bI{\mathbb{I}}
\newcommand\bJ{\mathbb{J}}
\newcommand\bK{\mathbb{K}}
\newcommand\bL{\mathbb{L}}
\newcommand\bM{\mathbb{M}}
\newcommand\bN{\mathbb{N}}
\newcommand\bO{\mathbb{O}}
\newcommand\bP{\mathbb{P}}
\newcommand\bQ{\mathbb{Q}}
\newcommand\bR{\mathbb{R}}
\newcommand\bS{\mathbb{S}}
\newcommand\bT{\mathbb{T}}
\newcommand\bU{\mathbb{U}}
\newcommand\bV{\mathbb{V}}
\newcommand\bW{\mathbb{W}}
\newcommand\bX{\mathbb{X}}
\newcommand\bY{\mathbb{Y}}
\newcommand\bZ{\mathbb{Z}}
\newenvironment{notations}[1][Notations]{\begin{trivlist}
\item[\hskip \labelsep {\bfseries #1}]}{\end{trivlist}}
\newcommand{\LinOrd}{{\sf LO}}
\newcommand{\partto}{\rightharpoonup}
\newcommand{\dom}{{\sf{dom}}}
\newcommand{\Finset}{{\sf FinSet}}
\newcommand{\Partfinset}{{\sf PartFinSet}}
\newcommand{\Fincoh}{{\sf FinCoh}}
\newcommand{\id}{\mathrm{id}}
\newcommand{\In}{{\sf in}}
\newcommand{\Cat}{{\sf Cat}}
\newcommand{\Set}{{\sf Set}}
\newcommand\SymAff{\sf Aff}
\newcommand\mCat{\sf MCat}
\newcommand{\bbtwo}{\mathbbl{2}}
\newcommand{\bigslant}[2]{{\left.\raisebox{.2em}{$#1$}\middle/\raisebox{-.2em}{$#2$}\right.}}
\newcommand{\occ}[2]{| #2 |_{#1}} \newcommand{\inl}{\mathsf{in}_1}
\newcommand{\inr}{\mathsf{in}_2}
\newcommand{\normalSST}[2]{{#1}^* \to_\text{SST} {#2}^*}
\newcommand{\SST}[3]{{#2}^* \to_\text{$#1$-SST} {#3}}
\newcommand{\basety}{\mathbbl{o}}
\newcommand{\retty}{\Bot}
\newcommand{\initty}{\text{{\raisebox{\depth}{\scalebox{1}[-1]{$\Bot$}}}}}
\newcommand\laml{\lambda\ell^{\oplus\with}}
\newcommand{\lamp}{\lambda\wp}
\newsavebox{\foobox}
\newcommand{\slantbox}[2][.5]
  {    \mbox
      {        \sbox{\foobox}{#2}        \hskip\wd\foobox
        \pdfsave
        \pdfsetmatrix{1 0 #1 1}        \llap{\usebox{\foobox}}        \pdfrestore
      }    }
\newcommand\fancya{\kern -0.05em \slantbox[0.4]{\text{\normalfont a}}}    
\newcommand\lama{\lambda\fancya}
\newcommand\bnfeq{\mathrel{::=}}
\newcommand\bnfalt{\; | \;}
\newcommand\len[1]{{|#1|}}
\newcommand\card[1]{{|#1|}}
\newcommand\lin{\multimap}
\newcommand\tensor{\otimes}
\newcommand\bigtensor{\bigotimes}
\newcommand\Sr{\mathcal{SR}}
\newcommand\SR{\mathfrak{SR}}
\newcommand\Register{\mathcal{TR}}
\newcommand\Registermc{\mathcal{TR}^{\mathrm{m}}}
\newcommand\Registerleonemc{\mathcal{TR}^{\mathrm{m},\le 1}}
\newcommand\Registerconflict{\mathcal{TR}^{\incoh}}
\newcommand\REGISTERconflict{\mathfrak{TR}^{\incoh}}
\newcommand\Registerleone{\mathcal{TR}^{\le 1}}
\newcommand\Registerleoneconflict{\mathcal{TR}^{\incoh, \le 1}}
\newcommand\REGISTERleoneconflict{\mathfrak{TR}^{\incoh, \le 1}}
\newcommand\REGISTER{\mathfrak{TR}}
\newcommand\REGISTERleone{\mathfrak{TR}^{\le 1}}
\newcommand\interp[1]{{\llbracket #1 \rrbracket}}
\newcommand\curlyinterp[1]{{\llparenthesis #1 \rrparenthesis}}
\newcommand\Obj{\mathsf{Obj}}
\newcommand\Hom[3]{\mathsf{Hom}_{#1}\left(#2,#3\right)}
\newcommand\varHom[3]{\mathsf{Hom}_{#1}\left(#2\text{{\Huge\textit{,}}}\;#3\right)}
\newcommand\mfC{\mathfrak{C}}
\newcommand\mfD{\mathfrak{D}}
\newcommand\unit{\mathbf{I}}
\newcommand\ev{\mathrm{ev}}
\newcommand\op{\mathrm{op}}
\newcommand\powerset{\cP}
\newcommand\powerfin{\cP_{<\infty}}
\newcommand\Ty{\mathrm{Ty}}
\newcommand\Tree{\mathbf{Tree}}
\newcommand\Treety{\mathrm{Tree}}
\newcommand\normal{\mathrm{reduce}}
\newcommand\normalmap{reduce}
\newcommand\prune{\mathrm{prune}}
\newcommand\prunemap{{prune}}
\newcommand\contract{\mathrm{contract}}
\newcommand\contractmap{{contract}}
\newcommand\normalforests[2]{\mathrm{NF}\left({#1},{#2}\right)}
\newcommand\embeddinjunk{\mathtt{padwithjunk}}
\newcommand\arity{\mathrm{ar}}
\newcommand\LTree{\mathbf{LTree}}
\newcommand\restr[2]{#1 \upharpoonright #2}
\newcommand\forestify[1]{\mathrm{Grph}(#1)}
\newcommand\forestifymorph[1]{\widehat{#1}}
\newcommand\lam{\lambda}
\newcommand\Lamcat{\mathcal{L}}
\newcommand\tuple[1]{\left\langle #1 \right\rangle}
\newcommand\tlet[3]{{\mathsf{let}} \; {#1} = {#2} \; {\mathsf{in}} \; {#3}}
\providecommand\case{{\mathsf{case}}}
\newcommand\abort{{\mathsf{abort}}}
\newcommand\mfLam{\mathfrak{L}}
\newcommand\mfM{\mathfrak{M}}
\newcommand\Str{\mathrm{Str}}
\newcommand\Nat{\mathrm{Nat}}
\newcommand\Bool{\mathrm{Bool}}
\newcommand\NF{\mathrm{NF}}
\newcommand\NE{\mathrm{NE}}
\newcommand\alphtorankedalph[1]{\overline{#1}}
\newcommand\lamlequiv{{\beta\eta}}
\newcommand\mcattoaff[1]{#1_{\mathrm{aff}}}
\newcommand\tomcat[1]{#1_{\mathrm{mcat}}}
\newcommand\toconflict[1]{#1_{\mathrm{coh}}}
\newcommand\longto\longrightarrow
\newcommand\web[1]{\|#1\|}
\newcommand\clique{\mathrm{Cl}}
\newcommand{\ttl}{\triangleleft}
\newcommand{\ttr}{\triangleright}
\newcommand{\BinTree}{\mathrm{BinTree}}
\newcommand{\dBinTree}{\partial\mathrm{BinTree}}
\newcommand{\ExprBT}{\mathrm{ExprBT}}
\newcommand{\ExprdBT}{\mathrm{Expr}\partial\mathrm{BT}}
\newcommand{\assi}{\mathcal{A}}
\newcommand{\Conf}{\mathrm{Conf}}
\newcommand{\trpair}[2]{\lceil #1 , #2 \rceil}
\newcommand{\trnode}{\lceil\cdot\rceil}
\newcommand{\BT}{\mathtt{BT}}
\newcommand\extr{\varepsilon}
\newcommand\cocoeq{\approx_c}
\begin{document}

\title[Streaming transducers vs categorical semantics]{Implicit automata in typed $\lambda$-calculi II:\\ streaming transducers vs categorical semantics}

\author[L.~T.~D.~{\fontencoding{T5}\selectfont{}Nguyễn}]{{\fontencoding{T5}\selectfont{}Lê Thành Dũng (Tito) Nguyễn}}	\address{Laboratoire
    d'informatique de Paris Nord, Villetaneuse, France}	\email{nltd@nguyentito.eu}  
\author[C.~Noûs]{Camille Noûs}
\address{Laboratoire Cogitamus}
\urladdr{\url{https://www.cogitamus.fr/camilleen.html}}

\author[P.~Pradic]{Pierre Pradic}
\address{Department
    of Computer Science, University of Oxford, United Kingdom}
\email{pierre.pradic@cs.ox.ac.uk}

\begin{abstract}
We characterize regular string transductions as programs in a linear
$\lambda$-calculus with additives. One direction of this equivalence is proved
by encoding copyless streaming string transducers (SSTs), which compute regular
functions, into our $\lambda$-calculus. For the converse, we consider
a categorical framework for defining automata and transducers over words, which
allows us to relate register updates in SSTs to the semantics of the linear
$\lambda$-calculus in a suitable monoidal closed category.

To illustrate the relevance of monoidal closure to automata theory, we leverage this notion
to give abstract generalizations of the arguments showing that copyless SSTs may be determinized
and that the composition of two regular functions may be implemented by a copyless SST.

Our main result is then generalized from strings to trees using a similar
approach. In doing so, we exhibit a connection between a feature of streaming
tree transducers and the multiplicative/additive
distinction of linear logic.

\end{abstract}

\maketitle

\section{Introduction}

We recently initiated~\cite{aperiodic} a series of works at the interface of
programming language theory and automata. As the title suggests, the present
paper is the second installment; it starts with an introduction to this research
programme, meant to be accessible with a general computer science background
(\Cref{sec:intro-intro}). After stating a main theorem, we shall argue, in two
mostly independent subsections, that these connections between two fields that
we investigate:
\begin{itemize}
\item are relevant to natural questions on the
  \mbox{$\lambda$-calculus} (\Cref{sec:motiv-lambda});
\item provide new conceptual insights into automata theory
  (\Cref{sec:motiv-automata}).
\end{itemize}
\Cref{sec:intro-cat} exposes some of our key technical ideas, stressing the role
of category theory as a mediating language.
A table of contents is provided after this introduction.

\subsection{What is this all about?}
\label{sec:intro-intro}

\subsubsection{From proofs-as-programs to implicit complexity}

One of the central principles in the contemporary theory of programming
languages is a close relationship between constructive logics and statically
typed functional programming, known as the \emph{proofs-as-programs
  correspondence}, also known as the \emph{formulae-as-types} or
  \emph{Curry--Howard} correspondence. The idea is that, in certain logical
systems, proofs admit a \enquote{normalization procedure} that can be seen as
the execution of a program. According to this analogy, a proof is thus a
program, and the \emph{formula} that it proves is the \emph{type} of the
program. A remarkable empirical fact is that this manifests as several concrete
isomorphisms between (theoretical) languages and proof systems that were
designed independently.

An important point is that \emph{termination} on the programming side is highly
desirable in this context since it entails \emph{consistency} on the logical
side. Take for instance the \emph{untyped \mbox{$\lambda$-calculus}}, one of the models
of computation that led to the birth of computability theory in the 1930s,
nowadays used as a theoretical foundation for functional programming. It allows
non-terminating programs. The \emph{simply typed $\lambda$-calculus} adds a type
system on top of it; one can then rule out this possibility of non-termination
by only allowing well-typed programs, thus ensuring the consistency of the
corresponding logical system (here, intuitionistic propositional logic) at the
price of losing Turing-completeness.

Such a termination guarantee might even come with quantitative time complexity
bounds. For instance, Hillebrand et al.~\cite{HKMairson} show that programs in
the simply typed \mbox{$\lambda$-calculus} operating over certain data encodings
and returning booleans can compute all functions\footnote{This does \emph{not}
  mean that a given algorithm with elementary complexity must admit a direct
  implementation in the simply typed \mbox{$\lambda$-calculus}; instead, what
  must exist is a $\lambda$-calculus program computing the same function from
  inputs to outputs, with potentially different inner workings.} in the
complexity class $\mathsf{ELEMENTARY}$ (i.e.\ those with a time complexity
bounded by a tower of exponentials), and only those. This result illustrates the
type-theoretic approach to \emph{implicit computational complexity}, a
well-established field concerned with machine-free characterizations of
complexity classes via high-level programming languages\footnote{We refer to the
  introduction of our previous paper \cite{aperiodic} for a discussion of the
  difference between implicit computational complexity and descriptive
  complexity.}. Many works in this area have taken inspiration from \emph{linear
  logic}~\cite{girardLL} to design more sophisticated type systems, starting
with two characterizations of polynomial time~\cite{girardBLL,girardELL}. As
another example, Linear Logic by Levels~\cite{LLLevels} also characterizes
$\mathsf{ELEMENTARY}$ and admits a translation from the simply typed
$\lambda$-calculus~\cite{MELLbyLevel}.

\subsubsection{Implicit automata}

Let us consider \emph{strings} over some finite alphabet as input. Functions
mapping these strings to booleans are equivalent to sets of strings, and the
latter are called \emph{languages}. Complexity classes (of decision problems)
are often defined as \emph{sets of languages}, but such sets also arise in
\emph{automata theory}. A typical example is the class of \emph{regular
  languages}, that can be defined by regular expressions or equivalently by
finite automata (we assume here that the reader knows about those): usually, it
is not considered a complexity class, although this statement is sociological
rather than formal\footnote{One possible technical argument is that the class of
regular languages is not closed under $\mathsf{ALOGTIME}$ (i.e.\ uniform
$\mathsf{AC}^0$) reductions.}.

Our research programme aims to provide for automata what (type-theoretic)
implicit complexity has done for complexity classes. A characterization of
regular languages had already been obtained by Hillebrand and Kanellakis in the
simply typed $\lambda$-calculus~\cite[Theorem~3.4]{HillebrandKanellakis}.
Starting from this, we characterized~\cite{aperiodic} the smaller class of
\emph{star-free languages} by relying on a richer type system that supports
so-called \emph{non-commutative} types. As mentioned in~\cite[\S7]{aperiodic},
some other results of this kind already exist, but not many; and as far as we
know, the idea of \enquote{implicit automata} as a topic worthy of systematic
investigation had not been put forth before in writing.

\subsubsection{Transducers}

Here, our goal is to go beyond languages and to consider \emph{string-to-string
  functions} instead. There is a wide variety of classes of such functions that
appear in automata theory, and several of them collapse to regular languages
when we restrict them to a single output bit (this is the case for the so-called
sequential functions, rational functions, regular functions\ldots{} see the
surveys~\cite{siglog,MuschollPuppis}). In other words, many automata models that
recognize regular languages are no longer equivalent when extended with the
ability to produce an output string. Such automata with output are called
\emph{transducers}. We could therefore expect fine distinctions between the
feature sets of various $\lambda$-calculi to be reflected in differences between
the string functions that they can express.

Yet we only know of two precedents for string-to-string transduction classes
captured using typed functional programming: sequential functions in a cyclic
proof system~\cite{SubstructuralAutomata} and polyregular functions in a
$\lambda$-calculus with primitive data types~\cite[\S4]{polyregular}. Both
are discussed further in the prequel paper~\cite[\S7]{aperiodic}.

This brings us to our first main theorem:
\begin{restatable}{thm}{mainstring}
\label{thm:main-string}
  A function $\Sigma^* \to \Gamma^*$ is \emph{$\laml$-definable} if and only if
  it is \emph{regular}.
\end{restatable}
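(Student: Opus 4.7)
The plan is to prove the two inclusions separately, as suggested by the abstract: a syntactic encoding of copyless streaming string transducers into $\laml$ for one direction, and a denotational argument in a suitable monoidal closed category for the other.

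For the direction \emph{regular $\Rightarrow$ $\laml$-definable}, I would encode a given copyless SST as a $\laml$-term operating on Church-style encodings of strings. The tuple of string-valued registers is represented by a multiplicative tensor product of string types, while the finite set of states is encoded using the additive connectives $\oplus$ or $\with$. A single transition then becomes a linear function that consumes the current letter, case-analyses on the current state, and produces the updated tuple of registers together with the new state. Folding this transition along the input string via the Church iterator, and then reading off the designated output register at the end, yields the desired term. Crucially, the \emph{copyless} condition on the SST translates directly into the linearity discipline of $\laml$, since each register variable appears exactly once on the right-hand side of every update rule.

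For the converse direction \emph{$\laml$-definable $\Rightarrow$ regular}, the plan is semantic. I would identify a monoidal closed category, in the spirit of the register categories $\Sr$ or $\SR$ that the paper develops, whose hom-sets at the ground types consist only of operations realisable by copyless register updates. Interpreting $\laml$ in this category via the standard semantics of a linear calculus with additives would then send every closed term of type $\Sigma^* \lin \Gamma^*$ (in its Church-encoded form) to a morphism from which a copyless SST can be mechanically extracted. The Church-encoded iteration over the input on the syntactic side should, on the semantic side, correspond exactly to running the extracted transducer letter by letter.

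The main obstacle I anticipate is this converse direction, and more specifically the \emph{construction of the right category}. It must be monoidal closed, so that the higher-order $\lin$-arrows of $\laml$ receive an honest interpretation, yet its morphisms at the ground types must stay restrictive enough that every such morphism corresponds to a copyless, not merely arbitrary, register update. Reconciling these demands is delicate, because passing through function types tends to introduce higher-order objects whose elements duplicate or discard in ways incompatible with copylessness. A secondary difficulty, on the syntactic side, is to ensure that the natural encoding of the finite state by additive sums itself remains linear, which likely requires threading the state alongside the tensorial register context rather than treating it as separately available data.
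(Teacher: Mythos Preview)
Your high-level plan matches the paper's architecture: encode SSTs into $\laml$ for one direction, and use a categorical semantics for the converse. Two points deserve sharpening.

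First, the obstacle you correctly anticipate has a specific resolution that your phrasing slightly obscures. You frame the goal as finding a single category that is simultaneously monoidal closed \emph{and} has only copyless register updates as ground morphisms. The paper's observation is that no such category exists: the natural category $\Sr$ of copyless register transitions is \emph{not} closed. The fix is to enlarge $\Sr$ by freely adding finite products and then finite coproducts, obtaining $\Sr_{\oplus\with} \cong (\Sr_\with)_\oplus$; this Dialectica-like completion \emph{is} symmetric monoidal closed (Theorem~\ref{thm:smcc-string}). Interpreting the purely linear fragment of $\laml$ there yields an $\SR_{\oplus\with}$-SST, not a plain copyless SST. A separate, non-trivial argument then shows $\SR_{\oplus\with}$-SSTs are no more expressive than $\SR$-SSTs; this reduction passes through a determinization-style result (Theorem~\ref{thm:with-reg-conservativity}) relating the $\with$-completion to non-deterministic SSTs. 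So the converse direction decomposes into two independent pieces: monoidal closure of the completed category, and collapse of the completions at the level of SST expressiveness.

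Second, you omit a genuinely necessary syntactic step. A $\laml$-term of type $\Str_\Sigma[\kappa] \lin \Str_\Gamma$ is not \emph{a priori} of a shape from which one can read off a transducer: the non-linear arrows in $\Str_\Sigma[\kappa]$ and the positive connectives $\otimes,\oplus$ in $\kappa$ mean that normal forms are not immediately structured as ``output function composed with an iterated transition''. The paper proves a dedicated syntactic lemma (Lemma~\ref{lem:laml-niceshape}) establishing that, up to $\beta\eta$, every such term has the shape $\lambda s.\,\lambda^\oc b_1\ldots\lambda^\oc b_k.\; o\,(s\,d_1\ldots d_n)$ with $o$ and the $d_i$ purely linear and independent of $s$. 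This requires a careful normalization argument handling commuting conversions for the positive connectives, and is what allows the reduction to single-state $\mfLam$-SSTs (Lemma~\ref{lem:lamlsst}) before the semantic interpretation is applied.
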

By \enquote{$\laml$-definable}, we mean expressible in the $\laml$-calculus (a
system based on Intuitionistic Linear Logic) in the specific but mostly standard way described in Definition~\ref{def:laml-definable}.
As for \emph{regular functions}, they are a well-studied class with many equivalent
definitions, for instance two-way transducers or monadic second-order
logic~\cite{EngelfrietHoogeboom}. We may also point to several recent
formalisms~\cite{regularcombinators,RTE,Daviaud} for regular functions using
combinators as belonging to \enquote{implicit complexity for automata}, albeit
not of the type-theoretic kind (implicit complexity is an umbrella term which
traditionally also includes tools such as recursive function algebras or term
rewriting).

\subsection{Internal motivations from typed $\lambda$-calculi}
\label{sec:motiv-lambda}

We mentioned earlier a characterization of $\mathsf{ELEMENTARY}$ in the simply
typed $\lambda$-calculus by Hillebrand et al.~\cite{HKMairson}. It uses a
somewhat unusual (though perfectly justified) representation for the inputs. The
conventional choice would have been the \emph{Church encoding} of strings. They
are indeed the usual tool to represent all computable functions in the untyped
$\lambda$-calculus, and in some terminating type systems, any
\enquote{reasonable} function can still be programmed over these encodings (for
example, this is the case for System F~\cite{WadlerGR}). But
in the simply typed case, some earlier results by Statman had suggested a hopeless lack of
expressiveness (see e.g.\ the discussion in~\cite{FortuneLeivant} after its
Theorem~4.4.3.). Then Hillebrand and Kanellakis's aforementioned
result~\cite{HillebrandKanellakis}
showed that, surprisingly, one gets the class of regular languages by using
Church-encoded strings in the simply typed $\lambda$-calculus!

Recently, the present paper's first author reused their ideas in~\cite{ealreg}
to solve an open problem from \enquote{standard} implicit complexity, concerning
a characterization of polynomial time by Baillot et al.~\cite{Benedetti}
that makes use of Church encodings. The question was
whether a certain feature (namely type fixpoints) was necessary for this result.
It turns out~\cite{ealreg} that when this feature is removed, the class of
languages obtained is regular languages instead of
$\mathsf{P}$.\footnote{
Digression: in~\cite{logspace}, we explored the input representation
of~\cite{HKMairson} transposed into a language similar to that
of~\cite{Benedetti}, without these type fixpoints. We gathered some evidence
suggesting that one gets a characterization of deterministic logarithmic space
(though the upper bound that we manage to prove is a bit weaker).
}

The moral of the story, for us, is that the use of Church-encoded strings can
lead naturally to connections with automata theory. Admittedly, this naturality
judgment is inherently subjective. But concretely, it translates into a
methodological commitment: we explore the expressiveness of typed
$\lambda$-calculi that already exist (perhaps up to a few minor details),
whereas it is usual in implicit
complexity to engineer some (potentially ad-hoc) new type system to achieve
desired complexity effects. Most of the time, the features of these preexisting
\mbox{$\lambda$-calculi} have original motivations that are entirely unrelated
to complexity or automata (for instance, the non-commutative types that we used
in~\cite{aperiodic} originate from the study of natural language~\cite{Lambek}
and the topology of proofs (see e.g.~\cite[\S{}II.9.]{TowardsGoI})).

In the case of the simply typed $\lambda$-calculus, characterizing the definable
string-to-string functions in the style
of~\cite[Theorem~3.4]{HillebrandKanellakis} (again!) is in fact an old open
problem (while a more restrictive notion of $\lambda$-definability
is well understood~\cite{Zaionc}). As we are not yet able to solve it, we
instead tackle a version where \emph{linearity} constraints have been added,
resulting in Theorem~\ref{thm:main-string}. Recall that a function is said to be linear
(in the sense of linear logic) when it uses its argument only once. The system
that we use to express these constraints is Dual Intuitionistic Linear
Logic~\cite{Barber} with additive connectives (called here the
$\laml$-calculus).

\subsection{Conceptual interest for (categorical) automata theory}
\label{sec:motiv-automata}

This notion of linearity in programming language theory has a counterpart in the
old theme of \emph{restricting the copying power} of automata models (see
e.g.~\cite{ERS}). The latter is manifested in one of the possible definitions of
the regular functions mentioned in Theorem~\ref{thm:main-string}:
\emph{copyless\footnote{The adjective \enquote{copyless} does not appear in the
    original paper~\cite{SST} but is nowadays commonly used to distinguish them
    from the later \emph{copyful SSTs}~\cite{FiliotReynier}.} streaming string
  transducers} (SSTs)~\cite{SST}. An SST is roughly speaking an automaton whose
internal memory consists of a state (in a finite set) and some string-valued
registers, and its transitions are copyless when they compute new register
values without duplicating the old ones.

Put this way, Theorem~\ref{thm:main-string} seems unsurprising. But there is more going
on behind the scenes. In particular, while it is trivial that the composition of
two $\laml$-definable functions is itself $\laml$-definable, composing copyless
SSTs requires intricate combinatorics as can be seen
in~\cite[Chapter~13]{Toolbox} for example.
As it turns out,
the tools developed in order to prove Theorem~\ref{thm:main-string} also yield a clean
proof of the closure of copyless streaming string transducers under composition,
which it even generalizes using the language of \emph{category theory}, see
below.

Another subtlety comes from our extension of Theorem~\ref{thm:main-string} to
\emph{ranked trees}:
\begin{restatable}{thm}{maintree}
  \label{thm:main-tree}
  Let ${\bf\Sigma}$ and ${\bf\Gamma}$ be ranked alphabets. A function
  $\Tree({\bf \Sigma}) \to \Tree({\bf \Gamma})$ is $\laml$-definable if and only
  if it is regular.
\end{restatable}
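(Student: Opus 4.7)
The plan is to adapt the two directions of \Cref{thm:main-string} from strings to ranked trees, following the same blueprint: encode transducers into the $\laml$-calculus for one direction, and use categorical semantics to extract transducers from $\lambda$-terms for the other. The notion of regularity to be matched is that of \emph{(bottom-up) streaming tree transducers} (STTs), i.e.\ finite-state automata maintaining a tuple of registers holding \emph{tree contexts} (trees with zero or more distinguished holes), whose transitions update the registers in a copyless fashion.

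For the direction from STTs to $\laml$, I would exploit the standard Church-like encoding of a ranked tree over $\bm{\Sigma}$ as its own fold operator: given an interpretation of each $k$-ary constructor as a $k$-ary function into a carrier type $\tau$, the encoded tree returns the folded value of type $\tau$. Simulating an STT on such an encoded input amounts to taking $\tau$ to be a tuple of register contents parametrised by a finite state and supplying the fold with the register-update function read off from the STT transitions. Linearity of this update function in the $\laml$ sense is precisely the copylessness of the STT. The new subtlety relative to the string case is that the type $\tau$ must track, for each register, the arity of the context it holds, so that registers with holes eventually receive tree arguments in the right positions.

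For the converse direction, I would extend the categorical framework used to prove the string half of \Cref{thm:main-string}. A $\laml$-term of the relevant type, when applied to a Church-encoded tree, must factor through an iteration in any monoidal closed category interpreting the calculus. The task is then to identify a suitable concrete category -- the tree analogue of the string category of transition registers -- whose morphisms can be directly read as STT register updates, and to check soundness of the resulting decomposition. As in the string case, one expects the abstract setting to yield closure under composition and determinization of STTs as a by-product, once the right category has been pinned down.

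The main obstacle, as hinted in the abstract, lies in the richer combinatorics of tree contexts: a register no longer just holds a string (essentially a one-hole context combined by concatenation) but a context whose arity varies and whose holes are eventually filled by values produced at subtrees. Strings can thus be handled multiplicatively via $\otimes$ alone, whereas trees demand that hole-arities appear in types and that the choice of \emph{where} a fresh subtree result is inserted be treated by case analysis. I therefore expect an essential interplay between the multiplicative $\otimes$ (gluing contexts together) and the additive $\with$ (selecting among possible hole placements), both in the definition of the STT model used on the syntactic side and in the construction of the target category on the semantic side. Pinning down this interplay, and then verifying that the categorical factorization really does land inside a class of STT-representable morphisms, is where the bulk of the technical effort should be concentrated.
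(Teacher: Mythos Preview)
Your high-level plan mirrors the paper's: encode transducers into $\laml$ for one direction, and interpret $\laml$ in a concrete category of register transitions for the other. But there is a genuine gap in the transducer model you target.

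You take as your reference notion of regular tree function the class computed by \emph{copyless} bottom-up tree transducers. The paper's whole point in \S\ref{subsubsec:prelim-trees-inputs} is that this is (conjecturally) \emph{not} enough: the conditional swap of Example~\ref{ex:condswap} is regular but is believed not to be computable by any copyless BRTT. The class that does characterize regular tree functions is Alur and D'Antoni's \emph{single-use-restricted} BRTTs, where registers carry a conflict relation and assignments may duplicate a register provided the copies land in conflicting targets. Consequently, your sentence ``linearity of this update function in the $\laml$ sense is precisely the copylessness of the STT'' is false for trees (it is only true for strings), and the encoding direction as you describe it would not cover all regular tree functions.

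This also means you have misidentified the role of the additive conjunction. The `$\with$' is not about ``selecting among possible hole placements'' (multi-hole versus single-hole contexts is handled purely multiplicatively together with `$\oplus$', cf.\ Lemma~\ref{lem:leone}); rather, `$\with$' is exactly what bridges copylessness and the single use restriction. The paper makes this precise by showing (\S\ref{subsec:with-trees}) that $\REGISTER_\with$-BRTTs and single-use-restricted BRTTs are equiexpressive, via a construction through coherence spaces. On the semantic side, the key step you do not mention is that $\Register_{\oplus\with}$ is symmetric monoidal \emph{closed} (Theorem~\ref{thm:REGISTER-closed}); without this, there is no target for the interpretation of purely linear $\laml$-terms, and your ``categorical factorization'' has nowhere to land.
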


The class of \emph{regular tree functions} is obtained by generalizing the
definition for strings based on monadic second-order logic (MSO, see
e.g.~\cite{FOTree}). There is also an automata model adapting SSTs to trees,
namely the \emph{bottom-up ranked tree transducers} (BRTT)~\cite{BRTT}. However,
it is conjectured that some regular functions \emph{cannot} be computed by
\emph{copyless} BRTTs. Instead, a more sophisticated linearity condition, called
the \emph{single use restriction}\footnote{The expression \enquote{single use
    restriction} already appears in much earlier automata models for regular
  tree functions: attributed tree transducers~\cite{AttributedMSO} and macro
  tree transducers~\cite{MacroMSO}. This suggests that some kind of linearity
  condition is at work in those models, though we have not investigated this
  point further.}, is imposed on BRTTs in~\cite{BRTT} in order to characterize
the regular tree functions. The additional flexibility\footnote{Alternative
  options to restore enough expressive powers are copyless BRTTs with regular
  look-ahead (considered in~\cite[\S3.4]{BRTT} for streaming transducers
  over unranked trees) or preprocessing by MSO
  relabelings~\cite[\S4]{FOTree}. Beware: in the latter reference, the
  term \enquote{single use} refers to copyless assignments.} thus afforded,
compared to copyless BRTTs, turns out to correspond directly to an important
feature of linear type systems, namely the \emph{additive conjunction}.

As a concrete manifestation of this correspondence, we conjecture that all tree
functions definable in the $\lama$-calculus of our previous paper~\cite{aperiodic}
can be computed by copyless BRTTs. This $\lama$-calculus does not contain the additive connectives
$\with/\oplus$ of linear logic; to compensate, it has an affine type system, instead
of a linear one (whence the $\fancya$).
We leave the proof of this fact for future work.

In the case of strings, single-use-restricted streaming string transducers are
very close to copyless \emph{non-determinstic} SSTs. (That additive connectives in linear logic have something to do with non-determinism
has previously been observed in other settings, for instance in~\cite{MairsonTerui}.)
Their equivalence with copyless SSTs thus corresponds to a
\emph{determinization} theorem, that already has an indirect proof via
MSO~\cite{NSST}. We provide here a direct construction, whose main technical
ingredient is the \enquote{transformation forest} data structure applied to
copyless SSTs in~\cite[Chapter~13]{Toolbox} and reminiscent of the
Muller--Schupp determinization~\cite{muller_schupp} for automata over infinite
words. Most importantly, this determinization result is again formulated in a
general \emph{category-theoretic} setting.

Together with the aforementioned analysis of the composition of SSTs, those are
our two contributions to \enquote{categorical automata theory}. This kind of use
of categories to understand the essence of various constructions on automata --
such as determinization or minimization -- and to generalize them to other
settings has a long history, see for instance~\cite{TreeAutomataEndofunctor} and
the many references therein\footnote{There are also connections between
  categories and algebraic language theory~\cite{Tilson}, which however seemed
  less relevant to our work here.}.

\subsection{Transducers over monoidal closed categories}
\label{sec:intro-cat}

Another example of categorical automata theory is the work of Colcombet and
Petrişan~\cite{ColcombetPetrisanSIGLOG,ColcombetPetrisan} on minimization, whose
direct relevance to us lies in the \emph{categorical framework for automata
  models} that it introduces: objects serve as state spaces and morphisms as
transitions. Our technical development takes place in a very similar framework.
\begin{itemize}
\item We first define a category $\Sr$ that corresponds to \emph{single-state}
  copyless SSTs.
\item Since copylessness and the so-called single use restriction morally differ
  by the presence of the additive conjunction `$\with$' of linear logic, we
  \enquote{add `$\with$' freely} to achieve a similar effect: automata in the
  resulting category $\Sr_\with$, although not identical to single-state single
  use restricted SSTs, are easily seen to be equally expressive.
\item Finally, we perform another \enquote{completion} denoted $(-)_\oplus$ to
  incorporate a finite set of states, so that $\Sr_\oplus$ (resp.\
  $(\Sr_\with)_\oplus$) corresponds to usual -- i.e.\ stateful -- copyless
  (resp.\ single use restricted) SSTs. (Similar completions by certain colimits
  have been previously exploited~\cite{ColcombetPetrisanVect} within Colcombet
  and Petrişan's framework.)
\end{itemize}
The linchpin on which the various results previously mentioned rely is:
\begin{thm}
  \label{thm:smcc-string}
  $(\Sr_\with)_\oplus$ is a \emph{symmetric monoidal closed} category (SMCC).
\end{thm}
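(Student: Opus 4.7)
The plan is to build the SMCC structure on $(\Sr_\with)_\oplus$ in three stages, matching the layered construction described above. First, I would exhibit a symmetric monoidal structure on $\Sr$ itself, taking the tensor to be the disjoint union of register sets (with the empty register set as unit): since disjoint unions of copyless register assignments remain copyless, bifunctoriality and coherence are immediate from the standard coherence of disjoint union in $\Finset$.

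For the second stage, since $\Sr_\with$ freely adjoins finite products, I extend the tensor by forcing it to distribute over $\with$ on both sides:
\[
\langle A_i\rangle_{i \in I} \otimes \langle B_j\rangle_{j \in J} \;:=\; \langle A_i \otimes B_j\rangle_{(i,j)\in I \times J},
\]
extending to morphisms via the universal property of $\with$. This makes $\Sr_\with$ symmetric monoidal. For the internal hom, I would first reduce to computing $[X, B]$ with $B \in \Sr$, using the formula $[X, \langle B_j\rangle_J] \cong \bigwith_j [X, B_j]$ forced by the product structure of the codomain; then build $[X, B]$ explicitly by inspecting the copyless updates out of $X \otimes B$ and organizing the possible ``partitions of registers between the two tensor factors'' using a freely-added $\with$.

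For the third stage, viewing $(\Sr_\with)_\oplus$ as a $\Fam$-style free finite-coproduct completion, the tensor extends pointwise, $(X_k)_{k \in K} \otimes (Y_l)_{l \in L} := (X_k \otimes Y_l)_{(k,l)}$, and the internal hom is given by the standard formula
\[
[(Y_l)_L, (Z_m)_M] \;:=\; \Bigl(\bigwith_{l \in L} [Y_l, Z_{\phi(l)}]\Bigr)_{\phi \in M^L}\!,
\]
where the inner $\bigwith$ crucially uses the finite products provided in the previous stage. The adjunction $(-) \otimes (Y_l)_L \dashv [(Y_l)_L, -]$ then unwinds into a routine Fam-style hom-set calculation.

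The main obstacle is the middle stage: the free product completion of an arbitrary symmetric monoidal category need not be closed, so establishing closure for $\Sr_\with$ requires genuine use of the specific combinatorics of copyless updates---essentially, that morphisms out of $A \otimes B$ in $\Sr$ are controlled by finite combinatorial data (how output registers split between the $A$- and $B$-factors), whose alternatives can be packaged using the freely-added $\with$ to produce a right adjoint to $A \otimes (-)$. Once $\Sr_\with$ is shown to be an SMCC, the passage to $(\Sr_\with)_\oplus$ is an instance of the general fact that $\Fam$ preserves symmetric monoidal closure whenever the base has finite products, which is why the $\with$-completion had to be performed first.
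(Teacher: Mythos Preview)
Your stages 1 and 3 are fine, but stage 2 fails: $\Sr_\with$ is \emph{not} monoidal closed. The finite combinatorial ``shape'' data of a copyless transition---which input registers land in which output register, and in what order---is a \emph{case distinction}: any given morphism $C+R\to S$ realizes exactly one shape. When currying, such a case split must be internalized as a \emph{coproduct} in the hom object (a map into $D_1\oplus D_2$ is a map into one or the other), not as a product (a map into $D_1\with D_2$ supplies both simultaneously). A concrete obstruction: over $\Gamma=\varnothing$, the set $\Hom{\Sr}{\varnothing+\{\bullet\}}{\{\bullet\}}$ has two elements ($\varepsilon$ and $\bullet$), whereas $\Hom{\Sr_\with}{\iota_\with(\varnothing)}{\bigwith_z D_z}=\prod_z\Hom{\Sr}{\varnothing}{D_z}=\prod_z 1=1$ for any family $(D_z)_z$; so no object of $\Sr_\with$ can serve as $[\{\bullet\},\{\bullet\}]$.

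The paper proceeds differently. It first shows (Lemma~\ref{lem:hom-reg-oplus}) that internal homs $\iota_\oplus(R)\lin\iota_\oplus(S)$ between \emph{basic} objects exist in $\Sr_\oplus$, precisely by indexing a formal $\bigoplus$ over the finite set of shapes. Neither $\Sr_\oplus$ nor $\Sr_\with$ is closed on its own; instead, a general result (Theorem~\ref{thm:dial-haslin}) shows that whenever $\cC_\oplus$ has such basic homs, the combined completion $\cC_{\oplus\with}\cong(\cC_\with)_\oplus$ is closed, with a Dialectica-style hom that alternates $\with$ and $\oplus$:
\[
A \lin B \;=\; \bigwith_{u} \bigoplus_{v} \bigwith_{y} \bigoplus_{x}\; \iota_\with^\oplus\bigl(\iota_\oplus(A_x)\lin\iota_\oplus(B_y)\bigr).
\]
The interplay of both completions is essential, and the argument cannot be factored through closure of $\Sr_\with$ as you propose.
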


On the one hand, from the point of view of categorical automata theory, an SMCC
provides a setting in which constructions relying on \emph{function spaces}
(i.e.\ internal homsets) can be carried out (this is typically the case when one
exploits the finiteness of $Q^Q$ for any finite set of states $Q$). This is the
case for our composition result, whose general version is stated over arbitrary
SMCCs. While $(\Sr_\with)_\oplus$ is an SMCC, $\Sr_\oplus$ is \emph{not}, and
this explains why composing copyless SSTs (that correspond to $\Sr_\oplus$)
directly is difficult: instead, a detour through $(\Sr_\with)_\oplus$ allows us
to apply Theorem~\ref{thm:smcc-string}. This move from $\Sr_\oplus$ to
$(\Sr_\with)_\oplus$ is made possible by our abstract determinization argument,
which itself relies on the existence of \emph{some} (but not all) function
spaces in $\Sr_\oplus$.

On the other hand, for the programming language theorist, the notion of SMCC is
an axiomatization of the \emph{denotational semantics} for
the \enquote{purely linear} fragment of our \mbox{$\laml$-calculus}.
We can therefore apply a \emph{semantic evaluation} argument,
following a long tradition in implicit complexity
(cf.~\cite{HillebrandKanellakis,Terui}), to deduce Theorem~\ref{thm:main-string}
from Theorem~\ref{thm:smcc-string};
similarly, Theorem~\ref{thm:main-tree} follows from the
monoidal closure of a category $(\Register_\with)_\oplus$ for trees.
(Semantics of linear logic have also been applied to \emph{higher-order
  recursion schemes}, a topic at the interface with automata,
in~\cite{grellois,HOParity,MAHORS}, as well as to the purely automata-theoretic
\emph{Church synthesis problem} in some publications~\cite{LMSO,LMSO2} coauthored
by the present paper's third author.) That said, to create suitable conditions
for semantic evaluation, a quite lengthy syntactic analysis is required, with
the presence of positive connectives in the $\laml$-calculus causing some
complications.

At this point, we must mention the kinship of this $(\Sr_\with)_\oplus$ with one
of the earliest denotational models of linear logic, the \emph{Dialectica
  categories}~\cite{PaivaDialectica} (originating as a categorical account of
Gödel's \enquote{Dialectica} functional interpretation~\cite{godeldial}).
Composing the \emph{free finite coproduct completion} $(-)_\oplus$
with its dual product completion $(-)_\with$ is
indeed reminiscent of a factorization into free sums and free products of a
generalized Dialectica construction~\cite{HofstraDialectica}.
Thus, \mbox{Theorem~\ref{thm:smcc-string}} holds for reasons similar to those
for the monoidal closure of
Dialectica categories (with a function space formula that resembles the
interpretation of implication in~\cite{godeldial}).
Such Dialectica-like structures have appeared
in quite varied contexts in the past few years, such as \emph{lenses} from
functional programming and \emph{compositional game theory} (see
e.g.~\cite[\S4]{Hedges18} for both), and, more in line with the topic of this
paper, the aforementioned works on Church's synthesis~\cite{LMSO,LMSO2} (and
a closely related work on automata over infinite trees~\cite{Colin}).

To wrap up this introduction, let us mention that as a bridge between
$\laml$-definability and those automata models, we also define within our
categorical framework a notion of transducer whose memory is made up of (purely
linear) $\laml$-terms. This idea of using linear $\lambda$-terms inside a
transducer model also appears in a recent characterization of regular
tree functions~\cite{LambdaTransducer}.

\section*{Acknowledgment}
\noindent We thank Zeinab Galal for her comments on free (co)completions,
Sylvain Salvati for discussions on connections between transducers and simply
typed $\lambda$-calculi, and Gabriel Scherer for his advice regarding the
intricacies of normalization of the $\laml$-calculus.

Some of the ideas presented here were developed concurrently
with, and are inextricably linked to, those of our previous
paper~\cite{aperiodic}; therefore, we also express our gratitude again to the
many people cited in the latter's acknowledgments.

\tableofcontents
\newpage
\section{Preliminaries}

\subsection{Notations \& elementary definitions}
\label{subsec:notations}

\subsubsection{Sets and categories}

The cardinality of a set $X$ is written $\card{X}$.
We sometimes consider a family $(x_i)_{i \in I}$ as a
map $i \mapsto x_i$, which amounts to treating $\prod_{i \in I} X_i$ as a
dependent product. Consistently with this, we make use of the dependent sum
operation
\[ \sum_{i \in I} X_i = \{(i,x) \mid i \in I,\, x \in X_i\} \]
We (seldom) write numerals $n$ for the underlying sets $\{0, \ldots, n-1\}$ for conciseness.

Given a category $\cC$, we write $\Obj(\cC)$ for its class of objects and
$\Hom{\cC}{A}{B}$ for the set of arrows (or morphisms) from $A$ to $B$ (for
$A,B \in \Obj(C)$).
The composition of two morphisms $f \in \Hom{\cC}{A}{B}$ and $g \in
\Hom{\cC}{B}{C}$ is denoted by $g \circ f$.
Following the traditional notations of linear logic, products and coproducts
will be customarily written using `$\with$' and `$\oplus$' respectively
-- except in the category of sets where we use the
notations `$\times$' and `$+$' as usual  -- and we reserve $\top$ for the terminal
object. We sometimes use basic combinators such as $\tuple{-}$/$[-]$ for pairing/copairing and
$\pi_i$/$\In_i$ for projections/coprojections. With these notations, recall that
the binary coproduct of sets is the tagged union
\[ X + Y = \{\inl(x) \mid x \in X\} \cup \{\inr(y) \mid y \in Y\} \]
The injection $\inl/\inr$ may be omitted by abuse of notation when it is clear
from the context, that is, for $x \in X$, we allow ourselves to write $x$ for
$\inl(x)$ when it is understood that this refers to an element of $X+Y$.

Finally, if we are given a binary operation $\square$ over the objects of a
category,
we freely use the corresponding \enquote{$I$-ary} operation, with a notation of
the form $\bigbox_{i \in I} A_i$, over families
indexed by a finite set $I$. Concretely speaking, this depends on a fixed total
order over $I = \{ i_1 < \ldots < i_{\card{I}}\}$ to unfold as $A_{i_1} \,\square\,
(A_{i_2} \,\square\, (\ldots \,\square\, A_{i_{\card{I}}})\ldots)$ -- for convenience,
the reader may consider that a choice of such an order for every finite
set is fixed once and for all for the rest of the paper. In practice,
the particular order does not matter since we will deal with operations $\square
\in \{\oplus,\with,\otimes, \dots\}$ that
are symmetric in a suitable sense. Those operations also have units (i.e.\
identity elements), giving a canonical meaning to $\bigoplus_{i\in\varnothing} A_i$,
$\bigwith_{i\in\varnothing} A_i$, etc.

Finally, as is usual when dealing with categories, we sometimes allow ourselves to
implicitly use the axiom of choice for classes to pick objects determined by their
universal properties to build functors (for instance, given an object $A$ in a
category $\cC$ with cartesian products, we shall speak of the functor
$- \with A$ without first mentioning that a choice of cartesian products $X \with A$
exists for for every $X$ in $\cC$).
This is merely for convenience; the reader may check that in all of our concrete
examples of interest, canonical choices can be made without appealing to choice.

\subsubsection{Strings and ranked trees}

Alphabets designate finite sets and are written using the variable names
$\Sigma, \Gamma$. The set of strings (or words) over an alphabet $\Sigma$ is
denoted by $\Sigma^*$. The concatenation of two strings $u,v \in \Sigma^*$ is
written $uv$ (or sometimes $u \cdot v$ for clarity);
recall that $\Sigma^*$ endowed with this operation is the free monoid
over the set of generators $\Sigma$, and its identity element is the empty
string $\varepsilon$. We write $\len{w}$ for the length of a word
$w \in \Sigma^*$, and given a letter $c \in \Sigma$, the notation $\occ{c}{w}$
refers to the number of occurrences of $c$ in $w$.

Ranked alphabets are pairs $(\Sigma, \arity)$ such that $\Sigma$ is an alphabet and
the \emph{arity} $\arity$ is a family of finite sets\footnote{This is slightly non-standard; the more usual notion
would be that $\arity$ be only a family of numbers $\Sigma \to \bN$. To talk more freely about function
spaces, we work with finite sets rather than numbers in several instances, which motivates departing from the
usual notion.} indexed by $\Sigma$;
they are written using $\bf \Sigma, \Gamma$. We may write $\{a_1 : A_1,\; \ldots,\; a_n : A_n\}$
for the ranked alphabet $(\{a_1, \ldots, a_n\}, \arity)$ with $\arity(a_i) = A_i$.

Given a ranked alphabet $\bf \Sigma$, the set $\Tree({\bf \Sigma})$ of
trees/terms over a ranked alphabet $\bf \Sigma$ is defined as usual: if $a$ is a
letter of arity $X$ in $\bf \Sigma$ and $t$ a family of $\bf \Sigma$-trees, we
write $a(t)$ for the corresponding tree. Examples of such trees are pictured in
Figure~\ref{fig:tree-ex}.

\begin{rem}
  \label{rem:strings-as-trees}
  Given an alphabet $\Sigma$, define $\alphtorankedalph{\Sigma}$ to be the
  ranked alphabet $(\Sigma + \{\varepsilon\}, \arity)$ such that
  $\arity(\inl(a)) = \{*\}$ and $\arity(\inr(\varepsilon)) = \varnothing$. This
  gives a isomorphism $\Tree(\alphtorankedalph{\Sigma}) \simeq \Sigma^*$,
  illustrated on the right of Figure~\ref{fig:tree-ex}.
\end{rem}

\begin{figure}
\center
\includegraphics{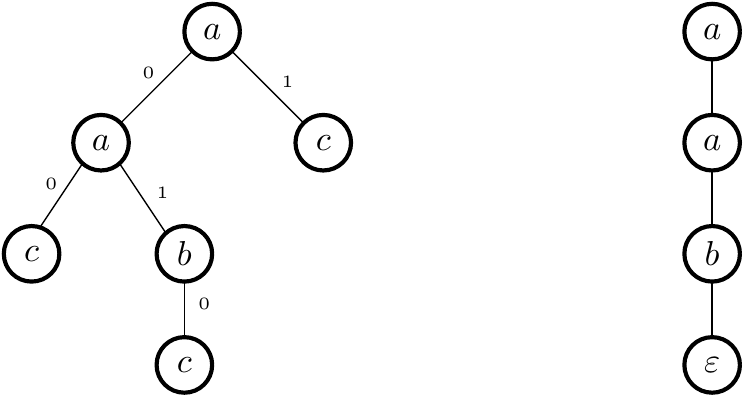}
\vspace{1.5em}
\hrule

\caption{Graphical representations of the tree $a(a(c,b(c)),c)$ over the ranked
  alphabet $\{a : \{0,1\},\, b : \{0\},\, c : \varnothing\}$ (left) and of the word
  $aab \in \{a,b\}^*$ seen as an element of $\Tree(\alphtorankedalph{\{a,b\}}) =
  \Tree({\{a : \{*\},\, b : \{*\},\, \varepsilon : \varnothing\}})$ (right).}
\label{fig:tree-ex}
\end{figure}

\subsection{Transducer models for regular functions over strings and trees}

\subsubsection{Strings}

Let us first recall the machine model that provides our reference definition for
regular functions: copyless streaming string transducers~\cite{SST} (SSTs). A
SST is an automaton whose internal memory contains, additionally to its control
state, a finite number of string-valued registers. It processes its input in a
single left-to-right pass. Each time a letter is read, the contents of the
registers may be recombined by concatenation to determine the new register
values. Formally:

\begin{defi}
  \label{def:register-transition}
  Fix a finite alphabet $\Gamma$. Let $R$ and $S$ be finite sets; we shall
  consider their elements to be ``register variables''.

  A \emph{$\Gamma$-register transition}\footnote{Sometimes called a
    \emph{substitution} in the literature, e.g.\
    in~\cite{SSTomega,AperiodicSST}.} from $R$ to $S$ is a function $t : S \to
  (\Gamma + R)^*$. Such a transition is said to be \emph{copyless} when for
  every $r \in R$, there is at most one occurrence of $\inr(r)$ among all the words
  $t(s)$ for $s \in S$ (i.e.\ when $\sum_{s \in S} \occ{\inr(r)}{t(s)} \le
  1$). We write $[R \to_{\Sr(\Gamma)} S]$ for the set of copyless
  $\Gamma$-register transitions from $R$ to $S$, or $[R \to_\Sr S]$ when
  $\Gamma$ is clear from context.

  Let $t \in [R \to_{\Sr(\Gamma)} S]$. For $x = (x_r)_{r \in R} \in
  (\Gamma^*)^R$ and $s \in S$, we denote by $(t^\dagger(x))_s \in \Gamma^*$ the
  word obtained from $t(s)$ by substituting every occurrence of a register
  variable $r \in R$ by the string $x_r$
  -- formally, by applying the morphism of free
  monoids $(\Gamma + R)^* \to \Gamma^*$ that maps $\inl(c)$ to $c$ and $\inr(r)$
  to $x_r$. This defines a set-theoretic map $t^\dagger : (\Gamma^*)^R \to
  (\Gamma^*)^S$, describing \emph{how $t$ acts on tuples of strings}.
\end{defi}

For instance, $t : z \mapsto axby$ (where we omitted $\inl/\inr$) is in
$[\{x,y\} \to_{\Sr(\{a,b\})} \{z\}]$ (it is copyless since $x$ and $y$
appear only once), and $t^\dagger(x \mapsto b,\, y \mapsto aa) = (z \mapsto
abbaa)$.

\begin{defi}[\cite{SST}]
  \label{def:sst}
    A (deterministic) \emph{copyless
    streaming string transducer} (SST) with input alphabet $\Sigma$ and output
  alphabet $\Gamma$ is a tuple $\cT = (Q, q_0, R, \delta, i, o)$ where
\begin{itemize}
\item $Q$ is a finite set of states and $q_0 \in Q$ is the initial state;
\item $R$ is a finite set of registers;
\item $\delta : \Sigma \times Q \to Q \times [R \to_{\Sr(\Gamma)} R]$ is
  the transition function;
\item $i \in [\varnothing \to_{\Sr(\Gamma)} R] \cong (\Gamma^*)^R$ describes
  the initial register values;
\item $o : Q \to [R \to_{\Sr(\Gamma)} \{\bullet\}]$ (where $\bullet$ is an
  arbitrary element) describes how to recombine the final values of the
  registers, depending on the final state, to produce the output.
\end{itemize}
(The SSTs that we consider in this paper are always copyless.)

The function $\Sigma^* \to \Gamma^*$ computed by $\cT$ maps an input string $w =
w_1 \ldots w_n \in \Sigma^*$ to the output string $o(q_n)^\dagger \circ
t^\dagger_n \circ \ldots \circ t^\dagger_1 \circ i^\dagger(\varnothing) \in
\Gamma^*$ where
\begin{itemize}
\item the empty family $\varnothing$ is indeed the unique element of
  $(\Gamma^*)^\varnothing$;
\item the codomain $(\Gamma^*)^{\{\bullet\}}$ of $o^\dagger(q_n)$ is identified with
  $\Gamma^*$;
\item the register transitions $(t_i)_{1 \leq i \leq n}$ and the final state
  $q_n \in Q$ are inductively defined, starting from the fixed initial state
  $q_0$, by $(q_i,t_i) = \delta(w_i,q_{i-1})$.
\end{itemize}
The functions that can be computed by copyless SSTs are called \emph{regular
  string functions}.
\end{defi}

\begin{figure}
\begin{center}
\begin{tikzpicture}[->,node distance=1.5cm]
      \node[state] (A) {} ;
  \node (D) [right of=A] {};
  \node (E) [right of=D] {};
  \node[state] (B) [right of=E] {};
  \node (I) [below of=A] {};
  \node (OA) [above of=A] {};
  \node (OB) [above of=B] {};

\path
  (I) edge node[left, style={font=\footnotesize}]
 {$\begin{array}{l@{~}c@{~}l} x &\leftarrow& \varepsilon \\ y &\leftarrow& \varepsilon \end{array}$} (A) ;
\path
  (A) edge node[left, style={font=\footnotesize}]
 {$\bullet \leftarrow y$} (OA) ;
\path
  (B) edge node[right, style={font=\footnotesize}]
 {$\bullet \leftarrow \varepsilon$} (OB) ;
\path
  (A) edge [bend left] node[above, style={font=\footnotesize}]
 {$\|\left| \begin{array}{l@{~}c@{~}l} x &\leftarrow& \varepsilon \\ y &\leftarrow& yx \end{array} \right.$} (B) ;
\path (B) edge [loop right] node[right,style={font=\footnotesize}]
{$a \in \Sigma \left| \begin{array}{l@{~}c@{~}l} x &\leftarrow& ax \\ y &\leftarrow& y \end{array} \right.$}
(A) ;
\path (B) edge [bend left] node[below,style={font=\footnotesize}]
 {$\|\left| \begin{array}{l@{~}c@{~}l} x &\leftarrow& \varepsilon \\ y &\leftarrow& xy \end{array} \right.$}
(A) ;
\path (A) edge [loop left] node[left,style={font=\footnotesize}]
{$a \in \Sigma \left| \begin{array}{l@{~}c@{~}l} x &\leftarrow& xa \\ y &\leftarrow& y \end{array} \right.$}
(A) ;
\end{tikzpicture}

\[
\small
\begin{array}{rcl!{\hspace{15em}} l}
(\Sigma \sqcup \{ {\|} \})^* &\to& \Sigma^* 
& {(w_i \in \Sigma^*,\; (\Sigma \sqcup \{\|\})^* \cong \Sigma^*(\|\Sigma^*)^*)}\\
w_0 &\mapsto& \varepsilon \\
w_0\|\ldots\|w_{2k} &\mapsto&
\multicolumn{2}{l}{
\mathtt{reverse}(w_{2k-1})
\mathtt{reverse}(w_{2k-3})
\ldots
\mathtt{reverse}(w_1)
w_0
w_2
\ldots
w_{2k-2}}
\\
w_0\|\ldots\|w_{2k+1} &\mapsto& \varepsilon \\
\end{array}
\]
\end{center}
\vspace{0.5em}
\hrule
\caption{An informal depiction of a SST and the induced map $(\Sigma \sqcup \{
  \| \})^* \to \Sigma^*$.}
\label{fig:sst-charles}
\end{figure}

\begin{exa}
\label{ex:sst-rev}
  Let us describe a simple copyless SST with $\Sigma = \Gamma$ and a single
  state, so that $\delta : \Sigma \to [R \to_\Sr R]$. We take $R =
  \{x,y\}$; both $x$ and $y$ are initialized with the empty
  string~$\varepsilon$, and the register transition $t_c = \delta(c) \in [R
  \to_\Sr R]$ associated to $c \in \Sigma$ is $(x \mapsto xc,\, y \mapsto
  cy)$ (to be pedantic, one should write $(x \mapsto \inr(x)\inl(c),\, y \mapsto
  \inl(c)\inr(y))$). Then for $w = w[1] \ldots w[n] \in \Sigma^*$, we have:
  \[t_{w[n]}^\dagger \circ \dots \circ t_{w[1]}^\dagger(x \mapsto
    \varepsilon,\,y\mapsto\varepsilon) = (x\mapsto
    w,\,y\mapsto\mathtt{reverse}(w))\]
  If we take $o = xy$ (via the canonical
  isomorphism $[R \to_{\Sr(\Gamma)} 1]^{\{q\}} \cong (\Gamma + R)^*$), the
  function computed by the SST is $w \in \Sigma^* \mapsto w \cdot
  \mathtt{reverse}(w)$.
\end{exa}
Figure~\ref{fig:sst-charles} shows a more sophisticated SST with two states and
the associated regular function.

\subsubsection{Trees as output}

Let us briefly discuss the challenges that arise when extending this model to
handle \emph{ranked trees} instead of strings. We will revisit this material in
more detail in \Cref{sec:trees}.

The notion of \emph{regular tree-to-tree function} is defined by generalizing
the characterization of regular string functions by Monadic Second-Order
Logic~\cite{MacroMSO,AttributedMSO,EngelfrietHoogeboom}, in a way that is
compatible with the above isomorphism. There are two orthogonal difficulties
that have to be overcome to define a SST-like model for regular tree functions:
one comes from producing trees as output, while the other comes from taking
trees as input. \emph{Bottom-up ranked tree transducers}\footnote{The name
  \enquote{streaming tree transducer} is used in~\cite{BRTT} for a transducer
  model operating over unranked trees. BRTTs are proposed in the same paper as a
  simpler, equally expressive variant for the special case of ranked trees.}
(BRTTs)~\cite{BRTT} (and the similar model of register tree transducers
in~\cite[\S4]{FOTree}) provide solutions for both.

String-to-tree regular functions require a modification of the kind of data
stored in the registers of an SST. Tree-valued registers are \emph{not enough},
for the following reasons: to recover the flexibility of string concatenation,
one should be able to perform operations such as grafting the root of some tree to
a leaf of another tree; but then the latter should be a tree with a
distinguished leaf, serving as a \enquote{hole} waiting to be substituted by a
tree. (This is fundamental in the theory of forest algebras, which proposes
various counterparts for trees to the monoid of strings with concatenation,
see~\cite{ForestAlgebras}.) By allowing both trees and \enquote{one-hole trees}
as register values, with the appropriate notion of copyless register transition
(cf.\ \Cref{subsec:registercat-tree}),
one gets the \emph{copyless streaming string-to-tree transducers}, whose
expressive power corresponds exactly to the regular
functions~\cite[Theorem~3.16]{BRTT}.

\subsubsection{Trees as inputs}
\label{subsubsec:prelim-trees-inputs}

To compute tree-to-tree regular functions, the first idea would be to blend the
notion of copyless SST with the classical bottom-up \emph{tree automata}. One
would then get \emph{copyless bottom-up ranked tree transducers}. However, this
model is believed to be too weak to express all regular tree functions (even in
the case of tree-to-string functions). An explicit counterexample is conjectured
in~\cite[\S2.3]{BRTT}, in the case of regular functions on unranked trees; we
adapt it here into a function from ranked trees to strings.

In the example below, for a ranked letter $a$ of arity $\bbtwo =
\{\triangleleft,\triangleright\}$, we use the abbreviation $a(t,u)$ for
$a(\triangleleft \mapsto t,\, \triangleright \mapsto u)$.

\begin{exa}[\enquote{Conditional swap}]
  \label{ex:condswap}
  Define $f : \Tree(\{a : \bbtwo, b : \bbtwo, c : \varnothing\}) \to \{a,b,c\}^*$
  by
  \[ f(a(t,u)) = f(u) \cdot a \cdot f(t) \qquad f(t) = \mathtt{inorder}(t)\
    \text{if $t$ doesn't match the previous pattern} \]
  where $\mathtt{inorder}$ prints the nodes of $t$ following a depth-first
  in-order traversal. In other words, $f = \mathtt{inorder} \circ g$ where
  $g(a(t,u)) = a(g(u),g(t))$ and $g(t) = t$ otherwise (i.e.\ when the root of
  $t$ is either $b$ or $c$).
\end{exa}
\begin{conj}[{adapted from~\cite[\S2.3]{BRTT}}]
  The above $f$ cannot be computed by a copyless BRTT.
\end{conj}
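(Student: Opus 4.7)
The plan is to argue by contradiction: suppose $\cT$ is a copyless BRTT with $k$ states and $r$ registers computing $f$, and aim to derive a contradiction by inspecting $\cT$'s behaviour on a carefully chosen family of inputs.

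The underlying tension is as follows. When $\cT$ processes a subtree $t$ bottom-up, it does not yet know the labels along the path from $t$ to the root of the input. If all those labels happen to be $a$, then $t$ must eventually contribute $f(t)$ to the output; but as soon as some ancestor is labelled $b$, the required contribution becomes $\mathtt{inorder}(t)$ instead. Because of the nested swaps involved in computing $f$, these two strings can permute the same letters in very different orders, yet the register contents at $t$ must be able to support either future. A concrete witness family for this tension would be $(T_S)_{S \subseteq \{1,\ldots,n\}}$: trees sharing a common $a$-spine of length $n$ and differing only in the placement of $b(c,c)$-subtrees at positions indexed by $S$. One checks that these $2^n$ inputs have pairwise distinct images under $f$, each $b$ landing at a different location in the output string depending on how the nested swaps shuffle the surrounding $a$-spine.

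The heart of the proof would then be to analyse the \emph{origin} of each output letter in $\cT$'s run: which input node produced it, and how the register concatenations performed at the $a$-nodes along the spine combine the contributions of the children into that of the parent. Copylessness forces each register to appear at most once in these concatenations, which restricts the rearrangement patterns that $\cT$ can implement along the spine to a family of permutations parametrised by the finite register-and-state data at each depth. The main step, and the main obstacle, is to show that no such copyless rearrangement can simultaneously accommodate both the nested-swap ordering imposed by $f$ on pure-$a$ branches and the plain depth-first ordering triggered as soon as a $b$-subtree sits above. Simply counting states and registers is not enough, since registers store arbitrary strings; a finer combinatorial analysis of how $r$ registers can route information through an $a$-spine of depth $n$ is required, likely in the spirit of the transformation forest analysis for copyless SSTs mentioned in the introduction. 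The subtlety of this last step is presumably what prevents stating the result as a theorem rather than a conjecture.
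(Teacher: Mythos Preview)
The statement you are attempting to prove is labelled as a \emph{conjecture} in the paper, and the paper provides no proof of it whatsoever --- it is presented precisely as an open problem, adapted from a similar conjecture in the literature on unranked trees. So there is no ``paper's own proof'' to compare against.

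Your proposal is not a proof but a plausible proof \emph{strategy}, and you yourself acknowledge this in the last paragraph: the ``finer combinatorial analysis of how $r$ registers can route information through an $a$-spine'' is the entire difficulty, and you do not carry it out. The setup you describe (families of inputs sharing a common spine, tracking the origin of output letters, exploiting copylessness to bound the permutations achievable) is a reasonable line of attack, but the step where you would actually derive a contradiction is missing. This is not a criticism of your write-up so much as an observation that you have correctly identified why the statement remains a conjecture rather than a theorem: nobody has completed that step. Your honest closing sentence is the right assessment.
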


One must then allow more register transitions than the copyless ones. This
cannot be done haphazardly, for arbitrary register transitions would lead to a
much larger class of functions than regular tree functions. Alur and D'Antoni
call their relaxed condition~\cite{BRTT} the \emph{single use restriction} (it
will only be formally defined in \Cref{subsec:with-trees}); the following
single-state BRTT for~$f$ provides a typical example of the new possibilities
allowed.

\begin{exa}[Non-copyless BRTT for conditional swap]
\label{ex:brtt-condswap}
  Take $R = \{x,y\}$, initialized at the $c$-labeled leaves with $(x \mapsto
  c,\, y \mapsto c)$. At a subtree $a(u,v)$, we need to combine the registers
  $x_\ttl,y_\ttl$ (resp.\ $x_\ttr,y_\ttr$) coming from the left (resp.\ right)
  child $u$ (resp.\ $v$) to produce the values of the registers $x,y$ at this
  node: this is performed by a register transition
  \[ t_a \in [\{x_\ttl,y_\ttl,x_\ttr,y_\ttr\} \to_\Sr \{x,y\}] \qquad
    t_a(x) = x_\ttr a x_\ttl \qquad t_a(y) = y_\ttl a y_\ttr \]
  The idea is that the register values produced by processing a subtree $u$ are
  $f(u)$ for $x$ and $\mathtt{inorder}(u)$ for $y$. The register transition for
  a $b$-labeled node is then $t_b(x) = t_b(y) = y_\ttl b y_\ttr$, reflecting the
  fact that $f(b(u,v)) = \mathtt{inorder}(b(u,v))$.

  This $t_b$ is not copyless since $y_\ttl$ occurs twice: once in $t_b(x)$ and
  once in $t_b(y)$. The observation at the heart of the single use restriction
  is that the values of $x$ and $y$ for a given subtree can never be combined in
  the same expression in the remainder of the BRTT's run, so that allowing this
  duplication of $y_\ttl$ will never lead to having two copies of $y_\ttl$
  inside the value of a single register. We will see much later in
  Example~\ref{ex:condswap-sru} that this BRTT is indeed single-use-restricted.
\end{exa}

\subsection{The $\laml$-calculus, encodings of strings/trees, and definability of
functions}

\subsubsection{Types \& terms}

We consider a linear $\lambda$-calculus which we dub the
\emph{$\laml$-calculus}, based (via the Curry--Howard correspondence)
on propositional intuitionistic linear logic with both multiplicative and additive connectives (IMALL) together with a base linear type $\basety$.
The grammar of types is as follows:
\[\tau, \sigma \bnfeq \basety \bnfalt \tau \lin \sigma \bnfalt \tau \tensor
\sigma \bnfalt \unit \bnfalt \tau \to \sigma \bnfalt \tau \with \sigma \bnfalt \tau \oplus \sigma \bnfalt \top \bnfalt 0\]
A typing context $\Psi$ is a finite set of declarations $x_1 : \tau_i, \ldots, x_k : \tau_k$ where the $x_i$ are pairwise distinct variables (which
constitute the set of free variables of $\Psi$)
and the $\tau_i$ are types.
Typed $\laml$-terms are given in Figure~\ref{fig:laml} along with the inductive definition of the typing judgment $\Psi; \; \Delta \vdash t : \tau$,
where $\Psi$ and $\Delta$ are contexts (with disjoint sets of free variables), $\tau$ is a type and $t$ is a term.
In such a judgment, $\Psi$ is called the \emph{non-linear} context and $\Delta$ the \emph{linear} context; the basic idea is that
variables in $\Psi$ may be used arbitrarily many times, while those in $\Delta$ must be used \emph{exactly once}.
This is formally more restrictive than an \emph{affineness condition}, where we would rather restrict variables in $\Delta$ to occur \emph{at most once}
in $t$.

In practice, $\laml$ is not less expressive than its affine
variant\footnote{Which would be obtained by adjoining the following
  \emph{weakening rule} to the system presented in Figure~\ref{fig:laml}:
\[\dfrac{\Psi; \; \Delta \vdash t : \tau}{\Psi ; \; \Delta, \; \Delta' \vdash t : \tau}\]} since it features additives: the basic idea is that the affineness can be encoded at the level
of types by using the linear type $\tau \with \unit$ instead of the affine type
$\tau$ (as argued for instance in~\cite[\S1.2.1]{LLSS}).

The simply typed $\lambda$-calculus admits an embedding into $\laml$.
Conversely, there is a mapping from $\laml$ to the simply typed
$\lambda$-calculus with products and sums by ``forgetting linearity'' (and 
replacing the tensorial product eliminator $\tlet{x\tensor y}{t}{u}$ by the variant
based on projections $u[\pi_1(t)/x,\pi_2(t)/y]$).

\begin{figure}
\begin{mdframed}
\[
\begin{array}{c !{\quad} c}
\dfrac{}{\Psi ; \; x : \tau \vdash x : \tau}
&
\dfrac{}{\Psi, x : \tau ; \; \cdot \vdash x : \tau}
\\\\
 \dfrac{\Psi; \; \Delta, \; x : \tau \vdash t : \sigma}{\Psi; \; \Delta \vdash \lambda x. t : \tau \lin \sigma}
&
\dfrac{\Psi; \; \Delta \vdash t : \tau \lin \sigma \qquad \Psi; \; \Delta'' \vdash u : \tau}{\Psi ; \; \Delta, \; \Delta' \vdash t \; u : \sigma}
  \\\\
\dfrac{\Psi,\; x : \tau; \; \Delta \vdash t : \sigma}{\Psi; \; \Delta \vdash \lambda^\oc x. t : \tau \to \sigma}
  &
\dfrac{\Psi; \; \Delta \vdash t : \tau \to \sigma \qquad \Psi; \; \cdot \vdash u : \tau}{\Psi ; \; \Delta \vdash t \; u : \sigma}
  \\\\
\dfrac{\Psi; \; \Delta \vdash t : \tau \qquad \Psi; \; \Delta' \vdash u : \sigma}{\Psi; \; \Delta, \;\Delta' \vdash t \tensor u : \tau \tensor \sigma}
&
\dfrac{
\Psi; \; \Delta' \vdash u : \tau \tensor \sigma
\qquad
\Psi; \; \Delta, x : \tau, y : \sigma \vdash t : \kappa
}{\Psi ; \; \Delta, \; \Delta' \vdash \tlet{x \tensor y}{u}{t} : \kappa}
\\\\
\dfrac{}{\Psi; \; \cdot \vdash () : \unit}
&
\dfrac{\Psi; \; \Delta \vdash t : \unit \qquad \Psi; \; \Delta' \vdash u : \tau}{\Psi; \; \Delta, \; \Delta' \vdash \tlet{()}{t}{u} : \tau}
\\\\\\
\dfrac{\Psi; \; \Delta \vdash t : \tau \qquad
\Psi; \; \Delta \vdash u : \sigma
}{\Psi ; \; \Delta \vdash \tuple{t, u} : \tau \with \sigma}
&
\dfrac{\Psi ;\; \Delta \vdash t : \tau \with \sigma}{\Psi ; \; \Delta \vdash \pi_1(t) : \tau}
\qquad
\dfrac{\Psi ;\; \Delta \vdash t : \tau \with \sigma}{\Psi ; \; \Delta \vdash \pi_2(t) : \sigma}
\\\\\\
\multicolumn{2}{c}{
\dfrac{\Psi ;\; \Delta \vdash t : \tau}{\Psi ; \; \Delta \vdash \inl(t) : \tau \oplus \sigma}
\qquad
\dfrac{\Psi ;\; \Delta \vdash t : \sigma}{\Psi ; \; \Delta \vdash \inr(t) : \tau \oplus \sigma}}
\\\\
\multicolumn{2}{c}{
\dfrac{
\Psi; \; \Delta, x : \tau \vdash u : \kappa \qquad
\Psi; \; \Delta, x : \tau \vdash v : \kappa \qquad
\Psi; \; \Delta' \vdash t : \tau \oplus \sigma
}{\Psi ; \; \Delta, \Delta' \vdash \case(t, x.u,x.v) : \kappa}
}
\\\\
\dfrac{}{\Psi; \; \Delta \vdash \tuple{ } : \top}
&
\dfrac{\Psi; \; \Delta \vdash t : 0}{\Psi; \; \Delta, \; \Delta' \vdash \abort(t) : \tau}
\end{array}
\]
\end{mdframed}
\caption{Typing rules of $\laml$.}
\label{fig:laml}
\end{figure}

\begin{figure}
\begin{mdframed}
\[
\footnotesize
\begin{array}{l !\; c !{\;\;} c}
\text{$\beta$-equivalence} &
(\lam x. t) \; u ~=_\beta~ t[u/x]
&
(\lam^\oc x. t) \; u ~=_\beta~ t[u/x]
\\ &
\pi_1(\tuple{t,u}) ~=_\beta~ t
&
\pi_2(\tuple{t,u}) ~=_\beta~ u
\\ &
\case(\inl(t), x.u,x.v) ~=_\beta~ u[t/x]
&
\case(\inr(t), x.u,x.v) ~=_\beta~ v[t/x]
\\ &
\tlet{x \tensor y}{t \tensor u}{v} ~~=_\beta~~ v[t/x][u/y]
&
\tlet{()}{()}{t} ~=_\beta~ t
\\\\
\text{$\eta$-equivalence} &
\lam x. t \; x
~=_\eta~
t
&
\lam^\oc x. t \; x
~=_\eta~
t
\\
&
\tlet{x \tensor y}{t}{u[x \tensor y/z]} ~~=_\eta~~ u[t/z]
&
\tuple{\pi_1(t), \pi_2(t)} ~=_\eta~ t
\\
&
\tlet{x \tensor y}{t}{v[u/z]} ~=_\eta~ v[\tlet{x \tensor y}{t}{u}/z]
&
x ~=_\eta~ \tuple{ }
\\
&
\tlet{()}{t}{u[()/z]} ~~=_\eta~~ u[t/z]
\\
&
\tlet{()}{t}{v[u/z]} ~=_\eta~ v[\tlet{()}{t}{u}/z]
\\ &
\case(t, x.u[\inl(x)/z], y.u[\inr(y)/z]) ~=_\eta~ u[t/z]
&
\abort(t) ~=_\eta~ u 
\end{array}
\]
\end{mdframed}
\caption{Equations for $\laml$-terms (relating terms that have matching types).}
\label{fig:laml-betaeta}
\end{figure}

As usual, we identify $\laml$-terms up to renaming of bound variables ($\alpha$-equivalence) and admit the standard definition of the capture-avoiding substitution. For the purpose of this paper,
since we are not interested in the fine details of their operational semantics,
we usually consider $\laml$-terms up to $\lamlequiv$-equivalence $=_{\lamlequiv}$ as generated by the equations in Figure~\ref{fig:laml-betaeta} and congruence. Note that those equations are implicitly typed and that typing is invariant under $\lamlequiv$-equivalence. 

Much like any $\lambda$-calculus, $\laml$ can be seen as a programming language by considering a
reduction relation $\to_{\beta\extr}$, which happens to be included in $=_{\beta\eta}$. One
property that we shall use is that $\laml$ is \emph{normalizing}, i.e., that the relation
$\to_{\beta\extr}$ is terminating. This allows to consider terms of very specific shape when
working up to $\beta\eta$. While the argument is routine, we need this result, as well as a
fine-grained understanding of the normal forms to discuss further preliminary syntactic lemmas,
so we give an outline in Appendix~\ref{sec:laml-normalization}.

We now isolate an important class of types and terms for the sequel.
\begin{defi}
We call a type \emph{purely linear} if it does not have any occurrence of the `$\to$' connective.
A $\laml$-term $t$ is also called \emph{purely linear} if there is a typing derivation
$\Psi; \; \Delta \vdash t : \tau$ where any type occurring must be purely linear.
\end{defi}

Intuitively, purely linear terms are those which are not allowed to duplicate any arguments
involving $\basety$. For any type derivation $\Psi; \; \Delta \vdash t : \tau$, if the types
occurring in $\Psi$ and $\Delta$, as well as $\tau$, are purely linear, then so is $t$;
this is a consequence of normalization.

\subsubsection{Church encodings}
In order to discuss string (and tree) functions in $\laml$, we need to discuss how they are encoded.
Recall that in the pure (i.e.\ untyped) $\lambda$-calculus, the canonical way to
encode
inductive types\footnote{Including the natural numbers, if one wants for
  instance to show that the untyped $\lambda$-calculus captures all computable
  functions.
We should also mention that the generalization of Church encodings to trees is
actually due to Böhm and Berarducci~\cite{BohmBerarducci}.}
is via \emph{Church encodings}. Such encodings are typable in the simply-typed $\lambda$-calculus. For instance, for natural numbers and strings over $\{a,b\}$, writing $\underline w$ for the
Church encoding of $w$, we have
\[
\begin{array}{clc}
\Nat^\oc ~=~ (\basety \to \basety) \to \basety \to \basety &\qquad \qquad \qquad& \Str^\oc_{\{a,b\}} ~=~ (\basety \to \basety) \to (\basety \to \basety) \to \basety \to \basety \\\\
\underline{3} = \lambda^\oc s. \lambda^\oc z. s \; (s \; (s \; z)) & & \underline{aab} = \lambda^\oc a. \lambda^\oc b. \lambda^\oc \varepsilon. a \; (a \; (b \; \varepsilon)) \\
\end{array}
\]
Conversely, one may show that any closed simply typed $\lambda$-term of type $\Nat^\oc$ (resp.\ $\Str^\oc_{\{a,b\}}$) is $\lamlequiv$-equivalent to the Church encoding of some number (resp.\ string).
In the rest of this paper, we will use a less common, but more precise
$\laml$-type for Church encodings of strings of trees, first introduced
in~\cite[\S5.3.3]{girardLL}.

\begin{defi}
  Let $\Sigma$ be an alphabet. We define $\Str_\Sigma$ as $(\basety \lin
  \basety) \to \ldots \to (\basety \lin \basety) \to \basety \to \basety$ where
  there are $\card{\Sigma}$ occurrences of $\basety \lin \basety$. Note in
  particular that thanks to the isomorphism\footnote{We keep this notion
    informal, but suffices to say that this is intended to be definable
    internally to $\laml$.} $(A \with B) \to C \;\cong\; A \to B \to C$ (non-linear
  currying), we have\footnote{In this encoding, the unique constructor of arity
    $0$ is treated non-linearly, while in the prequel~\cite{aperiodic}, it was
    treated linearly. We chose non-linearity here in order to be consistent with
    the definition for ranked trees (cf.\
    Remark~\ref{rem:strings-as-trees-lambda}): indeed, while strings have a
    single end-marker, trees may have multiple leaves, so non-linearity is
    necessary in their case. This apparent inconsistency with our previous work
    is actually unproblematic as both string encodings are interconvertible, see
    e.g.~\cite[Remark~5.7]{aperiodic}. }
\[\Str_\Sigma ~~\cong~~ \left(\bigwith_{a \in \Sigma} (\basety \lin \basety)\right) \to \basety \to \basety\]
\end{defi}

It can be checked that $\Str_\Sigma$ has the same (up to $\beta\eta$ equality) closed inhabitants as
the usual $\Str^\oc_\Sigma$ presented above, but one should keep in mind
that this choice is not entirely innocuous. It is in large part motivated by our main result (Theorem~\ref{thm:main-string}), which might no longer
hold when taking $\Str^\oc_\Sigma$ instead of $\Str_\Sigma$.

This situation generalizes to trees.
For instance, the Church encoding
of the tree depicted in Figure~\ref{fig:tree-ex} is
$\lam^\oc a. \lam^\oc b. \lam^\oc c.\; a \; (a \; c \; (b \; c)) \; c ~:~ 
(\basety \to \basety \to \basety) \to (\basety \to \basety) \to \basety \to \basety$.

\begin{defi}
Given a ranked alphabet ${\bf \Sigma} = (\Sigma, \arity)$, the $\laml$ type $\Treety_{\bf \Sigma}$ is defined
as
\[\Treety_{\bf \Sigma} ~~=~~ (\basety \lin \ldots \lin \basety) \to \ldots
\to (\basety \lin \ldots \lin \basety) \to \basety\]
where there are $\card{\Sigma}$ top-level arguments, and, within the component
corresponding to the letter
$a \in \Sigma$, there are $\card{\arity(a)}$.
In other words, we have the isomorphism
\[\Treety_{\bf \Sigma} ~~\cong~~ \left(\bigwith_{a \in \Sigma}
  (\basety^{\tensor\arity(a)} \lin \basety)\right) \to \basety\]
\end{defi}

\begin{rem}
  \label{rem:strings-as-trees-lambda}
The isomorphism of Remark~\ref{rem:strings-as-trees}
translates to an equality $\Str_{\Sigma} = \Treety_{\alphtorankedalph{\Sigma}}$.
\end{rem}

Church encodings give a map from trees in $\Tree({\bf \Sigma})$ to $\laml$-terms of type $\Treety_{\bf \Sigma}$
in the empty context.
This map is in fact a bijection if terms are considered up to $\beta\eta$-equality:
normalization of the $\laml$-calculus enforces surjectivity, and one may use a set-theoretic semantics
of $\laml$ to build a left inverse (see the proof of Proposition~\ref{prop:laml-tree-nf} in the appendix for further details).

\subsubsection{Computing with Church encodings}
We are now ready to give our notion of computation for our string (and tree) functions.
First, we need an operation of type substitution in $\laml$,
which allow to substitute an arbitrary type $\kappa$ for $\basety$.
\[
\begin{array}{l@{~}c@{~}l!\qquad l@{~}c@{~}ll}
\basety[\kappa] &=& \kappa & (\tau \lin \sigma)[\kappa] &=& \tau[\kappa] \lin \sigma[\kappa]
& \qquad\ldots
\end{array}
\]
Type substitution extends in the obvious way to typing contexts as well, and even
to \emph{typing derivations}, so that 
\[\Psi; \; \Delta \vdash t : \tau \qquad \Rightarrow \qquad
\Psi[\kappa]; \; \Delta[\kappa] \vdash t : \tau[\kappa]\]

In particular, it means that a Church encoding $t : \Treety_{\bf \Sigma}$ is
also of type $\Treety_{\bf \Sigma}[\kappa]$ for any type $\kappa$.
This ensures that the following notion of definable tree functions (strings
being a special case) in the  $\laml$-calculus
makes sense.

\begin{defi}
\label{def:laml-definable}
A function $f : \Tree({\bf \Sigma}) \to \Tree({\bf \Gamma})$ is called \emph{$\laml$-definable} when
there exists a \emph{purely linear} type $\kappa$ together with a $\laml$-term
\[\mathtt{f} \quad:\quad \Treety_{\bf \Sigma}[\kappa] \lin \Treety_{\bf \Gamma}\]
such that $f$ and $\mathtt{f}$ coincide up to Church encoding; i.e.,
for every tree $t \in \Tree({\bf \Sigma})$ 
\[\underline{f(t)} ~~ =_{\lamlequiv} ~~ \mathtt{f} \; \underline{t}\]
In particular, a string function $\Sigma^* \to \Gamma^*$ is $\laml$-definable
when the corresponding unary tree function
$\Tree\left(\alphtorankedalph{\Sigma}\right) \to
\Tree\left(\alphtorankedalph{\Gamma}\right)$ (cf.\
Remark~\ref{rem:strings-as-trees-lambda}) is $\laml$-definable. Note that
\[\Treety_{\alphtorankedalph{\Sigma}}[\kappa] \lin \Treety_{\alphtorankedalph{\Gamma}}
  \quad=\quad \Str_\Sigma[\kappa] \lin \Str_\Gamma \]
\end{defi}

\begin{rem}
Once again, our set-up, summarized in Definition~\ref{def:laml-definable}, is biased toward making our main theorem true;
there are many non-equivalent alternatives which also make perfect sense. For instance,
changing the following would be reasonable:
\begin{itemize}
\item allow $\kappa$ to be be arbitrary (i.e.\ to contain $\oc$) or with some restrictions.
\item consider the non-linear arrow $\to$ instead of $\lin$ at the toplevel.
\item change the type of Church encodings (recall the distinction $\Str^\oc_\Sigma$/$\Str_\Sigma$).
\end{itemize}
Most of these alternatives share the good structural properties outlined below.
Giving more automata-theoretic characterizations for those and comparing them lies beyond the scope of
this paper, but would be interesting.

The two first choices above will turn out to have a clear operational meaning:
the pure linearity of $\kappa$ corresponds to single-use-restricted assignment
(as mentioned in the introduction), whereas the use of the linear function arrow
`$\lin$' corresponds to the fact that a streaming tree transducer
\emph{traverses its input in a single pass}.
\end{rem}

As our main theorems claim, $\laml$-definable functions and regular functions coincide, so all
our examples of regular functions can be coded in $\laml$, as we show concretely
below.
\begin{exa}
\label{ex:laml-rev}
The $\mathtt{reverse}$ function $\Sigma^* \to \Sigma^*$ from Example~\ref{ex:sst-rev}
is $\laml$-definable. Supposing that we have $\Sigma = \{a_1, \dots, a_k\}$,
one $\laml$-term that implements it is
\[\lam s. \lam^\oc a_1. \ldots \lam^\oc a_k. \lam^\oc \varepsilon.\; s \; (\lam x.
  \lam z. x \; (a_1 \; z)) \ldots (\lam x. \; (a_k \; z)) \; (\lam x.x) \;
  \varepsilon
  \;:\; \Str_\Sigma[\basety \lin \basety] \lin \Str_\Sigma \]
\end{exa}
\begin{exa}
\label{ex:laml-charles}
The SST of Figure~\ref{fig:sst-charles} is computed by a $\laml$-term of type
$\Str_{\Sigma \sqcup \{\|\}}\left[
\tau
\right] \lin \Str_{\Sigma}$
with $\tau = \Bool \tensor ((\basety \lin \basety) \with \unit) \tensor ((\basety \lin \basety) \with \unit)$.
Intuitively, $\Bool$ corresponds to the current state of the SST while each component $(\basety \lin \basety) \with \unit$ corresponds to a register.
Define the auxiliary terms $\delta : (\basety \lin \basety) \lin \tau \lin \tau$,
$\delta_{\|} : \basety \lin \tau \lin \tau$ and $o : \basety \lin (\tau \lin \tau) \lin \basety$ as
\begin{figure}
\[
\footnotesize
\begin{array}{ll}
\begin{array}{l@{\;}l}
\delta ~ = ~ \lam a \; z. &
\mathsf{let} \; (b,z') \; = z \; \mathsf{in} \\
&\mathsf{let} \; (x,y) \; = z' \; \mathsf{in} \\
&\mathsf{if} \; b \; \mathsf{then} \\
&\quad (\mathtt{tt}, \tuple{\lam u.\; \pi_1(x) \; (a \; u), \tlet{()}{\pi_2(y)}{\pi_2(x)}}, y)\\
&\mathsf{else} \\
&\quad(\mathtt{ff}, \tuple{\lam u. \; a \; (\pi_1(x) \; u), \tlet{()}{\pi_2(y)}{\pi_2(x)}}, y)\\
\end{array}
\\
\\
\begin{array}{l@{\;}l}
\delta_{\|} ~ = ~ \lam z. &
\mathsf{let} \; (b,z') \; = z \; \mathsf{in} \\
&\mathsf{let} \; (x,y) \; = z' \; \mathsf{in} \\
&\mathsf{if} \; b \; \mathsf{then} \\
&\quad (\mathtt{ff}, \tuple{\lam u.u , \tlet{()}{\pi_2(y)}{\pi_2(x)}}, \tuple{\lam v. \pi_1(y) \; (\pi_1(x) \; u),
\tlet{()}{\pi_1(x)}{\pi_2(y)}})\\
&\mathsf{else} \\
&\quad (\mathtt{tt}, \tuple{\lam u.u , \tlet{()}{\pi_2(y)}{\pi_2(x)}}, \tuple{\lam v. \pi_1(x) \; (\pi_1(y) \; u),
\tlet{()}{\pi_1(x)}{\pi_2(y)}})\\
\end{array}
\\
\\
\begin{array}{l@{\;}l}
o ~ = ~ \lam \varepsilon \; z. &
\mathsf{let} \; (b,z') \; = z \; (\mathtt{tt}, \tuple{\lam u. u, ()}, \tuple{\lam u. u}) \; \mathsf{in} \\
&\mathsf{let} \; (x,y) \; = z' \; \mathsf{in} \\
&\mathsf{let} \; () \; = \pi_2(x) \; \mathsf{in} \\
&\mathsf{if} \; b \; \mathsf{then} \\
&\quad \pi_1(y) \; \varepsilon \\
&\mathsf{else} \\
&\quad \tlet{()}{\pi_2(y)}{\varepsilon}
\end{array}
\\\\\hline
\end{array}
\]
\caption{Auxiliary terms for Example~\ref{ex:laml-charles}
($\mathtt{tt} = \inl(())$, $\mathtt{ff} = \inr()$ and $\mathsf{if} \; t \; \mathsf{then} \; u \; \mathsf{else} \; v$ is a notation for $\case(t, x.\tlet{()}{x}{u}, y. \tlet{()}{y}{v})$).
}
\label{fig:aux-terms}
\end{figure}
Supposing that we have $\Sigma = \{a_1, \ldots, a_k\}$,
and that the letter $\|$ corresponds to the first constructor in the input string, the $\laml$-definition is
given by
\[\lam s.\lam^\oc a_1.\ldots \lam^\oc a_k. \lam^\oc \varepsilon. \; o \; (s \; \delta_\| \; (\delta \; a_1) \; \ldots \; (\delta \; a_k))\]
where the terms $\delta$, $\delta_{\|}$ and $o$ are defined in Figure~\ref{fig:aux-terms}.
\end{exa}
\begin{exa}
\label{ex:laml-condswap}
Consider the ranked alphabet ${\bf \Sigma} = \{a : \bbtwo, b : \bbtwo, c :
\varnothing\}$ (where $\bbtwo = \{\triangleleft,\triangleright\}$) and
the alphabet $\Gamma = \{a, b, c\}$.
The conditional swap of Example~\ref{ex:condswap} is $\laml$-definable as a term of type
\[\Treety_{\bf \Sigma}[(\basety \lin \basety) \with (\basety \lin \basety)] \to \Str_{\Gamma}\]
reminiscent of the BRTT given in Example~\ref{ex:brtt-condswap}. Observe the use
of an \emph{additive conjunction}~`$\with$' (that is not of the form
$(-\with\unit)$ meant to make data discardable), reflecting the fact that this BRTT is single-use-restricted but
not copyless. To wit, setting $\tau = (\basety \lin \basety) \with (\basety
\lin \basety)$ and assuming free variables $a, b : \basety \lin \basety$, define
the auxiliary terms
\[
\begin{array}{lcll}
\delta_a &=& \lam l. \lam r. \; \tuple{\pi_1(l) \circ a \circ \pi_1(r), \pi_1(r) \circ a \circ \pi_1(l)}
&: \tau \lin \tau \lin \basety \lin \basety\\
\delta_b &=& \lam l. \lam r. \; (\lam x. \tuple{x,x}) \; (\pi_1(l) \circ b \circ \pi_1(r))
&: \tau \lin \tau \lin \basety \lin \basety\\
\end{array}
\]
where $f \circ g$ stands for the composition $\lam z.\; f \; (g \; z)$.
The conditional swap is then coded as
\[\lam t. \lam^\oc a. \lam^\oc b. \lam^\oc c. \lam^\oc \varepsilon. \; \pi_2\;(t \; \delta_a \; \delta_b \; (\lam x.\; c \; x)) \; \varepsilon\]
\end{exa}

\subsection{Monoidal categories and related concepts}
\label{subsec:prelim-cat}

Our use of category theory, while absolutely essential, stays at a fairly elementary level.
We assume familiarity with the notions of category, functor, natural
transformation, (cartesian) product
and coproduct (and their nullary cases, terminal and initial objects), but not much more than that; the remaining categorical prerequisites are summed up here for
convenience. The reader familiar with monoidal closed categories can safely
skip directly to \S\ref{subsubsec:prelim-affine}.

\subsubsection{Monoidal categories, symmetry and functors}

The idea of categorical semantics is to interpret the types of a programming
language -- in our case, the purely linear fragment of the $\laml$-calculus --
as objects, and the programs (terms) as morphisms. (A formal statement tailored
to our purposes will be given later in Lemma~\ref{lem:laml-initial}.) In this
perspective, the additive conjunction `$\with$' of the $\laml$-calculus is
interpreted as a categorical \emph{cartesian product}, while the additive
disjunction `$\oplus$' corresponds to a \emph{coproduct}; this justifies our use
of the notations $\with/\oplus$ for products/coproducts. We now define monoidal
products, which are meant to interpret the multiplicative conjunction
`$\otimes$'.

\begin{defi}[{\cite[Sections~4.1 to~4.4]{mellies09ps}}]
Let $\cC$ be a category. A \emph{monoidal product} $\tensor$ over $\cC$
is given by the combination of
\begin{itemize}
\item a bifunctor $- \tensor - : \cC \times \cC \to \cC$
\item a distinguished object $\unit$
\item natural isomorphisms $\lambda_A : \unit \tensor A \to A$ (\emph{left
    unitor}), $\rho_A : A \tensor \unit \to A$ (\emph{right unitor}),  and
$\alpha_{A,B,C} : (A \tensor B) \tensor C \to A \tensor (B \tensor C)$
(\emph{associator}) subject to the following
coherence conditions:
\[
\xymatrix@C=-5mm@W=10mm{
 & & (A \tensor B) \tensor (C \tensor D) \ar[drr]^{\qquad \alpha_{A,B,C\tensor D}}                 \\
((A \tensor B) \tensor C) \tensor D \ar[urr]^{\alpha_{A \tensor B, C, D} \qquad} \ar[dr]_{\alpha_{A,B,C} \tensor \id_D \qquad}  && &&
A \tensor (B \tensor (C \tensor D)) \\
&  (A \tensor (B \tensor C)) \tensor D \ar[rr]_{\alpha_{A,B \tensor C,D}} &&  A \tensor ((B \tensor C) \tensor D) \ar[ur]_{\qquad \id_A \tensor \alpha_{B,C,D}}
}
\]

\[
\xymatrix{
(A \tensor \unit) \tensor B \ar[dr]_{\rho_A \tensor \id_B} \ar[rr]^{\alpha_{A,\unit,B}} & & A \tensor (\unit \tensor B) \ar[dl]^{\qquad \id_A \tensor \lambda_B} \\
& A \tensor B
}
\]
\end{itemize}

Such a monoidal product is called \emph{symmetric} if it comes with natural isomorphisms $\gamma_{A,B} : A \tensor B \to B \tensor A$ subject to the following coherences

\[
\xymatrix@C=15mm{
& A \tensor (B \tensor C) \ar[r]^{\gamma_{A,B \tensor C}} & (B \tensor C) \tensor A \ar@/^1pc/[dr]^{\alpha_{B,C,A}}
\\
(A \tensor B) \tensor C \ar@/^1pc/[ur]^{\alpha_{A,B,C}} \ar@/_1pc/[dr]_{\gamma_{A,B} \tensor \id_C \qquad} & & & B \tensor (C \tensor A) \\
& (B \tensor A) \tensor C \ar[r]^{\alpha_{B,A,C}} & B \tensor (A \tensor C) \ar@/_1pc/[ur]_{\qquad \id_B \tensor \gamma_{A,C}}
}
\]
\[
\xymatrix{
A \tensor B \ar[r]^{\gamma_{A,B}} & B \tensor A \ar[d]^{\gamma_{B,A}} \\
& A \tensor B \ar@{=}[ul]
}
\]
\end{defi}

In the sequel, we use the name \emph{(symmetric) monoidal category} for a category $\cC$ that
comes equipped with a (symmetric) monoidal structure $\tensor, \unit, \ldots$.
We write such structures $(\cC, \tensor, \unit)$ for short\footnote{Which is slightly abusive, as $\lambda, \rho, \alpha$ and $\gamma$ are also part of the
structure (and not uniquely determined from the triple $(\cC, \tensor, \unit)$).}.
Of course, if a category $\cC$ has products $\with$ and a terminal object $\top$, then $(\cC, \with, \top)$ is a symmetric monoidal category, and similarly for coproducts and intial objects.

We shall sometimes need to refer to morphisms between monoidal categories, which are
essentially functors together with natural transformations witnessing that the monoidal
structure is preserved.
\begin{defi}[{\cite[Section 5.1]{mellies09ps}}]
\label{def:monfunctor}
Let $(\cC, \tensor, \unit)$ and $(\cD, \widehat{\tensor}, \widehat{\unit})$ be two monoidal categories.
A \emph{lax monoidal functor} is given by a functor $F : \cC \to \cD$ together with natural transformations
\[m_0 : \widehat{\unit} \to F(\unit) \qquad \qquad \qquad m_{A,B} : F(A) \mathrel{\widehat{\tensor}} F(B) \to F(A \tensor B)\]
making the following diagrams commute.
\[
\xymatrix@C=30mm@R=10mm{
(F(A) \mathrel{\widehat{\tensor}} F(B)) \mathrel{\widehat{\tensor}} F(C)
\ar[r]^{\alpha_{F(A),F(B),F(C)}}
\ar[d]_{m_{A,B} \mathrel{\widehat{\tensor}} \id_{F(C)}}
&
F(A) \mathrel{\widehat{\tensor}} (F(B) \mathrel{\widehat{\tensor}} F(C))
\ar[d]^{\id_{F(A)} \mathrel{\widehat{\tensor}} m_{B,C}}
\\
F(A \tensor B) \mathrel{\widehat{\tensor}} F(C)
\ar[d]_{m_{A \tensor B, C}}
&
F(A) \mathrel{\widehat{\tensor}} F(B \tensor C)
\ar[d]^{m_{A, B \tensor C}}
\\
F((A \tensor B) \tensor C)
\ar[r]^{F(\alpha_{A,B,C})}
&
F(A \tensor (B \tensor C))
\\
}
\]

\[
\xymatrix@C=15mm{
F(A) \mathrel{\widehat{\tensor}} \widehat\unit
\ar[r]^{\rho_{F(A)}}
\ar[d]_{\id_{F(A)} \mathrel{\widehat{\tensor}} m_0}
& F(A) \\
F(A) \mathrel{\widehat{\tensor}} F(\unit)
\ar[r]^{m_{A,\unit}} &
F(A \tensor \unit) \ar[u]_{F(\rho_A)} \\
}
\qquad
\xymatrix@C=15mm{
\widehat\unit \mathrel{\widehat{\tensor}} F(A)
\ar[r]^{\lambda_{F(A)}}
\ar[d]_{m_0 \mathrel{\widehat{\tensor}} \id_{F(A)}}
& F(A) \\
F(\unit) \mathrel{\widehat{\tensor}} F(A)
\ar[r]^{m_{\unit,A}} &
F(\unit \tensor A) \ar[u]_{F(\lambda_A)} \\
}
\]
A lax monoidal functor is called \emph{strong monoidal} if the
natural transformations $m_0$ and $m_{A,B}$ are isomorphisms.
\end{defi}

Let us note that while every concrete
instance of monoidal functor in the paper, save for the ultimate example in
Appendix~\ref{sec:app-with}, is also going to be a \emph{symmetric} monoidal
functor (i.e., satisfy additional coherence diagrams involving $\gamma$),
we do not make use of that fact.

\subsubsection{Function spaces}

Our next definition concerns the categorical semantics of the linear function arrow
`$\lin$'. (Since we will only need a semantics for the \emph{purely linear}
fragment of the $\laml$-calculus, we will not discuss the non-linear arrow
`$\to$' here.)

\begin{defi}[{\cite[Sections~4.5 to~4.7]{mellies09ps}}]
\label{def:monoidalclosure}
Let $(\cC, \tensor , \unit)$ be a (symmetric) monoidal category and $A, B \in \Obj(\cC)$.
An \emph{internal homset} from $A$ to $B$ is an object $A \lin B \in \Obj(\cC)$
with a prescribed arrow $\ev_{A,B} : (A \lin B) \tensor A \to B$
(the \emph{evaluation map})
such that, for every other arrow $f : C \tensor A \to B$, there is a unique map
$\Lambda(f)$ (called the \emph{curryfication of $f$})
making the following diagram commute:
\[
\xymatrix@C=20mm{
(A \lin B) \tensor A \ar[r]^-{\ev_{A,B}} & B \\
C \tensor A \ar@{-->}[u]^{\Lambda(f) \tensor \id} \ar[ur]_f
}
\]
When there exists an internal homset for every pair objects in $\cC$, we say that
$(\cC, \tensor, \unit)$ is a \emph{(symmetric) monoidal closed} category.
\end{defi}

As for (co)products, internal homsets are determined up to unique isomorphism,
so we may talk somewhat loosely about \emph{the} internal homset later on. While
we work with the universal property given in
Definition~\ref{def:monoidalclosure} when the construction of an internal homset
involves a bit of combinatorics, we will also sometimes use the following
characterization.

\begin{prop}
  \label{prop:lin-hom-hom}
  The object $A \lin B$ is an internal homset for $A,B \in \Obj(\cC)$ if and
  only if there is a family of isomorphisms
\[\Hom{\cC}{C \tensor A}{\,B} \quad\cong\quad \Hom{\cC}{C}{\,A \lin B}\]
which is \emph{natural} in the parameter $C$ varying contravariantly over $\cC$
(in other words, if $\Hom{\cC}{- \tensor A}{\,B}$ and $\Hom{\cC}{-}{\,A \lin B}$
are naturally isomorphic as functors $\cC^{\op} \to \Set$).
\end{prop}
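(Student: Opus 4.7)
The plan is to give a standard bijection-of-hom-sets proof, moving in both directions between the universal property formulation (Definition~\ref{def:monoidalclosure}) and the representability formulation.

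For the forward direction, I would assume that $(A \lin B, \ev_{A,B})$ satisfies the universal property. Define $\Phi_C : \Hom{\cC}{C \tensor A}{B} \to \Hom{\cC}{C}{A \lin B}$ by $\Phi_C(f) = \Lambda(f)$, and define $\Psi_C : \Hom{\cC}{C}{A \lin B} \to \Hom{\cC}{C \tensor A}{B}$ by $\Psi_C(g) = \ev_{A,B} \circ (g \tensor \id_A)$. The commuting triangle in Definition~\ref{def:monoidalclosure} immediately gives $\Psi_C \circ \Phi_C = \id$, while $\Phi_C \circ \Psi_C = \id$ follows from the uniqueness clause of the universal property applied to the arrow $\Psi_C(g)$ (both $g$ and $\Phi_C(\Psi_C(g))$ fill in the dashed arrow for $f = \Psi_C(g)$). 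For naturality in $C$, given $h : C' \to C$, one checks that $\Phi_{C'}(f \circ (h \tensor \id_A)) = \Phi_C(f) \circ h$, which again reduces to uniqueness: both sides, when post-composed with $\ev_{A,B}$ after tensoring with $\id_A$, recover $f \circ (h \tensor \id_A)$.

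For the backward direction, suppose we are given a natural isomorphism $\Phi_C : \Hom{\cC}{C \tensor A}{B} \cong \Hom{\cC}{C}{A \lin B}$. Define $\ev_{A,B} := \Phi_{A \lin B}^{-1}(\id_{A \lin B})$, and for $f : C \tensor A \to B$ set $\Lambda(f) := \Phi_C(f)$. The commutation of the triangle in Definition~\ref{def:monoidalclosure} amounts to the equation $\ev_{A,B} \circ (\Phi_C(f) \tensor \id_A) = f$, which is exactly the instance of naturality of $\Phi^{-1}$ at the morphism $\Phi_C(f) : C \to A \lin B$ applied to $\id_{A \lin B}$. Uniqueness of $\Lambda(f)$ with this property then follows from injectivity of $\Phi_C$.

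There is no serious obstacle; the whole proof is a routine unpacking of Yoneda-style reasoning. The only point requiring mild care is to make sure that the naturality square for $\Phi$ really corresponds, after tracing $\id_{A \lin B}$ around it, to the triangle identity of Definition~\ref{def:monoidalclosure}; once this translation is set up cleanly, both implications become short diagram chases.
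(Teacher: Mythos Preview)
Your proof is correct and is the standard direct argument. The paper, however, does not spell any of this out: its entire proof is a one-line citation to Mac Lane's \emph{Categories for the Working Mathematician} (Chapter~III, Section~2, Proposition~1), which states the general equivalence between a universal arrow and a natural bijection of hom-sets. So you have essentially reproduced the content of that cited proposition in the special case at hand; your version is self-contained, while the paper simply defers to the reference.
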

\begin{proof}
  This is an instance of \cite[Chapter~III, Section~2, Proposition~1]{CWM}.
\end{proof}

\subsubsection{Affineness and quasi-affineness}
\label{subsubsec:prelim-affine}

Given a monoidal product $\otimes$, morphisms from $A$ to $A \otimes A$
need not exist in general; this accounts for the linearity constraints in $\laml$.
But monoidal categories do not incorporate the ability of register transitions in SSTs to discard the content
of a register, a behavior more aligned with the \emph{affine} $\lambda$-calculi.
This notion thus plays a role in our development, so we discuss its incarnation in
categorical semantics.

\begin{defi}
\label{def:smc-affine}
A (symmetric) monoidal category $(\cC, \tensor, \unit)$ is called 
\emph{affine}\footnote{Such categories are also sometimes called
  \emph{semi-cartesian}~\cite{nlabsemicartesian}. We rather chose affine here for conciseness and because we will have to handle categories which have both cartesian products $\with$ and an additional affine monoidal product $\tensor$.} if $\unit$ is a terminal object of $\cC$.
\end{defi}

Most symmetric monoidal categories are not affine. However, there is a generic way
of building an affine monoidal category from a monoidal category.
Recall that if $\cC$ is a category and $X$ is an object of $\cC$, one may consider
the \emph{slice category} $\bigslant{\cC}{X}$
\begin{itemize}
\item whose objects are morphisms $A \to X$ ($A \in \Obj(\cC)$),
\item and such that $\Hom{{\footnotesize\bigslant{\cC}{X}}}{f : A \to X}{\,g : B \to X} = \{h \in
  \Hom{\cC}{A}{B} \mid g \circ h = f\}$.
\end{itemize}
If $\cC$ has a monoidal structure $(\tensor, \unit)$, this structure
can be lifted to $\bigslant{\cC}{\unit}$ by taking the identity $\unit \to \unit$ as the unit
and
\[
\left(
\xymatrix{
A \ar[r]^{f} &
\unit}
\right) 
\tensor
\left(
\xymatrix{
B \ar[r]^{g} & 
\unit}
\right) 
~~=~~
\left(
\xymatrix@C=15mm{
A \tensor B
\ar[r]^-{f \tensor g} &
\unit \tensor \unit \ar[r]^-{\lambda_\unit ~=~ \rho_\unit} &
\unit}\right)
\]
as the monoidal product. This gives rise to an affine monoidal structure over $\bigslant{\cC}{\unit}$,
and a strong monoidal structure for the forgetful functor $\dom : \bigslant{\cC}{\unit} \to \cC$.

In the converse direction, one can sometimes turn an object $A$ from $\cC$ into
one of $\bigslant{\cC}{\unit}$. This is the case when $A$ admits a cartesian
product with $\unit$, which may be written $A\with\unit$ (note
that if $\cC$ is affine, $A$ itself is such a cartesian product). We are then
led to consider the projection $\pi_2 : A\with\unit \to \unit$ as an object of
the slice category.
\begin{defi}
\label{def:quasi-aff}
A (symmetric) monoidal category $(\cC, \tensor, \unit)$ is called
\emph{quasi-affine}  if every $A \in \Obj(\cC)$ has a cartesian product $A \with
\unit$ with the monoidal unit.
\end{defi}
\begin{rem}
  \label{rem:quasi-aff-adjoint}
  We have a map $A \in \Obj(\cC) \mapsto \left(A\with\unit
    \xrightarrow{\;\pi_2\;} \unit\right) \in
  \Obj\left(\bigslant{\cC}{\unit}\right)$ in any
  quasi-affine category, according to the above discussion. It turns out that it
  extends to a functor $J$ which embeds $\cC$ into this affine slice category;
  moreover, $J$ is \emph{right adjoint} to the forgetful functor $\dom$. The
  interested reader may even check (although we will not make use of this) that
  the existence of a right adjoint to $\dom$ is \emph{equivalent} to
  quasi-affineness.
\end{rem}

\subsubsection{Monoids}

Since we are interested in string transductions, the free monoids $\Sigma^*$ are going to
make an appearance. Let us thus conclude this section by recalling the notion of monoid
\emph{internal to a monoidal category}.

\begin{defi}[{\cite[Section 6.1]{mellies09ps}}]
\label{def:monoid-object}
  Given a monoidal category $(\cC, \tensor, \unit)$, an
\emph{internal monoid} (or a \emph{monoid object}) is a triple $(M,\mu,\eta)$
where $M \in \Obj(\cC)$ and $\mu : M \tensor M \to M$, $\eta : \unit \to M$
are morphisms making the following unitality and associativity diagrams commute
\[
\xymatrix{
\unit \tensor M \ar[d]_{\eta \tensor \id} \ar[r]^{\lambda_\unit}
& M
& M \tensor \unit \ar[l]_{\rho_\unit} \ar[d]^{\id \tensor \eta}
\\
M \tensor M \ar[ur]_-\mu & & M \tensor M \ar[ul]^-\mu}
\xymatrix@C=15mm{
\ar[d]_{\mu \tensor \id} (M \tensor M) \tensor M \ar[r]^{\alpha_{M,M,M}} & M \tensor (M \tensor M) \ar[r]^-{\id \tensor \mu} & M \tensor M \ar[d]^\mu \\
M \tensor M \ar[rr]_{\mu} & & M
}
\]
\end{defi}

A useful example of this notion is the
\enquote{internalization} of the monoid of endomorphisms of $A$ when $A$ is part of a monoidal \emph{closed} category.

\begin{prop}
  \label{prop:internal-endo-monoid}
  Let $(\cC,\otimes,\unit)$ be a monoidal category. Any internal homset $A \lin
  A$ (with $A \in \Obj(C)$) that exists in $\cC$ has an internal monoid
  structure $(A \lin A,\; \eta,\; \mu)$ such that
  \[ \eta = \Lambda'(\id_A) \qquad
\quad
\mu \circ (\Lambda'(f) \tensor \Lambda'(g)) \circ \lambda_\unit =
\Lambda'(f \circ g)
\quad \text{for $f, g \in \Hom{\cC}{A}{A}$}
 \]
 where $\Lambda' : \Hom{\cC}{A}{A} \xrightarrow{\,\sim\,} \Hom{\cC}{\unit}{\,A
   \multimap A}$ is defined as $\Lambda' : h \mapsto \Lambda(h \circ \lambda_A)$
 from the curryfication $\Lambda$ and the left unitor $\lambda_A$.
\end{prop}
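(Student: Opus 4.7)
The plan is to construct $\eta$ and $\mu$ by explicit curryfication and then verify the monoid axioms, as well as the stated composition formula, by repeated use of the uniqueness clause in the universal property of the internal hom.

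Set $\eta = \Lambda'(\id_A)$, which unfolds to $\Lambda(\lambda_A) : \unit \to A \lin A$, and define $\mu = \Lambda(c)$ where $c$ is the ``double evaluation''
\[ c \;=\; \ev_{A,A} \circ (\id_{A \lin A} \tensor \ev_{A,A}) \circ \alpha_{A \lin A,\, A \lin A,\, A} \;:\; \bigl((A \lin A) \tensor (A \lin A)\bigr) \tensor A \longrightarrow A. \]
Intuitively, $c$ takes a pair of internal endomorphisms and an input, applies the right one and then the left one.

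For the composition formula, the uniqueness of curryfication reduces the equality to checking that both sides, once translated back via the adjunction of Proposition~\ref{prop:lin-hom-hom}, yield the same morphism $\unit \tensor A \to A$. Unfolding $\Lambda'(f \circ g) = \Lambda(f \circ g \circ \lambda_A)$ on one side, and using the defining property $\ev_{A,A} \circ (\Lambda(h) \tensor \id_A) = h$ twice (once each for $\Lambda'(f)$ and $\Lambda'(g)$) on the other, together with naturality of $\alpha$ and the coherence relating the two unitors $\lambda_\unit$ and $\lambda_A$, both sides collapse to $f \circ g \circ \lambda_A$.

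The monoid axioms follow by the same strategy. For left unitality $\mu \circ (\eta \tensor \id_{A \lin A}) \circ \lambda^{-1} = \id_{A \lin A}$, uncurrying and reducing via $\ev_{A,A} \circ (\Lambda(\lambda_A) \tensor \id_A) = \lambda_A$ leaves an instance of the triangle coherence linking $\alpha$, $\lambda$, and $\rho$; right unitality is symmetric. For associativity, both $\mu \circ (\mu \tensor \id_{A \lin A})$ and $\mu \circ (\id_{A \lin A} \tensor \mu) \circ \alpha$ uncurry to the same ``triple evaluation'' map, obtained by three successive applications of $\ev_{A,A}$, once the pentagon coherence reconciles the two bracketings. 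The only real obstacle throughout is bookkeeping of associators and unitors; no ingredient beyond the universal property of the internal hom and the coherence axioms already imposed on $(\cC, \tensor, \unit)$ is needed.
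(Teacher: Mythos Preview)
Your proposal is correct and follows essentially the same approach as the paper's own proof sketch: the paper defines $\mu$ as the curryfication of the same ``double evaluation'' morphism (which it calls $\mathsf{app2}$), takes $\eta$ as given in the statement, and then simply says to check the coherence diagrams and the equation relating $\mu$ to $\Lambda'$. You have supplied more detail on how those checks go (uniqueness of curryfication, naturality of $\alpha$, triangle and pentagon coherences), but the construction and strategy are identical.
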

\begin{proof}[Proof sketch]
  One can define the monoid multiplication $\mu : (A \lin A) \otimes (A \lin A)
  \to (A \lin A)$ as the curryfication $\mu = \Lambda(\mathsf{app2})$ of the
  morphism $\mathsf{app2}$ built by composing the sequence
  \[ ((A \lin A) \otimes (A \lin A)) \otimes A
    \xrightarrow{\alpha} (A \lin A) \otimes ((A \lin A) \otimes A)
    \xrightarrow{\id\otimes\ev} (A \lin A) \otimes A \xrightarrow{\ev} A \]
  and check that it satisfies the coherence diagrams for internal monoids (that
  also involve the unit $\eta$ defined in the proposition statement) and the
  equation relating $\mu$ to $\Lambda'$.
\end{proof}

Let us conclude our categorical preliminaries on the following.

\begin{prop}
  \label{prop:monoid-withunit}
  Let $(\cC,\otimes,\unit)$ be a monoidal category and let
  $M\with\unit \in \Obj(C)$ be a cartesian product of some $M \in \Obj(C)$ with
  the monoidal unit $\unit$.
  Suppose that $(M,\mu,\eta)$ is a monoid object.
  Then $M\with\unit$ has an internal monoid structure defined by
  \[ \tuple{\mu\circ(\pi_1\otimes\pi_1),\;\lambda_\unit\circ(\pi_2\otimes\pi_2)}
    ~:~ (M\with\unit) \otimes (M\with\unit) \to M\with\unit \qquad
    \tuple{\eta,\id_\unit} ~:~ \unit \to M\with\unit \]
  where $\pi_1 : M\with\unit \to M$ and $\pi_2 : M\with\unit \to \unit$ are the
  projections and
  $\tuple{-,-}$ is the pairing given by the universal property of the cartesian
  product.

  Furthermore, this makes $\left( M\with\unit \xrightarrow{\;\pi_2\;}
    \unit \right) \in \Obj\left( \bigslant{\cC}{\unit} \right)$ into a monoid
  object of $\bigslant{\cC}{\unit}$.
\end{prop}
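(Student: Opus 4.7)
The plan is to verify the internal monoid axioms directly by exploiting the universal property of the cartesian product: any two morphisms into $M\with\unit$ are equal iff they agree after post-composition with $\pi_1$ and $\pi_2$. So for each diagram (unitality and associativity), I would reduce it to two separate equalities, one in the ``$M$-component'' and one in the ``$\unit$-component''.

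For the $\pi_1$-component, the relevant computations go through the monoid structure $(M,\mu,\eta)$. Using the identities $\pi_1 \circ \mu' = \mu\circ(\pi_1\otimes\pi_1)$ and $\pi_1\circ\tuple{\eta,\id_\unit} = \eta$ together with the bifunctoriality of $\otimes$ and the naturality of $\alpha$, $\lambda$, $\rho$, I would reduce each axiom to the corresponding axiom for $(M,\mu,\eta)$, which holds by hypothesis. For instance, the $\pi_1$-component of associativity reduces to $\mu\circ(\mu\otimes\id)\circ\alpha = \mu\circ(\id\otimes\mu)$ post-composed with $(\pi_1\otimes\pi_1)\otimes\pi_1$ and pre-composed by $\alpha$ in the appropriate way.

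For the $\pi_2$-component, analogously, the check reduces to the fact that $(\unit,\lambda_\unit,\id_\unit)$ is itself an internal monoid in any monoidal category $(\cC,\otimes,\unit)$. This is a classical consequence of Mac Lane's coherence theorem (and of $\lambda_\unit = \rho_\unit$, which follows from the coherence axioms); I would either invoke it directly or unfold the triangle and pentagon identities to witness it. Combined with naturality of $\lambda$ applied to $\pi_2 : M\with\unit \to \unit$, one gets e.g.\ $\lambda_\unit\circ(\pi_2\otimes\pi_2)\circ(\tuple{\eta,\id_\unit}\otimes\id) = \pi_2\circ\lambda_{M\with\unit}$ for the unit laws.

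The second claim is then essentially automatic. By definition of pairing, $\pi_2\circ\tuple{\eta,\id_\unit} = \id_\unit$, so $\tuple{\eta,\id_\unit}$ is a morphism from $\id_\unit$ to $\pi_2$ in $\bigslant{\cC}{\unit}$. Likewise, $\pi_2 \circ \mu' = \lambda_\unit\circ(\pi_2\otimes\pi_2)$ by construction, which is precisely the object $\pi_2\otimes\pi_2$ of $\bigslant{\cC}{\unit}$ as given by the slice monoidal structure recalled before Definition~\ref{def:quasi-aff}; hence $\mu'$ is a slice morphism. Since the monoid axioms in $\bigslant{\cC}{\unit}$ only require the underlying equations in $\cC$ (the triangle conditions above the slice objects being automatic), they follow from the first part of the proposition. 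The main potential pitfall is bookkeeping around coherence isomorphisms in the $\pi_2$-components, but this is routine once one has the observation that it amounts to the trivial monoid structure on $\unit$.
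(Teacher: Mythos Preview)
Your proposal is correct and is exactly the routine verification the paper has in mind: it explicitly says ``The routine verification of the required commutations of diagrams is left to the reader,'' and your componentwise reduction via $\pi_1$ and $\pi_2$ to the monoid axioms for $(M,\mu,\eta)$ and for the trivial monoid $(\unit,\lambda_\unit,\id_\unit)$ is the natural way to carry this out. The paper additionally remarks that in the quasi-affine case a more conceptual proof is available (the functor $A \mapsto A\with\unit$ is lax monoidal as $\dom \circ J$, and lax monoidal functors preserve monoid objects), but this is offered as an alternative, not as the main argument.
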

The routine verification of the required commutations of diagrams is left to the
reader.
\begin{rem}
  Our applications of this proposition will take place in \emph{quasi-affine}
  monoidal categories. For those, it admits a more conceptual proof: the right
  adjoint $J : \bigslant{\cC}{\unit} \to \cC$ to the forgetful functor $\dom$
  (cf.\ Remark~\ref{rem:quasi-aff-adjoint}) is \emph{lax monoidal}, and
  therefore so is $\dom \circ J$ which maps $A$ to $A\with\unit$ on objects;
  furthemore, the image of a monoid object by a lax monoidal functor is itself a
  monoid object in a canonical way~\cite[Section~6.2]{mellies09ps}.
\end{rem}

\section{Regular string functions in the $\laml$-calculus}
\label{sec:strings}

The goal of this section is to prove our main theorem pertaining to string functions.

\mainstring*

To prove Theorem~\ref{thm:main-string}, we introduce a generalized notion of SST parameterized
by a structure that we call a \emph{(string) streaming setting} $\mfC$, which is a structure
whose main component is a category to be thought of as the collection of possible
register transitions. From the point of view of expressiveness, $\mfC$ can be thought of
as a gadget delimiting a class of transition monoids which may be used for computations on top
of finite structure of a $\mfC$-SST.
The reason why we use categories as parameters is to be able to bridge easily
the usual notion of SST and the categorical semantics of $\laml$ in a single framework.

In Section~\ref{subsec:CSSTdef-string}, we define our streaming settings and $\mfC$-SSTs.
We make the connections with usual SSTs and $\laml$, in \S\ref{subsec:def-sr} and \S\ref{subsec:def-lamcat} respectively, through two distinguished streaming settings
$\SR$ and $\mfLam$. This allows to reframe Theorem~\ref{thm:main-string}
as the equivalence between \mbox{$\SR$-SSTs} and single-state $\mfLam$-SSTs.
Then, in Section~\ref{subsec:oplus-string}, we study the free coproduct completion
of categories $(-)_\oplus$, which readily extends to streaming settings.
In particular, properties of $\SR_\oplus$ are explored.
Section~\ref{subsec:with-string} deals with the dual construction $(-)_{\with}$,
the free product completion. A tight link between the expressiveness of $\mfC_\with$-SSTs
and \emph{non-deterministic} $\mfC$-SSTs is established.
Section~\ref{subsec:dial-string} then combines those results to study the composition
$((-)_{\with})_{\oplus}$ of those two completions (which we describe as a
direct construction $(-)_{\oplus\with}$), relying on the previous sections.
In particular, it is shown that the category at the center of $\SR_{\oplus\with}$
is a model of the purely linear fragment of $\laml$.
Finally, Section~\ref{subsec:main-string} briefly summarizes how to combine the results of
the previous sections into a proof of Theorem~\ref{thm:main-string}.

\subsection{A categorical framework for automata: streaming settings}
\label{subsec:CSSTdef-string}

We now introduce 
string streaming settings, which should be seen as a sort of memory framework for transducers
iterating performing a single left-to-right pass over a word.
This is the abstract notion that will allow us to generalize SSTs:

\begin{defi}
Let $X$ be a set.
A \emph{string streaming setting} with output $X$
is a tuple $\mathfrak{C} = (\cC, \initty, \retty, \curlyinterp{-})$ where
\begin{itemize}
\item $\cC$ is a category
\item $\initty$ and $\retty$ are arbitrary objects of $\cC$
\item $\curlyinterp{-}$ is a set-theoretic map $\Hom{\cC}{\initty}{\retty} \to X$
\end{itemize}
\end{defi}

Since the properties of the underlying category of a streaming setting will turn out
to be the most crucial thing in the sequel, we shall abusively apply adjectives
befitting categories to streaming settings, such as ``affine symmetric monoidal'' to
streaming settings in the sequel.

The notion of streaming setting is a convenient tool motivated by our
subsequent development rather than our primary object of study.
A closely related framework in which some of our abstract results can be formulated
is defined in~\cite{ColcombetPetrisan} (see Remark~\ref{rem:colpetri}). 

For the rest of this section, we will refer to string streaming setting simply as streaming
settings; we also fix two alphabets $\Sigma$ and $\Gamma$ for the rest of this section.

\begin{defi}
\label{def:c-sst}
Let $\mathfrak{C} = (\cC, \initty, \retty, \curlyinterp{-})$ be a streaming
setting with output $X$.
A $\mathfrak{C}$-SST with input alphabet $\Sigma$ and output $X$ is a tuple
$(Q, q_0, R, \delta, i, o)$ where
\begin{itemize}
\item $Q$ is a finite set of states and $q_0 \in Q$
\item $R$ is an object of $\cC$
\item $\delta$ is a function $\Sigma \times Q \to Q \times \Hom{\cC}{R}{R}$
\item $i \in \Hom{\cC}{\initty}{R}$ is an initialization morphism
\item $(o_q)_{q \in Q} \in \Hom{\cC}{R}{\retty}^Q$ is a family of output
  morphisms -- alternatively, we will sometimes consider it as a map $o : Q \to
  \Hom{\cC}{R}{\retty}$.
\end{itemize}
We write $\cT : \SST{\mathfrak{C}}{\Sigma}{X}$ to mean that $\cT$ is a
$\mathfrak{C}$-SST with input alphabet $\Sigma$ and output $X$ (the latter
depends only on $\mathfrak{C}$).

The corresponding function $\interp{\cT} : \Sigma^* \to X$
is then computed as for standard SSTs (cf.\ Definition~\ref{def:sst}):
an input word $w$ generates a sequence of states $q_0,\ldots,q_{\len{w}} \in Q$ and
a sequence of morphisms $f_i : R \to R$ in $\cC$,
and the output is then $\curlyinterp{o_{q_{\len{w}}} \circ f_{\len{w}} \circ \dots \circ f_1
  \circ i} \in X$.

An important class of $\mfC$-SSTs are those for which the set of states $Q$ is a
singleton, significantly simplifying the above data. They are called
\emph{single-state $\mfC$-SSTs}.
\end{defi}
\begin{rem}
  \label{rem:colpetri}
  Single-state $\mfC$-SSTs are very close to the $\cC$-automata over words
  defined by Colcombet and Petrişan~\cite[Section~3]{ColcombetPetrisan}, or more
  precisely $(\cC,\initty,\retty)$-automata with our notations. The main
  difference is that the latter's output would juste be an element of
  $\Hom{\cC}{\initty}{\retty}$: there is no post-processing $\curlyinterp{-}$ to
  produce an output.

  As for the addition of finite states, ultimately, it does not increase the
  framework's expressive power: we shall see in Remark~\ref{rem:sst-comparison}
  that $\mfC$-SSTs are equivalent to single-state SSTs over a modified category.
  We chose to incorporate states into our definition for convenience.
\end{rem}

\begin{exa}
  Let $\mathfrak{Set}_X = (\Set, \{\bullet\}, X,
  \curlyinterp{-})$ where $\curlyinterp{-}$ is the canonical isomorphism between
  $\Hom{\Set}{\{\bullet\}}{X} = X^{\{\bullet\}}$ and $X$. Then \emph{any} function
  $\Sigma^* \to X$ can be \enquote{computed} by a single-state
  $\mathfrak{Set}_X$-SST by taking $R = \Sigma^*$.
\end{exa}
\begin{exa}
  \label{exa:finset}
 Let $\mathfrak{Finset}_2 = (\Finset, \{\bullet\}, \{0,1\}, \curlyinterp{-})$ with $\curlyinterp{-}$
 the canonical isomorphism $\Hom{\Finset}{\{\bullet\}}{\{0,1\}} \cong \{0,1\}$. Single-state
 $\mathfrak{Finset}_2$-SST are essentially the usual notion of
  deterministic finite automata\footnote{Actually, \emph{complete} DFA, i.e.\
    DFA with total transition functions.}. Therefore,
 the functions they compute are none other than the indicator functions of
 regular languages.
\end{exa}
\begin{exa}
  Consider the category $\mathcal{POL}_\bQ$ whose objects are natural numbers, whose morphisms are tuples of multivariate polynomials
  over $\bQ$ with matching arities (so that $\Hom{\mathcal{POL}_\bQ}{n}{k} = (\bQ[X_1, \ldots, X_n])^k$) and where composition
  is lifted from the composition of polynomials in the usual way, making $\mathcal{POL}_\bQ$ into a category with (strict) cartesian products.
  Then, taking $\mathfrak{Pol}_\bQ = (\mathcal{POL}_\bQ, 0,1,
  \curlyinterp{-})$ where $\curlyinterp{-}$ is the isomorphism identifying $\bQ$
  and polynomials without variables ($n=0$),
  we can recover the definition of \emph{polynomial automata}
  from~\cite{PolynomialAutomata} as single-state $\mathfrak{Pol}_\bQ$-SSTs.
\end{exa}
\begin{exa}
  \label{ex:hines}
  The core of a paper by Hines~\cite{Hines} can be recast in our framework as
  saying that \emph{two-way automata} are the same thing as single-state SSTs
  over a well-chosen streaming setting, whose underlying category is called
  $\mathsf{Int(pSet)}$ in~\cite{Hines} (we will not give further details here).
  Recall that the transducer counterparts of these devices, namely \emph{two-way
    transducers}, are among the characterizations of regular string
  functions~\cite{EngelfrietHoogeboom}.

  Interestingly, this provides yet another connection with linear logic: this
  category $\mathsf{Int(pSet)}$ belongs to a family of techniques called
  \enquote{geometry of interaction} related to the game semantics of linear
  logic. As an example of recent application of such techniques to the interface
  between $\lambda$-calculus and automata, we refer to the work of Clairambault
  and Murawski~\cite{MAHORS} on higher-order recursion schemes.
  $\mathsf{Int(pSet)}$ is also close in spirit to the construction of free
  symmetric compact closed categories~\cite{kelly1980coherence}.
\end{exa}

Given two streaming settings $\mfC$ and $\mfD$ with a common output set $X$,
$\mfC$-SSTs are said to \emph{subsume} $\mfD$-SSTs if
for every $\mfD$-SST $\cT$ there is a $\mfC$-SST $\cT'$ with $\interp{\cT} = \interp{\cT'}$.
We say that $\mfC$-SSTs and $\mfD$-SSTs are \emph{equivalent} if both classes subsume one another.

There is a straightforward notion of morphism of streaming settings with common output.
\begin{defi}
Let $\mfC = (\cC, \initty_\mfC, \retty_\mfC, \curlyinterp{-}_\mfC)$
and
 $\mfD = (\cD, \initty_\mfD, \retty_\mfD, \curlyinterp{-}_\mfD)$
be streaming settings with the same output set $X$. A morphism of streaming settings is given by
a functor $F : \cC \to \cD$
and $\cD$-arrows $i : \initty_\mfD \to F(\initty_\mfC)$ and $o : F(\retty_\mfC) \to \retty_\mfD$
such that\[ \forall f \in \Hom{\cC}{\initty_\mfC}{\retty_\mfC},\, \curlyinterp{o \circ F(f) \circ i}_\mfD = \curlyinterp{f}_\mfC\]
\end{defi}
This notion is useful to compare the expressiveness of classes of generalized SSTs
because of the following lemma. 

\begin{lem}
\label{lem:morph}
If there is a morphism of streaming settings $\mfC \to \mfD$, then $\mfD$-SSTs
subsume $\mfC$-SSTs and single-state $\mfD$-SSTs subsume single-state
$\mfC$-SSTs.
\end{lem}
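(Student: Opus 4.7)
The plan is to construct, given a morphism of streaming settings $(F, i_F, o_F) : \mfC \to \mfD$ and a $\mfC$-SST $\cT = (Q, q_0, R, \delta, i, o)$, an explicit $\mfD$-SST $\cT'$ having the same states (so the single-state case is handled automatically) and satisfying $\interp{\cT'} = \interp{\cT}$. The construction is dictated by the data: set the register object to $F(R)$, the initialization to $F(i) \circ i_F : \initty_\mfD \to F(R)$, and the outputs to $o_F \circ F(o_q) : F(R) \to \retty_\mfD$ for each $q \in Q$. For the transition function, if $\delta(a,q) = (q', f)$ with $f \in \Hom{\cC}{R}{R}$, we put $\delta'(a,q) = (q', F(f))$, which is well-typed since $F(f) \in \Hom{\cD}{F(R)}{F(R)}$.

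Next I would verify that $\interp{\cT'}(w) = \interp{\cT}(w)$ for every $w \in \Sigma^*$. Fixing $w = w_1 \dots w_n$, write $(q_k, f_k) = \delta(w_k, q_{k-1})$ for the sequence of states and register transitions produced by $\cT$. By construction, $\cT'$ produces the same sequence of states and the transitions $F(f_k)$. The output of $\cT'$ on $w$ is thus, by definition,
\[ \curlyinterp{(o_F \circ F(o_{q_n})) \circ F(f_n) \circ \dots \circ F(f_1) \circ (F(i) \circ i_F)}_\mfD. \]
Using functoriality of $F$, the middle part collapses: the composite above equals $\curlyinterp{o_F \circ F(g) \circ i_F}_\mfD$, where $g = o_{q_n} \circ f_n \circ \dots \circ f_1 \circ i \in \Hom{\cC}{\initty_\mfC}{\retty_\mfC}$. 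Finally, the defining compatibility condition on $(F,i_F,o_F)$ asserts that $\curlyinterp{o_F \circ F(g) \circ i_F}_\mfD = \curlyinterp{g}_\mfC$, and the latter is exactly $\interp{\cT}(w)$.

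Since $Q$ and $q_0$ are transported unchanged, if $\cT$ is single-state then so is $\cT'$; thus the same construction proves both statements of the lemma. There is no real obstacle here — the lemma is essentially a bookkeeping check that the three pieces of a streaming-setting morphism (the functor, the initial comparison arrow, and the final comparison arrow) plug in exactly at the three places (transitions, initialization, output extraction) where a $\mfC$-SST uses its underlying category. The only mild subtlety worth flagging explicitly in the write-up is the application of functoriality to justify the collapse of the composite above, and the fact that the pre- and post-composition with $i_F$ and $o_F$ are independent of $w$, so that the compatibility condition applies uniformly to the $w$-dependent morphism $g$.
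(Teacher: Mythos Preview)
Your proof is correct and follows exactly the same approach as the paper's proof sketch: keep the states unchanged, apply $F$ to the memory object and transition morphisms, and pre/post-compose the initialization and outputs with $i_F$ and $o_F$. You even go slightly further than the paper by spelling out the verification that $\interp{\cT'}(w) = \interp{\cT}(w)$ via functoriality and the compatibility condition, which the paper leaves implicit.
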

\begin{proof}[Proof sketch]
  Given a $\mfC$-SST $(Q, q_0, R, \delta, i, (o_q)_{q \in Q})$ (with the
  notations of Definition~\ref{def:c-sst}) and a morphism of streaming settings
  $(F : \cC \to \cD,\; i' : \initty_\mfD \to F(\initty_\mfC),\; o' :
  F(\retty_\mfC) \to \retty_\mfD)$, one builds a $\mfD$-SST that computes the
  same function as follows. The set of states and initial state are unchanged
  (so our proof applies both to the stateful and the single-state case). The
  memory object becomes $F(R)$, and the $\Hom{\cC}(R,R)$ component of the
  transition $\delta$ function is passed through the functor $F$ to yield a
  $\mfD$-morphism $F(R) \to F(R)$. The new initialization morphism is $F(i)
  \circ i'$ and the new output morphisms are $(o' \circ F(o_q))_{q \in Q}$.
\end{proof}

\begin{rem}
  \label{rem:morph-set}
  For any streaming setting $\mfC$, the functor $\Hom{\cC}{\initty}{-}$ is a
  morphism of streaming settings $\mfC \to \mathfrak{Set}$ with $i =
  \mathrm{id}$ and $o = \curlyinterp{-}_\mfC$.
\end{rem}

In the sequel, we will omit giving the morphisms $i : \top_\cD \to F(\top_\cC)$
and $o : F(\retty_\cC) \to \retty_\cD$ most of the time, as they will be isomorphisms deducible
from the context.
The one exception to this situation will be in Lemma~\ref{lem:string-cps}.

\subsection{The category $\Sr(\Gamma)$ of $\Gamma$-register transitions}
\label{subsec:def-sr}

We now show that usual copyless SSTs are indeed an instance of our general
notion of categorical SSTs. To do so we must arrange copyless register
transitions (Definition~\ref{def:register-transition}) into a category: given $t
\in [R \to_{\Sr(\Gamma)} S]$ and $t' \in [S \to_{\Sr(\Gamma)} T]$, we must be
able to compose them into $t' \circ t \in [R \to_{\Sr(\Gamma)} T]$. Moreover,
this composition should be compatible with the action of register transitions on
tuples of strings, i.e.\ the latter should be \emph{functorial}: $(t' \circ
t)^\dagger = t'^\dagger \circ t^\dagger$.

\begin{defi}[see e.g.~{\cite[Section~C]{SSTomega}}]
  \label{def:reg-trans-compo}
  Let $t \in [R \to_{\Sr(\Gamma)} S]$ and $t' \in [S
  \to_{\Sr(\Gamma)} T]$; recall that $t$ and $t'$ are defined as maps
  between sets $t : S \to (\Gamma + R)^*$ and $t' : T \to (\Gamma + S)^*$.

  We define the \emph{composition of register transitions} $t'
  \circ_{\Sr(\Gamma)} t : T \to (\Gamma + R)^*$ to be the set-theoretic
  composition $t^\ddagger \circ t'$ where $t^\ddagger : (\Gamma + S)^* \to
  (\Gamma + R)^*$ is the unique monoid morphism extending the copairing of
  $\inl$ and $t$ (i.e.\ $(\inl(c) \mapsto \inl(c),\, \inr(s) \mapsto
  t(s)) : \Gamma + S \to (\Gamma + R)^*$).
\end{defi}

\begin{prop}
  \label{prop:Register}
  There is a \emph{category} $\Sr(\Gamma)$ (given a finite alphabet
  $\Gamma$ which we will often omit in the notation) whose objects are finite
  sets of registers, whose morphisms are copyless register transitions --
  $\Hom{\Sr(\Gamma)}{R}{S} = [R \to_{\Sr(\Gamma)} S]$ -- and whose
  composition is given by the above definition. This means in particular that,
  with the above notations, $t' \circ t \in [R \to_{\Sr(\Gamma)} T]$,
  i.e.\ copylessness is preserved by composition. Furthermore:
  \begin{itemize}
  \item This category admits the empty set of registers as the terminal object:
    $\top = \varnothing$.
  \item The action of register transitions on tuples of strings gives rise to a
    functor $(-)^\dagger : \Sr \to \Set$, with $X^\dagger = (\Gamma^*)^X$ on
    objects.
  \end{itemize}
\end{prop}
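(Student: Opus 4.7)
The plan is to leverage the universal property of the free monoid $(\Gamma + R)^*$ throughout. First I identify the identity morphism on $R$ as the function $\id_R : R \to (\Gamma + R)^*$ defined by $r \mapsto \inr(r)$, which is manifestly copyless. With this in hand, the category axioms reduce to routine computations using the characterization of $(-)^\ddagger$ as the unique monoid morphism extending a given map on generators. For associativity, given $t_1 : R \to R'$, $t_2 : R' \to R''$, $t_3 : R'' \to R'''$, both $(t_3 \circ t_2) \circ t_1$ and $t_3 \circ (t_2 \circ t_1)$ unfold to compositions involving $(t_2 \circ t_1)^\ddagger$ on the one hand and $t_1^\ddagger \circ t_2^\ddagger$ on the other; I verify these monoid morphisms agree on the generators $\inl(c)$ and $\inr(r'')$, and conclude by uniqueness. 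The identity laws are similarly immediate once one observes that $\id_R^\ddagger$ coincides with the identity on $(\Gamma + R)^*$.

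The main technical point is preservation of copylessness under composition. Given copyless transitions $t : S \to (\Gamma + R)^*$ and $t' : T \to (\Gamma + S)^*$, and a fixed $r \in R$, I compute
\[ \sum_{u \in T} \occ{\inr(r)}{(t' \circ t)(u)} \;=\; \sum_{u \in T} \sum_{s \in S} \occ{\inr(s)}{t'(u)} \cdot \occ{\inr(r)}{t(s)} \;=\; \sum_{s \in S} \occ{\inr(r)}{t(s)} \left( \sum_{u \in T} \occ{\inr(s)}{t'(u)} \right). \]
The first equality holds because $t^\ddagger$ replaces each occurrence of $\inr(s)$ in $t'(u)$ by the word $t(s)$, so the number of $\inr(r)$ in the result is obtained by summing the contribution of each such substitution. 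Copylessness of $t$ ensures that at most one $s \in S$ has $\occ{\inr(r)}{t(s)} \ge 1$ and, if it exists, contributes exactly $1$; for this (unique) $s$, the inner sum is in turn bounded by $1$ by copylessness of $t'$. Hence the total is at most $1$, as desired.

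The remaining two items are essentially formalities. For the terminal object claim, $\Hom{\Sr(\Gamma)}{R}{\varnothing}$ is the set of maps from the empty set to $(\Gamma + R)^*$, which is a singleton. For the functor $(-)^\dagger : \Sr(\Gamma) \to \Set$, I note that a tuple $x = (x_r)_{r \in R} \in (\Gamma^*)^R$ induces, by the universal property of the free monoid, a unique monoid morphism $\hat{x} : (\Gamma + R)^* \to \Gamma^*$ sending $\inl(c)$ to $c$ and $\inr(r)$ to $x_r$; by construction $t^\dagger(x)_s = \hat{x}(t(s))$. Functoriality ($\id^\dagger = \id$ and $(t' \circ t)^\dagger = t'^\dagger \circ t^\dagger$) then reduces to checking that $\hat{x} \circ t^\ddagger$ and $\widehat{t^\dagger(x)}$ agree on the generators of $(\Gamma + S)^*$, which is immediate.

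The main obstacle is clearly the copylessness preservation: although the counting computation above is conceptually simple, it is the one place where the interaction of the two copylessness hypotheses matters, and the rest of the statement essentially falls out of the universal property of free monoids once that point is secured.
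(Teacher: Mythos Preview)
Your proof is correct and carries out precisely the verification that the paper leaves to the reader: the paper does not prove this proposition at all, remarking only that ``the interested reader may verify from the definitions'' and that it is ``merely a restatement using categorical vocabulary of properties that are already used in the literature on usual SSTs.'' Your counting argument for preservation of copylessness and your use of the universal property of free monoids for associativity and functoriality of $(-)^\dagger$ are exactly the natural way to fill in these details.
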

The above proposition, which the interested reader may verify from the
definitions, is merely a restatement using categorical vocabulary of properties
that are already used in the literature on usual SSTs.

\begin{defi}
  We write $\SR(\Gamma)$ for the streaming setting
  $(\Sr(\Gamma), \top = \varnothing, \Bot = \{\bullet\}, \curlyinterp{-})$
  where $\curlyinterp{-} : [\varnothing \to_{\Sr(\Gamma)} \{\bullet\}] \to
  \Gamma^*$ is the canonical isomorphism $((\Gamma +
  \varnothing)^*)^{\{\bullet\}} \cong \Gamma^*$.
\end{defi}

\begin{fact}
\label{fact:register-sst-sst}
Standard copyless SSTs $\normalSST{\Sigma}{\Gamma}$ are the same thing as
$\SR$-SSTs $\Sigma^* \to \Gamma^*$.
\end{fact}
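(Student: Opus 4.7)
The proof is essentially a matter of unfolding definitions, with functoriality doing most of the real work. First I would set $\mfC = \SR(\Gamma)$ in Definition~\ref{def:c-sst} and compare the resulting tuple $(Q, q_0, R, \delta, i, o)$ component-by-component with that of Definition~\ref{def:sst}. All six pieces of data coincide on the nose: $Q$, $q_0$ are unchanged; an object of $\Sr(\Gamma)$ is precisely a finite set of registers; $\Hom{\Sr(\Gamma)}{R}{R} = [R \to_{\Sr(\Gamma)} R]$ matches the transition codomain; $\initty = \varnothing$ gives $i \in \Hom{\Sr(\Gamma)}{\varnothing}{R} = [\varnothing \to_{\Sr(\Gamma)} R]$; and $\retty = \{\bullet\}$ gives $o_q \in [R \to_{\Sr(\Gamma)} \{\bullet\}]$. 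So the syntactic data of the two notions are literally the same.

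The remaining task is to check that the two semantic clauses agree. Both read the input $w = w_1\ldots w_n$ through $\delta$ to produce the same sequence of states $q_0,\ldots,q_n$ and register transitions $t_1,\ldots,t_n \in [R \to_{\Sr(\Gamma)} R]$. The $\SR$-SST output is then $\curlyinterp{o_{q_n} \circ_{\Sr} t_n \circ_{\Sr} \cdots \circ_{\Sr} t_1 \circ_{\Sr} i}$, whereas the standard SST output is $o(q_n)^\dagger \circ t_n^\dagger \circ \cdots \circ t_1^\dagger \circ i^\dagger(\varnothing) \in \Gamma^*$. To identify these two expressions, I would invoke the functoriality of $(-)^\dagger : \Sr \to \Set$ from Proposition~\ref{prop:Register}, which yields $(o_{q_n} \circ_{\Sr} t_n \circ_{\Sr} \cdots \circ_{\Sr} t_1 \circ_{\Sr} i)^\dagger = o(q_n)^\dagger \circ t_n^\dagger \circ \cdots \circ t_1^\dagger \circ i^\dagger$, applied to the unique element of $(\Gamma^*)^\varnothing$. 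It then suffices to observe that the bracket $\curlyinterp{-}$ is, by its very definition in $\SR(\Gamma)$, the canonical isomorphism $[\varnothing \to_{\Sr(\Gamma)} \{\bullet\}] \cong \Gamma^*$ sending an arrow $h$ to the word $h^\dagger(\varnothing)(\bullet)$, under the same identifications of $(\Gamma^*)^{\{\bullet\}}$ with $\Gamma^*$ used in Definition~\ref{def:sst}. Hence the two outputs coincide. There is no genuine obstacle here; the only thing to be careful about is to trace the canonical isomorphisms $\Hom{\Sr(\Gamma)}{\varnothing}{R} \cong (\Gamma^*)^R$ and $\Hom{\Sr(\Gamma)}{R}{\{\bullet\}} \cong (\Gamma + R)^*$ consistently on both sides.
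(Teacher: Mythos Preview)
Your proposal is correct. The paper does not actually give a proof of this Fact; it is stated as self-evident from the definitions, and your careful unfolding (matching the six components, then invoking functoriality of $(-)^\dagger$ from Proposition~\ref{prop:Register} to align the two semantics via the canonical isomorphism defining $\curlyinterp{-}$) is exactly the verification the paper leaves implicit.
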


\begin{rem}
  The functor $\Hom{\cR}{\top}{-}$ mentioned in Remark~\ref{rem:morph-set} is,
  in the case of $\cR$, naturally isomorphic to $(-)^\dagger$. Therefore, the
  latter can be extended to a morphism $\SR \to \mathfrak{Set}$ of string
  streaming settings.
\end{rem}

\begin{prop}
  The category $\Sr$ can be endowed with a symmetric monoidal structure,
  where the monoidal product $R \tensor S$ is the disjoint union of register
  sets $R + S$ and the unit is the empty set of registers. Since the latter is
  also the terminal object of $\Sr$, this defines an \emph{affine
    symmetric monoidal category}.
\end{prop}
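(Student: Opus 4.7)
The plan is to lift the symmetric monoidal structure of $(\Finset, +, \varnothing)$ to $\Sr$. On objects, $R \otimes S$ is just $R + S$, the disjoint union of register sets. On morphisms, given $t : R \to R'$ and $t' : S \to S'$ in $\Sr$ (so concretely maps $R' \to (\Gamma+R)^*$ and $S' \to (\Gamma+S)^*$), I define $t \otimes t' : R+S \to R'+S'$ to be the map sending $\inl(r')$ to the image of $t(r')$ under the monoid morphism $(\Gamma+R)^* \to (\Gamma+R+S)^*$ induced by the inclusion $\inr \circ \inl$ on register variables, and symmetrically for $\inr(s')$. Copylessness is immediate: a variable $r \in R$ appears at most once (in the $\inl$-tagged part, with the same count as in $t$), and similarly for $S$, and the two pools of variables are disjoint.

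The next step is to check that $- \otimes -$ is a bifunctor. Preservation of identities is trivial since the identity on $R+S$ is precisely the \enquote{side-by-side} combination of identities on $R$ and $S$. For composition, given $t_2 \circ t_1$ and $t'_2 \circ t'_1$, both $(t_2 \otimes t'_2) \circ (t_1 \otimes t'_1)$ and $(t_2 \circ t_1) \otimes (t'_2 \circ t'_1)$ are obtained by independently substituting the left and right components: unfolding Definition~\ref{def:reg-trans-compo}, in $(t_2 \otimes t'_2) \circ (t_1 \otimes t'_1)$ applied to $\inl(r_3)$, the word $t_2(r_3)$ only mentions $R_2$-variables (not $S_2$-variables), so the substitution step just rewrites these via $t_1$, matching the definition of $t_2 \circ t_1$ componentwise. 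This is a routine but somewhat tedious verification; it is the one step where one must pay genuine attention to the combinatorics of substitution.

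For the coherence data, I observe that any bijection $\sigma : R \xrightarrow{\sim} S$ between finite register sets induces an isomorphism $\hat{\sigma} : R \to S$ in $\Sr$, namely the copyless transition $s \mapsto \inr(\sigma^{-1}(s))$. This assignment is functorial and sends composition of bijections to composition of register transitions. I then take the unitors $\lambda_R : \varnothing + R \to R$ and $\rho_R : R + \varnothing \to R$, the associator $\alpha_{R,S,T} : (R+S)+T \to R+(S+T)$, and the symmetry $\gamma_{R,S} : R+S \to S+R$ to be the images under this assignment of the canonical bijections of finite sets. Naturality with respect to arbitrary register transitions is straightforward to check on generators, and the pentagon, triangle, hexagon, and symmetry-involution diagrams reduce to the corresponding (trivially valid) diagrams of bijections between disjoint unions of finite sets.

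Finally, for affineness: the empty set of registers $\varnothing$ is already the monoidal unit by construction, and it is terminal in $\Sr$ because $\Hom{\Sr}{R}{\varnothing} = [R \to_\Sr \varnothing]$ is the set of functions $\varnothing \to (\Gamma+R)^*$, which is a singleton. Hence $(\Sr, \otimes, \varnothing)$ is an affine symmetric monoidal category in the sense of Definition~\ref{def:smc-affine}. The main obstacle, as noted, lies in the bifunctoriality check, where the interplay between substitution (composition in $\Sr$) and the disjoint union of register transitions must be carefully unwound; everything else is transported cleanly from the combinatorics of finite sets and disjoint unions.
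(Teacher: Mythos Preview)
Your proposal is correct and follows exactly the approach the paper indicates: the paper does not give a proof but only remarks that ``there is only one sensible way to define a set-theoretic map $t \otimes t' : U + S \to (\Gamma + (R + T))^*$'' and that ``checking this, as well as the requisite coherence diagrams for monoidal categories, is left to the reader.'' You have carried out precisely that exercise, with the coherence data transported from bijections of finite sets and the affineness check via terminality of $\varnothing$, which is the intended verification.
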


Note that given $t \in [R \to_{\Sr(\Gamma)} S]$ and $t' \in [T
\to_{\Sr(\Gamma)} U]$, there is only one sensible way to define a
set-theoretic map $t \otimes t' : U + S \to (\Gamma + (R + T))^*$. The above
proposition states, among other things, that $t \otimes t' \in [R + T
\to_{\Sr(\Gamma)} S + U]$. Checking this, as well as the requisite coherence
diagrams for monoidal categories, is left to the reader.

Next, let us observe that $\{\bullet\} \in \Obj(\Sr)$, representing a single register,
can be equipped with the structure of an internal monoid $(\{\bullet\}, \mu_\bullet, \eta_\bullet)$
by setting
\[\eta_\bullet(\bullet) = \varepsilon \qquad \text{and} \qquad
  \mu_\bullet(\bullet) = \inr(\mathtt{l})\inr(\mathtt{r})
  \quad\text{where}\ \mathtt{l} = \inl(\bullet)\ \text{and}\ \mathtt{r} =
  \inr(\bullet)
\]
so that $\mu_\bullet \in [\{\bullet\} \to_{\Sr(\Gamma)}
\{\bullet\} \otimes \{\bullet\}]$ has the codomain
 $\Gamma + (\{\bullet\} + \{\bullet\}) = \Gamma + \{\mathtt{l,r}\}$ when
 considered as a map between sets. This internal monoid is the key to giving
an inductive characterization of $\Sr$.
Given a
string $w = w_1 \ldots w_n \in \Gamma^*$, let us write $\widehat{w} \in
[\varnothing \to_{\Sr(\Gamma)} \{\bullet\}]$ for the register transition
defined by the map $\widehat{w} : \bullet \mapsto \inl(w_1)\dots\inl(w_n)
\in (\Gamma + \varnothing)^*$.

\begin{thm}
  \label{thm:functor-from-sr}
  Let $(\cC,\otimes,\unit)$ be an \emph{affine} symmetric monoidal category.
  
  For any internal monoid $(M,\mu,\eta)$ of $\cC$ and any family
  $(m_c)_{c\in\Gamma} \in \Hom{\cC}{\unit}{M}$ of morphisms, there exists a
  \emph{strong monoidal functor} $F : \Sr(\Gamma) \to \cC$ such that:
  \begin{itemize}
  \item $F(\varnothing) = \unit$, $F(\{\bullet\}) = M$ and $F(\widehat{c}) =
    m_c$ for every $c \in \Gamma$;
  \item $F(\mu_\bullet) = \mu$ and $F(\eta_\bullet) = \eta$ with the above
    definitions (implying that $F(\{\mathtt{l,r}\}) = M \otimes M$);
  \item the isomorphisms $\unit \to F(\varnothing)$ and $F(\{\bullet\}) \otimes
    F(\{\bullet\}) \to F(\{\bullet\} + \{\bullet\})$ that are part of the strong
    monoidal structure for $F$ are equal to $\id_\unit$ and $\id_{M \otimes M}$
    respectively.
  \end{itemize}
\end{thm}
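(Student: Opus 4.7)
The plan is to construct $F$ directly, guided by the idea that objects of $\Sr(\Gamma)$ are built from $\{\bullet\}$ by disjoint union (the monoidal product), while its morphisms are generated by the monoid operations $\mu_\bullet$, $\eta_\bullet$, the constants $\widehat{c}$ for $c\in\Gamma$, symmetries, and the unique discardings $\{r\}\to\varnothing$ (available because $\varnothing$ is terminal in $\Sr(\Gamma)$). The statement prescribes the image of each of these under $F$, with discarding being sent to the terminal morphism that $\cC$ supplies by affineness.

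\textbf{Action on objects.} For a finite set $R$ ordered as $r_1<\cdots<r_n$ using the convention from \S\ref{subsec:notations}, set $F(R)=M^{\otimes n}$ parsed right-associatively as $M\otimes(M\otimes(\cdots\otimes M))$. In particular $F(\varnothing)=\unit$ and $F(\{\bullet\})=M$. The coherence isomorphisms $F(R)\otimes F(R')\xrightarrow{\sim} F(R+R')$ of the strong monoidal structure come from the canonical reassociators and symmetries of $\cC$; they reduce to identities in the two special cases singled out by the statement.

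\textbf{Action on morphisms.} A morphism $t:R\to S$, viewed as a map $s\mapsto t(s)\in(\Gamma+R)^*$, induces a partition $R=R_{\mathrm{disc}}\sqcup\bigsqcup_{s\in S} R_s$, where $R_s$ collects the register variables occurring in $t(s)$; copylessness guarantees that each $r\in R_s$ occurs exactly once. For each $s$, write $t(s)=x^s_1\cdots x^s_{k_s}$ and build $\phi_s$ by iterating $\mu$: each $x^s_i=\inl(c)$ contributes a factor $\unit$ with morphism $m_c:\unit\to M$; each $x^s_i=\inr(r)$ contributes a factor $M$ with $\id_M$; and the resulting chain is folded by $\mu$, with $\eta:\unit\to M$ covering the case $k_s=0$. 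The morphism $F(t)$ is then assembled as the composite that discards the factors in $R_{\mathrm{disc}}$ via the unique morphism $F(R_{\mathrm{disc}})\to\unit$ afforded by affineness, rearranges the remaining factors with symmetries so that each group $R_s$ is contiguous and in the order required by $t(s)$, and finally tensors the $\phi_s$ to produce a morphism into $M^{\otimes S}=F(S)$.

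\textbf{Verifications and main obstacle.} Preservation of identities follows from the monoid unit laws, and $F(\varnothing)$, $F(\{\bullet\}+\{\bullet\})$, $F(\mu_\bullet)$, $F(\eta_\bullet)$, $F(\widehat{c})$ all coincide on the nose with the values required by the theorem by unwinding the definition. The delicate step is functoriality $F(t'\circ t)=F(t')\circ F(t)$. Composition in $\Sr(\Gamma)$ (Definition~\ref{def:reg-trans-compo}) substitutes each occurrence of $r\in R$ in $t'(s)$ by the whole word $t(s)$. Translating such substitutions into nested applications of $\mu$ in $\cC$ requires repeated use of associativity of $\mu$, of unitality when some $t(s)$ is empty or is a single register, and of the naturality of symmetries and terminal morphisms to commute the rearrangement and discarding steps past the $\phi_s$. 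A clean organization is to induct on the lengths of the words $t'(s)$, with the monoid axioms handling the inductive step and the equations prescribed by the theorem dealing with the base cases.
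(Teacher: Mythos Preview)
Your proposal is correct and follows essentially the same strategy as the paper: define $F(R)=M^{\otimes R}$, send a register transition to the composite that discards unused registers (via affineness), permutes the remaining factors into position, and folds each output word with $\mu$ and the $m_c$. The paper's proof differs mainly in how it organizes the functoriality verification: rather than an induction on word lengths, it invokes Mac Lane's coherence theorem for symmetric monoidal categories to make the ``rearrange with symmetries'' step canonical, and then appeals to the general associativity law for internal monoids (that $\mu^{(e)}=\mu^{(e')}\circ\Xi$ for any two ordered tensorial expressions $e,e'$ and their canonical isomorphism $\Xi$) to collapse the nested folds arising from substitution; the argument is then packaged as two large commutative diagrams rather than an induction. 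Your sketch is sound, but be aware that the ``naturality of symmetries'' step hides exactly this coherence bureaucracy, which the paper spends most of its appendix making precise.
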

Note that since $F$ is a monoidal functor, it transports the monoid object
$(\{\bullet\},\mu_\bullet,\eta_\bullet)$ to a structure of internal monoid over
$F(\{\bullet\}) = M$ in a canonical way~\cite[Section~6.2]{mellies09ps}. \mbox{A
  fact} that encapsulates the idea of the second item above -- but which,
strictly speaking, also depends on the third one -- is that the result of this
transport is precisely $(M,\mu,\eta)$.
\begin{proof}
  Although the intuition of the proof is simple, its execution involves a
  significant amount of bureaucracy; in particular, it manipulates canonical
  isomorphisms given by Mac Lane's coherence theorem for symmetric monoidal
  categories~\cite[Section~XI.1]{CWM}. For this reason, we will only illustrate
  the idea here in the concrete case of the cartesian category of sets;
  \emph{the full proof can be found in \Cref{sec:coherence-sr}}.

  For $(\cC, \otimes, \unit) = (\Set, \times, \{*\})$, we can reformulate the
  data given in the statement as a monoid $M$ with a family of elements
  $(m_c)_{c \in \Gamma} \in M^\Gamma$, that can be identified with functions
  $m_c : \{*\} \to M$ for $c \in \Gamma$. The functor that we build then maps an
  object $R$ -- a finite set of registers -- to the set $M^R$. The action on
  morphisms is best illustrated through an example: $(t : z \mapsto axby) \in
  [\{x,y\} \to_{\Sr(\{a,b\})} \{z\}]$ (where $\inl/\inr$ are omitted) becomes
  the map $(u_x,u_y) \in M^{\{x,y\}} \mapsto m_a u_x m_b u_y \in M \cong
  M^{\{z\}}$. Note that when we apply this construction to $M = \Gamma^*$ with
  $m_c = c$, we recover the functor $(-)^\dagger : \Sr(\Gamma) \to \Set$ from
  Proposition~\ref{prop:Register}. 
\end{proof}

\begin{rem}
Informally speaking, we think of $\Sr(\Gamma)$ as the affine
symmetric monoidal category freely generated by an internalization of the free
monoid $\Gamma^*$.
To truly express this, one would need to add to 
Theorem~\ref{thm:functor-from-sr} the uniqueness of $F$ up to natural
isomorphism. This would imply, among other things, that
the morphisms of $\Sr$ are inductively given by
\begin{itemize}
\item the identities $\id$
\item the compositions
\item the structural morphisms associated to the tensor product
\item the unique morphism $\{\bullet\} \to \varnothing$
\item canonical morphisms $\widehat{c} : \top \to \{\bullet\}$ for every individual letter $c \in \Gamma$
\item a canonical morphism $\eta : \top \to \{\bullet\}$ corresponding to the empty word
\item a multiplication morphism $\mu : \{\bullet\} \tensor \{\bullet\} \to
  \{\bullet\}$ corresponding to string concatenation.
\end{itemize}
However, we do not prove this inductive presentation, nor the uniqueness
property, since they are not necessary for our purposes.
\end{rem}

Since the monoid object structure of internal homsets
(Proposition~\ref{prop:internal-endo-monoid}) has a somewhat explicit
description, Theorem~\ref{thm:functor-from-sr} admits a specialized and
simplified formulation for affine symmetric monoidal closed categories.
For our purposes, it will be useful to give a version that also applies in the
quasi-affine case.
\begin{cor}
  \label{cor:sr-to-smcc}
  Let $(\cC,\otimes,\unit)$ be a \emph{quasi-affine} symmetric monoidal \emph{closed} category and $A$ be an
  object in $\cC$.
  For any family $(f_c)_{c \in \Gamma} \in
  \Hom{\cC}{A}{A}^\Gamma$ of endomorphisms, there exists a strong monoidal
  functor $F : \Sr(\Gamma) \to \cC$ such that
  \[ F(\varnothing) = \unit \qquad F(\{\bullet\}) = (A \lin A) \with \unit
    \qquad \forall w \in \Gamma^*,\; F(\widehat{w}) =
    \tuple{\Lambda'\left(f_{w[1]} \circ \dots \circ f_{w[n]}\right),\,\id_\unit} \]
  where we use the notations $\widehat{(-)}$ from
  Theorem~\ref{thm:functor-from-sr} and $\Lambda'(-)$ from
  Proposition~\ref{prop:internal-endo-monoid}.
\end{cor}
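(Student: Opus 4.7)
The plan is to factor $F$ through the slice category $\bigslant{\cC}{\unit}$. Since $\cC$ is quasi-affine, $\bigslant{\cC}{\unit}$ carries an affine symmetric monoidal structure (as recalled in~\S\ref{subsubsec:prelim-affine}), and the forgetful functor $\dom : \bigslant{\cC}{\unit} \to \cC$ is strong monoidal. So it will suffice to construct a strong monoidal functor $G : \Sr(\Gamma) \to \bigslant{\cC}{\unit}$ with the appropriate action and then set $F := \dom \circ G$.

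To construct $G$, I plan to appeal to Theorem~\ref{thm:functor-from-sr}, which requires an internal monoid of $\bigslant{\cC}{\unit}$ together with a $\Gamma$-indexed family of morphisms from the monoidal unit of the slice to its carrier. Monoidal closure of $\cC$ supplies the internal homset $A \lin A$, which receives a canonical internal monoid structure $(A \lin A, \mu, \eta)$ by Proposition~\ref{prop:internal-endo-monoid}. By quasi-affineness, the cartesian product $(A \lin A) \with \unit$ exists in $\cC$; Proposition~\ref{prop:monoid-withunit} then lifts the multiplication and unit componentwise, and crucially states that this yields an internal monoid of $\bigslant{\cC}{\unit}$ on the object $\pi_2 : (A \lin A) \with \unit \to \unit$. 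For each $c \in \Gamma$, the morphism $\tuple{\Lambda'(f_c),\,\id_\unit} : \unit \to (A \lin A) \with \unit$ satisfies $\pi_2 \circ \tuple{\Lambda'(f_c),\,\id_\unit} = \id_\unit$, hence defines a morphism from the unit of $\bigslant{\cC}{\unit}$ to our monoid carrier. Applying Theorem~\ref{thm:functor-from-sr} yields $G$, and postcomposing with $\dom$ produces a functor $F$ with the announced action on $\varnothing$, $\{\bullet\}$ and the generators $\widehat{c}$ for $c \in \Gamma$.

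It remains to verify the value of $F(\widehat{w})$ for $w = w_1\cdots w_n \in \Gamma^*$. The crucial observation is that $F$ is strong monoidal and, by Theorem~\ref{thm:functor-from-sr}, transports the internal monoid $(\{\bullet\}, \mu_\bullet, \eta_\bullet)$ of $\Sr(\Gamma)$ to the monoid we built on $(A \lin A) \with \unit$; moreover, in $\Sr(\Gamma)$ the morphism $\widehat{w}$ is precisely the $n$-fold multiplication of the generators $\widehat{w_i}$ against this internal monoid. Hence $F(\widehat{w})$ is the corresponding $n$-fold multiplication of the $\tuple{\Lambda'(f_{w_i}),\,\id_\unit}$'s in $(A \lin A) \with \unit$. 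Combining the first-component formula for this multiplication from Proposition~\ref{prop:monoid-withunit} with the identity relating $\mu$, $\Lambda'$ and composition of endomorphisms in Proposition~\ref{prop:internal-endo-monoid}, an easy induction on $n$ produces the claimed expression $\tuple{\Lambda'(f_{w_1} \circ \cdots \circ f_{w_n}),\,\id_\unit}$.

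The main obstacle is not conceptual but technical: the bookkeeping of coherence isomorphisms (unitors, associators, and the comparison data for the strong monoidal structures of $\dom$, $G$ and their composite) is precisely the substantive difficulty already handled in Theorem~\ref{thm:functor-from-sr}. Once that theorem is invoked as a black box, as above, the rest of the argument is essentially routine.
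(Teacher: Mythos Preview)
Your proposal is correct and follows essentially the same approach as the paper: factor through the affine slice $\bigslant{\cC}{\unit}$, build the monoid on $(A \lin A)\with\unit$ via Propositions~\ref{prop:internal-endo-monoid} and~\ref{prop:monoid-withunit}, apply Theorem~\ref{thm:functor-from-sr}, postcompose with $\dom$, and verify $F(\widehat{w})$ by induction using the monoid law $\mu\circ(\Lambda''(f)\otimes\Lambda''(g))\circ\lambda_\unit=\Lambda''(f\circ g)$. The paper's proof differs only in making the coherence bookkeeping for the strong monoidal comparison maps $m_{R,S}$ explicit in the inductive step, which you correctly flag as routine once Theorem~\ref{thm:functor-from-sr} is granted.
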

Our first application of this result will be to show in the next subsection that
all regular functions are $\laml$-definable. For this purpose, the level of
abstraction that we are working with is unnecessary: it would suffice to encode
copyless SSTs as $\laml$-terms, a programming exercise that is not particularly
difficult. But later, the generalized preservation theorem of
\Cref{subsec:composition} will apply Corollary~\ref{cor:sr-to-smcc} in its full
generality.
\begin{proof}
  Proposition~\ref{prop:internal-endo-monoid} gives a canonical internal monoid
  structure to $A \lin A$, which by Proposition~\ref{prop:monoid-withunit} can
  be lifted to a monoid object $((A \lin A) \with\unit,\, \mu,\,\eta)$. For $f
  \in \Hom{\cC}{A}{A}$, let $\Lambda''(f) = \tuple{\Lambda'(f),\id_\unit} :
  \unit \to (A \lin A) \with\unit$.

  We apply Theorem~\ref{thm:functor-from-sr} to the slice category
  $\bigslant{\cC}{\unit}$ -- which satisfies the affineness assumption -- with
  the internal monoid $\pi_2 : (A \lin A) \with\unit \to \unit$ (cf.\
  Proposition~\ref{prop:monoid-withunit} again) and the family
  $(\Lambda''(f_c))_{c\in\Gamma}$ (each $\Lambda''(f)$ is a morphism in the
  slice category from its unit $\id_\unit$ to this $\pi_2$ since $\pi_2
  \circ \Lambda''(f) = \id_\unit$).
  We compose the resulting functor $\Sr \to \bigslant{\cC}{\unit}$ with the
  forgetful functor $\dom : \bigslant{\cC}{\unit} \to \cC$ to get $F : \Sr \to
  \cC$ such that $F(\varnothing) = \unit$ and $F(\{\bullet\}) = (A \lin A)
  \with\unit$. As a composition of strong monoidal functors, $F$ is also strong
  monoidal. This takes care of all but one of the corollary's conclusions.

  For the remaining one, we need to do some preliminary work. First, let us
  recall two commuting diagrams below. The left one comes from the naturality of
  the family of (iso)morphisms $m_{R,S} : F(R) \otimes F(S) \to F(R + S)$ that
  make $F$ a (strong) monoidal functor, while the right one is among the
  coherence conditions in Definition~\ref{def:monfunctor}.
  \[
  \xymatrix@C=15mm{
    \unit \otimes \unit
  \ar[r]^{m_{\varnothing,\varnothing}}
  \ar[d]_{F(\widehat{u})\otimes F(\widehat{v})}
  & F(\varnothing+\varnothing) \ar[d]_{F(\widehat{u} \otimes \widehat{v})} \\
  F(\{\bullet\}) \otimes F(\{\bullet\})
  \ar[r]^{m_{\{\bullet\},\{\bullet\}}} &
  F(\{\bullet\} + \{\bullet\}) \\
  }
  \qquad
  \xymatrix@C=15mm{
  \unit \tensor F(\varnothing)
  \ar[r]^{\lambda_{F(\varnothing)}}
  \ar[d]_{m_0 \tensor \id_{F(\varnothing)}}
  & F(\varnothing) \\
  F(\varnothing) \tensor F(\varnothing)
  \ar[r]^{m_{\varnothing,\varnothing}} &
  F(\varnothing + \varnothing) \ar[u]_{F(\lambda_\varnothing)} \\
  }
  \]
  Here, $m_0 : \unit \to F(\varnothing)$ is also part of the strong monoidal
  structure of $F$ and $u,v \in \Gamma^*$. The construction of
  Theorem~\ref{thm:functor-from-sr} gives us $m_0 = \id_\unit$ and
  $m_{\{\bullet\},\{\bullet\}} = \id_{F(\{\bullet\}+\{\bullet\})}$; furthermore,
  we have $\varnothing+\varnothing=\varnothing$ and $\lambda_\varnothing =
  \id_\varnothing$ in $\Sr$. Thus, in the end, we can combine the two equalities
  expressed by the above diagrams and simplify them to get
  \[ \forall u,v \in \Gamma^*,\quad F(\widehat{u}) \otimes F(\widehat{v}) =
    F(\widehat{u} \otimes \widehat{v}) \circ \lambda_\unit \]
  Theorem~\ref{thm:functor-from-sr} also guarantees that $F(\mu_\bullet) = \mu$,
  $F(\eta_\bullet) = \eta$ and $F(\widehat{c}) = \Lambda''(f_c)$ for all
  $c\in\Gamma$. At the same time, by combining
  Propositions~\ref{prop:internal-endo-monoid} and~\ref{prop:monoid-withunit},
  one can derive
  \[ \eta = \Lambda''(\id_A) \qquad
    \quad
    \mu \circ (\Lambda''(f) \tensor \Lambda''(g)) \circ \lambda_\unit =
    \Lambda''(f \circ g)
    \quad \text{for $f, g \in \Hom{\cC}{A}{A}$}
  \]
  We use all the above in a proof by induction of the desired conclusion.
  \begin{itemize}
  \item The base case is $F(\widehat{\varepsilon}) = F(\eta_\bullet) = \eta =
    \Lambda''(\id_A)$ (indeed, $\widehat{\varepsilon} = \eta_\bullet = (\bullet
    \mapsto \varepsilon)$ by definition).
  \item For the inductive case, write any word of length $n+1$ in $\Gamma^*$ as
    $wc$ for $w \in \Gamma^*$ with $\len{w}=n$ and $c \in \Gamma$, and suppose
    that by induction hypothesis $F(\widehat{w}) = \Lambda''\left(w\right)$. A
    direct computation of register transitions suffices to check that
    $\widehat{wc} = \mu_\bullet \circ (\widehat{w}\otimes\widehat{c})$. Then
    \begin{align*}
      F(\widehat{wc}) &= F(\mu_\bullet) \circ F(\widehat{w} \otimes \widehat{c}) =
                        \mu \circ \left(F\left(\widehat{w}\right) \otimes
                        F\left(\widehat{c}\right)\right) \circ \lambda_\unit\\
      &= \mu \circ\left(
        \Lambda''\left(f_{w[n]} \circ \dots \circ f_{w[1]}\right)
        \otimes \Lambda''\left(f_c\right) \right)   \circ \lambda_\unit
      = \Lambda''\left(f_{w[1]} \circ \dots \circ
        f_{w[n]} \circ f_c\right) \qedhere
    \end{align*}
  \end{itemize}
\end{proof}

\subsection{The syntactic category $\Lamcat$ of purely linear $\laml$-terms}
\label{subsec:def-lamcat}

Now we relate our notion of generalized SSTs to the $\laml$-calculus.
If $\Gamma = \{b_1, \ldots, b_n\}$ is an alphabet, call $\widetilde{\Gamma}$ the
non-linear typing context
\[ \widetilde\Gamma = (b_1 : \basety \lin \basety,\, \ldots,\, b_n : \basety \lin \basety,\, \varepsilon : \basety)\]
\begin{defi}
\label{def:syntacticcat}
We call $\Lamcat(\widetilde{\Gamma})$ (or just $\Lamcat$ when $\Gamma$ is clear from the context) the category
\begin{itemize}
\item whose objects are purely linear $\laml$ types;
\item whose morphisms from $\tau$ to $\sigma$ are terms $t$ such that
$\widetilde{\Gamma}; \; \cdot \vdash t : \tau \lin \sigma$, considered \emph{up to
$\lamlequiv$-equivalence};
\item whose identity is given by $\lam x. x$ and composition of $f$ and $g$ by $\lam x.\; f \; (g \; x)$.
\end{itemize}
\end{defi}

\begin{rem}
\label{rem:betaeta-conservativity}
In the definition of morphisms, represented by $\laml$ terms, we
only make a restriction on the types of the $\laml$-terms.
Because $\laml$ is normalizing (Theorem~\ref{thm:laml-normalization}), we could have further
assumed the terms to be normal and thus, to only contain subterms whose types are also purely linear.
Therefore, it makes sense to say that $\Lamcat(\tilde{\Gamma})$ is about the \emph{purely linear fragment of $\laml$}
augmented with (inert) constants from $\tilde{\Gamma}$.

Similarly, for the equivalence relation we use to actually define the homsets,
we use the full $\lamlequiv$-equivalence, which could, on the face of it, require to
go oustide of the purely linear fragment of $\laml$ to establish certain equalities (for instance,
consider the rather artifical derivation $\lam x. x =_\lamlequiv \lam x. (\lam^\oc y. y) \; x =_\lamlequiv \lam x.x$).
This is also unnecessary as it can be checked that the normalization argument relies 
on a reduction relation $\to_{\beta\extr}$ which is confluent up to commutative conversions $\cocoeq$ (Theorem~\ref{thm:CR}); since both $\to_{\beta\extr}$ and $\cocoeq$ preserve the purely linear fragment, this is enough to conclude
that we could ignore non-purely linear terms when defining $=_{\beta\eta}$ for the purely linear fragment.
\end{rem} 

$\Lamcat$ is a monoidal closed category with products and coproducts,
which captures the expressiveness of purely linear $\laml$-terms enriched with
constants for the ``empty word'' and prepending letters of $\Gamma$ to
the left of a ``word'' when regarding the type $\basety$ being regarded as the type of such words.
This leads to the expected notion of streaming setting.

\begin{defi}
\label{def:syntacticstreaming}
$\mfLam$ is the streaming setting $(\Lamcat, \unit, \basety, \curlyinterp{-}_{\mfLam})$
with output $\Gamma^*$ such that $\curlyinterp{t}_\mfLam = w$ if and only if\footnote{Recall that this defines
a total function because of the bijection between Church encodings and normal forms; see
Proposition~\ref{prop:laml-tree-nf}.}
$\lam^\oc b_1. \ldots \lam^\oc b_n.\; \lam^\oc \varepsilon.\; t$ is $\lamlequiv$-equivalent to the Church encoding of $w$ (this defines a total function because of Proposition~\ref{prop:laml-tree-nf}).
\end{defi}

Our interest in $\mfLam$ lies in the following lemma. (Recall that a $\mfC$-SST
is said to be \emph{single-state} if its set of states is a singleton.)
\begin{lem}
\label{lem:lamlsst}
A function $\Sigma^* \to \Gamma^*$ is computable by a single-state $\mfLam$-SSTs
if and only if it is $\laml$-definable in the sense of
Definition~\ref{def:laml-definable}.
\end{lem}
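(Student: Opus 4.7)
The plan is to prove both implications by direct construction, observing that the data of a single-state $\mfLam$-SST — a purely linear memory type $R$, an initial term $i : R$, transitions $t_a : R \lin R$ for $a \in \Sigma$, and an output $o : R \lin \basety$ (all morphisms in $\Lamcat(\widetilde\Gamma)$) — and the data of a $\laml$-definable function packaged as $\mathtt{f} : \Str_\Sigma[\kappa] \lin \Str_\Gamma$ are in essential correspondence, mediated by the standard ``composition-at-higher-type'' trick that fixes the ordering mismatch between Church iteration and the left-to-right SST run.

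For the forward implication, I would set $\kappa = R \lin R$ (purely linear since $R$ is) and, writing $\Sigma = \{a_1, \ldots, a_l\}$, take
\[\mathtt{f} \;=\; \lambda s.\; \lambda^\oc b_1. \ldots \lambda^\oc b_k.\; \lambda^\oc \varepsilon.\; o\; (s\; F_{a_1}\; \ldots\; F_{a_l}\; (\lambda x. x)\; i)\]
with $F_a = \lambda g.\, \lambda x.\; g\; (t_a\; x) : (R \lin R) \lin (R \lin R)$. An induction on $\len{w}$ unfolding the Church encoding yields $\underline{w}\; F_{a_1}\; \ldots\; F_{a_l}\; (\lambda x. x) \;=_{\beta\eta}\; \lambda x.\; t_{w_n}(\ldots t_{w_1}(x)\ldots)$, so $\mathtt{f}\, \underline{w}$ is $\beta\eta$-equivalent to $\lambda^\oc b_1 \ldots \lambda^\oc b_k \lambda^\oc \varepsilon.\; o\,(t_{w_n}(\ldots t_{w_1}(i)\ldots))$, which is the Church encoding of the SST's output on $w$ by definition of $\curlyinterp{-}_\mfLam$.

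For the converse, given $\mathtt{f} : \Str_\Sigma[\kappa] \lin \Str_\Gamma$, I would normalize and $\eta$-expand it using the results of Appendix~\ref{sec:laml-normalization}. Linearity of the argument $s$ forces exactly one occurrence in the body, and $\eta$-longness forces this occurrence to be fully applied, so $\mathtt{f}$ can be put in the form
\[\lambda s.\; \lambda^\oc b_1 \ldots \lambda^\oc b_k.\, \lambda^\oc \varepsilon.\; M'[\,s\; H_1\; \ldots\; H_l\; J\,]\]
with $H_a : \kappa \lin \kappa$, $J : \kappa$, and $M'$ a linear context of type $\basety$ with a $\kappa$-hole, all living in the context $\widetilde\Gamma$ with empty linear part. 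I would then build the dual single-state $\mfLam$-SST whose memory is $\kappa \lin \kappa$, with initial term $\lambda x. x$, transitions $t_a = \lambda g.\, \lambda x.\; g\; (H_a\; x)$, and output $o = \lambda g.\; M'[\,g\; J\,]$. A mirror calculation confirms that the final memory after processing $w$ is $\beta\eta$-equivalent to $\lambda x.\; H_{w_1}(\ldots H_{w_n}(x)\ldots)$; the output is thus $M'[\,\underline{w}\; H_1\; \ldots\; H_l\; J\,] =_{\beta\eta} \mathtt{f}(\underline{w})\; b_1 \ldots b_k\; \varepsilon$ in $\widetilde\Gamma$, and $\eta$-abstracting the non-linear variables recovers $\underline{f(w)}$.

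The main obstacle is the backward direction's reliance on the fine structure of $\laml$ normal forms: because $\mathtt{f}$'s type contains the non-purely-linear connective `$\to$', $\mathtt{f}$ itself lies outside the purely linear fragment, and one must invoke the normalization and $\eta$-expansion developed in Appendix~\ref{sec:laml-normalization} to justify the decomposition $M = M'[\,s\; H_1\; \ldots\; H_l\; J\,]$ with $H_a, J, M'$ of the prescribed purely linear types. Once this decomposition is in hand, checking that the resulting data define a bona fide morphism in $\Lamcat$ and that the semantic equivalence holds is routine $\beta\eta$-bookkeeping.
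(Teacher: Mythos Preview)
Your proposal is correct and follows essentially the same route as the paper. In the forward direction you inline the reversal trick that the paper packages as the term $\mathtt{rev}$ from Example~\ref{ex:laml-rev} (your $F_a$ is exactly what $\mathtt{rev}$ feeds to the Church iterator); in the backward direction your decomposition $M'[\,s\;H_1\;\ldots\;H_l\;J\,]$ is precisely the conclusion of Lemma~\ref{lem:laml-niceshape}, and the SST you build from it matches the paper's.

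One clarification worth flagging: your informal justification for the decomposition (``linearity forces one occurrence, $\eta$-longness forces full application'') is the right intuition but does not by itself suffice, because $\kappa$ may contain the positive connectives $\otimes$ and $\oplus$, so in a normal form the unique occurrence of $s$ could sit under nested $\mathsf{let}$/$\case$ eliminators rather than at the head. This is why the paper isolates the decomposition as Lemma~\ref{lem:laml-niceshape} and devotes Appendix~\ref{sec:laml-niceshape} (not merely the normalization of Appendix~\ref{sec:laml-normalization}) to its proof; you correctly identify this as the main obstacle, but the reference should be to that syntactic lemma rather than to normalization alone.
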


To prove one direction of this equivalence, we need a technical lemma on
$\laml$-terms defining string functions. In order to state the lemma in the more
general case of tree functions, so that it may be reused in \Cref{sec:trees}, we
extend the notation $\widetilde{\Gamma}$ above as follows: given a ranked
alphabet ${\bf \Sigma} = \{a_1 : A_1,\; \ldots,\; a_n : A_n\}$ (recall the
notation from \S\ref{subsec:notations}), let
  \[\widetilde{\bf \Sigma} = (a_1 : \basety \lin \ldots \lin \basety,\; \ldots,\; a_n : \basety \lin \ldots \lin \basety)\]
where the type of $a_i$ has $\card{A_i}$ arguments (thus, it
contains $\card{A_i}+1$ occurrences of $\basety$).

\begin{lem}
\label{lem:laml-niceshape}
Let ${\bf \Sigma} = \{a_1 : A_1, \ldots, a_n : A_n\}$
and ${\bf \Gamma} = \{ b_1 : B_1, \ldots, b_k : B_k\}$ be ranked alphabets such that there is some $A_i = \varnothing$ (i.e., $\Tree({\bf \Sigma}) \neq \varnothing$).
Up to $\lamlequiv$-equivalence, every term of type $\Treety_{\bf \Sigma}[\kappa] \lin \Treety_{\bf \Gamma}$ is of the shape
\[ \lam s. \lam^\oc b_1. \ldots \lam^\oc b_k. \; o \; (s \; d_1 \; \ldots \; d_n)\]
such that $o$ and the $d_i$ are purely linear $\laml$-terms with no occurrence of $s$. In particular:
\[\widetilde{\bf \Gamma} ; \cdot \vdash o : \kappa \lin \basety \qquad \qquad \widetilde{\bf\Gamma} ; \cdot \vdash d_i : \kappa \lin \ldots \lin \kappa\]
(with the type of $d_i$ having $\card{B_i}$ arguments).
\end{lem}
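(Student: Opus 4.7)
The plan is to combine normalization of $\laml$ (Theorem~\ref{thm:laml-normalization}) with an analysis of $\beta\eta$-long normal forms. First, I would bring $t$ to its $\beta\eta$-long normal form. Since the target type $\Treety_{\bf\Sigma}[\kappa] \lin \Treety_{\bf\Gamma}$ takes one linear argument of type $\Treety_{\bf\Sigma}[\kappa]$ followed by $k$ non-linear arguments (corresponding to the `$\to$'s at the head of $\Treety_{\bf\Gamma}$) and returns $\basety$, the outermost $\eta$-expansion forces the shape $\lam s.\,\lam^\oc b_1. \ldots \lam^\oc b_k.\, u$ with $\widetilde{\bf\Gamma};\, s : \Treety_{\bf\Sigma}[\kappa] \vdash u : \basety$ in normal form.

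The heart of the argument is a structural analysis of $u$. Since $s$ is linear it must be used in $u$, and by $\eta$-expansion each occurrence of $s$ is fully applied as $s\,d_1\cdots d_n$ of type $\kappa$. The key claim is that in the normal form, $s$ occurs \emph{syntactically exactly once}. To see this, note that $\widetilde{\bf\Gamma}$ contains only variables of type $\basety \lin \cdots \lin \basety$, so no $\oplus$-, $\otimes$-, or $\with$-typed subexpression is accessible from the context by projection or elimination. Consequently the subject of any additive or multiplicative eliminator ($\case$, $\tlet{x\otimes y}{-}{-}$, $\tlet{()}{-}{-}$, $\abort$) appearing in the normal form of $u$ must be a neutral term whose ultimate head is $s$; since the typing rules for such eliminators split the linear context between the subject and the branches/continuation, $s$ ends up in the subject and nowhere else. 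Moreover, the $d_i$ sit in \emph{non-linear} argument positions of $s$, so the typing rule for non-linear application forces each $d_i$ to be typed in $\widetilde{\bf\Gamma};\,\cdot$ and thus cannot mention $s$.

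Having localized $s$, one can replace the subterm $s\,d_1\cdots d_n$ by a fresh linear variable $z : \kappa$ in $u$, obtaining a term $u'$ with $\widetilde{\bf\Gamma};\, z : \kappa \vdash u' : \basety$; setting $o \mathrel{:=} \lam z.\,u'$ then gives the factorization $t =_\lamlequiv \lam s.\,\lam^\oc b_1. \ldots \lam^\oc b_k.\, o\,(s\,d_1\cdots d_n)$. For the purely linear property, I would appeal to the preservation noted after the definition of purely linear terms: the types of $o$ (namely $\kappa \lin \basety$) and of the $d_i$ (namely $\kappa \lin \cdots \lin \kappa$, with one argument per child of $a_i$) are purely linear, as are all types occurring in the contexts $\widetilde{\bf\Gamma};\,\cdot$ and $\widetilde{\bf\Gamma};\,z:\kappa$ in which they are typed, so normalization yields typing derivations in which every occurring type is purely linear.

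The main obstacle is the uniqueness-of-occurrence claim, which calls for a careful inspection of $\beta\eta$-long normal forms in the presence of commuting conversions (cf.\ Remark~\ref{rem:betaeta-conservativity}). I would formalize it by induction on the shape of the normal form, following the classification sketched in Appendix~\ref{sec:laml-normalization} together with the observation above on how any compound-typed subexpression in $u$ must originate from $s$.
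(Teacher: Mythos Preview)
Your approach is essentially the paper's own, and your identification of the main obstacle (the single-occurrence claim) is accurate. The paper carries out exactly the induction you sketch in your last paragraph, packaged as two auxiliary lemmas: a ``strictly positive subtype'' lemma stating that any neutral term in a consistent context has a type that is a strictly positive subtype of some variable's type (your observation that $\widetilde{\bf\Gamma}$ exposes no $\oplus/\otimes/\with$-typed pieces is the relevant instance), and a main invariant lemma proved by induction on the neutral derivation of $u$, whose hypothesis is that the linear context is exactly $\{s\}$ and the type is either $\sqsubseteq_+ \kappa$ or of the form $\basety \lin \cdots \lin \basety$. At each step one uses the subtype lemma to force $s$ into the subject of the eliminator (or into the linear argument of an application, whose type is then forced to be $\basety$, hence again neutral), and builds $o$ by wrapping --- which is exactly your ``replace $s\,d_1\cdots d_n$ by $z$'' move.

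Two points where your sketch should be sharpened to match the paper. First, the reason $s$ is fully applied is not $\eta$-expansion of normal forms (the paper's normal forms are not $\eta$-long): rather, any partially applied $s\,d_1\cdots d_m$ has a $\to$-headed type, and since $\kappa$ is purely linear no $\to$-type is $\sqsubseteq_+ \kappa$; so in the induction one eventually hits a non-linear application whose function part is, by the subtype lemma, exactly some $s\,d_1\cdots d_{n-1}$. Second, to rule out $\abort$ (which can discard $s$ entirely) the paper uses that $\widetilde{\bf\Gamma};\, s : \Treety_{\bf\Sigma}[\kappa]$ is \emph{consistent}, which is where the hypothesis $\Tree({\bf\Sigma}) \neq \varnothing$ enters. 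Similarly, $\with$-pairs and $\top$-intros with $s$ in their body are excluded not because they are eliminators but because their surrounding context would require a neutral term of a non-$\basety\lin\cdots\lin\basety$ function type without $s$, which the subtype lemma forbids.
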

We expend considerable effort in proving this lemma; some of it is spent on
routine yet cumbersome bureaucracy, and some on actual technical subtleties. But
since the obstacles are unrelated to the various conceptual points concerning
automata and semantics that we wished to stress, we relegate the proof to
\Cref{sec:laml-niceshape}.

The idea is to analyze the shape of $\laml$-terms in \emph{normal form}. For
this purpose, the naive notion of $\beta$-normal form, that is, the
non-existence of $\beta$-reductions from a term, is inadequate because of the
\emph{positive} connectives $\otimes/\oplus$. We thus start by defining a better
notion of normal form, that can be reached by combining $\beta$-reductions with
applications of $\eta$-conversions to unlock new $\beta$-redexes. This is
done in \Cref{sec:laml-normalization}, where we prove a normalization
theorem. \Cref{sec:laml-niceshape} then provides a lengthy case analysis of
normal forms inhabiting the type $\Treety_{\bf \Sigma}[\kappa] \lin \Treety_{\bf
  \Gamma}$ to establish Lemma~\ref{lem:laml-niceshape}.

Similar issues arise in the literature concerned with deciding
$\beta\eta$-convertibility in $\lambda$-calculi with positive connectives
(see~\cite{gasche-thesis} for a comprehensive and pedagogical overview of this
subject). In our case, we are interested in normal forms only because they
lend themselves to syntactic analysis.

We can now use this to establish the equivalence between $\laml$-definability
and $\mfLam$-SSTs.
\begin{proof}[Proof of Lemma~\ref{lem:lamlsst}]
Before beginning the proof, it should be noted that SSTs process strings from left
to right while Church encodings work rather from right to left. This is not a big issue
in the presence of higher-order functions.

\vspace{0.5em}

$\left(\text{$\mfLam$-SST}~ \subseteq~ \laml\right) ~~~~$ Given a $\mfLam$-SST $\cT = (\{*\},*,\tau,\delta,i,o)$, $\delta$ may be regarded
as family of $\laml$ terms $(t_a)_{a \in \Sigma}$ (with free variables in $\widetilde{\Gamma}$).
Suppose that $\Sigma = \{a_1, \ldots, a_k\}$ and
recall that Example~\ref{ex:laml-rev} provides a $\laml$-term $\mathtt{rev} : \Str_\Sigma[\basety \lin \basety] \lin \Str_\Sigma$ implementing the reversal of its input string.
$\interp{\cT}$ is implemented by the following $\laml$-term of type
$\Str_{\Sigma}[\tau \lin \tau] \lin \Str_{\Gamma}$:
\[\lambda s.\; \lam^\oc b_1.\; \ldots\; \lam^\oc b_n.\; \lam^\oc \varepsilon.\; o \; (\mathtt{rev} \; s \; t_{a_1} \ldots t_{a_k} \; (i \; ()))\]

$\left(\laml ~\subseteq~ \text{$\mfLam$-SST}\right) ~~~~$
Given a term of type $\Str_\Sigma[\tau] \lin \Str_\Gamma$, by Lemma~\ref{lem:laml-niceshape}, it is $\lamlequiv$-equivalent to
\[\lambda s.\; \lam^\oc b_1.\; \ldots\; \lam^\oc b_n.\; \lam^\oc \varepsilon.\; t \; (s \; u_1 \; \ldots \; u_k \; v)\]
where $t$, $v$ and the $u_i$ are some terms typable in $\widetilde{\Gamma}$. The underlying string function is 
computed by the $\mfLam$-SST
\[\cT = (\{*\},*,\tau \lin \tau,\delta,\lam x. x,\lam f. o \; (f \; i))\]
where $\delta(a_i,*) = (*, \lam g.\lam x. a_i\; (g \; x))$.
\end{proof}

Lemma~\ref{lem:lamlsst} therefore enables us to reframe Theorem~\ref{thm:main-string} as statement comparing
the expressiveness of single-state $\mfLam$-SSTs and $\SR$-SSTs. This motivates our abstract development
focused on comparing the expressiveness of various $\mfC$-SSTs.

Toward this goal, we shall construct morphisms of streaming settings from and to
$\mfLam$. One of them is straightforward using our previous technical development.

\begin{lem}
\label{lem:register-to-laml}
There is a morphism of streaming settings $\SR \to \mfLam$.
\end{lem}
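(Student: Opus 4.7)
The plan is to obtain the desired morphism of streaming settings as an immediate application of Corollary~\ref{cor:sr-to-smcc} to the syntactic category $\Lamcat(\widetilde{\Gamma})$, with the canonical choice of parameters dictated by the type $\basety$ and the constants $(b_c)_{c\in\Gamma}$ lurking in $\widetilde{\Gamma}$.

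First, I would check that $\Lamcat$ is a quasi-affine symmetric monoidal closed category. Its symmetric monoidal structure $(\otimes, \unit)$ and its internal hom $\lin$ come straight from the corresponding $\laml$ connectives (with associators, unitors, symmetries, evaluation maps and curryfication all realized by the obvious terms). Since $\Lamcat$ has all cartesian products $\with$ (interpreting the additive conjunction of $\laml$), every object $A$ admits a product $A \with \unit$ with the monoidal unit, so $\Lamcat$ is quasi-affine in the sense of Definition~\ref{def:quasi-aff}.

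Next, I apply Corollary~\ref{cor:sr-to-smcc} with $A := \basety$ and the family of endomorphisms $f_c := b_c \in \Hom{\Lamcat}{\basety}{\basety}$ for $c \in \Gamma$ (a term such as $\lam x.\,b_c\,x$, typable in $\widetilde{\Gamma}$). This produces a strong monoidal functor $F : \Sr(\Gamma) \to \Lamcat$ with $F(\varnothing) = \unit$, $F(\{\bullet\}) = (\basety \lin \basety) \with \unit$, and, for every $w \in \Gamma^*$,
\[
F(\widehat{w}) \;=\; \tuple{\Lambda'(b_{w[1]} \circ \cdots \circ b_{w[\len{w}]}),\;\id_\unit}.
\]
I promote $F$ to a morphism of streaming settings $\SR \to \mfLam$ by choosing $i := \id_\unit : \unit \to F(\varnothing)$ and $o : F(\{\bullet\}) \to \basety$ to be (the $\laml$-term representing) $\lam p.\,\pi_1(p)\,\varepsilon$, which uses the free variable $\varepsilon : \basety$ from $\widetilde{\Gamma}$.

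The only remaining thing to check is the coherence condition of Definition of morphisms of streaming settings: that $\curlyinterp{o \circ F(f) \circ i}_\mfLam = \curlyinterp{f}_\SR$ for every $f \in \Hom{\Sr}{\varnothing}{\{\bullet\}}$. Such an $f$ is determined by a word $w \in \Gamma^*$, namely $f = \widehat{w}$ with $\curlyinterp{\widehat{w}}_\SR = w$. A direct $\beta\eta$-reduction of $o \circ F(\widehat{w}) \circ i$ — eliminating the pair by $\pi_1$, applying $\varepsilon$, and unfolding the definition of $\Lambda'$ — rewrites it to (a term $\lamlequiv$-equivalent to) $b_{w[1]}(b_{w[2]}(\cdots(b_{w[\len{w}]}\,\varepsilon)\cdots))$. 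Abstracting over the constants of $\widetilde{\Gamma}$ yields exactly the Church encoding $\underline{w}$, so $\curlyinterp{o \circ F(\widehat{w}) \circ i}_\mfLam = w$, as desired. There is no real obstacle here: the hard work is packaged inside Corollary~\ref{cor:sr-to-smcc}, and what remains is a short calculation to match the output interpretations.
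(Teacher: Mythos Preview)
Your proposal is correct and follows essentially the same approach as the paper: verify that $\Lamcat$ is symmetric monoidal closed and quasi-affine, apply Corollary~\ref{cor:sr-to-smcc} with $A=\basety$ and the letter-constants from $\widetilde{\Gamma}$ as the family of endomorphisms, then take $i=\id_\unit$ and $o=\lam p.\,\pi_1(p)\,\varepsilon$, and finish with a short $\beta\eta$ computation on $F(\widehat{w})$. The only cosmetic difference is that the paper uses the letters $c\in\Gamma$ themselves as the variable names in $\widetilde{\Gamma}$ (rather than your $b_c$).
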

We build this morphism using the generic construction of
Corollary~\ref{cor:sr-to-smcc}. Again, this is not strictly necessary; see
Lemma~\ref{lem:register-to-laml-tree} for a sketch of a \enquote{manual}
definition of such a morphism in the case of trees. But
Corollary~\ref{cor:sr-to-smcc} is needed for other purposes anyway
(\Cref{subsec:composition}) and gives us the correctness proof almost
\enquote{for free}.
\begin{proof}
  We first give the construction of the underlying functor. First, note that
  $\Lamcat$ has all cartesian products and internal homsets, given by the
  syntactic connectives on types with the same notations. Thus, it is
  symmetric monoidal closed and quasi-affine; we can therefore apply
  Corollary~\ref{cor:sr-to-smcc} to the base type $\basety$,
  regarded as an object of $\Lamcat$, and to
  the family of endomorphisms $(\widetilde{\Gamma}; \; \cdot \vdash c : \basety \lin
  \basety)_{c\in\Gamma} \in \Hom{\Lamcat}{\basety}{\basety}^\Gamma$. (Indeed, by
  definition, every letter $c \in \Gamma$ serves as a variable $c$ given the
  type $\basety \lin \basety$ in the typing context $\widetilde{\Gamma}$.) This
  gives us $F : \Sr \to \Lamcat$ such that $F(\varnothing) = \unit$,
  $F(\{\bullet\}) = (\basety\lin\basety)\with\unit$ and
  \[ \forall w \in \Gamma^*,\quad F(\widehat{w}) \;=_{\beta\eta}\; \lambda z.\;
    \tlet{()}{z}{\tuple{(\lambda x.\, w_1\,(w_2\,\ldots (w_n\, x) \ldots)),\;()}} \]
  Since $\initty_\SR = \varnothing$ and $\initty_\mfLam = \unit$, we can simply
  take $i = \id_\unit : \initty_\mfLam \to F(\initty_\SR)$ as part of our morphism
  of streaming settings. The map $o : F(\retty_\SR) \to \retty_\mfLam$ is more
  interesting. Since $\retty_\SR = \{\bullet\}$ and $\retty_\mfLam = \basety$,
  we can see that $o$ must be a $\laml$-term of type
  $((\basety\lin\basety)\with\unit) \lin \basety$ in the context
  $\widetilde{\Gamma}$. The choice that works is $o = \lam p. \; \pi_1(p) \;
  \varepsilon$ (recalling that $(\varepsilon : \basety) \in \widetilde{\Gamma}$
  stands for the empty string). Proving that $\curlyinterp{o \circ
    F(\widehat{w}) \circ i}_\mfLam = \curlyinterp{\widehat{w}}_\SR$ for any $w
  \in \Gamma^*$ is merely a matter of performing $\beta\eta$-conversions
  starting from the above equation on $F(\widehat{w})$; we leave this to the
  reader. Since every $f \in \Hom{\Sr}{\initty}{\retty}$ is of the form $f =
  \widehat{w}$ for some $w \in \Gamma^*$, this suffices to show that $(F,i,o)$
  fits the definition of a morphism of streaming settings.
\end{proof}

However, note that this does not alone tells us that $\SR$-SSTs are subsumed by $\laml$, as
Lemma~\ref{lem:lamlsst} only allows us to use single-state $\Lamcat$-SSTs. To circumvent this, we will
later prove that a streaming setting with coproducts allows to simulate
state with single-state SSTs, thus completing a proof of the easier direction of Theorem~\ref{thm:main-string}.

In order to build morphisms from $\mfLam$ to other streaming settings,
we shall make use of the following:
\begin{lem}
\label{lem:laml-initial}
Let $\mfC$ be a streaming setting $(\cC,\unit, \retty,\curlyinterp{-})$ whose
underlying category $\cC$ is symmetric monoidal closed with monoidal unit $\unit$.
Suppose that $\cC$ additionally has chosen finite products and coproducts.
Let $(f_b)_{b \in \Gamma}$ be a family of morphisms $\Hom{\cC}{\retty}{\retty}^\Gamma$
and $e \in \Hom{\cC}{\initty}{\retty}$ a distinguished morphism such that $\curlyinterp{e}$ is the empty word and, for every
$g \in \Hom{\cC}{\initty}{\retty}$, we have $\curlyinterp{f_b \circ g} = b\curlyinterp{g}$ (that is, $f_b$ acts by concatenating the single-letter word $b$ on the left).

Then there is a canonical morphism $\mfLam \to \mfC$ of streaming settings. Moreover the underlying functor
is strong monoidal for $\oplus$, $\with$, $\tensor$ and preserves $\lin$.
\end{lem}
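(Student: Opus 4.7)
The plan is to construct the functor $F : \Lamcat \to \cC$ as the standard categorical interpretation of the purely linear fragment of $\laml$ in the symmetric monoidal closed category $\cC$ (with its chosen products and coproducts), parametrised so that the constants of $\widetilde{\Gamma}$ receive the hypothesised interpretations.

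\textbf{Construction of the functor.} On types, define $F$ by induction: $F(\basety) = \retty$, and $F$ commutes with each connective of the purely linear fragment ($\tensor, \unit, \lin, \with, \top, \oplus, 0$), mapped to the corresponding categorical construct in $\cC$. On terms, I interpret each typing derivation $\widetilde{\Gamma}; x_1 : \tau_1, \ldots, x_n : \tau_n \vdash t : \sigma$ as a morphism $\interp{t} : F(\tau_1) \tensor \cdots \tensor F(\tau_n) \to F(\sigma)$ using the universal properties of the ambient constructions (curryfication for $\lam$-introduction, $\ev$ for application, projections and pairings for $\with$, injections and case analysis for $\oplus$, initial-object mediation for $\abort$, etc.). The variables of $\widetilde{\Gamma}$ serve as generators: each letter $b \in \Gamma$, viewed as $b : \basety \lin \basety$, is interpreted as $\Lambda'(f_b) : \unit \to (\retty \lin \retty)$ via Proposition~\ref{prop:internal-endo-monoid}, and $\varepsilon : \basety$ as $e : \unit \to \retty$ (recalling $\initty = \unit$). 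The functor $F$ sends a $\Lamcat$-morphism represented by $\widetilde{\Gamma}; x : \tau \vdash t : \sigma$ to the transpose of $\interp{\lam x.t}$.

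\textbf{Well-definedness and monoidality.} Invariance under $\lamlequiv$-equivalence (so that $F$ descends to the quotient in $\Lamcat$) is the standard soundness argument for categorical semantics of IMALL with additives: each equation of Figure~\ref{fig:laml-betaeta} corresponds to a commuting diagram in $\cC$ forced by the universal properties of internal hom, products, coproducts, and (co)units. Functoriality is immediate from the interpretation of $\lam x.x$ and $\lam x.f(g\,x)$. The preservation properties asserted by the lemma hold strictly by construction: $F$ is strong monoidal for $\tensor$, $\with$, $\oplus$ with identity coherence isomorphisms, and preserves $\lin$ on the nose.

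\textbf{Morphism of streaming settings.} I take $i = \id_{\unit} : \initty_\mfC \to F(\initty_\mfLam)$ and $o = \id_\retty : F(\retty_\mfLam) \to \retty_\mfC$. For the coherence condition, let $f \in \Hom{\Lamcat}{\unit}{\basety}$. Up to $\lamlequiv$, $f$ corresponds bijectively (via Proposition~\ref{prop:laml-tree-nf}) to the Church encoding of a unique word $w = w_1 \ldots w_n \in \Gamma^*$, with $\curlyinterp{f}_\mfLam = w$ by Definition~\ref{def:syntacticstreaming}. Unfolding the interpretation yields $F(f) = f_{w_1} \circ \cdots \circ f_{w_n} \circ e$, and iterating the hypothesis $\curlyinterp{f_b \circ g}_\mfC = b \cdot \curlyinterp{g}_\mfC$ together with $\curlyinterp{e}_\mfC = \varepsilon$ gives $\curlyinterp{F(f)}_\mfC = w = \curlyinterp{f}_\mfLam$.

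\textbf{Main obstacle.} The only genuinely delicate step is the soundness verification, particularly for the positive connectives whose $\eta$-rules involve let-bindings and case-analysis ``commuted through'' contexts. This is well-understood in the literature on categorical semantics of intuitionistic linear logic with additives, so in practice it is more bookkeeping than conceptual difficulty; the rest of the argument is a quick calculation.
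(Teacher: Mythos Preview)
Your approach is essentially what the paper has in mind: it does \emph{not} give a proof of this lemma, calling it folklore and referring to \cite[Chapter~4]{Bierman} for the analogous development without free variables. Your sketch of the categorical interpretation, the treatment of the constants in $\widetilde{\Gamma}$, and the verification of the streaming-setting condition via Proposition~\ref{prop:laml-tree-nf} are all in line with what is expected.

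There is one subtlety the paper explicitly flags that you gloss over. The morphisms of $\Lamcat$ are $\laml$-terms quotiented by the \emph{full} $\lamlequiv$-equivalence of $\laml$, which a priori allows detours through the non-linear arrow $\to$ (cf.\ Remark~\ref{rem:betaeta-conservativity}). Your soundness argument interprets only the purely linear fragment, so it validates only the purely linear equations; to conclude that $F$ descends to the quotient in $\Lamcat$ you need to know that the full $\lamlequiv$ does not identify more purely linear terms than the purely linear fragment alone does. The paper singles this conservativity step out as ``the first step in this proof'' and derives it from normalization (Theorem~\ref{thm:laml-normalization}) and confluence up to $\cocoeq$ (Theorem~\ref{thm:CR}). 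It would be worth making this dependency explicit in your write-up rather than folding it into ``standard soundness''.
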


Without spelling out the details, this Lemma essentially states that $\Lamcat$
is \emph{initial} among symmetric monoidal closed categories with products and
coproducts.
We do not offer a proof of this statement, which we consider folklore. The interested
reader may find a similar development in \cite[Chapter~4]{Bierman} for the
case of $\Lamcat(\varnothing)$ (i.e., where the $\laml$-terms have no free variables).
Let us note that, because of the specific way we defined $\Lamcat$,
Remark~\ref{rem:betaeta-conservativity}, showing the conservativity of the
congruence $=_{\beta\eta}$ of $\laml$ over the purely
linear fragment should be the first step in this proof.
This is because the notion of symmetric monoidal closed category
does not require the existence of an exponential modality $\oc$, and thus,
all the equations in the initial symmetic monoidal closed category
with products and coproducts should only
satisfy those equations mentioning those constructs.

\subsection{The free coproduct completion (or finite states)}
\label{subsec:oplus-string}

\subsubsection{Definition and basic properties} We give here an elementary definition of ``the'' free finite coproduct completion
of categories $\cC$ and some of its basic properties. The construction consists essentially in considering finite families of objects of $\cC$ as ``formal coproducts''
(equivalently, one could use finite lists as in~\cite[Definition~3]{Galal20}).

\begin{defi}
Let $\cC$ be a category. The \emph{free finite coproduct completion} $\cC_\oplus$ is defined as follows:
\begin{itemize}
\item An object of $\cC_\oplus$ is a pair $(U,(C_u)_{u \in U})$ consisting of a finite set
$U$ and a family of objects of $\cC$ over $U$. We write those as formal sums $\bigoplus_{u \in U} C_u$
in the sequel.
\item A morphism from
$\bigoplus_{u \in U} C_u \to \bigoplus_{v \in V} C_v$ is a $U$-indexed family of pairs $(v_u,g_u)_{u \in U}$
with $v_u \in V$ and $g_u : C_u \to C_{v_u}$ in $\cC$. In short,
\[\varHom{\cC_\oplus}{\bigoplus_{u \in U} C_u}{\bigoplus_{v \in V} C_v} \; = \; \prod_{u \in U} \sum_{v \in V} \Hom{\cC}{C_u}{C_{v}}\]
\item The identity at object $\bigoplus_{u \in U} C_u$ is the family $(u, \id_{C_u})_{u \in U}$.
Given two composable maps
\[(w_v, h_v)_{v \in V} : \bigoplus_{v \in V} C_v \to \bigoplus_{w \in W} C_w \qquad
\qquad \text{and} \qquad \qquad
(v_u, g_u)_{u \in U} : \bigoplus_{u \in U} C_u \to \bigoplus_{v \in V} C_v\] 
the composite is defined to be the family
\[(w_{v_u}, h_{v_u} \circ g_u) : \bigoplus_{u \in U} C_u \to \bigoplus_{w \in W} C_w\]
\end{itemize}
\end{defi}

There is a full and faithful functor $\iota_\oplus : \cC \to \cC_\oplus$ taking an object $C \in \Obj(\cC)$
to the one-element family $\bigoplus_1 C \in \Obj(\cC_\oplus)$. Objects lying in the image of this
functor will be called \emph{basic} objects of $\cC_\oplus$.
The formal sum notation reflects that families $\bigoplus_{u \in U} C_u$
should really be understood as coproducts of those basic objects $C_u$.
More generally, it is straightforward to check that, for any finite set $I$ and family
$\bigoplus_{u \in U_i} C_u$ over $i \in I$, canonical coproducts in $\cC_\oplus$ can be computed as follows
\[\bigoplus_{i \in I} \bigoplus_{u \in U_i} C_{i,u} \quad = \quad \bigoplus_{(i,u) \in \sum_{i\in I} U_i} C_{i,u}\]

As advertised, this is a free finite coproduct completion in the following sense: for any functor $F : \cC \to \cD$
to a category $\cD$ with finite coproducts,
there is an extension $\widetilde{F} : \cC_\oplus \to \cD$ preserving finite coproducts making the following diagram commute:
\[
\xymatrix{
\cC \ar[d]_{\iota_\oplus} \ar[r]^F & \cD
\\
\cC_\oplus \ar@{-->}[ur]_{\widetilde{F}}
}
\]
and it is unique up to unique natural isomorphism under those conditions.

Finally, suppose that we are a monoidal structure on $\cC$. Then, it is possible to
extend it to a monoidal structure over $\cC_\oplus$ in a rather canonical way: we require that $\tensor$ distributes
over $\oplus$, i.e., that $A \tensor (B \oplus C) \cong (A \tensor B) \oplus (A \tensor C)$.
Formally speaking, we set\footnote{For readers more familiar with the free cocompletion $\Set^{\cC^\op}$
of $\cC$, note that the coproduct-preserving functor $E$ determined by
\[
\xymatrix@R=4mm{
& \cC \ar[dl]_{\iota_\oplus} \ar[dr]^{{\bf y}} \\
\cC_\oplus \ar@{-->}[rr]_E & &  \Set^{\cC^\op}}
\]
is full and faithful, as well as strong monoidal when $\Set^{\cC^\op}$ is
equipped with the \emph{Day convolution} as monoidal product.
The latter is computed as the following coend using a monoidal product $\tensor$ in $\cC$
\[(P \tensor Q)(U) = \int^{V,W} \Hom{\cC}{U}{V \tensor W} \times P(V) \times Q(W)\]}

\[ \left(\bigoplus_{u \in U} C_u \right) \tensor
 \left(\bigoplus_{v \in V} C_v \right)
\quad = \quad
\bigoplus_{(u,v) \in U \times V} C_u \tensor C_v
\]

If $\unit$ is the unit of the tensor product in $\cC$, then the basic object $\bigoplus_1 \unit$ is taken to be the unit of the tensor product in $\cC_\oplus$.
An affine symmetric monoidal structure on $\cC$ can be lifted in a satisfactory manner to this new tensor product (in particular $\iota_\oplus(\top)$
is still a terminal object).

The following property, which is arguably obvious from the definition of
$\cC_\oplus$, will turn out to be quite important later.
\begin{prop}\label{prop:yondistr}
  For any $A, B \in \Obj(\cC_{\oplus})$, there is a natural isomorphism
  \[\Hom{\cC_\oplus}{\iota_\oplus(-)}{A \oplus B} \quad \cong \quad
\Hom{\cC_\oplus}{\iota_\oplus(-)}{A} + \Hom{\cC_\oplus}{\iota_\oplus(-)}{B} \]
\end{prop}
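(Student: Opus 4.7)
The plan is to prove this by unfolding the definitions: a morphism from a basic object $\iota_\oplus(C)$ into a ``formal sum'' in $\cC_\oplus$ is essentially the choice of a summand together with a $\cC$-morphism into that summand, so a morphism into $A \oplus B$ amounts to picking a summand of either $A$ or $B$.

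More precisely, write $A = \bigoplus_{u \in U} A_u$ and $B = \bigoplus_{v \in V} B_v$. First I would record the fact already established earlier that $A \oplus B$ is computed as $\bigoplus_{w \in U + V} E_w$ where $E_{\inl(u)} = A_u$ and $E_{\inr(v)} = B_v$. From the definition of the hom-sets in $\cC_\oplus$, and using that the index set of $\iota_\oplus(C)$ is a singleton, we get an immediate bijection
\[ \Hom{\cC_\oplus}{\iota_\oplus(C)}{\bigoplus_{w \in U + V} E_w} \;=\; \sum_{w \in U + V} \Hom{\cC}{C}{E_w} \]
and by distributing the dependent sum over the tagged union $U + V$, this is in turn naturally isomorphic to
\[ \left( \sum_{u \in U} \Hom{\cC}{C}{A_u} \right) + \left( \sum_{v \in V} \Hom{\cC}{C}{B_v} \right) \;=\; \Hom{\cC_\oplus}{\iota_\oplus(C)}{A} + \Hom{\cC_\oplus}{\iota_\oplus(C)}{B} \]
which yields the desired bijection $\varphi_C$.

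It remains to check naturality in $C$. A morphism $f : \iota_\oplus(C') \to \iota_\oplus(C)$ is by fullness and faithfulness of $\iota_\oplus$ just a $\cC$-morphism $C' \to C$; precomposition by $f$ on either side of the isomorphism simply composes the $\cC$-component in each summand with $f$, and the two operations visibly agree under the above unpacking. This is a routine verification with no real obstacle.

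The only thing to be mildly careful about is that the construction of $A \oplus B$ in $\cC_\oplus$ (as stated in the excerpt, coproducts are computed by taking the disjoint union of the index sets) should be invoked explicitly, since the naturality and the isomorphism both rely on this concrete description; this is however immediate from what has been set up. I do not expect any genuine obstacle here, as the statement is essentially a reformulation of the definition of $\cC_\oplus$ applied to the generating objects $\iota_\oplus(C)$.
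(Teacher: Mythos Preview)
Your proposal is correct and is precisely the unfolding-the-definitions argument the paper has in mind; the paper itself does not give a proof, stating only that the property is ``arguably obvious from the definition of $\cC_\oplus$''. Your explicit computation and naturality check are exactly what that remark amounts to.
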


\begin{rem}\label{rem:coprod-univ}
  Compare with the following natural isomorphism, which holds for any category
  $\cD$ and $A,B \in \Obj(\cD)$:
  \[\Hom{\cD}{A \oplus B}{-} \quad \cong \quad \Hom{\cD}{A}{-} \times \Hom{\cD}{B}{-} \]
  In fact, this characterizes the coproduct of $A$ and $B$. Just like
  Proposition~\ref{prop:lin-hom-hom}, this is an instance of a general
  equivalence between universal properties and natural bijections involving
  homsets.
\end{rem}

\subsubsection{Conservativity over affine monoidal settings}

First, note that the coproduct completion can be lifted at the level of streaming settings.

\begin{defi}
Given a streaming setting $\mfC = (\cC, \initty, \retty, \curlyinterp{-}_\cC)$, define
$\mfC_\oplus$ as the tuple
\[(\cC_\oplus, \iota_\oplus(\initty), \iota_\oplus(\retty), \curlyinterp{-}_{\cC_\oplus})\]
where $\curlyinterp{-}_{\cC_\oplus}$ is obtained by precomposing the canonical isomorphism
(recalling that $\iota_\oplus$ is full and faithful)
\[\Hom{\cC_\oplus}{\iota_\oplus(\initty)}{\iota_\oplus(\retty)}
\cong
\Hom{\cC}{\initty}{\retty}\]
\end{defi}

Before moving on, let us make the following definition:
an object $A$ in a monoidal category $(\cC, \tensor, \unit)$ is
said to have \emph{unitary support}if there exists a map $\unit \to A$.
This is quite useful in affine categories for transductions, as it ensures the following.

\begin{lem}
\label{lem:sumtensorembed}
Let $\cC$ be a symmetric affine monoidal category. 
Then, for any pair of finite families $(C_u)_{u \in U}$ and $(C_v)_{v \in V}$ of objects of $\cC$ such that all $C_u$ and $C_v$ have unitary support,
we have a $U \times V$-indexed family of embeddings
\[\embeddinjunk_{u,v} : \Hom{\cC}{C_u}{C_v} \; \to \; \varHom{\cC}{\bigtensor_{u \in U} C_u}{\bigtensor_{v \in V} C_v}\]
\end{lem}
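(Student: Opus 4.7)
The plan is to build $\embeddinjunk_{u,v}(f)$ as a three-step composite ``project onto $C_u$, apply $f$, inject into the output tensor''. Since $\cC$ is affine, the monoidal unit $\unit$ is terminal; for each $u' \neq u$ in $U$, let $t_{u'} : C_{u'} \to \unit$ denote the unique such map. Tensoring all the $t_{u'}$ ($u' \neq u$) with $\id_{C_u}$ at slot $u$ and applying the iterated unitors, one obtains a ``projection''
\[\pi_u : \bigtensor_{u' \in U} C_{u'} \longrightarrow C_u.\]
Dually, picking the unitary support maps $s_{v'} : \unit \to C_{v'}$ for $v' \neq v$ and combining them with $\id_{C_v}$ at slot $v$, one obtains an ``injection'' $\iota_v : C_v \to \bigtensor_{v' \in V} C_{v'}$. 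Set $\embeddinjunk_{u,v}(f) := \iota_v \circ f \circ \pi_u$.

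To show that the assignment $f \mapsto \embeddinjunk_{u,v}(f)$ is injective, the plan is to produce a section of $\pi_u$ and a retraction of $\iota_v$ using the dual data. Namely, the unitary support maps $s_{u'} : \unit \to C_{u'}$ for $u' \neq u$ yield a candidate section $\sigma_u : C_u \to \bigtensor_{u' \in U} C_{u'}$, and the terminal maps $t_{v'} : C_{v'} \to \unit$ for $v' \neq v$ yield a candidate retraction $\rho_v : \bigtensor_{v' \in V} C_{v'} \to C_v$. The key point is that any composite $\unit \xrightarrow{s} C \xrightarrow{t} \unit$ must equal $\id_\unit$ by terminality of $\unit$. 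By bifunctoriality of $\tensor$ and the coherence of the unitors, this local identity assembles into $\pi_u \circ \sigma_u = \id_{C_u}$, and symmetrically $\rho_v \circ \iota_v = \id_{C_v}$. Hence
\[\rho_v \circ \embeddinjunk_{u,v}(f) \circ \sigma_u \;=\; (\rho_v \circ \iota_v) \circ f \circ (\pi_u \circ \sigma_u) \;=\; f,\]
so $f$ is uniquely recoverable from $\embeddinjunk_{u,v}(f)$, which gives injectivity.

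The main obstacle is purely bureaucratic: carefully handling iterated tensor products and unitors so that the local equality $t \circ s = \id_\unit$ globalizes to the identities $\pi_u \circ \sigma_u = \id_{C_u}$ and $\rho_v \circ \iota_v = \id_{C_v}$. This is routine and handled by Mac Lane's coherence theorem for symmetric monoidal categories; the prescribed choice of total order on the indexing sets that defines $\bigtensor$ affects only the bookkeeping. Note that the symmetry of $\tensor$ is used only to rearrange slots in the iterated tensor product---the substance of the argument is just terminality of $\unit$ combined with bifunctoriality of $\tensor$.
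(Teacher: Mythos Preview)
Your proposal is correct and takes essentially the same approach as the paper: the paper only gives an informal string-diagram description of the construction (weaken away the unwanted inputs using affineness, create junk outputs using unitary support), which is exactly your $\iota_v \circ f \circ \pi_u$. You go further than the paper by spelling out the injectivity argument via the section $\sigma_u$ and retraction $\rho_v$; the paper simply asserts that the maps are embeddings without justification, so your treatment is more complete on that point.
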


The basic idea behind Lemma~\ref{lem:sumtensorembed} can be pictured using string diagrams as in Figure~\ref{fig:junk}: a morphism
$C_u \to C_v$ can be pictured as a single string, which is to be embedded in a diagram with $U$-many inputs and $V$-many outputs. The fact that $\cC$ is affine allows us to cut all input strings for $u' \neq u$ using a weakening node, and
unitary support allow us to create some ``junk'' strings with no input to connect to those $v' \neq v$.
This might fail for arbitrary symmetric affine monoidal categories: take for instance the
category of finite sets and surjections between them, with the coproduct as a monoidal product.

\begin{figure}
\centering
\includegraphics[scale=0.75]{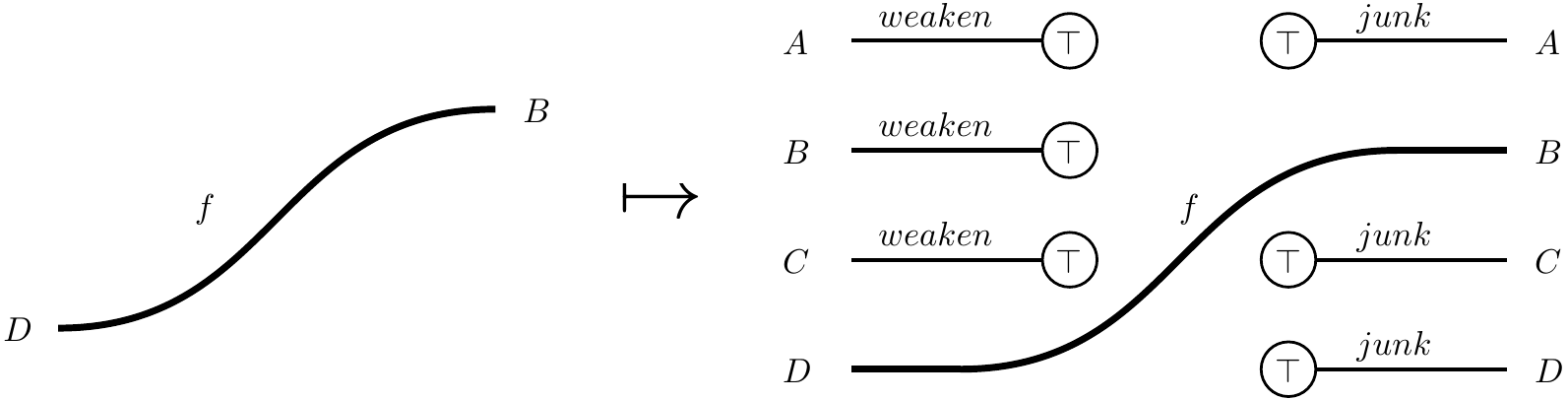}
\vspace{1em}
\hrule
\caption{$\embeddinjunk_{D,B} : \Hom{\cC}{D}{C} \to \Hom{\cC}{A \tensor B \tensor C \tensor D}{A \tensor B \tensor C \tensor D}$}
\label{fig:junk}
\end{figure}

We are now ready to state our first theorem asserting that, in those
favorable circumstances, coproduct completions do not give rise to more expressive SSTs.

\begin{thm}
\label{thm:oplus-sst-conservative}
Let $\mfC$ be an affine symmetric monoidal streaming setting where
all objects $C$ such that $\Hom{\cC}{\initty}{C} \neq \varnothing \neq \Hom{\cC}{C}{\retty}$ have unitary support.

$\mathfrak{C}$-SSTs are equivalent to $\mathfrak{C}_\oplus$-SSTs.
\end{thm}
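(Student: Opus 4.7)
The plan is to establish the two inclusions of expressive power separately.

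\emph{Easy direction.} First I would observe that the inclusion functor $\iota_\oplus$, together with the identity morphisms $i = \id_{\iota_\oplus(\initty)}$ and $o = \id_{\iota_\oplus(\retty)}$, forms a morphism of streaming settings $\mfC \to \mfC_\oplus$; the required compatibility with $\curlyinterp{-}$ is immediate from how the latter is defined on $\mfC_\oplus$. Applying Lemma~\ref{lem:morph} then yields that $\mfC_\oplus$-SSTs subsume $\mfC$-SSTs.

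\emph{Hard direction, setup.} For the converse, given a $\mfC_\oplus$-SST $\cT = (Q, q_0, R_\oplus, \delta, i, o)$ with $R_\oplus = \bigoplus_{u \in U} C_u$, the plan is to build a $\mfC$-SST $\cT'$ computing the same function by \enquote{bundling} the formal coproduct components of $R_\oplus$ into a tensor product, with finite states tracking which component is currently \enquote{active}. Before the construction, I would reduce to the case where every $u \in U$ is reachable as the active component on some input; pruning the unreachable components yields an equivalent $\mfC_\oplus$-SST. Under this assumption, each $C_u$ satisfies $\Hom{\cC}{\initty}{C_u} \neq \varnothing$ by reachability, and $\Hom{\cC}{C_u}{\retty} \neq \varnothing$ because the output morphism $o_q : R_\oplus \to \iota_\oplus(\retty)$ in $\cC_\oplus$ necessarily contains a $u$-th component $o_{q,u} : C_u \to \retty$. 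By the hypothesis of the theorem, every $C_u$ then has unitary support.

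\emph{Construction of $\cT'$.} I would then define $\cT'$ with: state set $Q \times U$ and initial state $(q_0, u_0)$, where the $\cC_\oplus$-morphism $i$ decomposes as $(u_0, i_0)$; register object $R = \bigotimes_{u \in U} C_u$ in $\cC$; initialization $\initty \to R$ given by composing $i_0 : \initty \to C_{u_0}$ with the embedding of $C_{u_0}$ into $R$ produced via unitary support for the remaining factors; transitions on letter $a$ and state $(q, u)$ going to $(q', v_u)$ with register morphism $\embeddinjunk_{u, v_u}(g_u) : R \to R$ obtained from Lemma~\ref{lem:sumtensorembed}, where $\delta(a, q) = (q', (v_u, g_u)_{u \in U})$; output morphism at state $(q, u)$ given by the projection $R \to C_u$ (from affineness applied to the other factors) postcomposed with $o_{q, u}$.

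\emph{Correctness and main obstacle.} The last step would be a routine induction on the length of the input word showing the invariant that whenever $\cT$ is in state $q$ with register morphism decomposing as $(u, s_w) \in \sum_{u' \in U} \Hom{\cC}{\initty}{C_{u'}}$, the transducer $\cT'$ is in state $(q, u)$ with a register morphism $\initty \to R$ whose projection onto the $u$-th factor equals $s_w$. The main technical obstacle will be cleanly establishing two identities needed for this invariant: embedding $C_u$ into $R$ with junk in the other factors and then projecting back onto $C_u$ is the identity, and $\embeddinjunk_{u, v_u}(g_u)$ followed by the projection $R \to C_{v_u}$ equals the projection $R \to C_u$ followed by $g_u$. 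Both boil down to affineness together with the unit and associativity laws of the affine symmetric monoidal structure.
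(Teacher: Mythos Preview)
Your proposal is correct and follows essentially the same approach as the paper: expand states to $Q \times U$, take the memory object to be $\bigotimes_{u \in U} C_u$, use Lemma~\ref{lem:sumtensorembed} for the transitions, and project onto the active component for the output. The one noteworthy difference is your pruning step: the paper simply asserts that the $C_u$ have unitary support without checking that $\Hom{\cC}{\initty}{C_u} \neq \varnothing$, which indeed can fail for unreachable components; your reduction to the reachable case makes the appeal to the hypothesis genuinely valid, so this is a small improvement over the paper's own presentation.
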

\begin{proof}
Since $\iota_\oplus$ extends to a morphism of streaming settings,
$\mathfrak{C}_\oplus$-SSTs subsume $\mathfrak{C}$-SSTs.

Conversely, let $\cT = \left(Q, q_0, \bigoplus_{u \in U} C_u, \delta, i, o\right)$ be a
$\mfC_\oplus$-SST with input $\Sigma^*$ and $C_u$ basic objects.
Then, we construct a $\mfC$-SST
\[\cT' = \left(Q \times U,\; (q_0, u_0),\; \bigtensor_{u \in U} C_u,\; \delta',\; i_{u_0},\; o'\right)\]
such that $\interp{\cT} = \interp{\cT'}$. We define successively $(u_0, i_{u_0})$, $\delta'$ and $o'$.
\begin{itemize}
\item
We have $i \in \Hom{\cC_\oplus}{\iota_\oplus(\top)}{\bigoplus_{i \in I} C_i}$ which can
be rewritten as a factorization
\[\xymatrix@C=30mm{
\displaystyle\iota_\oplus(\top) = \bigoplus_1 \initty \ar[r]^{\quad (\In_{u_0}, * \mapsto \id)} &
\displaystyle\bigoplus_{U} \initty \ar[r]^{(\id, u \mapsto i_u)} & \displaystyle\bigoplus_{u \in U} C_u
}\]
for some $u_0 \in U$ (the $i_u$ for $u \neq u_0$ are taken arbitrarily thanks to the assumption that the $C_u$ have unitary support).
This $u_0$ is the second component of the initial state of $\cT'$.
\item We set $\delta'(a,(q,u)) = (q', \alpha_u(f'))$ if $\delta(a,q) = (q', f')$, where
\[(\alpha_u)_{u \in U} \; : \; \prod_{u \in U} \left[ \varHom{\cC_\oplus}{\bigoplus_{u \in U} C_u}{\bigoplus_{u \in U} C_u} \to \varHom{\cC_\oplus}{\bigtensor_{u \in U} C_u}{\bigtensor_{u \in U} C_u} \right]\]
is defined by taking the pointwise composite of 
\[
\begin{array}{lrccc}
& {\widetilde{\alpha}_u}:
&\displaystyle\varHom{\cC_\oplus}{\bigoplus_{u \in U} C_u}{\bigoplus_{u' \in U} C_{u'}}
&\rightarrow&
\displaystyle\sum_{u' \in U} \Hom{\cC_\oplus}{C_u}{C_{u'}} \\
&
\beta_u
: &
\displaystyle\sum_{u' \in U} \Hom{\cC_\oplus}{C_u}{C_{u'}}
&\rightarrow&
\displaystyle U \times \varHom{\cC_\oplus}{\bigtensor_{u \in U} C_u}{\bigtensor_{u \in U} C_u} \\
&
\pi : &
\displaystyle U \times \varHom{\cC_\oplus}{\bigtensor_{u \in U} C_u}{\bigtensor_{u \in U} C_u}
&\rightarrow&
\displaystyle\varHom{\cC_\oplus}{\bigtensor_{u \in U} C_u}{\bigtensor_{u \in U} C_u}
\end{array}
\] 
where
$\widetilde{\alpha}_u$ is obtained by evaluating its input $f \in \prod_{u \in U} \sum_{u' \in U} \Hom{\cC_\oplus}{C_u}{C_{u'}}$ at $u$,
$\beta_u = {\sum_{u'}\embeddinjunk_{u,u'}}$ (with $\embeddinjunk$ given as per Lemma~\ref{lem:sumtensorembed}) and $\pi$ taken to be the second projection.
\item Finally,
we set $o'(q,u) \in \Hom{\cC}{\bigtensor_{v \in U} C_v}{\retty}$
to $\widetilde{o}_u \in \Hom{\cC}{C_u}{\retty}$ precomposed with the projection $\pi_u \in \Hom{\cC}{\bigtensor_{v \in U} C_v}{C_u}$.
\end{itemize}
To conclude, one can check by induction that $\interp{\cT}(w) =
\interp{\cT'}(w)$ for every $w \in \Sigma^*$.
\end{proof}

\begin{cor}
\label{cor:oplus-sst-reg-conservative}
$\SR_\oplus$-SSTs are equivalent to $\SR$-SSTs.
\end{cor}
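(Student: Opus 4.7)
The plan is to apply Theorem~\ref{thm:oplus-sst-conservative} directly to the streaming setting $\SR$. We already know from the preceding subsection that $\SR$ is an affine symmetric monoidal streaming setting: its tensor is disjoint union of register sets, and the monoidal unit $\varnothing$ coincides with the terminal object. So the only thing left to verify is the unitary support hypothesis: every relevant object $R \in \Obj(\Sr)$ admits a morphism $\varnothing \to R$ in $\Sr$.

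This is in fact true for \emph{every} object of $\Sr$, not just those sitting between $\initty$ and $\retty$. Indeed, a morphism $\varnothing \to R$ in $\Sr(\Gamma)$ is a copyless register transition $R \to (\Gamma + \varnothing)^* = \Gamma^*$, and one can always pick the constant-$\varepsilon$ assignment $r \mapsto \varepsilon$; copylessness holds vacuously since the source register set is empty. Thus the hypothesis of Theorem~\ref{thm:oplus-sst-conservative} is satisfied for $\SR$.

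Invoking Theorem~\ref{thm:oplus-sst-conservative}, we conclude that $\SR$-SSTs and $\SR_\oplus$-SSTs are equivalent. There is no real obstacle here: the entire content of the corollary lies in checking the unitary support condition, which reduces to the trivial observation that free monoids have an identity element.
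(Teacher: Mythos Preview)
Your proof is correct and follows essentially the same approach as the paper: both apply Theorem~\ref{thm:oplus-sst-conservative} after checking unitary support for all objects of $\Sr$. The paper phrases this check as an induction (tensoring with the empty-word map $\varepsilon : \unit \to \{\bullet\}$ at each step), whereas you write down the resulting morphism $r \mapsto \varepsilon$ directly; these amount to the same construction.
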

\begin{proof}
All objects of $\Sr$ have unitary support via an induction:
tensor with the map $\varepsilon : \unit \to \Gamma^*$ corresponding to the empty word
at the recursive step.
\end{proof}

\subsubsection{State-dependent memory SSTs}
The free coproduct completion encourages us to define the notion of
\emph{state-dependent memory SST}, generalizing usual copyless SSTs as follows:
instead of taking a single object $C \in \Obj(\cC)$ as an abstract infinitary memory,
we allow to take a family $(C_q)_{q \in Q} \in \Obj(\cC)^Q$ indexed by the states of the
SST.
\begin{defi}
\label{def:sdm-sst}
A \emph{state-dependent memory} $\mfC$-\emph{SST} (henceforth abbreviated sdm-$\mfC$-SST or sdmSST when $\mfC$ is clear form context)
with input $\Sigma^*$ is a tuple $(Q, q_0, \delta, (C_q)_{q \in Q}, i, o)$  where
\begin{itemize}
\item $Q$ is a finite set of states
\item $q_0 \in Q$ is some initial state
\item $\displaystyle \delta : \Sigma \to \prod_{q \in Q} \sum_{r \in Q} \Hom{\cC}{C_q}{C_r}$ is a transition function
\item $i \in \Hom{\cC}{\initty}{C_{q_0}}$ is the initialization morphism
\item $o \in \prod_{q \in Q} \Hom{\cC}{C_{q}}{\retty}$ is the output family of morphism
\end{itemize}
\end{defi}

In the sequel, we shall often use sdmSSTs because we find them convenient to give more elegant
constructions that produce little ``junk'', as is encoded in Lemma~\ref{lem:sumtensorembed}.
They essentially give the full power of coproducts in any given situation as shown below.

\begin{lem}
\label{lem:sdmSST-oplus}
Let $\mfC$ be a streaming setting. State-dependent memory $\mfC$-SSTs are
exactly as expressive as $\mfC_\oplus$-SSTs.
\end{lem}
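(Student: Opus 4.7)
The plan is to exhibit two translations that preserve the computed function, which essentially amounts to unpacking the definition of morphisms in $\cC_\oplus$ as dependent-sum-of-homset-valued maps.

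First I would handle the easy direction, going from sdm-$\mfC$-SSTs to $\mfC_\oplus$-SSTs. Given a sdm-$\mfC$-SST $(Q, q_0, \delta, (C_q)_{q \in Q}, i, o)$, I construct a $\mfC_\oplus$-SST with a \emph{single} state, whose memory object is the formal sum $\bigoplus_{q \in Q} C_q$. The initialization $\iota_\oplus(\initty) \to \bigoplus_{q \in Q} C_q$ is $\iota_\oplus(i)$ followed by the coproduct injection into the $q_0$ component. For each input letter $a$, the required transition morphism in $\Hom{\cC_\oplus}{\bigoplus_q C_q}{\bigoplus_q C_q} = \prod_q \sum_r \Hom{\cC}{C_q}{C_r}$ is exactly the family $(\delta(a,q))_{q \in Q}$. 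Finally, the output morphism $\bigoplus_{q \in Q} C_q \to \iota_\oplus(\retty)$ corresponds, via the defining isomorphism of the coproduct, to the family $(o_q)_{q \in Q}$.

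For the converse, given a $\mfC_\oplus$-SST $(Q, q_0, M, \delta, i', o')$ with $M = \bigoplus_{u \in U} C_u$, I use the identification $\Hom{\cC_\oplus}{\iota_\oplus(\initty)}{M} \cong \sum_{u \in U} \Hom{\cC}{\initty}{C_u}$ to factor $i'$ as $\iota_\oplus(i_0)$ postcomposed with an injection $\In_{u_0}$ for a unique $u_0 \in U$. I then build the sdm-$\mfC$-SST with state set $Q \times U$, initial state $(q_0, u_0)$, memory family $C_{(q,u)} := C_u$ and initialization $i_0$. For each letter $a$, decomposing $\delta(a,q) = (q', f)$ with $f \in \prod_u \sum_v \Hom{\cC}{C_u}{C_v}$ yields a pair $(v_u, g_u)$ for every $u$, so I set $\delta'(a,(q,u)) = ((q',v_u), g_u)$. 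Similarly, each $o'_q : M \to \iota_\oplus(\retty)$ corresponds to a family $(o'_{q,u} : C_u \to \retty)_{u \in U}$, used at the corresponding states.

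Correctness is checked by a straightforward induction on the length of the input string: in both translations the invariant is that being in state $(q, u)$ of the sdm-$\mfC$-SST after reading $w \in \Sigma^*$ is equivalent to the $\mfC_\oplus$-SST being in state $q$ with its formal-sum memory concentrated on the component indexed by $u$ and containing the same value. The output at the end then agrees because the output morphisms are obtained from each other precisely by the coproduct decomposition. I do not expect any real obstacle: this lemma is essentially a bookkeeping observation, saying that the formal sum structure of $\cC_\oplus$ can be equivalently tracked on the state side of an SST rather than inside its memory object.
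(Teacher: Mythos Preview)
Your proposal is correct and follows essentially the same bookkeeping translation as the paper's proof sketch. The only cosmetic difference is that for the direction sdm-$\mfC$-SST $\Rightarrow$ $\mfC_\oplus$-SST you build a \emph{single-state} $\mfC_\oplus$-SST, whereas the paper keeps the original state set $Q$ and only pushes the memory family into the formal sum; both choices work for the same reason, and yours is arguably the cleaner of the two.
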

\begin{proof}[Proof sketch]
Given a $\mfC_\oplus$-SST $\left(Q,q_0, \bigoplus_{u \in U} C_u, \delta, i,o\right)$
where $i(*) = \In_{u_0}(i')$ one may check that the following sdm-$\mfC$-SST
computes the same function:
\[\left(Q \times U, (q_0,u_0), (C_u)_{(q,u) \in Q \times U}, \delta', i', (o(q)_u)_{(q,u) \in Q \times U}\right)\]
where $\delta'(a)_{q,u} = ((r,u'), f)$ if and only if $\delta(a,q) = (r, v)$ and $v_u = (u', f)$.

Conversely, letting $(Q, q_0, (C_q)_{q \in Q}, \delta, i, o)$ be a sdm-$\mfC$-SST, an equivalent $\mfC_\oplus$-SST
is given by $(Q, q_0, \bigoplus_{q \in Q} C_q, \delta', \In_{q_0}(i), o')$, where it is sufficient to define
$o'(q)$ as $(o_q)_q$ and to ensure that if $\delta'(a,q) = (r, (r_{q'}, f_{q'})_{q' \in Q})$, then
$\delta(a)_q = (r, f_q)$ and $r_q = r$. This can be done.
\end{proof}

Finally, let us remark that the notions of single-state, ``normal'' and state-dependent memory $\mfC$-SSTs coincide if
$\mfC$ has all coproducts.

\begin{lem}
\label{lem:coprod-whatever}
If $\mfC$ is a streaming setting with coproducts, single-state $\mfC$-SSTs are as expressive as general $\mfC$-SSTs
and sdm-$\mfC$-SSTs.
\end{lem}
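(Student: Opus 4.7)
The plan is to establish the two non-trivial inclusions in the chain \enquote{single-state $\subseteq$ general $\subseteq$ sdm} and then close the loop by reducing sdm-$\mfC$-SSTs back down to single-state $\mfC$-SSTs using the ambient coproducts of $\mfC$. The easy inclusions are essentially definitional: a single-state $\mfC$-SST is a $\mfC$-SST with $\card{Q}=1$, and a $\mfC$-SST $(Q,q_0,R,\delta,i,o)$ is an sdm-$\mfC$-SST with constant memory $C_q = R$ for all $q \in Q$ (the state-transition component of $\delta$ is unchanged and the morphism component lifts without change). So the content of the lemma is the collapse of sdm-$\mfC$-SSTs into single-state ones.

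Given an sdm-$\mfC$-SST $\cT = (Q, q_0, \delta, (C_q)_{q \in Q}, i, o)$, I build a single-state $\mfC$-SST $\cT' = (\{*\}, *, C, \delta', i', o')$ whose memory object internalises the entire state space: set $C = \bigoplus_{q \in Q} C_q$, which exists by assumption on $\mfC$. Take $i' = \In_{q_0} \circ i$, and $o' = [o_q]_{q \in Q}$ obtained by copairing the family $(o_q : C_q \to \retty)_{q \in Q}$. For the transition on a letter $a$, write $\delta(a)_q = (r_q^a, f_q^a)$ with $f_q^a : C_q \to C_{r_q^a}$; define
\[ \delta'(a) \;=\; \bigl[\, \In_{r_q^a} \circ f_q^a \,\bigr]_{q \in Q} \;:\; \bigoplus_{q \in Q} C_q \;\longrightarrow\; \bigoplus_{q \in Q} C_q \]
again using the universal property of the coproduct. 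The defining equation $\delta'(a) \circ \In_q = \In_{r_q^a} \circ f_q^a$ is all that one actually needs.

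The correctness verification is then a straightforward induction on the input word $w = w_1 \ldots w_n$. Let $q_0, q_1, \ldots, q_n$ and $f_1, \ldots, f_n$ be the state sequence and morphism sequence produced by $\cT$ on $w$. Using the displayed equation repeatedly, one shows that the internal state of $\cT'$ after reading $w$ is $\In_{q_n} \circ f_n \circ \cdots \circ f_1 \circ i \in \Hom{\cC}{\initty}{C}$. Postcomposing with the copairing $o'$ collapses this to $o_{q_n} \circ f_n \circ \cdots \circ f_1 \circ i$, which is precisely the $\cC$-morphism computed by $\cT$ prior to applying $\curlyinterp{-}$. Hence $\interp{\cT'} = \interp{\cT}$. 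There is no real obstacle here; the only thing worth double-checking is the base step $i' = \In_{q_0} \circ i$, which matches the convention that the sdm-$\mfC$-SST's initialisation lands in $C_{q_0}$. (Alternatively, one could bypass the explicit construction by invoking Lemma~\ref{lem:sdmSST-oplus} to reduce to $\mfC_\oplus$-SSTs and then observing that when $\mfC$ has all finite coproducts the canonical \enquote{compute the coproduct} functor $\cC_\oplus \to \cC$ extends to a morphism of streaming settings, and apply Lemma~\ref{lem:morph}; but the direct construction above is shorter and requires no new lemma.)
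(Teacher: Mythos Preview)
Your proposal is correct and follows essentially the same approach as the paper: internalise the state space into the memory object $\bigoplus_{q\in Q} C_q$, take $i' = \In_{q_0}\circ i$, $o' = [o_q]_{q\in Q}$, and $\delta'(a) = [\In_{r_q^a}\circ f_q^a]_{q\in Q}$. The paper's proof is terser (it presents $\delta'$ abstractly as the composite of the maps $\prod_q\sum_r\Hom{\cC}{C_q}{C_r}\to\prod_q\Hom{\cC}{C_q}{\bigoplus_r C_r}\cong\Hom{\cC}{\bigoplus_q C_q}{\bigoplus_r C_r}$ and omits both the easy inclusions and the correctness induction), but the construction is the same.
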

\begin{proof}
Take a sdmSST $(Q, q_0, (C_q)_{q \in Q}, \delta, i, o)$ to the single-state SST
\[\left(\{\bullet\},\, \bullet,\, \bigoplus_{q \in Q} C_q,\, \delta',\, \In_{q_0} \circ i,\, [o(q)]_{q \in
      Q}\right)\]
where $\delta'$ is defined from $\delta$ through the maps
\[
\xymatrix@C=3mm{
\displaystyle \left[\prod_{q \in Q} \sum_{r \in Q} \Hom{\cC}{C_q}{C_r}\right]^\Sigma \ar[r]
& \displaystyle \left[\prod_{q \in Q} \Hom{\cC}{C_q}{\bigoplus_{r \in Q} C_r}\right]^\Sigma \ar[r]^\sim
& \displaystyle \left[\Hom{\cC}{\bigoplus_{q \in Q} C_q}{\bigoplus_{r \in Q} C_r}\right]^\Sigma 
}
\]
\end{proof}

\begin{rem}
  \label{rem:sst-comparison}
The comparison between single-state, standard and state-dependent memory $\mfC$-SSTs can be summed up
in terms of completion with the following ``equalities'':
\[\text{$\mfC$-SSTs ~~=~~ single-state $\mfC_{\oplus\mathrm{const}}$-SSTs} \qquad \qquad \text{sdm-$\mfC$-SSTs ~~=~~ single-state $\mfC_\oplus$-SSTs}\]
where $\mfC_{\oplus\mathrm{const}}$ designates the following restriction of $\mfC_\oplus$: the category $\cC_\oplus$ is restricted to the full subcategory $\cC_{\oplus\mathrm{const}}$
whose objects are constant formal sums $\bigoplus_{i \in I} C$ for some $C \in \Obj(\cC)$.
\end{rem}

\subsubsection{Some function spaces in $\Sr_\oplus$}
Now we study $\Sr_\oplus$ in some more detail. This category
is unfortunately not able to interpret even the $\tensor/\lin$ fragment of $\laml$, because, like $\Sr$,
it lacks internal homsets $A \lin B$ for every pair $(A,B) \in \Obj(\Sr_\oplus)^2$.
However, they exist when $A$ lies in the image of $\iota_\oplus$.
It will turn out to be very useful later on.

Now we reframe a useful technical argument, typically made when dealing with determinization and composition
of standard copyless SSTs to obtain the internal homsets we desire. The core of this argument (sketched in~\cite[p.206-207]{Toolbox}) is that register transitions may be effectively coded using a combination of state and
a larger set of registers. Here, the intuition is that the free coproduct completion allows the category $\Sr_\oplus$ to integrate all the features of the additional finite set of states provided by the SSTs. 

\begin{lem}
\label{lem:hom-reg-oplus}
Let $R, S \in \Obj(\Sr)$. There is an internal homset $\iota_\oplus(R) \lin \iota_\oplus(S)$ in $\Sr_\oplus$.
\end{lem}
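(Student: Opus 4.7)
The plan is to construct a candidate object $H \in \Obj(\Sr_\oplus)$ explicitly and verify the universal property via Proposition~\ref{prop:lin-hom-hom}. Call an \emph{$R$-pattern (indexed by $S$)} a function $p : S \to R^*$ such that each $r \in R$ occurs at most once across all the words $p(s)$; since this bounds the total length $\sum_s \len{p(s)}$ by $\card{R}$, the set $\mathsf{Pat}(R,S)$ of such patterns is finite. For each $p \in \mathsf{Pat}(R,S)$, writing $p(s) = r^s_1 \cdots r^s_{n_s}$, let $T_p = \{(s,i) \mid s \in S,\, 0 \le i \le n_s\}$ be the associated set of ``slots'' and define the skeletal register transition $\sigma_p \in [T_p + R \to_\Sr S]$ by
\[ \sigma_p(s) \;=\; (s,0) \cdot r^s_1 \cdot (s,1) \cdot r^s_2 \cdots r^s_{n_s} \cdot (s,n_s) \;\in\; (\Gamma + T_p + R)^*; \]
it is copyless since each slot appears exactly once and each $R$-register at most once. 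The candidate internal hom is then $H := \bigoplus_{p \in \mathsf{Pat}(R,S)} T_p \in \Obj(\Sr_\oplus)$.

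The central technical claim is the following \emph{unique factorization}: every $f \in [C_0 + R \to_\Sr S]$ (for an arbitrary $C_0 \in \Obj(\Sr)$) can be written in a unique way as $f = \sigma_p \circ (\tau \otimes \id_R)$ for some pattern $p \in \mathsf{Pat}(R,S)$ and some filling $\tau \in [C_0 \to_\Sr T_p]$. To extract them from $f$, for each $s \in S$ view $f(s) \in (\Gamma + C_0 + R)^*$ as a sequence of ``gap words'' in $(\Gamma + C_0)^*$ separated by $R$-register occurrences: take $p(s)$ to be the subword consisting of these $R$-registers (in their order of appearance) and $\tau(s,i)$ to be the $i$-th gap (possibly empty). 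Copylessness of $f$ with respect to the $C_0$-variables translates directly into copylessness of $\tau$; the equation $f = \sigma_p \circ (\tau \otimes \id_R)$ holds by construction, and uniqueness is forced since $p$ is determined by the projection of $f$ onto $R$ and $\tau$ by the remaining gaps. This is essentially the slot-based decomposition of register contents used in~\cite[p.206-207]{Toolbox}, recast in abstract form.

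The factorization yields a bijection $\Hom{\Sr}{C_0 + R}{S} \cong \sum_{p} \Hom{\Sr}{C_0}{T_p}$, and it is natural in $C_0$ varying contravariantly: for $g \in \Hom{\Sr}{C_0'}{C_0}$, precomposing $f$ by $g \otimes \id_R$ preserves the pattern $p$ and merely postcomposes the filling by $g$. Combining this with the distributivity of $\tensor$ over $\oplus$ in $\Sr_\oplus$ and the definition of homsets in $\Sr_\oplus$, for any $C = \bigoplus_{u \in U} C_u \in \Obj(\Sr_\oplus)$ one obtains a natural chain of isomorphisms
\[
\Hom{\Sr_\oplus}{C \tensor \iota_\oplus(R)}{\iota_\oplus(S)}
\;=\; \prod_{u \in U} \Hom{\Sr}{C_u + R}{S}
\;\cong\; \prod_{u \in U} \sum_{p} \Hom{\Sr}{C_u}{T_p}
\;=\; \Hom{\Sr_\oplus}{C}{H},
\]
and Proposition~\ref{prop:lin-hom-hom} concludes that $H$ is an internal homset $\iota_\oplus(R) \lin \iota_\oplus(S)$. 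The main obstacle is essentially bookkeeping: one must remember to include the empty slots $(s,0)$ and $(s,n_s)$ at the extremities for the factorization to be bijective, but no deeper difficulty arises. In particular, no use of Lemma~\ref{lem:sumtensorembed} or of ``unitary support'' is needed here, since the basic object $\iota_\oplus(R)$ in the left argument of `$\lin$' lets us exploit the coproduct structure on the domain of morphisms directly.
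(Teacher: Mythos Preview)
Your proof is correct and follows essentially the same approach as the paper: both decompose a copyless register transition $C_0 + R \to S$ into a finitary ``pattern'' (which $R$-registers appear in which output and in what order) together with a ``filling'' (the $\Gamma + C_0$ words in the gaps), and use the set of patterns as the indexing coproduct with the slot sets as the basic components. The only cosmetic differences are that the paper indexes patterns by $\cO(R,S) = \sum_{\hat f : R \partto S} \prod_s \LinOrd(\hat f^{-1}(s))$ and slots by $S + \dom(\hat f)$ (isomorphic to your $\mathsf{Pat}(R,S)$ and $T_p$), and that the paper verifies the universal property of Definition~\ref{def:monoidalclosure} by exhibiting the evaluation map directly, whereas you invoke Proposition~\ref{prop:lin-hom-hom} via your unique-factorization bijection --- your $\sigma_p$ is precisely the paper's $\ev_{R,S,\hat f,<}$.
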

In Example~\ref{ex:sroplus-hom}, we work through the proof below in a concrete case.
\begin{proof}
First, recall that $\iota_\oplus$ is full and faithful, and that it is thus pertinent to focus our preliminary analysis on
morphisms in $\Sr$.
Recall that a register transition $f : R \to S$, which is a set-theoretical map $S \to (\Gamma + R)^*$
where for every $r \le R$, $\sum_{s \in S} \occ{\inr(r)}{f(s)} \le 1$ (i.e., it is copyless).
Consider the map $(\Gamma + R)^* \to R^*$ erasing the letters of $\Gamma$. Then, the image of the induced
map $p : \Hom{\Sr}{R}{S} \to [S \to R^*]$ is clearly finite because of copylessness.
In fact, letting $\LinOrd(X)$ be the set of
all total orders over some set $X$, we have an isomorphism between the image of $\Hom{\Sr}{R}{S}$ under $p$
and the following dependent sum
\[\cO(R,S) \qquad = \qquad \sum_{\hat{f} : R \partto S} \prod_{s \in S} \LinOrd(\hat{f}^{-1}(s))\]
The intuition is that $\hat{f}$ tracks where register variables in $R$ get affected and the additional data encode in which order
they appear in an affectation.
Once this crucial finitary information $(\hat{f}, (<_s)_{s \in S})$ is encoded in the internal homset using coproducts, it only remains to recover the
information we erased with $p$, i.e. what words in $\Gamma^*$ located between occurrences of register variables. This information cannot
be bounded by the size of $R$ and $S$, but the number of intermediate words can; we may index them by $S + \dom(\hat{f})$.

Putting everything together, it means that we take
\[\iota_\oplus(R) \lin \iota_\oplus(S) \qquad = \qquad \bigoplus_{(\hat{f}, <) \in \cO(R,S)} \iota_\oplus(S + \dom(\hat{f}))\]
Now, we need to define the evaluation map $\ev_{R,S} : [\iota_\oplus(R) \lin \iota_\oplus(S)] \tensor \iota_\oplus(R) \to \iota_\oplus(S)$.
Recall that the tensor distributes over $\oplus$, so we really need to exhibit $\ev_{R,S}$ in
\[\varHom{\Sr_\oplus}{\bigoplus_{(\hat{f}, <)} \iota_\oplus(S + \dom(\hat{f}) + R)}{\iota_\oplus(S)} ~~ \cong ~~ \prod_{(\hat{f}, <)} \varHom{\Sr}{S + \dom(\hat{f}) + R}{S}\]
where the indices $(\hat{f}, <)$ ranges over $\cO(R,S)$ on both sides.
Call $\ev_{R,S,\hat{f}, <}$ the corresponding family of $\Sr$-morphisms, whose members are set-theoretic maps $S \to (S + \dom(\hat{f}) + R)^*$.
Calling $\{r_1, \ldots, r_k\}$ the subset of $\dom(\hat{f})$ ordered by $r_1 < \ldots < r_k$, we set
\[\ev_{R,S,\hat{f}, <}(s) ~~=~~ \In_0(s)\In_2(r_1)\In_1(r_1) \ldots \In_2(r_k)\In_1(r_k)\]
This concludes the definition of $\ev$.
We now leave checking that this satisfies the required universal property to the reader.
\end{proof}

\begin{exa}
  \label{ex:sroplus-hom}
Let us illustrate this construction in a simple case. Consider
the following register transition for the concrete base alphabet $\{a,b\}$ and register names $x, y, z, u, r, s$:
\[
\begin{array}{lcl}
r &\leftarrow& zaxabyaa \\
s &\leftarrow& bab \\
\end{array}
\]
Up to the evident isomorphism $\{x,y,z,u\} \cong \{x\} \tensor\{y,z,u\}$, this determines a morphism $f \in \Hom{\Sr}{\{x\} \tensor \{y,z,u\}}{\{r,s\}}$.
Let us describe the unique map $\Lambda(f)$ in
\[ \Hom{\Sr_\oplus}{\iota_\oplus(\{x\})}{\;\iota_\oplus(\{y,z,u\}) \lin \iota_\oplus(\{r,s\})} \cong \sum_{\hat{f}, <} \Hom{\Sr}{\{x\}}{\{r,s\} + \dom(\hat{f})}\]
such that $\ev \circ (\Lambda(f) \tensor \id) = f$.
On its first component, we set $\hat{f}(y) = \hat{f}(z) = r$ and leave it undefined on $u$;
as for the order, we set $z <_r y$.
The last component corresponds to the register transition $h$ depicted below on the left, where,
for legibility, we write $\bf r, s, z$ and $\bf y$ for
$\inl(r), \inl(s), \inr(z)$ and $\inr(y)$.
Using the same notations, we also display on the right
the relevant component of $\ev_{\{y,z,u\},\{r,s\}, \hat{f}, <} \in \Hom{\Sr}{(\{r,s\} + \{z,y\})+\{z,y\}}{\{r,s\}}$,
so that the reader may convince themself that the composite $\ev_{\{y,z,u\},\{r,s\}, \hat{f}, <} \circ (h \tensor \id_{\{y,z,u\}})$
is indeed $f$.
\[
\begin{array}{lcl}
{\bf r} &\leftarrow& \varepsilon \\
{\bf z} &\leftarrow& axab \\
{\bf y} &\leftarrow& aa \\
{\bf s} &\leftarrow& bab \\
\end{array}
\qquad
\qquad
\begin{array}{lcl}
r &\leftarrow& \inl({\bf r})\;{\bf y}\;\inl({\bf y})\;{\bf z}\;\inl({\bf z}) \\
s &\leftarrow& \inl({\bf s}) \\
\end{array}
\]
\end{exa}

Lemma~\ref{lem:hom-reg-oplus} can be extended to define internal homsets
$\iota_\oplus(R) \lin C$ for arbitrary $C \in \Obj(\Sr_\oplus)$ through the
natural isomorphism of Proposition~\ref{prop:yondistr}. However, extending this to all
homsets (i.e.\ allowing any object of $\Sr_\oplus$ in the left-hand side) seems
impossible: the lack of \emph{products} prevents us from doing so.

\subsection{The product completion (or non-determinism)}
\label{subsec:with-string}
The above point (among others) leads us to study the free finite product completion of streaming settings.
As for coproducts, we first discuss the categorical construction before turning to the expressiveness.
Here, the situation is more intricate as it turns out that sdm-$\mfC_\with$-SSTs of interest will roughly
have the power of \emph{non-deterministic} sdm-$\mfC$-SSTs. We make that
connection precise.

Thankfully, a determinization theorem for usual copyless SSTs, i.e.\ $\SR$-SSTs,
exists in the literature~\cite{NSST} (the proof in the given reference is
indirect and goes through monadic second-order logic). It could be applied
without difficulty to show that non-determinism does not increase the power of
\emph{sdm}-$\SR$-SSTs.

To keep the exposition self-contained \emph{and} illustrate our framework, we
give in \Cref{subsec:uniformization} a direct determinization argument generalized
to our setting by using, crucially, the concept of internal homsets (and Lemma~\ref{lem:hom-reg-oplus} for
the desired application).

\subsubsection{Definition and basic properties}

The product completion can, of course, be defined as the dual of the coproduct completion.
\begin{defi}
Let $\cC$ be a category. Its \emph{free finite product completion} is $\cC_\with =((\cC^\op)_\oplus)^\op$.
\end{defi}

While conceptually immaculate, this definition merits a bit of unfolding.
The objects of $\cC_\with$ are still finite families $(C_x)_{x \in X}$ -- in this context,
we write them as formal products $\bigwith_{x \in X} C_x$. As for homsets, we have the dualized situation
\[\varHom{\cC_\with}{\bigwith_{x \in X} C_x}{\bigwith_{y \in Y} C_y} \quad \cong \quad \prod_{y \in Y} \sum_{x \in X} \Hom{\cC_\with}{C_{x}}{C_y}\]
We also a full and faithful functor $\iota_\with : \cC \to \cC_\with$ with a similar universal property as for the coproduct completion.

As with the coproduct completion, one may want to produce a tensor product in $\cC_\oplus$ if the underlying category $\cC$ has one.
The very same recipe can be applied: we define the tensor so that the distributivity $A \tensor (B \with C) \cong (A \tensor B) \with (A \tensor C)$ holds.

\[ \left(\bigwith_{x \in X} C_x \right) \tensor
 \left(\bigwith_{y \in Y} C_y \right)
\quad = \quad
\bigwith_{(x,y) \in X \times Y} C_x \tensor C_y
\]

\begin{rem}
One might be disturbed by this distributivity
of $\tensor$ over $\with$, which goes against the non-linear intuition of thinking of $\with$ and $\tensor$ as ``morally the same''.
This feeling may also be exacerbated by the familiar iso
\[\Hom{\Sr_\oplus}{\top}{R \tensor S} \cong \Hom{\Sr_\oplus}{\top}{R} \times \Hom{\Sr_\oplus}{\top}{S}\]
This indeed becomes false when going from $\Sr$ to $\Sr_\with$.
The useful mnemonic here (which is untrue for pure LL!) is that the \emph{multiplicative} connective distributes over both \emph{additive} connectives in the same way.
\end{rem}

\begin{rem}
While $\cC_\with$ inherits a monoidal product from $\cC$ much like with the
coproduct completion, it does \emph{not} preserve the \emph{affineness} of monoidal products.
The product completion indeed adds a new terminal object, namely the
empty family, which can \emph{never} be isomorphic to the singleton family $\iota_\with(\top)$.
More generally, $\iota_\with$ preserves colimits rather than
limits.
\end{rem}

It is also straightforward to extend the product completion at the level of streaming settings $\mfC \mapsto \mfC_\with$.

\subsubsection{Relationship with non-determinism}
At this juncture, our goal is to prove the equivalence between sdm-$\SR_\with$-SSTs and sdm-$\SR$-SSTs.
One direction is trivial; for the other, we actually prove that
sdm-$\SR_\with$-SSTs are subsumed by sdm-$\SR_\oplus$-SSTs.

This result involves some non-trivial combinatorics. We prove it via \emph{uniformization} of non-deterministic sdm-$\SR$-SSTs, a mild generalization of determinization\footnote{We work with uniformization here
we find it slightly more convenient to handle. This choice of uniformization over
determinization is rather inessential.}.

Our non-deterministic devices make \emph{finitely branching} choices, following
the case of the usual non-deterministic SSTs~\cite[Section~2.2]{NSST}. To
express this, we use the notation $\powerfin(X)$ for the set of \emph{finite}
subsets of a set $X$. (Note that if $Q$ is some finite set of states,
we have the simplification $\powerset(X) = \powerfin(X)$.)

\begin{defi}
  Let $\mfC$ be streaming setting with output $X$.
A \emph{non-deterministic} sdm-$\mfC$-SST with input $\Sigma^*$ and output $X$ is a tuple $\left(Q, I, (C_q)_{q \in Q}, \Delta, i, o\right)$ where
\begin{itemize}
\item $Q$ is a finite set of states and $I \subseteq Q$
\item $(C_q)_{q \in Q}$ is a family of objects of $\cC$
\item $\Delta$ is a finite transition relation: $\displaystyle\Delta \in \powerfin\left( \Sigma \times \sum_{(q,r) \in Q^2} \Hom{\cC}{C_q}{C_r}\right)$
\item $i \in \displaystyle\prod_{q_0 \in I} \Hom{\cC}{\top}{C_{q_0}}$
  is a family of input morphisms
\item $o \in \displaystyle\prod_{q \in Q} (\Hom{\cC}{C_q}{\retty} + 1)$
  is a family of partial output morphisms
\end{itemize}
A \emph{partial} sdm-$\mfC$-SST $(Q, q_0, (C_q)_{q \in Q}, \delta, i, o)$ has
the same definition as a (deterministic) sdm-$\mfC$-SST
(Definition~\ref{def:sdm-sst}), except for the output $o$, which is allowed to
be partial, just as in the last item above.
\end{defi}

By transposing the usual notion of run of a non-deterministic finite automaton,
one sees that a non-deterministic sdm-$\mfC$-SST $\cT$ gives rise to a function
$\interp{\cT} : \Sigma^* \to \powerfin(X)$ (for an input alphabet $\Sigma$ and
output set $X$). Similarly, a partial sdm-$\mfC$-SST $\cT'$ is interpreted as a
partial function $\interp{\cT} : \Sigma^* \partto X$.

\begin{rem}
  In line with Remark~\ref{rem:sst-comparison}, we may describe
  non-deterministic sdm-$\mfC$-SSTs as single-state (deterministic) SSTs over an
  enriched streaming setting.

  Let $\mfC = (\cC, \initty_\mfC, \retty_\mfC, \curlyinterp{-}_\mfC)$, with
  output $X$. We first define the category $\mathsf{NFA}(\cC)$:
  \begin{itemize}
  \item its objects consist of a finite set $Q$ with a family $(C_q)_{q\in Q}
    \in \Obj(\cC)^Q$;
  \item its morphisms are
    $\displaystyle\Hom{\mathsf{NFA}(\cC)}{(C_q)_{q\in Q}}{(C'_r)_{r\in R}} =
    \powerfin\left( \sum_{(q,r) \in Q \times R} \Hom{\cC}{C_q}{C_r}\right)$
  \item the composition of morphisms extends that of binary relations:
    \[ \varphi \circ \psi = \{(q,\, s,\, f \circ g) \mid (q,\; r,\; f : C_q \to
      C'_r) \in \varphi,\; (r,\; s,\; g : C'_r \to C''_s) \in \psi\} \]
  \end{itemize}
  To lift this to a construction $\mathsf{NFA}(\mfC)$ on streaming settings, we
  take:
  \begin{itemize}
  \item $\initty_{\mathsf{NFA}(\mfC)}$ and $\retty_{\mathsf{NFA}(\mfC)}$ are the
    $\{\bullet\}$-indexed families containing respectively $\initty_\mfC$ and
    $\retty_\mfC$, so that
    $\Hom{\mathsf{NFA}(\cC)}{\initty_{\mathsf{NFA}(\mfC)}}{\retty_{\mathsf{NFA}(\mfC)}}
    = \powerfin\left( \{(\bullet,\bullet)\} \times
      \Hom{\cC}{\initty_{\mfC}}{\retty\mfC} \right)$
  \item $\curlyinterp{\varphi}_{\mathsf{NFA}(\mfC)} = \{\curlyinterp{f}_\mfC
    \mid (\bullet,\bullet,f) \in \varphi \} \subseteq X$, so that the output set
    of a $\mathsf{NFA}(\mfC)$-SST is $\powerfin(X)$.
  \end{itemize}
  There is a slight mismatch between the above definition of non-deterministic
  sdm-$\mfC$-SSTs and single-state $\mathsf{NFA}(\mfC)$-SSTs: in the latter, two
  distinct input (resp.\ output) morphisms may correspond to the same initial
  (resp.\ final) state. However, the former can encode such situations by
  enlarging the set of states (to keep it finite, the use of $\powerfin(-)$ in
  the definitions is crucial).
  
  One can also give an analogous account of partial sdm-$\mfC$-SSTs; we leave
  the interested reader to work out the details.
\end{rem}

Coming back to our main point, we now state the slight variation of the
determinization theorem for copyless SSTs~\cite{NSST} that fits our purposes.
\begin{defi}
  Given an arbitrary function $F : X \to \powerset(Y)$, we say that $f : X
  \partto Y$ \emph{uniformizes} $F$ if and only if $\dom(f) = X \setminus
  F^{-1}(\varnothing)$ and $\forall x \in \dom(f).\; f(x) \in F(x)$.
\end{defi}
\begin{thm}
\label{thm:uniformization-literature}
For every non-derministic $\SR$-SSTs $\cT$ with input $\Sigma^*$, there exists a partial deterministic $\SR$-SST $\cT'$ such that
$\interp{\cT'}$ uniformizes $\interp{\cT}$.
\end{thm}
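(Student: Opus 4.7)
The plan is to derive Theorem~\ref{thm:uniformization-literature} from the determinization result for non-deterministic streaming string transducers (NSSTs) established in~\cite{NSST}. That theorem applies to standard copyless NSSTs with a single fixed register set, whereas the statement here is phrased for non-deterministic sdm-$\SR$-SSTs, which allow state-dependent register sets. The first step is therefore a routine translation: starting from $\cT = (Q, I, (R_q)_{q \in Q}, \Delta, i, o)$, I flatten the memory by taking a single register set $R$ large enough to encompass all the $R_q$ (e.g.\ their disjoint union), pad each transition of $\Delta$ with $\varepsilon$-valued assignments on the registers not used by the target state, and transport the inputs and outputs likewise. This yields a classical copyless NSST $\widehat{\cT}$ with $\interp{\widehat{\cT}} = \interp{\cT}$ as a map $\Sigma^* \to \powerfin(\Gamma^*)$; copylessness is preserved because the padding only introduces register variables that were absent.

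Next, I invoke the determinization result of~\cite{NSST} on $\widehat{\cT}$: there exists a partial deterministic copyless SST $\widehat{\cT}'$ such that $\interp{\widehat{\cT}'}$ uniformizes $\interp{\widehat{\cT}}$. Through Fact~\ref{fact:register-sst-sst}, which identifies standard copyless SSTs with $\SR$-SSTs, this translates back into a partial deterministic $\SR$-SST $\cT'$ with $\interp{\cT'}$ uniformizing $\interp{\cT} = \interp{\widehat{\cT}}$, as required.

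I expect the main obstacle to be purely bureaucratic: the conventions on initial configurations, acceptance, and partiality used in~\cite{NSST} differ slightly from the ones fixed in our formalism, and the cited result is typically phrased as plain determinization rather than uniformization. Aligning the two frameworks --- in particular verifying that the domain of $\interp{\cT'}$ is exactly $\Sigma^* \setminus \interp{\cT}^{-1}(\varnothing)$ and not a strict subset --- requires some care, but is standard and relies on the robustness of regular functions under small format changes. Once this is carried out, the substantive combinatorial work is entirely delegated to the cited theorem, whose own proof goes indirectly through monadic second-order logic and is not reproduced.
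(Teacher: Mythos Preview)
Your proposal is correct and matches the paper's treatment: the paper does not give its own proof of this theorem but presents it as a \enquote{slight variation} of the determinization result of~\cite{NSST}, and your padding argument is precisely the kind of bureaucratic adjustment needed to bridge the sdm formulation with the standard NSST model used there. It may be worth noting that the paper also points out an independent, self-contained route: the more general Theorem~\ref{thm:myuniformization} (proved directly via transformation forests in \S\ref{subsec:uniformization}) specializes to Theorem~\ref{thm:with-reg-conservativity}, which in turn implies Theorem~\ref{thm:uniformization-literature} back through Lemma~\ref{lem:with-uniformization}.
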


We now show that studying the uniformization property between classes of sdmSSTs parameterized by streaming setting $\mfC$ and $\mfD$
is morally the same as comparing the expressiveness of sdm-$\mfC_\with$-SSTs and sdm-$\mfD$-SSTs. 

\begin{lem}
\label{lem:with-uniformization}
Non-deterministic sdm-$\mfC$-SSTs are uniformizable by partial sdm-$\mfD$-SSTs if and only if sdm-$\mfD$-SSTs subsume sdm-$\mfC_\with$-SSTs.
\end{lem}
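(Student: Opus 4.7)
I would prove the two directions as essentially inverse constructions, both hinging on the combinatorial content of the homset formula
\[
\Hom{\cC_\with}{\bigwith_{x \in X} C_x}{\bigwith_{y \in Y} C_y}
\;=\;
\prod_{y \in Y}\sum_{x \in X}\Hom{\cC}{C_x}{C_y},
\]
which identifies a $\mfC_\with$-morphism with a choice, for each target index, of a source index and a plain $\mfC$-morphism. This is exactly the data of a ``labeled branching choice'' in a non-deterministic device, which makes the translation natural.

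\textbf{Direction $(\Rightarrow)$ (uniformization implies subsumption).} Given a sdm-$\mfC_\with$-SST $\cT = (Q, q_0, (\bigwith_{x \in X_q} C_{q,x})_q, \delta, i, o)$, I construct a non-deterministic sdm-$\mfC$-SST $\cT^\star$ by unfolding: states are pairs $(q,x)$ with $x \in X_q$, with memory $C_{q,x}$; a transition $\delta(a,q) = (r,f)$ with $f(y) = (\sigma(y), g_y)$ yields non-deterministic transitions $(q,\sigma(y)) \xrightarrow{a,\, g_y} (r,y)$ for $y \in X_r$; the initial states are $\{(q_0,x) : x \in X_{q_0}\}$ with initial morphisms extracted from $i$; and the output at $(q,x)$ is defined iff $o_q = (x,h)$, in which case it is $h$. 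The key observation is that every output-defined accepting run is \emph{backward-deterministic}: at each step $i$, the predecessor index $x_{i-1}$ is forced by the source-picking function $\sigma_i$ of the $i$-th $\with$-transition. Hence exactly one accepting run exists per input $w$, and unwinding shows it produces precisely $\interp{\cT}(w)$, so $\interp{\cT^\star}(w) = \{\interp{\cT}(w)\}$. Applying the uniformization hypothesis to $\cT^\star$ yields a partial sdm-$\mfD$-SST computing $\interp{\cT}$ on all of $\Sigma^\ast$, giving the desired subsumption.

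\textbf{Direction $(\Leftarrow)$ (subsumption implies uniformization).} Given a non-deterministic sdm-$\mfC$-SST $\cT = (Q, I, (C_q)_q, \Delta, i, o)$, I perform a subset construction at the $\with$-level: the states of the (possibly partial) sdm-$\mfC_\with$-SST $\cT_\with$ are subsets $S \subseteq Q$ with memory $\bigwith_{q \in S} C_q$; its initial state is $I$ with initial morphism assembled from the $i_q$'s; on letter $a$, the transition goes from $S$ to $S' = \{r : \exists q \in S, \exists f, (a,(q,r,f)) \in \Delta\}$, and for each target $r \in S'$ I pick (using some fixed well-order on $\Delta$) a witness $(a,(q_r,r,f_r)) \in \Delta$ with $q_r \in S$, which is exactly the datum of a morphism in $\mfC_\with$; and the output at $S$ picks some $q \in S$ with $o_q$ defined, if any. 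An induction on input length shows that these choices trace out, in each component $q$ of the memory, a specific run of $\cT$ ending at $q$; thus $\interp{\cT_\with}$ is a uniformization of $\interp{\cT}$. Invoking the subsumption hypothesis then yields a sdm-$\mfD$-SST computing the same uniformization.

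\textbf{Main obstacle.} The central difficulty is not conceptual but bookkeeping: the subset construction in $(\Leftarrow)$ produces a \emph{partial} sdm-$\mfC_\with$-SST, whereas the subsumption hypothesis is stated for total ones, and dually the output of $(\Rightarrow)$ is partial whereas the subsumption conclusion is phrased totally. These can be reconciled by observing that any partial sdm-SST whose output function happens to be total admits a totalization by extending the partial $o$ arbitrarily at states from which no completion of the run is ever accepting — this requires only the availability of \emph{some} morphism to $\retty$, which holds in all streaming settings we care about (e.g.\ $\SR$ has the empty-word morphism). The other nontrivial verification is the backward-determinism claim in $(\Rightarrow)$: it must be checked that the unique backward chain from an output-admissible accepting state matches, step-by-step, the component tracked by the output morphism of $\cT$, and this is best done by a direct induction on the length of $w$ using the definition of composition in $\cC_\with$.
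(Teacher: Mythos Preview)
Your approach is essentially the same as the paper's: both directions use the same constructions (unfolding the $\with$-structure into non-deterministic states with the backward-determinism/functionality observation for $(\Rightarrow)$, and a subset construction with a fixed tie-breaking order for $(\Leftarrow)$).

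One small gap worth flagging: in $(\Leftarrow)$, after applying the subsumption hypothesis you obtain a \emph{total} sdm-$\mfD$-SST, but a uniformizer must be undefined precisely on those inputs $w$ with $\interp{\cT}(w) = \varnothing$. Your ``Main obstacle'' paragraph discusses the totalization direction (padding a partial $o$ to apply subsumption) but not this converse partialization. The paper handles it by carrying an additional $\powerset(Q)$ component in the state of the resulting $\mfD$-SST, tracking the set of $\cT$-states reachable on the current prefix, and forcing the output to be undefined when that set is dead. You should mention this step explicitly; otherwise your uniformizer may return spurious values on inputs with no accepting run.
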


\begin{proof}
First, let us assume that sdm-$\mfC$-SSTs uniformize sdm-$\mfD$-SSTs and let
\[\cT \; = \; \left(Q, q_0, (A_q)_{q \in Q}, \delta, i, o\right) \qquad \text{with} \qquad A_q \; = \; \bigwith_{x \in X_q} C_{q, x}\]
be a $\mfC_\with$-SST with input $\Sigma$.
We first define a non-deterministic sdm-$\mfC$-SST
$\cT'$ by setting 
\[
\begin{array}{c}
\cT' = \left(I + Q', I, (C_{p(m)})_{m \in I + Q'}, \Delta, i', o'\right) \\\\
\end{array}
\]
\[\begin{array}{c}
\begin{array}{lcl}
Q' &=& \sum_{q \in Q} X_q \\
I &=& \sum_{x \in X_{q_0}} \{ f \mid i_* = (x, f) \}
\end{array}
\qquad\qquad\begin{array}{lrcl}
p :& I + Q' &\to& Q' \\
&\inl(x,f) &\mapsto& (q_0, x) \\
&\inr(q,x) &\mapsto& (q, x) \\
\end{array}
\end{array}
\]\[
\begin{array}{c}
i'_{\inl(x,f)} = f \qquad \qquad
o'_m = \inl\left(\pi_2\left((o_{\pi_1(p(m))})_{\pi_2(p(m))}\right)\right)
\end{array}\]
\[\begin{array}{c}
\begin{array}{lcl}
\Delta &=& \left\{ (a, ((m,\inr(r,y)), f)_{m}) \left|
\begin{array}{llcl}
\multicolumn{4}{l}{\forall (x,q) \in Q'. \forall m \in p^{-1}(x,q).} \\
& \pi_1(\delta(a)_q) &=& r \\
\wedge& \pi_2(\delta(a)_q)_{y} &=& (x, f)
\end{array}
 \right. \right\} \\\\
\end{array}
\end{array}
\]
Taking $\cT''$ to be a sdm-$\mfD$-SST uniformizing $\cT'$, we have $\{-\} \circ \interp{\cT''} = \interp{\cT'} = \{-\} \circ \interp{\cT}$, so we are done.

For the converse, assume that sdm-$\mfD$-SSTs subsume sdm-$\mfC_\with$-SSTs and suppose we have some non-deterministic sdm-$\mfC$-SST $\cT = (Q, I, (A_q)_{q \in Q}, \Delta, i, o)$
to uniformize. Fix a total order $\preceq$ over the morphisms of $\cC$ occurring in $\Delta$ (recall that there are finitely many of them).
Consider a partial deterministic sdm-$\mfC_\with$-SST $\cT'$ obtained from $\cT$ by a powerset construction
\[\cT' \; \quad = \; \left(\powerset(Q),\; I,\; \left(\bigwith_{r \in R}
      A_r\right)_{R \subseteq Q},\; \delta,\; i,\; o'\right)\]
\[\text{where}~ \delta(a)_R ~~=~~ (S, (r_s, f_s)_{s \in S}) \quad \text{if and only if} \quad \forall s \in S ~\left[ \begin{array}{ll} (a, ((r_s,s),f_s)) \in \Delta \\ \forall g. (a,((r_s,s),g)) \in \Delta \Rightarrow f_s \preceq g \end{array}\right. \]
and $o'_R = \restr{o}{r}$ for some arbitrary $r$ such that the right hand-side is defined; if there is no such $r$, $o'_R$ is undefined and we call $R$ a \emph{dead} set of states.
By padding $o'$ with some arbitrary values on such dead states, we may extend $\cT'$ to a non-partial deterministic $\mfC$-SST $\cT''$ so that $\interp{\cT'} \subseteq \interp{\cT''}$. We may then consider a sdm-$\mfD$-SST
$\cT''' = (Q', q_0, (D_q)_{q \in Q'}, \delta, i'', o'')$ such that $\interp{\cT'}(w) = \interp{\cT''}(w) = \interp{\cT'''}$ for $w \in \dom(\interp{\cT'})$.
This $\cT'''$ is almost our uniformizer; we only need to restrict the domain of its output function.
This can be achieved by adding a
$\powerset(Q)$ component to the state space corresponding to the set of states reached by $\cT$ and forcing the output function to be undefined if this component contains a dead set of states.
\end{proof}

Putting Lemma~\ref{lem:with-uniformization} together with Theorem~\ref{thm:uniformization-literature} yields the desired result.

\begin{thm}
\label{thm:with-reg-conservativity}
sdm-$\SR_\with$-SSTs are subsumed by sdm-$\SR$-SSTs.
\end{thm}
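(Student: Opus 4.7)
The proof plan is to combine the two ingredients stated just before the theorem: the literature's uniformization result for copyless SSTs (Theorem~\ref{thm:uniformization-literature}) and the bridge provided by Lemma~\ref{lem:with-uniformization}. Concretely, Lemma~\ref{lem:with-uniformization} (instantiated with $\mfC = \mfD = \SR$) reduces the statement to the claim that non-deterministic sdm-$\SR$-SSTs are uniformizable by partial sdm-$\SR$-SSTs. So the only real task is to lift Theorem~\ref{thm:uniformization-literature}, which concerns standard (non-sdm) $\SR$-SSTs, to the state-dependent memory setting.

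To do this, I would first observe that an sdm-$\SR$-SST with memory family $(R_q)_{q \in Q}$ can be translated into an ordinary $\SR$-SST with a single register object, taking the disjoint union $R = \bigsqcup_q R_q$ and using the finite set of states $Q$ to remember which slice is active (as in the proof of Theorem~\ref{thm:oplus-sst-conservative}, since all objects of $\Sr$ have unitary support by Corollary~\ref{cor:oplus-sst-reg-conservative}). The same encoding applies essentially verbatim in the non-deterministic and partial cases: transitions become finite relations on $Q$ paired with register transitions on $R$, and partial outputs are simply outputs defined on a subset of $Q$. Crucially, this translation preserves the input/output relation, so a uniformizer on one side transports to a uniformizer on the other.

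With this lifting in place, I would proceed as follows. Given a non-deterministic sdm-$\SR$-SST $\cT$, translate it into a non-deterministic $\SR$-SST $\widehat{\cT}$ computing the same relation $\Sigma^* \to \powerfin(\Gamma^*)$. Apply Theorem~\ref{thm:uniformization-literature} to obtain a partial deterministic $\SR$-SST $\widehat{\cT}'$ uniformizing $\interp{\widehat{\cT}}$. Observe that every $\SR$-SST is trivially an sdm-$\SR$-SST (using the constant memory family), so $\widehat{\cT}'$ is also a partial sdm-$\SR$-SST uniformizing $\interp{\cT}$. Plugging this into Lemma~\ref{lem:with-uniformization} yields that sdm-$\SR$-SSTs subsume sdm-$\SR_\with$-SSTs.

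The main obstacle is conceptual rather than technical: one must be satisfied that the translation between sdm-$\SR$-SSTs and $\SR$-SSTs is compatible with non-determinism and partiality and preserves the uniformization relation. Since the translation only encodes finite-state information into additional register slots (via the unitary support available in $\Sr$), it commutes with the choice of nondeterministic branches and with partiality of outputs, so no subtle issue arises.
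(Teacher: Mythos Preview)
Your proposal is correct and follows exactly the paper's approach: combine Lemma~\ref{lem:with-uniformization} with Theorem~\ref{thm:uniformization-literature}. The paper's proof is the single sentence ``Putting Lemma~\ref{lem:with-uniformization} together with Theorem~\ref{thm:uniformization-literature} yields the desired result''; the sdm/non-sdm discrepancy you carefully address is acknowledged by the paper only in the earlier remark that the literature result ``could be applied without difficulty'' to the sdm case, so your explicit translation via disjoint union of register sets is a welcome elaboration of what the paper leaves implicit.
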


We also provide a direct self-contained proof of the following statement generalizing Theorem~\ref{thm:with-reg-conservativity}.

\begin{restatable}{thm}{myuniformization}
\label{thm:myuniformization}
Let $\mfC$ and $\mfD$ be streaming settings such that there is a morphism of streaming settings $\mfC \to \mfD$,
whose underlying functor is $F : \cC \to \cD$.
Assume further that $\cD$ carries a symmetric monoidal affine structure
and has internal homsets $F(C) \lin F(C')$ for every pair of objects $C, C' \in \Obj(\cC)$.

Then, sdm-$\mfC_\with$-SSTs are subsumed by sdm-$\mfD$-SSTs.
\end{restatable}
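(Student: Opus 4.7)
The plan is to adapt the classical transformation-forest construction used to determinize copyless SSTs, recast in our categorical framework. By Lemma~\ref{lem:with-uniformization}, it suffices to exhibit, for every non-deterministic sdm-$\mfC$-SST $\cT = (Q, I, (C_q)_{q \in Q}, \Delta, i, o)$, a partial sdm-$\mfD$-SST $\cT'$ whose interpretation uniformizes $\interp{\cT}$.

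The states of $\cT'$ are \emph{transformation forests} for $\cT$: finite rooted ordered labeled forests in which each root bears a label in $I$, all nodes bear labels in $Q$, the leaves are in bijection with a subset of $Q$ (the currently reachable states, with one chosen ancestor run per state), and non-root internal nodes have at least two children. Such forests have size $O(|I|\cdot|Q|)$, so there are only finitely many of them. To a forest $T$ with edge set $E_T$ I associate the memory object
\[ M_T \;=\; F(\initty_\mfC) \tensor \bigtensor_{(p,c)\in E_T}\bigl(F(C_{q_p}) \lin F(C_{q_c})\bigr), \]
with the convention that, for an edge from a virtual ``super-root'' to an actual root labeled $q_0$, one takes $q_p = \initty_\mfC$. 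The first tensor factor $F(\initty_\mfC)$ is carried unchanged through the entire computation and is consumed only at output time; each real edge stores the composite, in $\cD$, of the transitions from $\Delta$ (passed through $F$) along a compressed chain, as an element of the corresponding internal homset, which exists by hypothesis.

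Each transition of $\cT'$ on a letter $a$ is implemented in $\cD$ as a composition of three kinds of operations. (i) \emph{Edge extension}: for each leaf labeled $q$ and each $(a, ((q, q'), f)) \in \Delta$, attach a fresh child carrying the constant $\Lambda'(F(f)) : \unit \to F(C_q) \lin F(C_{q'})$. (ii) \emph{Pruning}: whenever the same $q'$ labels several fresh children, retain one canonically (e.g.\ lexicographically smallest) and discard the others together with any resulting orphan subtrees, using the affineness of $\cD$. (iii) \emph{Chain compression}: contract each internal node left with a unique child by composing the incoming and outgoing edge morphisms via the internal composition $(A \lin B) \tensor (B \lin C) \to (A \lin C)$ built from $\ev$. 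The initial morphism $\initty_\mfD \to M_{T_0}$ is obtained by combining $i : \initty_\mfD \to F(\initty_\mfC)$ (from the streaming-setting morphism) with the currifications $\Lambda'(F(i_{q_0})) : \unit \to F(\initty_\mfC) \lin F(C_{q_0})$, glued together through $\initty_\mfD \cong \initty_\mfD \tensor \unit^{\tensor |I|}$. The output $o'_T$ is defined iff some leaf of $T$ carries a state $q$ with $o_q$ defined; it then canonically selects one such leaf $\ell$, composes the edge morphisms along the path from the super-root down to $\ell$, evaluates the resulting element of $F(\initty_\mfC) \lin F(C_{q_\ell})$ against the carried-along $F(\initty_\mfC)$, and applies $F(o_{q_\ell})$ followed by $o : F(\retty_\mfC) \to \retty_\mfD$; all other memory components are discarded by affineness.

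Correctness is established by induction on input length: after reading $w$, the leaves of the current forest are exactly the states reachable by $\cT$ on $w$, and for each leaf $\ell$ the internal-homset element obtained by composing the edges along the super-root-to-$\ell$ path equals $\Lambda'\bigl(F(r_\ell)\bigr)$, where $r_\ell : \initty_\mfC \to C_{q_\ell}$ is the register value of the canonically chosen run of $\cT$ ending at $q_\ell$. The principal obstacle is exactly the need to arrange the memory so that every forest update is a bona fide $\cD$-morphism \emph{without ever duplicating stored data}: no comonoid structure is available on the objects $F(C_q)$, so a leaf with several outgoing transitions cannot be simulated by storing a single register value at that leaf. The tree-shaped memory and the encoding of delayed compositions as internal-homset elements are precisely what circumvents this restriction while keeping the set of possible memory shapes finite.
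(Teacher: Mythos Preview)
Your proposal is correct and follows essentially the same approach as the paper: both build a sdm-$\mfD$-SST whose states are (normalized) transformation forests, with memory $\bigtensor_{\text{edges}}(F(C_u)\lin F(C_v))$, and whose transitions are realized by extending the forest with a depth-1 layer, then pruning dead branches and contracting unary chains via the internal composition of $\cD$; finiteness of the state space follows from the branching bound on normal forests. The paper phrases the construction directly on a sdm-$\mfC_\with$-SST (roots indexed by the $\with$-components $X_{q_0}$, leaves by $X_q$), whereas you first invoke Lemma~\ref{lem:with-uniformization} and work with a non-deterministic sdm-$\mfC$-SST (roots coming from $I$, leaves from currently reachable states); the paper explicitly notes these viewpoints are equivalent. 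Two small presentational points: your clause ``each root bears a label in $I$'' only holds for the initial forest---after contraction a root may carry an arbitrary $q\in Q$, though the \emph{number} of roots stays $\le |I|$, which is what matters for finiteness; and carrying an extra factor $F(\initty_\mfC)$ throughout (rather than introducing it at output time as the paper does) is harmless and in fact slightly more robust, since the paper's output map tacitly uses $A\cong A\tensor\initty_\mfD$, which holds in the intended instance $\mfD=\SR_\oplus$ where $\initty_\mfD=\unit$.
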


Let us check that our technical development until now
allows deriving Theorem~\ref{thm:with-reg-conservativity} from the above result.
We instantiate $\mfC = \SR$ and $\mfD = \SR_\oplus$, with the
functor $\iota_\oplus : \Sr \to \Sr_\oplus$. We have already seen
that $\Sr_\oplus$ is a symmetric monoidal affine category. The assumption
on internal homsets is exactly Lemma~\ref{lem:hom-reg-oplus}. The theorem then
tells us that sdm-$\SR_\with$-SSTs are subsumed by
sdm-$\SR_\oplus$-SSTs, and the latter are no more expressive than
sdm-$\SR$-SSTs by Corollary~\ref{cor:oplus-sst-reg-conservative}.

Note that, conversely, Theorem~\ref{thm:with-reg-conservativity} also implies
Theorem~\ref{thm:uniformization-literature} through
Lemma~\ref{lem:with-uniformization}. Therefore, while
Theorem~\ref{thm:uniformization-literature} is a variant of the previously known
determinization of copyless SSTs, we generalized it
in a more abstract setting. Our main contribution is identifying the notion of internal
homsets as one of the key components which make the direct determinization proof
(that does not appear in~\cite{NSST}, but might be part of the folklore, see
e.g.~\cite[Problem~139 (p.~226)]{Toolbox}) work.
The proof itself is thus rather unsuprising, but rather involved, so we postpone it to Subsection~\ref{subsec:uniformization}.

\subsection{The $\oplus\with$-completion (a Dialectica-like construction)}
\label{subsec:dial-string}
We now consider the composition of the coproduct completion with the product completion
$(\cC_\with)_\oplus$.
Unraveling the formal definition and distributing sums and products at the right spots,
we define an isomorphic category $\cC_{\oplus\with}$ which is a bit less cumbersome to
manipulate in practice.

\begin{thm}
Given an arbitrary category $\cC$, there is an \emph{isomorphism of categories} (not
just an equivalence) between $(\cC_\with)_\oplus$ and the category
$\cC_{\oplus\with}$ defined below.
\begin{itemize}
\item The objects of $\cC_{\oplus\with}$ are triples $(U, (X_u)_{u}, (C_{u,x})_{(u,x)})$ where $U$ is a finite set,
$(X_u)_{u \in U}$ is a family of finite sets and $C_{u, x}$ is a family of objects of
$\cC$ indexed by $(u, x) \in \sum_{u \in U} X_u$.
We drop the first index $u$ when it is determined by $x \in X_u$ from context and write those objects $\bigoplus_{u \in U} \bigwith_{x \in X_u} C_{x}$ for short.
\vspace{2mm}
\item Its homsets are defined as
\[\varHom{\cC_{\oplus\with}}{\bigoplus_{u \in U} \bigwith_{x \in X_u} C_{x}}{\bigoplus_{v \in V} \bigwith_{y \in Y_v} C_{y}}
\quad = \quad
\prod_{u \in U} \sum_{v \in V} \prod_{y \in Y_v} \sum_{x \in X_u} \Hom{\cC}{C_{x}}{C_{y}}
\]
\item Its identities are maps
\[
\begin{array}{c@{}c@{}c@{}c@{}c@{}c@{}c@{}cl}
\displaystyle\prod_{u \in U}
& &\displaystyle\sum_{u' \in U}
& &\displaystyle\prod_{x' \in X_u}
& &\displaystyle\sum_{x \in X_{u'}}
& & \Hom{\cC}{C_{x}}{C_{,x'}}
\\ \\
u &\mapsto & \big[ u &, &\big(x &\mapsto &(x&,& \id_{C_{x}} )\big)\big]
\end{array}
\]
\item Composition is defined as in Figure~\ref{fig:dialcomp}. The
  interesting steps of this computation are those involving $v$ and $y$; since
  they are similar, let us focus on $v$. A map of the form
  \[ \begin{array}{c@{}c@{}c@{}c@{}c@{}c@{}c@{}c@{}c@{}c}
\displaystyle\prod_{v \in V} A_v
& \quad\times &\quad\displaystyle\sum_{v \in V}
& \;B_v & \quad\longrightarrow  & \quad\displaystyle\sum_{v \in V} & \;A_v & \;\times & \;B_v
\\ \\
(a_v)_{v \in V} & \quad, & (v', & b) & \quad\mapsto & (v', & (a_{v'} & \;, & b))
\end{array}
\]
is applied. This makes the two $\cC_{\oplus\with}$-morphisms interact (an interaction
represented in Remark~\ref{rem:dial-game} below as a move in a game): the $v'$
provided by the right one selects $a_{v'}$ among all the possibilities $(a_v)_v$
proposed by the left one.
\end{itemize}
\end{thm}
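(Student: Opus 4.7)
The plan is to observe that the claimed isomorphism is essentially tautological: the data defining $\cC_{\oplus\with}$ coincides with the data obtained by unfolding $(\cC_\with)_\oplus$, up to uncurrying of nested indexing sets. I would verify this separately on objects, hom-sets, and composition; no clever construction is needed.

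For objects, an object of $(\cC_\with)_\oplus$ is a pair $(U, (B_u)_{u\in U})$ where each $B_u \in \Obj(\cC_\with)$ is itself a pair $(X_u, (C_{u,x})_{x \in X_u})$. Uncurrying this nested data yields the triple $(U, (X_u)_u, (C_{u,x})_{(u,x)})$, which is exactly an object of $\cC_{\oplus\with}$, and the correspondence is a literal bijection. For hom-sets, unfolding the definition of $(-)_\oplus$-morphisms applied to $\cC_\with$ gives
\[\varHom{(\cC_\with)_\oplus}{\bigoplus_u B_u}{\bigoplus_v B'_v} \;=\; \prod_u \sum_v \Hom{\cC_\with}{B_u}{B'_v},\]
and then unfolding each $\Hom{\cC_\with}{B_u}{B'_v} = \prod_y \sum_x \Hom{\cC}{C_{u,x}}{C'_{v,y}}$ and substituting yields exactly the homset formula stipulated for $\cC_{\oplus\with}$.

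It remains to check that the identity and composition match under these bijections. Identities match on the nose: the identity in $(\cC_\with)_\oplus$ is $u \mapsto (u, \id_{B_u})$, and $\id_{B_u}$ in $\cC_\with$ unfolds to $x' \mapsto (x', \id_{C_{u,x'}})$, reproducing the formula in the theorem. For composition, I would chase the definition in two layers: the outer $(-)_\oplus$-composition sends a compatible pair $\bigl((v_u, g_u)_u,\, (w_v, h_v)_v\bigr)$ to $(w_{v_u},\, h_{v_u} \circ_{\cC_\with} g_u)_u$, and the inner $\circ_{\cC_\with}$ then repeats the exact same pattern one level down, at the level of the $(X_u, Y_v)$ data.

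The main (really the only) obstacle is bookkeeping: one must keep track of the fact that both the outer and the inner composition each pass through a copy of the evaluation/pairing map $\prod_v A_v \times \sum_v B_v \to \sum_v (A_v \times B_v)$ highlighted in the theorem statement, and that the two copies are nested in the right way. Once the indexing gymnastics is organized cleanly, the result matches the formula of Figure~\ref{fig:dialcomp} verbatim, and associativity and unitality of $\circ_{\cC_{\oplus\with}}$ are inherited immediately from those of $\circ_\cC$ by applying the verification pointwise at each $u$ and $y$. This layered pattern is precisely what gives the construction its Dialectica-like, game-theoretic flavor, consistent with the description in the surrounding introduction.
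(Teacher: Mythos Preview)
Your proposal is correct and matches the paper's approach exactly: the paper's entire proof reads ``By mechanical unfolding of the definitions,'' and you have simply spelled out that unfolding in more detail. There is nothing to add.
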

\begin{proof}
  By mechanical unfolding of the definitions.
\end{proof}

\begin{figure}
\[
\xymatrix@R=6mm{
\displaystyle\varHom{{}}{\bigoplus\limits_{v} \bigwith\limits_{y} C_{y}}{\bigoplus\limits_{w} \bigwith\limits_{z} C_{z}}
\times
\varHom{{}}{\bigoplus\limits_{u} \bigwith\limits_{x} C_{x}}{\bigoplus\limits_{v} \bigwith\limits_{y} C_{y}}
\ar@{=}[d] \\
\displaystyle\prod_{v} \sum_{w} \prod_{z} \sum_{y} \Hom{\cC}{C_{y}}{C_{z}}
\quad\times
\quad\prod_{u} \sum_{v} \prod_{y} \sum_{x} \Hom{\cC}{C_{x}}{C_{y}}
\ar[d]
 \\
\displaystyle\prod_{u} \left(
\prod_{v} \sum_{w} \prod_{z} \sum_{y} \Hom{\cC}{C_{y}}{C_{z}}
\;\times
\;\sum_{v} \prod_{y} \sum_{x} \Hom{\cC}{C_{x}}{C_{y}}
\right)
\ar[d]
\\
\displaystyle\prod_{u}\sum_v \left(
\sum_{w} \prod_{z} \sum_{y} \Hom{\cC}{C_{y}}{C_{z}}
\quad\times
\quad
\prod_{y} \sum_{x} \Hom{\cC}{C_{x}}{C_{y}}
\right)
\ar[d]^\sim
\\
\displaystyle\prod_{u}\sum_{v,w} \left(
\prod_{z} \sum_{y} \Hom{\cC}{C_{y}}{C_{z}}
\quad\times
\quad
\prod_{y} \sum_{x} \Hom{\cC}{C_{x}}{C_{y}}
\right)
\ar[d]
\\
\displaystyle\prod_{u}\sum_{v,w}
\prod_z
 \left(
\sum_{y} \Hom{\cC}{C_{y}}{C_{z}}
\quad\times
\quad
\prod_{y} \sum_{x} \Hom{\cC}{C_{x}}{C_{y}}
\right)
\ar[d]
\\
\displaystyle\prod_{u}\sum_{v,w}
\prod_z
\sum_y
\left(
\Hom{\cC}{C_{y}}{C_{z}}
\quad
\times
\quad
\sum_{x} \Hom{\cC}{C_{x}}{C_{y}}
\right)
\ar[d]^\sim
\\
\displaystyle\prod_{u}\sum_{v,w}
\prod_z
\sum_{y,x}
\left(
\Hom{\cC}{C_{y}}{C_{z}}
\times
\Hom{\cC}{C_{x}}{C_{y}}
\right)
\ar[d]^-{\text{composition in $\cC$}}
\\
\displaystyle\prod_{u}\sum_{v,w}
\prod_z
\sum_{y, x}
\Hom{\cC}{C_{x}}{C_{z}}
\ar[d]^-{\text{project away $v, y$}}
\\
\displaystyle\prod_{u}\sum_{w}
\prod_z
\sum_x
\Hom{\cC}{C_{x}}{C_{z}}
\ar@{=}[d]
\\
\displaystyle\varHom{\cC_{\oplus\with}}{\bigoplus_{u} \bigwith_{x} C_{x}}{\bigoplus_{w} \bigwith_{z} C_{z}}
}
\]

\hrule

\caption{Composition in $\cC_{\oplus\with}$ ($- \in -$ are omitted from indices)}
\label{fig:dialcomp}
\end{figure}

\begin{rem}
The reader may notice that composition in $\cC_{\oplus\with}$
is very similar to the interpretation of cuts in G\"odel's Dialectica interpretation~\cite{godeldial}
and/or composition in categories of \emph{polynomial functors}~\cite{gkpolynomial, glehnmoss18}.
This intuition can be made formal. In particular, see~\cite{HofstraDialectica} for a
decomposition of a general version of the categorical Dialectica construction
into free completions with
\emph{simple} sums and products. In our context, the completion with simple coproducts
would be the $(-)_{\oplus\mathrm{const}}$ of
Remark~\ref{rem:sst-comparison}; conversely, a \enquote{dependent Dialectica}
could be defined in the fibrational setting of~\cite{HofstraDialectica}
analogously to our $(-)_{\oplus\with}$-completion by removing the simplicity
restriction.

\end{rem}
\begin{rem}
\label{rem:dial-game}
For the uninitiated, it can be helpful to compute this completion on the trivial
category $\mathbbl{1}$ with
one object and only its identity morphism. In this case, objects consists of a pair of a finite
set $U$ together with a family $(X_u)_{u \in U}$ of finite sets that can be regarded as
a two-move sequential game (with no outcome) between player $\oplus$ and $\with$:
first $\oplus$ plays some $u \in U$ and then $\with$ plays some $x \in X_u$.
One can then consider \emph{simulation games} between $(U,(X_u)_u)$ (the ``left hand-side'') and $(V,(Y_v)_v)$ (the ``right hand-side'')
proceeding as follows:

\begin{tabular}{ll}
\parbox{0.5\textwidth}{
\begin{itemize}
\item first, $\with$ plays some $u \in U$ on the left
\item then, $\oplus$ plays some $v \in V$ on the right
\item $\with$ answers with some $y \in Y_v$ on the right
\item finally $\oplus$ answers with $x \in X_u$ on the left.
\end{itemize}}
  &
\hspace{-4cm}\parbox{0.5\textwidth}{\[
\begin{array}{l| l @{~}c@{~} ll  l @{~}c@{~} @{~}c}
& U &,& (X_u)_u & \to & V ,& (Y_v)_v \\
\hline
\with & u & & & & & \\ 
\oplus &     & & & & v  \\
\with &     & & & & &  y \\
\oplus &     & & x \\
\end{array}
\]}
\end{tabular}
Morphisms in $\mathbbl{1}_{\oplus\with}$ are $\oplus$-strategies in such games. Identities
correspond to copycat strategies and composition is (a simple version) of an usual scheme in game semantics.
As for $\cC_{\oplus\with}$, one may consider that once this simulation game is played, $\oplus$ needs to provide a datum in some $\Hom{\cC}{C_{x}}{C_{y}}$ which depends on the outcome of the game.
\end{rem}

We write $\iota_{\oplus\with} : \cC \to \cC_{\oplus\with}$ for the (full and faithful) embedding
sending an object $C$ to the one-element family $\bigoplus_1 C$.
As the coproduct completion preserves limits, it means that $\cC_{\oplus\with}$ always boasts both
binary cartesian products and coproducts. Concretely, products are computed using the distributivity
of products over coproducts
\[\bigwith_{i \in I} \bigoplus_{j \in J_i} A_{j} \quad \cong \quad \bigoplus_{f \in \prod_{i \in I} J_i} \bigwith_{i \in I} A_{f(i)}\]
If $\cC$ has a symmetric monoidal structure $(\tensor, \unit)$, the lifting is computed
in $\cC_{\oplus\with}$ as
\[\bigtensor_{i \in I} ~~ \bigoplus_{u \in U_i} ~~ \bigwith_{x \in X_{u}} ~~ C_{x}
\qquad = \qquad \bigoplus_{f \in \prod\limits_{i \in I} U_i} ~~ \bigwith_{g \in \prod\limits_{i \in I} X_{i,f(i)}} ~~ \bigtensor_{i \in I} ~~ C_{g(i)}\]

\subsubsection{The monoidal closure theorem}

Recall from the introduction that a result of central importance in this paper
is the fact that the category $(\Sr_{\with})_{\oplus}$ is symmetric monoidal
closed (Theorem~\ref{thm:smcc-string}). This will be a consequence of the
general property below.

\begin{thm}
\label{thm:dial-haslin}
Let $(\cC,\tensor,\unit)$ be a symmetric monoidal category. Assume that its
coproduct completion $\cC_{\oplus}$ admits internal homsets
$\iota_{\oplus}(A) \lin \iota_{\oplus}(B)$ for every $A,B \in \Obj(\cC)$.
Then its Dialectica-like completion $\cC_{\oplus\with}$ is monoidal closed.
\end{thm}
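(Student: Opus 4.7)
By Proposition \ref{prop:lin-hom-hom}, it suffices to exhibit, for every pair $(A,B) \in \Obj(\cC_{\oplus\with})^2$, an object $A \lin B$ together with a natural isomorphism $\Hom{\cC_{\oplus\with}}{- \tensor A}{B} \cong \Hom{\cC_{\oplus\with}}{-}{A \lin B}$. I would write $A = \bigoplus_{u \in U} \bigwith_{x \in X_u} C_{u,x}$, $B = \bigoplus_{v \in V} \bigwith_{y \in Y_v} D_{v,y}$, and test with $E = \bigoplus_{e \in W} \bigwith_{p \in P_e} F_{e,p}$. The distributivity of $\tensor$ over $\oplus$ and $\with$ in $\cC_{\oplus\with}$, combined with the formula for homs, yields
\[
\Hom{\cC_{\oplus\with}}{E \tensor A}{B} \;\cong\; \prod_{(e,u)} \sum_v \prod_y \sum_{(p,x)} \Hom{\cC}{F_{e,p} \tensor C_{u,x}}{D_{v,y}}.
\]

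The hypothesis says that $\iota_\oplus(C_{u,x}) \lin \iota_\oplus(D_{v,y})$ exists in $\cC_\oplus$, and as an object of $\cC_\oplus$ it decomposes as a finite sum $\bigoplus_{k \in K_{u,x,v,y}} \iota_\oplus(L_{u,x,v,y,k})$. Its universal property, specialised to basic arguments in $\cC$, amounts to a natural bijection $\Hom{\cC}{F \tensor C_{u,x}}{D_{v,y}} \cong \sum_{k} \Hom{\cC}{F}{L_{u,x,v,y,k}}$ for $F \in \cC$. I would substitute this inside the innermost hom above, thereby eliminating the monoidal product entirely and leaving a nested iteration of $\sum$'s and $\prod$'s over the ``atomic'' homsets $\Hom{\cC}{F_{e,p}}{L_{\dots}}$.

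I would then collapse this nested expression by repeatedly applying the type-theoretic axiom of choice $\prod_i \sum_j G(i,j) \cong \sum_{\sigma \in \prod_i J_i} \prod_i G(i,\sigma(i))$ to migrate every $\sum$ outward, gathering the accumulated witnesses into a single outer index $\rho$ (a function $v_\rho : U \to V$ together with pointwise choices $(x_\rho(u,y), k_\rho(u,y))$), and then regrouping the remaining $\prod$'s inward as $\cC_\with$-homs into $\with$-products. The result matches $\Hom{\cC_{\oplus\with}}{E}{A \lin B}$ exactly for
\[
A \lin B \;=\; \bigoplus_{\rho} \bigwith_{u \in U} \bigwith_{y \in Y_{v_\rho(u)}} L_{u,\, x_\rho(u,y),\, v_\rho(u),\, y,\, k_\rho(u,y)}\,,
\]
a well-formed object of $\cC_{\oplus\with}$ since every index set in sight is finite. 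Naturality in $E$ is automatic since each intermediate bijection is natural in $F_{e,p}$.

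The main obstacle is bookkeeping: the dependent layering of indices (already tangibly present in the composition formula of \Cref{fig:dialcomp}) makes the rearrangement bureaucratic even though no individual step is deep. Reassuringly, the resulting formula mirrors the implication of G\"odel's Dialectica interpretation, of which the present result is the expected categorical incarnation; this also explains the shape of the normal form $\bigoplus \bigwith$ of the implication, with the outer $\oplus$ packaging the choice of witnesses $(v_\rho, x_\rho, k_\rho)$ and the inner $\with$ ranging over the universally-quantified ``counters'' $(u, y)$.
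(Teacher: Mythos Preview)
Your computation is correct and leads to a valid internal hom, but it takes a different organizational route from the paper. The paper defines
\[
A \lin B \;=\; \bigwith_{u \in U}\,\bigoplus_{v \in V}\,\bigwith_{y \in Y_v}\,\bigoplus_{x \in X_u}\, \iota_\with^\oplus\bigl(\iota_\oplus(A_x) \lin \iota_\oplus(B_y)\bigr),
\]
deliberately keeping the alternating $\with/\oplus/\with/\oplus$ shape so that the internal-hom formula literally mirrors the external one (substituting $\with,\oplus,\lin$ for $\prod,\sum,\Hom{}{}{}$). The natural isomorphism with $\Hom{\cC_{\oplus\with}}{-\tensor A}{B}$ is then established by pushing one $\sum$ or $\prod$ at a time into the hom-set, using Proposition~\ref{prop:yondistr}, the universal property of products, and a dedicated step $(\heartsuit)$ exploiting that the image of $\iota_\with^\oplus$ is $\with$-free. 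Your object is the \emph{normal form} of the paper's: you have applied the distributivity of $\with$ over $\oplus$ (equivalently, the type-theoretic axiom of choice) up front, flattening the alternation into a single outer $\bigoplus_\rho$ over choice functions. The two objects are canonically isomorphic, so both approaches succeed; the paper's choice makes the Dialectica analogy transparent and keeps the index set of the outer coproduct small, while yours puts the object directly in the $\bigoplus\bigwith$ shape at the cost of an index set $\prod_u \sum_v \prod_y \sum_x K_{u,x,v,y}$ that is exponentially larger.

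One point where you are too quick: the sentence ``Naturality in $E$ is automatic since each intermediate bijection is natural in $F_{e,p}$'' does not suffice as written. Naturality in $E \in \cC_{\oplus\with}$ must also be checked against morphisms that reindex the families $(e,p)$, not just against $\cC$-morphisms between individual $F_{e,p}$. The paper treats this seriously (see the lifting operators $\langle\with\rangle$ and $\langle\oplus\rangle$ and Lemmas~\ref{lem:lift-nat-with}--\ref{lem:with-left-sum} in Appendix~\ref{sec:app-dial}), and even remarks that this verification ``is not necessarily so straightforward''. Your argument can be repaired along the same lines: first recognise $\sum_p \Hom{\cC}{F_{e,p}}{L}$ as $\Hom{\cC_\with}{\bigwith_p F_{e,p}}{\iota_\with(L)}$, then observe that your axiom-of-choice rearrangements commute with precomposition by $\cC_\with$-morphisms, and finally lift through $\langle\oplus\rangle$. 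But this needs to be said, not waved at.
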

Indeed, since Lemma~\ref{lem:hom-reg-oplus} gives us precisely the assumption on
internal homsets in the above statement for $\cC=\Sr$, we immediately get:
\begin{cor}[equivalent to Theorem~\ref{thm:smcc-string}]
\label{cor:register-dial-closed}
$\Sr_{\oplus\with}$ is monoidal closed.
\end{cor}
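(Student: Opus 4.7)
My plan is to use the homset-based characterization of internal homs from Proposition~\ref{prop:lin-hom-hom}: for each pair $A, B \in \Obj(\cC_{\oplus\with})$, it suffices to exhibit an object $A \lin B$ together with an isomorphism $\Hom{\cC_{\oplus\with}}{C \tensor A}{\,B} \cong \Hom{\cC_{\oplus\with}}{C}{\,A \lin B}$ natural in $C$. Write $A = \bigoplus_{u \in U} \bigwith_{x \in X_u} A_x$ and $B = \bigoplus_{v \in V} \bigwith_{y \in Y_v} B_y$. The hypothesis gives internal homs in $\cC_\oplus$, which I unfold as formal sums: $\iota_{\oplus}(A_x) \lin \iota_{\oplus}(B_y) = \bigoplus_{k \in K_{x,y}} H_{x,y,k}$ for some finite $K_{x,y}$ and $H_{x,y,k} \in \Obj(\cC)$. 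Since $\iota_{\oplus}$ is full, faithful, and satisfies $\iota_\oplus(C_z) \tensor \iota_\oplus(A_x) = \iota_\oplus(C_z \tensor A_x)$ (by the formula for the monoidal product in $\cC_\oplus$ applied to singleton-indexed sums), an application of Proposition~\ref{prop:yondistr} yields, for every $C_z \in \Obj(\cC)$,
\[ \Hom{\cC}{C_z \tensor A_x}{\,B_y} \;\cong\; \sum_{k \in K_{x,y}} \Hom{\cC}{C_z}{\,H_{x,y,k}}. \]

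The rest is a reverse-engineering computation. Expanding the left-hand side for an arbitrary $C = \bigoplus_{w \in W} \bigwith_{z \in Z_w} C_z$ using the formulas defining $\tensor$ and $\Hom{\cC_{\oplus\with}}{-}{-}$ gives
\[ \prod_{w} \prod_u \sum_v \prod_y \sum_z \sum_x \sum_{k} \Hom{\cC}{C_z}{\,H_{x,y,k}}. \]
The key algebraic move is then iterated application of the dependent Skolemization isomorphism $\prod_{i \in I} \sum_{j \in J_i} F_{i,j} \cong \sum_{s \in \prod_i J_i} \prod_i F_{i, s(i)}$: I push $\sum_v$ past $\prod_u$ (introducing $f \in V^U$ with $v = f(u)$), then $\sum_x$ past $\prod_y$ (introducing $g$ with $g(u,y) \in X_u$), then $\sum_k$ likewise (introducing $h$ with $h(u,y) \in K_{g(u,y),y}$). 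After merging the three new $\sum$s, the expression becomes
\[ \prod_w \sum_{(f,g,h)} \prod_{(u,y) \in \sum_u Y_{f(u)}} \sum_z \Hom{\cC}{C_z}{\,H_{g(u,y),\,y,\,h(u,y)}}, \]
which is literally the $\cC_{\oplus\with}$-homset with target
\[ A \lin B \;:=\; \bigoplus_{(f,g,h)} \bigwith_{(u,y) \in \sum_u Y_{f(u)}} H_{g(u,y),\,y,\,h(u,y)}. \]

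The main obstacle I expect is verifying naturality of the resulting bijection in the contravariant parameter $C$. Each step in the chain of bijections is natural by inspection — both Skolemization and the $\cC_\oplus$-adjunction preserve precomposition — so the composite is too, but checking this amounts to chasing the rather intricate composition law of $\cC_{\oplus\with}$ (Figure~\ref{fig:dialcomp}) through every layer of the unfolding. An arguably more palatable alternative is to transport $\id_{A \lin B}$ through the chain of bijections to extract an explicit evaluation morphism $\ev_{A,B} : (A \lin B) \tensor A \to B$ (a combinatorial datum in $\cC$, assembled from the $\cC_\oplus$-evaluations of the individual $\iota_\oplus(A_x) \lin \iota_\oplus(B_y)$) and then verify the universal property of Definition~\ref{def:monoidalclosure} directly; this swaps naturality bureaucracy for a careful but more concrete case analysis.
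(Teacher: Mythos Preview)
Your proposal is correct and amounts to reproving Theorem~\ref{thm:dial-haslin} (from which the corollary follows immediately via Lemma~\ref{lem:hom-reg-oplus}), by the same mechanism as the paper: unfold $\Hom{\cC_{\oplus\with}}{C \tensor A}{B}$ into a nested $\prod/\sum$ expression and rearrange it into the shape of $\Hom{\cC_{\oplus\with}}{C}{A \lin B}$. The difference is purely presentational. The paper keeps the computation at the level of $\cC_{\oplus\with}$-homsets, absorbing indices one at a time via Proposition~\ref{prop:yondistr} and the $(\heartsuit)$ step, and ends with the structured object $A \lin B = \bigwith_u \bigoplus_v \bigwith_y \bigoplus_x \iota_\with^\oplus(\iota_\oplus(A_x)\lin\iota_\oplus(B_y))$; you instead fully flatten to sets, apply iterated Skolemization, and read off the $\bigoplus\bigwith$-normal form of that same object. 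Your $(f,g,h)$-indexed formula is exactly what one gets by distributing the paper's nested $\with/\oplus$ using the law $\bigwith_i \bigoplus_j \cong \bigoplus_f \bigwith_i$ in $\cC_{\oplus\with}$, and your Skolemization moves are the set-level shadows of the paper's $(\dagger)$ and $(\heartsuit)$ steps. Both routes leave naturality in $C$ as the bookkeeping burden; the paper handles it in Appendix~\ref{sec:app-dial} via the lifting lemmas $\langle\with\rangle$, $\langle\oplus\rangle$, which is morally the same as your observation that each Skolemization step respects precomposition.
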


\begin{figure}
\[
\begin{array}{l}
\Hom{\cC_{\oplus\with}}{A \tensor B}{C} \\
\begin{array}{lcl@{\kern0em}c}
&=& {\displaystyle \varHom{\cC_{\oplus\with}}{\left[\bigoplus_{w \in W} \bigwith_{z \in Z_w} \iota_{\oplus\with}(A_{z})\right] \tensor \left[\bigoplus_{u \in U} \bigwith_{x \in X_u} \iota_{\oplus\with}(B_{x})\right]}{\bigoplus_{v \in V} \bigwith_{y \in Y_v} \iota_{\oplus\with}(C_{y})}}
\\
&=& {\displaystyle \varHom{\cC_{\oplus\with}}{\bigoplus_{(w,u) \in W \times U} \bigwith_{(z,x) \in Z_w \times X_u} \iota_{\oplus\with}(A_{z} \tensor B_{x})}{\bigoplus_{v \in V} \bigwith_{y \in Y_v} \iota_{\oplus\with}(C_{y})}}
\\
&\cong& {\displaystyle \prod_{w \in W} \prod_{u \in U} \sum_{v \in V} \prod_{y \in Y_v} \sum_{z \in Z_w} \sum_{x \in X_u} \varHom{\cC_{\oplus\with}}{\iota_{\oplus\with}(A_{z}) \tensor \iota_{\oplus\with}(B_{x})}{\iota_{\oplus\with}(C_{y})}} \\
&\cong& {\displaystyle \prod_{w \in W} \prod_{u \in U} \sum_{v \in V} \prod_{y \in Y_v} \sum_{z \in Z_w} \sum_{x \in X_u} \varHom{\cC_{\oplus\with}}{\iota_{\oplus\with}(A_{z})}{\iota_\with^\oplus(\iota_{\oplus\with}(B_{x}) \lin \iota_{\oplus\with}(C_{y}))}}
\\
&\cong& {\displaystyle \prod_{w \in W} \prod_{u \in U} \sum_{v \in V} \prod_{y \in Y_v} \sum_{z \in Z_w} \varHom{\cC_{\oplus\with}}{\iota_{\oplus\with}(A_{z})}{\bigoplus_{x \in X_u} \iota_{\with}^\oplus(\iota_{\oplus\with}(B_{x}) \lin \iota_{\oplus\with}(C_{y}))}}
&
\text{\raisebox{2em}{$(\dagger)$}}\\
&\cong& {\displaystyle \prod_{w \in W} \prod_{u \in U} \sum_{v \in V} \prod_{y \in Y_v} \varHom{\cC_{\oplus\with}}{\bigwith_{z \in Z_w} \iota_{\oplus\with}(A_{z})}{\bigoplus_{x \in X_u} \iota_{\with}^\oplus(\iota_{\oplus\with}(B_{x}) \lin \iota_{\oplus\with}(C_{y}))}}
&
\text{\raisebox{2em}{$(\heartsuit)$}}\\
&\cong& {\displaystyle \prod_{w \in W} \prod_{u \in U} \sum_{v \in V} \varHom{\cC_{\oplus\with}}{\bigwith_{z \in Z_w} \iota_{\oplus\with}(A_{z})}{\bigwith_{y \in Y_v} \bigoplus_{x \in X_u} \iota_{\with}^\oplus(\iota_{\oplus\with}(B_{x}) \lin \iota_{\oplus\with}(C_{y}))}}\\
&\cong& {\displaystyle \prod_{w \in W} \prod_{u \in U} \varHom{\cC_{\oplus\with}}{\bigwith_{z \in Z_w} \iota_{\oplus\with}(A_{z})}{\bigoplus_{v \in V} \bigwith_{y \in Y_v} \bigoplus_{x \in X_u} \iota_\with^\oplus(\iota_{\oplus\with}(B_{x}) \lin \iota_{\oplus\with}(C_{y}))}}
&
\text{\raisebox{2em}{$(\dagger)$}}\\
&\cong& {\displaystyle \prod_{w \in W} \varHom{\cC_{\oplus\with}}{\bigwith_{z \in Z_w} \iota_{\oplus\with}(A_{z})}{B\lin C}}\\
&\cong& \Hom{\cC_{\oplus\with}}{A}{B \lin C} \\
\end{array}
\\\\
\hline
\end{array}
\]
\caption{Monoidal closure of $\cC_{\oplus\with}$ (Theorem~\ref{thm:dial-haslin}).}
\label{fig:dial-haslin}
\end{figure}

Let us now sketch how the proof of Theorem~\ref{thm:dial-haslin} goes.
To this end, assume that we have
$A = \bigoplus_{u \in U} \bigwith_{x \in X_u} \iota_{\oplus\with}(A_{x})$ and
$B = \bigoplus_{v \in V} \bigwith_{y \in Y_v} \iota_{\oplus\with}(B_{y})$ be objects of $\cC_{\oplus\with}$
and assume that we have internal homsets $\iota_{\oplus\with}(A_{x}) \lin \iota_{\oplus\with}(B_{y})$
for every $(u,x) \in \sum_{u \in U} X_u$ and $(v, y) \in \sum_{v \in V} Y_v$.
Let
\[\iota_\with^\oplus : \cC_\oplus \to \cC_{\oplus\with} \quad\text{such that}\quad  \iota_{\oplus\with} = \iota_\with^\oplus \circ \iota_{\oplus} \]
be the full and faithful embedding be obtained by applying the universal
property of $\cC_\oplus$ to the coproduct-preserving functor
$\iota_{\oplus\with}$.
The linear arrow $A \lin B$ can then be defined as follows
\[A \lin B \quad = \quad \bigwith_{u \in U} \bigoplus_{v \in V} \bigwith_{y \in Y_v} \bigoplus_{x \in X_u} \iota_{\with}^\oplus(\iota_{\oplus}(A_{x}) \lin \iota_{\oplus}(B_{y}))\]
It is clear that this is meant to mimick the definition of $\Hom{\cC_{\oplus\with}}{-}{-}$
by substituting external hom, products and coproducts with internal ones, as is customary with
Dialectica. To show that $B \lin -$ is right adjoint to $- \tensor B$, one may directly
establish that we have an isomorphism of homsets
\[ \Hom{\cC_{\oplus\with}}{A \tensor B}{C} \cong \Hom{\cC_{\oplus\with}}{A}{B \lin C} \]
\emph{natural in $A$} using the recipe of Figure~\ref{fig:dial-haslin}.
Unmarked steps follow from definitions or standard abstract nonsense (characterization of
internal (co)products in terms of external ones).
The two $(\dagger)$ steps involve instances of the isomorphism
\[\sum_{i \in I} \varHom{\cC_{\oplus\with}}{\bigwith_{x \in X}\iota_{\oplus\with}(Z_x)}{D_i} ~~\cong~~
\varHom{\cC_{\oplus\with}}{\bigwith_{x \in X}\iota_{\oplus\with}(Z_x)}{\bigoplus_{i \in I} D_i}\]
which holds for any family $(Z_x)_{x \in X}$ of objects of $\cC$ and family $(D_i)_{i \in I}$ of objects of $\cC_{\oplus\with}$.
This is a formal computation corresponding to the tail end of Proposition~\ref{prop:yondistr},
which applies because of the equivalence $\cC_{\oplus\with} \cong \left(\cC_\with\right)_\oplus$.

The most subtle step, labelled with $(\heartsuit)$, necessitates a more detailed computation that relies
crucially on the fact that $\iota_{\with}^\oplus(\iota_{\oplus}(A_{x}) \lin \iota_{\oplus}(B_{y}))$ is
$\with$-free. This can be regarded as a generic isomorphism for arbitrary families
of objects $(E_u)_{u \in U}$ and $(Z_x)_{x \in X}$ of $\cC$
\[ \Hom{\cC_{\oplus\with}}{\bigwith_{x \in X} \iota_{\oplus\with}(Z_x)}{\bigoplus_{u \in U}\iota_{\oplus\with}(E_u)} \quad\cong\quad \sum_{x \in X} \Hom{\cC_{\oplus\with}}{Z_x}{\bigoplus_{u \in U} \iota_{\oplus\with}(E_u)} \]
which is easy to compute explicitly.

In addition to checking that there is a chain of isomorphisms as per
Figure~\ref{fig:dial-haslin}, it should also be checked that they are natural
in $A$. This requires identifying precisely what presheaves over $\cC_{\oplus\with}$
are involved at every step and arguing that the maps are indeed natural.
This is not necessarily so straightforward, so we offer a more detailed proof of
Theorem~\ref{thm:dial-haslin} in Appendix~\ref{sec:app-dial}.

\subsection{Proof of the main result on strings}
\label{subsec:main-string}

We can now give the proof of Theorem~\ref{thm:main-string}, which can be summarized as the equality
\[\text{$\laml$-definable} ~~=~~ \text{$\SR$-SSTs}\]
thanks to Fact~\ref{fact:register-sst-sst}. We start with the consequences of
our syntactic analysis
\[ \text{$\laml$-definable}
  \;\underset{\overset{\uparrow}{\text{Lemma~\ref{lem:lamlsst}}}}{=}\;
  \text{single-state $\mfLam$-SSTs}
  \;\underset{\overset{\uparrow}{\text{Lemma~\ref{lem:coprod-whatever}}}}{=}\;
  \text{$\mfLam$-SSTs}
\]
reducing our goal to
\[ \text{$\mfLam$-SSTs} ~~=~~ \text{$\SR$-SSTs} \]
For the above equality, the right-to-left inclusion is simpler than its
converse: the existence of a morphism of streaming settings from $\SR$ to
$\mfLam$ (Lemma~\ref{lem:register-to-laml}) entails that $\mfLam$-SSTs subsume
$\SR$-SSTs (by Lemma~\ref{lem:morph}). (Were we only interested only proving
that regular functions are $\laml$-definable, our setting would be a complete
overkill.)

On the other hand, the other direction mobilizes almost the whole development.
First, our semantic evaluation argument combines Lemma~\ref{lem:laml-initial}
with Corollary~\ref{cor:register-dial-closed} to get
\[ \text{$\mfLam$-SSTs} ~~\subseteq~~ \text{$\SR_{\oplus\with}$-SSTs} \]
We then finish proving Theorem~\ref{thm:main-string} with automata-theoretic
considerations:
\[ \text{$\SR_{\oplus\with}$-SSTs}
  \;\underset{\overset{\uparrow}{\text{Lemma~\ref{lem:sdmSST-oplus}}}}{=}\;
  \text{sdm-$\SR_{\with}$-SSTs}
  \;\underset{\overset{\uparrow}{\text{Theorem~\ref{thm:with-reg-conservativity}}}}{=}\;
  \text{sdm-$\SR$-SSTs}
\]

\[
  \text{sdm-$\SR$-SSTs}
  \;\underset{\overset{\uparrow}{\text{Corollary~\ref{cor:oplus-sst-reg-conservative}}}}{=}\;
  \text{$\SR_\oplus$-SSTs}
  \;\underset{\overset{\uparrow}{\text{Lemma~\ref{lem:coprod-whatever}}}}{=}\;
  \text{$\SR$-SSTs}
 \]

\section{Some transducer-theoretic applications of $\mfC$-SSTs and internal homsets}

This section is devoted to showing that the notion of monoidal closure can be used
to give a satisfying self-contained description of two important transformations of
usual streaming string transducers, both of which are not entirely straightforward:
the composition of two copyless SSTs and the uniformization of non-deterministic
copyless SSTs.

We take advantage of our abstract setting to give proofs for categorical generalizations of
those two statements, similarly in spirit to the generalized minimization argument
found in~\cite{ColcombetPetrisan}. The specialized theorem then follows easily from
previous theorems in our developments having to do with monoidal closure, especially
Theorem~\ref{thm:dial-haslin} and Lemma~\ref{lem:hom-reg-oplus}.

The ideas behind our arguments are not new; our main goal is to vindicate the view
that the notion of internal homset is the key to showing those results, even if it does
not appear explicitly in previous arguments.

\subsection{On closure under precomposition by regular functions}
\label{subsec:composition}

First, let us recall right off the bat that the closure under composition
of functions definable by copyless streaming string transducers follows
from our main result on strings (Theorem~\ref{thm:main-string}) together with basic considerations
on typed $\lambda$-calculi (see e.g.~\cite[Lemma~2.8]{aperiodic} that applies
\textit{mutatis mutandis} to the $\laml$-calculus) that entail the closure under
composition of $\laml$-definable functions. However, note that Theorem~\ref{thm:main-string}
relies on the syntactic analysis of Lemma~\ref{lem:laml-niceshape}, which is arguably a non-trivial
result about the $\laml$-calculus (as demonstrated by the size of
Appendices~\ref{sec:laml-normalization} and~\ref{sec:laml-niceshape}, which are
both necessary to prove it).

The argument we give in this section circumvents that difficulty and does not
appeal to Theorem~\ref{thm:main-string} nor to Lemma~\ref{lem:laml-niceshape}.
That said, it still shares some (non-syntactic) ingredients with our proof of
Theorem~\ref{thm:main-string}, namely:
\begin{itemize}
\item the monoidal closure and quasi-affineness (see below) of $\Sr_{\oplus\with}$;
\item the fact that $\SR_{\oplus\with}$-SSTs are no more expressive than $\SR$-SSTs.
\end{itemize}
These results still require substantial developments -- indeed, this composition
property is quite non-trivial as mentioned in the introduction -- but bypass the need of mentioning
the $\laml$-calculus.

Beyond this simplification, the main advantage of our approach here is that we
get a more general theorem, that applies to many streaming settings; in
particular, the final output does not have to be a string.
Without further ado, let us state it.

\begin{thm}
  \label{thm:precomp-string}
  Let $\mfC$ be a string streaming setting with output set $X$. Suppose that the
  underlying category $\cC$ is \emph{symmetric monoidal closed} and \emph{quasi-affine}.
  Furthermore, let us assume that $\initty_\mfC$ is equal to the monoidal unit
  $\unit$.
  
  Then for any $f : \Gamma^* \to X$ computed by some $\mfC$-SST, and any regular
  $g : \Sigma^* \to \Gamma^*$, the function $f \circ g : \Sigma^* \to X$ is
  computed by some (stateful) $\mfC$-SST.
  In other words, the class of functions defined by
  $\mfC$-SSTs is \emph{closed under precomposition by regular functions}.
\end{thm}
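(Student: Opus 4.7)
The plan is to explicitly construct a $\mfC$-SST $\cT'$ computing $f \circ g$ that simulates an $\SR(\Gamma)$-SST for $g$ (available by Fact~\ref{fact:register-sst-sst}) step by step while maintaining, in its memory, a state-indexed family of curried $\cC$-endomorphisms of $R_f$. Write $\cT = (Q, q_0, R_f, \delta_f, i_f, o_f)$ for the given $\mfC$-SST for $f$ and $\cG = (P, p_0, R_g, \delta_g, i_g, o_g)$ for the $\SR$-SST for $g$; set $\delta_f(b, q) = (\phi_b(q), \psi_b(q))$ and extend $\phi_w$, $\psi_w$ to $w \in \Gamma^*$ by iteration; for $w \in \Sigma^*$, write $c_w(r) \in \Gamma^*$ for the content of $\cG$-register $r$ after $\cG$ has read $w$. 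The invariant to maintain is that the memory of $\cT'$ after reading $w$ stores the curried morphisms $\Lambda'(\psi_{c_w(r)}(q))$, with $\Lambda'$ as in Proposition~\ref{prop:internal-endo-monoid}.

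The state space of $\cT'$ is $P \times (Q^Q)^{R_g}$: the first factor tracks $\cG$'s current state, while the second records, for each register $r \in R_g$, the state-function $\phi_{c_w(r)} : Q \to Q$ induced by its content. Both pieces are finite and updated when $\cT'$ reads $a \in \Sigma$ by propagating $\delta_g(a, p) = (p', t_a)$ through compositions of the (finitely many) fixed functions $\phi_b$. The memory $R'$ is a tensor product of copies of the object $N = (R_f \lin R_f) \with \unit$: the SMCC assumption supplies the internal homset $R_f \lin R_f$; Proposition~\ref{prop:internal-endo-monoid} endows it with an internal monoid structure (lifted to $N$ by Proposition~\ref{prop:monoid-withunit}) that will be used to compose stored morphisms inside $\cC$; quasi-affineness provides the $\with\unit$ summand, so that unused coordinates can be discarded via $\pi_2 : N \to \unit$.

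The update morphism $R' \to R'$ for reading $a \in \Sigma$ in state $(p, \phi)$ expresses each new coordinate at $(s, q)$ as the composition along the decomposition $t_a(s) = \alpha_0 r'_1 \alpha_1 \cdots \alpha_n$, built from the stored coordinates at $(r'_j, q_{2j-1})$ and the known constants $\Lambda'(\psi_{\alpha_j}(q_{2j}))$, assembled through the monoid multiplication on $N$. The \emph{main obstacle} is that the copylessness of $t_a$ only prevents each $g$-register $r \in R_g$ from appearing twice at the $\Gamma$-letter level, whereas the extra $Q$-indexing may force a single old coordinate $(r, q'')$ to be consumed by several new coordinates $(s, q)$, precisely when the function $q \mapsto q_{2j-1}(q)$ induced by $\phi$ and $t_a$ fails to be injective. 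Since $\cC$ offers no generic diagonals beyond the $\unit$-based weakening, a naive tensor indexing fails; I would refine the memory with additional finitary data (for instance, triples $(r, q_{\mathrm{start}}, q_{\mathrm{end}})$ or richer local state-sequences recorded in $\cT'$'s finite state), following the ``transformation forest'' construction used in the classical composition of copyless SSTs~\cite[Chapter~13]{Toolbox}, so that each old coordinate is consumed by at most one new coordinate and the rest are weakened away. Checking that the resulting update morphism is truly definable using only Proposition~\ref{prop:internal-endo-monoid}'s internal monoid operations, the SMCC evaluation/currying, and the quasi-affine weakening is the crux of the proof.

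Finally, the initialization $i' : \unit \to R'$ uses $i_g$ to precompute $\Lambda'(\psi_{c_\varepsilon(r)}(q))$ at each coordinate, freely tensored via $\unit \cong \unit^{\otimes k}$. The output morphism at state $(p, \phi)$ schedules the composition of the stored coordinates according to $o_g(p) \in [R_g \to_\Sr \{\bullet\}]$, evaluates the resulting composite at $i_f$ through $\ev : (R_f \lin R_f) \otimes R_f \to R_f$, and applies $o_f(\hat q) : R_f \to \retty_\mfC$ with $\hat q \in Q$ the final $\cT$-state determined by $\phi$ and $o_g(p)$. Correctness $\interp{\cT'} = f \circ g$ then follows by a routine induction on the input $w \in \Sigma^*$, using the invariant stated above.
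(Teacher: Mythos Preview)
Your approach diverges from the paper's in a way that creates exactly the obstacle you flag. The paper first reduces to the case where $f$ is computed by a \emph{single-state} $\mfC$-SST: any stateful $\mfC$-SST is a single-state $\mfC_{\oplus\mathrm{const}}$-SST (Remark~\ref{rem:sst-comparison}), and one checks that the symmetric monoidal structure and the products $-\with\unit$ lift to $\cC_{\oplus\mathrm{const}}$, so the hypotheses are preserved. With $Q$ gone, there is no $Q$-indexing of the memory and hence no duplication issue: the memory object is $F_{\delta_f}(R_g)$, where $F_{\delta_f} : \Sr(\Gamma) \to \cC$ is the strong monoidal functor supplied by Corollary~\ref{cor:sr-to-smcc} (built from the internal monoid $(A\lin A)\with\unit$ with $A$ the memory of the single-state SST for $f$). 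The transition of the new SST is then simply $F_{\delta_f}$ applied to the $\Sr$-register transition of $\cG$, and copylessness of the latter is precisely what ensures no duplication is needed. The correctness computation is a short chain of equalities using functoriality of $F_{\delta_f}$ and the explicit description of $F_{\delta_f}(\widehat{w})$ from Corollary~\ref{cor:sr-to-smcc}.

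Your direct attack on the stateful case forces the $R_g\times Q$-indexed memory, and you correctly observe that the update is not copyless in that indexing: for distinct target states $q$, the intermediate states $q_{2j-1}(q)$ may collide, so the same coordinate $(r',q')$ is needed twice. Quasi-affineness only gives weakening, not contraction, so this is a genuine gap. Your proposed remedy---enriching the finite state with transformation-forest-style bookkeeping---is plausible in spirit but amounts to redoing, inside this proof, a sharing analysis on a par with the determinization argument of \S\ref{subsec:uniformization}; nothing in your sketch shows that the resulting indexing makes the update copyless while remaining definable from the monoid operations on $(R_f\lin R_f)\with\unit$. The paper's reduction to the single-state case is exactly the device that dissolves this obstacle in one line, and its use of the prebuilt functor $F_{\delta_f}$ replaces your hand-assembled update morphism with a single functorial image of a copyless register transition.
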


Before proving the above theorem, let us check that it entails known
preservation and composition properties.

\begin{cor}
  Let $L \subseteq \Gamma^*$ be a regular language and $g : \Sigma^* \to
  \Gamma^*$ be a regular function. Then the language $g^{-1}(L) \subseteq
  \Sigma^*$ is regular.
\end{cor}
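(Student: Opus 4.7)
The plan is to specialize Theorem~\ref{thm:precomp-string} to the streaming setting $\mathfrak{Finset}_2 = (\Finset, \{\bullet\}, \{0,1\}, \curlyinterp{-})$ from Example~\ref{exa:finset}, so that the corresponding class of computed functions $\Gamma^* \to \{0,1\}$ consists exactly of indicator functions of regular languages. Since $g^{-1}(L)$ is regular if and only if its indicator function $\chi_{g^{-1}(L)} = \chi_L \circ g$ is computed by such an SST, the closure under precomposition by $g$ given by Theorem~\ref{thm:precomp-string} will immediately conclude.

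First, I need to verify the hypotheses of Theorem~\ref{thm:precomp-string} for $\mathfrak{Finset}_2$. The underlying category $\Finset$ is cartesian closed (with the usual cartesian product as monoidal structure and exponential objects $B^A$ as internal homsets), hence in particular symmetric monoidal closed. The monoidal unit is the terminal object $\{\bullet\}$, which coincides with $\initty_{\mathfrak{Finset}_2}$. Finally, cartesian closedness trivially implies quasi-affineness: the product $A \with \{\bullet\}$ exists for every $A \in \Obj(\Finset)$ and is canonically isomorphic to $A$ itself.

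Next, the indicator function $\chi_L : \Gamma^* \to \{0,1\}$ is computed by a (single-state) $\mathfrak{Finset}_2$-SST: this is the content of Example~\ref{exa:finset}, up to the observation that a DFA for $L$ yields directly such an SST (with register set equal to the state set of the DFA, the initial map picking the initial state, and the output map being the indicator of the accepting subset). Applying Theorem~\ref{thm:precomp-string} with this choice of $\mfC$, $X = \{0,1\}$, $f = \chi_L$, and the regular function $g$, we obtain a $\mathfrak{Finset}_2$-SST computing $\chi_L \circ g = \chi_{g^{-1}(L)}$.

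It only remains to observe that every function $\Sigma^* \to \{0,1\}$ computed by a (possibly stateful) $\mathfrak{Finset}_2$-SST is itself the indicator of a regular language. This follows either from the standard observation that such an SST, with finite state set $Q$ and finite register object $R$, can be simulated by a DFA on the finite state space $Q \times R$; or, more abstractly, from Lemma~\ref{lem:coprod-whatever} applied to $\mathfrak{Finset}_2$, since $\Finset$ has all finite coproducts and hence stateful $\mathfrak{Finset}_2$-SSTs collapse to single-state ones, which are exactly DFAs by Example~\ref{exa:finset}. There is no substantial obstacle here, as every ingredient is either a routine check on $\Finset$ or an invocation of a prior result; the only mildly delicate point is the last simulation step, which ensures that stateful $\mathfrak{Finset}_2$-SSTs do not accidentally exceed regularity.
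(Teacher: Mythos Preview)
Your proof is correct and follows essentially the same route as the paper: specialize Theorem~\ref{thm:precomp-string} to $\mathfrak{Finset}_2$, check cartesian closure and that $\initty$ is the monoidal unit, apply the theorem to $\chi_L$, and then collapse the resulting stateful SST to a single-state one via Lemma~\ref{lem:coprod-whatever}. The only cosmetic difference is that the paper notes $\Finset$ is \emph{affine} (the cartesian unit being terminal), whereas you check the weaker quasi-affineness actually required by the theorem; both are fine.
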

\begin{proof}
  That $L$ is regular is equivalent to its indicator function $\chi_L : \Gamma^*
  \to \{0,1\}$ being computed by some single-state $\mathfrak{Finset}_2$-SST,
  see Example~\ref{exa:finset}. The underlying category of finite sets is
  \emph{cartesian closed}, and the monoidal structure given by a cartesian
  product is automatically symmetric and affine. According to
  Theorem~\ref{thm:precomp-string}, $\chi_L \circ g$ can therefore be computed
  by some $\mathfrak{Finset}_2$-SST. Observing that the category of finite sets
  has coproducts, and applying Lemma~\ref{lem:coprod-whatever}, we even have a
  \emph{single-state} $\mathfrak{Finset}_2$-SST for $\chi_L \circ g$. Finally,
  the latter is none other than the indicator function of $g^{-1}(L)$.
\end{proof}

\begin{cor}
  Let $f : \Gamma^* \to \Delta^*$ and $g : \Sigma^* \to \Gamma^*$ be regular
  functions. Then $f \circ g$ is also a regular function.
\end{cor}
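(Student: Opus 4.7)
The plan is to deduce this closure-under-composition result from Theorem~\ref{thm:precomp-string} applied with a well-chosen streaming setting $\mfC$. The naive attempt would be to take $\mfC = \SR(\Delta)$: since by Fact~\ref{fact:register-sst-sst} regular string functions are exactly those computed by $\SR$-SSTs, writing $f$ as an $\SR(\Delta)$-SST and precomposing by $g$ should yield an $\SR(\Delta)$-SST for $f \circ g$. Unfortunately, $\SR(\Delta)$ is \emph{not} monoidal closed, so Theorem~\ref{thm:precomp-string} does not apply directly. This is precisely the obstacle that forces the detour through $\SR_{\oplus\with}$ advertised in Section~\ref{sec:intro-cat}, and will be the main (conceptual) hurdle.

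So instead, I will take $\mfC = \SR(\Delta)_{\oplus\with}$. First, I need to check that this setting satisfies the hypotheses of Theorem~\ref{thm:precomp-string}: monoidal closure is exactly Corollary~\ref{cor:register-dial-closed}, quasi-affineness follows from the fact that $\SR_{\oplus\with}$ has all finite products (being isomorphic to $(\SR_\with)_\oplus$, which has products inherited from the coproducts of $\SR_\oplus$ via the involution, plus the free $\with$-completion), and the equality $\initty_{\SR_{\oplus\with}} = \iota_{\oplus\with}(\varnothing) = \unit$ of $\SR_{\oplus\with}$ is immediate from the construction of the monoidal product on $\cC_{\oplus\with}$.

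Next, I will transport an $\SR$-SST computing $f$ into an $\SR_{\oplus\with}$-SST via the morphism of streaming settings induced by the embedding $\iota_{\oplus\with}: \SR \to \SR_{\oplus\with}$, applying Lemma~\ref{lem:morph}. Theorem~\ref{thm:precomp-string} then produces an $\SR(\Delta)_{\oplus\with}$-SST that computes $f \circ g$.

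It remains to convert this $\SR_{\oplus\with}$-SST back to an ordinary $\SR$-SST, which delivers regularity of $f \circ g$ by Fact~\ref{fact:register-sst-sst}. This is precisely the chain of reductions already assembled at the end of Section~\ref{subsec:main-string}: by Lemma~\ref{lem:sdmSST-oplus} an $\SR_{\oplus\with}$-SST is equivalent to a sdm-$\SR_\with$-SST; by Theorem~\ref{thm:with-reg-conservativity} every sdm-$\SR_\with$-SST is subsumed by a sdm-$\SR$-SST; and by Lemma~\ref{lem:sdmSST-oplus} together with Corollary~\ref{cor:oplus-sst-reg-conservative} and Lemma~\ref{lem:coprod-whatever}, sdm-$\SR$-SSTs compute exactly the same functions as ordinary $\SR$-SSTs. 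Composing these conversions yields the desired $\SR(\Delta)$-SST for $f \circ g$, concluding the proof.
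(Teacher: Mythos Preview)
Your proof is correct and follows essentially the same route as the paper: apply Theorem~\ref{thm:precomp-string} with $\mfC = \SR_{\oplus\with}$, checking monoidal closure via Corollary~\ref{cor:register-dial-closed} and quasi-affineness via the existence of all finite products, and then use the equivalence between $\SR_{\oplus\with}$-SSTs and $\SR$-SSTs established at the end of \Cref{subsec:main-string}. The paper simply compresses your explicit detour (embedding via $\iota_{\oplus\with}$, then the chain Lemma~\ref{lem:sdmSST-oplus} / Theorem~\ref{thm:with-reg-conservativity} / Corollary~\ref{cor:oplus-sst-reg-conservative}) into a single reference to that equivalence.
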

\begin{proof}
  This is just the application of Theorem~\ref{thm:precomp-string} to
  $\SR_{\oplus\with}$-SSTs -- indeed, we saw at the very end of
  \Cref{sec:strings} that the functions computed by $\SR_{\oplus\with}$-SSTs are
  exactly the regular functions. By Theorem~\ref{thm:smcc-string} /
  Corollary~\ref{cor:register-dial-closed}, the underlying category
  $\Sr_{\oplus\with}$ is symmetric monoidal closed. Finally, $\Sr_{\oplus\with}$
  is quasi-affine since it has all cartesian products by construction.
\end{proof}

We now come to the proof of this generalized preservation theorem.

\begin{proof}[Proof of Theorem~\ref{thm:precomp-string}]
  We give below a proof assuming that $f$ is defined by some \emph{single-state}
  $\mfC$-SST (but beware: the $\mfC$-SST computing $f \circ g$ will still be
  stateful!).  The general case can be applied by considering the streaming
  setting $\mfC_{\oplus\mathrm{const}}$-SST, as was briefly mentioned in
  Remark~\ref{rem:sst-comparison}, and using the fact that single-state
  $\mfC_{\oplus\mathrm{const}}$-SSTs, stateful
  $\mfC_{\oplus\mathrm{const}}$-SSTs and stateful $\mfC$-SSTs are equally
  expressive. We leave it to the reader to check that the symmetric monoidal
  structure and the cartesian products in $\cC$ can be lifted to
  $\cC_{\oplus\mathrm{const}}$, making this generalization possible.

  Therefore, we may assume without loss of generality that $f$ is computed by a single-state $\mfC$-SST
 $T_f = (\{\bullet\}, A_f,\delta_f,i_f,o_f)$ where $A_f$ is an object of $\cC$ and
  \[ \delta_f : \Gamma \to \Hom{\cC}{A_f}{A_f} \qquad i_f \in
    \Hom{\cC}{\initty}{A_f} \qquad o_f \in \Hom{\cC}{A_f}{\retty} \]
  Let $T_g = (Q, q_0, R_g, \delta_g, i_g, o_g)$ be an usual copyless SST (i.e., a $\Sr$-SST)
  computing the regular function $g$, where $Q$ and $R_g$ are finite sets and
  \[ q_0 \in Q \qquad \delta_g : \Sigma \times Q \to Q \times [R_g \to_\Sr
    R_g] \]
  \[ i_g \in [\varnothing \to_\Sr R_g] \qquad o_g : Q \to [R_g
    \to_\Sr \{\bullet\}] \]
  We will write $A = A_f$ and $R = R_g$ for short.

  We want to build from this data a $\mfC$-SST $T$ defining $f \circ g$. Since
  $\cC$ is quasi-affine and symmetric monoidal closed, we can apply
  Corollary~\ref{cor:sr-to-smcc} to the object $A \lin A$ and to a family of
  morphisms $(\widetilde{\delta_f}(c))_{c \in \Gamma} \in \Hom{\cC}{A \lin
    A}{\;A \lin A}^\Gamma$ that will be defined later. This gives us a functor
  $F_{\delta_f} : \Sr \to \cC$, enjoying various properties that will be
  progressively recalled, which is at the heart of our construction.

  The set of states of our new $\mfC$-SST $T$ is $Q$, with initial state $q_0$,
  and its memory object is $F_{\delta_f}(R)$. The initialization morphism is
  defined as $i = F_{\delta_f}(i_g) \in \Hom{\cC}{\unit}{F_{\delta_f}(R)}$ -- we
  use the assumption $\initty = \unit$, and the fact that
  $F_{\delta_f}(\varnothing) = \unit$ (by Corollary~\ref{cor:sr-to-smcc}). The
  transition function is
\[
\begin{array}{c@{}c@{}c@{}c@{}c@{}c@{}c@{}c}
\delta : 
& \quad\Sigma & \;\times
& \;Q & \quad\longrightarrow\quad
& Q &\times
& \quad\Hom{\cC}{F_{\delta_f}(R)}{F_{\delta_f}(R)}
\\ \\
 &\quad c & \;, & q &\quad\mapsto\quad &\quad\pi_1(\delta_g(c,q)) &,& F_{\delta_f}(\pi_2(\delta_g(c,q)))
\end{array}
\]
  Finally, using $j_f : F_{\delta_f}(\{\bullet\}) \to A$ to be defined later,
  we take as our new output function
  \[ o : q \in Q \mapsto o_f \circ j_f \circ F_{\delta_f}(o_g(q))
    \in \Hom{\cC}{F_{\delta_f}(R)}{\Bot} \]

  Let us now sketch the verification that this defines the intended function $f
  \circ g : \Sigma^* \to X$. In the process, we will fill the missing
  definitions to make everything work out.

  Let $w = w_1 \ldots w_n \in \Sigma^*$ be an input string. The sequence
  $q_0,\ldots,q_n \in Q$ of states visited by both $T_g$ and $T$ when fed this
  input is obtained by $q_{i+1} = \pi_1(\delta_g(w_i,q_i))$ from the initial
  $q_0$. By definition of the output of $T_g$, we have:
  \[ \widehat{g(w)} \quad=\quad o_g(q_n) \circ \pi_2(\delta_g(w_n,q_{n-1}))
    \circ \dots \circ \pi_2(\delta_g(w_1,q_{0})) \circ i_g \]
  where $\widehat{g(w)} \in [\varnothing \to_{\Sr(\Gamma)} \{\bullet\}]$
  corresponds to $g(w) \in \Gamma^*$ (cf.\ Theorem~\ref{thm:functor-from-sr})
  and the `$\circ$' denotes a composition of register transitions (i.e.\ of
  $\Sr$-morphisms). Similarly, the output $T(w)$ of the $\mfC$-SST $T$ that we
  built on the input $w$ is defined as
  \[ T(w) \quad=\quad \curlyinterp{o(q_n) \circ
      F_{\delta_f}(\pi_2(\delta_g(w_n,q_{n-1}))) \circ \dots \circ
      F_{\delta_f}(\pi_2(\delta_g(w_1,q_{0}))) \circ i} \]
  which, by unfolding the definitions of $o$ and $i$, applying the functoriality
  of $F_{\delta_f}$ and comparing with the previous equality, one can simplify
  into
  \[ T(w) \quad=\quad \curlyinterp{o_f \circ j_f \circ
      F_{\delta_f}\!\left(\widehat{g(w)}\right)}
  \]

  It is now time to define $j_f \in \Hom{\cC}{F_{\delta_f}(\{\bullet\})}{A}$.
  To do so, let us first introduce
  \[ \mathsf{appto}(\varphi) \;:\; B \lin C \xrightarrow{\;\sim\;} (B \lin C)
    \tensor \unit \xrightarrow{\;\id \otimes \varphi\;} (B \lin C) \tensor B
    \xrightarrow{\;\ev\;} C \]
  where the last arrow is the evaluation map $\ev_{B,C}$, for any
  $B,C\in\Obj(\cC)$ and $\varphi : \unit \to B$. An useful property, whose
  verification we leave to the reader, is
  \[ \mathsf{appto}(\varphi) \circ \Lambda'(\psi) \quad=\quad \psi \circ \varphi
    \qquad\text{for any}\ \psi : B \to C\]
  where $\Lambda' : \Hom{\cC}{B}{C} \xrightarrow{\sim} \Hom{\cC}{\unit}{\,B \lin
    C}$ is defined in Proposition~\ref{prop:internal-endo-monoid}. We then take
  \[ j_f \;:\; F_{\delta_f}(\{\bullet\}) \xrightarrow{\;\pi_1\;} (A \lin A) \lin
    (A \lin A) \xrightarrow{\;\mathsf{appto}(\Lambda'(\id_A))} A \lin A
    \xrightarrow{\;\mathsf{appto}(i_f)\;} A \]
  where $\pi_1$ is the left projection from $F_{\delta_f}(\{\bullet\}) = ((A
  \lin A) \lin (A \lin A)) \with\unit$ (this equality is guaranteed by
  Corollary~\ref{cor:sr-to-smcc}) and $i_f$ is the initialization morphism of
  $T_f$. Using the equation
  \[ F_{\delta_f}\!\left( \widehat{g(w)} \right) \quad=\quad
    \tuple{\Lambda'\left(\widetilde{\delta_f}(g(w)_1) \circ \dots
        \circ \widetilde{\delta_f}(g(w)_m)\right),\;\id_\unit}\quad\text{where}\
    m = \len{g(w)} \]
  coming from Corollary~\ref{cor:sr-to-smcc}, we then have
  \[ j_f \circ F_{\delta_f}\!\left( \widehat{g(w)} \right) \quad=\quad
    \mathsf{appto}(i_f) \circ \widetilde{\delta_f}(g(w)_1) \circ \dots
    \circ \widetilde{\delta_f}(g(w)_m) \circ \Lambda'(\id_A)\]
  
  Next, we define $\widetilde{\delta_f}(c) = \Lambda(\ev_{A,A} \circ (\id_{A
    \lin A} \otimes \delta_f(c))) \in \Hom{\cC}{A \lin A}{\;A \lin A}$ for $c
  \in \Gamma$. In other words, $\widetilde{\delta_f}(c)$ is the curryfication of
  \[ (A \lin A) \tensor A \xrightarrow{\;\id \otimes \delta_f(c)\;} (A \lin A)
    \tensor A \xrightarrow{\;\ev\;} A \]
  One can then check that $\widetilde{\delta_f}(c) \circ \Lambda'(\psi) =
  \Lambda'(\psi\circ(\delta_f(c)))$ for any $\psi : A \to A$. Putting everything
  together, we finally have
  \begin{align*}
    T(w) &= \curlyinterp{o_f \circ \mathsf{appto}(i_f) \circ \widetilde{\delta_f}(g(w)_1) \circ \dots \circ \widetilde{\delta_f}(g(w)_m) \circ \Lambda'(\id_A)}\\
         &= \curlyinterp{o_f \circ \mathsf{appto}(i_f) \circ \Lambda'(\delta_f(g(w)_m) \circ \dots \circ \delta_f(g(w)_1))}\\
         &= \curlyinterp{o_f \circ \delta_f(g(w)_m) \circ \dots \circ \delta_f(g(w)_1) \circ i_f}
  \end{align*}
  and this final expression is precisely the definition of the output of $T_f$
  on $g(w)$. Since $T_f$ computes $f$, we end up with $T(w) = f(g(w))$, as we
  wanted.
\end{proof}

To conclude this section, let us note that an analogous result
for precomposition by regular tree functions can be shown by leveraging
the results of Section~\ref{sec:trees}; we leave it as an exercise. An important
subtlety: since the presence of the additive conjunction is important to compute
regular tree functions (as we stressed in the introduction), one must consider
tree streaming settings whose underlying categories \emph{have finite cartesian
  products} (which entails quasi-affineness).

\subsection{Uniformization through monoidal closure}
\label{subsec:uniformization}

We recall below the categorical uniformization theorem that we used in
\Cref{subsec:with-string} and provide its proof.
\myuniformization*

Recall that according to Lemma~\ref{lem:with-uniformization}, the conclusion
amounts to saying that non-deterministic sdm-$\mfC$-SSTs are uniformizable by
partial sdm-$\mfD$-SSTs.

\begin{proof}We do not prove the statement in excruciating details, but provide key formal definitions
so that a reader familiar with a modicum of automata theory and category theory should be able
to reconstitute a fully formal argument with ease. Let us stress once again that
all of the combinatorics may be regarded as adaptation of known arguments.

The argument is based on a
notion of \emph{transformation forest}, a name that we borrow from~\cite[Chapter~13]{Toolbox} for an extremely
similar concept\footnote{There are two formal differences between our notions, which
are not very big but worth mentioning for readers of~\cite{Toolbox}.
First, edges of a transformation forest are intended to be associated
with (elements of) a monoid, while ours should be associated with (``elements of'') internal homsets
$F(C_u) \lin F(C_v)$. Were we trying to prove $\mfD$-uniformization for $\mfC$-SSTs, we
would have necessarily $C_u = C_v$ and the aforementioned object would have a monoid structure
internal to $\cD$, so this distinction is more an artefact of our settings rather than an essential one.
Second, what~\cite{Toolbox} calls transformation forests refers to a class of forest with ``no junk'', such as
dangling leaves not referring to an intended output or spurious internal nodes, while we allow those in
an initial definition; we add the adjective ``normalized'' for those containing ``no junk'' as we shall
see later, so this is merely a terminological detail.}.
This gadget is also reminiscent of trees used in determinization procedures like
the Muller-Schupp construction for automata over
$\omega$-words~\cite{muller_schupp} (another exposition can be found in~\cite[Chapter~1]{Toolbox}),
and of the sharing techniques used in the original paper on SSTs~\cite[\S5.2]{SST}.
In determinization procedures, this constitutes an elaboration of powerset constructions
recalling not only reachable states, but also crucial information on \emph{how} those states are reached.
Here, the vertices $v$ of such forests will be labelled by objects $C_v$ of $\cC$ and each edge $(u,v)$
will be correspond to a ``register containing a value of type $F(C_u) \lin F(C_v)$''.

We decompose this proof sketch in three parts: first, we introduce transformation forests,
their semantic interpretation as families of maps in $\cD$; we explain how they may be composed
and that maps in $\cC_\with$ may be regarded as depth-1 transformation forests.
Then, we explain how to reduce the size of transformation forests in a sound way (this is the crucial
part ensuring that the resulting sdm-$\mfD$-SSTs will have finitely many states). Finally, we explain how
to put all of this together to uniformize sdm-$\mfC$-SSTs.

\subsubsection{Transformation forests and their semantics}
A transformation forest is defined as a tuple $\cF = (V, E, O, (C_v)_{v \in V})$ where
\begin{itemize}
\item $V$ is a non-empty finite set of vertices
\item $E \subseteq V^2$ is a set of directed edges, pointing from parents to children
\item $O$ is a non-empty subset of $V$
which we cal the set of \emph{output nodes}
\item every $C_v$ is an object of $\cC$
\end{itemize}

When a transformation forest $\cF$ is fixed, we call $I_\cF$ its set of roots (which we may sometimes \emph{input nodes}; we drop the subscript
when there is no ambiguity).
Given a transformation forest $\cF = (V, E, (v_o)_{o \in O}, (C_v)_{v\in V})$, we assign the following object of $\cD$:\[\Ty(\cF) \quad = \quad \bigtensor_{(u,v) \in E} F(C_u) \lin F(C_v)\]
An example of a transformation forest $\cF$ and a computation of its type $\Ty(\cF)$ is pictured in Figure~\ref{fig:trans-forest-ex}.

\begin{figure}
\center
\includegraphics{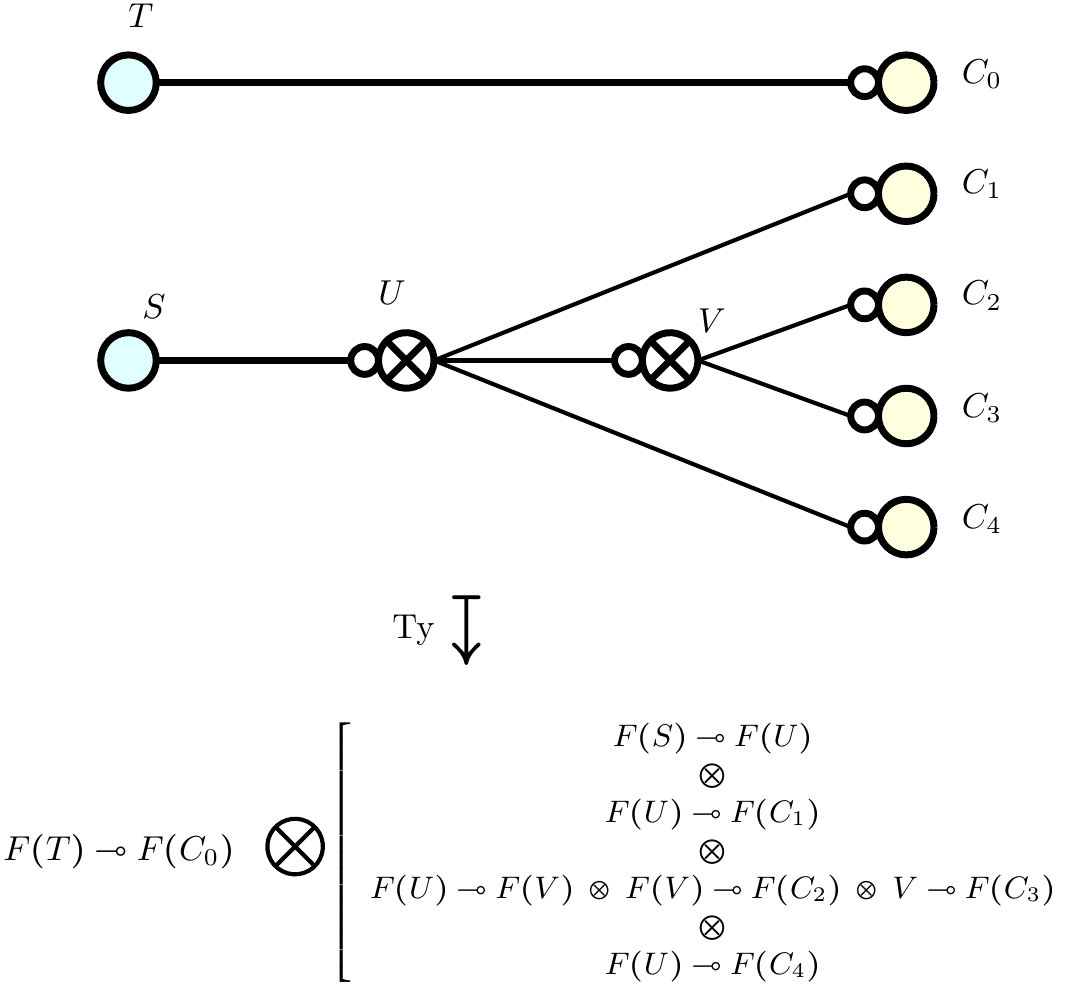}
\vspace{2em}
\hrule
\caption{A transformation forest $\cF : T \with S \to \displaystyle\bigwith_{i =
    0}^4 C_i$\; ($S,T,U,V,C_0,\ldots,C_4 \in \Obj(\cC)$).}
\label{fig:trans-forest-ex}
\end{figure}

To guide the intuition, one may note that there is an embedding\footnote{This becomes an isomorphism when $\Hom{\cD}{\top}{A \tensor B} \cong \Hom{\cD}{\top}{A} \times \Hom{\cD}{\top}{B}$;
this is the case for our intended application $\cD = \Sr_\oplus$.}
of a set of suitable labellings of the forest $\cF$ into $\Hom{\cD}{\top}{\Ty(\cF)}$.
\[
\xymatrix@C=15mm{
\prod_{(u,v) \in E} \Hom{\cD}{F(C_u)}{F(C_v)} \quad \ar[r] & \quad \Hom{\cD}{\top}{\Ty(\cF)}}
\]

We will now call abusively the input of $\cF$ the object $A = \bigwith_{i \in I} C_i$ and the output $B = \bigwith_{o \in O} C_o$; we write $\cF : A \to B$ in the sequel.
The justification for this notation is that there is a family of maps
\[\interp{\cF} \; \in \; \prod_{o \in O} \sum_{i \in I} \Hom{\cD}{\Ty(\cF)}{F(C_i) \lin F(C_{o})}\]
obtained by internalizing the composition of morphisms along branches of $\cF$, which we call the \emph{semantics of $\cF$}.

We also note that this allow to interpret arbitrary $\cC_\with$ morphisms: such a morphism $f : A \to B$ can be mapped to a pair $(\forestify{f}, \forestifymorph{f})$
consisting of
\begin{itemize}
\item a depth-1 transformation forest $\forestify{f} : \bigwith_{i \in I} C_i \to \bigwith_{o \in O} C_o$ (an example is depicted in Figure~\ref{fig:trans-forest-ex1})
\item a morphism $\forestifymorph{f} \in \Hom{\cD}{\top}{\Ty(\cF)}$
\end{itemize}
so that, if $f = (i_o, \alpha_o)_{o \in O}$, we have $\interp{\forestify{f}}_o \circ \forestifymorph{f} = (i_o, \Lambda(\alpha_o \circ \rho^{-1}))$.

\begin{figure}
\center
\includegraphics{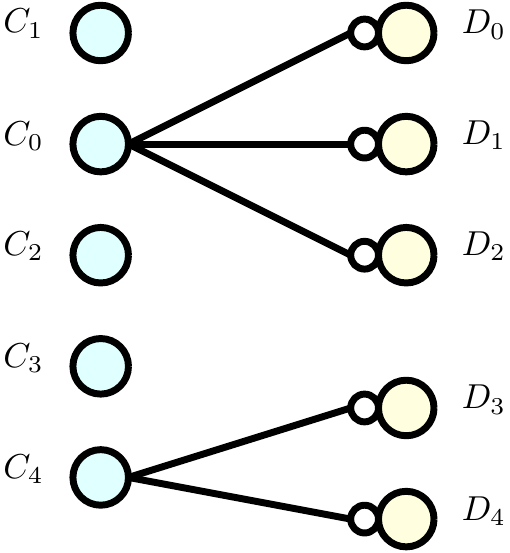}
\vspace{2em}
\hrule
\caption{A depth-$1$ transformation forest.}
\label{fig:trans-forest-ex1}
\end{figure}

Given two forests $\cF : A \to B$ and $\cF' : B \to C$, there is a composition $\cF' \circ \cF$
obtained by gluing the input nodes of $\cF'$ along the output nodes of $\cF$.
A crucial point is that the semantics of the composite $\interp{-}$ may be computed as follows\[\interp{\cF' \circ \cF}(o) \; = \; \left( \interp{\cF}_1(\interp{\cF'}_1(o')), \interp{\cF'}_2(o') \circ \interp{\cF}_2(\interp{\cF'}_1)\right)\]

\subsubsection{Reducing transformation forests}
We now introduce two elementary transformations $\cF \mapsto \cF'$ over transformation forests, together
with associated morphisms $\Ty(\cF) \to \Ty(\cF')$:
\begin{itemize}
\item {\bf Pruning}: if $v \in V_\cF$ is a leaf which is not an output node in the forest $\cF : A \to B$, call $\prune(\cF, v) : A \to B$ the forest obtained by
removing $v$ and adjacent edges from $\cF$.

This induces a canonical map
\[\prunemap(v) : \Ty(\cF) \to \Ty(\prune(\cF, v))\]
by using the weakening maps on the components corresponding to the deleted edges.
\item {\bf Contraction}: If there is a vertex $v$ with a unique child $c$ and a (unique) parent $p$ in the forest $\cF : A \to B$, call $\contract(\cF,v) : A \to B$
the forest obtained by replacing the edges $(p,v)$ and $(v,c)$ with a single edge $(p, c)$ and removing $v$.

There is a canonical map
\[\contractmap(v) : \Ty(\cF) \to \Ty(\contract(\cF, v))\]
induced by the internal composition map
\[\Hom{\cD}{[F(C_p) \lin F(C_v)] \tensor [F(C_v) \lin F(C_c)]}{F(C_p) \lin F(C_c)}\]
\end{itemize}

The auxiliary maps $\prunemap$ and $\contractmap$ operations are compatible with the semantic map $\interp{\cdot}$ in the sense
that for every $o \in O$ and suitable vertices $u,v$ of $\cF$ , we have 
\[i = \pi_1(\interp{\cF}(o)) = \pi_1(\interp{\prune(\cF,u)}(o)) = \pi_1(\interp{\contract(\cF,v)}(o))\]
and the following diagrams commute in $\cD$
\[
\xymatrix@C=15mm{
\Ty(\cF) \ar[r]^{\interp{\cF}(o)} \ar[d]_{\prunemap(v)} & F(C_i) \lin F(C_o) \\
\Ty(\prune(\cF,v)) \ar[ur]_{\qquad \interp{\prune(\cF,v)}(o)}
}
\xymatrix@C=15mm{
\Ty(\cF) \ar[r]^{\interp{\cF}(o)} \ar[d]_{\contractmap(v)} & F(C_i) \lin F(C_o) \\
\Ty(\contract(\cF,v)) \ar[ur]_{\qquad \interp{\contract(\cF,v)}(o)}
}
\]

Consider the rewrite system $\leadsto$ over forests $\cF : A \to B$ induced by those two operations, and call $\leadsto^=$
its reflexive closure and $\leadsto^*$ the transitive closure of $\leadsto^=$.
It can be shown that the reflexive closure $\leadsto^=$ satisfies the \emph{diamond lemma}, i.e.,
that for every $\cF, \cF'$ and $\cF''$ such that $\cF \leadsto^= \cF'$ and $\cF \leadsto^= \cF''$, there exists $\cF'''$ such that
$\cF' \leadsto^= \cF'''$ and $\cF'' \leadsto^= \cF'''$. This ensures that $\leadsto$ is confluent.
Furthermore, given a rewrite $\cF \leadsto^* \cF'$, there is a map $\Ty(\cF) \to \Ty(\cF')$
obtained by composing maps $\prunemap(v)$ and $\contractmap(v)$
(and identities for trivial rewrites $\cF \leadsto^* \cF$).
It can be shown that this map does \emph{not} depend on the rewrite path.
This is done first by arguing that if we have a rewrite square for $\leadsto^=$,
the associated diagram in $\cD$ is commutative, and then proceed by induction over the
rewrite paths using the diamond lemma.
\[
\xymatrix@C=20mm{
\cF \ar@{~>}[r] \ar@{~>}[d] & \cF' \ar@{~>}[d] \\
\cF'' \ar@{~>}[r] & \cF''' \\
}
\qquad\qquad
\begin{array}{c}
\\
\\
\\
\longmapsto
\end{array}
\qquad\qquad
\xymatrix@C=20mm{
\Ty(\cF) \ar@{->}[r] \ar@{->}[d] & \Ty(\cF') \ar@{->}[d] \\
\Ty(\cF'') \ar@{->}[r] & \Ty(\cF''') \\
}
\]

Defining the size of a forest as its number of vertices, it is clear that $\prune(\cF,u)$ and $\contract(\cF,v)$
have size strictly less than $\cF$, so the rewrite system is also terminating.
With confluence, it means that for every forest $\cF : A \to B$, there is a unique forest $\normal(\cF) : A \to B$
such that $\cF \leadsto^* \normal(\cF)$ and there is no $\cF'$ such that $\normal(\cF) \leadsto \cF'$.
We call $\normal(\cF)$ the \emph{normal form} of $\cF$; a forest $\cF$ is called \emph{normal} if $\normal(\cF) = \cF$.
By the discussion above, there are canonical maps $\normalmap_\cF : \Ty(\cF) \to \Ty(\normal(\cF))$ coming from rewrites.

\begin{figure}
\center
\includegraphics[scale=0.75]{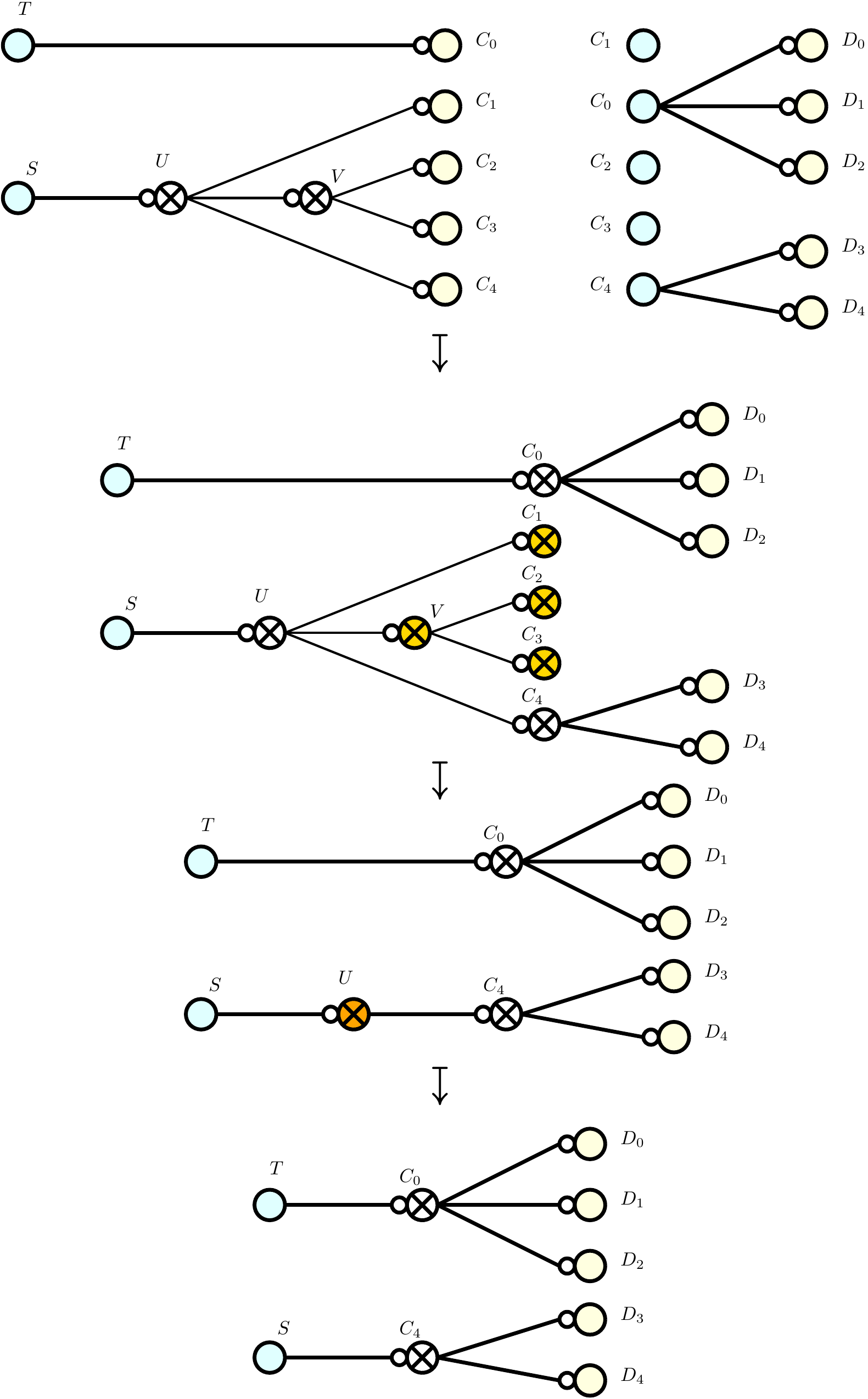}

\vspace{2em}
\hrule
\caption{An example of composing normal transformation forests, where first the usual composition and then two big steps of reduction, corresponding respectively to pruning and contracting, are carried out in succession.}
\label{fig:trans-forest-compose-ex}
\end{figure}

The last important thing to note is that, if the input $A = \bigwith_{i \in I}
C_i$ and output $B = \bigwith_{o \in O} C_o$ are fixed, up to isomorphism,
there are finitely many normal forests $\cF : A \to B$ as their size is bounded by $2(\card{I} + \card{O})$.
We write $\normalforests{A}{B}$ for a finite set of representative for all normal forests $A \to B$, and given $\cF \in \normalforests{B}{C}$ and $\cF' \in \normalforests{A}{B}$,
we write $\cF \circ_\cN \cF'$ for the unique forest in $\normalforests{A}{C}$ which is isomorphic to $\normal(\cF \circ \cF')$; an example of the full computation of a $\cF \circ_\cN \cF'$ is given in Figure~\ref{fig:trans-forest-compose-ex}.
Similarly, we assume that $\forestify{f} \in \normalforests{A}{B}$ for every morphism $f \in \Hom{\cC_\with}{A}{B}$.

\subsubsection{Putting everything together}
Let $\cT = \left(Q, q_0, \left(A_q\right)_{q \in Q}, \delta, i, o\right)$
be a sdm-$\mfC_\with$-SST with input $\Sigma^*$. We define the sdm-$\mfD$-SST
\[\cT' ~~ = ~~ \left( \sum_{q \in Q} \normalforests{A_{q_0}}{A_{q}}, (q_0, \forestify{\id}), (\Ty(\cF))_{q, \cF}, \delta', i' o' \right)\]
where
\[
\begin{array}{c}
\displaystyle \delta' ~~:~~ \Sigma \to \prod_{q, \cF} \sum_{r, \cF'} \Hom{\cD}{\Ty(\cF)}{\Ty(\cF')} \\ \\
\displaystyle i' \in \Hom{\cD}{\initty}{\forestify{\id}}
\qquad\qquad
o' \in \prod_{q, \cF} \Hom{\cD}{\Ty(\cF)}{\retty}
\end{array}
\]
are given as follows, assuming that $A_q = \displaystyle\bigwith_{x \in X_q} C_{q,x}$
\begin{figure}
\[
\begin{array}{|c@{\qquad}|@{\qquad}c|}
\hline
\delta^\cD(a)_{q,\cF} &
o'_{q,\cF} \\
\hline
& \\
\xymatrix{
\\
\Ty(\cF) \ar[d]^\sim \\
\top \tensor \Ty(\cF) \ar[d]^{\forestifymorph{\delta^{\cC_\with}(a)_q} \tensor \id} \\
\Ty(\forestify{\delta^{\cC_\with}(a)_q}) \tensor \Ty(\cF) \ar[d]^\sim \\
\Ty(\forestify{\delta^{\cC_\with}(a)_q} \circ \cF) \ar[d]^\normalmap \\
\Ty(\normal(\forestify{\delta^{\cC_\with}(a)_q} \circ \cF)) \ar@{=}[d] \\
\Ty(\delta^{\mathrm{NF}}(a)_{q,\cF})  \\
}&
\xymatrix{
\Ty(\cF) \ar[d]^{f} \\
{F(C_{q_0,x_0})} \lin {F(C_{q,x})}
\ar[d]^\sim \\
({F(C_{q_0,x_0})} \lin {F(C_{q,x})}) \tensor \initty
\ar[d] \\
({F(C_{q_0,x_0})} \lin {F(C_{q,x})}) \tensor F(\initty)
\ar[d]^{\id \tensor F(i_{x_0})} \\
({F(C_{q_0,x_0})} \lin {F(C_{q,x})}) \tensor F(C_{q_0,x_0})
\ar[d]^\ev \\
F(C_{q,x}) \ar[d]^{F(o^\cC)} \\
F(\retty) \ar[d] \\
\retty
}
\\
\hline
\end{array}
\]
\caption{Definition of $\delta^{\cD}(a)_{q,\cF}$
and $o'_{q,\cF}$.}
\label{fig:trans-output-uniformizer}
\end{figure}
\begin{itemize}
\item Notice that $\Ty(\forestify{\id}) = \bigtensor_{x \in X_{q_0}} F(C_{q_0, x}) \lin F(C_{q_0, x})$;
we simply take the constant map corresponding to the $X_{q_0}$-fold tensor of $\Lambda(\id_{C_{q_0,x}})$ for $i'$.
\item Fix $a \in \Sigma$ and recall that $\delta(a)$ is a family of pairs
\[
(\delta^Q(a)_q, \delta^{\cC_\with}(a)_q)_{q \in Q} ~~ \in ~~
\prod_{q \in Q} \sum_{r \in Q} \Hom{\cC_\with}{A_q}{A_r}\]
$\delta'$ is then defined by the equation 
\[\delta'(a)_{q, \cF} ~~=~~
\left(\left(
\delta^Q(a)_{q},
\delta^{\mathrm{NF}}(a)_{q, \cF}\right),
\delta^\cD(a)_{q, \cF}
\right)\]
where we set $\delta^{\mathrm{NF}}(a)_{q,\cF} = \forestify{\delta^{\cC_\with}(a)_q} \circ_\cN \cF$ and
$\delta^\cD(a)_{q, \cF}$ is obtained as in Figure~\ref{fig:trans-output-uniformizer}
\item Finally, for $q, \cF$ ranging over states of $\cT'$, we define $o'_{q, \cF}$.
Recall that $\cF$ determines a canonical family
\[\interp{\cF} \in \prod_{x \in X_q} \sum_{x_0 \in X_{q_0}} \Hom{\cD}{\Ty(\cF)}{F(C_{q_0,x_0}) \lin F(C_{q,x})}\]
First, as $\retty^{\mfC_\with} = \iota_\with(\retty^\mfC)$, note that
there is a unique $x \in X_q$ such that $o_q = (x, o^\cC)_*$ for some $o^\cC \in \Hom{\cC}{C_{q,x}}{\retty^\mfC}$.
Writing the pair $\interp{\cF}_{x}$ as $(x_0, f)$, $o'_{q,\cF}$ is depicted in Figure~\ref{fig:trans-output-uniformizer}.
\end{itemize}

We omit the proof that $\interp{\cT} = \interp{\cT'}$ by induction
over the length of an input word.
While spelling it out may be a bit notation-heavy, there is no particular
difficulty considering the remarks above linking $\interp{-}$, the composition
of transformation forests and their normal forms.
\end{proof}

\section{Regular tree functions in the $\laml$-calculus}
\label{sec:trees}

The goal of this section is now to prove our main theorem for tree functions
(that we recall below), extending the case of strings
(Theorem~\ref{thm:main-string}).

\maintree*

To prove this theorem, we follow a similar approach as in Section~\ref{sec:strings}.

\begin{rem}
While the result on strings follows as a corollary of Theorem~\ref{thm:main-tree},
this relies on the equivalence between regular string functions and regular tree functions when
strings are regarded as particular trees. Given that we start from transducer-based characterization
of these notions, this means that we would use e.g.~\cite[Theorem 3.16-3.17]{BRTT}, which are themselves
non-trivial results\footnote{A self-contained proof is also possible by building on our development
by exploiting Theorem~\ref{thm:with-reg-conservativity}, arguably the most intricate argument presented here.}.
\end{rem}

We thus first define a generalized notion of bottom-up ranked tree transducers (BRTT)
parameterized by a notion of \emph{tree streaming setting}.

\begin{defi}
Let $X$ be a set.
A \emph{tree streaming setting} with output $X$
is a tuple $\mathfrak{C} = (\cC, \tensor, \unit, \retty, \curlyinterp{-})$ where
\begin{itemize}
\item $(\cC, \tensor, \unit)$ is a \emph{symmetric monoidal} category
\item $\retty$ is an object of $\cC$
\item $\curlyinterp{-}$ is a set-theoretic map $\Hom{\cC}{\unit}{\retty} \to X$
\end{itemize}
\end{defi}

This notion essentially differs by asking that the underlying category be equipped
with a symmetric monoidal product, which is used in defining the semantics of
$\mfC$-BRTTs. The tensor product is used to fit the branching structure of trees
and $\unit$ is used for terminal nodes (so there is no need of a distinguished initial object
$\initty$ as in string streaming settings).

\begin{defi}
Let $\mathfrak{C} = (\cC, \tensor, \unit, \retty, \curlyinterp{-})$ be a tree streaming setting with output $X$.
A $\mathfrak{C}$-BRTT with input (ranked) alphabet $\bf \Sigma$ and output $X$ is a tuple
$(Q, R, \delta, o)$ where
\begin{itemize}
\item $Q$ is a finite set of states
\item $R$ is an object of $\cC$
\item $\delta$ is a function $\displaystyle \prod_{a \in \Sigma} \left( Q^{\arity(a)} \to Q \times \varHom{\cC}{\bigtensor_{\arity(a)} R}{R} \right)$
\item $o \in \Hom{\cC}{R}{\retty}$ is an output morphism
\end{itemize}
Its semantics is a set-theoretic map
$\Tree({\bf \Sigma}) \to X$ defined as follows:  writing $\delta_Q(a,(t_i)_{i \in \arity(a)})$ for
$\pi_1(\delta(a,(t_i)_{i \in \arity(a)}))$ and $\delta_\cC(a, (t_i)_{i \in \arity(a)})$  for $\pi_2(\delta(a,(t_i)_{i \in \arity(a)}))$, define auxiliary functions $\delta^*_Q : \Tree({\bf \Sigma}) \to Q$
and $\delta^*_\cC : \Tree({\bf \Sigma}) \to \Hom{\cC}{\unit}{R}$ by iterating $\delta$:
\[
\begin{array}{lcl}
\delta_Q^*\left(a\left((t_i)_{i \in \arity(a)}\right)\right) &~~=~~& \delta_Q\left(a,(\delta^*_Q(t_i))_{i \in \arity(a)}\right)
\\
\delta_\cC^*\left(a\left((t_i)_{i \in \arity(a)}\right)\right) &~~=~~& \displaystyle \delta_\cC\left(a,(\delta^*_Q(t_i))_{i \in \arity(a)}\right) \circ \left(\bigtensor_{i \in \arity(a)} \delta^*_\cC(t_i)\right) \circ \varphi_{\arity(a)}
\end{array}
\]
where $\varphi_{\arity(a)}$ is the unique isomorphism
$\unit \xrightarrow{\,\sim\,} \bigotimes_{i \in \arity(a)} \unit$
generated by the associator and unitors of $(\cC,\tensor,\unit)$.
The output function $\interp{\cT} : \Tree({\bf \Sigma}) \to X$ is defined as $\interp{\cT} = \curlyinterp{o \circ \delta^*_\cC}$.
\end{defi}

\begin{rem}
  \label{rem:brtt-sym}
Strictly speaking, we do not need the monoidal product to be symmetric for the notion
to make sense, but it would require using a fixed order over the input ranked alphabet $\bf \Sigma$.
Although one could choose an arbitrary total order over $\bf \Sigma$, different orders might define different classes of
functions $\Tree({\bf \Sigma})$ when the monoidal product is not symmetric. This
is why we work in symmetric monoidal categories.
\end{rem}

As with string streaming settings, it is convenient to define a notion of morphism
of tree streaming settings to compare the expressiveness of classes of BRTTs.

\begin{defi}
Let $\mfC = (\cC, \unit_\cC, \tensor_\cC, \retty_\cC, \curlyinterp{-}_\mfC)$
and
 $\mfD = (\cD, \unit_\cD, \tensor_\cC, \retty_\cD, \curlyinterp{-}_\mfD)$
be tree streaming settings with output $X$. A morphism of tree streaming settings is given by
a lax monoidal functor $F : (\cC, \tensor_\cC, \unit_\cC) \to (\cD,\tensor_\cD,\unit_\cD)$
and a $\cD$-arrow $o : F(\retty_\cC) \to \retty_\cD$
such that, for every $f \in \Hom{\cC}{\initty_\cC}{\retty_\cC}$, we have
\[\curlyinterp{o \circ F(f) \circ i}_\cD = \curlyinterp{f}_\cC\]
where $i : \unit_\cD \to F(\unit_\cC)$ is obtained as part of the lax monoidal functor structure over $F$.
\end{defi}

Observe that we do not require those functors to commute with the symmetry
morphisms for the monoidal products, as promised in \Cref{subsec:prelim-cat}.
This is consistent with the fact that the symmetries are not really involved in
computing the image of a tree by a $\mfC$-BRTT, according to
Remark~\ref{rem:brtt-sym}: it is only their mere existence that matters.

\begin{lem}
\label{lem:morph-BRTT}
If there is a morphism of tree streaming settings $\mfC \to \mfD$, then $\mfD$-BRTTs subsume $\mfC$-BRTTs.
\end{lem}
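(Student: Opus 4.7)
I would proceed in parallel with the proof of Lemma~\ref{lem:morph} for strings, with additional care to handle the monoidal structure. Given a $\mfC$-BRTT $\cT = (Q, R, \delta, o)$ and a morphism of tree streaming settings $\mfC \to \mfD$ consisting of a lax monoidal functor $F : \cC \to \cD$ and an arrow $o' : F(\retty_\cC) \to \retty_\cD$, I would construct the $\mfD$-BRTT $\cT' = (Q,\; F(R),\; \delta',\; o' \circ F(o))$. For each letter $a \in \Sigma$ and tuple $(q_j)_j \in Q^{\arity(a)}$, if $\delta(a,(q_j)_j) = (q, f)$ with $f \in \Hom{\cC}{\bigtensor_{j \in \arity(a)} R}{R}$, I would set $\delta'(a,(q_j)_j) = (q,\; F(f) \circ m^{\arity(a)}_R)$, where $m^{\arity(a)}_R : \bigtensor_{j \in \arity(a)} F(R) \to F\!\left(\bigtensor_{j \in \arity(a)} R\right)$ is the iterated lax monoidal comparison of $F$ obtained by composing binary instances of $m_{A,B}$.

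The main technical step is then to show by induction on $t \in \Tree({\bf \Sigma})$ that the final memory value $\mu'(t) \in \Hom{\cD}{\unit_\cD}{F(R)}$ computed by $\cT'$ on $t$ satisfies $\mu'(t) = F(\delta_\cC^*(t)) \circ i$, where $i : \unit_\cD \to F(\unit_\cC)$ is the unit comparison $m_0$ of $F$. At a node $a((t_j)_j)$, expanding both sides and invoking the induction hypothesis on each $t_j$ reduces the verification, after iterated naturality of $m_{A,B}$ used to move $m^{\arity(a)}_R$ past the $F(\delta_\cC^*(t_j))$, to the coherence identity
\[ m^{\arity(a)}_{\unit_\cC} \circ \left(\bigtensor_{j \in \arity(a)} i\right) \circ \varphi^\cD_{\arity(a)} ~=~ F\!\left(\varphi^\cC_{\arity(a)}\right) \circ i, \]
where the superscripts $\cC, \cD$ indicate in which category the isomorphism $\varphi : \unit \to \bigtensor \unit$ is interpreted.

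Establishing this identity is the main obstacle; it follows from Mac Lane's coherence theorem for symmetric monoidal categories combined with the unit-coherence diagrams of Definition~\ref{def:monfunctor}, which jointly force any morphism built from $\alpha, \lambda, \rho$ to be transported from $\cC$ to $\cD$ by $F$ compatibly with iterated instances of $m$ and with $i$, regardless of the chosen decomposition of $\varphi$. Once the memory invariant $\mu'(t) = F(\delta_\cC^*(t)) \circ i$ is established, $\interp{\cT'}(t) = \interp{\cT}(t)$ follows by applying the compatibility condition $\curlyinterp{o' \circ F(g) \circ i}_\cD = \curlyinterp{g}_\cC$ to the $\cC$-morphism $g = o \circ \delta_\cC^*(t) : \unit_\cC \to \retty_\cC$.
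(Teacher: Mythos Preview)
Your proposal is correct and follows essentially the same approach as the paper: build $\cT' = (Q, F(R), \delta', o' \circ F(o))$ with $\delta'_a = F(\delta_a) \circ m^{\arity(a)}_R$, prove by induction on $t$ that the memory value in $\cD$ equals $F(\delta^*_\cC(t)) \circ m_0$, and conclude via the compatibility condition on $\curlyinterp{-}$. The paper presents the inductive step as a single commutative diagram whose faces are labelled by the ingredients you name (naturality of $m$, the unit/associator coherences for a lax monoidal functor, the induction hypothesis), and it restricts to single-state BRTTs to lighten notation, but the argument is the same.
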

\begin{proof}
Let us give the proof for single-state BRTTs; the proof for general BRTTs would be notationally heavier
but not much more insightful.
Suppose that we have some single-state $\mfC$-BRTT $\cT = (R,\delta,o)$ -- where
we leave the only state implicit and regard the transition function $\delta$ and
output function $o$ as elements of
$\prod_{a \in \Sigma} \varHom{\cC}{\bigtensor_{\arity(a)} R}{R}$ and
$\Hom{\cC}{R}{\retty}$ respectively --
and that the morphism under consideration is composed of a lax monoidal functor $F : \cC \to \cD$
and a $\cD$-arrow $o' : F(\retty_{\cC}) \to \retty_\cD$.

Since $F$ is lax monoidal, we have a family of natural transformations
\[ m_{I,A} ~:~~~~ \bigtensor_{I} F(A) ~~\longto~~ F\left(\bigtensor_{I}
    A\right)\]
where $I$ ranges over all finite sets\footnote{Recall from
  \Cref{subsec:notations} that an operation $\bigotimes_{i \in I} (-)$ -- here,
  a functor -- is associated to every finite indexing set $I$ by choosing an
  arbitrary total order over $I$.} and $A$ over objects of $\cC$; this family is
compatible with the associators and unitors in $\cC$ and $\cD$. Furthermore,
$m_{\varnothing,A} : \unit \to F(\unit)$ is the same for all $A \in \Obj(C)$, so
we shall abbreviate it as $m_\varnothing$.

We claim that $\interp{\cT} = \interp{\cT'}$, where $\cT'$ is the single-state $\mfD$-BRTT $(F(R), \delta', o' \circ F(o))$ with the same typing conventions and
$\delta'_a = F(\delta_a) \circ m_{\arity(a), R}$.
Let us prove this. To do so we first consider the iterations of the transition functions
\[\delta^* : \Tree({\bf \Sigma}) \to \Hom{\cC}{\unit}{R} \qquad \text{and} \qquad \delta'^* : \Tree({\bf \Sigma}) \to \Hom{\cD}{\unit}{F(R)}\]
and show that $\delta'^*(t) = F(\delta(t)) \circ m_{\varnothing}$ by induction over $t \in \Tree({\bf \Sigma})$.
So suppose that $t = a((u_x)_{x \in \arity(a)})$ and that the inductive
hypothesis holds (this also takes care of the base case: when
$\arity(a) = \varnothing$, the inductive hypothesis is vacuous). In such a case, let
us show that each face in the following diagram commutes (where all tensor products have arity $\arity(a)$):
\[\xymatrix@C=20mm@R=10mm{
\unit \ar@{}[rddd]|-{\mathtt{(b)}} \ar@/_1pc/[rdddd]_-{m_{\varnothing}} \ar[dr]^\sim \ar@/^3pc/[rrrdd]^{\delta'^*(t)}
&
\ar@{}[dr]|{\mathtt{(a)}}
\\
&\bigtensor \unit \ar[d]_{\bigtensor m_{\varnothing}} \ar@/^0pc/[dr]^{\bigtensor\limits_x \delta'^*(u_x)}
& \\
&\bigtensor F(\unit) \ar@{}[ur]|<<<<<<{\mathtt{(c)}}
\ar[r]_{\bigtensor\limits_x F(\delta^*(u_x))} \ar[d]_{m_{\arity(a),\unit}} &
\bigtensor F(R) \ar[r]^{\delta'_a} \ar[d]_{m_{\arity(a),R}}
\ar@{}[dr]|<<<<<<{\mathtt{(e)}}
& F(R) \\
&F\left(\bigtensor\unit\right) \ar@{}[ur]|<<<<<<<<<{\mathtt{(d)}}
\ar[r]_{F\left(\bigtensor\limits_x \delta^*(u_x)\right)} & F\left(\bigtensor R\right) \ar[ur]_{F(\delta_a)} &
\\
&F(\unit) \ar@{}[urr]|{\mathtt{(f)}} \ar[u]_{F(\sim)} \ar@/_3pc/[rruu]_{F(\delta^*(t))}
}\]
Faces \texttt{(a)} and \texttt{(f)} commute by definition of iterated transition functions and
face \texttt{(e)} corresponds to the definition of $\delta'$. Face \texttt{(d)} commutes because
of the naturality of $m_{-,-}$ and face \texttt{(b)} because of its compatibility with associators.
Finally, face \texttt{(c)} corresponds to
the inductive hypothesis.
Therefore, the topmost and bottommost paths coincide, so we have $\delta'^*(t) = F(\delta(t)) \circ m_{\varnothing}$,
which concludes our inductive argument. We can then conclude since we have, for every tree $t$,
\[
\begin{array}{lcll}
\interp{\cT'}(t) &=& \curlyinterp{o' \circ F(o) \circ (\delta'^*(t))}_\mfD & \text{by definition} \\
&=& \curlyinterp{o' \circ F(o) \circ F(\delta^*(t)) \circ m_\varnothing}_\mfD & \text{inductive argument}\\
&=& \curlyinterp{o' \circ F(o \circ \delta^*(t)) \circ m_\varnothing}_\mfD & \text{by functoriality}\\
&=& \curlyinterp{o \circ (\delta^*(t))}_\mfC & \text{since $(F, o')$ is part of a morphism $\mfC \to \mfD$}\\
&=& \interp{\cT}(t) & \text{by definition}\\
\end{array}
\]
\end{proof}

From now on, we may omit the ``tree'' in when discussing streaming settings.

As for SSTs, linearity is an important concept for traditional BRTTs over ranked alphabets.
Rather than starting from the category corresponding exactly to usual BRTTs, we will first study a
more convenient, albeit less expressive, streaming setting based on the idea of trees with multiple holes.
For this, it will be convenient to introduce the notion of \emph{multicategory}, which is essentially
a notion of category where morphisms are allowed to have multiple input objects.
We will thus first devote Section~\ref{subsec:multicat} to some preliminaries describing how
to (freely) generate affine monoidal categories from multicategories.

After spelling out the universal properties that will allow us to easily define functors and
a quick review of the generalization of the results of Section~\ref{subsec:oplus-string} pertaining to
coproduct completions in Section~\ref{subsec:oplus-trees}, we present a basic multicategory of registers
$\Registermc$ containing ``trees with holes'' in Section~\ref{subsec:registercat-tree}, and the corresponding
streaming setting $\REGISTER$ on top of a category $\Register$.
The usual restriction to registers containing ``trees with at most one hole''
is also discussed and shown to be no less expressive thanks to our basic results on the coproduct completion.
Then, we explain how the usual notion of BRTT presented in~\cite{BRTT} can be shown to have the same expressiveness
as $\REGISTER_\with$-BRTTs in Section~\ref{subsec:with-trees}.

Finally, in Section~\ref{subsec:dial-trees}, we show that $\Register_\oplus$ has internal homsets $\iota_\oplus(R) \lin \iota_\oplus(S)$ and conclude that $\Register_{\oplus\with}$ is monoidal-closed. We conclude the
proof of Theorem~\ref{thm:main-tree} in Section~\ref{subsec:main-trees}.

For the rest of this part, we fix a ranked alphabet $\bf \Gamma$ so that we may focus on outputs
contained in $\Tree({\bf \Gamma})$, much like we focussed on outputs in some fixed $\Gamma^*$ before.

\subsection{Multicategorical preliminaries}
\label{subsec:multicat}

This section is devoted to spelling out the formal definition of the notion of
multicategory that we use in the sequel, and their relation to symmetric
affine monoidal category. While technically necessary for the sequel, it is rather
dry and should maybe only be skimmed over at first reading.

\begin{defi}
A (weak symmetric) multicategory $\cM$ consists of
\begin{itemize}
\item a class of objects $\Obj(\cM)$
\item a class of \emph{multimorphisms} going from pairs $(I, (A_i)_{i \in I})$ of a finite index set $I$ and a family $(A_i)_{i \in I}$ of objects to
objects $B$. We omit the first component of the source and write $\Hom{\cM}{(A_i)_{i \in I}}{B}$ for the set of these multimorphisms.
\item for every object $A$, a distinguished identity multimorphism $\id_A \in \Hom{\cM}{(A)_{* \in 1}}{A}$.
\item for every set-theoretic map $f : I \to J$, families $(A_i)_{i \in I}$, $(B_j)_{j \in J}$ and object $C$, a composition operation
\[
\begin{array}{ccccl}
\Hom{\cM}{(B_j)_{j \in J}}{C} &\times& \left[\prod_{j \in J} \Hom{\cM}{(A_i)_{i \in f^{-1}(j)}}{B_j}\right] &\longto& \Hom{\cM}{(A_i)_{i \in I}}{C} \\
\alpha&,& (\beta_j)_{j \in J} &\longmapsto& \alpha *_f \beta
\end{array}
\]
\item for every bijection $\sigma : I' \to I$ between finite sets, a family of actions

\[\sigma^* : \Hom{\cM}{(A_i)_{i \in I}}{B} \to \Hom{\cM}{(A_{\sigma(i')})_{i' \in I'}}{B}\]
correspond to reindexing along symmetries. 
\end{itemize}
Furthermore, the above data is required to obey the following laws.
\begin{itemize}
\item The identity morphism be a neutral for composition:
for any $\alpha \in \Hom{\cM}{(A_i)_{i \in I}}{B}$,
\[\id_B *_! (\alpha)_{* \in 1} \quad = \quad \alpha \quad = \quad \alpha *_{\id_I} (\id_{A_i})_{i \in I}\]
\item Composition is associative: for any finite sets $I, J$ and $K$, functions $f : K \to J$ and $g : J \to I$,
families of objects $(A_k)_{k \in K}$, $(B_j)_{j \in J}$, $(C_i)_{i \in I}$, $D$ and families of morphisms
\[\begin{array}{l}
\alpha \in \prod_{j \in J} \Hom{\cM}{(A_k)_{k \in f^{-1}(j)}}{B_j} \\
\beta \in \prod_{i \in I} \Hom{\cM}{(B_j)_{j \in g^{-1}(i)}}{C_i} \\
\gamma \in \Hom{\cM}{(C_i)_{i \in I}}{D}
\end{array}
\]
the following equation holds
\[
\gamma *_g \left(\beta *_f \alpha\right) = \left(\gamma *_{\restr{g}{f^{-1}(j)}} \beta\right)_j *_f \alpha
\]
\item Permutations act functorially: for any $(A_i)_{i \in I}$, $B$ and bijections $\sigma : I' \to I$ and
$\sigma' : I'' \to I'$, the following commute
\[
\xymatrix{
\Hom{\cM}{(A_i)_{i \in I}}{B} \ar[dr] \ar[r] & \Hom{\cM}{(A_{\sigma(i')})_{i' \in I'}}{B} \ar[d] \\
& \Hom{\cM}{(A_{\sigma(\sigma'(i''))})_{i'' \in I''}}{B} \\
}
\]
and $\id_I^* : \Hom{\cM}{(A_i)_{i \in I}}{B} \to \Hom{\cM}{(A_i)_{i\in I}}{B}$ is the identity.
\item Composition is compatible with permutations: for every commuting square
\[
\xymatrix@C=20mm{
J' \ar[d] \ar[r]^g & I' \ar[d]^\sigma \\
J \ar[r]_f & I \\
}\]
in $\Finset$ such that the vertical arrows be bijections, note in particular that for
every $i \in I$, there is a bijection $\sigma_i : g^{-1}(I') \to f^{-1}(I)$; we thus require that
\[ \alpha *_f \left(\beta_{j \in f^{-1}(i)}\right)_{i \in I}
\quad =\quad
 \sigma^*(\alpha) *_g \left(\sigma_i^*\left(\beta_{j \in f^{-1}(i)}\right)\right)_{i \in I}\]
for every suitable $\alpha$ and $\beta_j$s.
\end{itemize}
\end{defi}

Every symmetric monoidal category $\cC$ can be mapped to a multicategory $\tomcat{\cC}$
by taking
\[\Hom{\tomcat{\cC}}{(A_i)_{i \in I}}{B} \quad = \quad \varHom{\cC}{\bigtensor_{i \in I} A_i}{B}\]
We may make this map functorial, provided we equip the class of multicategories
and the class of symmetric monoidal categories with categorical structures.

\begin{defi}
\label{def:multifunctor}
Given two weak multicategories $\cM$ and $\cN$, a functor $F : \cM \to \cN$ consists
of maps of objects $F : \Obj(\cM) \to \Obj(\cN)$ and of multimorphisms
\[F ~:~ \Hom{\cM}{(A_i)_{i \in I}}{B} ~\longto~ \Hom{\cN}{(F(A_i))_{i \in I}}{F(B)}\]
such that $F(\id_A) = \id_{F(A)}$, $F\left(\alpha *_f \left(\beta_{j}\right)_{j \in f^{-1}(i)}\right) = F(\alpha) *_f (F(\beta_j))_{j \in f^{-1}(i)}$ and $F(\sigma*(\alpha)) = \sigma^*(F(\alpha))$ for all suitable objects, index sets, set-theoretic functions and morphisms.
\end{defi}

Definition~\ref{def:multifunctor} gives a class of arrows for a large category $\mCat$ of multicategories.
Calling $\SymAff$ the category whose objects are symmetric affine monoidal categories and morphisms are
strong monoidal functors, the map $\cC \to \tomcat{\cC}$ extends to a functor $\SymAff \to \mCat$.
We are now interested in the inverse process of generating freely a symmetric affine monoidal
category out of a weak multicategory.

\begin{defi}
\label{def:mcattoaff}
Let $\cM$ be a weak multicategory.
The \emph{free affine symmetric monoidal category} generated by $\cM$ is the category $\mcattoaff{\cM}$ such that
\begin{itemize}
\item objects are pairs $(I, (A_i)_{i \in I})$ of a finite set $I$ and a family of objects of $\cM$; write $\bigtensor_{i \in I} A_i$ for such objects.
\item morphisms $\bigtensor_{i \in I} A_i \to \bigtensor_{j \in J} B_j$ are pairs $(f, (\alpha_j)_{j \in J})$ where $f$ is a partial function $I \partto J$ and
$\alpha_j$ is a $\cM$ multimorphism $(A_i)_{i \in f^{-1}(J)} \to B_j$.
\item identities $\bigtensor_{i \in I} A_i \to \bigtensor_{i \in I} A_i$ are the pairs $(\id_I, (\id_{A_i})_{i \in I})$.
\item the composition of
\[(f, (\alpha_j)_{j \in J}) : \bigtensor_{i \in I} A_i \to \bigtensor_{j \in J} B_j~~\text{and}~~ (g, (\beta_k)_{k \in K}) : \bigtensor_{j \in J} B_j \to \bigtensor_{k \in K} C_k\] is 
$\left(g \circ f, \left(\beta_k * \left(\alpha_j\right)_{j \in g^{-1}(k)}\right)_{k \in K}\right)$.
\end{itemize}
\end{defi}

For any bijection $\sigma : I' \to I$, we have canonical isomorphisms $\bigtensor_{i \in I} A_i \to \bigtensor_{i \in I} A_{\sigma(i)}$. We take the binary tensor product to be
\[\left(\bigtensor_{i \in I} A_i \right) \tensor \left( \bigtensor_{j \in J} B_j \right) ~~=~~ \bigtensor_{x \in I + J} 
\left\{ \begin{array}{ll}
A_i & \text{if $x = \inl(i)$} \\ 
B_j & \text{if $x = \inr(j)$} \\
\end{array} \right.\]
and the unit to be the terminal object, which is the nullary family $\bigtensor_\varnothing$.
The associator and symmetries are induced by the isomorphisms $(I + J) + K \cong I + (J + K)$ and $I + J \cong J + I$
respectively, and the units by $I + \varnothing \cong I \cong \varnothing + I$. The axioms of weak symmetric multicategories then imply that this indeed endows $\mcattoaff{\cM}$ with a symmetric affine monoidal structure.
We skip checking the details.

\subsection{The coproduct completion}
\label{subsec:oplus-trees}

Similarly as for strings, the coproduct completion of a category induces a map
$\mfC \mapsto \mfC_\oplus$ over tree streaming settings.
Furthermore, expressiveness of $\mfC$ and $\mfC_\oplus$ remain the same under similar hypotheses as Theorem~\ref{thm:oplus-sst-conservative}.

\begin{thm}
\label{thm:oplus-brtt-conservative}
Let $\mfC$ be a tree streaming setting whose monoidal product is affine and such that
all objects of the underlying category have unitary support. Then, $\mfC$-BRTTs and \mbox{$\mfC_\oplus$-BRTTs}
are equi-expressive.
\end{thm}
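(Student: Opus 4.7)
The plan is to adapt the proof of Theorem~\ref{thm:oplus-sst-conservative} to the tree setting, with the branching structure of trees absorbed via distributivity of the monoidal product over coproducts in $\cC_\oplus$. The easy direction is that $\mfC_\oplus$-BRTTs subsume $\mfC$-BRTTs: first I would check that the coproduct completion lifts to tree streaming settings in a canonical way, and that $\iota_\oplus : \mfC \to \mfC_\oplus$ is a morphism of tree streaming settings whose underlying functor is strong monoidal (this being a direct consequence of the distributivity formula defining $\tensor$ on $\cC_\oplus$). Then Lemma~\ref{lem:morph-BRTT} yields the inclusion.

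For the converse direction, I would imitate the state-unfolding construction used for strings. Starting from a $\mfC_\oplus$-BRTT $\cT = \left(Q,\, \bigoplus_{u \in U} C_u,\, \delta,\, o\right)$ (where I take the memory to be a coproduct of basic objects $C_u \in \Obj(\cC)$), I would build a $\mfC$-BRTT $\cT'$ with state space $Q \times U$ and memory object $\bigtensor_{u \in U} C_u$. The intuition is exactly that of the string case: the new state component in $U$ tracks which coproduct component the $\mfC_\oplus$-BRTT currently inhabits, while the tensor product holds a meaningful value in that component and unitary-support "junk" in the others (cf.\ Figure~\ref{fig:junk}).

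The transition function is the critical step. For a letter $a \in \Sigma$ of arity $X$ and an $X$-indexed family of states $((q_x, u_x))_{x \in X}$, the transition of $\cT$ determines a new state $q'$ together with a morphism $\bigtensor_{x \in X} \bigoplus_{u \in U} C_u \to \bigoplus_{u \in U} C_u$ in $\cC_\oplus$. Using the distributivity isomorphism $\bigtensor_{x \in X} \bigoplus_{u \in U} C_u \cong \bigoplus_{f : X \to U} \bigtensor_{x \in X} C_{f(x)}$, this morphism decomposes into data indexed by functions $f : X \to U$. Evaluating at the specific $f$ defined by $f(x) = u_x$ yields a target $u'$ and a $\cC$-morphism $\alpha : \bigtensor_{x \in X} C_{u_x} \to C_{u'}$. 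The new state of $\cT'$ is $(q', u')$, and the memory transition $\bigtensor_{x \in X} \bigtensor_{u \in U} C_u \to \bigtensor_{u \in U} C_u$ is built by a tree-shaped generalization of $\embeddinjunk$: project each input tensor onto its $u_x$-component (discarding the others using affineness), apply $\alpha$, then reinstate tensor factors using the unitary support of the $C_u$ for $u \neq u'$. The output morphism of $\cT'$ at state $(q, u)$ is simply $o_q \circ \pi_u$, where $\pi_u$ is the projection onto the $u$-component of $\bigtensor_{u \in U} C_u$.

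The main obstacle is not the construction itself, which closely mirrors the string case, but the correctness verification by induction on the input tree. The invariant to maintain is that, after processing a subtree $t$, the state of $\cT'$ is $(q, u)$ where $q$ is the state reached by $\cT$ on $t$ and $u$ is the coproduct component in which the corresponding memory morphism lives, and that projecting the $\cT'$-memory onto the $u$-coordinate recovers exactly the $\cT$-memory. The "junk" in the other components is invisible to the final output. The inductive step uses functoriality and the compatibility between the $\embeddinjunk$-style embedding and the composition of morphisms through $\alpha$; I expect it to boil down to a commuting-diagram chase with no genuinely new difficulty compared to the string case, since the arity of $a$ only affects the shape of the tensor products involved, not the overall logic.
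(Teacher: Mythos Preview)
Your proposal is correct and is exactly the approach the paper takes---or rather prescribes, since the paper gives no proof and simply says it is ``an unsurprising adaptation of the one of Theorem~\ref{thm:oplus-sst-conservative}'' left to the reader; your account of how the branching is absorbed via the distributivity isomorphism in $\cC_\oplus$ and how $\embeddinjunk$ is applied componentwise is the right picture.

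One small wrinkle worth flagging: in the paper's definition of a $\mfC$-BRTT the output is a \emph{single} morphism $o \in \Hom{\cC}{R}{\retty}$, not a state-indexed family (contrast this with the string SST, where it is $(o_q)_{q\in Q}$, and with the later sdm-BRTT). Your proposed output ``at state $(q,u)$'' is therefore state-dependent and does not literally fit; moreover, since the original output $o : \bigoplus_u C_u \to \iota_\oplus(\retty)$ decomposes as $(o_u)_{u\in U}$ rather than $(o_q)_{q\in Q}$, the intended formula is $o_u \circ \pi_u$. This asymmetry between the SST and BRTT definitions looks like an oversight in the paper; under the natural state-dependent reading your construction goes through verbatim, and the substance of the argument is unaffected.
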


The proof is an unsurprising adaptation of the one of Theorem~\ref{thm:oplus-sst-conservative}.
We leave it to the interested reader.
Similarly, we define the notion of a state-dependent memory $\mfC$-streaming tree transducer (sdm-$\mfC$-BRTT),
which can be shown to be as expressive as sdm-$\mfC_\oplus$-BRTT. We only state the definition and
the generalization of Lemma~\ref{lem:sdmSST-oplus}, whose proof we defer to the interested reader.

\begin{defi}
A $\mfC$-\emph{state-dependent memory BRTT} 
with input $\Tree({\bf \Sigma})$ is a tuple $(Q, \delta, (C_q)_{q \in Q}, o)$  where
\begin{itemize}
\item $Q$ is a finite set of states
\item $(C_q)_{q \in Q}$ is a $Q$-indexed family of objects of $\cC$ 
\item $\displaystyle\delta \in \left[\prod_{a \in \Sigma} \prod_{q \in Q^{\arity(a)}} \sum_{r \in Q} \Hom{\cC}{\bigtensor_{x \in \arity(a)} C_{q(x)}}{C_r}\right]$ is a transition function
\item $\displaystyle o \in \prod_{q \in Q} \Hom{\cC}{C_{q}}{\retty}$ is the output family of morphisms
\end{itemize}
\end{defi}

\begin{lem}
\label{lem:sdmBRTT-oplus}
Let $\mfC$ be a streaming setting. State-dependent memory $\mfC$-BRTTs are as expressive as $\mfC_\oplus$-BRTTs.
\end{lem}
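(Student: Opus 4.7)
The proof plan mirrors that of Lemma~\ref{lem:sdmSST-oplus}, adapted to cope with the tensor product appearing in BRTT transitions and the branching structure of ranked trees. I expect the bookkeeping to be heavier but the conceptual content identical.

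For the direction asserting that $\mfC_\oplus$-BRTTs subsume sdm-$\mfC$-BRTTs, I would proceed as follows. Starting from an sdm-$\mfC$-BRTT $\cT = (Q, (C_q)_{q \in Q}, \delta, o)$, I would construct a $\mfC_\oplus$-BRTT $\cT' = (Q, \bigoplus_{q \in Q} C_q, \delta', o')$, maintaining the invariant that whenever the finite state is $q$, the memory value lives in the $q$-indexed summand. The output $o'(q)$ is defined to use $o_q : C_q \to \retty$ on the $q$-th summand and any filler on the others. For the transition on letter $a$ with children states $\vec{q} \in Q^{\arity(a)}$, the new state is the $r$ provided by $\delta_a(\vec{q}) = (r, f_{\vec{q}})$, and the morphism $\bigtensor_i (\bigoplus_{q} C_q) \to \bigoplus_q C_q$ is defined using the distributivity isomorphism $\bigtensor_i (\bigoplus_q C_q) \cong \bigoplus_{\vec{q'} \in Q^{\arity(a)}} \bigtensor_i C_{q'_i}$; on the summand indexed by $\vec{q}$ itself we use $f_{\vec{q}}$ followed by the coprojection into the $r$-summand, and on any other $\vec{q'}$ we use \emph{any} map provided by the existing sdm transition $\delta_a(\vec{q'})$, composed with the appropriate coprojection. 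The invariant guarantees that only the $\vec{q}$-summand is ever reached during a legitimate run, so the filler morphisms are irrelevant.

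For the converse direction, given a $\mfC_\oplus$-BRTT $\cT = (Q, \bigoplus_{u \in U} C_u, \delta, o)$, I would build the sdm-$\mfC$-BRTT $\cT' = (Q \times U, (C_u)_{(q,u) \in Q \times U}, \delta', o')$, as in the proof of Lemma~\ref{lem:sdmSST-oplus}, tracking the active summand as an extra state component. To define $\delta'$ on letter $a$ with children states $((q_i, u_i))_{i \in \arity(a)}$, I extract $\delta_a(\vec{q}) = (q', g)$ with $g \in \Hom{\cC_\oplus}{\bigtensor_i \bigoplus_u C_u}{\bigoplus_u C_u}$ and decompose $g$ via the same distributivity isomorphism into a family of pairs $(u'_{\vec{v}}, g_{\vec{v}})_{\vec{v} \in U^{\arity(a)}}$ with $g_{\vec{v}} : \bigtensor_i C_{v_i} \to C_{u'_{\vec{v}}}$; the sdm transition then reads off the component at $\vec{v} = \vec{u}$, producing the new state $(q', u'_{\vec{u}})$ and morphism $g_{\vec{u}}$. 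The output $o'((q,u))$ is obtained by decomposing $o(q) : \bigoplus_u C_u \to \retty$ into its $u$-th component.

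The correctness of both constructions reduces to an induction over the input tree, comparing $\delta^*_\cC$ for $\cT$ and $\cT'$. The main obstacle, as in the proof of Theorem~\ref{thm:oplus-brtt-conservative}, is properly tracking the distributivity isomorphisms through each step of the computation, in particular checking that the tensoring of morphisms drawn from the various summands commutes with the canonical isomorphisms $\bigtensor_i \bigoplus_u C_u \cong \bigoplus_{\vec{u}} \bigtensor_i C_{u_i}$. Once this purely diagrammatic point is verified, the induction is routine.
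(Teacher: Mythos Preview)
Your proposal is correct and follows exactly the approach the paper intends: it explicitly defers the proof to the reader as the straightforward generalization of Lemma~\ref{lem:sdmSST-oplus}, and your two constructions (enlarging the state set to $Q \times U$ in one direction, collapsing the family into $\bigoplus_{q} C_q$ in the other, using distributivity of $\otimes$ over $\oplus$ to handle the branching) are precisely that generalization.

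One small correction: in the paper's definition of $\mfC$-BRTT, the output morphism is a single $o \in \Hom{\cC}{R}{\retty}$, not a $Q$-indexed family as for $\mfC$-SSTs. So in the sdm $\to$ $\mfC_\oplus$ direction your $o'$ should not depend on $q$; it is simply the morphism $\bigoplus_q C_q \to \iota_\oplus(\retty)$ whose $q$-th component is $o_q$ (no filler needed). Dually, in the $\mfC_\oplus \to$ sdm direction, $o$ does not depend on $q$, so $o'_{(q,u)}$ is just the $u$-th component of $o$. This is a notational slip carried over from the string case and does not affect the substance of your argument.
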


\subsection{The combinatorial multicategory $\Registermc$}
\label{subsec:registercat-tree}

We are now ready to give a smooth definition of a category of register transition
for trees, generalizing Proposition~\ref{prop:Register}. 
As announced, we find it more convenient to first give a multicategory $\Registermc$
and then move to monoidal categories by taking $\Register = \mcattoaff{(\Registermc)}$.
We then discuss the restriction consisting of limiting the number of holes in the tree
expressions stored in register to at most one and show that it is not limiting.

\begin{notations}
Recall that we regard ranked alphabets $\bf R$ as pairs $(R, \arity)$ where $R$ is a finite
set of letters and $(\arity(a))_{a \in R}$ is a family $R \to \Finset$ of arities.
Given two ranked alphabets $\bf R$ and $\bf S$, we suggestively
write $\bf R \tensor \bf S$ for the ranked alphabet $(R + S, [\arity,\arity])$.
Given a finite set $U$, call $\cO(U)$ the ranked alphabet $(U, (\varnothing)_{u \in U})$ consisting
of $\card{U}$-many terminal letters and $\cI(U)$ for the ranked alphabet consisting of a single letter of arity $U$.
Given a ranked alphabet ${\bf \Sigma} = (\Sigma, \arity)$ and a subset $X \subseteq \Sigma$, we write $\restr{{\bf \Sigma}}{X}$ for the restriction $(X, \restr{\arity}{X})$.
\end{notations}

\subsubsection{Definition of $\Register$}
Before giving the definition of $\Registermc$, we first need to make formal a notion
of trees with linearly many occurrences of certain constructors.
\begin{defi}
Let $\bf R$ be a ranked alphabet. We define the set $\LTree_{\bf \Gamma}({\bf R})$
of $\bf R$-linear trees as the set of $({\bf \Gamma} \tensor {\bf R})$-trees $\Tree({\bf \Gamma} \tensor {\bf R})$ such that all constructors of $\bf R$ appear exactly once.
\end{defi}

\begin{defi}
\label{def:Registermc}
Define $\Registermc({\bf \Gamma})$ (abbreviated $\Registermc$ in the sequel) as the multicategory
\begin{itemize}
\item whose class of objects $\Obj(\Registermc)$ is $\Finset$.
\item whose class of multimorphisms from $(A_i)_{i \in I}$ to $B$
is the set of linear trees over the joint alphabet $(I, A) \tensor \cO(B)$
(recall that $(I,A)$ can formally be regarded as a ranked alphabet: its set of
letters is $I$ and the arity of $i \in I$ is $A(i) = A_i$).
\[\Hom{\Registermc}{(A_i)_{i \in I}}{B} \quad = \quad \LTree_{\bf \Gamma}((I,A) \tensor \cO(B))\]
\item whose composition operations are given by substitution: given a map $f : I \to J$ and multimorphisms
\[t \in \LTree_{\bf \Gamma}((J,B) \tensor \cO(C)) \qquad \text{and} \qquad u \in \prod_{j \in J} \LTree_{\bf \Gamma}((f^{-1}(j),(A_i)_i) \tensor \cO(B_j))\]
the composite $t *_f u$ is defined by recursion over $t$:
\begin{itemize}
\item if $t = a((t'_k)_k)$ for some $a = \inl(b)$ with $b \in \Gamma$ or $a = \inr(\inr(c))$ for $c \in C$, then
\[t *_f u = a((t'_k *_f u)_k)\]
\item otherwise $t = \inr(\inl(j))((t'_b)_{b \in B_j})$ with $j \in J$ and $t'_b$ in some $\LTree_{\bf \Gamma}((J_b, B_b) \tensor \cO(C_j))$ for $\bigcup_{b \in B_j} J_b = J \setminus \{j\}$, $B_b = \restr{B}{J_b}$ and $\bigcup_{b \in B_j} C_b = C$.   In such a case, we set
\[t *_f u = u_j[(t'_b *_{\id_{J_b}} (u_{j'})_{j' \in J_b})/b]_{b \in B_j}\]
where $[-/-]_{- \in -}$ denotes the more usual substitution of leaves by subtrees (recall that every $b \in B$, and a fortiori $B_j$ has arity $\varnothing$).
\end{itemize}
\end{itemize}
\end{defi}

While the definition of composition of multimorphisms in $\Registermc$
looks daunting, we claim it is rather natural. Figure~\ref{fig:tree-map-comp}
depicts the composition $\alpha *_f (\beta_x)_{x \in \{t, u \}}$ with
\[\alpha = t(a(u(c()),*()),c()) \qquad \beta_t = a(x(p(),b(q())), y()) \quad \text{and} \quad \beta_u = z(c(),r(),c())\]

\begin{figure}
\center

\includegraphics[scale=0.75]{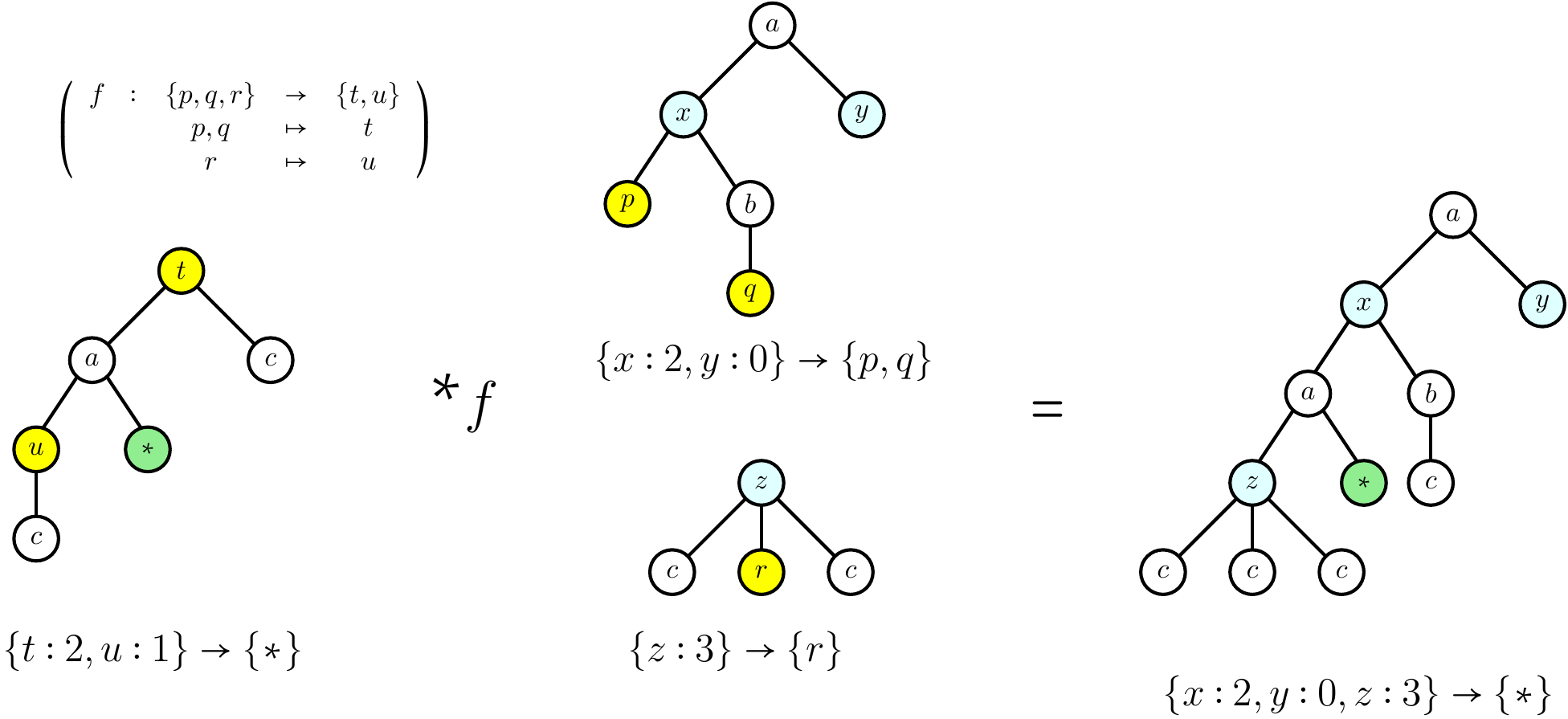}
\vspace{1.5em}
\hrule
\caption{Composition of some multimorphisms of $\Registermc$.}
\label{fig:tree-map-comp}
\end{figure}

We now set $\Register = \mcattoaff{(\Registermc)}$;
while objects of $\mcattoaff{(\Registermc)}$ are supposed to be families of finite sets $(A_i)_{i \in I}$,
in the sequel, we sometimes identify them with the ranked alphabets $(I,A)$ in $\Register$ for notational convenience.
As such, the notation ${\bf R} \tensor {\bf S}$ corresponds to the expected tensorial product in $\Register$.

We are now ready to define our first tree streaming setting.

\begin{defi}
$\REGISTER$ is the tree streaming setting $(\Register, \tensor, \top, \cI(\varnothing), \curlyinterp{-})$, where $\curlyinterp{-}$ is the canonical isomorphism
$\Hom{\Register}{\top}{\cI(\varnothing)} \cong 
 \Hom{\Registermc}{()_{\varnothing}}{\varnothing} \cong \LTree_{\bf
   \Gamma}(\varnothing) \cong \Tree({\bf \Gamma})$
 (where $()_\varnothing$ is the empty family).
\end{defi}

Call $\Registerleonemc$ the full submulticategory of $\Registermc$ whose objects
are empty or singleton sets, and
$\Registerleone \cong \mcattoaff{\Registerleonemc}$ to be the corresponding full subcategory of $\Register$.
The monoidal structure of $\Register$ restricts to $\Registerleone$ without any difficulty, and that $\cI(\varnothing)$ is an
object of $\Registerleone$. This means that $\REGISTER$ has a restriction to a streaming setting $\REGISTERleone$.

While $\REGISTERleone$ turns out to be more elementary and a good building block toward the definition of
usual BRTTs, it is easier to show the monoidal closure of $\REGISTER_{\oplus\with}$ than $\REGISTERleone$.
Thankfully, it turns out that the expressiveness of BRTTs over $\REGISTER$ and
$\REGISTERleone$ is the same.

For one direction, there is a morphism $\REGISTERleone \to \REGISTER$ corresponding to the embedding $\Registerleone \to \Register$.
For the other direction, we exploit Theorem~\ref{thm:oplus-brtt-conservative}.
The proof involves some combinatorics, but nothing surprising as it
amounts to the classical decomposition of multi-hole trees into families of single-hole trees as found in e.g.~\cite[\S3.5]{BRTT}.

\begin{lem}
\label{lem:leone}
There is a morphism of streaming settings $\REGISTER \to \REGISTERleone_\oplus$.
\end{lem}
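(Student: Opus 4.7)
The plan is to construct the required morphism as $(F, o)$ where $F : \Register \to \Registerleone_\oplus$ is a lax monoidal functor realising the classical ``decomposition of multi-hole linear trees into single-hole ones,'' and $o = \id_{\iota_\oplus(\cI(\varnothing))}$ is trivial because the output object $\cI(\varnothing)$ has arity $\varnothing$ and therefore already lies in $\Registerleone$. I shall define $F$ first on objects, then on morphisms, and finally verify functoriality, lax monoidality, and compatibility with $\curlyinterp{-}$. The real work is in the morphism definition and in checking that substitution of multi-hole linear trees commutes with the chosen decomposition.

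On objects, the idea is to treat each register of arity $A$ independently. For $\card{A} \leq 1$ the register already lives in $\Registerleone$. For $\card{A} \geq 2$, I declare $F(\cI(A))$ to be a \emph{formal coproduct} over the finite set of combinatorial ``skeletons'' $\sigma$ --- rooted trees with leaves bijectively labelled by $A$ and whose internal nodes have arity $\geq 2$ --- and inside each summand indexed by $\sigma$ I put a tensor product of one single-hole register per edge of $\sigma$ (plus one register above the root). Since objects of $\Register = \mcattoaff{(\Registermc)}$ are tensor products of singleton registers, I extend $F$ to all of $\Register$ by $F(\bigotimes_{i\in I} A_i) = \bigotimes_{i\in I} F(A_i)$, using the distributivity of $\otimes$ over $\oplus$ in $\Registerleone_\oplus$ to expand into a coproduct of $\Registerleone$-objects indexed by families of skeletons.

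On morphisms, I use the hom-set identity
\[
\Hom{\Registerleone_\oplus}{\bigoplus_{u \in U} X_u}{\bigoplus_{v \in V} Y_v}
\;=\; \prod_{u \in U} \sum_{v \in V} \Hom{\Registerleone}{X_u}{Y_v}.
\]
Given a morphism $(f,(\alpha_j)_{j \in J}) : \bigotimes_i A_i \to \bigotimes_j B_j$ in $\Register$ and a choice $u$ of skeletons for each input register, I substitute those input skeletons into the linear tree $\alpha_j \in \LTree_{\bf \Gamma}((f^{-1}(j),A) \otimes \cO(B_j))$ and read off, for the resulting linear tree with $\card{B_j}$ holes, a canonical output skeleton $v_j$ together with the single-hole pieces connecting the edges of $v_j$. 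The canonical skeleton is obtained by taking the subtree spanned by ancestors of holes and then contracting every non-branching internal node; this gives a unique $v$ depending only on $u$ and the morphism, and the single-hole pieces are precisely the contracted $\Gamma$-chains that appear between consecutive branching points.

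The main obstacle --- and essentially the only real check --- is functoriality: given composable $\Register$-morphisms $t : A \to B$ and $t' : B \to C$, I must show that the skeleton of $t' *_f t$ and its single-hole pieces, computed from a source skeleton choice for $A$, coincide with those obtained by first applying $F(t)$ and then $F(t')$ in $\Registerleone_\oplus$. This follows from the observation that the canonical skeleton is invariant under the recursive substitution in Definition~\ref{def:Registermc}, since substituting a multi-hole linear tree at a hole merges the two skeletons along their shared vertex, and the contraction of non-branching nodes is associative. Lax monoidality $m_0 : \top \to F(\top)$ and $m_{A,B} : F(A) \otimes F(B) \to F(A \otimes B)$ are identities because $F$ is defined registerwise. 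Finally, for a closed tree $\tau : \top \to \cI(\varnothing)$ in $\Register$, the output register has arity $\varnothing$ so the skeleton is trivial: $F(\tau)$ is the singleton summand containing $\tau$ itself, which, composed with $o = \id$, gives $\curlyinterp{F(\tau)} = \tau = \curlyinterp{\tau}$, establishing the required compatibility.
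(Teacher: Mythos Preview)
Your overall combinatorial idea --- decompose a multi-hole linear tree into a finite ``shape'' plus a family of one-hole pieces --- is exactly the right one, and matches the paper's intuition. However, your definition of \emph{skeleton} as an unlabeled rooted tree with leaves in $A$ and internal arities $\geq 2$ does not carry enough data to reconstruct the original tree, so the functor you describe is not well-defined on morphisms. Concretely, take ${\bf \Gamma}=\{a:3,\,a':3,\,c:0\}$ and the two trees $a(x,y,z)$ and $a'(x,y,z)$ in $\LTree_{\bf \Gamma}(\{x,y,z\})$: both have the same unlabeled skeleton (a ternary root) and the same four trivial single-hole pieces (one above the root, one per edge), so your decomposition cannot distinguish them. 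More generally, at each branching node you must record the letter $b\in\Gamma$, the function $f$ assigning holes to positions in $\arity(b)$, and full closed subtrees for the positions of $b$ that contain no hole; none of this fits into a single-hole register hanging on an edge. The paper's recursive definition
\[
R(\iota_\oplus(\cI(U))) ~=~ \cI(1)~\tensor~\bigoplus_{b\in\Gamma}\;\bigoplus_{\substack{f:U\to\arity(b)\\ f\ \text{nonconstant}}}\;\bigtensor_{x\in\arity(b)} R(\iota_\oplus(\cI(f^{-1}(x))))
\]
records precisely this missing data, and its normal form is indexed by what the paper calls \emph{partitioning trees}, i.e.\ your skeletons enriched with $(b,f)$-labels and with $\cI(\varnothing)$-registers at the dead branches.

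Once you fix the object definition in this way, your direct construction would work, but the paper avoids most of your bookkeeping by packaging the construction categorically. It factors $F$ as $\Register \xrightarrow{\iota_\oplus} \Register_\oplus \xrightarrow{R} \Registerleone_\oplus$ and obtains $R$ as a \emph{right adjoint} to the inclusion $I:\Registerleone_\oplus\hookrightarrow\Register_\oplus$: one only has to supply $R(A)$ and a counit $\epsilon_A:I(R(A))\to A$ satisfying the universal property, and then functoriality comes for free from \cite[IV.1, Theorem~2]{CWM}, while lax monoidality of $R$ follows automatically from $I$ being strong monoidal \cite[\S5.17]{mellies09ps}. This sidesteps the explicit verification of ``substitution commutes with re-decomposition'' that your approach would require, which is where the combinatorics actually become delicate.
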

\begin{proof}[Proof sketch]
We focus on giving enough ingredients to define the underlying (strong) monoidal functor $F : \Register \to \Registerleone_\oplus$,
which is going to preserve $\retty$ (i.e., we will have $F(\cI(\varnothing)) \cong \iota_\oplus(\cI(\varnothing))$).

Rather than giving a direct explicit construction of $F$ (which is rather tedious over morphisms), we obtain
it as a composition of two strong monoidal functors:
the strong monoidal embedding $\iota_\oplus : \Register \to \Register_\oplus$ and a functor
$R : \Register_\oplus \to \Registerleone_\oplus$ right adjoint to
the inclusion $I : \Registerleone_\oplus \to \Register_\oplus$.
\[\xymatrix@C=20mm{
\Register \ar[r]^{\iota_\oplus} & \Register_\oplus \ar@/^1pc/[r]^R \ar@{}[r]|\top & \Registerleone_\oplus \ar@/^1pc/[l]^I
}\]
$I$ is strong symmetric monoidal. Therefore,
by~\cite[Proposition 14, Section 5.17]{mellies09ps}, once we construct
$R$ right adjoint to $I$, it comes equipped with a canonical lax monoidal structure.
Furthermore, since we want $I \dashv R$, we can use the implicit characterization of adjoints
given in~\cite[item~(iv), Theorem~2, Section~IV.1]{CWM}:
to define $R$,
it suffices to give the value of $R\left(A\right)$ for every object $A \in \Obj(\Register_\oplus)$
and counit maps $\epsilon_A : I(R(A)) \to A$ such that, for every object $B \in \Obj(\Registerleone_\oplus)$
and map $h \in \Hom{\Register_\oplus}{I(B)}{A}$, there is a unique $\widetilde{h} \in \Hom{\Registerleone_\oplus}{R(A)}{B}$ such that the following diagram commutes
\[
\xymatrix@C=20mm{
I(B) \ar[dr]_{h} \ar@{-->}[r]^{I(\widetilde{h})} & I(R(A)) \ar[d]^{\epsilon_A}\\
& A
}\]
So we only need to define $R(A)$ and $\epsilon_A$ to obtain our functor $R$; once those are defined,
we leave checking that the universal property holds to the reader.
We first focus on the case where $A = \iota_{\oplus}(\cI(U))$ for some finite set $U$.
Recall that a single-letter alphabet $\cI(U)$, when seen as an object of $\Register$, should be
should be intuitively regarded as a register containing a tree with $U$-many holes.
If $\card{U} \le 1$, we may simply take $R(\iota_\oplus(\cI(U))) = \iota_\oplus(\cI(U))$.
Otherwise, $\card{U} \ge 2$ and $\cI(U)$ is not an object $\Registerleone$; in that case,
we use the following recursive definition
\[R(\iota_\oplus(\cI(U))) ~~~=~~~ \cI(1) ~~\tensor~~ \bigoplus\limits_{b \in \Gamma} \bigoplus\limits_{\substack{f : U \to \arity(b) \\\text{nonconstant}}} \bigtensor\limits_{x \in \arity(b)} R(\iota_\oplus(\cI(f^{-1}(x))))\]
Note that this definition is well-founded because the function $f$ in the second sum is taken to be non constant, so
that $\card{f^{-1}(x)} < \card{U}$ for every $x$. While this suffices as a definition of $R(\iota_\oplus(\cI(U)))$, this
might be a bit opaque without having the definition of $\epsilon_{R(\iota_\oplus(\cI(U)))}$. Before giving that, let us attempt to give
an intuitive rationale behind this definition: there is an isomorphism\footnote{Which we may later on define formally as a composite
\[
\Hom{\Registerleone_\oplus}{\top}{R(\iota_\oplus(\cI(U)))} \xrightarrow{\;I\;}
\Hom{\Register_\oplus}{\top}{R(\iota_\oplus(\cI(U)))} 
\to
\Hom{\Register_\oplus}{\top}{\cI(U)} \xrightarrow{\;\sim\;}
\LTree_{\bf \Gamma}(U)
\]
where the mediating arrow is the post-composition by $\epsilon_{\iota_\oplus(\cI(U))}$.}
\[\Hom{\Registerleone_\oplus}{\top}{R(\iota_\oplus(\cI(U)))} \cong \LTree_{\bf \Gamma}(U)\]
which can be nicely pictured, provided we actually compute recursively $R(\iota_\oplus(\cI(U)))$ and spell out a normal form
\[R(\iota_\oplus(\cI(U))) ~~~\cong~~~ \bigoplus_{t \in {\bf PT}(U)} \bigtensor_{n \in N(t)} A_n\]
with all $A_n = \cI(\varnothing)$ or $A_n \cong \cI(1)$. It is always possible to build a suitable set ${\bf PT}(U)$ simply because
all objects of $\Registerleone_{\oplus}$ have this shape, but an intuitive definition of what one might call a set of
\emph{partitioning trees over $U$} is also possible for $\bf PT$, and $N(t)$ would then correspond to the nodes of the trees.
We skip defining this notion formally, but note that the announced bijection would then match trees with $U$-many holes with
pairs $(t, (u_i)_{i \in N(t)})$ of a partitioning tree $t$ and a family of trees with at most one-hole $(u_i)_{i \in N(t)}$.
This bijective correspondence is pictured in Figure~\ref{fig:ltree-ex-decompose}.
\begin{figure}
\center
\includegraphics[scale=0.65]{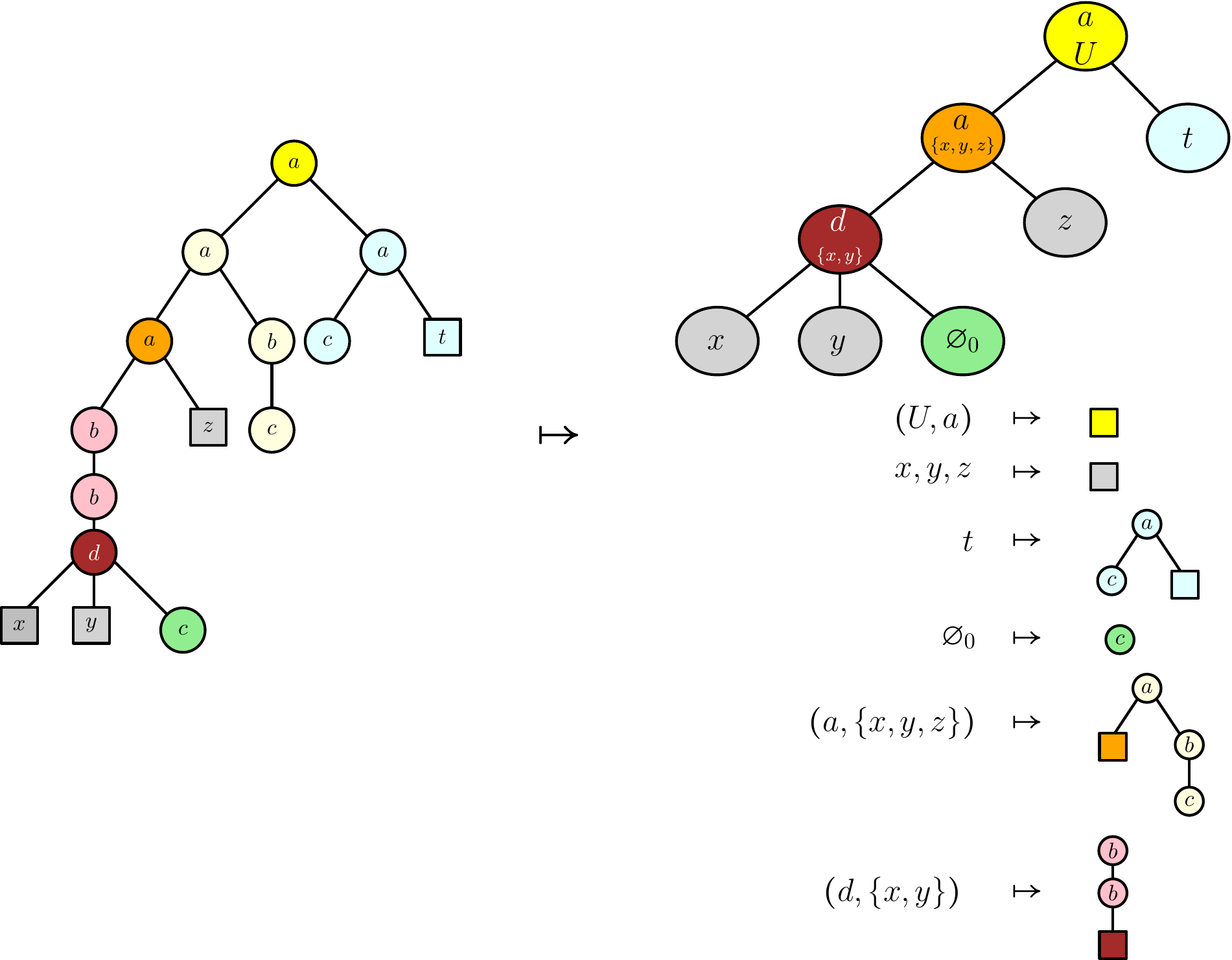}
\vspace{1.5em}
\hrule

\caption{Decomposition of a multi-hole tree $\LTree_{\{a : 2, b : 1, c : 0, d : 3\}}(\{x,y,z,t\})$ as a tuple consisting of a partitioning tree and trees with at most one hole.}
\label{fig:ltree-ex-decompose}
\end{figure}

Now, we define $\epsilon_{R(\iota_\oplus(\cI(U)))} \in \Hom{\Registerleone_\oplus}{I(R(\iota_\oplus(\cI(U))))}{\iota_\oplus(\cI(U))}$, by induction over the
size of $U$. If $\card{U} \le 1$, we take $\epsilon_{R(\iota_\oplus(\cI(U)))} : \iota_\oplus(\cI(U)) \to \iota_\oplus(\cI(U))$ to be the identity.
Otherwise, we need to define a map
\[
g_U ~~:~~ I\left(\bigoplus_{b \in \Gamma} \bigoplus_{\substack{f : U \to \arity(b) \\\text{nonconstant}}} \bigtensor_{x \in \arity(b)} R(\iota_\oplus(\cI(f^{-1}(x)))) \right)
~~\longrightarrow~~ \iota_\oplus(\cI(U))
\]
or, equivalently, a family of $\Register_\oplus$-maps indexed by $b \in \Gamma$ and $f : U \to \arity(b)$ non-constant
\[
g_{b,f} ~~:~~ \bigtensor_{x \in \arity(b)} I(R(\iota_\oplus(\cI(f^{-1}(x)))))
 ~~\longrightarrow~~ \iota_\oplus(\cI(U))
\]
By the induction hypothesis, we have a family of $\Register_\oplus$-maps $(g_x)_{x \in \arity(b)}$
\[
g_x ~~:~~ I(R(\iota_\oplus(\cI(f^{-1}(x))))) ~~\longrightarrow~~ \iota_\oplus(\cI(f^{-1}(x)))
\]
We define $g_{b,f}$ as the composite
\[
\bigtensor_{x \in \arity(b)} I(R(\iota_\oplus(\cI(f^{-1}(x)))))
~\xrightarrow{~~\bigtensor_x g_x~~}~
\displaystyle
\bigtensor_{x \in \arity(b)} \iota_\oplus(\cI(f^{-1}(x)))
~\xrightarrow{~~\overline{b}~~}~
\iota_\oplus(\cI(U))
\]
where $\overline{b}$ is obtained from the map of $\Registermc$ which intuitively takes a family $(t_x)_{x \in \arity(b)}$ of trees into a
single tree $b((t_x)_{x \in \arity(b)})$ (officially, the tree $b((*)_{x \in \arity(b)}) \in \LTree_{\bf \Gamma}(\cO(U))$.

Now, $R$ is defined on objects of the shape $\iota_\oplus(\cI(U))$, as well as $\epsilon$, so we need to extend this to the whole category
$\Register_\oplus$. Recall that every object $A$ of $\Register_\oplus$ can be written as $A = \bigoplus_{v \in V} \bigtensor_{j \in J_u} \iota_\oplus(\cI(U_j))$.
In the end, the functor $R$ is expected to be strong monoidal, and we may force it to preserve coproducts, so we set
\[R(A)
~~ = ~~
\bigoplus_{v \in V} \bigtensor_{j \in J_u} R(\iota_\oplus(\cI(U_j))) \qquad\qquad\qquad
\epsilon_A ~~=~~ \bigoplus_{v \in V} \bigtensor_{j \in J_u} \epsilon_{\iota_\oplus(\cI(U_j))}
\]
\end{proof}

\subsubsection{Relationship to $\laml$}
Recall that if ${\bf \Gamma} = \{ a_1 : A_1, \ldots, a_k : A_k \}$ is an output alphabet, 
we call $\widetilde{\bf \Gamma}$ the context $a_1 : \basety \lin \ldots \lin \basety, \ldots, a_k : \basety \lin \ldots \lin \basety$ where the
type of $a_i$ has $\card{A_i}$ arguments. Definition~\ref{def:syntacticcat} provides us with a suitable affine monoidal closed category
$\Lamcat(\widetilde{\bf \Gamma})$, which we still call $\Lamcat$ when $\widetilde{\bf \Gamma}$ is clear from context. Since we have a monoidal product,
we may easily adapt Definition~\ref{def:syntacticstreaming} to get a tree streaming setting $\mfLam$.
Then we may relate $\laml$-definability to (single-state) $\mfLam$-BRTT.

\begin{lem}
\label{lem:lamlbrtt}
Computability by single-state $\mfLam$-BRTTs and $\laml$-definability are
equivalent for functions $\Tree({\bf \Sigma}) \to \Tree({\bf \Gamma})$.
\end{lem}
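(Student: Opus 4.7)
The plan is to follow the template established by Lemma~\ref{lem:lamlsst} for strings, adapting the key ideas to the ranked tree setting. There are two directions to establish, and both benefit substantially from the fact that Church encodings of trees naturally implement a bottom-up fold, which matches the semantics of $\mfLam$-BRTTs directly (in contrast to the string case, where the left-to-right traversal of SSTs had to be reconciled with the right-to-left fold of Church-encoded strings via the $\mathtt{reverse}$ combinator of Example~\ref{ex:laml-rev}).

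For the ``BRTT to $\laml$'' direction, suppose we have a single-state $\mfLam$-BRTT with memory object $\tau$ (a purely linear $\laml$-type), transition data $\delta_{a} \in \Hom{\Lamcat}{\tau^{\tensor\arity(a)}}{\tau}$ for each $a \in \Sigma$, and output morphism $o \in \Hom{\Lamcat}{\tau}{\basety}$. Using the symmetric monoidal closedness of $\Lamcat$, I would curry each $\delta_{a}$ into a term $d_a : \tau \lin \ldots \lin \tau \lin \tau$ with $\card{\arity(a)}$ arguments, typable in $\widetilde{\bf \Gamma}$. Writing ${\bf \Sigma} = \{a_1 : A_1, \ldots, a_n : A_n\}$ and ${\bf \Gamma} = \{b_1 : B_1, \ldots, b_k : B_k\}$, I then form
\[
\lambda s.\; \lam^\oc b_1.\; \ldots\; \lam^\oc b_k.\; o\;(s\; d_{a_1}\; \ldots\; d_{a_n})
\]
of type $\Treety_{\bf \Sigma}[\tau] \lin \Treety_{\bf \Gamma}$. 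A direct induction on $t \in \Tree({\bf \Sigma})$, using the definition of Church encoding and that of $\delta^*_\cC$, shows that $\underline{t}\; d_{a_1}\;\ldots\;d_{a_n}$ is $\lamlequiv$-equivalent to $\delta^*_\cC(t) \circ \varphi$ (where $\varphi$ is the canonical iso $\unit \to \unit$ at leaves), and so the term above defines the same function as the BRTT.

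For the converse, given $\mathtt{f} : \Treety_{\bf \Sigma}[\kappa] \lin \Treety_{\bf \Gamma}$ witnessing $\laml$-definability, I would invoke Lemma~\ref{lem:laml-niceshape} (which was specifically stated for ranked alphabets precisely to be usable here) to obtain, up to $\lamlequiv$, a decomposition
\[
\mathtt{f} \;=_{\lamlequiv}\; \lambda s.\; \lam^\oc b_1.\; \ldots\; \lam^\oc b_k.\; o\;(s\; d_1\; \ldots\; d_n)
\]
with $\widetilde{\bf \Gamma};\cdot \vdash o : \kappa \lin \basety$ and $\widetilde{\bf \Gamma};\cdot \vdash d_i : \kappa \lin \ldots \lin \kappa$ ($\card{A_i}$ arguments). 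Uncurrying each $d_i$ into a $\Lamcat$-morphism $\kappa^{\tensor A_i} \to \kappa$, I would then read off a single-state $\mfLam$-BRTT with memory $\kappa$, transition data $(d_i)_{i}$ and output $o$, whose induced function coincides with $\mathtt{f}$ on every Church-encoded input by the same fold computation as above.

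The main obstacle is entirely in invoking Lemma~\ref{lem:laml-niceshape}: its proof is the delicate part, requiring the normalization theory of $\laml$ developed in \Cref{sec:laml-normalization} and a case analysis carried out in \Cref{sec:laml-niceshape} to rule out spurious interactions between $s$ and the ambient purely linear fragment. Once the normal form is in hand, the remainder of the argument is a mechanical translation between the multicategorical presentation of $\Lamcat$-transitions (as curried terms) and their tensorial presentation (as morphisms of $\Lamcat$); the correctness of the construction in both directions then follows from a straightforward structural induction verifying that Church-encoded folding agrees with the iterated transition function $\delta^*_\cC$ of a BRTT.
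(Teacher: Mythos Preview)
Your proposal is correct and follows essentially the same approach as the paper: the paper's proof is a brief sketch that points to Lemma~\ref{lem:lamlsst} and Lemma~\ref{lem:laml-niceshape}, explicitly noting (as you do) that the tree case is more direct because the bottom-up fold of Church encodings already matches BRTT semantics, so no analogue of the $\mathtt{reverse}$/CPS trick from the string case is needed. Your filling-in of the details (currying/uncurrying the transition morphisms and verifying correctness by structural induction on the input tree) is exactly what the paper leaves implicit.
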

\begin{proof}[Proof idea]
This is proven in a similar way as Lemma~\ref{lem:lamlsst}, based on the syntactic Lemma~\ref{lem:laml-niceshape}.
The proof is even more straightforward as there is no mismatch
between the processing of trees by BRTTs and $\laml$-terms working with Church encodings, contrary to
SSTs for strings (which operate top-down rather than bottom-up when regarding strings as trees).
\end{proof}

We can now notice that $\mfLam$-BRTTs are more expressive than $\Registerleone$-BRTTs thanks to the notion of streaming setting morphisms,
much like with strings (this generalizes Lemma~\ref{lem:register-to-laml}).
\begin{lem}
\label{lem:register-to-laml-tree}
There is a morphism of streaming settings $\REGISTER \to \mfLam$.
\end{lem}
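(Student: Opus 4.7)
The approach will be to imitate the string construction of Lemma~\ref{lem:register-to-laml}, but with a direct \enquote{manual} definition rather than routing through an abstract result. The driving intuition is that an object $A$ of $\Registermc$ (a finite set of holes) should be interpreted as the purely linear type of trees with $A$-many holes, namely $\basety^{\otimes A} \lin \basety$, and that a multimorphism $\alpha \in \LTree_{\bf \Gamma}((A_i)_{i \in I} \tensor \cO(B))$ -- i.e.\ a linear tree expression -- will be interpreted as the $\laml$-term obtained by reading off $\alpha$ literally, using the constants of $\widetilde{\bf \Gamma}$ for the letters of ${\bf \Gamma}$, the variables indexed by $B$ as linear arguments of type $\basety$, and the free variables indexed by $I$ for the register variables.

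First I would define the underlying functor $F : \Register \to \Lamcat$ on objects by
\[ F\!\left(\bigtensor_{i \in I} A_i\right) \;=\; \bigtensor_{i \in I} \left((\basety^{\otimes A_i} \lin \basety) \with \unit\right), \]
so that $F(\top) = \unit$ and $F(\cI(\varnothing)) = (\unit \lin \basety) \with \unit$, and extend it to morphisms as follows. A $\Register$-morphism is a pair $(f : I \partto J,\,(\alpha_j)_{j\in J})$ with each $\alpha_j$ a linear tree; the tree $\alpha_j$ reads off uniquely into a $\laml$-term $\lceil\alpha_j\rceil$ of type $\basety^{\otimes B_j} \lin \basety$ using free variables $x_i : \basety^{\otimes A_i} \lin \basety$ for $i \in f^{-1}(j)$ and the constants in $\widetilde{\bf \Gamma}$. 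Out of this, I would build the image $F(f,(\alpha_j)_j) : F(\bigotimes_i A_i) \lin F(\bigotimes_j B_j)$ by a standard \enquote{let-in-tuple} pattern: destructure each input $(\basety^{\otimes A_i} \lin \basety) \with \unit$, apply $\pi_1$ to those $i \in \dom(f)$ to obtain $x_i$, discard the $(\dots)\with\unit$-components of $i \notin \dom(f)$ via $\pi_2$ followed by the unit eliminator, and reassemble the family $(\lambda\text{-abstraction of }\lceil \alpha_j\rceil)_{j\in J}$, each paired with the trivial unit inhabitant $()$.

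The next step is to verify that $F$ so defined is a (strong) monoidal functor. Functoriality with respect to identities is immediate. The interesting point is that composition in $\Register$ is defined by substitution of linear trees into linear trees (see Definition~\ref{def:Registermc}), and under the reading $\alpha \mapsto \lceil\alpha\rceil$ this substitution corresponds exactly to capture-avoiding substitution of $\laml$-terms at the variables indexed by $B$; after $\beta\eta$-conversion this is precisely the composition of $\laml$-morphisms. The monoidal coherence with $\tensor$ is direct since $F$ is defined pointwise as a tensor of components. Once this is done, the output morphism is taken to be $o = \lambda p.\;\pi_1(p)\,() \;:\; F(\cI(\varnothing)) \lin \basety$ (together with $i = \id_\unit$, since $F(\top) = \unit$), and the compatibility condition $\curlyinterp{o \circ F(t)}_\mfLam = \curlyinterp{t}_\REGISTER$ for $t \in \Hom{\Register}{\top}{\cI(\varnothing)}$ is verified by induction on the tree $t \in \LTree_{\bf \Gamma}(\varnothing) \cong \Tree({\bf \Gamma})$: one shows that the $\laml$-term $o \circ F(t)$ $\beta\eta$-normalizes to the Church encoding of $t$.

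The main obstacle, in my view, lies not in any conceptual difficulty but in the amount of bureaucracy needed to make the action on morphisms rigorous: one has to track partial functions $f : I \partto J$, perform the appropriate discarding of unused components using the $\with\unit$ trick, and check that the resulting assignment respects $\Register$-composition up to $=_{\beta\eta}$. In particular, the case where $\alpha_j$ contains a \enquote{register hole} $\inr(\inl(j'))$ is the one where substitution in $\LTree_{\bf \Gamma}$ performs a non-trivial tree grafting, and it corresponds exactly to $\beta$-reducing an application of a $\basety^{\otimes A_{j'}} \lin \basety$-function to a tuple of $\basety$-valued arguments. Because the linearity discipline in the definition of $\LTree_{\bf \Gamma}$ matches the linear discipline of $\laml$ precisely, no duplication is ever required, so this step goes through, but the combinatorial bookkeeping is what needs to be spelled out carefully.
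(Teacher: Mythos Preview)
Your proposal is correct and follows essentially the same approach as the paper's proof sketch: the same object assignment $F(\bigtensor_i A_i) = \bigtensor_i ((\basety^{\tensor A_i} \lin \basety) \with \unit)$, the same idea of reading off a linear tree as its Church encoding to define the action on multimorphisms, and the same use of $\pi_2$ on the $\with\unit$ component to discard registers outside $\dom(f)$. Your write-up is in fact more detailed than the paper's (which leaves the output morphism and the verification implicit), but the underlying construction is identical.
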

\begin{proof}[Proof sketch]
Let us focus on the underlying functor $F : \Register \to \Lamcat$.
For objects (which are finite sets), we
put
\[F\left(\bigtensor_{i \in I} U_i\right) =  \bigtensor_{i \in I}\left(\left(\basety^{\tensor U_i} \lin \basety\right) \with \unit\right)\]
A multimorphism $f \in \Hom{\Registermc}{(U_i)_{i \in I}}{V}$ is an element of $\LTree_{\bf \Gamma}\left(\left(\bigtensor_{i \in I} \cI(U_i)\right) \tensor \cO(V)\right)$
which has a Church encoding $\underline{f}$ which has a type isomorphic to
$\bigtensor_{i \in I} \left( \basety^{\tensor {U_i}} \lin \basety\right) \lin \basety$, and thus embeds into
$F\left(\bigtensor_{i \in I} U_i\right)$ through well-typed term $\iota$.
We take $F(\mcattoaff{f}) = \lam x. \iota \; \underline{f}$, and extend this definition to
arbitrary morphisms $(f, (\alpha_j)_j) : \bigtensor_{i \in I} U_i \to \bigtensor_{j \in J} V_j$ in $\Register$ by first using the second projection $\pi_2$ to restrict to the case where $\dom(f) = I$, and then by piecing together the $F(\mcattoaff{(\alpha_j)})$.
\end{proof}

\begin{cor}
\label{cor:registeropluswith-to-laml-tree}
There is a morphism of streaming settings $\REGISTER_{\oplus\with} \to \mfLam$.
\end{cor}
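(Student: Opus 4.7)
The plan is to obtain the desired morphism by extending the streaming setting morphism $\REGISTER \to \mfLam$ from Lemma~\ref{lem:register-to-laml-tree} along the embedding $\iota_{\oplus\with} : \Register \to \Register_{\oplus\with}$, exploiting the universal properties of the free product and coproduct completions.

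First, I would apply the universal property of the free product completion to the functor $F_0 : \Register \to \Lamcat$ underlying the morphism of Lemma~\ref{lem:register-to-laml-tree}: since $\Lamcat$ has finite products (realized by the additive conjunction `$\with$'), one obtains an extension $F_1 : \Register_\with \to \Lamcat$ that preserves finite products, with $F_1 \circ \iota_\with = F_0$. Invoking next the universal property of the free coproduct completion, and using the fact that $\Lamcat$ has finite coproducts (the connective `$\oplus$'), one lifts $F_1$ to a coproduct-preserving functor $F : (\Register_\with)_\oplus \cong \Register_{\oplus\with} \to \Lamcat$ with $F \circ \iota_{\oplus\with} = F_0$.

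Next, I would endow $F$ with a lax monoidal structure. Recalling that the tensor product of $\Register_{\oplus\with}$ is computed by forcing distributivity over both `$\oplus$' and `$\with$', the construction of the required morphism $F(A) \otimes F(B) \to F(A \otimes B)$ decomposes into two elementary steps. In $\Lamcat$, the tensor `$\otimes$' is a left adjoint to `$\lin$', so it genuinely distributes over coproducts: the `$\oplus$'-part of the lax structure consists of isomorphisms. For the `$\with$'-part, there is in general no isomorphism, but the universal property of the product supplies a canonical map $\bigl(\bigwith_x M_x\bigr) \otimes \bigl(\bigwith_y N_y\bigr) \to \bigwith_{x,y}(M_x \otimes N_y)$ in $\Lamcat$, obtained by pairing the maps $\pi_x \otimes \pi_y$. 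Combining these two canonical maps with the strong monoidal structure of $F_0$ yields the lax structure maps for $F$; verifying the coherence diagrams of Definition~\ref{def:monfunctor} reduces to standard diagram chases using the coherence of $F_0$, the universal properties of $\with$ and $\oplus$, and Mac Lane's coherence theorem.

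Finally, the output morphism $o : F_0(\cI(\varnothing)) \to \basety = \retty_\mfLam$ and the unit map $i : \unit_\mfLam \to F_0(\unit_\Register)$ from Lemma~\ref{lem:register-to-laml-tree} can be reused directly, since $\retty_{\REGISTER_{\oplus\with}} = \iota_{\oplus\with}(\cI(\varnothing))$ and $F$ extends $F_0$. For the output coherence, since $\iota_{\oplus\with}$ is full and faithful, every $f \in \Hom{\Register_{\oplus\with}}{\unit}{\retty_{\REGISTER_{\oplus\with}}}$ is of the form $\iota_{\oplus\with}(f_0)$ with $\curlyinterp{f}_{\REGISTER_{\oplus\with}} = \curlyinterp{f_0}_\REGISTER$, and the coherence from Lemma~\ref{lem:register-to-laml-tree} transfers directly. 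The main obstacle is the bureaucratic verification that the two canonical families of maps glue into a genuine lax monoidal structure compatible with the symmetric monoidal coherences; no conceptual difficulty is expected, but care is needed with associators and the interplay between the two completions.
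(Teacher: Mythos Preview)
Your proposal is correct and follows essentially the same approach as the paper: extend the functor $\Register \to \Lamcat$ from Lemma~\ref{lem:register-to-laml-tree} via the universal properties of the $(-)_\with$ and $(-)_\oplus$ completions, then equip the resulting functor with a lax monoidal structure using the canonical maps $\bigl(\bigoplus_u\bigwith_x A_x\bigr)\otimes\bigl(\bigoplus_v\bigwith_y B_y\bigr) \to \bigoplus_{u,v}\bigwith_{x,y} A_x\otimes B_y$. Your decomposition of this canonical map into an isomorphism on the $\oplus$-layer (via monoidal closure) and a genuine non-invertible map on the $\with$-layer is slightly more explicit than the paper's sketch, but the argument is the same.
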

\begin{proof}[Proof idea]
Starting from Lemma~\ref{lem:register-to-laml-tree}, we have a functor $\Register \to \Lamcat$. Since $\Lamcat$ has all products and
coproducts, the universal properties of the $(-)_\with$ and $(-)_\oplus$ completion yield a functor $F : \Register_{\oplus\with} \to \Lamcat$.
The monoidal structure of the initial functor $\Register\to\Lamcat$ can be lifted accordingly.
For any finite family of objects $(((A_k)_{k \in K_j})_{j \in J_i})_{i \in I}$ sitting in a symmetric monoidal closed category with
products and coproducts, there are canonical morphisms
\[
\left(\bigoplus_{u \in U} \bigwith_{x \in X_u} A_{x}\right)
\tensor
\left(\bigoplus_{v \in V} \bigwith_{v \in Y_v} B_y\right)
~~\longto~~
\bigoplus_{(u,v) \in U \times V} \bigwith_{(x,y) \in X_u \times Y_v} A_x \tensor B_y
\]
which are not isomorphisms in general, but constitute the non-trivial part of the lax monoidal structure of $F$; $m^0 : \unit \to F(\unit)$ is actually the identity.
\end{proof}

\subsection{$\REGISTER_\with$-BRTTs coincide with regular functions, via
  coherence spaces}
\label{subsec:with-trees}

We define a streaming setting $\REGISTERconflict$ and its restriction
$\REGISTERleoneconflict$ (with respective underlying categories
$\Registerconflict$, $\Registerleoneconflict$) so that
$\REGISTERleoneconflict$-BRTTs coincide with Alur and D'Antoni's
  notion of single-use-restricted BRTT~\cite{BRTT}, which they showed to characterize regular tree functions.
We then show that there are morphisms of streaming settings $\REGISTER_\with \to \REGISTERconflict \to \REGISTER_\with$ and thus establish that $\REGISTER_\with$-BRTTs capture exactly regular tree functions.

Much like $\Register$, the category $\Registerconflict$ is obtained by applying a generic
construction to $\Registermc$, taking weak symmetric multicategories to symmetric affine monoidal
categories. In particular, objects of $\Registerconflict$ will consist of formal tensor products
of objects of $\Registermc$. The main difference is that morphisms of $\Registerconflict$ will induce
a dependency \emph{relation} $D \subseteq I \times J$ over indexing sets, rather than a partial function $J \partto I$.
This corresponds to a relaxation of the copylessness condition.
However, objects of $\Registerconflict$ will also be equipped
with a \emph{conflict relation} $\incoh$ over their indexed sets, and $D$ will be required to satisfy a
linearity constraint. Calling $\coh$ the dual \emph{coherence relation} such that $x \coh y$ is equivalent to $x = y \vee \neg (x \incoh y)$, if we have $(i,j) \in D$ and $(i',j') \in D$, the linearity constraint enforces
\[
j \coh_J j' ~~ \Rightarrow ~~ i \coh_I i'  \qquad \text{and} \qquad i \incoh_I i' ~~\Rightarrow~~ j \incoh_J j'
\]
This corresponds to the \emph{single use restriction} imposed on
BRTTs~\cite[\S2.1]{BRTT}, whose introduction was motivated in
\Cref{subsubsec:prelim-trees-inputs}.
\begin{exa}
  \label{ex:condswap-sru}
  The BRTT that we gave for the \enquote{conditional swap} function in
  Example~\ref{ex:brtt-condswap} is single-use-restricted according to the above
  by taking its two registers (i.e.\ objects of $\Registermc$) to be in
  conflict.
\end{exa}
But as our choice of notation and vocabulary suggests, this is also related to
the category of (finite) \emph{coherence spaces}, the first denotational model
of linear logic~\cite{girardLL} (predated by a similar semantics for
system~F~\cite{girardF}). As far as we know, this observation is new (the
conflict relation is denoted by $\mathbf{\eta}$ in~\cite{BRTT}, while $\incoh$
comes from the linear logic literature). The coherence semantics of the linear
$\lambda$-calculus has been used in particular by Gallot, Lemay and
Salvati~\cite{LambdaTransducer} to analyze a top-down tree transducer model
containing linear $\lambda$-terms. Unlike them, we do not use coherence spaces
\emph{as a semantics} here; what happens here is much closer the use of a
coherence/conflict relation to handle additive connectives -- we will indeed
show a connection with the $\with$-completion -- in proof nets,
see~\cite[Appendix~A.1]{MonomialNets} and~\cite{ConflictNets}.

\begin{defi}[see e.g.~{\cite[\S2.2.3]{LLSS}}]
A coherence space $I$ is a pair $(\web{I}, \coh_I)$ of a set $\web I$, called the \emph{web}, and a binary reflexive symmetric
relation $\coh_I$ over $\web I$ called the \emph{coherence relation}.
As usual, given a coherence relation $\coh$, we write $\incoh$ for the dual defined by $i \incoh i' \Leftrightarrow (i = i' \vee \neg (i \coh i'))$.
Finite coherence spaces are those coherence spaces whose webs are finite.
A \emph{linear map} of coherence spaces $f : I \to J$ is a relation $f \subseteq \web{I} \times \web J$ such that, whenever $(i,j) \in f$ and $(i',j') \in f$, we have
\[
i \coh_I i' ~~ \Rightarrow ~~ j \coh_J j'  \qquad \text{and} \qquad j \incoh_J j' ~~\Rightarrow~~ i \incoh_I i'
\]
Note that these are the \emph{converse implications} of those stated above for BRTTs.

The diagonal $\{(i,i) \mid i \in \web I\}$ is a linear map $I \to I$ and the relational composition of
two linear maps $I \to J \to K$ is again a linear map, so that we have
a category $\Fincoh$ whose objects are coherence spaces and morphisms are linear maps.
\end{defi}

$\Fincoh$, equipped with the tensorial product
\[(\web I, \coh_I) \tensor (\web J, \coh_J) = (\web I \times \web J, \coh_I \times \coh_J)\] and dualizing object $(1, 1 \times 1)$,
is a well-studied $*$-autonomous category with cartesian products and coproducts.
The latter may be defined pointwise as
\[(\web I, \coh_I) \oplus (\web J, \coh_J) = (\web I + \web J,\; \coh_{I\oplus
    J})\]
where $\coh_{I\oplus J}$ is the \emph{smallest} relation such that
\[ \inl(i) \coh_{I \oplus J} \inl(i') \ \;\text{when}\ \;i \coh_I i' \qquad\text{and}\qquad
\inr(j) \coh_{I \oplus J} \inr(j') \ \;\text{when}\ \; j \coh_J j'\]
Dualizing an object corresponds to
moving from $\coh$ to $\incoh$, i.e. $(\web I, \coh_I)^\bot = (\web I,
\incoh_I)$, and the product is $I \with J = (I^\bot \oplus J^\bot)^\bot$.

With this in mind, we can describe how to turn a multicategory into an affine monoidal category
where monoidal products may be indexed by coherence spaces. The construction has
a vague family resemblance with the
\emph{coherence completion of categories} introduced by Hu and Joyal~\cite{HuJoyal},
but appears to have quite different properties.

\begin{defi}
Let $\cM$ be a weak symmetric multicategory. We define $\toconflict{\cM}$ to be the category
\begin{itemize}
\item whose objects are pairs $(X, (R_x)_{x \in \web X})$ where $X$ is a finite coherence space
and $(R_x)_{x \in \web X}$ a family of objects of $\cM$. We suggestively write
them $\bigodot_{x \in X} R_x$.
\item whose morphisms
\[(f, (\alpha_y)_{y \in \web Y}) \in \Hom{\toconflict{\cM}}{\bigodot_{x \in X} R_x}{\bigodot_{y \in Y} S_y}\]
are pairs consisting of a linear map $f \in \Hom{\Fincoh}{Y}{X}$ and a family of
multimorphisms $\alpha_y \in \Hom{\cM}{(R_x)_{x \in f(y)}}{S_y}$.
\item whose identities are pairs $(\id_X, (\id_{R_x})_{x \in \web X})$.
\item where the composition of                                \[(f, (\alpha_y)_{y \in \web Y}) \in \Hom{{}}{\bigodot_{x \in X} R_x}{\bigodot_{y \in Y} S_y} ~~\text{and}~~
(g, (\beta_z)_{z \in \web Z}) \in \Hom{{}}{\bigodot_{y \in Y} S_y}{\bigodot_{z \in Z} T_z}\]
is $(f \circ g, (\beta_z * (\alpha_y)_{y \in g(z)})_{z \in Z})$.
\end{itemize}
\end{defi}

\begin{defi}
  We set $\Registerconflict = \toconflict{(\Registermc)}$ and
  $\Registerleoneconflict$ to be its full subcategory consisting of objects
  $\bigodot_{x \in X} A_x$ where each $A_x$ is either empty or a singleton (so
  that $\Registerleoneconflict$ is isomorphic to
  $\toconflict{(\Registerleonemc)}$).
\end{defi}
\begin{prop}
  \label{prop:brtt-regular}
  BRTTs over the restricted tree streaming setting $\REGISTERleoneconflict$ compute
  exactly the regular tree functions.
\end{prop}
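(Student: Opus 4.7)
My proof plan is essentially a dictionary translation between the definitions, followed by an appeal to the characterization theorem of Alur and D'Antoni~\cite{BRTT}. The work that needs to be done is bureaucratic rather than conceptual, because the streaming setting $\REGISTERleoneconflict$ was \emph{designed} to match the single-use-restricted BRTT model, but one still has to verify the match carefully.

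\textbf{Translation of the data.} First I would unfold the definition of a $\REGISTERleoneconflict$-BRTT $(Q, R, \delta, o)$. Its memory object is of the form $R = \bigodot_{x \in X} A_x$ with each $A_x$ either $\varnothing$ or a singleton, and $X$ a finite coherence space. This corresponds precisely to a finite set $X$ of registers, each holding either a tree or a one-hole tree, together with an additional piece of static data: the conflict relation $\incoh_X$ over $X$. The multimorphisms in $\Registermc$ used in $\delta$ are exactly linear-in-registers tree expressions over $\bf \Gamma$, which is the form that register updates take in a BRTT. The output morphism $o \in \Hom{\Registerleoneconflict}{R}{\cI(\varnothing)}$ selects, for each global state, a single register-free tree expression using the registers, which matches the BRTT output convention.

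\textbf{Matching the single-use restriction.} The key point is that a morphism in $\toconflict{\Registermc}$ bundles a multimorphism together with a \emph{linear map of coherence spaces}, and this linearity constraint -- namely that $(i,j), (i',j') \in D$ forces $j \coh j' \Rightarrow i \coh i'$ and $i \incoh i' \Rightarrow j \incoh j'$ -- is, modulo notation, the single use restriction of~\cite[\S2.1]{BRTT}: two input registers that may be combined in the computation of a single output register must be \emph{coherent}, while two output registers computed from conflicting inputs are themselves conflicting. This is exactly the example we already verified by hand in Example~\ref{ex:condswap-sru} for the conditional-swap BRTT. I would check in detail that composition of two $\Registerleoneconflict$-morphisms yields precisely the same linearity bookkeeping as composing two SUR-respecting register updates along a run of a BRTT.

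\textbf{Conclusion.} Once the two models are put in bijection in this way, computing the same semantic function of type $\Tree({\bf \Sigma}) \to \Tree({\bf \Gamma})$ on both sides (the iterated bottom-up evaluation in the categorical BRTT mirrors the run of a classical BRTT), the proposition follows directly from~\cite[Theorem~4.6]{BRTT}, which states that single-use-restricted BRTTs capture exactly the regular tree functions.

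\textbf{Expected obstacle.} The only genuinely delicate point is ensuring that the \emph{coherence} relation (rather than an arbitrary equivalence or partial order on registers) is really what is needed. In~\cite{BRTT}, the SUR is presented via a conflict relation $\eta$ together with an additional piece of data, and one must check that the reflexive symmetric closure gives the same expressive power as a full coherence space structure. I would handle this by verifying that any BRTT in the sense of~\cite{BRTT} can be re-presented with its conflict relation extended to a coherence relation (and conversely, the coherence relation on $X$ induces a valid conflict relation in the sense of~\cite{BRTT}) without changing the computed function. The rest is routine unfolding.
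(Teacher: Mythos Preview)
Your approach is essentially the paper's: both argue that $\REGISTERleoneconflict$-BRTTs are, by design, a repackaging of Alur and D'Antoni's single-use-restricted BRTTs, and then invoke the characterization theorem from~\cite{BRTT}. The paper's proof is in fact even more terse than yours (it explicitly \enquote{leave[s] it as an exercise to formally match those two descriptions}).

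There is, however, one genuine point the paper raises that you miss: \cite{BRTT} (and the paper's own Appendix~\ref{sec:app-brtt}) treat BRTTs only over \emph{binary} trees, whereas $\REGISTERleoneconflict$-BRTTs handle arbitrary ranked alphabets. The paper closes this gap by noting that the equivalence in~\cite{BRTT} goes through macro tree transducers with single use restriction, which are known~\cite{MacroMSO} to characterize regular tree functions over arbitrary arities; hence the restriction to binary trees in the original BRTT definition is inessential. You would need to address this somewhere.

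Conversely, your \enquote{expected obstacle} is not a real obstacle. A conflict relation in the sense of~\cite{BRTT} is already just a reflexive symmetric relation (see Definition~\ref{def-conflict}), and the coherence relation $\coh$ of a coherence space is the same kind of data, dual via $x \incoh y \Leftrightarrow (x=y \vee \neg(x\coh y))$. There is no extension or closure to perform; the two formalisms are literally the same structure under two names (the paper even remarks that the notation $\eta$ in~\cite{BRTT} plays the role of $\incoh$).
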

\begin{proof}
  By virtue of being equivalent to Alur and D'Antoni's notion of
  single-use-restricted BRTT~\cite{BRTT}. We point the reader to
  \Cref{sec:app-brtt} for a self-contained definition of those not involving
  categories, and leave it as an exercise to formally match those two
  descriptions. Although~\cite{BRTT} and our \Cref{sec:app-brtt} only consider
  BRTTs over \emph{binary} trees, the proof of equivalence between the latter
  and regular tree functions goes through \emph{macro tree transducers} (with
  regular look-ahead and single use restriction) which are known to compute
  regular functions for trees over arbitrary ranked alphabets~\cite{MacroMSO},
  so everything can be made to work out with arbitrary arities in the end.
  \end{proof}

This being done, the remainder of this section does not depend on $\Registermc$; the arguments apply to any weak symmetric multicategory $\cM$
and designated object $\retty \in \Obj(\cM)$.

Accordingly, fix such an $\cM$ and a $\retty$ for the remainder of this section.
Fix also a set $O$ and a map $\curlyinterp{-} : \Hom{\cM}{(\cdot)_\varnothing}{\retty} \to O$.

\begin{prop}
  $\toconflict{\cM}$ has a terminal object, given by the unique family over the
  empty coherence space, and can be equipped with a symmetric monoidal affine
  structure $(\tensor, \top)$ where
\[\left(\bigodot_{i \in I} A_i \right) \tensor \left( \bigodot_{j \in J} B_j \right) ~~=~~ \bigodot_{x \in I \with J} 
\left\{ \begin{array}{ll}
          A_i & \text{if $x = \inl(i)$} \\
          B_j & \text{if $x = \inr(j)$} \\
        \end{array} \right.\]
      and $I \with J$ designates the cartesian product in $\Fincoh$.
\end{prop}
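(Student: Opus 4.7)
The plan is to verify the two claims in sequence, leaning on two independent sources of structure: the coherence-space layer (indexed by coherence spaces via the component $f \in \Hom{\Fincoh}{Y}{X}$) and the multicategorical layer (the families $(\alpha_y)_{y \in \web{Y}}$). The strategy is to reduce most coherence diagrams to known properties of the well-studied category $\Fincoh$, and let multicategory axioms of $\cM$ handle the rest.

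For the terminal object, let $\top$ be the object indexed by the empty coherence space (whose web is $\varnothing$), which uniquely determines an empty family of objects of $\cM$. Given any $\bigodot_{x \in X} R_x$, a morphism to $\top$ consists of a linear map $\varnothing \to X$ in $\Fincoh$ together with a $\web{\top}$-indexed family of multimorphisms. The first datum is the empty relation (vacuously a linear map) and the second is an empty family; both are uniquely determined. Hence $\top$ is terminal.

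For the monoidal structure, I would first check that the formula extends to a bifunctor. Given morphisms $(f_1, (\alpha_y)_{y \in \web{Y_1}}) : \bigodot_{x \in X_1} A_x \to \bigodot_{y \in Y_1} B_y$ and $(g_1, (\beta_z)_{z \in \web{Z_1}}) : \bigodot_{x \in X_2} A'_x \to \bigodot_{z \in Z_1} B'_z$, define the tensor on morphisms as the pair $(f_1 \with g_1,\; \gamma)$, where $f_1 \with g_1 \in \Hom{\Fincoh}{Y_1 \with Z_1}{X_1 \with X_2}$ is obtained from the functoriality of the cartesian product in $\Fincoh$ (acting as $\inl(y) \mapsto \inl(f_1(y))$ and $\inr(z) \mapsto \inr(g_1(z))$), and $\gamma_{\inl(y)} = \alpha_y$, $\gamma_{\inr(z)} = \beta_z$. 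Functoriality of $\tensor$ follows termwise: the coherence-space component from functoriality of $\with$ in $\Fincoh$, and the family component from the definition of composition in $\toconflict{\cM}$, where the substitution pattern for $\gamma$ splits cleanly along the two sides of the coproduct $\web{Y} + \web{Z}$.

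The unit $\top$ is indeed a unit for $\tensor$ because $\varnothing \with X$ (and $X \with \varnothing$) is canonically isomorphic to $X$ in $\Fincoh$, and the corresponding family of $\cM$-objects under this isomorphism is exactly $(R_x)_{x \in \web{X}}$; this gives the unitors $\lambda, \rho$. The associator $\alpha_{X,Y,Z}$ is defined analogously from the associator of $\with$ in $\Fincoh$ (both sides of the desired isomorphism index the families $(R_x, S_y, T_z)$ the same way once one identifies $\web{X} + (\web{Y} + \web{Z})$ with $(\web{X} + \web{Y}) + \web{Z}$), and the symmetry $\gamma_{X,Y}$ comes from the symmetry of $\with$. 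The Mac Lane pentagon and triangle, as well as the hexagonal coherences for symmetry, reduce on the coherence-space component to the corresponding coherences for $\with$ in $\Fincoh$ (which hold since $\with$ is a categorical product there, hence a canonical symmetric monoidal structure), and on the family component to triviality since the assigned multimorphisms are simply reindexed identities along the relevant bijections. Finally, affineness is immediate: the monoidal unit $\top$ is precisely the terminal object we identified in the first step. The main technical point to be careful about is the compatibility of the composition in $\toconflict{\cM}$, defined via $\beta_z *_{\,\cdot\,} (\alpha_y)$ in $\cM$, with the bifunctoriality of $\tensor$; this is where the axioms of weak symmetric multicategories (in particular, compatibility of $*$ with permutations and reindexing) are used, but only in routine bookkeeping.
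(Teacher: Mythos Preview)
Your proposal is correct and is precisely the kind of verification the paper has in mind: the paper's own proof is simply ``Left to the reader,'' noting only that there is a single sensible choice of bifunctorial action and structural morphisms. Your decomposition into the $\Fincoh$-layer (where coherences reduce to those of the cartesian product $\with$) and the multicategory layer (where the families are reindexed along the obvious bijections) is exactly the intended route.

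One small point of presentation: when you write ``acting as $\inl(y) \mapsto \inl(f_1(y))$,'' recall that linear maps in $\Fincoh$ are relations, so $f_1(y)$ denotes the image set $\{x \mid (y,x) \in f_1\}$ rather than a single element; this is consistent with the paper's notation elsewhere, but worth keeping explicit when checking that $\gamma_{\inl(y)} = \alpha_y$ has the right source family $(A_x)_{x \in f_1(y)}$ after reindexing along $\inl$.
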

\begin{proof}
  Left to the reader. Strictly speaking, later developments will depend on the
  precise structure itself and not merely on its existence, but there is a
  single sensible choice of bifunctorial action and structural morphisms making
  the above a monoidal product.
\end{proof}

\begin{rem}
To start making sense of the use of the cartesian product of $\Fincoh$,
there is a useful analogy with $\mcattoaff{\cM}$ here.
There is a projection functor $\mcattoaff{\cM} \to \Partfinset$ where $\Partfinset$
is the category of finite sets and partial functions. The tensorial product of $\mcattoaff{\cM}$
required a coproduct at the level of indices. Here, we have a projection functor $\toconflict{\cM}\to \Fincoh^\op$,
and we again use a coproduct at the level of indices (which becomes a product due to the contravariance).
\end{rem}

We call $\mcattoaff{\mfM}$ the tree streaming setting based on $\mcattoaff{\cM}$, $\retty$ and $\curlyinterp{-})$, and $\toconflict{\mfM}$ the corresponding tree streaming setting based on $\toconflict{\cM}$.
\begin{prop}
\label{prop:afftoconflict}
There is a full and faithful strong monoidal functor $\mcattoaff{\cM} \to \toconflict{\cM}$
extending to a morphism of streaming setting $\mcattoaff{\mfM} \to \toconflict{\mfM}$.
\end{prop}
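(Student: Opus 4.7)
The plan is to define the functor $F : \mcattoaff{\cM} \to \toconflict{\cM}$ by equipping every finite indexing set with the \emph{full} coherence relation. Explicitly, on objects, set $F\!\left(\bigtensor_{i \in I} A_i\right) = \bigodot_{x \in \overline{I}} A_x$, where $\overline{I} = (I, I \times I)$ is the finite coherence space whose web is $I$ and whose coherence relation is total. A key observation is that for such a coherence space, the dual incoherence $\incoh_{\overline{I}}$ coincides with \emph{equality} on $I$. On morphisms, send a pair $(f, (\alpha_j)_{j \in J})$ with $f : I \partto J$ to $(R_f, (\alpha_j)_{j \in J})$, where $R_f = \{(j,i) \in J \times I \mid f(i) = j\}$ is the converse graph of $f$.

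The verification that $R_f$ is a linear map $\overline{J} \to \overline{I}$ in $\Fincoh$ reduces to single-valuedness of $f$: the nontrivial linearity clause requires $(j,i), (j',i') \in R_f$ with $i \incoh_{\overline{I}} i'$ to imply $j \incoh_{\overline{J}} j'$, which, under full coherence, becomes ``$f(i) = j$ and $f(i) = j'$ entail $j = j'$''. Conversely, any linear map $h \subseteq J \times I$ between two full coherence spaces must satisfy exactly this condition, hence arises as $R_f$ for a unique partial function $f$; this yields the fullness and faithfulness of $F$. Functoriality follows from a direct relational computation: $R_{g \circ f} = R_f \circ R_g$ in $\Fincoh$, and the multimorphism component is defined by the same formula $\beta_k *_g (\alpha_j)_{j \in g^{-1}(k)}$ in both composition schemes.

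For the monoidal structure, I would compute the cartesian product of two full coherence spaces in $\Fincoh$ using $\overline{I} \with \overline{J} = (\overline{I}^{\bot} \oplus \overline{J}^{\bot})^{\bot}$. Since $\overline{I}^{\bot} = (I, =_I)$ and $\overline{J}^{\bot} = (J, =_J)$, their coproduct is $(I + J, =_{I+J})$, and dualising again gives $\overline{I + J}$. Combined with the matching definition of the tensor family ($x \mapsto A_i$ on the left summand, $x \mapsto B_j$ on the right), the tensor structures agree on the nose, so the coherence maps $m_0$ and $m_{A,B}$ may be chosen as identities; the associator, unitors and symmetries in $\mcattoaff{\cM}$ trace back to canonical isomorphisms on indexing sets, which match the corresponding structural isomorphisms in $\toconflict{\cM}$ once indices are equipped with full coherence.

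Finally, to promote $F$ to a morphism of streaming settings $\mcattoaff{\mfM} \to \toconflict{\mfM}$, note that the common $\retty$ object of $\cM$ sits in both categories as the singleton family $\bigtensor_{1} \retty$ and $\bigodot_{x \in \overline{1}} \retty$ respectively, and $F$ maps the former to the latter, allowing us to take $o$ to be the identity. The compatibility $\curlyinterp{o \circ F(h) \circ i} = \curlyinterp{h}$ then amounts to the fact that $F$ restricts to the identity on $\Hom{\cM}{(\cdot)_\varnothing}{\retty}$ under the canonical bijections used to define $\curlyinterp{-}$ on both sides. The only genuine obstacle in this proof is the coherence-space calculation for $\with$; everything else is bookkeeping that flows directly from the definitions.
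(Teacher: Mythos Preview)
Your proof is correct and follows essentially the same approach as the paper. Your coherence space $\overline{I}$ is exactly the paper's $\Delta(I)^\bot$ (the codiscrete coherence space over $I$), your converse-graph relation $R_f$ matches the paper's $\{(j,i)\mid j=f(i)\}$, and you spell out in detail the checks (linearity, full faithfulness, the computation $\overline{I}\with\overline{J}=\overline{I+J}$) that the paper declares straightforward and omits.
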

\begin{proof}
Call $F$ this functor, and, for any set $I$, write $\Delta$ for the functor
$\Partfinset \to \Fincoh$ taking a set $I$ to the discrete coherence space $\Delta(I) = (I, \{(i,i) \mid i \in I\})$.
Note that we have $\Delta(I)^\bot = (I, I \times I)$, which may be regarded as the codiscrete coherence space generated by $I$.
On objects of $\mcattoaff{\cM}$,  we define $F$ as
\[F\left(\bigtensor_{i \in I} A_i\right) = \bigodot_{i \in \Delta(I)^\bot} A_i\]
For morphisms $(f, (\alpha_j)_{j \in J}) \in \Hom{\mcattoaff{\cM}}{\bigtensor_{i \in I} A_i}{\bigtensor_{j \in J} B_j}$, we set
\[F(f, (\alpha_j)_{j \in J}) = (\{(j,i) \mid j = f(i)\}, (\alpha_j)_{j \in J})\]
It is rather straightforward to check that $F$ is indeed full, faithful and strong monoidal, and the extension to
a morphism $\mcattoaff{\mfM} \to \toconflict{\mfM}$ is immediate.
\end{proof}

\begin{prop}
  $\toconflict{\cM}$ also has cartesian products, which may be defined as
\[\left(\bigodot_{i \in I} A_i \right) \with \left( \bigodot_{j \in J} B_j \right) ~~=~~ \bigodot_{x \in I \oplus J} 
  \left\{ \begin{array}{ll}
            A_i & \text{if $x = \inl(i)$} \\
            B_j & \text{if $x = \inr(j)$} \\
          \end{array} \right.\]
\end{prop}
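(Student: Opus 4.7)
The plan is to verify that the given formula defines a cartesian product by exploiting the projection from $\toconflict{\cM}$ to $\Fincoh^\op$ and the fact (noted earlier in the excerpt) that $\oplus$ is the coproduct in $\Fincoh$. Writing $A = \bigodot_{i \in I} A_i$, $B = \bigodot_{j \in J} B_j$, $C = \bigodot_{z \in Z} C_z$, I would compute
\[
\Hom{\toconflict{\cM}}{C}{A \with B}
~=~
\sum_{f \in \Hom{\Fincoh}{I \oplus J}{Z}}\; \prod_{x \in \web{I \oplus J}} \Hom{\cM}{(C_z)_{z \in f(x)}}{D_x}
\]
where $D_{\inl(i)} = A_i$ and $D_{\inr(j)} = B_j$, directly from the definition of $\toconflict{\cM}$.

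Next, I would apply the universal property of $\oplus$ in $\Fincoh$ to split the linear map $f : I \oplus J \to Z$ canonically as a pair $(f_1, f_2)$ with $f_1 = f \circ \In_1 : I \to Z$ and $f_2 = f \circ \In_2 : J \to Z$; conversely, any such pair determines a unique $f = [f_1, f_2]$. Since $\web{I \oplus J} = \web{I} + \web{J}$, the product over $x \in \web{I \oplus J}$ also splits as two products indexed by $\web I$ and $\web J$ respectively, whose components match the definition of morphisms $C \to A$ and $C \to B$. Composing these two observations yields a bijection
\[
\Hom{\toconflict{\cM}}{C}{A \with B} ~\cong~ \Hom{\toconflict{\cM}}{C}{A} \times \Hom{\toconflict{\cM}}{C}{B}.
\]

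The projections $\pi_1 : A \with B \to A$ and $\pi_2 : A \with B \to B$ are identified by tracing the identity of $A \with B$ through the above bijection; concretely, they are the pairs $(\In_1, (\id_{A_i})_{i \in I})$ and $(\In_2, (\id_{B_j})_{j \in J})$, using the coproduct injections of $\Fincoh$. To finish, I would check naturality in $C$: a morphism $h : C' \to C$ acts on the left-hand side by pre-composition, and this pre-composition operates by composing its $\Fincoh$-part $h^\sharp$ with the map $[f_1, f_2]$ (resp.\ the multimorphism family) separately on each summand, which is precisely how pre-composition acts on the right-hand side. The universal property of cartesian products then follows from the standard Yoneda-style equivalence recalled in Remark~\ref{rem:coprod-univ}.

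The main obstacle, if there is one, is simply keeping track of the variance: morphisms of $\toconflict{\cM}$ project to $\Fincoh^\op$, so a product in $\toconflict{\cM}$ involves a coproduct in $\Fincoh$ at the level of indices. Once this is accepted, the argument is a mechanical verification, entirely parallel to the analogous dualization between $\mcattoaff{\cM}$ (whose tensor indexes sum via the \emph{coproduct} of $\Partfinset$, giving a monoidal product rather than a cartesian one) and $\toconflict{\cM}$ flagged in the earlier Remark following Proposition~\ref{prop:afftoconflict}.
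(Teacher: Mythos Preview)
Your proposal is correct and is exactly the kind of direct verification the paper expects: the paper's own ``proof'' is simply the sentence ``(The proof is left to the reader.)'', so there is nothing to compare against. Your computation of the hom-set bijection via the universal property of $\oplus$ in $\Fincoh$, the identification of the projections, and the naturality check are all accurate and constitute the intended argument.
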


(The proof is left to the reader.)
Therefore, we can extend Proposition~\ref{prop:afftoconflict}:
\begin{cor}
\label{cor:affwithtoconflict}
There is a functor $E : (\mcattoaff{\cM})_\with \to \toconflict{\cM}$ that is
full, faithful and lax (but \emph{not} strong) monoidal,
extending to a morphism of streaming settings $(\mcattoaff{\mfM})_\with \to \toconflict{\mfM}$.
\end{cor}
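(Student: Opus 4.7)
The plan is to construct $E$ by the universal property of the free product completion and verify the claimed properties one by one, leveraging Proposition~\ref{prop:afftoconflict} as much as possible. Since $\toconflict{\cM}$ has cartesian products and the strong monoidal functor $F : \mcattoaff{\cM} \to \toconflict{\cM}$ is in particular a functor to a category with finite products, the universal property of $(-)_\with$ yields a (unique up to unique natural isomorphism) product-preserving extension $E$ with $E \circ \iota_\with \cong F$. Concretely, on objects $E$ sends $\bigwith_{x \in X} A_x$ (with $A_x = \bigtensor_{i \in I_x} A_{x,i}$) to $\bigwith_{x \in X} F(A_x)$, which in $\toconflict{\cM}$ is indexed by the coherence space $\bigoplus_{x \in X} \Delta(I_x)^\bot$ (writing $\Delta(I)^\bot$ for the codiscrete coherence space on $I$, as in the proof of Proposition~\ref{prop:afftoconflict}).

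For fullness and faithfulness, I would exhibit a natural bijection between the homsets. On the source, we have by definition $\Hom{(\mcattoaff{\cM})_\with}{\bigwith_x A_x}{\bigwith_y B_y} = \prod_y \sum_x \Hom{\mcattoaff{\cM}}{A_x}{B_y}$. On the target, a morphism between the images is a pair of a linear map $\bigoplus_y \Delta(J_y)^\bot \to \bigoplus_x \Delta(I_x)^\bot$ together with families of multimorphisms. By the universal property of $\oplus$ in $\Fincoh$, such a linear map is a $Y$-indexed family of linear maps $\Delta(J_y)^\bot \to \bigoplus_x \Delta(I_x)^\bot$; and since $\Delta(J_y)^\bot$ is totally coherent while elements of distinct summands of $\bigoplus_x \Delta(I_x)^\bot$ are incoherent, each such map factors through a unique summand, i.e.\ picks an $x(y) \in X$ and a linear map $\Delta(J_y)^\bot \to \Delta(I_{x(y)})^\bot$. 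Linear maps between codiscrete coherence spaces are exactly converse graphs of partial functions, which recovers precisely the underlying partial functions of morphisms in $\mcattoaff{\cM}$ under the description of $F$ in Proposition~\ref{prop:afftoconflict}. Pairing this with the multimorphism data yields the desired bijection, natural in both variables.

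For the lax monoidal structure, I would use the distributivity $(\bigwith_x A_x) \tensor (\bigwith_y B_y) = \bigwith_{(x,y)} (A_x \tensor B_y)$ in $(\mcattoaff{\cM})_\with$ together with the strong monoidal structure of $F$ to compute $E\bigl((\bigwith_x A_x) \tensor (\bigwith_y B_y)\bigr) \cong \bigwith_{(x,y)} (F(A_x) \tensor F(B_y))$. The comparison $m$ would then be the canonical map $(\bigwith_x F(A_x)) \tensor (\bigwith_y F(B_y)) \to \bigwith_{(x,y)} (F(A_x) \tensor F(B_y))$ into the product, whose $(x,y)$-component is $\pi_x \tensor \pi_y$ using the product projections; the unit comparison is the identity since $E(\iota_\with(\unit)) = F(\unit) = \unit$. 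The coherence diagrams reduce to those for $F$. To see that $E$ is not strong, observe that on underlying coherence spaces this comparison corresponds to a canonical map $(\bigoplus_x \Delta(I_x)^\bot) \with (\bigoplus_y \Delta(J_y)^\bot) \to \bigoplus_{(x,y)} \Delta(I_x + J_y)^\bot$ in $\Fincoh$ which fails to be iso because $\with$ does not distribute over $\oplus$ in $\Fincoh$; an instance with $\card{X} = \card{Y} = 2$ and all $I_x, J_y$ empty already produces non-isomorphic coherence spaces on the two sides.

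Finally, to extend to a morphism of streaming settings, note that $\retty_{(\mcattoaff{\mfM})_\with} = \iota_\with(\retty_{\mcattoaff{\mfM}})$, so $E(\retty_{(\mcattoaff{\mfM})_\with}) = F(\retty_{\mcattoaff{\mfM}})$, and we may take the output morphism to be the one provided for $F$ by Proposition~\ref{prop:afftoconflict}; the unit morphism comes from the lax monoidal structure just established and equals the identity. The compatibility $\curlyinterp{o \circ E(f) \circ i} = \curlyinterp{f}$ for $f \in \Hom{(\mcattoaff{\cM})_\with}{\unit}{\retty}$ reduces, via fullness and faithfulness of $\iota_\with$ at the relevant homset (both $\unit$ and $\retty$ lying in its essential image), to the analogous property for $F$. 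The main obstacle is really the combinatorial bookkeeping in the fullness/faithfulness step, and specifically the factorization of linear maps out of codiscrete coherence spaces through a single summand of a coproduct — once that observation is crystallized, everything else is routine diagram-chasing and pushing the already-established properties of $F$ through the universal property of $(-)_\with$.
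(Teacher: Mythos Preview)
Your overall approach matches the paper's proof idea: obtain $E$ from the universal property of the free product completion applied to the functor $F$ of Proposition~\ref{prop:afftoconflict}, then build the lax monoidal structure via the canonical map $(\bigwith_i C_i)\tensor(\bigwith_j D_j)\to\bigwith_{(i,j)} C_i\tensor D_j$ available in any monoidal category with products. Your treatment of the streaming-setting morphism and of the failure of strongness is more detailed than, but consistent with, what the paper sketches.

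There is, however, a genuine gap in your fullness/faithfulness argument. The step \enquote{each such map factors through a unique summand} fails when the linear map $\Delta(J_y)^\bot \to \bigoplus_x \Delta(I_x)^\bot$ happens to be the empty relation: the empty relation is a perfectly good linear map out of any coherence space, and it selects no summand whatsoever. This arises in particular whenever $B_y=\top$ (so $J_y=\varnothing$), and it is not merely a patchable oversight in your bijection --- it breaks faithfulness outright. Concretely, take $\card{X}\ge 2$ and $Y=\{*\}$ with $B_*=\top$: on the source side
\[
\Hom{(\mcattoaff{\cM})_\with}{\bigwith_{x\in X} A_x}{\iota_\with(\top)} \;=\; \sum_{x\in X}\Hom{\mcattoaff{\cM}}{A_x}{\top}
\]
has $\card{X}$ elements (since $\iota_\with$ does not preserve terminal objects), whereas $E(\iota_\with(\top)) = F(\top) = \bigodot_\varnothing$ is the terminal object of $\toconflict{\cM}$, so the target homset is a singleton and $E$ collapses the $\card{X}$ distinct morphisms. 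The paper's brief proof idea does not address this point either; fortunately, the only downstream use of the corollary (Lemma~\ref{lem:conflict-with-equiv}) requires just the morphism of streaming settings, for which your argument is fine.
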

In the following proof and the rest of this section, we write explicitly $\toconflict\unit$ for the monoidal unit of $\tensor$ in $\toconflict{\cM}$
and ${\mcattoaff\unit}_\with$ for the unit in $(\mcattoaff\cM)_\with$.

\begin{proof}[Proof idea]
The universal property of the free product completion defines $E$ as the unique
product-preserving functor extending the functor of Proposition~\ref{prop:afftoconflict}. It remains to equip it with a lax monoidal functor structure. The map $m^0 : \toconflict\unit \to E({\mcattoaff\unit}_\with)$ is
an obvious isomorphism, while the natural transformation $m^2_{A,B} : E(A)
\otimes E(B) \to E(A \otimes B)$ can be obtained via the canonical map
\[\left(\bigwith_{i \in I} A_i\right) \tensor
\left(\bigwith_{j \in J} B_j\right) \to \bigwith_{(i,j) \in I \times J} A_i \tensor B_j\]
in $\toconflict{\cM}$ (it exists in all monoidal categories with products).
\end{proof}

We can now go the other way around.

\begin{lem}
There is a strong monoidal functor $\toconflict{\cM} \to (\mcattoaff{\cM})_\with$, which extends to
a morphism of streaming settings $\toconflict{\mfM} \to (\mcattoaff{\mfM})_\with$.
\end{lem}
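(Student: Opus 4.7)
The plan is to define the functor $F : \toconflict{\cM} \to (\mcattoaff{\cM})_\with$ by indexing the free product completion over the \emph{cliques} of the coherence space governing an object of $\toconflict{\cM}$. Concretely, on objects
\[F\!\left(\bigodot_{x \in X} R_x\right) \;=\; \bigwith_{C \in \mathcal{K}(X)} \bigtensor_{x \in C} R_x\]
where $\mathcal{K}(X)$ is the set of cliques of the coherence space $X$ (including the empty one). On a morphism $(f, (\alpha_y)_{y \in \web{Y}}) : \bigodot_{x \in X} R_x \to \bigodot_{y \in Y} S_y$, the image is the $\mathcal{K}(Y)$-indexed family assigning to each clique $D$ the source index $f(D) := \{x \mid \exists y \in D,\, (y,x) \in f\} \in \mathcal{K}(X)$ (which is a clique since $f \in \Hom{\Fincoh}{Y}{X}$ preserves cliques) together with a $\mcattoaff{\cM}$-morphism $\bigtensor_{x \in f(D)} R_x \to \bigtensor_{y \in D} S_y$.

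The crux for this last point is that for $D$ a clique, the family $\{f(y)\}_{y \in D}$ partitions $f(D)$. Pairwise disjointness follows from the second (reflexive) incoherence condition on linear maps in $\Fincoh$: if $x \in f(y) \cap f(y')$ with $y, y' \in D$, then applying $x \incoh x$ to the pair $(y,x),(y',x) \in f$ yields $y \incoh y'$, forcing $y = y'$ by coherence of $D$. Covering is immediate from the definition. This partition, together with the multimorphisms $(\alpha_y)_{y \in D}$, assembles into the required $\mcattoaff{\cM}$-morphism via Definition~\ref{def:mcattoaff}. Functoriality then follows by direct inspection of the definitions: relation composition $f \circ f'$ satisfies $(f \circ f')(E) = f(f'(E))$, and the multimorphism composition in $\cM$ respects the induced partitions on the nose.

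Strong monoidality of $F$ reduces to the combinatorial fact that $\mathcal{K}(X \with Y) \cong \mathcal{K}(X) \times \mathcal{K}(Y)$ via $(C, D) \mapsto \inl(C) \cup \inr(D)$ — which holds because in $X \with Y$ every element of $\inl(\web{X})$ is coherent with every element of $\inr(\web{Y})$. Combined with the distributivity of $\tensor$ over $\with$ in $(\mcattoaff{\cM})_\with$ recalled in Section~\ref{subsec:with-string}, this yields
\[F(A \tensor B) \;=\; \bigwith_{E \in \mathcal{K}(X \with Y)} \!\!\bigtensor_{z \in E} T_z
\;\cong\; \bigwith_{(C,D)} \bigtensor_{x \in C} R_x \,\tensor\, \bigtensor_{y \in D} S_y
\;=\; F(A) \tensor F(B).\]
For the unit, the empty coherence space has $\mathcal{K}(\varnothing) = \{\varnothing\}$, so $F$ sends $\top_{\toconflict}$ to $\iota_\with(\top_{\mcattoaff})$, the unit of $(\mcattoaff{\cM})_\with$.

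To extend $F$ to a morphism of streaming settings $\toconflict{\mfM} \to (\mcattoaff{\mfM})_\with$, we use the strong monoidal structure to identify $i$ with an isomorphism, and exhibit $o : F(\retty_{\toconflict}) \to \retty_{(\mcattoaff)_\with}$ as follows. The singleton coherence space $1$ has exactly two cliques, so $F(\bigodot_1 \retty) = \iota_\with(\top) \with \iota_\with(\bigtensor_1 \retty)$; take $o$ to be the projection onto the second factor. Verifying $\curlyinterp{o \circ F(\phi) \circ i} = \curlyinterp{\phi}$ for $\phi = (\varnothing, \alpha_\ast) : \top \to \retty$ in $\toconflict{\cM}$ is then routine: $F(\phi)$ in the clique component $\{\ast\}$ is exactly $\alpha_\ast$ (up to canonical isomorphism with $\top$), which the projection extracts. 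The main obstacle is not conceptual but rather the careful bookkeeping to check functoriality and the naturality of the monoidal structure maps; the conceptual content is concentrated in the bijection between cliques of $X \with Y$ and pairs of cliques, which is the semantic manifestation of the distributivity of the multiplicative conjunction over the additive conjunction.
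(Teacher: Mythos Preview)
Your proof is correct and follows essentially the same approach as the paper: define $F$ on objects by indexing the formal product over cliques, use the linearity conditions on $\Fincoh$-maps to show that the fibers $\{f(y)\}_{y \in D}$ partition $f(D)$ for any clique $D$, and derive strong monoidality from the bijection $\mathcal{K}(X \with Y) \cong \mathcal{K}(X) \times \mathcal{K}(Y)$.

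One minor point where you are actually more careful than the paper: the paper asserts a ``canonical isomorphism $F(\retty) \cong \retty$'', but as you observe, the singleton coherence space has two cliques (empty and full), so $F(\retty) \cong \iota_\with(\top) \with \iota_\with(\retty)$ and the correct $o$ is the projection onto the nonempty-clique component, exactly as you describe.
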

\begin{proof}
For a coherence space $(\web X, \coh_X)$, write $\clique(X) \subseteq \powerset(X)$ the set of
\emph{cliques} of $X$
\[\clique(X) ~~=~~ \{ S \in \powerset(X) \mid \forall x \; y \in S. \; x \coh_X y \}\]

We now define the functor $F: \toconflict{\cM} \to (\mcattoaff{\cM})_\with$ on objects as
\[F\left(\bigodot_{x \in X} A_x\right) ~~=~~ \bigwith_{S \in \clique(X)} \bigtensor_{x \in S} A_x\]
As for morphisms, first recall that a morphism
$(R, \alpha) \in \Hom{\toconflict{\cM}}{\bigodot_{x \in X} A_x}{\bigodot_{y \in Y} B_y}$
consists of a linear map $R \in \Hom{\Fincoh}{Y}{X}$ and a family
\[(\alpha_{y})_{y \in \web Y} \in \prod_{y \in \web Y} \Hom{\cM}{(A_x)_{(y,x) \in R}}{B_y}\]
We set out to define
\[F(R,\alpha) \in
\Hom{(\mcattoaff{\cM})_\with}{
F\left(\bigtensor_{x \in S} A_x\right)}{
F\left(\bigtensor_{y \in S'} B_y\right)}\]
recalling that
\[
\begin{array}{l}
\displaystyle \Hom{(\mcattoaff{\cM})_\with}{
F\left(\bigtensor_{x \in S} A_x\right)}{
F\left(\bigtensor_{y \in S'} B_y\right)}
\\\\\hspace{10em}=~~
\displaystyle \prod_{S' \in \clique(Y)} \sum_{S \in \clique(X)}
\sum_{f : S \partto S'} \prod_{y \in S'}
\Hom{\cM}{(A_x)_{x \in f^{-1}(x)}}{B_y}
\end{array}
\]
So fixing $S' \in \clique(Y)$ and recall that $R$ being linear means that we have
\[
(y,x) \in R \wedge (y',x) \in R \qquad \Rightarrow \qquad \left\{
\begin{array}{lcll}
y \coh_Y y' &\Rightarrow& x \coh_X x' & (1) \\
x = x' &\Rightarrow& y \incoh_Y y' & (2) \\
\end{array}\right.
\]
In particular, $(1)$ implies that $\{ x \in \web X \mid (y,x) \in R \}$ is a clique; we take that to be $S$.
$(2)$, and the fact that $y \coh_Y y' \wedge y \incoh_Y y' \Rightarrow y = y'$, imply that $R$ determines a
(total) function $S \partto S'$, which we take to be $f$. Finally, once $y \in S'$ is fixed, we pick the component $\alpha_y$
to complete the definition, which makes sense as $f^{-1}(y) = \{ x \in S \mid (y,x) \in R\} = \{ x \in \web X \mid (y,x) \in R)\}$.
This completes the definition of $F(R,\alpha)$; we leave checking functoriality to the reader.

Now, we turn to defining a morphism of streaming setting $\toconflict{\mfM} \to (\mcattoaff{\mfM})_\with$ from $F$.
To this end, we must first equip $F$ with a lax monoidal structure, that is to define $(\mcattoaff{\cM})_\with$-maps
\[
m^0 : {\mcattoaff{\unit}}_\with \to F(\toconflict{\unit}) \quad\; m^2 : F\left(\bigodot_{x \in X} A_x\right) \tensor F\left(\bigodot_{y \in Y} B_y\right) \to F\left(\left[ \bigodot_{x \in X} A_x\right] \tensor \left[\bigodot_{y \in Y} B_y\right]\right)\]
satisfying the relevant coherence diagrams. We do not check them here, but indicate how to build those two maps.
$m^0$ arises as an obvious isomorphism $F(\toconflict{\unit}) \cong \mcattoaff{\unit}$.
$m^2$ is also an isomorphism, which may be computed as per Figure~\ref{fig:coh2aff-lax}.
\begin{figure}
\begin{align*}
F\left(\bigodot_{x \in X} A_x\right) \tensor F\left(\bigodot_{y \in Y} B_y\right)
\qquad&\cong\qquad
\left(\bigwith_{S \in \clique(X)} \bigtensor_{x \in S} A_x\right) \tensor
\left(\bigwith_{S' \in \clique(Y)} \bigtensor_{y \in S'} B_y\right)\\
&\cong\qquad
\bigwith\limits_{\substack{S \in \clique(X) \\ S' \in \clique(Y)}} \left(\bigtensor_{x \in X} A_x \right) \tensor \left(\bigtensor_{y \in Y} B_y \right)\\
&\cong\qquad
\bigwith_{S \in \clique(X \with Y)} \bigtensor_{z \in S} C_z \\
&\cong\qquad
F\left(\bigodot_{z \in X \with Y} C_z\right) \\
&\cong\qquad
F\left(\left(\bigodot_{x \in X} A_x\right) \tensor \left(\bigodot_{y \in Y} B_y\right)\right)\\
\\
\hline
\end{align*}
\vspace{-3em}
\caption{The main structural natural isomorphism making the functor
$\toconflict{\cM} \to (\mcattoaff{\cM})_\with$ strong monoidal, writing $A_x$ for $C_{\inl(x)}$
 and $B_y$ for $C_{\inr(y)}$.}
\label{fig:coh2aff-lax}
\end{figure}
Finally, there is a canonical isomorphism $F(\retty) \cong \retty$ which allows to complete the definition of the morphism $\toconflict{\mfM} \to (\mcattoaff{\mfM})_\with$.
\end{proof}

We thus conclude this section by first specializing the above to the case $\cM =
\Registerleone$, and then making a final tangential observation. 
\begin{lem}
\label{lem:conflict-with-equiv}
There are morphisms of streaming settings $\REGISTERleone_\with \to \REGISTERleoneconflict \to \REGISTERleone_\with$.
In particular, $\REGISTERleone_\with$-BRTTs compute exactly the regular functions.
\end{lem}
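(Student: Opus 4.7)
The plan is to specialize the two preceding general results about $\toconflict{\cM}$ to the case $\cM = \Registerleonemc$, and then invoke Proposition~\ref{prop:brtt-regular} together with Lemma~\ref{lem:morph-BRTT} to deduce the characterization of regular tree functions.

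First, I would instantiate Corollary~\ref{cor:affwithtoconflict} at $\cM = \Registerleonemc$. Since $\Registerleone = \mcattoaff{\Registerleonemc}$ and $\Registerleoneconflict = \toconflict{\Registerleonemc}$, the corollary directly yields a full, faithful, lax monoidal functor $(\mcattoaff{\Registerleonemc})_\with \to \toconflict{\Registerleonemc}$, i.e.\ $\Registerleone_\with \to \Registerleoneconflict$, which extends to a morphism of tree streaming settings $\REGISTERleone_\with \to \REGISTERleoneconflict$ (the output object $\retty = \cI(\varnothing)$ is preserved by construction through the chain of functors, and the interpretation $\curlyinterp{-}$ factors appropriately since both streaming settings have the same underlying interpretation of morphisms $\top \to \cI(\varnothing)$ as elements of $\Tree({\bf \Gamma})$).

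For the second morphism $\REGISTERleoneconflict \to \REGISTERleone_\with$, I would simply apply the immediately preceding (unnamed) lemma at $\cM = \Registerleonemc$: this gives a strong monoidal functor $\toconflict{\Registerleonemc} \to (\mcattoaff{\Registerleonemc})_\with$ extending to the required morphism of streaming settings. Both instantiations are immediate because $\Registerleonemc$ is a weak symmetric multicategory (being a full submulticategory of $\Registermc$) and the constructions $\mcattoaff{(-)}$, $\toconflict{(-)}$ and $(-)_\with$ are generic; the designated output object $\cI(\varnothing) \in \Obj(\Registerleonemc)$ plays the role of $\retty$ throughout, and the interpretation map is preserved on the nose.

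Finally, for the \enquote{in particular} clause, I chain the consequences: applying Lemma~\ref{lem:morph-BRTT} to the two morphisms just obtained shows that $\REGISTERleone_\with$-BRTTs and $\REGISTERleoneconflict$-BRTTs subsume each other, hence compute the same class of tree functions; and Proposition~\ref{prop:brtt-regular} states that $\REGISTERleoneconflict$-BRTTs compute exactly the regular tree functions. Combining these yields the claim.

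There is essentially no obstacle here: all the real work was done in setting up $\toconflict{\cM}$ and proving the two general transport lemmas above, and in matching $\REGISTERleoneconflict$-BRTTs with Alur--D'Antoni's single-use-restricted BRTTs (Proposition~\ref{prop:brtt-regular}, whose non-trivial content we imported from~\cite{BRTT,MacroMSO}). The mildest point of care is to check that both instantiated morphisms preserve the distinguished object $\retty = \cI(\varnothing)$ and the interpretation $\curlyinterp{-}$ — but this is automatic since $\cI(\varnothing) \in \Obj(\Registerleonemc)$ and the constructions send this common object to canonically isomorphic images in $\Registerleone_\with$ and $\Registerleoneconflict$.
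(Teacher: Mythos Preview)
Your proposal is correct and follows essentially the same approach as the paper: the paper simply says it is \enquote{specializing the above} generic results to $\cM = \Registerleonemc$, and you have spelled out precisely this specialization together with the invocation of Lemma~\ref{lem:morph-BRTT} and Proposition~\ref{prop:brtt-regular} for the final clause.
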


\begin{rem}
  One idea that one could take from Hu and Joyal's coherence
  completion~\cite{HuJoyal} -- but that we do not explore further here -- is to
  look at objects whose indexing coherence spaces are (up to isomorphism)
  generated from singletons by the $\with/\oplus$ operations of $\Fincoh$.
  (Those are called \enquote{contractible} in~\cite[Section~4]{HuJoyal}, and
  considering coherence spaces as undirected graphs, this corresponds to the
  classical notion of \emph{cograph} in combinatorics.)

  In the case of the coherence completion of some category $\cC$, the full
  subcategory spanned by such objects turns out to be the free completion of
  $\cC$ under finite products and coproducts (which differs from our
  $(-)_{\oplus\with}$ in not making `$\with$' distribute over `$\oplus$'); this
  is formalized as a universal property in~\cite[Theorem~4.3]{HuJoyal}. In the
  same vein, we conjecture that the full subcategory of $\toconflict\cM$
  consisting of cograph-indexed objects -- that is, of objects that are
  generated from those of $\cM$ by means of the operations $\otimes/\with$ in
  $\toconflict\cM$ -- is in some way the \emph{free affine symmetric monoidal
    category with products generated by the multicategory $\cM$}.
\end{rem}

\subsection{$\Register_{\oplus\with}$ is monoidal closed}
\label{subsec:dial-trees}

Now, we consider the category $\Register_{\oplus\with}$ in the context of trees.
Much like with strings, this category is symmetric monoidal monoidal closed with finite products and coproducts,
which makes it an ideal target to compile $\laml$-terms.
This structure over $\Register_{\oplus\with}$ is obtained in the same way as for strings:
the monoidal product over $\Register$ is defined as distributing over formal sums and
products and the usual products and coproducts are created by the $(-)_{\oplus\with}$ completion.
Similarly, monoidal closure can be obtained in a generic way once we show that the objects
coming from $\Register$ have internal homsets $\Register_\oplus$ (echoing Lemma~\ref{lem:hom-reg-oplus}).
This section is mostly dedicated to proving this fact, whose proof relies on decomposing linear trees in
a similar way as in Lemma~\ref{lem:leone}.
\begin{lem}
For any two objects $\bf R$ and $\bf S$ of $\Register$, there is an internal hom $\iota_\oplus(\bf R) \lin \iota_\oplus(\bf S)$ in $\Register_\oplus$.
\end{lem}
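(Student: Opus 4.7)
The plan is to follow the template of the string-case Lemma~\ref{lem:hom-reg-oplus}: construct $\iota_\oplus({\bf R}) \lin \iota_\oplus({\bf S})$ explicitly as a coproduct of basic objects, indexed by combinatorial ``shapes'' that record the arrangement of the $\bf R$-register occurrences within a linear tree representing a morphism.

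First I would reduce to the case where ${\bf R} = R$ is a \emph{basic} object of $\Register$, i.e.\ a single finite set viewed as an object of $\Registermc$. Indeed, by iterating the currying isomorphism $(A \otimes B) \lin C \cong A \lin (B \lin C)$, the existence of $R \lin X$ for every such basic $R$ and every $X \in \Obj(\Register_\oplus)$ yields the existence of ${\bf R} \lin X$ for every object $\bf R$ of $\Register$. Moreover, using Proposition~\ref{prop:yondistr} (together with $\Hom{}{\bigoplus_u C_u}{X} = \prod_u \Hom{}{C_u}{X}$) one checks that $\iota_\oplus(R) \lin {-}$ distributes over coproducts in its argument, so it suffices to treat $X = \iota_\oplus({\bf S})$ for arbitrary $\bf S \in \Obj(\Register)$.

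For such basic $R$ and arbitrary $\bf S$, I would characterize $\Hom{\Register_\oplus}{\iota_\oplus(D) \otimes \iota_\oplus(R)}{\iota_\oplus({\bf S})} \cong \Hom{\Register}{D \otimes R}{\bf S}$ for basic test objects $\iota_\oplus(D)$. Unfolding $\Register = \mcattoaff{\Registermc}$, such a morphism amounts to a partial function on indices plus a family of linear trees in $\LTree_{\bf \Gamma}$, in which copylessness forces the $R$-constructor to appear at most once across the whole family. A morphism then splits into two kinds of data: a \emph{shape} $\varphi$ recording (i) which output component of $\bf S$ uses $R$, if any, (ii) the abstract tree of ``special'' nodes (the unique $R$-constructor together with the $\bf S$-output leaves), and (iii) the arity of each maximal $\Gamma$-context (\emph{gap}) separating consecutive special nodes; and a \emph{filling} consisting of one $\Gamma$-context per gap, possibly using $D$-register occurrences. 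To make shapes canonical I would impose a normal form collapsing every chain of adjacent gaps into a single gap of appropriate arity; this makes the set $\Phi(R,{\bf S})$ of shapes \emph{finite}. To each $\varphi \in \Phi(R,{\bf S})$ with gaps of arities $k_1, \ldots, k_\ell$ I would associate an object $N_\varphi = \bigtensor_{i=1}^{\ell} \{k_i\}$ in $\Register$ (a tensor of basic single-register objects), so that the possible $D$-fillings of $\varphi$ are in natural bijection with $\Hom{\Register}{D}{N_\varphi}$. The internal hom would then be
\[ \iota_\oplus(R) \lin \iota_\oplus({\bf S}) \;=\; \iota_\oplus({\bf S}) \,\oplus\, \bigoplus_{\varphi \in \Phi(R,{\bf S})} \iota_\oplus(N_\varphi), \]
the leading $\iota_\oplus({\bf S})$ collecting the case where $R$ is simply discarded (which is possible because $\Register$ is affine). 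The evaluation morphism is defined component-wise: on the $\varphi$-component it inserts the $\Gamma$-contexts supplied by $N_\varphi$ into the gaps prescribed by $\varphi$, reassembling a linear tree representing a morphism $N_\varphi \otimes R \to {\bf S}$; on the $\iota_\oplus({\bf S})$ component it is the canonical weakening. The universal property would follow from the uniqueness of this shape-plus-filling decomposition, with non-basic test objects handled by $\Hom{}{\bigoplus_u C_u}{-} = \prod_u \Hom{}{C_u}{-}$.

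The hard part will be the combinatorial step: defining shapes and proving that every linear tree admits a \emph{unique} decomposition into a shape plus a filling. Unlike the string case of Lemma~\ref{lem:hom-reg-oplus}, where a register transition decomposes trivially into an ordered list of variables interleaved with words of $\Gamma^*$, here $\Gamma$-contexts can branch, and the interplay between the $R$-node, the $\bf S$-output leaves and the $D$-registers admits richer configurations. The crux is choosing the normal form for shapes carefully so that $\Phi(R,{\bf S})$ is finite and the decomposition map is bijective; forbidding adjacent gaps in the skeleton (forcing them to merge into one) is essential to avoid overcounting.
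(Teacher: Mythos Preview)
Your plan is sound but organizes the construction differently from the paper. You curry down to a single-register left argument $R = \cI(V)$ and then give a one-shot shape/filling decomposition; the paper instead first reduces the \emph{right} argument to ${\bf S} = \cI(U)$ (by summing over partial maps $R \partto S$) and then defines $\iota_\oplus({\bf R}) \lin \iota_\oplus(\cI(U))$ by recursion on the size of ${\bf R}$, peeling off at each step the one-hole context above the shallowest node that is either an ${\bf R}$-letter $r$ or a $\Gamma$-letter $b$ whose children split ${\bf R}$ nontrivially, and recursing on the strictly smaller $\restr{{\bf R}}{g^{-1}(x)}$. If you specialize the paper's recursion to a one-letter ${\bf R}$, the $\lin_b$-summands vanish (a map out of a singleton is never nonconstant) and one step of recursion already yields exactly your bounded-depth shape, so the two constructions agree there. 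Your currying trades the mutually recursive $\lin_b/\lin_r$ machinery for an inductive use of the tensor/hom adjunction; the paper's recursion trades a global finiteness-of-shapes argument for a visibly well-founded definition.

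Two points of care when you fill in the details: with general ${\bf S}$ the shape is a \emph{forest} indexed by $S$ (each component $s \neq s_R$ must contribute one gap of arity $B_s$ to $N_\varphi$, otherwise the $D$-registers routed to those components are unaccounted for), and ``no adjacent gaps'' alone does not enforce uniqueness---you must also pin down how a gap's holes are identified with its special-node children (e.g.\ take the arity \emph{set} of a gap to be its set of children), or the same tree decomposes in several ways differing only by a relabelling of holes.
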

\begin{proof}
First, we treat the special case where ${\bf S} = \cI(U)$ for some finite set $U$.
To make sense of the definition of $\iota_\oplus({\bf R}) \lin \iota_{\oplus}(\cI(U))$ it is helpful to notice that
it will ultimately induce an isomorphism
\[\Hom{\Register_\oplus}{\top}{\iota_{\oplus}({\bf R}) \lin \iota_{\oplus}(\cI(U))} \cong \Hom{\Register}{{\bf R}}{\cI(U)} \cong \LTree_{\bf \Gamma}({\bf R} \tensor \cO(U))\]
so, recalling that objects of $\Register_\oplus$ are of the shape $\bigoplus_{i \in I} \bigtensor_{j \in J_i} \cI(V_j)$ for $V_j$ being finite sets,
the operational intuition is that one may code trees with ``holes with arity'' into some bounded finitary data (which we may informally
call a partitioning tree) plus finitely many trees containing holes ``without arity''; this bijection is pictured in Figure~\ref{fig:ltree-ex-decompose-fun}.
\begin{figure}
\center
\includegraphics[scale=0.8]{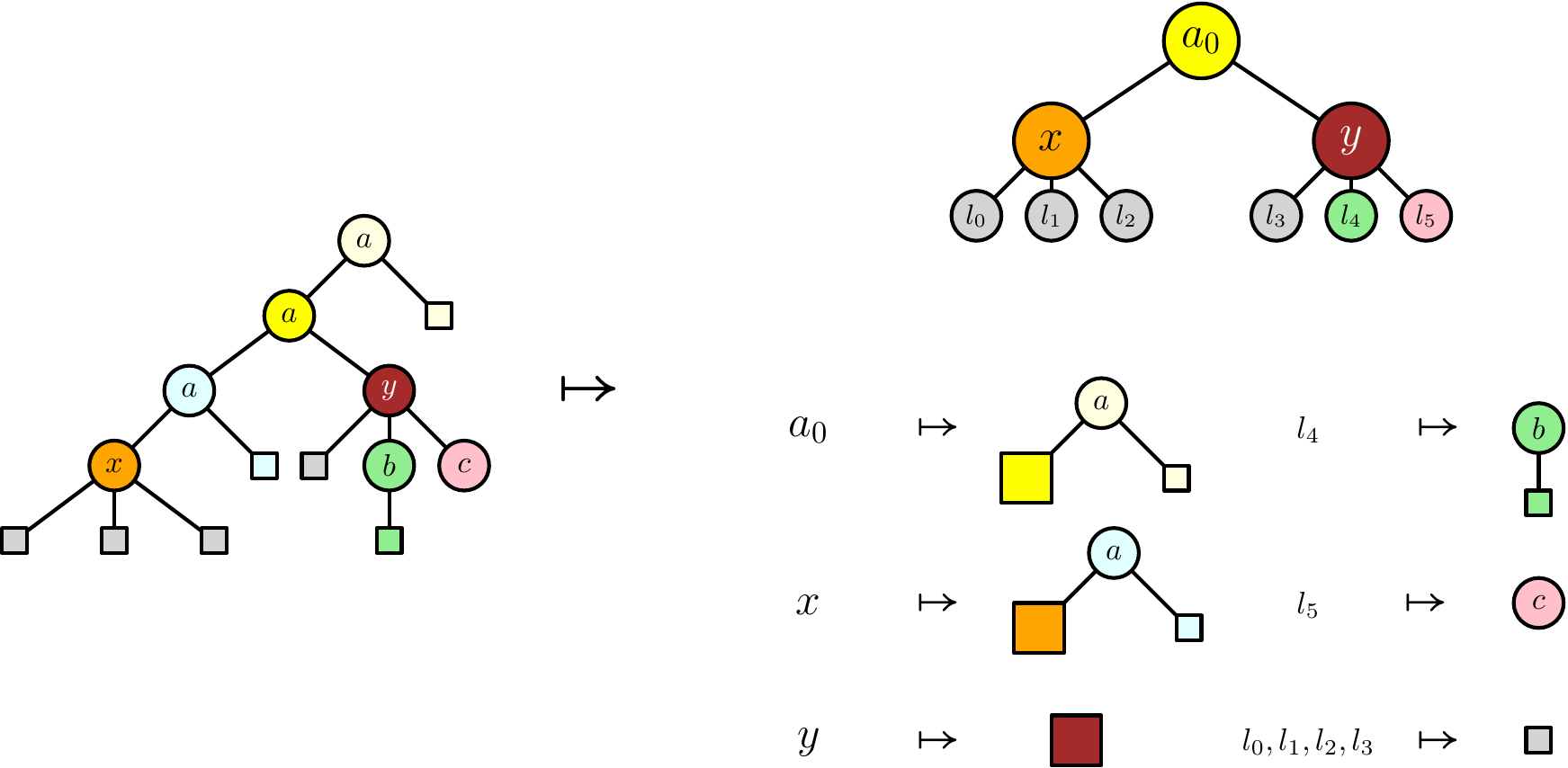}
\vspace{1.5em}
\hrule

\caption{Decomposition of a map of $\Hom{\Register(\{a : 2, b : 1, c : 0\})}{\{x:3,y:3\}}{\cI(7)}$ (which is defined as $\LTree_{\{a : 2, b : 1, c : 0\}}(\{x : 3, y : 3\} \tensor \cO(7))$)
as a tuple consisting of a partitioning tree and trees without the letters $x$ and $y$.}
\label{fig:ltree-ex-decompose-fun}
\end{figure}
As with Lemma~\ref{lem:leone}, we will not use this as our official definition for the internal homset, but rather use the following recursive definition:
\begin{itemize}
\item If ${\bf R} = \top$, set $\iota_\oplus({\bf R}) \lin \iota_\oplus(\cI(U)) ~=~\iota_\oplus(\cI(U))$.
\item Otherwise, define $\iota_\oplus({\bf R}) \lin \iota_\oplus(\cI(U))$ as
\[
\bigoplus_{U = V \uplus W} \iota_\oplus(\cI(V + 1)) \tensor \left( \bigoplus_{b \in {\bf \Gamma}} \left({\bf R} \lin_b \cI(W)\right) \oplus \bigoplus_{r \in R} \left({\bf R} \lin_r \cI(W)\right)\right) \\\\
\]
where ${\bf R} \lin_r \cI(W)$ and ${\bf R} \lin_b \cI(W)$ are auxiliary definitions which correspond to the following situations (recalling that morphisms can be regarded as trees):
\begin{itemize}
\item  $\iota_\oplus(\cI(V + 1)) \otimes \left({\bf R} \lin_r \cI(W)\right)$ correspond to morphisms such that there is a \emph{unique} minimal path leading from the root to a node labelled by a letter in $R$, and that letter is $r$.
The second component ${\bf R} \lin_r \cI(W)$ is meant to include the immediate subtrees of that node while the first $\iota_\oplus(\cI(V+1))$ contains the tree where a nullary node labeled is inserted instead of that node.
The combination of both these data and $r$ allows to recover the original morphism.
\item $\bigoplus_{b \in {\bf \Gamma}} \iota_\oplus(\cI(V+1)) \tensor {\bf R} \lin_b \cI(W)$ correspond to all the other morphisms.
In such a case there is a topmost node labelled by some letter $b$ of $\bf \Gamma$ with which has at least two distinct immediate subtrees which have at least one node of $R$ each.
Similarly to the first subcase, ${\bf R} \lin_b \cI(W)$ is intended to include the immediate subtrees of that node labeled by $b$ while $\iota_\oplus(\cI(V+1))$ contains the tree where a nullary node labeled is inserted instead of that node. The combination of these data and $b$ allows to recover the original morphism.
\end{itemize}
Their formal definition is as follows:
\[
\begin{array}{*4{>{\displaystyle}l}}
&
{\bf R} \lin_b \cI(W) &=&
\bigoplus\limits_{\substack{f : W \to \arity(b) \\
                            g : R \to \arity(b) \\
                            g \text{ nonconstant}}}
\bigtensor\limits_{x \in \arity(\Gamma)} \iota_\oplus(\restr{{\bf R}}{g^{-1}(x)}) \lin \iota_\oplus(f^{-1}(x)) \\\\
&
{\bf R} \lin_r \cI(W) &=&
\bigoplus\limits_{\substack{f : W \to \arity(r) \\
                            g : R \setminus \{r\} \to \arity(r)
                            }}
\bigtensor\limits_{x \in \arity(r)} \iota_\oplus(\restr{{\bf R}}{g^{-1}(x)}) \lin \iota_\oplus(f^{-1}(x))
\end{array}
\]
Note that the definitions of $\iota_{\oplus}({\bf R}) \lin \iota_{\oplus}(\cI(U))$,
${\bf R} \lin_b \cI(W)$ and ${\bf R} \lin_r \cI(W)$ mutually depend on one another.
Still this is is well-defined as the definitions of ${\bf R} \lin_b \cI(W)$ and ${\bf R} \lin_r \cI(W)$ only require $\iota_\oplus({\bf S}) \lin \iota_\oplus(\cI(V))$ for ${\bf S}$ strictly smaller than ${\bf R}$.
\end{itemize}
We now describe the associated evaluation map
\[
\ev_{{\bf R}, \cI(U)} ~~:~~ (\iota_\oplus({\bf R}) \lin \iota_\oplus(\cI(U))) \tensor {\bf R} ~~\longrightarrow~~ \cI(U)
\]
also by recursion over $\bf R$.
\begin{itemize}
\item If ${\bf R} = \top$, it is the identity.
\item Otherwise, we need to provide maps
\[
\left(\cI(V + 1) \tensor \left( \bigoplus_{b \in {\bf \Gamma}} \left({\bf R} \lin_b \cI(W)\right) \oplus \bigoplus_{r \in R} \left({\bf R} \lin_r \cI(W)\right)\right)\right) \tensor {\bf R}
~~\longrightarrow~~ \cI(U)
\]
for every decomposition $U = V \uplus W$, the intuition being that $\cI(V+1)$ is a context containing the top of the tree corresponding to the function
we want to apply. Therefore, once we provide a map
\[
\left( \bigoplus_{b \in {\bf \Gamma}} \left({\bf R} \lin_b \cI(W)\right) \oplus \bigoplus_{r \in R} \left({\bf R} \lin_r \cI(W)\right)\right) \tensor {\bf R}
~~\longrightarrow~~ \cI(W)
\]
we may post-compose it with 
$\cI(V+1) \tensor \cI(W) \to \cI(V + W) \cong \cI(V \uplus W) = \cI(U)$
to define $\ev_{{\bf R} \lin \cI(U)}$.
Recalling that $\tensor$ distributes over $\oplus$, it suffices to provide specialized maps
\[
\ev^b_{{\bf R}, \cI(W)} : {\bf R} \lin_b \cI(W) \tensor {\bf R} \to \cI(W)
\quad \text{and}\quad
\ev^r_{{\bf R}, \cI(W)} : {\bf R} \lin_r \cI(W) \tensor {\bf R} \to \cI(W)
\]
for $b \in \Gamma$ and $r \in R$, which we describe now.
\begin{itemize}
\item For $\ev^b_{{\bf R}, \cI(W)}$, it suffices to define a family of $\Register$-maps
indexed by $f : W \to \arity(b)$ and $g : R \to \arity(b)$ with $g$ non-constant
\[
\displaystyle\left(\bigtensor_{x \in \arity(\Gamma)} (\restr{{\bf R}}{g^{-1}(x)}) \lin \cI(f^{-1}(x))\right) \tensor {\bf R} ~~\longrightarrow~~ \cI(W)
\]
Recall that $b$ can be seen as tree constructor and induces a canonical map
\[
\overline{b} ~~:~~ \bigtensor_{x \in \arity(b)} \cI(f^{-1}(x)) ~~\longrightarrow~~ \cI(W)
\]
Using the induction hypothesis, we have evaluation maps
\[
\left((\restr{{\bf R}}{g^{-1}(x)}) \lin \cI(f^{-1}(x))\right) \tensor (\restr{{\bf R}}{g^{-1}(x)}) ~~\longrightarrow~~ \cI(W)
\]
We can then compose $\overline{b}$ with the product of those maps over $x \in \arity(b)$ and then the isomorphism ${\bf R} \cong \bigtensor_{x \in \arity(b)} \restr{{\bf R}}{f^{-1}(x)}$
to conclude the definition of $\ev^b_{{\bf R}, \cI(W)}$.
\item For $\ev^r_{{\bf R}, \cI(W)}$, it suffices to define a family of $\Register$-maps indexed by $f : W \to \arity(r)$ and $g : R \setminus \{r \} \to \arity(r)$
\[
\left(\bigtensor_{x \in \arity(\Gamma)} (\restr{{\bf R}}{g^{-1}(x)}) \lin \cI(f^{-1}(x))\right) \tensor {\bf R} ~~\longrightarrow~~ \cI(W)
\]
By exploiting the isomorphism ${\bf R} \cong \cI(\arity(r)) \tensor \bigtensor_{x \in \arity(r)} (\restr{{\bf R}}{g^{-1}(x)})$ and using the inductive hypothesis as in the
previous case, we obtain a map
\[
\left(\bigtensor_{x \in \arity(\Gamma)} (\restr{{\bf R}}{g^{-1}(x)}) \lin \cI(f^{-1}(x))\right) \tensor {\bf R} ~~\longrightarrow~~ \displaystyle\cI(\arity(r)) \tensor \bigtensor_{x \in \arity(r)} \cI(f^{-1}(x))
\]
and we may conclude by post-composing by the map
\[\cI(\arity(r)) \tensor \bigtensor_{x \in \arity(r)} \cI(f^{-1}(x)) \to
  \cI(W)\]
which is induced by the depth-$2$ tree whose root
corresponds to the first component, whose children correspond to the successive elements of $\bigtensor_{x \in \arity(r)} \cI(f^{-1}(x))$ and other leaves are in $W$.
\end{itemize}
\end{itemize}
While the definition is a bit wordy, there is then little difficulty in checking that this yields the expected universal property.
\[
\xymatrix@C=30mm{
\iota_{\oplus}({\bf R} \lin \cI(U)) \tensor \iota_\oplus(\cI(U)) \ar[r] & \iota_{\oplus}(\cI(U)) \\
A \tensor \iota_{\oplus}({\bf R}) \ar@{-->}[u]^{\Lambda(h) \tensor \id} \ar[ur]_h
}
\]
One needs then to extend the definition for the general case where $\bf S$ is not necessarily $\cI(U)$; this
is done using a similar approach as for strings, by using a coproduct over partial maps $R \partto S$ tracking
which letter of the input participates in which letter of the output, and employing the particular case where
there is one letter in the output\footnote{This could be factored out as a more general result concerning the existence of internal homsets in categories of the shape $\mcattoaff{\cM}$ for multicategories $\cM$.}
\[\iota_\oplus({\bf R}) \lin \iota_\oplus({\bf S}) ~~=~~ \bigoplus_{f : R \partto S} \bigtensor_{s \in S} \iota_\oplus(\restr{{\bf R}}{f^{-1}(s)}) \lin \iota_\oplus(\cI(\arity(s)))\]
The evaluation function can then be extended and the universal property accordingly lifted to the more general case.
\end{proof}

The above lemma tells us that $\Register_{\oplus\with}$ satisfies the premises
of Theorem~\ref{thm:dial-haslin}, hence:

\begin{thm}
\label{thm:REGISTER-closed}
The category $\Register_{\oplus\with}$ has cartesian products, coproducts and a
symmetric monoidal \emph{closed} structure.
\end{thm}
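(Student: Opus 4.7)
The plan is to assemble three facts that have already been established in the excerpt. First, the preceding lemma delivers an internal homset $\iota_\oplus({\bf R}) \lin \iota_\oplus({\bf S})$ in $\Register_\oplus$ for every pair of objects ${\bf R},{\bf S} \in \Obj(\Register)$. This is precisely the hypothesis of Theorem~\ref{thm:dial-haslin} applied to the symmetric monoidal category $(\Register,\tensor,\top)$. A direct application of that theorem then yields that $\Register_{\oplus\with}$ is (symmetric) monoidal closed, where the internal homset is the one given by the explicit formula
\[A \lin B \;=\; \bigwith_{u \in U} \bigoplus_{v \in V} \bigwith_{y \in Y_v} \bigoplus_{x \in X_u} \iota_\with^\oplus\bigl(\iota_\oplus({\bf R}_{x}) \lin \iota_\oplus({\bf S}_{y})\bigr)\]
for $A = \bigoplus_u \bigwith_x \iota_{\oplus\with}({\bf R}_x)$ and $B = \bigoplus_v \bigwith_y \iota_{\oplus\with}({\bf S}_y)$.

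Second, the existence of finite coproducts and products in $\Register_{\oplus\with}$ has in fact already been observed in full generality in the discussion introducing the $\oplus\with$-completion: since $\cC_{\oplus\with} \cong (\cC_\with)_\oplus$ for any $\cC$, the outer $(-)_\oplus$ produces all finite coproducts, computed as
\[\bigoplus_{i \in I} \bigoplus_{u \in U_i} \bigwith_{x \in X_{i,u}} C_{i,u,x}
\;=\; \bigoplus_{(i,u) \in \sum_i U_i} \bigwith_{x \in X_{i,u}} C_{i,u,x},\]
while finite products are produced by using the distributivity formula
\[\bigwith_{i \in I} \bigoplus_{j \in J_i} A_{i,j} \;\cong\; \bigoplus_{f \in \prod_i J_i} \bigwith_{i \in I} A_{i,f(i)}\]
to reduce to the inner $(-)_\with$-level, where products come for free. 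These remarks apply to $\cC = \Register$ without any additional verification.

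I do not anticipate any genuine obstacle: the entire proof is quotational. The one minor point to confirm is that the construction in Theorem~\ref{thm:dial-haslin} does preserve the \emph{symmetry} of the monoidal structure (the theorem's statement only mentions monoidal closure, but the underlying tensor product on $\cC_{\oplus\with}$ is symmetric whenever that of $\cC$ is, by the defining formula $\bigtensor_i \bigoplus_{u} \bigwith_x C_x = \bigoplus_f \bigwith_g \bigtensor_i C_{g(i)}$ given earlier); since $(\Register,\tensor,\top)$ is symmetric monoidal, the resulting closed structure on $\Register_{\oplus\with}$ is likewise symmetric. The theorem then follows by combining these observations.
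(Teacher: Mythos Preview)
Your proposal is correct and matches the paper's approach exactly: the paper's entire proof is the single sentence ``The above lemma tells us that $\Register_{\oplus\with}$ satisfies the premises of Theorem~\ref{thm:dial-haslin}, hence:'' followed by the theorem statement. Your additional remarks about products, coproducts, and symmetry simply make explicit what the paper leaves implicit from the general discussion of the $(-)_{\oplus\with}$ completion.
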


\begin{rem}
  Given the close relationship between the constructions
  $(\mcattoaff{(-)})_\with$ and $\toconflict{(-)}$, it seems plausible that
  $(\toconflict\Register)_\oplus$ could be monoidal closed (we know that it has
  products, coproducts and a symmetric monoidal structure). We leave this question
  to further work.
\end{rem}

In other words, $\Register_{\oplus\with}$ provides a categorical semantics for
the purely linear fragment of the $\laml$-calculus. Concretely, by a suitable
adaptation of Lemma~\ref{lem:laml-initial} to tree streaming settings, we
finally have:
\begin{cor}
  \label{cor:laml-tree-to-registeropluswith}
  There is a morphism of streaming settings $\mfLam \to \REGISTER_{\oplus\with}$.
\end{cor}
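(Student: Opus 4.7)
The plan is to apply a straightforward adaptation of Lemma~\ref{lem:laml-initial} to the tree setting. In the tree case, the adapted statement should read: given any tree streaming setting $\mfC = (\cC, \tensor, \unit, \retty, \curlyinterp{-}_\cC)$ whose underlying category is SMCC with finite products and coproducts, a morphism $\mfLam \to \mfC$ (whose underlying functor preserves $\tensor$, $\lin$, $\with$ and $\oplus$ suitably) is canonically determined by a choice, for each letter $b \in \Gamma$ of arity $A_b$, of a morphism $\widetilde b \in \Hom{\cC}{\retty^{\otimes |A_b|}}{\retty}$ such that, for every family $(g_i \in \Hom{\cC}{\unit}{\retty})_{i \in A_b}$, one has $\curlyinterp{\widetilde b \circ \bigl(\bigotimes_{i\in A_b} g_i\bigr) \circ \varphi_{A_b}}_\cC = b\bigl((\curlyinterp{g_i}_\cC)_{i \in A_b}\bigr)$, where $\varphi_{A_b}$ is the canonical iso $\unit \xrightarrow{\sim} \bigotimes_{i \in A_b} \unit$. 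The base type $\basety$ is then sent to $\retty$, and the output morphism $o : F(\retty_\mfLam) \to \retty_\cC$ is just the identity.

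To apply this to $\mfC = \REGISTER_{\oplus\with}$, I would first invoke Theorem~\ref{thm:REGISTER-closed}, which ensures that $\Register_{\oplus\with}$ has all the required structure (SMCC with finite products and coproducts). Then, for each letter $b \in \Gamma$ of arity $A_b$, the multicategory $\Registermc$ provides a canonical multimorphism given by the linear tree $b((i)_{i \in A_b}) \in \LTree_{\bf\Gamma}((A_b, (\varnothing)_i) \tensor \cO(\varnothing))$. Through $\mcattoaff{-}$, this lifts to a $\Register$-morphism $\cI(\varnothing)^{\otimes A_b} \to \cI(\varnothing)$, which we further embed via $\iota_{\oplus\with}$ into $\Register_{\oplus\with}$ to obtain our candidate $\widetilde b$. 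Verifying the required compatibility reduces, under the isomorphism $\Hom{\Register}{\top}{\cI(\varnothing)} \cong \Tree({\bf\Gamma})$, to the straightforward fact that grafting a multimorphism coded by $b((i)_{i \in A_b})$ on top of subtrees $(t_i)_i$ produces, by definition of multicategorical composition in $\Registermc$, the tree $b((t_i)_i)$.

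The main obstacle is not the specific construction above, which is essentially dictated by the syntax, but rather in proving the adapted version of Lemma~\ref{lem:laml-initial} itself. As in the string case, this is a folklore categorical semantics argument for the purely linear fragment of $\laml$, but it requires a considerable amount of bureaucracy: one needs the conservativity of $\beta\eta$-equivalence over the purely linear fragment (cf.\ Remark~\ref{rem:betaeta-conservativity}) and a lengthy verification that the syntactic constructs are interpreted compatibly with the SMCC/bicartesian structure of the target category. Compared to the string case, the adaptation to trees does not introduce new conceptual difficulties; the only genuine change is that constants now have higher arity, which fits cleanly with the multicategorical treatment of $\Register$ via $\mcattoaff{-}$.
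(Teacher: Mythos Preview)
Your proposal is correct and follows exactly the same approach as the paper: the paper's proof consists of the single sentence ``by a suitable adaptation of Lemma~\ref{lem:laml-initial} to tree streaming settings'' combined with Theorem~\ref{thm:REGISTER-closed}, and your proposal is simply a more explicit unfolding of what that adaptation looks like (higher-arity constants $\widetilde{b}$ in place of unary ones, with the same folklore initiality argument for $\Lamcat$).
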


\subsection{Preservation properties of finite completions}

We have now proven all of the combinatorial results leading to the main theorem for trees.
At this stage, we only need to make explicit a few preservation properties of the completions
$(-)_\oplus$ and $(-)_\with$. We only state the minimal requirements that we need to proceed\footnote{
The following, which would entail these requirements, should hold (but we have
not checked it): the completions $(-)_\oplus$ and $(-)_\with$ should behave as pseudo-monads over the
2-category of tree streaming settings, admitting a pseudo-distributive law $((-)_\oplus)_\with \to ((-)_\with)_\oplus$
giving rise to the pseudo-monad $(-)_{\oplus\with}$.}.

\begin{lem}
\label{lem:compl-funct}
Let $\square \in \{ \oplus, \with, \oplus\with\}$ and $\mfC, \mfD$ be streaming settings.
Then if there is a morphism $\mfC \to \mfD$, there is also a morphism $\mfC_\square \to \mfC_\square$.
\end{lem}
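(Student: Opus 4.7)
The plan is to leverage the universal properties of the free (co)product completions to functorially extend a given morphism of streaming settings. Suppose $(F, i, o) : \mfC \to \mfD$ is a morphism of streaming settings, with $F : \cC \to \cD$ as its underlying (lax monoidal, in the tree case) functor. First, I would treat $\square = \oplus$ in detail; the case $\square = \with$ is formally dual, and $\square = \oplus\with$ then follows by composing the two extensions, since the definition of $\cC_{\oplus\with}$ is isomorphic to $(\cC_\with)_\oplus$.

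For $\square = \oplus$, I would define $F_\oplus : \cC_\oplus \to \cD_\oplus$ as the unique-up-to-unique-natural-isomorphism coproduct-preserving extension of the composite $\iota_\oplus \circ F : \cC \to \cD_\oplus$, given by the universal property of $\cC_\oplus$. Concretely, $F_\oplus$ sends $\bigoplus_{u \in U} C_u$ to $\bigoplus_{u \in U} F(C_u)$ and acts componentwise on morphisms $(v_u, g_u)_{u \in U} \mapsto (v_u, F(g_u))_{u \in U}$. In the tree case, the lax monoidal structure on $F_\oplus$ can be assembled componentwise from that of $F$ using the fact that the tensor products in both $\cC_\oplus$ and $\cD_\oplus$ distribute over coproducts (and the unit of $\cC_\oplus$ is $\iota_\oplus(\unit_\cC)$); the required coherence diagrams then reduce to those already satisfied by the original $F$ after distributing coproducts outward. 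For the remaining data, I would use $i_\oplus : \iota_\oplus(\initty_\cD) \to \iota_\oplus(F(\initty_\cC)) = F_\oplus(\iota_\oplus(\initty_\cC))$ and $o_\oplus : F_\oplus(\iota_\oplus(\retty_\cC)) = \iota_\oplus(F(\retty_\cC)) \to \iota_\oplus(\retty_\cD)$, obtained as the images of $i$ and $o$ under $\iota_\oplus$ (which is full and faithful). Checking the compatibility condition $\curlyinterp{o_\oplus \circ F_\oplus(f) \circ i_\oplus}_{\mfD_\oplus} = \curlyinterp{f}_{\mfC_\oplus}$ for $f \in \Hom{\cC_\oplus}{\iota_\oplus(\initty_\cC)}{\iota_\oplus(\retty_\cC)}$ reduces, through the fullness and faithfulness of $\iota_\oplus$ and the definition of $\curlyinterp{-}_{\mfC_\oplus}$ as the precomposition with this iso, to the assumed compatibility for $(F, i, o)$.

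The case $\square = \with$ proceeds identically by dualizing: use the universal property of $\cC_\with = ((\cC^{\op})_\oplus)^{\op}$ to obtain $F_\with : \cC_\with \to \cD_\with$ extending $\iota_\with \circ F$ and preserving products; again, the lax monoidal structure is assembled componentwise using the distributivity of $\tensor$ over $\with$ in the completion. For $\square = \oplus\with$, take $F_{\oplus\with} = (F_\with)_\oplus$ via the isomorphism $\cC_{\oplus\with} \cong (\cC_\with)_\oplus$, with the morphism data obtained by applying the two constructions in sequence. The main potential obstacle is checking coherence of the lax monoidal structure on the extended functors, but because the monoidal products in the completions are defined precisely so as to distribute over the new (co)products, all the relevant diagrams decompose into families of instances of diagrams already known to commute for $F$; no genuinely new calculation is required.
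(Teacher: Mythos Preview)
Your proposal is correct and follows essentially the same approach as the paper: both use the universal property of the free coproduct completion to extend $\iota_\oplus \circ F : \cC \to \cD_\oplus$ to $F_\oplus : \cC_\oplus \to \cD_\oplus$, then lift the lax monoidal structure componentwise and transport $i,o$ through $\iota_\oplus$. The paper only sketches the $\oplus$ case and leaves the rest to the reader, whereas you spell out the $\with$ case by duality and the $\oplus\with$ case as $(F_\with)_\oplus$, which is exactly the intended completion of the argument.
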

\begin{proof}[Proof idea]
Let us only discuss the case $\square = \oplus$ and assume that we have a streaming setting morphism
whose underlying lax monoidal functor is $F : \cC \to \cD$. Then we consider $F_\oplus : \cC_\oplus \to \cD_\oplus$
defined using the universal property of $\cC_\oplus$ from the functor $\iota_\oplus \circ F : \cC \to \cD_\oplus$
such that
\[F_\oplus\left(\bigoplus_{u \in U} \iota_\oplus(C_u)\right) ~~=~~ \bigoplus_{u \in U} \iota_\oplus(F(C_u)) \]
so that $F_\oplus$ inherits a lax monoidal structure from $F$ by lifting the relevant maps functorially. For instance,
we have $\iota_\oplus(\unit) \to \iota_\oplus(F(\unit)) =
F_\oplus(\unit)$ for the unit. We leave checking the coherence diagram, the case of the binary tensor to the reader,
as well as checking that the rest of the structure of morphisms of streaming settings lift accordingly.
\end{proof}

\begin{lem}
\label{lem:distributive-law-tss}
Let $\mfC$ be a streaming setting. Then there are morphisms of streaming settings $\mfC_{\oplus\with} \to (\mfC_\oplus)_{\oplus\with}$
and $ (\mfC_\oplus)_{\oplus\with} \to \mfC_{\oplus\with}$.
\end{lem}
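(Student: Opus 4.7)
The plan is to obtain both morphisms by composition of simpler, already-constructed building blocks, relying on Lemma~\ref{lem:compl-funct} for one direction and on iterated universal properties of the completion operations for the other.

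For the first morphism $\mfC_{\oplus\with} \to (\mfC_\oplus)_{\oplus\with}$, I first observe that $\iota_\oplus$ underlies a morphism of streaming settings $\mfC \to \mfC_\oplus$: the functor is full, faithful and (symmetric) strong monoidal by construction, and the output map is definitionally preserved via the canonical isomorphism $\Hom{\cC_\oplus}{\iota_\oplus(\unit)}{\iota_\oplus(\retty)} \cong \Hom{\cC}{\unit}{\retty}$. Applying Lemma~\ref{lem:compl-funct} with $\square = \oplus\with$ then directly yields the desired morphism. Informally, an object $\bigoplus_u \bigwith_x C_{u,x}$ is mapped to $\bigoplus_u \bigwith_x \iota_\oplus(C_{u,x})$, reflecting the fact that $\mfC$-objects embed into $\mfC_\oplus$ as \enquote{trivial} one-element formal sums.

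For the second morphism $(\mfC_\oplus)_{\oplus\with} \to \mfC_{\oplus\with}$, I proceed by three successive applications of universal properties, starting from the functor $\iota_{\oplus\with} : \mfC \to \mfC_{\oplus\with}$. Since $\mfC_{\oplus\with}$ has finite coproducts, the universal property of $(-)_\oplus$ gives a coproduct-preserving extension $F_1 : \mfC_\oplus \to \mfC_{\oplus\with}$. Since $\mfC_{\oplus\with}$ also has finite products, the dual universal property of $(-)_\with$ applied to $F_1$ provides a product-preserving extension $F_2 : (\mfC_\oplus)_\with \to \mfC_{\oplus\with}$. One more application of the universal property of $(-)_\oplus$ then yields a coproduct-preserving extension $F_3 : (\mfC_\oplus)_{\oplus\with} \to \mfC_{\oplus\with}$ of $F_2$. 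Unfolding, $F_3$ acts on objects by the expected distributivity calculation:
\[ \bigoplus_u \bigwith_x \bigoplus_{v \in V_{u,x}} C_{u,x,v} \;\longmapsto\; \bigoplus_u \bigoplus_{f \in \prod_x V_{u,x}} \bigwith_x C_{u,x,f(x)} \]
absorbing the inner $\mfC_\oplus$-sums into the outer coproducts of $\mfC_{\oplus\with}$.

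The main obstacle is to upgrade $F_3$ to an actual morphism of streaming settings, which requires a lax monoidal structure and output preservation. For the monoidal structure, I would equip each $F_i$ with coherence maps compatible with the tensor products in the successive completions, exploiting that the tensor product on $\mfC_{\oplus\with}$ is defined to distribute over both coproducts and products. Concretely, at each step the monoidal structure of $F_{i+1}$ is forced by the one of $F_i$ via naturality and the universal properties; the coherence diagrams of Definition~\ref{def:monfunctor} follow from their instances in $\mfC_{\oplus\with}$. Preservation of the output is then immediate: $F_3 \circ \iota_{\oplus\with}^{(\mfC_\oplus)} \circ \iota_{\oplus\with} \cong \iota_{\oplus\with}$, so the chosen output morphism of $(\mfC_\oplus)_{\oplus\with}$ (the iterated canonical isomorphism $\Hom{}{\iota(\unit)}{\iota(\retty)} \cong \Hom{\cC}{\unit}{\retty}$) transports to the analogous one in $\mfC_{\oplus\with}$.
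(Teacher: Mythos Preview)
Your proposal is correct, and the first morphism is handled exactly as in the paper. For the second morphism, your approach and the paper's differ in how the construction is factored, though they rely on the same underlying ingredients.

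The paper builds $(\mfC_\oplus)_{\oplus\with} \to \mfC_{\oplus\with}$ as a composite
\[
(\mfC_\oplus)_{\oplus\with} \cong ((\mfC_\oplus)_\with)_\oplus \xrightarrow{(\mathfrak{Dist}_\mfC)_\oplus} ((\mfC_\with)_\oplus)_\oplus \xrightarrow{\mathfrak{Mu}_{\mfC_\with}} (\mfC_\with)_\oplus \cong \mfC_{\oplus\with},
\]
isolating two reusable pieces: a \enquote{distributive law} $\mathfrak{Dist}_\mfC : (\mfC_\oplus)_\with \to (\mfC_\with)_\oplus$ (defined on objects by $\bigwith_x \bigoplus_u C_{x,u} \mapsto \bigoplus_{f \in \prod_x U_x} \bigwith_x C_{x,f(x)}$) and a \enquote{multiplication} $\mathfrak{Mu}$ collapsing two layers of $(-)_\oplus$. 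You instead target $\mfC_{\oplus\with}$ from the start and build $F_1, F_2, F_3$ by three successive uses of the universal properties of $(-)_\oplus$, $(-)_\with$, $(-)_\oplus$. Unwinding, your $F_2$ is precisely the paper's $\mathfrak{Dist}$ composed with the identification $(\mfC_\with)_\oplus \cong \mfC_{\oplus\with}$, and your final $(-)_\oplus$-extension plays the role of $\mathfrak{Mu}$; your object-level formula coincides with theirs. The paper's decomposition is slightly more modular (the pieces $\mathfrak{Dist}$ and $\mathfrak{Mu}$ are meaningful on their own and fit the pseudo-monad picture alluded to in the surrounding text), while yours is more economical since it never leaves the target $\mfC_{\oplus\with}$. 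Both treatments leave the verification of the lax monoidal coherences at the level of a sketch, so your hand-waving there is no worse than the paper's.
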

\begin{proof}[Proof idea]
The first morphism can be obtained using Lemma~\ref{lem:compl-funct} with the morphism of streaming setting $\mfC \to \mfC_\oplus$
corresponding to $\iota_\oplus$. The second morphism can be written as a composite
\[
\xymatrix@C=20mm{
(\mfC_\oplus)_{\oplus\with} \cong ((\mfC_\oplus)_\with)_\oplus \ar[r]^-{(\mathfrak{Dist}_\mfC)_\oplus} & ((\mfC_\with)_{\oplus})_\oplus \ar[r]^-{\mathfrak{Mu}_{\mfC_\with}} & (\mfC_{\with})_\oplus \cong \mfC_{\oplus\with}}\]
where $(\mathfrak{Dist}_\mfC)_\oplus$ is obtained from a morphism $\mathfrak{Dist}_\mfC : (\mfC_\oplus)_\with \to (\mfC_\with)_\oplus$ by Lemma~\ref{lem:compl-funct}.
$\mathfrak{Dist}_\mfC$ itself is built from a functor $D_\cC : (\cC_\oplus)_\with \to (\cC_\with)_\oplus$ defined on object following the mantra ``products distribute over coproducts''.
\[ D_\cC\left(\bigwith_{x \in X} \iota_\with\left(\bigoplus_{u \in U_x} \iota_\oplus(C_{x,u})\right)\right) ~~ = ~~ \bigoplus_{F \in \prod_x U_x} \iota_\oplus\left( \bigwith_{x \in X} \iota_\with(C_{x,u})\right)\]
On the other hand, the functor $M : \cC$ which is part of $\mathfrak{Mu}_\mfC$ is obtained by using the universal property of $(-)_\oplus$ so that
\[M\left(\bigoplus_{u \in U} \iota_\oplus \left(\bigoplus_{v \in V} \iota_\oplus(C_{u,v})\right)\right) ~~ = ~~ \bigoplus_{(u,v)} \iota_\oplus(C_{u,v})\]
\end{proof}

\subsection{Proof of the main result on trees}
\label{subsec:main-trees}

We can now give the proof of Theorem~\ref{thm:main-tree}, which can be summarized as the string of equalities
\[\begin{array}{lcl@{\quad}l}
\laml &=& 
\text{single-state $\mfLam$-BRTTs}
& \text{\small by Lemma~\ref{lem:lamlbrtt}} \\
&=&
\text{single-state $\REGISTERleoneconflict_\oplus$-BRTTs}
&(\dagger) \\
&=&
\text{$\REGISTERleoneconflict$-BRTTs}
& \text{\small by Theorem~\ref{thm:oplus-brtt-conservative}}\\
&=&
\text{regular tree functions}
& \text{\small by Proposition~\ref{prop:brtt-regular}}
  \end{array}\]

Note that applying Theorem~\ref{thm:oplus-brtt-conservative} in the last step requires that objects
of $\Registerleoneconflict$ have unitary support, which is easily checked for output alphabets
${\bf \Gamma}$ containing at least one letter of arity $\varnothing$.
It remains to justify step $(\dagger)$, i.e.,
that single-state $\mfLam$-BRTTs and $\REGISTERleoneconflict_\oplus$-BRTTs are equi-expressive.
By Lemma~\ref{lem:morph-BRTT}, it suffices to have morphisms of streaming settings
both ways $\mfLam \leftrightarrow \REGISTERleoneconflict_\oplus$ to conclude. They may be obtained
as the following composites (where we leave implicit the uses of Lemma~\ref{lem:compl-funct} and some easily inferrable steps).
\[
\xymatrix{
\mfLam
\ar@/^1pc/[r]^{\text{Cor.~\ref{cor:laml-tree-to-registeropluswith}}}
&
\REGISTER_{\oplus\with}
\ar@/^1pc/[r]^{\text{Lem.~\ref{lem:distributive-law-tss}}}
\ar@/^1pc/[l]^{\text{Cor.~\ref{cor:registeropluswith-to-laml-tree}}}
&
(\REGISTER_\oplus)_{\oplus\with}
\ar@/^1pc/[r]^{\text{Lem~\ref{lem:leone}}}
\ar@/^1pc/[l]
&
(\REGISTERleone_\oplus)_{\oplus\with}
\ar@/^1pc/[r]^{\text{Lem.~\ref{lem:distributive-law-tss}}}
\ar@/^1pc/[l]&
\REGISTERleone_{\oplus\with}
\cong
\ar@/^1pc/[l]
(\REGISTERleone_\with)_{\oplus}
\ar@/^1pc/[r]^-{\text{Lem.~\ref{lem:conflict-with-equiv}}}
&
\REGISTERleoneconflict_{\oplus}
\ar@/^1pc/[l]
}\]

\section{Conclusion \& further work}

We have proven that the tree (and string) functions definable at type
$\Treety_{\bf \Sigma}[A] \lin \Treety_{\bf \Gamma}$ in the $\laml$-calculus
correspond exactly to regular functions. To prove the non-trivial left-to-right
inclusion, we used a syntactic lemma for $\laml$-terms
allowing to compile those terms into
a kind of transducer model using purely linear $\laml$-terms as its memory ($\mfLam$-BRTTs).
We then showed in a principled way
that those transducers are equivalent to the more standard single-use-restricted
Bottom-up Ranked Tree Transducers
(expressed as $\REGISTERleoneconflict$-BRTTs in our framework) which capture regular tree functions.
Along the way, we revisited a few results on  string
transducers under the light of our categorical framework, exhibiting some points
of convergence with our semantics for $\laml$-terms.

There are a number of ways one could plan on expanding this work. We list some
of the most relevant problems, regarding the definable
functions between strings and trees in various $\lambda$-calculi, that we expect could be tackled using similar
methods or minor variations of our setting.

\subsubsection*{Removing the additive connectives}
$\laml$ features the additive connectives $\oplus$ and $\with$ (as well as the relevant
units $0$ and $\top$). A natural question is whether all regular functions are
still encodable if we remove access to those connectives and use the $\lama$-calculus, an \emph{affine}
version of $\laml$ without those connectives; for strings, this is first half of~\cite[Claim 6.2]{aperiodic}.

In a sense, we went further than that in showing that $\laml$-definable functions are regular,
since it can be checked that all $\lama$-definable string-to-string functions are also $\laml$-definable.
However, we did not give evidence for the converse direction, as our encoding of states of BRTTs use the
additive connectives. We believe this can be remedied by leveraging our previous
work on encoding sequential transducers~\cite[Sections 4-5]{aperiodic} (that
relies on the highly non-trivial Krohn-Rhodes theorem).

As for functions taking trees as input, we suspect that the
$\lama$-calculus is \emph{strictly less expressive} than copyless BRTTs
($\REGISTER$-BRTTs) -- in fact, that $\lama$-terms are
 even unable to recognize all regular tree \emph{languages}. We expect that this can be shown
 by interpreting purely affine $\lama$-terms into (a variant of) the geometry of
 interaction semantics mentioned in Example~\ref{ex:hines}.
Our intent is to deduce from this that $\lama$-definable languages can be recognized by \emph{tree-walking
automata} -- which are indeed a natural extension to trees of the two-way
automata from Example~\ref{ex:hines} -- and some regular tree languages are known to be unrecognizable by
such devices~\cite{bojanczyk2008tree}.

Anecdotally, we show in~\Cref{sec:app-with} that removing the additive
disjunction `$\oplus$' from the $\laml$-calculus while keeping the additive
conjunction `$\with$' does not affect the definability of tree functions. While
this is a routine application of a continuation-passing-style transformation,
the nice thing is that it can be entirely carried out at the level of streaming
settings.

\subsubsection*{Dropping commutativity}
The second half of~\cite[Claim 6.2]{aperiodic} raised the issue of definable string-to-string
transductions in the $\lamp$-calculus, the restriction of the $\lama$-calculus to a \emph{non-commutative} multiplicative structure.
By analogy with the main result of~\cite{aperiodic}, which was that
$\lamp$-definable languages are star-free (i.e.\ first-order definable) while $\lama$
defines regular
languages, the natural guess is that they correspond to \emph{first-order regular
  functions}. We expect non-commutative typing to translate into a
\emph{planarity} condition in the above-mentioned category of diagrams (close to a free (non-symmetric) compact-closed category~\cite{kelly1980coherence}).

The situation with trees looks more delicate, although trying to compare the
expressiveness of the
$\lamp$-calculus with the functional combinators described in~\cite{FOTree} might be a good starting point.

\subsubsection*{Duplicating the input}
Going to higher complexities, but without escalating to the towers of
exponentials definable in the simply-typed $\lambda$-calculus,
one natural question is: how expressive are $\laml$-terms of type $\Treety_{\bf \Sigma}[\kappa] \to
\Treety_{\bf \Gamma}$, for purely linear $\kappa$? We expect that this problem
could be tackled by reusing our semantic
interpretation of purely linear $\laml$-terms together with a suitable
refinement of Lemma~\ref{lem:laml-niceshape}. Concerning the case of strings,
our current conjecture is that the expressible functions have polynomial growth
and form a strict subclass of \emph{polyregular
  functions}~\cite{polyregular,polyregularMSO}.

\subsubsection*{Semantic interpretation of exponentials}
One frustrating aspect of our approach is that we need to rely on the normalization of $\laml$-terms
and a special-purpose syntactic lemma to dissect the term under consideration and extract the relevant purely
linear subterms before being able to go to semantic interpretations and more
high-level arguments.
The reason behind this is that our semantics does not interpret the non-linear arrow
`$\to$' or, equivalently, a version of the exponential modality `$\oc$' of linear logic.
It might be more pleasant to have more general semantic settings with those features that still allow
us to show our characterizations. Furthermore, such a setting would be \emph{necessary} if we hope to
tackle the question of the expressiveness of functions $\Treety_{\bf \Sigma}[A] \to \Treety_{\bf \Sigma}$
for general $A$ using semantic tools (this last variant of the problem is equivalent to the setting
of the simply-typed $\lambda$-calculus without linearity constraints).

\bibliographystyle{alphaurl}
\bibliography{bi}

\appendix

\section{Alur and D'Antoni's Bottom-Up Ranked Tree Transducers}
\label{sec:app-brtt}
We give here a self-contained definition of the notion of BRTT corresponding to
regular tree functions, as they were designed in~\cite{BRTT}. Since the
paper~\cite{BRTT} is mainly concerned with transducers over unranked trees, the
information concerning the definition of BRTTs is spread over its sections 2.1,
3.7 and 3.8. We restrict here
to binary trees (as in \cite[Sections 3.7 and 3.8]{BRTT}) and avoid using
multicategories.

\begin{defi}[{\cite[p.~31:36]{BRTT}}]
  The set $\BinTree(\Sigma)$ of \emph{binary trees} over the alphabet $\Sigma$,
  and the set $\dBinTree(\Sigma)$ of \emph{one-hole binary trees} are generated
  by the respective grammars
  \[ T,U ::= \trnode \mid a\trpair{T}{U}\quad (a \in \Sigma) \qquad\qquad
    T',U' ::= \square \mid a\trpair{T'}{U} \mid a\trpair{T}{U'} \quad (a \in
    \Sigma) \]

  That is, $\BinTree(\Sigma)$ consists of binary trees whose leaves are all
  equal to $\langle \rangle$ and whose nodes are labeled with letters in
  $\Sigma$. As for $\dBinTree(\Sigma)$, it contains trees with exactly one leaf
  labeled $\square$ instead of $\trnode$. This \enquote{hole} $\square$ is
  intended to be substituted by a tree: for $T' \in \dBinTree(\Sigma)$ and $U
  \in \BinTree(\Sigma)$, $T'[U]$ denotes $T'$ where $\square$ has been replaced
  by $U$.
\end{defi}
\begin{defi}[{\cite[p.~31:40]{BRTT}}]
  The \emph{binary tree expressions} ($E,F$ below, forming the set
  $\ExprBT(\Sigma,V,V')$) and \emph{one-hole binary tree expressions} ($E',F'$
  below, forming the set $\ExprdBT(\Sigma,V,V')$) over the variable sets $V$ and
  $V'$ are generated by the grammar (with $x \in V$, $x' \in V'$ and $a \in
  \Sigma$)
  \[ E,F ::= \trnode \mid x \mid a\trpair{E}{F} \mid E'[F] \qquad E',F' :=
    \square \mid x' \mid a\trpair{E'}{F} \mid a\trpair{E}{F'} \mid
    E'[F'] \]
  Given $\rho : V \to \BinTree(\Sigma)$ and $\rho' : V' \to \dBinTree(\Sigma)$,
  one defines $E(\rho,\rho') \in \BinTree(\Sigma)$ for $E \in \ExprBT(\Sigma)$
  and $E'(\rho,\rho') \in \dBinTree(\Sigma)$ for $E \in \ExprdBT(\Sigma)$ in the
  obvious way.
\end{defi}

\begin{defi}[{\cite[\S3.7]{BRTT}}]
  Let us fix an input alphabet $\Gamma$ and output alphabet $\Sigma$. The set of
  \emph{register assignments} over two disjoint sets $R,R'$, whose elements are
  called \emph{registers}, is
  \[ \assi(\Sigma,R,R') = \ExprBT(\Sigma,R_{\ttl\ttr},R'_{\ttl\ttr})^R \times
    \ExprdBT(\Sigma,R_{\ttl\ttr},R'_{\ttl\ttr})^{R'} \quad\text{where}\
    R_{\ttl\ttr} = R\times\{\ttl,\ttr\}\]
              A \emph{register tree transducer} consists of a \emph{finite} set $Q$ of
  \emph{states} with an \emph{initial state} $q_I \in Q$, two disjoint
  \emph{finite} sets $R,R'$ of \emph{registers}, a \emph{transition function}
  $\delta : Q \times Q \times \Gamma \to Q \times \assi(\Sigma,R,R')$ and an
  \emph{output function} $F : Q \to \ExprBT(\Sigma,R,R')$. To each tree $T \in
  \BinTree(\Gamma)$, it associates inductively a \emph{configuration} $\Conf(T)
  \in Q \times \BinTree(\Sigma)^R \times \dBinTree(\Sigma)^{R'}$:
  \begin{itemize}
  \item The base case is $\Conf(\trnode) = (q_I, (r \mapsto \trnode), (r'
    \mapsto \square))$.
  \item When $\Conf(T) = (q_\ttl,\rho_\ttl,\rho'_\ttl)$, $\Conf(U) =
    (q_\ttr,\rho_\ttr,\rho'_\ttr)$ and $\delta(c,q_\ttl,q_\ttr) =
    (q,(\varepsilon,\varepsilon'))$, we set $\Conf(c\trpair{T}{U}) = (q, (r
    \mapsto \varepsilon(r)(\rho,\rho')),(r' \mapsto
    \varepsilon'(r')(\rho,\rho')))$ where $\rho(r,d) = \rho_d(r)$ for $(r,d) \in
    R \times \{\ttl,\ttr\}$ and similarly for $\rho'$.
  \end{itemize}
  The function defined by the transducer is $T \in \BinTree(\Gamma) \mapsto
  F(q_{\mathrm{fin}}(T))(\rho_{\mathrm{fin}}(T), \rho'_{\mathrm{fin}}(T))$ where
  $(q_{\mathrm{fin}}(T),\rho_{\mathrm{fin}}(T), \rho'_{\mathrm{fin}}(T)) =
  \Conf(T)$ (recall that $F$ is the output function).
\end{defi}

\begin{figure}
\tikzstyle{vertex}=[circle,fill=black,minimum size=8pt,inner sep=0pt]
\begin{center}
  \begin{tikzpicture}
    \coordinate (root) at (4,6) {};
    \coordinate (l) at (2,5) {};
    \coordinate (ll) at (1,4) {};
    \coordinate (lr) at (3,4) {};
    \coordinate (lrl) at (2.5,3) {};
    \coordinate (lrr) at (3.5,3) {};
    \coordinate (r) at (6,5) {};
    \coordinate (rl) at (5,4) {};
    \coordinate (rr) at (7,4) {};

    \draw (0.75,4) rectangle (1.25,3.5);
    \draw (0.75,3.5) rectangle (1.25,3);
    \node[vertex] () at (1,3.75) {};
    \node[vertex] () at (1,3.25) {};

    \draw (2.25,3) rectangle (2.75,2.5);
    \draw (2.25,2.5) rectangle (2.75,2);
    \node[vertex] () at (2.5,2.75) {};
    \node[vertex] () at (2.5,2.25) {};

    \draw (3.25,3) rectangle (3.75,2.5);
    \draw (3.25,2.5) rectangle (3.75,2);
    \node[vertex] () at (3.5,2.75) {};
    \node[vertex] () at (3.5,2.25) {};

    \draw (4.75,4) rectangle (5.25,3.5);
    \draw (4.75,3.5) rectangle (5.25,3);
    \node[vertex] () at (5,3.75) {};
    \node[vertex] () at (5,3.25) {};

    \draw (6.75,4) rectangle (7.25,3.5);
    \draw (6.75,3.5) rectangle (7.25,3);
    \node[vertex] () at (7,3.75) {};
    \node[vertex] () at (7,3.25) {};
    \draw[thick] (root) -- (l);
    \draw[thick] (root) -- (r);
    \draw[thick] (l) -- (ll);
    \draw[thick] (l) -- (lr);
    \draw[thick] (lr) -- (lrl);
    \draw[thick] (lr) -- (lrr);
    \draw[thick] (r) -- (rl);
    \draw[thick] (r) -- (rr);

    \node[vertex,fill=blue] () at (root) {};

    \node[vertex,fill=red] () at (l) {};

    \node[vertex,fill=blue] () at (lr) {}; 
    \node[vertex,fill=red] () at (r) {};
  \end{tikzpicture}
\end{center}
\begin{center}
  \begin{tikzpicture}
    \coordinate (root) at (4,6) {};
    \coordinate (l) at (2,5) {};
    \coordinate (ll) at (1,4) {};
    \coordinate (lr) at (3,4) {};
    \coordinate (lrl) at (2.5,3) {};
    \coordinate (lrr) at (3.5,3) {};
    \coordinate (r) at (6,5) {};
    \coordinate (rl) at (5,4) {};
    \coordinate (rr) at (7,4) {};

    \draw (0.75,4) rectangle (1.25,3.5);
    \draw (0.75,3.5) rectangle (1.25,3);
    \node[vertex] () at (1,3.75) {};
    \node[vertex] () at (1,3.25) {};
    \draw (2,4) rectangle (4,2);
    \node[vertex,fill=blue] (a) at (3,3.8) {};
    \node[vertex] (b) at (2.2,2.2) {};
    \node[vertex] (c) at (3.8,2.2) {};
    \draw[thick] (a) -- (b);
    \draw[thick] (a) -- (c);
    \draw (2,2) rectangle (4,0);
    \node[vertex,fill=blue] (a) at (3,1.8) {};
    \node[vertex] (b) at (2.2,0.2) {};
    \node[vertex] (c) at (3.8,0.2) {};
    \draw[thick] (a) -- (b);
    \draw[thick] (a) -- (c);

    \draw (5,5) rectangle (7,3);
    \node[vertex,fill=red] (a) at (6,4.8) {};
    \node[vertex] (b) at (5.2,3.2) {};
    \node[vertex] (c) at (6.8,3.2) {};
    \draw[thick] (a) -- (b);
    \draw[thick] (a) -- (c);
    \draw (5,3) rectangle (7,1);
    \node[vertex,fill=red] (a) at (6,2.8) {};
    \node[vertex] (b) at (5.2,1.2) {};
    \node[vertex] (c) at (6.8,1.2) {};
    \draw[thick] (a) -- (b);
    \draw[thick] (a) -- (c);
    \draw[thick] (root) -- (l);
    \draw[thick] (root) -- (r);
    \draw[thick] (l) -- (ll);
    \draw[thick] (l) -- (lr);

    \node[vertex,fill=blue] () at (root) {};

    \node[vertex,fill=red] () at (l) {};
  \end{tikzpicture}
\end{center}
\begin{center}
  \begin{tikzpicture}
    \coordinate (root) at (4,6) {};
    \coordinate (l) at (2,5) {};
    \coordinate (ll) at (1,4) {};
    \coordinate (lr) at (3,4) {};
    \coordinate (lrl) at (2.5,3) {};
    \coordinate (lrr) at (3.5,3) {};
    \coordinate (r) at (6,5) {};
    \coordinate (rl) at (5,4) {};
    \coordinate (rr) at (7,4) {};

    \draw (5,5) rectangle (7,3);
    \node[vertex,fill=red] (a) at (6,4.8) {};
    \node[vertex] (b) at (5.2,3.2) {};
    \node[vertex] (c) at (6.8,3.2) {};
    \draw[thick] (a) -- (b);
    \draw[thick] (a) -- (c);
    \draw (5,3) rectangle (7,1);
    \node[vertex,fill=red] (a) at (6,2.8) {};
    \node[vertex] (b) at (5.2,1.2) {};
    \node[vertex] (c) at (6.8,1.2) {};
    \draw[thick] (a) -- (b);
    \draw[thick] (a) -- (c);

    \draw (0,5) rectangle (4,2.5);
    \node[vertex,fill=blue] (a) at (3,3.8) {};
    \node[vertex] (b) at (2.2,2.7) {};
    \node[vertex] (c) at (3.8,2.7) {};
    \draw[thick] (a) -- (b);
    \draw[thick] (a) -- (c);
    \node[vertex,fill=red] (d) at (2,4.8) {};
    \node[vertex] (e) at (1,3.8) {};
    \draw[thick] (d) -- (a);
    \draw[thick] (d) -- (e);
    \draw (0,2.5) rectangle (4,0);
    \node[vertex,fill=blue] (a) at (1,1.3) {};
    \node[vertex] (b) at (0.2,0.2) {};
    \node[vertex] (c) at (1.8,0.2) {};
    \draw[thick] (a) -- (b);
    \draw[thick] (a) -- (c);
    \node[vertex,fill=red] (d) at (2,2.3) {};
    \node[vertex] (e) at (3,1.3) {};
    \draw[thick] (d) -- (a);
    \draw[thick] (d) -- (e);
    \draw[thick] (root) -- (l);
    \draw[thick] (root) -- (r);
    \node[vertex,fill=blue] () at (root) {};
  \end{tikzpicture}
\end{center}
\caption{First few steps of the run of the bottom-up ranked tree transducer of
  Figure~\ref{brtt-example} over a tree whose alphabet of node labels is $\Sigma
  = \{{\color{blue}\bullet},{\color{red}\bullet}\} \cong \{a,b\}$.\protect\\
  The configuration at each subtree is represented
  by two boxes; the top (resp.\ bottom) box displays the contents of $r_1$
  (resp.\ $r_2$). (The single state is omitted from the visual representation of
  the configuration.)}
\label{fig:brtt-fig2}
\end{figure}

\begin{exa}[illustrated by Figure~\ref{fig:brtt-fig2}]
  \label{brtt-example}
  Let us consider a transducer over the alphabets $\Gamma = \Sigma = \{a,b\}$,
  with a single state ($|Q| = 1$) and two registers, both tree-valued (so $R =
  \{r_1, r_2\}$ and $R' = \varnothing$). This simplifies the transition function
  $\delta$ into a function $\Gamma \to
  \ExprBT(\Sigma,R_{\ttl\ttr},\varnothing)^R$ -- equivalently, we will consider
  $\delta : \Gamma \times R \to \ExprBT(\Sigma,R_{\ttl\ttr},\varnothing)$.

  We take $\delta(c,r_1) = c\trpair{(r_1)_\ttl}{(r_1)_\ttr}$ and $\delta(c,r_2)
  = c\trpair{(r_2)_\ttr}{(r_2)_\ttl}$ for $c \in \{a,b\}$, where $r_\ttl$ is a
  notation for $(r,\ttl) \in R_{\ttl\ttr} = R\times\{\ttl,\ttr\}$. If we write
  $\hat{r}_i(T)$ ($i \in \{1,2\}$) for the contents of the register $r_i$ at the
  end of a run of the transducer on $T \in \BT(\Gamma)$, then this $\delta$
  translates into:
  \[ \hat{r}_1(c\trpair{T}{U}) = c\trpair{\hat{r}_1(T)}{\hat{r}_1(U)} \qquad
    \hat{r}_2(c\trpair{T}{U}) = c\trpair{\hat{r}_2(U)}{\hat{r}_2(T)}\qquad (c
    \in \{a,b\})\]
  And the initial condition is $\hat{r}_1(\trnode) = \hat{r}_2(\trnode) =
  \trnode$. Therefore $\hat{r}_1(T) = T$ and $\hat{r}_2(T)$ is $T$
  \enquote{mirrored} by exchanging left and right; let us write $\hat{r}_2(T) =
  \mathtt{reverse}(T)$.

  The output function $F$ is also simplified into an expression in
  $\ExprBT(\Sigma,R,\varnothing)$. By taking $F = a\trpair{r_1}{r_2}$, we define
  a transducer computing the regular function $T \mapsto
  a\trpair{T}{\mathtt{reverse}(T)}$.
\end{exa}

To characterize regular functions, a condition must be imposed\footnote{Without
  this restriction, one could have outputs of exponential size, whereas regular
  functions have linearly bounded output. Filiot and
  Reynier~\cite{FiliotReynier} have shown that the analogous unrestricted model
  for strings corresponds to HDT0L systems; we are not aware of any similar
  result on trees.} on the register assignments: the \emph{single use
  restriction}. It is not intrinsic and depends on an additional piece of data,
namely a \emph{conflict relation} between the registers.
\begin{defi}[{\cite[\S2.1]{BRTT}}]
  \label{def-conflict}
  A \emph{conflict relation} is a binary reflexive and symmetric relation.

  An expression $E \in \ExprBT(\Sigma,V,V')\cup \ExprdBT(\Sigma,V,V')$ is said
  to be \emph{consistent with} a conflict relation~$\incoh$ over $V \cup V'$
  when each variable in $V \cup V'$ appears at most once in~$E$, and for all
  $x,y \in V \cup V'$, if $x \neq y$ and $x \incoh y$, then $E$ does not contain
  both $x$ and $y$.
    
  A register assignment $(\varepsilon,\varepsilon') \in \assi(\Sigma,R,R')$ is
  \emph{single use restricted} with respect to a conflict relation $\incoh$ over
  $R \cup R'$ when:
  \begin{itemize}
  \item all $\varepsilon(r)$ for $r \in R$ and all $\varepsilon'(r')$ for $r'
    \in R'$ are consistent with $\incoh$;
  \item if $x_1, x_2, y_1, y_2 \in R \cup R'$, $x_1 \incoh x_2$ and, for some $d
    \in \{\ttl,\ttr\}$, $(x_1, d)$ appears in\footnote{By $\varepsilon \cup
      \varepsilon'$ we mean the map on the \emph{disjoint} union $R \cup R'$
      induced in the obvious way by $\varepsilon$ and $\varepsilon'$.}
    $(\varepsilon \cup \varepsilon')(y_1)$ and $(x_2, d)$ appears in
    $(\varepsilon \cup \varepsilon')(y_2)$, then $y_1 \incoh y_2$ (note that
    this includes the case $x_1 = x_2$).
  \end{itemize}
\end{defi}
\begin{defi}
  A \emph{bottom-up ranked tree transducer (BRTT)} is a register tree transducer
  $(Q,q_I,R,R',\delta,F)$ endowed with a conflict relation $\incoh$ on $R \cup
  R'$, such that $F(q)$ is consistent with $\incoh$ and all register assignments
  in the image of $\delta$ are single use restricted w.r.t.\ $\incoh$.

  A \emph{regular tree function} is a function computed by a BRTT.

  When the conflict relation is trivial (i.e.\ coincides with equality), we say
  that the BRTT is \emph{copyless}. We also say that a register tree transducer
  is copyless if it becomes a BRTT when endowed with a trivial conflict
  relation.
\end{defi}

\section{Normalization of the $\laml$-calculus}
\label{sec:laml-normalization}

This section is devoted to proving that the $\laml$-calculus is strongly
normalizing. What it means is that any $\laml$-term $t$ admitting a typing
derivation $\Psi; \; \Delta \vdash t : A$ can be shown to be
$\beta\eta$-equivalent to a \emph{normal} term $u$.
The notion of normal ($\NF$) and \emph{neutral} ($\NE$) are
defined via the typing system presented in Figure~\ref{fig:laml-normal}.
The intuition is that a normal term cannot be $\beta$-reduced further, and
that neutral terms substituted in normal terms produce terms that stay normal.

The purpose of this section is to show that \emph{any} typed term $t$ can
be turned into a normal term $t'$ of the same type via a sequence of $\beta$-reductions.
Because $\laml$ features \emph{positive} type constructors $\tensor$ and $\oplus$,
the reality is not quite so straightforward: it is not sufficient to $\beta$-reduce
a term to reach a normal form. In certain circumstances, one must additionally use
$\eta$-conversion to make certain $\beta$-redexes appear and proceed with the
computation of normal terms.
This difficulty is rather well-known in the context of $\lambda$-calculus
extended with coproducts~\cite{Lindley-sums,AltenkirchDH-sums,gasche-thesis}.
However we are not aware of a text treating exactly $\laml$ (e.g., incorporating
additives, a native $\tensor$ and units), so we include a proof using reducibility
candidates along the lines of~\cite[Appendix A.1]{RoccaRoversi-sums}.

To describe said reducibility candidates, we first need to give an oriented version
of $\beta\eta$-equality.
The $\beta$-reduction relation $\to_\beta$ is obtained by closing the relations
given in Figure~\ref{fig:beta-red} by congruence. We write $\to_\beta^*$ for the
reflexive transitive closure relation. Much like with $=_\beta$, we assume that
terms related by $\to_\beta$ have the same type in the same context.

\begin{figure}
\begin{mdframed}
\[
\begin{array}{l !\; c !{\;\;} c}
\text{$\beta$-redexes} &
(\lam x. t) \; u ~\to_\beta~ t[u/x]
&
(\lam^\oc x. t) \; u ~\to_\beta~ t[u/x]
\\ &
\pi_1(\tuple{t,u}) ~\to_\beta~ t
&
\pi_2(\tuple{t,u}) ~\to_\beta~ u
\\ &
\case(\inl(t), x.u,x.v) ~\to_\beta~ u[t/x]
&
\case(\inr(t), x.u,x.v) ~\to_\beta~ v[t/x]
\\ &
\tlet{x \tensor y}{t \tensor u}{v} ~~\to_\beta~~ v[t/x][u/y]
&
\tlet{()}{()}{t} ~\to_\beta~ t
\\\\
\end{array}
\]
\end{mdframed}
\caption{$\beta$-redexes.}
\label{fig:beta-red}
\end{figure}

As it is not the case that every typable term $\beta$-reduces
to a normal form, we need to describe another set of reduction rules which involve $=_\eta$.
Those \emph{extrusion rules} are listed in Figure~\ref{fig:extr}; we also write
$\to_\extr$ for the congruence closure of the relation described there.
While the number of cases is daunting, it should be remarked that these rules are
obtained mechanically by considering the nesting of an eliminator for a positive type
(i.e., the $\tlet{\cdot}{\cdot}{\cdot}$ constructions ($\tensor, \unit$), $\case$ ($\oplus$) and $\abort$ ($0$))
within another eliminator (the aforementioned constructions plus function application ($\to, \lin$) and
projections $\pi_1,\pi_2$ ($\with$)). For a more careful discussion of (a subset) of these rules, we point
the reader to~\cite[Section 3.3]{gasche-thesis}.
We write $\to_\extr^*$ for the reflexive transitive closure of $\to_\extr$, $\to_{\beta\extr}$ for the
union of $\to_\extr$ and $\to_\beta$ and $\to_\extr$, and $\to_{\beta\extr}^*$ for its reflexive transitive
closure. With these notations, we can state the finer version of the normalization theorem for $\laml$.

\begin{thm}
\label{thm:laml-normalization}
For every term term $t$ such that $\Psi; \; \Delta \vdash t : \tau$, there exists $t'$ such that $t \to_{\beta\extr}^* t'$ and
$\Psi; \; \Delta \vdash_\NF t' : \tau$.
\end{thm}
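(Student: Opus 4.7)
The plan is to follow the reducibility candidate approach of Ronchi Della Rocca and Roversi cited in the text, suitably adapted to accommodate both multiplicatives/additives and the linear/non-linear split of $\laml$. Since the linear/non-linear distinction is a restriction on which terms are well-typed, it plays no role in the normalization argument itself once a derivation is fixed; I can forget about the separation between $\Psi$ and $\Delta$ and work with a single context.

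First, I would establish some elementary reduction-theoretic facts: subject reduction for $\to_{\beta\extr}$ (routine case analysis, noting that the extrusion rules preserve typing because they only relocate eliminators across binding boundaries while respecting scoping), together with the observation that $\to_{\beta\extr}$ is included in $=_{\beta\eta}$. I would also verify a \emph{commutation} property: if $t \to_{\beta\extr}^* t'$ then any eliminator context applied to $t$ reduces to the corresponding one applied to $t'$, and substitution commutes with reduction in the expected way.

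Next, the core of the proof is the construction of a family of \emph{reducibility candidates} $\mathrm{Red}_\tau$ indexed by types. For the negative connectives ($\lin, \to, \with, \top$) the clauses are standard: $t \in \mathrm{Red}_{\tau\lin\sigma}$ iff $t\,u \in \mathrm{Red}_\sigma$ for every $u \in \mathrm{Red}_\tau$ (similarly for $\to$); $t \in \mathrm{Red}_{\tau\with\sigma}$ iff $\pi_1(t) \in \mathrm{Red}_\tau$ and $\pi_2(t) \in \mathrm{Red}_\sigma$; $\mathrm{Red}_\top$ is the set of all strongly normalizing terms. The delicate part is the positive connectives ($\tensor,\unit,\oplus,0$). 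Following~\cite{RoccaRoversi-sums} (and the surveys in~\cite{gasche-thesis}), I would define these by a \emph{saturation} clause built from the extrusion rules: $t \in \mathrm{Red}_{\tau\tensor\sigma}$ iff $t$ is strongly normalizing and for every evaluation context $E[\cdot]$ whose leaves are eliminators, $E[\tlet{x\tensor y}{t}{x \tensor y}] \in \mathrm{Red}$ at the appropriate type — this is exactly what the extrusion rules $\eta$-preserve on the nose. The key properties to verify for each $\mathrm{Red}_\tau$ are (CR1) every $t \in \mathrm{Red}_\tau$ is strongly normalizing for $\to_{\beta\extr}$; (CR2) $\mathrm{Red}_\tau$ is closed under $\to_{\beta\extr}$; (CR3) if $t$ is neutral and all its one-step $\to_{\beta\extr}$-reducts belong to $\mathrm{Red}_\tau$, then $t \in \mathrm{Red}_\tau$.

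I would then prove the \emph{fundamental lemma} by induction on typing derivations: for any derivation $\Gamma \vdash t : \tau$ and any substitution $\gamma$ with $\gamma(x) \in \mathrm{Red}_{\Gamma(x)}$ for all $x$, one has $t[\gamma] \in \mathrm{Red}_\tau$. The cases for negative connectives are routine using CR1–CR3. The cases for eliminators of positive connectives ($\tlet{x\tensor y}{u}{v}$, $\tlet{()}{u}{v}$, $\case(u,x.v,y.w)$, $\abort(u)$) are where the extrusion rules are needed: one shows that the commuted form belongs to $\mathrm{Red}$ using the saturation clause together with CR3 to handle neutral reducts. Applying the fundamental lemma to the identity substitution (each variable is neutral and normal, hence reducible by CR3) yields that $t \in \mathrm{Red}_\tau$, hence $t$ is strongly normalizing for $\to_{\beta\extr}$.

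Strong normalization plus a standard analysis of $\to_{\beta\extr}$-normal forms gives the statement: any $\to_{\beta\extr}$-normal $t'$ must be in $\NF$, because the shape of a term violating the $\NF$/$\NE$ grammar of Figure~\ref{fig:laml-normal} exhibits either a $\beta$-redex (when an introduction meets a matching eliminator) or a structural configuration matched by one of the extrusion rules (when an eliminator for a positive type sits inside another eliminator). I expect this final case analysis to be tedious but mechanical. The main obstacle is really the careful design of $\mathrm{Red}_\tau$ for positive connectives so that (i) it is preserved by the extrusion rules, and (ii) the fundamental lemma goes through for nested eliminators; this is where the literature on sum types (notably~\cite{gasche-thesis}) provides the blueprint, and no conceptually new device should be needed beyond what is already in print for related calculi.
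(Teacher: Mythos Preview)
Your plan is sound and would work, but it follows a genuinely different route from the paper's. The paper does \emph{not} prove strong normalization: its reducibility predicate $\Psi;\Delta \models t:\tau$ is set up so that membership directly entails the existence of a reduct $t'$ with $\vdash_\NF t':\tau$, and the clauses for positive connectives are ``intro-form reduct'' conditions rather than saturations --- e.g.\ for $\tau\tensor\sigma$ one requires, in addition to having an $\NF$ reduct, that whenever $t \to_{\beta\extr}^*\cocoeq t_1\tensor t_2$ the components lie in the candidates for $\tau$ and $\sigma$ (and analogously for $\oplus$ via $\inl/\inr$). This design forces the paper to invoke confluence up to commuting conversions (Theorem~\ref{thm:CR}, stated without proof) at several points of Theorem~\ref{thm:is-rc} --- for instance to show that the candidate at $\tau\oplus\sigma$ is closed under expansion, and that the candidates are closed under forward reduction and under $\cocoeq$. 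Your strong-normalization route with a saturation clause for positives would sidestep this confluence dependency and deliver the stronger SN property; the paper's route keeps the candidates lighter but imports confluence as a black box. One point of care in your sketch: the clause ``$E[\ldots] \in \mathrm{Red}$ at the appropriate type'' is circular as written, since the result type of $E$ may itself be positive; the usual fix in the references you cite is to close against a fixed pole (typically $\mathrm{SN}$) rather than against $\mathrm{Red}_\kappa$ for varying $\kappa$.
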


Before embarking on the definitions of the reducibility candidates and the proof of
Theorem~\ref{thm:laml-normalization} itself, we first make a couple of observations
relating $\to_{\beta\extr}^*$ and $=_{\beta\eta}$.

\begin{lem}
Suppose that we terms $t$ and $t'$ with matching types and
that $t \to_{\beta\extr} t'$. Then, we have $t =_{\beta\eta} t'$.
Furthermore if $t$ is normal, so is $t'$. Similarly, if $t$ is neutral,
so is $t'$.
\end{lem}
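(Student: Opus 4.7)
The plan is to split the proof into the two claims and handle them separately.

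For the first claim, that $t \to_{\beta\extr} t'$ implies $t =_{\beta\eta} t'$, I would proceed by induction on the derivation of the one-step reduction. Each base rule of $\to_\beta$ listed in Figure~\ref{fig:beta-red} is precisely the oriented form of a $\beta$-equation from Figure~\ref{fig:laml-betaeta}, so $\to_\beta$ is pointwise contained in $=_\beta$. Each extrusion rule of Figure~\ref{fig:extr} is either a directed instance of one of the commutative conversion $\eta$-equations from Figure~\ref{fig:laml-betaeta}, or is obtained by composing such an equation with an application of $\beta$-equality; in every case the two sides are related by $=_{\beta\eta}$. The congruence cases of the induction are immediate since $=_{\beta\eta}$ is itself a congruence on typed terms.

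For the second claim I would proceed by induction on the derivation of $t \to_{\beta\extr} t'$, with a nested case analysis on the normal/neutral typing derivation of $t$ provided by Figure~\ref{fig:laml-normal}. The key structural observation is that the normal-form typing rules have been designed so that (i) no introduction form may occupy the head position of an elimination, so no $\beta$-redex can be present at the outermost position of a normal or neutral term, and (ii) the subterms appearing in positions that admit arbitrary normal forms are themselves constrained to be normal. Consequently, any $\beta$-step inside a normal $t$ must take place strictly inside such a subterm, and the induction hypothesis applied to that subterm yields a normal replacement, leaving the outer normal-form derivation intact. The same reasoning preserves neutrality, since the neutral elimination frames only require their arguments to be normal.

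For an extrusion step, I would argue by cases on which extrusion rule fires and on the shape of the surrounding normal/neutral derivation. The extrusion rules were introduced precisely to rearrange a positive eliminator ($\tlet{x \tensor y}{-}{-}$, $\tlet{()}{-}{-}$, $\case$, or $\abort$) nested in the head position of an enclosing eliminator or constructor, and the typing system of Figure~\ref{fig:laml-normal} accepts those positive eliminators at the outer position; thus the rewritten term still admits a normal-form (or neutral) derivation of the same type. Reductions occurring strictly under an outer constructor are handled by applying the induction hypothesis to the immediate subterm and reassembling the derivation.

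The main obstacle is therefore not conceptual but bureaucratic: each combination of extrusion rule with each enclosing normal-form or neutral-form constructor must be checked individually to verify that the result still fits one of the normal or neutral typing rules. This is exactly the design criterion behind the extrusion rules of Figure~\ref{fig:extr} together with the normal-form system of Figure~\ref{fig:laml-normal}, and the verification is a routine, if tedious, case analysis.
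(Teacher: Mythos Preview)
Your proposal is correct and matches the paper's approach: the paper simply states that this lemma (together with the next one) is ``proved by straightforward induction on the relations $\to_\beta$ and $\to_{\beta\extr}$'', and you have spelled out precisely that induction, correctly identifying that the $\beta$ base cases are vacuous for normal/neutral $t$, that the extrusion base cases preserve the $\NF/\NE$ judgments by design of Figure~\ref{fig:laml-normal}, and that the congruence cases follow from the induction hypothesis.
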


\begin{lem}
If $t$ is normal, then there is no $t'$ such that $t \to_\beta t'$.
\end{lem}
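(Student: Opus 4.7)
The plan is a straightforward structural induction on the derivation $\Psi;\Delta \vdash_{\NF} t : \tau$, with a simultaneous statement about neutral terms. Concretely, I would prove by mutual induction the two statements: (a) if $t$ is normal, then no subterm of $t$ is a $\beta$-redex; (b) if $t$ is neutral, then the same holds \emph{and} moreover $t$ is not of the form of an introduction term ($\lambda x.\,u$, $\lambda^{\oc}x.\,u$, $\langle u,v\rangle$, $u\otimes v$, $()$, $\inl(u)$, $\inr(v)$, $\langle\rangle$). The second conjunct of (b) is what prevents the creation of a $\beta$-redex when a neutral appears in an eliminator position inside a larger normal term.

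The key invariant to extract from the normal/neutral grammar of Figure~\ref{fig:laml-normal} (which we may infer from its standard shape for IMALL with a base type) is that every positive eliminator appearing in a normal form has its principal argument (the scrutinee) restricted to be neutral: in $\tlet{x\otimes y}{u}{v}$, $\tlet{()}{u}{v}$, $\case(u, x.v_1, x.v_2)$, $\abort(u)$, $\pi_i(u)$, and $u\,v$, the term $u$ is required to be neutral; and a neutral term is in turn built up from a variable by successive eliminations. With this invariant, each of the seven $\beta$-redex patterns of Figure~\ref{fig:beta-red} is immediately excluded from any subterm of a normal term: the redex would require the scrutinee to be an introduction form, but (b) precisely forbids a neutral term from having that shape.

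The inductive step is then essentially one line per normal/neutral typing rule. For introduction rules (lambdas, pairs, tensor/unit constructors, injections, $\langle\rangle$), the head of the term is not an eliminator, so it cannot itself be a $\beta$-redex, and the induction hypothesis handles the subterms; the second clause of (b) is vacuously satisfied because the rule builds a normal, not a neutral, term. For the neutral rules (variable; an eliminator applied to a neutral scrutinee), the head is never an introduction form, which discharges the extra conjunct of (b); and the scrutinee being neutral ensures the eliminator/introduction mismatch that rules out a $\beta$-redex at the head. Remaining subterms in the eliminator (e.g.\ the branches of a $\case$, the body of a $\mathsf{let}$) are normal by the rule's premises, so the induction hypothesis applies.

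I do not expect any real obstacle: the entire argument is bookkeeping on the normal/neutral grammar, and the only point worth flagging is the need to carry the auxiliary clause in (b) throughout the mutual induction. Had we not included it, the case of a normal term of the form $\mathsf{elim}(u,\ldots)$ (with $u$ neutral by hypothesis) would not immediately rule out $u$ being an introduction form; this is exactly the step where the enriched statement pays off.
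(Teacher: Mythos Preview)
Your proposal is correct. The paper's own treatment is even terser: it simply says the lemma is ``proved by straightforward induction on the relation $\to_\beta$'', i.e.\ it inducts on the reduction derivation rather than on the $\vdash_\NF/\vdash_\NE$ derivation, but this is the same bookkeeping from the other side, and your explicit auxiliary invariant (neutral terms are never introduction forms) is exactly the fact that makes the head case go through in either orientation.
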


\begin{figure}
\begin{mdframed}
\[
\begin{array}{c}
\footnotesize
\text{Nested $\tensor/\unit$ eliminator}\\
\\
\begin{array}{rcl}
\footnotesize
(\tlet{p}{t}{u}) \; v
&\to_\extr&
\tlet{p}{t}{(u \; v)}
\\
\pi_i(\tlet{p}{t}{u})
&\to_\extr&
\tlet{p}{t}{\pi_i(u)}
\\
\tlet{q}{\tlet{p}{t}{u}}{v}
&\to_\extr&
\tlet{p}{t}{\tlet{q}{u}{v}}
\\
\abort(\tlet{p}{t}{u})
&\to_\extr&
\tlet{p}{t}{\abort(u)}
\\
\case(\tlet{p}{t}{u},x.v,y.w)
&\to_\extr&
\tlet{p}{t}{\case(u,x.v,y.w)}
\\
\end{array}
\\
\\
\footnotesize
\text{Nested $0$ eliminator}\\
\\
\begin{array}{rcl}
\footnotesize
\abort(t) \; u
&\to_\extr&
\abort(t)
\\
\pi_i(\abort(t))
&\to_\extr&
\abort(t)
\\
\tlet{p}{\abort(t)}{u}
&\to_\extr&
\abort(t)
\\
\abort(\abort(t))
&\to_\extr&
\abort(t)
\\
\case(\abort(t),x.u,y.v)
&\to_\extr&
\abort(t)
\\
\end{array}
\\
\\
\footnotesize
\text{Nested $\oplus$ eliminator}\\
\\
\begin{array}{r@{~}c@{~}l}
\footnotesize
\case(t,x.u,y.v) \; w
&\to_\extr&
\case(t,x.u \; w,y. v \; w)
\\
\pi_i(\case(t,x.u,y.v))
&\to_\extr&
\case(t,x.\pi_i(u),y.\pi_i(v))
\\
\tlet{p}{\case(t,x.u,y.v)}{w}
&\to_\extr&
\case(t,x.\tlet{p}{u}{w},y.\tlet{p}{v}{w})
\\
\abort(\case(t,x.u,y.v))
&\to_\extr&
\case(t,x.\abort(u),y.\abort(v))
\\
\case(\case(t,x.u,y.v), x'.u',y'.v')
&\to_\extr&
\case(t,x.\case(u,x'.u',y'.v'),y.\case(v,x'.u',y'.v'))
\\
\end{array}
\\\\
\end{array}
\]
\end{mdframed}
\caption{The extrusion relation $\to_\extr$ ($i = 1,2$ and $p, q$ are patterns $()$ or $z \tensor z'$).}
\label{fig:extr}
\end{figure}

\begin{figure}[h]
\begin{mdframed}
\[
\begin{array}{c !\; c}
{\dfrac{\Psi ; \; \Delta \vdash_\NE t : \tau}{\Psi; \; \Delta \vdash_\NF t : \tau}}
\\\\
\dfrac{}{\Psi ; \; x : \tau \vdash_\NE x : \tau}
&
\dfrac{}{\Psi, x : \tau ; \; \cdot \vdash_\NE x : \tau}
\\\\
\begin{array}{c}
\dfrac{\Psi; \; \Delta, \; x : \tau \vdash_\NF t : \sigma}{\Psi; \; \Delta \vdash_\NF \lambda x. t : \tau \lin \sigma}
\\\\
\dfrac{\Psi,\; x : \tau; \; \Delta \vdash_\NF t : \sigma}{\Psi; \; \Delta \vdash_\NF \lambda^\oc x. t : \tau \to \sigma}
\end{array}
&
\dfrac{\Psi; \; \Delta \vdash_\NE t : \tau \lin \sigma \qquad \Psi; \; \Delta' \vdash_\NF u : \tau}{\Psi ; \; \Delta, \; \Delta' \vdash_\NE t \; u : \sigma}
\\\\
\dfrac{\tiny \Psi; \; \Delta \vdash_\NF t : \tau \qquad \Psi; \; \Delta' \vdash_\NF u : \sigma}{\Psi; \; \Delta, \;\Delta' \vdash_\NF t \tensor u : \tau \tensor \sigma}
&
\dfrac{
\tiny
\Psi; \; \Delta' \vdash_\NE t : \tau \tensor \sigma
\qquad
\Psi; \; \Delta, \; x : \tau, \; y : \sigma \vdash_X u : \kappa
}{\Psi ; \; \Delta, \; \Delta' \vdash_X \tlet{x \tensor y}{t}{u} : \kappa}
\\\\
\dfrac{}{\Psi; \; \cdot \vdash_\NF () : \unit}
&
\dfrac{
\Psi; \; \Delta' \vdash_\NE t : \unit
\qquad
\Psi; \; \Delta \vdash_X u : \kappa
}{\Psi ; \; \Delta, \; \Delta' \vdash_X \tlet{()}{t}{u} : \kappa}
\\\\\\
\dfrac{\Psi; \; \Delta \vdash_\NF t : \tau \qquad
\Psi; \; \Delta \vdash_\NF u : \sigma
}{\Psi ; \; \Delta \vdash_\NF \tuple{t, u} : \tau \with \sigma}
&
\dfrac{\Psi ;\; \Delta \vdash_\NE t : \tau \with \sigma}{\Psi ; \; \Delta \vdash_\NE \pi_1(t) : \tau}
\qquad
\dfrac{\Psi ;\; \Delta \vdash_\NE t : \tau \with \sigma}{\Psi ; \; \Delta \vdash_\NE \pi_2(t) : \sigma}
\\\\\\
\dfrac{\Psi; \; \Delta \vdash_\NF t : \tau}{\Psi; \; \Delta \vdash_\NF \inl(t) : \tau \oplus \sigma}
&
\dfrac{\Psi; \; \Delta \vdash_\NF t : \sigma}{\Psi; \; \Delta \vdash_\NF \inr(t) : \tau \oplus \sigma}
\\\\\\
\multicolumn{2}{c}{\dfrac{\Psi; \; \Delta \vdash_\NE t : \sigma \oplus \tau \qquad \Psi; \; \Delta', \; x : \sigma \vdash_X u : \kappa \qquad \Psi; \; \Delta', \; y : \tau \vdash_X v : \kappa}{\Psi; \; \Delta, \; \Delta' \vdash_X \case(t,x.u,y.v) : \kappa}}
\\\\
\dfrac{}{\Psi; \; \Delta \vdash_\NF \tuple{ } : \top}
&
\dfrac{\Psi; \; \Delta \vdash_\NE t : 0}{\Psi; \; \Delta, \; \Delta' \vdash_\NE \abort(t) : \tau}
\\\\\\
\end{array}
\]
\end{mdframed}
\caption{Normal forms for $\laml$-terms ($\vdash_\NF$ for normal forms and $\vdash_\NE$ for neutral forms and $X \in \{\NE, \NF\}$).}
\label{fig:laml-normal}
\end{figure}

Both of these Lemmas are proved by straightforward induction on the
relations $\to_\beta$ and $\to_{\beta\extr}$.

Another crucial ingredient is the confluence of the reduction relation $\to_{\beta\extr}$ (or Church-Rosser property).
Alas, this does not hold for syntactic equality (up to $\alpha$-equivalence. However, it holds up to \emph{commuting conversions},
an equivalence relation $\cocoeq$ inductively defined by the clauses in Figure~\ref{fig:coco} and closure under congruence.
We merely state the confluence property that we will use.

\begin{thm}
\label{thm:CR}
If we have $t \to^*_{\beta\extr} u$ and $t \to^*_{\beta\extr} v$, then there exists $u'$ and $v'$ such that
$u \to^*_{\beta\extr} u'$, $v \to^*_{\beta\extr} v'$ and $u' \cocoeq v'$.
\end{thm}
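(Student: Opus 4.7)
The plan is to adapt the standard Tait--Martin-L\"of parallel reduction technique to this extended calculus, going modulo the commuting-conversion relation $\cocoeq$ to absorb the harmless non-determinism of extrusions. Because the $\laml$-calculus has both $\beta$-contractions and the extensive system of extrusion rules of Figure~\ref{fig:extr}, ordinary confluence of $\to_{\beta\extr}$ fails on the nose: two nested positive eliminators can be permuted in either order, producing $\alpha$-distinct terms that are nonetheless related by $\cocoeq$. The whole point of working modulo $\cocoeq$ is to make the proof go through in spite of this.

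First I would define a single-step parallel reduction $\Rrightarrow$ inductively on the term structure, with a clause per constructor/destructor that reduces the immediate subterms in parallel, plus a clause for each $\beta$-redex and for each $\extr$-rule allowing the redex to be contracted simultaneously with its subterms being reduced. In particular, the clauses for projection, application, $\case$, $\tlet{\cdot}{\cdot}{\cdot}$ and $\abort$ should be designed so that a stack of extrusions followed by an exposed $\beta$-redex (for example $\pi_1(\tlet{p}{t}{\tuple{u,v}}) \to_\extr \tlet{p}{t}{\pi_1(\tuple{u,v})} \to_\beta \tlet{p}{t}{u}$) can be performed in a single parallel step. Standard arguments then give the two inclusions $\to_{\beta\extr} \;\subseteq\; \Rrightarrow \;\subseteq\; \to^*_{\beta\extr}$, so that $\Rrightarrow^*$ and $\to^*_{\beta\extr}$ generate the same equivalence; proving confluence of $\to^*_{\beta\extr}$ modulo $\cocoeq$ reduces to proving the diamond property for $\Rrightarrow$ modulo $\cocoeq$.

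The core of the proof is then this diamond lemma: if $t \Rrightarrow u$ and $t \Rrightarrow v$, there exist $u', v'$ with $u \Rrightarrow u'$, $v \Rrightarrow v'$ and $u' \cocoeq v'$. I would prove it by induction on the parallel-reduction derivation of $t \Rrightarrow u$, case-splitting on the last rule used and on the shape of the competing reduction $t \Rrightarrow v$. The standard substitution lemma (that $\Rrightarrow$ is stable under substitution of parallel-reduced terms) is needed to handle the $\beta$ cases. A confluence-modulo-$\cocoeq$ argument \`a la Huet then yields confluence of $\Rrightarrow^*$ modulo $\cocoeq$, using a small additional lemma that $\cocoeq$ is preserved by $\Rrightarrow$ (if $t \cocoeq t'$ and $t \Rrightarrow u$, there exists $u'$ with $t' \Rrightarrow u'$ and $u \cocoeq u'$), which is proved by inspection of the generators of $\cocoeq$.

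The hard part, and where $\cocoeq$ is truly needed, will be the extrusion-vs-extrusion cases in the diamond lemma. For instance, if $t = \tlet{p}{\tlet{q}{a}{b}}{c}$, one can either extrude the inner $\tlet$ outward to obtain $u = \tlet{q}{a}{\tlet{p}{b}{c}}$, or (in the presence of a further nested positive eliminator) extrude in a different order, and the two resulting closures $u',v'$ may be syntactically distinct but equated by $\cocoeq$. Each such racing pair --- including the interactions listed in Figure~\ref{fig:extr} between $\tensor/\unit$-eliminators, $0$-eliminators, and $\case$ on $\oplus$, and their interactions with $\beta$-redexes created via extrusion --- has to be checked by hand. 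This is tedious but mechanical; the role of the commuting conversions is precisely to make the proliferation of syntactic variants collapse into a single equivalence class, so the common reduct only needs to exist up to $\cocoeq$ and not up to $\alpha$-equivalence.
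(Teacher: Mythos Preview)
The paper does not prove Theorem~\ref{thm:CR}: it is introduced with the sentence ``We merely state the confluence property that we will use'', immediately after defining $\to_\extr$ and $\cocoeq$ (Figures~\ref{fig:extr} and~\ref{fig:coco}). There is thus no proof in the paper to compare your attempt against.

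That said, your proposal is the standard and expected route for this kind of result, and it is sound. The Tait--Martin-L\"of parallel-reduction technique, combined with a Huet-style argument for confluence modulo an equivalence, is exactly how one handles $\lambda$-calculi with positive connectives and permutative conversions. Your diagnosis of where $\cocoeq$ is needed is correct: the generators listed in Figure~\ref{fig:coco} are precisely the critical pairs arising when two nested positive eliminators are extruded in different orders, so the one-step diamond for $\Rrightarrow$ genuinely only holds up to $\cocoeq$. The auxiliary commutation lemma you isolate (if $t \cocoeq t'$ and $t \Rrightarrow u$, then some $u'$ exists with $t' \Rrightarrow u'$ and $u \cocoeq u'$) is exactly the ingredient required to iterate the one-step diamond into confluence of $\Rrightarrow^*$ modulo $\cocoeq$, and the substitution lemma for $\Rrightarrow$ that you mention handles both the $\beta$-cases and the duplication of subterms induced by the $\case$ extrusion rules. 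The remaining work is, as you say, a tedious but mechanical enumeration of overlaps between the rules of Figure~\ref{fig:extr} and the $\beta$-redexes.
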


A first observation is these are compatible with $=_{\beta\eta}$ and
that neutral and normal term are preserved by commutative conversions.

\begin{lem}
\label{lem:coco-betaeta}
If $t \cocoeq t'$, then $t =_{\beta\eta} t'$.
\end{lem}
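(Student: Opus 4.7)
The proof is by a standard induction on the derivation of $t \cocoeq t'$. Since $\cocoeq$ is defined as the congruence closure of the base clauses in Figure~\ref{fig:coco}, the congruence cases are immediate from the fact that $=_{\beta\eta}$ is itself a congruence (it is defined as the closure under congruence of the equations of Figure~\ref{fig:laml-betaeta}). The work therefore concentrates on showing that each base clause generating $\cocoeq$ is already an instance of $=_{\beta\eta}$.

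The plan for the base cases is to observe that commuting conversions for the positive type eliminators (let-binding for $\tensor/\unit$, $\case$ for $\oplus$, and $\abort$ for $0$) can systematically be recovered by combining one $\eta$-expansion with one or more $\beta$-reductions. Concretely, for a clause of the form $E[\tlet{p}{t}{u}] \cocoeq \tlet{p}{t}{E[u]}$ (where $E[-]$ is some elimination context), I would first apply the $\eta$-rule for $\tensor$ (or $\unit$) from Figure~\ref{fig:laml-betaeta}, which tells us $E[r] =_\eta \tlet{p}{\widetilde{t}}{E[u]}$ in a suitable way; the targeted $\eta$-equation $\tlet{x\tensor y}{t}{v[u/z]} =_\eta v[\tlet{x\tensor y}{t}{u}/z]$ (and the analogous one for $\unit$) directly yields the desired commutation by taking $v$ to be $E[z]$. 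For the $\oplus$ commutations I would similarly invoke the $\eta$-equation $\case(t, x.u[\inl(x)/z], y.u[\inr(y)/z]) =_\eta u[t/z]$, instantiated at $u = E[z]$ and then $\beta$-reduced, which yields $E[\case(t,x.u,y.v)] =_{\beta\eta} \case(t, x. E[u], y. E[v])$. The $\abort$ commutations are handled by the $\eta$-rule $\abort(t) =_\eta u$ for any suitably typed $u$, which makes all of the required equalities trivial.

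The only mildly subtle point is that the shape of the extrusion rules in Figure~\ref{fig:extr} covers several distinct elimination contexts $E[-]$ (application, projection, nested let-binding, $\abort$, $\case$), so the proof amounts to going through each pairing of an eliminator nested inside another and noting that in every case the $\eta$-expansion of the outer eliminator at the type of the inner positive elimination, followed by the corresponding $\beta$-rule, gives the commutation. This is essentially the content of the discussions in~\cite{gasche-thesis} that we already referenced. I expect no real mathematical obstacle here; the step that would take the most care is simply bookkeeping the typing constraints so that the intermediate $\eta$-expansions are well-typed in the ambient context (since $\laml$ is linearly typed, one has to make sure the linear variables tracked by the $\eta$-rules appear in the right contexts). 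Once each base clause is validated in $=_{\beta\eta}$, the induction closes and we conclude $t =_{\beta\eta} t'$.
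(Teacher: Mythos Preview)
Your approach is correct and matches what the paper intends: the paper states this lemma without proof, treating it as a routine induction on the derivation of $\cocoeq$ where each generating clause is derived from the $\eta$-equations of Figure~\ref{fig:laml-betaeta} (possibly combined with $\beta$). One small slip: you describe the base clauses as ``the extrusion rules in Figure~\ref{fig:extr}'' and list application and projection among the relevant elimination contexts, but the commuting conversions $\cocoeq$ are the clauses of Figure~\ref{fig:coco}, which involve only the positive eliminators ($\mathsf{let}$, $\case$, $\abort$) nested in one another---no application or projection contexts occur there. This does not affect your argument: the $\eta$-rule $\tlet{x\tensor y}{t}{v[u/z]} =_\eta v[\tlet{x\tensor y}{t}{u}/z]$ (and its $\unit$, $\oplus$, $0$ analogues), instantiated with $v = E[z]$ for the appropriate positive-eliminator context $E[-]$, indeed handles each actual clause of Figure~\ref{fig:coco}.
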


\begin{lem}
\label{lem:coco-nenf}
If $t$ is neutral (resp. normal) and $t \cocoeq t'$, then $t'$ is also neutral (resp. normal).
\end{lem}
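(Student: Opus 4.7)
The plan is to proceed by induction on the derivation of $t \cocoeq t'$. Since $\cocoeq$ is defined as the congruence closure of the base clauses in Figure~\ref{fig:coco} augmented with reflexivity, symmetry, and transitivity, there are essentially four kinds of cases to handle. Reflexivity and transitivity are immediate; symmetry is as well, since the two statements (preservation of neutrality and preservation of normality) are themselves symmetric in $t$ and $t'$ once one treats them simultaneously. The substantive content lies in the base cases and in the congruence step.

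For the congruence step, I would do a subsidiary case analysis on the last typing rule used in the derivation $\Psi;\Delta \vdash_\NF t : \tau$ (or $\vdash_\NE$). In each rule of Figure~\ref{fig:laml-normal}, every subterm of $t$ appears at a fixed status, either $\vdash_\NF$, or $\vdash_\NE$, or $\vdash_X$ with $X$ matching the conclusion. If $t'$ is obtained from $t$ by replacing an immediate subterm $s$ by $s'$ with $s \cocoeq s'$, the induction hypothesis applied at the relevant status (neutral or normal) yields that $s'$ occupies a valid position in the same rule, so the same derivation can be reassembled for $t'$.

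For the base clauses, the structure of the argument is uniform: each commuting conversion in Figure~\ref{fig:coco} permutes a positive eliminator ($\tlet{p}{-}{-}$, $\case$, or $\abort$) past another term former. The crucial observation is that in Figure~\ref{fig:laml-normal} the rules for these three eliminators share the feature that their scrutinee is forced to be neutral while their branch body inherits the status $X \in \{\NF,\NE\}$ of the whole term; dually, every introduction or elimination rule either requires a normal subterm or a neutral one in a predictable position. Thus, for each base clause one verifies that the two sides admit compatible normal/neutral derivations under exactly the same hypotheses on the sub-pieces, by reshuffling the derivation tree accordingly. The units and absorption rules for $\abort$ are even easier because $\abort(s)$ is neutral as soon as $s$ is.

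The main obstacle is simply the sheer number of base clauses in Figure~\ref{fig:coco}: each one must be checked individually, and for clauses that nest two eliminators for positive types one has to verify two status-parametric derivation rewrites at once (one for each possible value of $X$). This is routine bureaucracy rather than a real difficulty, and a small amount of notation (e.g.\ a lemma abstracting the derivation-tree rearrangement common to the $\tlet{}{}{}$-commutation clauses and the $\case$-commutation clauses) should keep the case analysis compact.
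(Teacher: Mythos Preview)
Your approach is correct and is exactly the kind of argument the paper has in mind: the paper states Lemma~\ref{lem:coco-nenf} without proof, treating it as a routine verification, so your induction on the derivation of $\cocoeq$ with a case analysis on the base clauses of Figure~\ref{fig:coco} is the expected fill-in. The key structural observation you make---that in Figure~\ref{fig:laml-normal} each positive eliminator demands a neutral scrutinee while its body inherits the ambient status $X$---is precisely what makes every base clause go through symmetrically, and your handling of the congruence/symmetry/transitivity closure is standard.
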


\begin{figure}
\begin{mdframed}
\[
\begin{array}{c}
\begin{array}{rcl}
\footnotesize
\tlet{q}{u}{\tlet{p}{t}{v}}
&\cocoeq&
\tlet{p}{t}{\tlet{q}{u}{v}}
\\
\tlet{p}{t}{\abort(u)}
&\cocoeq&
\abort(\tlet{p}{t}{u})
\\
\tlet{p}{t}{\case(u,x.v,y.w)}
&\cocoeq&
\case(u,x.\tlet{p}{t}{v},y.\tlet{p}{t}{w})
\\
\case(u,x.\abort(t),y.\abort(v)) 
&\cocoeq&
\abort(\case(u,x.t,y.v))
\\
\end{array}
\\\\
\begin{array}{c}
\case(t,x.\case(u,x'.v,y'.w),y.\case(u,x'.v',y'.w'))\\
\cocoeq \\
\case(u,x'.\case(t,x.v,y.v'),y'.\case(t,x.w,y.w'))\\
\end{array}
\\
\end{array}
\]
\end{mdframed}
\caption{Commutative conversions $\cocoeq$ ($p, q$ are patterns $()$ or $z \tensor z'$ and both sides are assumed to be well-scoped).}
\label{fig:coco}
\end{figure}

We can now turn to the definition of the reducibility candidates,
where we write $t \to^*_{\beta\extr}\cocoeq t'$ when there
is some $t''$ such that $t \to^*_{\beta\extr} t''$ and $t'' \cocoeq t'$
hold.

\begin{defi}Define a judgment $\Psi; \; \Delta \models t : \tau$ by induction over
the type $\tau$ as follows:
\begin{itemize}
\item for $\tau = \basety, \unit, 0$ or $\top$, we have $\Psi; \; \Delta \models t : \tau$ if and only if there is
$t'$ such that $t \to^*_{\beta\extr} t'$ and $\Psi; \; \Delta \vdash_\NE t' : \tau$
\item $\Psi; \; \Delta \models t : \tau \lin \sigma$ holds if and only if
  \begin{itemize}
  \item there is $t'$ such that $t \to^*_{\beta\extr} t'$ and $\Psi; \; \Delta \vdash_\NF t' : \tau \lin \sigma$
  \item for every $u, \Delta'$ such that $\Psi; \; \Delta' \models u : \tau$, we have
$\Psi; \; \Delta, \; \Delta' \models t \; u : \sigma$
  \end{itemize}
\item $\Psi; \; \Delta \models t : \tau \to \sigma$ holds if and only if
  \begin{itemize}
  \item there is $t'$ such that $t \to^*_{\beta\extr} t'$ and $\Psi; \; \Delta \vdash_\NF t' : \tau \to \sigma$
  \item for every $u$ such that $\Psi; \; \cdot \models u : \tau$, we have
$\Psi; \; \Delta \models t \; u : \sigma$
  \end{itemize}
\item $\Psi; \; \Delta \models t : \tau \tensor \sigma$ holds if and only if
  \begin{itemize}
  \item there is $t'$ such that $t \to^*_{\beta\extr} t'$ and $\Psi; \; \Delta \vdash_\NF t' : \tau \tensor \sigma$
  \item if there are $t_1, t_2$ such that and $t \to^*_{\beta\extr} t_1 \tensor t_2$, then
  there are $\Delta_1$ and $\Delta_2$ such that $\Delta = \Delta_1, \Delta_2$ and
  \[\Psi; \; \Delta_1 \models t_1 : \tau \qquad \text{and} \qquad \Psi; \; \Delta_2 \models t_2 : \sigma\]
  \end{itemize}
\item $\Psi; \; \Delta \models t : \tau \with \sigma$ holds if and only if
  \begin{itemize}
  \item there is $t'$ such that $t \to^*_{\beta\extr} t'$ and $\Psi; \; \Delta \vdash_\NF t' : \tau \with \sigma$
  \item $\Psi; \; \Delta \models \pi_1(t) : \tau$
  \item $\Psi; \; \Delta \models \pi_2(t) : \sigma$
  \end{itemize}
\item $\Psi; \; \Delta \models t : \tau \oplus \sigma$ holds if and only if
  \begin{itemize}
  \item there is $t'$ such that $t \to^*_{\beta\extr} t'$ and $\Psi; \; \Delta \vdash_\NF t' : \tau \oplus \sigma$
  \item if there is $u$ such that $t \to^*_{\beta\extr} \cocoeq \inl(u)$, then $\Psi; \; \Delta \models u : \tau$
  \item if there is $v$ such that $t \to^*_{\beta\extr} \cocoeq \inr(v)$, then $\Psi; \; \Delta \models v : \sigma$
  \end{itemize}
\end{itemize}
\end{defi}

The set of terms $t$ such that $\Psi; \; \Delta \models t : \tau$ constitutes our set of reducibility candidates
at type $\tau$ in the context $\Psi; \; \Delta$. They are defined in such a way that if
$\Psi; \; \Delta \models t : \tau$, then there is $t'$ such that $t \to^*_{\beta\extr} t'$ and $\Psi; \; \Delta \vdash_\NF t' : \tau$. We shall be able to conclude this section if we show an \emph{adequacy lemma} stating that
every typable term lies in a reducibility candidate.
Before doing that, we first need a couple of stability properties: closure under anti-reduction, and the
fact that every neutral term lies in a reducibility candidate.

\begin{lem}
\label{lem:ne-not-nf}
If $t$ is a neutral term, then $t$ cannot be $\beta\eta$-equivalent to one of the following
\[\lam x. u \qquad \lam^\oc x. u \qquad (u,v) \qquad u \tensor v \qquad \langle \rangle \qquad () \qquad \inl(u) \qquad \inr(u)\]
\end{lem}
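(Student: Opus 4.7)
My plan is to derive the lemma from preservation of neutrality under the rewrite system $\to_{\beta\extr}$, combined with the Church--Rosser theorem (Theorem~\ref{thm:CR}) and direct inspection of the normal-form grammar.

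The first step is to verify that $\to_{\beta\extr}$ preserves neutrality: if $t$ is neutral and $t \to_{\beta\extr} t'$, then $t'$ is neutral. A routine case analysis on the $\beta$-redexes of Figure~\ref{fig:beta-red} and the extrusion rules of Figure~\ref{fig:extr} should suffice. The essential observation is that a neutral term has a free variable at its head, wrapped in a stack of eliminators; no $\beta$-redex can appear at the head position (since each redex requires an introduction constructor or a specific nested eliminator there that a neutral does not provide), so every rewrite step occurs strictly inside a proper subterm and leaves the head-variable structure intact. Combined with Lemma~\ref{lem:coco-nenf}, the entire $\to_{\beta\extr}^*\cocoeq$-equivalence class of a neutral consists of neutral terms, and by inspection of Figure~\ref{fig:laml-normal} none of the eight listed introduction shapes is at the outermost position of a $\vdash_\NE$-derivable term.

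The second step is to bridge from $\to_{\beta\extr}^*\cocoeq$-equivalence to the $=_{\beta\eta}$-equivalence in the statement, using Theorem~\ref{thm:CR}. Given $t$ neutral and $s$ introduction-headed with $t =_{\beta\eta} s$, I would extract a common $\to_{\beta\extr}^*\cocoeq$-reduct; the $t$-side remains neutral by the preservation argument, while a symmetric analysis on $s$ shows that extrusion and $\beta$-reduction cannot eliminate the outermost introduction constructor up to $\cocoeq$. The incompatibility of these two shapes finishes the argument.

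The main obstacle is the standard $\eta$-rule $x =_\eta \lam y.(x\,y)$, which superficially equates a neutral with an introduction form. Reconciling this with the lemma requires care: since $\to_{\beta\extr}$ does not itself include $\eta$-expansion, and the adequacy argument in the normalization proof only ever uses the conclusion up to $\to_{\beta\extr}^*\cocoeq$-equivalence, the preservation approach above should suffice even if the formal $=_{\beta\eta}$ statement needs to be interpreted modulo the canonical identifications encoded by $\cocoeq$.
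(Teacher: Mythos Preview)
The literal statement --- with full $=_{\beta\eta}$ --- is false, and you have correctly put your finger on why: at an arrow type the neutral term $x$ satisfies $x =_\eta \lambda y.\,x\,y$, and the right-hand side is one of the forbidden shapes. Your proposed bridge from $=_{\beta\eta}$ to $\to_{\beta\extr}^*\!\cocoeq$ via Theorem~\ref{thm:CR} cannot close this gap, for the same reason: confluence is stated only for $\to_{\beta\extr}$ (up to $\cocoeq$), and $\eta$-expansion is not among those rules, so $x$ and $\lambda y.\,x\,y$ are both $\to_{\beta\extr}$-normal, not $\cocoeq$-related, and share no common reduct. The relation $=_{\beta\eta}$ is strictly coarser than the equivalence generated by $\to_{\beta\extr}$ and $\cocoeq$.

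The paper's proof is the single line ``Trivial case-analysis''. Together with how the lemma is invoked in Theorem~\ref{thm:is-rc}, this makes clear that the intended statement is purely syntactic: no term derivable by $\vdash_\NE$ has any of the eight listed constructors at its outermost position. That is immediate from the grammar of Figure~\ref{fig:laml-normal}. The phrase ``$\beta\eta$-equivalent'' in the statement should be read as a slip for ``syntactically equal''.

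The preservation facts you assemble in your first step (neutrality preserved by $\to_{\beta\extr}$ and by $\cocoeq$) are correct and needed, but the paper records them separately --- in the lemma stated just after Theorem~\ref{thm:laml-normalization} and in Lemma~\ref{lem:coco-nenf}. In the proof of Theorem~\ref{thm:is-rc} these are combined with the syntactic Lemma~\ref{lem:ne-not-nf} to rule out, e.g., $t \to_{\beta\extr}^*\!\cocoeq\, \inl(u)$ for neutral $t$: the reduct must still be neutral, and $\inl(u)$ is not. So the overall shape of your argument is what is actually wanted for \emph{that} theorem; it is just misplaced as a proof of \emph{this} lemma, and cannot be stretched to cover the (false) $\beta\eta$ formulation.
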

\begin{proof}
Trivial case-analysis.
\end{proof}

\begin{thm}
\label{thm:is-rc}
Suppose that $\Psi; \; \Delta \models t' : \tau$.
Then the following hold:
\begin{itemize}
\item There exists $t''$ such that $t' \to^*_{\beta\extr} t''$ and $\Psi; \; \Delta \vdash_\NF t'' : \tau$.
\item If we have $t \to^*_{\beta\extr} t'$, then we also have $\Psi; \; \Delta \models t : \tau$.
\item If $\Psi; \; \Delta \vdash_\NE t : \tau$, then $\Psi; \; \Delta \models t : \tau$.
\item If $t' \cocoeq t''$, then $\Psi; \; \Delta \models t'' : \tau$.
\item If $t' \to^*_{\beta\extr} t''$, then we also have $\Psi; \;\Delta \models t'' : \tau$.
\end{itemize}
\end{thm}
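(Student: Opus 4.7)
The plan is to prove the five statements \emph{simultaneously} by structural induction on the type $\tau$, splitting on the outermost type constructor at each step. Statement~(1) is immediate in every case since it is literally the first conjunct in the definition of $\Psi;\Delta \models t : \tau$; so the real work is to verify statements~(2)--(5) at each type former, using as induction hypotheses the very same statements at strictly smaller types, together with the Church--Rosser theorem (Theorem~\ref{thm:CR}), Lemmas~\ref{lem:coco-betaeta}--\ref{lem:coco-nenf} about preservation of normal/neutral shape under $\to_{\beta\extr}$ and $\cocoeq$, and Lemma~\ref{lem:ne-not-nf} ruling out coincidences between neutral terms and introduction forms.

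For the \emph{negative} connectives ($\basety$, $\lin$, $\to$, $\with$, $\top$, $0$), the reasoning is uniform: reduction of $t$ propagates through the relevant eliminators (application, projection), so statement~(2) follows by pushing the reduction $t \to^*_{\beta\extr} t'$ inside the eliminator and invoking the IH~(2) at the codomain; statement~(5) is symmetric. Statement~(3) is the mantra \emph{``a neutral term applied to a semantic argument is again semantic''}: given an $\NE$-term $t$ and an input $u$ with $\Psi;\Delta' \models u : \tau$, part~(1) of the IH gives a reduction $u \to^*_{\beta\extr} u'$ to a normal form; then $t\,u \to^*_{\beta\extr} t\,u'$ is neutral, and IH~(3) at $\sigma$ together with the freshly established IH~(2) at $\sigma$ lets us conclude $\Psi;\Delta,\Delta' \models t\,u : \sigma$. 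Statement~(4) uses Lemma~\ref{lem:coco-nenf} to transport normality, plus the congruence of $\cocoeq$ under applications/projections.

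For the \emph{positive} connectives ($\tensor$, $\unit$, $\oplus$), the side conditions are existentially quantified over reductions to introduction forms, and the key tool is Theorem~\ref{thm:CR}. For (2): if $t \to^*_{\beta\extr} t'$ and $t \to^*_{\beta\extr} t_1 \tensor t_2$, applying confluence to the two reducts from $t$ produces a common $\cocoeq$-reduct; a short inspection of $\to_{\beta\extr}$ and $\cocoeq$ shows this reduct must again be of the form $u_1 \tensor u_2$ up to $\cocoeq$, so that $t'$ itself reduces to a tensor and IH~(4)+(5) on the components packages the hypothesis into the required conclusion. The case of $\oplus$ is similar but is where commutative conversions genuinely matter: the definition already bakes $\cocoeq$ into the $\case$-unfolding clause, which is precisely what lets confluence square off against the non-local rearrangements in Figure~\ref{fig:coco}. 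For (3): if $t$ is neutral, Lemma~\ref{lem:ne-not-nf} plus preservation of neutrality under $\to_{\beta\extr}$ and $\cocoeq$ forces every side condition (\emph{``if $t \to^*_{\beta\extr} \cocoeq \inl(u)$\ldots''}) to be vacuously true, and the existence of a normal reduct is witnessed by $t$ itself.

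The step expected to give the most trouble is the induction for $\oplus$ in clauses~(2), (4) and~(5), because there the antecedent mixes $\to^*_{\beta\extr}$ with $\cocoeq$; one must carefully chain Theorem~\ref{thm:CR} with Lemmas~\ref{lem:coco-betaeta}--\ref{lem:coco-nenf} (and with a small lemma, provable directly by inspection of the rules, asserting that if $s \cocoeq \inl(u)$ or $s \cocoeq u_1 \tensor u_2$ then $s$ itself has the same top-level constructor up to $\cocoeq$-rewriting beneath it). Once this syntactic compatibility between $\to_{\beta\extr}$, $\cocoeq$, and the constructors is in place, all five statements close uniformly, and Theorem~\ref{thm:laml-normalization} then follows from Theorem~\ref{thm:is-rc} via the usual adequacy lemma (treated separately after this proof).
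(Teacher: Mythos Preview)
Your proposal is correct and follows essentially the same approach as the paper: both argue by induction on $\tau$, use Lemma~\ref{lem:ne-not-nf} to make the positive side conditions vacuous for neutral terms, and invoke confluence (Theorem~\ref{thm:CR}) together with Lemma~\ref{lem:coco-nenf} to handle closure under $\to^*_{\beta\extr}$ and $\cocoeq$ at $\tensor$ and $\oplus$. Your explicit identification of the auxiliary fact that $\cocoeq$ and $\to_{\beta\extr}$ preserve top-level constructors is a point the paper leaves implicit (it surfaces only later, inside the adequacy proof), so if anything your write-up is slightly more careful on that front.
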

\begin{proof}
The first point can be proven via an easy case analysis on $\tau$ that we skip.
The second point we may prove by induction over $\tau$; let us sketch a few representative cases:
\begin{itemize}
\item If $\tau$ is $\basety$, $0$, $\unit$ or $\top$, this is immediate.
\item Suppose that $\tau ~=~ \sigma \lin \kappa$ and that $t \to^*_{\beta\extr} t'$. By definition of $\models$ at $\tau \lin \kappa$, there is some normal $t''$ such that
$t' \to^*_{\beta\extr} t''$, so we also have $t \to^*_{\beta\extr} t''$ by transitivity. Now suppose that we are given some
$u$ and $\Delta'$ such that $\Psi; \; \Delta' \models u : \tau$. By definition, we have $\Psi; \; \Delta, \; \Delta' \models t' \; u : \kappa$.
Using the induction hypothesis at $\kappa$ and the fact that $t \; u \to^*_{\beta\extr} t' \; u$, we thus have that $\Psi; \; \Delta, \; \Delta' \models t \; u : \kappa$.
We can thus conclude that $\Psi; \; \Delta \models t : \kappa$.
\item Suppose that $\tau ~=~ \sigma \oplus \kappa$ and that $t \to^*_{\beta\extr} t'$. By definition of $\models$, there is some normal $t''$ such that
$t' \to ^*_{\beta\extr} t''$, so we also have $t \to^*_{\beta\extr} t''$ by transitivity.
Now if we have $u$ (resp. $v$) such that $t \to_{\beta\extr}^*\cocoeq \inl(u)$ (resp. $\inr(v)$), then, thanks to confluence (Theorem~\ref{thm:CR}) we also have $t' \to^*_{\beta\extr} \cocoeq \inl(u)$ (resp. $\inr(v)$),
so we have $\Psi; \; \Delta \models u : \sigma$ (resp. $\Psi; \; \Delta \models u : \kappa$) by definition.
Therefore, we may conclude that $\Psi; \; \Delta \models t : \sigma \oplus \kappa$.
\end{itemize}
The third point is also proved via a straightforward induction over $\tau$ by leveraging Lemma~\ref{lem:ne-not-nf}.
The last two points follow from induction over $\tau$ combined with Theorem~\ref{thm:CR} and Lemma~\ref{lem:coco-nenf}.
\end{proof}

\begin{cor}
\label{cor:var-in-rc}
If $x$ is a variable of type $\tau$ in either $\Psi$ or $\Delta$, we have $\Psi; \; \Delta \models x : \tau$.
\end{cor}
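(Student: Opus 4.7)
The plan is extremely short because this is a direct corollary of the third bullet of Theorem~\ref{thm:is-rc}: every neutral term is a reducibility candidate. The only task is to observe that a variable is neutral in the appropriate judgment.

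First I would unfold what the hypothesis \enquote{$x$ of type $\tau$ appears in $\Psi$ or $\Delta$} means for a well-formed variable occurrence $\Psi;\,\Delta\vdash x:\tau$. Because the linear and non-linear axiom rules of $\laml$ (Figure~\ref{fig:laml}) allow a variable $x$ only in the two shapes $\Psi;\,x:\tau\vdash x:\tau$ and $\Psi,x:\tau;\,\cdot\vdash x:\tau$, the premise of the corollary forces $\Delta$ to be either $(x:\tau)$ (with $x\notin\Psi$) or $\cdot$ (with $x:\tau\in\Psi$). In either situation, the two matching axiom rules of the judgment $\vdash_\NE$ in Figure~\ref{fig:laml-normal} apply verbatim, so we conclude $\Psi;\,\Delta\vdash_\NE x:\tau$.

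Then I would invoke Theorem~\ref{thm:is-rc}, third item, which states that any neutral term lies in the reducibility candidate at its type. Applying it to the neutrality derivation just obtained yields $\Psi;\,\Delta\models x:\tau$, which is exactly the statement of the corollary.

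There is no hard step: both ingredients (the shape of the variable axioms and the neutral-to-candidate absorption property) are already in place. The only thing to be careful about is checking that the two cases of the variable rule exhaust the possibilities; this is immediate from the syntax-directedness of the $\laml$-calculus typing rules on variables.
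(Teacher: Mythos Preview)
Your proposal is correct and follows the same approach as the paper, whose proof is simply \enquote{Immediate as variables are neutral}. You are just being more explicit in spelling out why the axiom rules of $\vdash_\NE$ apply and then invoking the third item of Theorem~\ref{thm:is-rc}.
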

\begin{proof}
Immediate as variables are neutral.
\end{proof}

\begin{lem}
\label{lem:rc-pos-elim}
Suppose that we have $\Psi; \; \Delta \vdash_{\NE} t : \tau \otimes \sigma$ and $\Psi; \; \Delta', x : \tau, y : \sigma \models u : \kappa$.
Then we have $\Psi; \; \Delta, \;\Delta' \models \tlet{x \tensor y}{t}{u} : \kappa$.

Similarly, if $\Psi; \; \Delta \vdash_{\NE} t : \tau \oplus \sigma$,
$\Psi; \; \Delta', x : \tau \models u : \kappa$ and
$\Psi; \; \Delta', y : \sigma \models u' : \kappa$, we have $\Psi; \; \Delta, \Delta' \models \case(t,x.u,y.u') : \kappa$.

Finally, if $\Psi; \; \Delta \vdash_\NE t : \unit$ and $\Psi; \; \Delta' \models u : \kappa$, we also have $\Psi; \; \Delta, \; \Delta' \models \tlet{()}{t}{u} : \kappa$.
\end{lem}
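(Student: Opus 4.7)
The plan is to proceed by induction on the type $\kappa$, treating the three statements in parallel since their structures are identical. Focusing on the first, I write $w = \tlet{x \otimes y}{t}{u}$ so that the goal reads $\Psi; \Delta, \Delta' \models w : \kappa$; the $\oplus$ and $\unit$ cases are entirely analogous, substituting the extrusion rules for $\case$ and $\tlet{()}{-}{-}$ from Figure~\ref{fig:extr} in place of the ones for $\otimes$.

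I would first record a short structural lemma that controls the head shape of reducts of $w$. Since $t$ is neutral, it is never of the form $s_1 \otimes s_2$, so the top-level $\beta$-redex for $w$ cannot fire; the only reductions affecting the head are extrusions of the shape $\tlet{q}{\tlet{p}{-}{-}}{-} \to_\extr \tlet{p}{-}{\tlet{q}{-}{-}}$ (and similar ones with $\case$ or $\abort$), which merely swap one positive eliminator for another. Combined with the fact that the commutative conversions of Figure~\ref{fig:coco} only relate terms whose top-level constructors lie in $\{\tlet, \case, \abort\}$, this shows that every reduct of $w$ — and every term $\cocoeq$-equivalent to such a reduct — is headed by one of these three constructs. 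In particular, no reduct is ever of the form $a \otimes b$ or $\cocoeq$-equivalent to some $\inl(-)$ or $\inr(-)$.

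Given this invariant, the case analysis on $\kappa$ is routine. When $\kappa$ is one of $\basety, 0, \top, \unit$, Theorem~\ref{thm:is-rc} yields $u \to^*_{\beta\extr} u'$ with $u' \vdash_\NE$, and the polymorphic $\tlet$-rule of Figure~\ref{fig:laml-normal} taken with $X = \NE$ makes $\tlet{x \otimes y}{t}{u'}$ neutral, so $w$ belongs to the RC. When $\kappa$ is $A \otimes B$ or $A \oplus B$, the normal-form witness is obtained the same way (now with $X = \NF$), while the remaining clauses of the RC definitions are vacuous by the structural lemma. For $\kappa = A \lin B$, the application clause uses the extrusion $(\tlet{x \otimes y}{t}{u})\,v \to_\extr \tlet{x \otimes y}{t}{u\,v}$: given any $\Psi; \Delta'' \models v : A$, the $\lin$-clause of the RC for $u$ gives $u\,v$ in the RC at $B$, the induction hypothesis at the strictly smaller type $B$ places $\tlet{x \otimes y}{t}{u\,v}$ in that RC, and the anti-reduction clause of Theorem~\ref{thm:is-rc} finally transports membership back to $w\,v$. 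The cases $\kappa = A \to B$ and $\kappa = A \with B$ are handled identically, using the extrusions for $\to$-application and for $\pi_1, \pi_2$ respectively.

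The main subtlety lies in the structural lemma sketched above. It would be tempting to claim that every reduct of $w$ is itself of the form $\tlet{x \otimes y}{t}{u''}$, which would make everything even more mechanical; but this is false, because when $t$ is itself $\tlet$-headed an extrusion swaps the outer pattern past the inner one. The correct, weaker invariant — that the top-level constructor lives in $\{\tlet, \case, \abort\}$ and is stable under both $\to_{\beta\extr}$ and $\cocoeq$ — is what I expect to spend the most care on; everything else is a mechanical unfolding of the $\models$ definition against the extrusion rules of Figure~\ref{fig:extr}.
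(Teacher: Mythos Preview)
Your proof is correct and follows exactly the approach the paper sketches (its proof reads, in full, ``By induction over $\kappa$''); you have supplied the details the paper omits, including the head-shape invariant under $\to_{\beta\extr}$ and $\cocoeq$. One minor slip: in the base cases you cite Theorem~\ref{thm:is-rc} to obtain a \emph{neutral} $u'$, but that theorem only yields a normal form---the neutral reduct comes directly from the hypothesis $\Psi;\Delta',x{:}\tau,y{:}\sigma \models u : \kappa$, since at the ground types $\basety,\unit,0,\top$ the relation $\models$ is \emph{defined} as ``reduces to a neutral term''.
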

\begin{proof}
By induction over $\kappa$.
\end{proof}

\begin{thm}[Adequacy]
\label{thm:laml-adequacy}
Suppose that we have a non-linear context $\Psi ~=~ x_1 : \sigma_1, \; \ldots, \; x_k : \sigma_k$, a linear context
$\Delta ~=~ a_1 : \tau_1, \ldots, \; a_n : \tau_n$ and a term $v$ such that $\Psi; \; \Delta \vdash v : \kappa$ for some type $\kappa$.
Further, assume that $\Psi', \Delta'_1, \ldots, \Delta'_n$ and terms $t_1, \ldots t_k, u_1, \ldots, u_n$ such that
$\Psi'; \; \cdot \models t_i : \sigma_i$ for $1 \le i \le k$ and $\Psi'; \; \Delta'_j \models u_j : \tau_j$ for $1 \le j \le n$.
Then we have
\[\Psi'; \; \Delta'_1, \ldots, \; \Delta'_n \models v[t_1/x_1, \ldots, t_k/x_k, u_1/a_1, \ldots, u_n/a_n] : \kappa\]
\end{thm}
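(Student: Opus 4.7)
The plan is to proceed by induction on the typing derivation of $v$, establishing $\Psi'; \Delta'_1, \ldots, \Delta'_n \models v\gamma : \kappa$ where $\gamma$ denotes the substitution $[t_i/x_i, u_j/a_j]$. Write $\overline{\Delta'}$ for the concatenation $\Delta'_1, \ldots, \Delta'_n$. The variable cases are immediate from the hypotheses: if $v = x_i$ or $v = a_j$, the typing forces $\Delta$ to be empty or a singleton, making $v\gamma$ equal to $t_i$ or $u_j$ in the expected candidate. For introduction rules of the negative types (both lambdas, $\tuple{-,-}$ at $\with$, $\tuple{}$ at $\top$), we construct a normal form for $v\gamma$ from normal forms of its subterms given by the induction hypothesis (applied with identity-like substitutions on the freshly bound variables, whose membership in candidates comes from Corollary~\ref{cor:var-in-rc}), and verify the action conditions (application at an arrow type, projection at $\with$) using the induction hypothesis on the body and closure under anti-reduction (Theorem~\ref{thm:is-rc}). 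For example, to show $\lambda a. (v'\gamma) \models \tau \lin \sigma$, given any $u$ with $\Psi'; \Delta'' \models u : \tau$, the induction hypothesis with $u$ substituted for $a$ gives $(v'\gamma)[u/a]$ in the candidate at $\sigma$, and $(\lambda a. v'\gamma)\,u \to_\beta (v'\gamma)[u/a]$ then concludes. The positive type introductions ($\otimes$, $\oplus$, $\unit$) are handled similarly; for $t \otimes u$, the structural condition in the candidate definition follows because any reduction $t\gamma \otimes u\gamma \to^*_{\beta\extr} t_1 \otimes t_2$ factors through independent reductions of the two components, no rule rewriting a top-level tensor pair.

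Negative eliminations (applications, projections) are discharged by reading off the candidate definition at the eliminated term's type, which by the induction hypothesis is directly applicable. The substantive cases are the \emph{positive eliminators} $\tlet{p}{t}{u}$ (with $p$ a tensor or unit pattern), $\case(t, x.u, y.v)$, and $\abort(t)$. In each, the induction hypothesis applied to $t$ places $t\gamma$ in its candidate and so provides a normal form $t'$ of $t\gamma$. We then split on the shape of $t'$. If $t'$ is a canonical constructor (a pair $t_1 \otimes t_2$, a unit $()$, or an injection $\mathsf{inl}(s)/\mathsf{inr}(s)$), the structural conditions of the candidate definition furnish its components in their candidates, a $\beta$-redex exposes a substituted branch, the induction hypothesis on that branch concludes, and anti-reduction pulls the result back to $v\gamma$. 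If $t'$ is neutral, Lemma~\ref{lem:rc-pos-elim} directly produces the required membership, once the induction hypothesis (applied to the branches with the bound variables taken to themselves via Corollary~\ref{cor:var-in-rc}) establishes that those branches are in candidates.

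The main obstacle is the positive eliminator analysis, because the grammar of normal forms at positive types (Figure~\ref{fig:laml-normal}) is more liberal than the simple constructor/neutral dichotomy: a normal form of type $\tau \otimes \sigma$ may itself be a $\tlet{q}{n}{\ldots}$ or $\case(n, \ldots)$ with $n$ neutral and an arbitrary normal inner branch. This third shape is managed by appealing to the extrusion rules in Figure~\ref{fig:extr}, which commute the outer positive eliminator through the inner one; after finitely many such steps we reach a configuration of one of the two shapes treated above, with the overall reduction absorbed by anti-reduction. In tandem with the fact that Lemma~\ref{lem:rc-pos-elim} is itself proven by induction on the return type $\kappa$, thereby handling the propagation of the predicate through arbitrary eliminator nestings, this renders the case analysis exhaustive. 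From the adequacy theorem specialized to the identity substitution (whose variables satisfy $\models$ by Corollary~\ref{cor:var-in-rc}), Theorem~\ref{thm:laml-normalization} follows at once, since any term in a reducibility candidate reduces under $\to_{\beta\extr}^*$ to a normal form of the same type.
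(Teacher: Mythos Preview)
Your proposal is correct and follows essentially the same approach as the paper's proof: induction on the typing derivation, with the key positive-eliminator cases handled by reducing the scrutinee to a normal form and then performing a secondary induction on that normal form's shape, using the extrusion rules to commute the outer eliminator inward and Lemma~\ref{lem:rc-pos-elim} to climb back out once the innermost scrutinee is neutral or canonical. The paper organizes this secondary argument as an explicit induction over the judgment $\vdash_\NF u' : \tau\tensor\sigma$ rather than your ``finitely many extrusion steps'' phrasing, but the content is the same.
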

\begin{proof}
The proof goes by induction over the typing derivation $\Psi; \; \Delta \vdash v : \kappa$; we sketch a few representative subcases below.
To keep notations short, we write $\gamma$ (respectively $\delta$) instead of the sequence of assignments $t_1/x_1, \ldots, t_k/x_k$ (respectively $u_1/a_1, \ldots, u_n/a_n$) and
$\Delta' = \Delta'_1, \ldots, \; \Delta'_n$.
\begin{itemize}
\item If the last rule used in an axiom, the conclusion is immediate.
\item If the last rule used is a linear function application
\[\dfrac{\Psi; \; \Delta_1 \vdash v : \kappa \lin \kappa' \qquad \Psi; \; \Delta_2 \vdash v' : \kappa}{\Psi; \; \Delta_1, \; \Delta_2 \vdash v \; v' : \kappa'}\]
with $\delta_1, \delta_2$ and $\Delta''_1, \; \Delta''_2$ the obvious decomposition of $\delta$ and $\Delta'$,
the induction hypothesis yields
\[\Psi'; \; \Delta''_1 \models v[\gamma,\delta_1] : \kappa \lin \kappa' \qquad \text{and} \qquad \Psi'; \; \Delta''_2 \models v'[\gamma, \delta_2] : \kappa\]
By definition of $\models$ for type $\kappa \lin \kappa'$, we thus have $\Psi'; \; \Delta' \models v[\gamma, \delta_1] \; v'[\gamma, \delta_2] : \kappa'$, so we may conclude.
\item The case of non-linear function application is entirely analogous.
\item If the last rule is the typing of a linear $\lambda$-abstraction
\[\dfrac{\Psi; \; \Delta, \; c : \kappa \vdash v : \kappa'}{\Psi; \; \Delta \vdash \lam c. v : \kappa \lin \kappa'}\]
by the inductive hypothesis, we have $\Psi'; \; \Delta', \Delta'' \vdash v[\gamma,\delta,v'/c] : \kappa'$ for any $v'$ and $\Delta''$ such that $\Psi' ;\; \Delta'' \models v' : \kappa$.
We can prove the conjunct defining $\Psi'; \; \Delta' \models (\lam c.v)[\gamma,\delta] : \kappa \lin \kappa'$ as follows:
\begin{itemize}
\item First, by taking $\Delta'' = c : \kappa$ (Corollary~\ref{cor:var-in-rc}), we obtain that there exists some $v''$ such that $v[\gamma,\delta] \to^*_{\beta\extr} v''$ and $\Psi; \; \Delta, \; c : \kappa \vdash_\NF v'' : \kappa'$.
Therefore we have $\lam c. v[\gamma,\delta] \to^*_{\beta\extr} \lam c. v'$ and $\Psi; \; \Delta \vdash_\NF \lam c. v' : \kappa \lin \kappa'$.
\item Then, assume we have some $v'$ and $\Psi; \; \Delta'' \models v' : \kappa$, so that we have $\Psi'; \; \Delta', \Delta'' \vdash v[\gamma,\delta,v'/c] : \kappa'$ by the inductive hypothesis.
Because
\[(\lam c. v)[\gamma,\delta] \; v' = (\lam c. v[\gamma,\delta]) \; v' \to_{\beta} v[\gamma,\delta, v'/c]\]
we may apply Theorem~\ref{thm:is-rc} to conclude that $\Psi'; \; \Delta', \; \Delta'' \models (\lam c. v)[\gamma,\delta] \; v' : \kappa'$
\end{itemize}
\item The case of the non-linear $\lambda$-abstraction for $\to$ is similar.
\item If the last rule applied is an introduction of $\tensor$,
\[
\dfrac{\Psi; \; \Delta_1 \vdash t : \tau
\qquad
\Psi; \; \Delta_2 \vdash u : \sigma}{\Psi; \; \Delta_1, \; \Delta_2 \vdash t \tensor u : \tau \tensor \sigma}\]
call $\delta_1,\delta_2$ the splitting of $\delta$ according to the decomposition $\Delta = \Delta_1, \; \Delta_2$.
The induction hypothesis yields
\[\Psi'; \; \Delta'_1 \models t[\gamma, \delta_1] : \tau \qquad \text{and} \qquad \Psi'; \; \Delta'_2 \models u[\gamma, \delta_2] : \sigma\]
By definition it means that we have normal terms $t'$ and $u'$ such that $t[\gamma,\delta_1] \to^*_{\beta\extr} t'$ and $u[\gamma, \delta_2] \to^*_{\beta\extr} u'$,
so $(t \tensor u)[\gamma,\delta] \to^*_{\beta\extr} t \tensor u'$.
Now suppose that we have $(t \tensor u)[\gamma, \delta] \to_{\beta\extr}^* \cocoeq t'' \tensor u''$. It is not difficult to check (by induction over the length
of the reductions $\to_{\beta\extr}^*$ and derivation of $\cocoeq$) that we have $t[\gamma,\delta_1] \to_{\beta\extr}^*\cocoeq t''$ and $u[\gamma,\delta_2] \to_{\beta\extr}^*\cocoeq u''$.
So by Theorem~\ref{thm:is-rc}, we have that $\Psi; \; \Delta_1 \models t'' : \tau$ and $\Psi; \; \Delta_2 \models u'' : \sigma$,so we may conclude.
\item If the last rule applied is an elimination of $\tensor$
\[
\dfrac{
\Psi; \; \Delta_1 \vdash u : \tau \tensor \sigma
\qquad
\Psi; \; \Delta_2, x : \tau, y : \sigma \vdash t : \kappa
}{\Psi ; \; \Delta_1, \; \Delta_2 \vdash \tlet{x \tensor y}{u}{t} : \kappa}\]
with $\delta_1, \delta_2$ the obvious decomposition of $\delta$ along $\Delta_1, \; \Delta_2$,
the induction hypothesis applied to the first premise yields
$\Psi'; \; \Delta'_1 \models u[\gamma,\delta_1] : \tau \tensor \sigma$.
In particular, this means we have $u[\gamma,\delta_1] \to_{\beta\extr}^* u'$ such that
$\Psi'; \; \Delta'_1 \vdash_\NF u' : \tau \tensor \sigma$. By Theorem~\ref{thm:is-rc},
it suffices to show that $\tlet{x \tensor y}{u'}{t[\gamma,\delta_2]} \to_{\beta\extr}^* v$
such that $\Psi';\; \Delta'_1, \; \Delta'_2 \models v : \kappa$ to conclude.
We do so by going by induction over the judgment $\Psi'; \; \Delta'_1 \vdash_\NF u' : \tau \tensor \sigma$.
\begin{itemize}
\item If we have $\Psi'; \; \Delta'_1 \vdash_\NE u' : \tau \tensor \sigma$, then we may use the
outer inductive hypothesis $\Psi'; \; \Delta'_2, \; x : \tau, \; y : \sigma \models t[\gamma,\delta_2] : \kappa$
and apply Lemma~\ref{lem:rc-pos-elim}.
\item If we have $u' = \tlet{x' \tensor y'}{u''}{u'''}$, applying the induction hypothesis, we have some $v$ such
that
\[\tlet{x \tensor y}{u'''}{t[\gamma,\delta_2]} \to_{\beta\extr}^* v \qquad \text{and} \qquad \Psi';\;\Delta',x' : \tau', y' : \sigma' \models v : \kappa\]
We may thus conclude using the sequence of reductions
\[
\begin{array}{rll}
\tlet{x \tensor y}{\tlet{x' \tensor y'}{u''}{u'''}}{t[\gamma,\delta_2]}
&\to_\extr&
\tlet{x' \tensor y'}{u''}{\tlet{x \tensor y}{u'''}{t[\gamma,\delta_2]}}\\
&\to_{\beta\extr}^*& \tlet{x' \tensor y'}{u''}{v}
\end{array}
\]
and Lemma~\ref{lem:rc-pos-elim}
\item We proceed similarly  if $u' = \case(u'',x'.u''',y'.u'''')$, $\tlet{()}{u''}{u'''}$ or $\pi_i(u'')$.
\item Finally, if $u' = u'' \tensor u'''$, we apply the outer induction hypothesis with the substitution $\gamma,\delta_2, u''/x,u'''/y$ to conclude.
\end{itemize}
\end{itemize}
\end{proof}

\begin{proof}[Proof of Theorem~\ref{thm:laml-normalization}]
Instantiate Theorem~\ref{thm:laml-adequacy} in the case of a trivial substitution ($t_i = x_i$ and $u_j = a_j$)
using Corollary~\ref{cor:var-in-rc} and conclude with Theorem~\ref{thm:is-rc}.
\end{proof}

\section{Proof
Lemma~\ref{lem:laml-niceshape} (on
$\laml$-terms defining tree functions)}
\label{sec:laml-niceshape}

\begin{defi}
Write $\sqsubseteq_+$ for the least preorder relation over $\laml$ types satisfying the following
for every types $\tau$ and $\sigma$
\[
\tau, \sigma \sqsubseteq_+ \tau \tensor \sigma \quad
\tau, \sigma \sqsubseteq_+ \tau \oplus \sigma \quad
\tau, \sigma \sqsubseteq_+ \tau \with \sigma \quad
\sigma \sqsubseteq_+ \tau \lin \sigma \quad
\sigma \sqsubseteq_+ \tau \to \sigma
\]

We say that $\tau$ is a \emph{strictly positive subtype} of $\sigma$ whenever $\tau \sqsubseteq_+ \sigma$. 
\end{defi}

\begin{defi}
A context $\Psi; \; \Delta$ is called \emph{consistent} if there is no term $t$ such that $\Psi; \; \Delta \vdash t : 0$.
\end{defi}

\begin{lem}
\label{lem:consistent-ne}
A context $\Psi; \; \Delta$ is inconsistent if and only if there is a neutral term $t$ such that $\Psi; \; \Delta \vdash_\NE t : 0$.

Furthermore, if $\Psi; \; \Delta \vdash_\NE t : \tau$, the last typing rule applied has one premise $\Psi'; \; \Delta' \vdash_\NE u : \tau'$
and $\Psi; \; \Delta$ is consistent, then so is $\Psi'; \; \Delta'$.
\end{lem}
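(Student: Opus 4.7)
The plan is to handle the two parts separately, with the first relying crucially on normalization and the second on a uniform case analysis over neutral rules.

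For Part 1, the direction $(\Leftarrow)$ is immediate: a derivation $\Psi;\Delta \vdash_\NE t : 0$ is in particular a derivation $\Psi;\Delta \vdash t : 0$, witnessing inconsistency. For $(\Rightarrow)$, given a term $t$ of type $0$ in $\Psi;\Delta$, I would invoke Theorem~\ref{thm:laml-normalization} to reduce to a normal form $t'$ of type $0$ in the same context. I then prove, by induction on the derivation $\Psi;\Delta \vdash_\NF t' : 0$, the auxiliary claim that every normal form of type $0$ yields a neutral term of the same type in the same context. The inductive analysis is short because the type $0$ has no introduction rule: the last rule in the normal derivation is either the coercion of a $\vdash_\NE$ (in which case we are done), or one of the three positive eliminators $\tlet{x\tensor y}{s}{u}$, $\tlet{()}{s}{u}$, $\case(s,x.u,y.v)$ with $s$ neutral and continuations/branches normal of type $0$. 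In each of these cases, the induction hypothesis furnishes neutral continuations, and the fact that the positive-eliminator rules admit both $X = \NE$ and $X = \NF$ in their conclusion lets us repackage everything into a neutral term of type $0$ at the outer layer.

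For Part 2, the plan is a case analysis on the last neutral rule in the derivation, focusing on the distinguished (principal) neutral premise selected in the statement. The unifying observation is that, in every neutral rule, this premise's context $\Psi';\Delta'$ differs from the conclusion's context $\Psi;\Delta$ in one of two controlled ways: either $\Delta$ extends $\Delta'$ by additional linear hypotheses (as in application, the scrutinee of $\tlet{-}{-}{-}$ or $\case$, and the hypothesis of $\abort$), or $\Delta'$ is obtained from a part of $\Delta$ by adding fresh variables $x:\sigma$, $y:\sigma'$ bound by the eliminator (as in the continuation of $\tlet{x\tensor y}{-}{-}$ or $\tlet{()}{-}{-}$). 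In either case I argue by contrapositive: given a witness $\Psi';\Delta' \vdash w : 0$ of inconsistency of the premise's context, I produce a witness of type $0$ in $\Psi;\Delta$, either by applying $\abort$ to absorb the extra linear hypotheses, or by plugging $w$ back as the continuation of the corresponding positive eliminator to absorb the fresh hypotheses.

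The delicate step will be fixing the intended reading of ``has one premise $\Psi';\Delta' \vdash_\NE u : \tau'$''. The branch premises of the $\case$ rule (in the $X = \NE$ regime) cannot be handled by the above template: for instance, $y : 0 \oplus \unit \vdash_\NE \case(y, a.\abort(a), b.b) : \unit$ lives in a consistent context, yet the $a$-branch has the inconsistent context $a : 0$, because the other branch has no witness of type $0$ to pair with to form a $\case$ of type $0$. I therefore read the statement as quantifying over the single principal neutral premise that each rule naturally provides (the scrutinee in $\case$/$\tlet{-}{-}{-}$, the function in an application, the hypothesis of $\abort$, the pair in $\pi_i$), for which the propagation arguments above go through uniformly — and this is exactly the form in which the lemma is then used to traverse neutral spines in the subsequent proof of Lemma~\ref{lem:laml-niceshape}.
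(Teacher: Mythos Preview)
Your proposal is correct and follows essentially the same route as the paper: normalization for the first part, and a case analysis with contrapositive/substitution arguments for the second. The paper's proof is much terser, recording only the two key facts that drive the case analysis (consistency of subcontexts, and consistency under replacing a typable chunk $\Delta'$ by a fresh variable of the corresponding type), whereas you spell out the per-rule plugging arguments directly; these amount to the same thing.

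Your closing paragraph adds something the paper leaves implicit: the statement must be read as referring to the \emph{principal} neutral premise (scrutinee/function/pair), not to the branch premises of $\case$ in the $X=\NE$ regime. Your counterexample $y : 0 \oplus \unit \vdash_\NE \case(y, a.\abort(a), b.b) : \unit$ is correct and shows the branch reading is false; the paper's two facts indeed do not cover that case either (one cannot complete the missing branch of a $\case$ of type~$0$). This is a useful clarification, and it is consistent with how the lemma is actually invoked downstream, where the induction always follows the neutral spine through the principal premise.
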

\begin{proof}
The first point is an easy corollary of Theorem~\ref{thm:laml-normalization}.
The second point follows from a case analysis, using the following facts:
\begin{itemize}
\item If $\Psi, \; \Psi'; \; \Delta, \; \Delta'$ is consistent, then so is $\Psi; \; \Delta$.
\item If $\Psi; \; \Delta, \; \Delta'$ is consistent and $\Psi; \; \Delta' \vdash t : \tau$, then $\Psi; \; \Delta, \; x : \tau$ is consistent.
\end{itemize}
\end{proof}

\begin{lem}
\label{lem:neutral-subtype-var}
If $\Psi; \; \Delta$ is consistent and $\Psi; \; \Delta \vdash_\NE t : \tau$, then
there is a variable in $\Psi; \; \Delta$ of type $\sigma$ with $\tau \sqsubseteq_+ \sigma$.
\end{lem}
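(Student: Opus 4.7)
My plan is to proceed by induction on the neutral typing derivation $\Psi; \Delta \vdash_\NE t : \tau$, using reflexivity and transitivity of $\sqsubseteq_+$ throughout. The base case is when $t$ is a variable $x : \tau$ in $\Psi; \Delta$, where $\tau \sqsubseteq_+ \tau$ trivially. For the inductive step I would split on the last rule, grouping the cases into three categories based on the shape of the principal connective being eliminated.

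For eliminators of \emph{negative} connectives ($\lin$, $\to$, $\with$), there is a single neutral premise $\Psi; \Delta' \vdash_\NE u : \tau'$ with $\tau \sqsubseteq_+ \tau'$ (by the very definition of $\sqsubseteq_+$ at arrows and with-products). Lemma~\ref{lem:consistent-ne} ensures that $\Psi; \Delta'$ remains consistent, so I can apply the induction hypothesis to $u$ to obtain a variable of type $\sigma$ in $\Psi; \Delta' \subseteq \Psi; \Delta$ with $\tau' \sqsubseteq_+ \sigma$, and conclude by transitivity. The $\mathsf{abort}$ case is ruled out entirely: if $t = \abort(u)$ with $u : 0$ neutral, then the first point of Lemma~\ref{lem:consistent-ne} would force $\Psi; \Delta$ to be inconsistent, contradicting our hypothesis.

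The genuinely interesting cases are the eliminators of \emph{positive} connectives ($\tensor$, $\oplus$, $\unit$), where the let/case/unit-eliminator rule has two neutral premises: a scrutinee $u$ of type $\tau' \otimes \sigma'$ (or $\tau' \oplus \sigma'$ or $\unit$) and a continuation $v : \tau$ neutral in an extended context $\Psi; \Delta'', x:\tau', y:\sigma'$ (mutatis mutandis for the other connectives). I first apply the induction hypothesis to $v$ in this extended context, yielding a variable of some type $\rho$ with $\tau \sqsubseteq_+ \rho$. If that variable already lives in $\Psi; \Delta$, we are done. Otherwise it is one of the freshly bound $x$ or $y$, so $\tau \sqsubseteq_+ \tau'$ or $\tau \sqsubseteq_+ \sigma'$; in both cases $\tau \sqsubseteq_+ \tau' \otimes \sigma'$ (or $\tau' \oplus \sigma'$) by definition of $\sqsubseteq_+$. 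I then apply the induction hypothesis a second time, to the scrutinee $u$, yielding a variable of type $\sigma''$ in $\Psi; \Delta$ with $\tau' \otimes \sigma' \sqsubseteq_+ \sigma''$, and conclude again by transitivity. The $\case$ case is entirely analogous (using both branches only to confirm neutrality of the whole term), and the $\tlet{()}{u}{v}$ case is even simpler since no fresh variables are introduced.

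The main technical obstacle is ensuring the extended context $\Psi; \Delta'', x:\tau', y:\sigma'$ is consistent before applying the induction hypothesis to $v$. The lemma as stated only gives consistency of the context of a single neutral premise, but the two facts listed at the end of the proof of Lemma~\ref{lem:consistent-ne} suffice: starting from consistency of $\Psi; \Delta'', \Delta_u$ and the derivation $\Psi; \Delta_u \vdash u : \tau' \otimes \sigma'$, one obtains consistency of $\Psi; \Delta'', z:\tau'\otimes\sigma'$, and then inconsistency of $\Psi; \Delta'', x:\tau', y:\sigma'$ would contradict this via a $\mathsf{let}$-wrapping. An analogous argument works for $\oplus$ and $\unit$. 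Once this routine consistency transfer is in place, everything slots together into a clean induction.
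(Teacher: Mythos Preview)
Your proposal is correct and matches the paper's induction on the neutral typing derivation essentially step for step; you are in fact more careful than the paper about the consistency transfer to the extended continuation contexts, which the paper leaves entirely implicit. One small refinement for the $\oplus$-eliminator: the $\case$-wrapping analogue only shows that \emph{not both} branch contexts can be inconsistent (e.g.\ one could have $\sigma'=0$, making $\Psi;\Delta'',y{:}\sigma'$ trivially inconsistent), so you should apply the induction hypothesis to whichever branch remains consistent---and at least one always does, since otherwise $\case$-wrapping the two witnesses of inconsistency would contradict consistency of $\Psi;\Delta'',z{:}\tau'\oplus\sigma'$.
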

\begin{proof}
By induction on the judgement $\Psi; \; \Delta \vdash_\NE t : \tau$.
\begin{itemize}
\item If the last rule applied was a variable lookup. \begin{mathpar}
\inferrule{ }{\Psi ; \; x : \tau \vdash_\NE x : \tau}
\and
\inferrule{ }{\Psi, x : \tau ; \; \cdot \vdash_\NE x : \tau}
\end{mathpar}
then the conclusion immediately follows.
\item The more interesting cases are those of the elimination rules for $\lin$, $\tensor$ and $\with$.
\begin{mathpar}
\small
\inferrule{\Psi; \; \Delta \vdash_\NE t : \tau \lin \sigma \qquad \Psi; \; \Delta' \vdash_\NF u : \tau}{\Psi ; \; \Delta, \; \Delta' \vdash_\NE t \; u : \sigma}
\and
\inferrule{
\Psi; \; \Delta' \vdash_\NE t : \tau \tensor \sigma
\qquad
\Psi; \; \Delta, \; x : \tau, \; y : \sigma \vdash_\NE u : \kappa
}{\Psi ; \; \Delta, \; \Delta' \vdash_\NE \tlet{x \tensor y}{t}{u} : \kappa}
\and
\inferrule{\Psi ;\; \Delta \vdash_\NE t : \tau \with \sigma}{\Psi ; \; \Delta \vdash_\NE \pi_1(t) : \tau}
\and
\inferrule{\Psi ;\; \Delta \vdash_\NE t : \tau \with \sigma}{\Psi ; \; \Delta \vdash_\NE \pi_2(t) : \sigma}
\and \inferrule{\Psi; \; \Delta \vdash_\NE t : \sigma \oplus \tau \qquad \Psi; \; \Delta', \; x : \sigma \vdash_\NE u : \kappa \qquad \Psi; \; \Delta', \; y : \tau \vdash_\NE v : \kappa}{\Psi; \; \Delta, \; \Delta' \vdash_\NE \case(t,x.u,y.v) : \kappa}
\and
\dfrac{
\Psi; \; \Delta' \vdash_\NE t : \unit
\qquad
\Psi; \; \Delta \vdash_X u : \kappa
}{\Psi ; \; \Delta, \; \Delta' \vdash_X \tlet{()}{t}{u} : \kappa}
\and
\dfrac{\Psi; \; \Delta \vdash_\NE t : 0}{\Psi \; \Delta \vdash_\NE \abort(t) : \tau}
\end{mathpar}
The treatment of $\with$ and $\unit$ is rather straightforward and $0$ is ruled out because $\Psi; \;\Delta$ is assumed to be consistent, so we only explain the inductive step for
$\lin$, $\tensor$ and $\oplus$.
\begin{itemize}
\item If the last rule applied is the elimination of a linear arrow
\[\dfrac{\Psi; \; \Delta \vdash_\NE t : \tau \lin \sigma \qquad \Psi; \; \Delta' \vdash_\NF u : \tau}{\Psi ; \; \Delta, \; \Delta' \vdash_\NE t \; u : \sigma}\]
then the induction hypothesis applied to the first premise means that there is a variable $x$ in $\Psi; \; \Delta$ of type $\kappa$
such that $\tau \lin \sigma \sqsubseteq_+ \kappa$, and we may conclude since $\sigma \sqsubseteq_+ \tau \lin \sigma$.
\item If the last rule applied is the elimination of a tensor product
\[\dfrac{
\Psi; \; \Delta' \vdash_\NE t : \tau \tensor \sigma
\qquad
\Psi; \; \Delta, \; x : \tau, \; y : \sigma \vdash_\NE u : \kappa
}{\Psi ; \; \Delta, \; \Delta' \vdash_\NE \tlet{x \tensor y}{t}{u} : \kappa}\]
then the induction hypothesis applied to the first premise yields a variable $z$ in $\Psi; \; \Delta, \; x : \tau, \; y : \sigma$ of type $\zeta$ with $\kappa \sqsubseteq_+ \zeta$.
If $z \notin \{x,y\}$, then $z$ occurs in $\Psi; \; \Delta$ and we may conclude. Otherwise, suppose that $z = x$; applying the induction hypothesis to the second premise, we
know that $\tau \tensor \sigma \sqsubseteq_+ \tau \sqsubseteq_+ \zeta'$ for a $\zeta'$ being the type of some variable in $\Delta'$ or a $\zeta' = \oc \zeta''$ with $\zeta''$ being the type of some variable in $\Psi$.
Therefore, we may conclude since ${\kappa} \sqsubseteq_+ {\tau} \sqsubseteq_+ {\tau \tensor \sigma} \sqsubseteq_+ {\zeta'}$.
The case of $z = y$ is treated similarly.
\end{itemize}
\item If the last rule applied is the elimination of a coproduct
\[\dfrac{\Psi; \; \Delta \vdash_\NE t : \sigma \oplus \tau \qquad \Psi; \; \Delta', \; x : \sigma \vdash_\NE u : \kappa \qquad \Psi; \; \Delta', \; y : \tau \vdash_\NE v : \kappa}{\Psi; \; \Delta, \; \Delta' \vdash_\NE \case(t,x.u,y.v) : \kappa}\]
then, by the inductive hypothesis applied to the second premise, there is $\zeta$ with
$\kappa \sqsubseteq_+ \zeta$ such that
\begin{enumerate}
\item either there is a variable in $\Psi;\; \Delta'$ with type $\zeta$
\item or $\zeta = \sigma$.
\end{enumerate}
In the first case, we may directly conclude.
Otherwise, the induction hypothesis applied to the first premise states that there is $\zeta''$
with $\sigma \oplus \tau \sqsubseteq_+ \zeta''$ so that $\zeta''$ is a type of some variable in $\Psi$
or $\Delta$. Hence, we have $\kappa \sqsubseteq_+ \sigma \sqsubseteq_+ \sigma \oplus \tau \sqsubseteq_+ \zeta''$
and we may conclude.
\end{itemize}
\end{proof}

\begin{prop}
\label{prop:laml-tree-nf}
Fix a ranked alphabet ${\bf \Sigma}$.
The map $t \mapsto \underline{t}$ taking trees $t \in \Tree({\bf\Sigma})$ to their Church encodings
is a bijection.
\end{prop}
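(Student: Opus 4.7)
The plan is to establish both injectivity and surjectivity of $t \mapsto \underline{t}$ separately, taking advantage of the normalization theorem (Theorem~\ref{thm:laml-normalization}) to reduce the analysis of $\beta\eta$-equivalence classes to that of normal forms.

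For injectivity, I would use a standard semantic evaluation argument: I interpret the base type $\basety$ as the set $\Tree(\mathbf{\Sigma})$ and, more generally, each $\laml$-type as a set, with implication $\lin/\to$ interpreted as the set-theoretic function space, $\tensor$ as cartesian product, and so on. Given the family of set-theoretic functions $\hat{a} : \Tree(\mathbf{\Sigma})^{\arity(a)} \to \Tree(\mathbf{\Sigma})$ corresponding to the tree constructors, evaluating $\underline{t}$ at this interpretation yields exactly $t$. Since this interpretation is sound for $=_{\beta\eta}$, we conclude $\underline{t} =_{\beta\eta} \underline{u}$ implies $t = u$.

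For surjectivity, I would use Theorem~\ref{thm:laml-normalization} to assume a closed term $M$ of type $\Treety_{\mathbf{\Sigma}}$ is already in normal form. Because the top-level connectives of $\Treety_{\mathbf{\Sigma}}$ are non-linear arrows, $M$ must start with $\card{\Sigma}$ nested $\lambda^\oc$-abstractions, leaving a normal term $N$ of type $\basety$ in the context $\widetilde{\mathbf{\Sigma}};\;\cdot$ (empty linear context). The heart of the argument is then to show, by induction on $N$, that any such normal $N$ is of the form $a(N_1, \dots, N_{\arity(a)})$ where each $N_i$ is again a normal term of type $\basety$ in the same kind of context. Since $\basety$ has no introduction rule, $N$ must be neutral, and by Lemma~\ref{lem:neutral-subtype-var} (the context $\widetilde{\mathbf{\Sigma}}$ is clearly consistent) its type $\basety$ is a strictly positive subtype of the type of some free variable $a_i : \basety \lin \dots \lin \basety$. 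The only head variable then available is $a_i$, and the only normal neutral forms of type $\basety$ built on it are full applications $a_i\, u_1 \dots u_{\arity(a_i)}$; the linearity splits into empty linear contexts for each $u_j$ because the ambient context is already empty. The same lemma forbids any positive eliminator ($\tlet{}{}{}$, $\case$, $\abort$) from appearing, since no variable in $\widetilde{\mathbf{\Sigma}}$ has a tensor-, sum-, or $0$-type as a strictly positive subtype. The recursive call on each $u_j$ therefore produces a tree, and reassembling them yields the desired $t \in \Tree(\mathbf{\Sigma})$ with $M =_{\beta\eta} \underline{t}$.

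The main obstacle I expect is ensuring that no positive connectives or weird intermediate bindings can occur in the normal form, and this is precisely what Lemma~\ref{lem:neutral-subtype-var} handles. A secondary point of care is ensuring the linear context management is right at each recursive step: since we begin with empty linear context and the application rule for $\lin$ splits the linear context into two, each argument position inherits an empty linear context, and the induction proceeds smoothly. The coherence between the inductive reconstruction of $t$ on the syntactic side and the semantic evaluation used for injectivity is essentially tautological, which closes the proof.
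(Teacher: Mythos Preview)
Your proposal is correct and follows essentially the same approach as the paper: the injectivity argument via a set-theoretic semantic left-inverse, and the surjectivity argument via normalization followed by a syntactic analysis of the normal body using Lemma~\ref{lem:neutral-subtype-var} to rule out positive eliminators, both match the paper's proof closely. The paper merely organizes the surjectivity half into two auxiliary lemmas (one peeling off the outer $\lambda^\oc$-abstractions, one handling the neutral body of type $\basety$), but the content and key insights are the same as yours.
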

\begin{proof}[Proof of Proposition~\ref{prop:laml-tree-nf}]
For the sake of this proof, let us assume that ranked alphabets $\bf \Sigma$
are ordered. Recall that if $t \in \Tree({\bf \Sigma})$, we write
$\underline{t} : \Treety_{\bf \Sigma}$ for its Church encoding.
When $\Sigma = \{ a_1, \ldots, a_k\}$, $\underline{t}$ has
shape $\lam^\oc a_1. \ldots. \lam^\oc a_k. t^\circ$. for some neutral
term $t^\circ$. Let us adopt this notation for a map $t \mapsto t^\circ$,
mapping trees $t$ to terms
$\widetilde{\bf \Sigma} ; \; \cdot \vdash t^\circ : \basety$, and
let us abbreviate the sequence of $\lambda$-abstractions $\lam^\oc a_1. \ldots. \lam^\oc a_k.$
as $\lam^\oc {\bf \Sigma}$ for arbitrary (ordered) ranked alphabets $\bf \Sigma$.

We use those conventions to show that the map $t \mapsto \underline{t}$ is surjective
(where it is understood that the codomain consists of terms \emph{up to $\beta\eta$-equivalence}.

\begin{lem}
\label{lem:laml-tree-body}
Fix a ranked alphabet $\widetilde{\bf \Sigma}$. For every typed normal term $u$, we have
\begin{enumerate}
\item if $\widetilde{\bf \Sigma}; \; \cdot \vdash_{\NF} u : \basety$, then there is $t \in \Tree({\bf \Sigma})$
such that $u = t^\circ$.
\item if $\widetilde{\bf \Sigma}; \; \cdot \vdash_{\NE} u : \basety \lin \ldots \lin \basety$ where the type
of $u$ has $k$ arguments, then there exists $a \in \Sigma$,
a list of trees $t_1, \ldots, t_{\card{\arity(a)}-k} \in \Tree({\bf \Sigma})$ such that
$u = a \; t_1^\circ \; \ldots \; t_{\card{\arity(a)}-k}^\circ$.
\end{enumerate}
\end{lem}
\begin{proof}
We proceed by induction over the typing judgment of the normal form $u$.
Many cases are easily seen to not arise (typically, constructor for various datatypes).
Most eliminators can also be ignored because of Lemma~\ref{lem:neutral-subtype-var}. For instance, suppose
$u = \case(v, x.w,y.w')$. Then it would means that we had some $\widetilde{\bf \Sigma} ; \; \cdot \vdash_\NE v : \tau \oplus \sigma$
for some $\tau, \sigma$, but that cannot be the case as $\tau \oplus \sigma$ is never a strictly positive subtype of some
$\basety \lin \ldots \lin \basety$.

As a consequence, there are only two cases of interest: the variable case and (linear) function application.
\begin{itemize}
\item If $u$ is a variable of type $\basety$, then the first result (and, as a consequence, the second) result is immediate: $u$
is the Church encoding of a tree with a single leaf. If it is a variable of type $\basety \lin \ldots \lin \basety$, the first claim
is vacuously true and the second is also immediate.
\item If $u$ is a function application $v \; w$, then we have that $\widetilde{\bf \Sigma}; \; \cdot \vdash_{\NE} v : \basety \lin \ldots \lin \basety$
in both cases. So we may apply the inductive hypothesis to obtain some $a \in \Sigma$ and trees $t_1, \ldots, t_l$ such that
$v = a \; t_1^\circ \; \ldots \; t_l^\circ$.
We also have that $\widetilde{\bf \Sigma}; \; \cdot \vdash_{\NF} : w :\basety$, so we have some tree $t_{l+1}$ such that $w =_{\beta\eta} t_{l+1}^\circ$.
Altogether, we thus have $u = a \; t_1^\circ \; \ldots \; t_{l+1}^\circ$ as expected of the second item.
If the first item is not vacuously true, we have that $l+1 = \card{A}$, and thus, $a \; t_1^\circ \; \ldots \; t_{l+1}^\circ = (a(t_1, \ldots, t_{l+1}))^\circ$
as required.
\end{itemize}
\end{proof}

Given two \emph{ordered} ranked alphabets ${\bf \Sigma}$ and ${\bf \Gamma}$, write ${\bf \Sigma} \tensor {\bf \Gamma}$ for the ordered ranked alphabet
with letters $\Sigma + \Gamma$ determined by $\inl(a) < \inr(b)$ for $a \in \Sigma$, $b \in \Gamma$ and where the order is lifted from ${\bf \Sigma}$ and
${\bf \Gamma}$ otherwise.

\begin{lem}
\label{lem:laml-tree-lam}
Fix ranked alphabets ${\bf \Sigma}, {\bf \Gamma}$.
If we have $\widetilde{\bf \Sigma}; \; \cdot \vdash u : \Treety_{\bf \Gamma}$, then there exists some $t \in \Tree({\bf \Sigma} \tensor {\bf \Gamma})$ such that
$u =_{\beta\eta} \lam^\oc {\bf \Gamma}. t^\circ$.
\end{lem}
\begin{proof}
First, we use Theorem~\ref{thm:laml-normalization} to suppose that $u$ is under normal form, and we proceed by induction over the size of ${\bf \Gamma}$.
If it is empty, then the result follows from the first item of Lemma~\ref{lem:laml-tree-body}.
Otherwise ${\bf \Gamma} = {\bf S} \tensor {\bf \Gamma'}$ for some singleton alphabet ${\bf S}$ with letter $b$.
Then, a quick case analysis shows that, as in Lemma~\ref{lem:laml-tree-body}, most cases can be ignored due to the typing of $u$, and because
of considerations based on Lemma~\ref{lem:neutral-subtype-var}.
There is only one interesting case which is the non-linear $\lambda$-abstraction.
\[
\dfrac{\widetilde{\bf \Sigma}, \; \widetilde{\bf S}; \; \cdot \vdash v : \Treety_{\bf \Gamma}}{\widetilde{\bf\Sigma}; \; \cdot \vdash \lam^\oc b. v : \Treety_{\bf S \tensor \bf \Gamma}}\]
We may apply the induction hypothesis as $\widetilde{{\bf \Sigma} \tensor {\bf S}} = \widetilde{\bf \Sigma}, \widetilde{\bf S}$ and get that $u =_{\beta\eta} \lam^\oc b. v = \lam^\oc b. \lam^\oc {\bf \Gamma}. t^\circ = \lam^\oc {\bf S} \tensor {\bf \Gamma}. t^\circ$.
\end{proof}

By instantiating this latest lemma with ${\bf \Gamma}$ empty, we can thus deduce that the map $t \mapsto \underline{t}$ is surjective.
We may also show that it is injective by exhibiting a left-inverse map, using a semantic interpretation of $\laml$ into $\Set$:
with ${\bf \Sigma}$ fixed, use the cartesian-closed structure and coproducts to interpret $\laml$ with the interpretation of $\basety$ being $\Tree({\bf \Sigma})$.
This yields a map from terms $t : \Treety_{\bf \Sigma}$ to set-theoretic functions $(\Tree({\bf \Sigma}) \to \ldots \to \Tree({\bf \Sigma})) \to \ldots \to \Tree({\bf \Sigma})$,
where the arguments correspond to the arity of tree constructors; feed the actual constructors to this function to recover a tree in $\Tree({\bf \Sigma})$.

It is straightforward to check that this map is indeed a left inverse of $t \mapsto \underline{t}$, by induction over $t$.
Hence the map $t \mapsto \underline{t}$ is bijective.
\end{proof}

\begin{lem}
\label{lem:laml-niceshape-partial}
Let $\tau = \kappa_1 \to \ldots \to \kappa_k \to \kappa'$ be a type
and $s$ a distinguished variable of type $\tau$.
Let $\bf \widetilde\Sigma$ be a ranked alphabet such that ${\bf \widetilde\Sigma}; \; s : \tau$ is consistent.
Then, if there is $k' < k$ such that ${\bf \widetilde\Sigma}; \; s : \tau \vdash_\NE t : \kappa_{k'+1} \to \ldots \to \kappa_k \to \kappa$, there are also
terms $d_1, \ldots, d_{k'}$ such that
\[t =_\lamlequiv s \; d_1 \; \ldots \; d_{k'} \quad \text{and} \quad
{\bf \widetilde\Sigma}; \; \cdot \vdash_\NF d_i : \kappa_i \qquad \text{for $i \in \{1, \ldots, k'\}$}\]
\end{lem}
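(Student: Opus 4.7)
The plan is to proceed by induction on the derivation of the neutral typing judgment ${\bf \widetilde\Sigma};\;s:\tau \vdash_\NE t : T$, writing $T = \kappa_{k'+1} \to \ldots \to \kappa_k \to \kappa'$. The central observation is that $T$ contains at least one occurrence of the connective $\to$ at its root (since $k-k' \geq 1$), while every variable in ${\bf \widetilde\Sigma}$ has a type of the form $\basety \lin \ldots \lin \basety$, built only from $\lin$ and $\basety$. By Lemma~\ref{lem:neutral-subtype-var}, the witness for the type of any neutral subterm whose type contains $\to$ must therefore come from $\tau$, not from ${\bf \widetilde\Sigma}$. A second key ingredient is the well-foundedness of the strictly positive subtype relation $\sqsubseteq_+$: no type is a proper strictly positive subtype of itself. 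Since $\kappa' \sqsubseteq_+ T$ already holds and $T \neq \kappa'$ (as $k' < k$), $T$ cannot itself be a strictly positive subtype of $\kappa'$.

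Under these two principles, the case analysis on the last rule reduces essentially to two possibilities. If $t$ is a variable, it must be $s$ (not a ${\bf \widetilde\Sigma}$ variable), which forces $\tau = T$ and hence $k' = 0$, with an empty list of $d_i$'s. If $t = u\;v$ is a non-linear application, then $u$ has type $\sigma \to T$, which must be a strictly positive subtype of $\tau$ --- it cannot be a strictly positive subtype of $\kappa'$ since it properly contains $\kappa'$. Matching arrow-suffixes of $\tau$, the only option is $\sigma \to T = \kappa_{k'} \to T$ (forcing $k' \geq 1$), and the typing rule for non-linear application guarantees that $v$ is a normal term in the empty linear context ${\bf \widetilde\Sigma};\;\cdot$ of type $\kappa_{k'}$. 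The induction hypothesis applied to $u$ then yields $u =_\lamlequiv s\;d_1\;\ldots\;d_{k'-1}$ with $d_i$'s of the required shape, and setting $d_{k'} = v$ concludes this case.

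All remaining elimination rules must be ruled out. For a linear application $u\;v$ with $u$ of type $\sigma \lin T$, or a projection $\pi_i(u)$ with $u$ of type $\sigma_1 \with \sigma_2$ and $\sigma_i = T$, the scrutinee's type has $T$ as a proper subterm and must be a strictly positive subtype of $\kappa'$; but then $\kappa'$ properly contains $T$ which in turn contains $\kappa'$, contradicting well-foundedness. For the positive eliminators (let-pair, case, unit-elimination), the linear variable $s$ must be routed either to the scrutinee or to the continuation; in the former case, the continuation lives in a context without $s$, so the witness for its type $T$ can only come from the pattern-bound variables (whose types are strictly positive subtypes of $\kappa'$), again forcing $T = \kappa'$; in the latter case, the scrutinee lies in a ${\bf \widetilde\Sigma}$-only context, yet its type has a positive connective at the root, which is incompatible with strictly positive subtypes of the $\lin/\basety$-only types in ${\bf \widetilde\Sigma}$. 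Finally, $\abort(u)$ would require a neutral term of type $0$, contradicting the consistency of ${\bf \widetilde\Sigma};\;s:\tau$ (see Lemma~\ref{lem:consistent-ne}). The hardest part will be keeping the bookkeeping of contexts and witnessing variables straight across all these impossibility arguments, but each of them reduces uniformly to the two principles above.
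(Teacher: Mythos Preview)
Your argument is correct and coincides with the paper's proof, which is the single line ``By induction over $k'$. Note that $t$ being neutral is essential here.'' Your induction on the typing derivation and the paper's induction on $k'$ amount to the same case analysis: all rules except the non-linear application are ruled out by Lemma~\ref{lem:neutral-subtype-var} together with the size/well-foundedness observation, and each non-linear application step decrements $k'$ by one (one small point: in the linear application case you should also explicitly treat the split where $s$ goes into the argument $v$, but that is dispatched by exactly the same reasoning you already give for the positive eliminators).
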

\begin{proof}
By induction over $k'$. Note that $t$ being neutral is essential here.
\end{proof}

\begin{lem}
\label{lem:laml-niceshape-invariant}
Let $\tau = \kappa_1 \to \ldots \to \kappa_k \to \kappa'$ be a type
with $\kappa'$ purely linear
and $s$ a distinguished variable of type $\tau$.
Let $\bf \widetilde\Sigma$ be a ranked alphabet such that ${\bf \widetilde\Sigma}; \; s : \tau$ is consistent and
$t$ be a term such that ${\bf\widetilde\Sigma}; \; s : \tau \vdash_\NE t : \sigma$ for some
$\sigma$ such that $\sigma \sqsubseteq_+ \kappa'$ or of the shape $\basety \lin \ldots \lin \basety$.
Then, there are terms $o$, $d_1$, \dots, $d_k$ such that $t =_\lamlequiv o \; (s \; d_1 \; \ldots \; d_k)$ and
\[{\bf\widetilde\Sigma}; \; \cdot \vdash o : \kappa' \lin \sigma \qquad \qquad {\bf\widetilde\Sigma}; \; \cdot \vdash_\NF d_i : \kappa_i \quad \text{for $i \in \{1, \ldots, k\}$}\]
\end{lem}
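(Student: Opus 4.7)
The proof will proceed by induction on the neutral typing derivation of $t$. Throughout, I will exploit two facts repeatedly: first, by Lemma~\ref{lem:consistent-ne}, consistency of the ambient context is inherited by the principal subterm of any neutral eliminator, so I may apply the induction hypothesis; second, Lemma~\ref{lem:neutral-subtype-var} asserts that the type of any neutral subterm $t' \sqsubseteq_+ \zeta$ for some $\zeta$ occurring in the context. Since the only types available are $\tau = \kappa_1 \to \ldots \to \kappa_k \to \kappa'$ and the $\basety \lin \ldots \lin \basety$ from $\widetilde{\bf\Sigma}$, and since strict positive subtypes of $\tau$ are exactly the suffixes $\kappa_j \to \ldots \to \kappa_k \to \kappa'$ and the strict positive subtypes of $\kappa'$, this will tightly constrain what subterm types can appear.

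In the base case $t$ is a variable, and since $s$ must be used linearly, $t = s$, so $\sigma = \tau$. But the hypothesis on $\sigma$ forces $\sigma$ to be either purely linear or a suffix of some $\basety \lin \ldots \lin \basety$, which together with $\sigma = \tau$ forces $k = 0$; we take $o = \lam z. z$. For the inductive step with $t$ of the form $\pi_i(t')$, $\tlet{x \tensor y}{t'}{u}$, $\tlet{()}{t'}{u}$, or $\case(t', x.u, y.v)$, the type of $t'$ has a top-level additive or multiplicative connective that cannot occur in $\basety \lin \ldots \lin \basety$; together with Lemma~\ref{lem:neutral-subtype-var} this forces the type of $t'$ to be a strict positive subtype of $\kappa'$. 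A similar argument shows that $s$ must lie in the linear context of $t'$ (otherwise $t'$ would be neutral in a context with no linear variable matching its type). The IH then yields $t' =_{\lamlequiv} o'\,(s\,d_1\cdots d_k)$ with $o'$ purely linear, and we take $o$ to be $\lam z.\,\pi_i(o'(z))$, $\lam z.\,\tlet{x \tensor y}{o'(z)}{u}$, $\lam z.\,\tlet{()}{o'(z)}{u}$, or $\lam z.\,\case(o'(z), x.u, y.v)$ respectively; the case $t = \abort(t')$ is vacuous since $t' : 0$ would contradict the consistency hypothesis.

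The two application cases are more delicate. For $t = t_1\,t_2$ with $t_1 : \sigma_1 \to \sigma$ (non-linear arrow), $t_2$ lies in an empty linear context, so $s$ must appear in $t_1$. Applying Lemma~\ref{lem:neutral-subtype-var} to $t_1$, the presence of `$\to$' at the top level forces $\sigma_1 \to \sigma \sqsubseteq_+ \tau$, so $\sigma_1 \to \sigma = \kappa_j \to \ldots \to \kappa_k \to \kappa'$ for some $j$; but $\sigma$ must satisfy the hypothesis of the lemma, which rules out any `$\to$' at its top and forces $j = k$ and $\sigma = \kappa'$. Lemma~\ref{lem:laml-niceshape-partial} applied to $t_1 : \kappa_k \to \kappa'$ then yields $t_1 =_{\lamlequiv} s\,d_1\cdots d_{k-1}$, so setting $d_k = t_2$ (which is normal of type $\kappa_k$) and $o = \lam z.\,z$ we are done. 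For $t = t_1\,t_2$ with $t_1 : \sigma_1 \lin \sigma$, I split on whether $s$ lies in $t_1$ or in $t_2$. If $s$ is in $t_1$, Lemma~\ref{lem:neutral-subtype-var} shows that $\sigma_1 \lin \sigma$ either $\sqsubseteq_+ \kappa'$ or is a suffix of $\basety \lin \ldots \lin \basety$, which in both cases matches the IH hypothesis; applying the IH to $t_1$ gives $t_1 =_{\lamlequiv} o'\,(s\,d_1\cdots d_k)$ and we set $o = \lam z.\,o'(z)\,t_2$. If $s$ is in $t_2$, then $t_1$ is neutral in an empty linear context, which by Lemma~\ref{lem:neutral-subtype-var} forces $\sigma_1 = \basety$; hence $t_2$ is a normal term of type $\basety$, and since $\basety$ has no introduction rule $t_2$ is itself neutral, so the IH applies to $t_2$, yielding $t_2 =_{\lamlequiv} o''\,(s\,d_1\cdots d_k)$ and we set $o = \lam z.\,t_1\,(o''(z))$.

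The main obstacle is the bookkeeping around which subterm carries $s$: had a positive eliminator been allowed to have $s$ in its continuation $u$ rather than in its scrutinee $t'$, the IH would not directly apply because $u$ lives in an extended linear context (involving the bound variables of the pattern) and may itself not be neutral. The key observation that sidesteps this difficulty is that the principal type of such an eliminator always contains a connective not appearing in the types of the $\widetilde{\bf\Sigma}$-variables, which by Lemma~\ref{lem:neutral-subtype-var} rules out the scrutinee being neutral in a linear context not containing $s$. Once this is in place, the typing of $o$ in each case is routine: the body is built by substituting the fresh variable $z : \kappa'$ for the spine $s\,d_1\cdots d_k$ inside an elimination context that only mentions purely linear subterms and variables from $\widetilde{\bf\Sigma}$.
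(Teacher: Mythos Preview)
Your proof is correct and follows essentially the same approach as the paper's: an induction on the neutral derivation, using Lemma~\ref{lem:neutral-subtype-var} to constrain the type of the principal subterm in each eliminator, Lemma~\ref{lem:consistent-ne} to propagate consistency, and Lemma~\ref{lem:laml-niceshape-partial} to peel off the non-linear applications once the head is exposed. Your organization (grouping the positive eliminators and arguing uniformly about which subterm carries $s$) and your explicit remark that a normal term of type $\basety$ is neutral are minor presentational improvements over the paper's case-by-case treatment, but the underlying argument is the same.
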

\begin{proof}
We proceed by induction over a derivation of ${\bf\widetilde\Sigma}; \; s : \tau \vdash_\NE t : \sigma$.
Note that to apply the induction hypothesis, we need to ensure that every context under consideration is consistent.
We keep this check implicit as it always follows from Lemma~\ref{lem:consistent-ne}.
\begin{itemize}
\item If the last rule applied is a variable lookup, then the term in question must be $s$ itself.
Furthermore, we must have $k = 0$, so we may simply take $o = \lam x. x$ to conclude.
\item If the last rule considered is the following instance of the application rule
\[\dfrac{{\bf\widetilde\Sigma};  \; s : \tau \vdash_\NE t : \sigma' \lin \sigma \qquad {\bf\widetilde\Sigma}; \; \cdot \vdash_\NF u : \sigma'}{{\bf\widetilde\Sigma} ; \; s : \tau \vdash_\NE t \; u : \sigma}\]
then, by Lemma~\ref{lem:neutral-subtype-var} (applied on the first premise), it means that we have 
\[\text{either}~~ \sigma' \lin \sigma \sqsubseteq_+ \tau \qquad \text{or}~~ \sigma \lin \sigma' = \basety \lin \ldots \lin \basety\]
In the first case, we can further see that $\sigma' \lin \sigma \sqsubseteq_+ \kappa'$, so in both cases the induction hypothesis can be applied to the first premise to yield
some $o'$ and $d_1, \ldots, d_k$ such that $o' \; (s \; d_1 \ldots d_k) =_\lamlequiv t$, and we may set $o = \lam s. \; o' \; s \; u$ to conclude.
\item If the last rule considered is the following instance of the application rule
\[\dfrac{{\bf\widetilde\Sigma};  \; s : \tau \vdash_\NE t : \sigma' \to \sigma \qquad {\bf\widetilde\Sigma}; \; \cdot \vdash_\NF u : \sigma'}{{\bf\widetilde\Sigma} ; \; s : \tau \vdash_\NE t \; u : \sigma}\]
then, by Lemma~\ref{lem:neutral-subtype-var} (applied on the first premise), it means that we have 
\[\text{either}~~ \sigma' \to \sigma \sqsubseteq_+ \tau \qquad \text{or}~~ \sigma \to \sigma' = \basety \lin \ldots \lin \basety\]
The second alternative is absurd, and the first leads to $\sigma = \kappa'$ and $\sigma' = \kappa_k$.
Therefore, we may apply Lemma~\ref{lem:laml-niceshape-partial} to get terms $d_1, \ldots, d_{k-1}$ in normal form such that $t =_{\beta\eta} s \; d_1\; \ldots \;d_{k-1}$. We then set $d_k$ to be $u$ and $o$ to be the identity to conclude.
\item If the last rule considered is the other instance of the application rule
\[\dfrac{{\bf \widetilde\Sigma}; \; \Delta \vdash_\NE t : \sigma' \lin \sigma \qquad \widetilde{\bf \Sigma}; \; \Delta', \; s : \tau \vdash_\NF u : \sigma'}{\widetilde{\bf \Sigma} ; \; \Delta, \; \Delta', \; s : \tau \vdash_\NE t \; u : \sigma}\]
by Lemma~\ref{lem:neutral-subtype-var} applied to the first premise, we know that $\sigma' = \basety$.
Therefore, we may apply the induction hypothesis to the second premise to obtain
$d_1, \ldots, d_k$ and $o'$ such that $u =_\lamlequiv o' \; (s \; d_1 \; \ldots\; d_k)$, in which case, $t \; u =_\lamlequiv (\lam x.\; t \; (o \; x))\; (s \; d_1 \; \ldots \; d_k)$. We conclude by setting $o = \lam z.\;t \; (o' \; z)$.
\item If the last rule considered is the following instance of the elimination of tensor products
\[\dfrac{
{\bf\widetilde\Sigma}; \; s : \tau \vdash_\NE t : \zeta_1 \tensor \zeta_2
\qquad
{\bf\widetilde\Sigma}; \; x_1 : \zeta_1, \; x_2 : \zeta_2 \vdash_\NE u : \sigma
}{{\bf\widetilde\Sigma} ;  \; s : \tau \vdash_\NE \tlet{x \tensor y}{t}{u} : \sigma}\]
then, by the induction hypothesis (which is applicable because of Lemma~\ref{lem:neutral-subtype-var}),
there are $d_1, \ldots, d_k$ and $o'$ such that $t =_\lamlequiv o' \; (s \; d_1 \; \ldots\; d_k)$, in which case,
we conclude by setting $o = \lam z. \; \tlet{x \tensor y}{o' \; z}{u}$.
\item The last rule considered cannot be the following instance of the elimination of tensor products
\[\dfrac{
{\bf\widetilde\Sigma}; \; \cdot \vdash_\NE t : \zeta_1 \tensor \zeta_2
\qquad
{\bf\widetilde\Sigma};  \; s : \tau, \; x_1 : \zeta_1, \; x_2 : \zeta_2 \vdash_\NE u : \sigma
}{{\bf\widetilde\Sigma} ;  \; s : \tau \vdash_\NE \tlet{x \tensor y}{t}{u} : \sigma}\]
as Lemma~\ref{lem:neutral-subtype-var} would require that $\zeta_1 \tensor \zeta_2$ be a strictly positive subtype of some $\basety \lin \ldots \lin \basety$.
\item If the last rule considered types a projection
\[
\dfrac{{\bf\widetilde\Sigma} ; \; s : \tau \vdash_\NE t : \sigma_1 \with \sigma_2}{{\bf\widetilde\Sigma} ;  \; s : \tau \vdash_\NE \pi_i(t) : \sigma_i}
\]
then the induction hypothesis yields
terms $o'$ and $d_1, \ldots, d_k$ such that $t =_\lamlequiv o' \; (s \; d_1 \ldots d_k)$. We may set $o = \lam z. \; \pi_i(o' \; z)$ to conclude.
\item If the last rule applied is an elimination of a coproduct
\[\dfrac{{\bf\widetilde\Sigma}; \; s : \tau \vdash_\NE t : \zeta_1 \oplus \zeta_2 \qquad {\bf\widetilde\Sigma}; \; x : \zeta_1 \vdash_\NE u : \sigma \qquad {\bf\widetilde\Sigma}; \; y : \zeta_2 \vdash_\NE v : \sigma}{{\bf\widetilde\Sigma};  \; \cdot \vdash_\NE \case(t,x.u,y.v) : \sigma}\]
then, the induction hypothesis (applicable because of Lemma~\ref{lem:neutral-subtype-var}) yields 
terms $o'$ and $d_1, \ldots, d_k$ such that $t =_\lamlequiv o' \; (s \; d_1 \ldots d_k)$. We may set $o = \lam z.\;\case(o' \; z, x.u, y.v)$ to conclude.
\item The last rule applied cannot be one of the following instances of the elimination of a coproduct because of Lemma~\ref{lem:neutral-subtype-var} applied to the first premise:
\[\dfrac{{\bf\widetilde\Sigma}; \; \cdot \vdash_\NE t : \zeta_1 \oplus \zeta_2 \qquad {\bf\widetilde\Sigma}; \; s : \tau, \; x : \zeta_1 \vdash_X u : \sigma \qquad {\bf\widetilde\Sigma}; \; y : \zeta_2 \vdash_X v : \sigma}{{\bf\widetilde\Sigma};  \; s : \tau \vdash_X \case(t,x.u,y.v) : \sigma}\]
\[\dfrac{{\bf\widetilde\Sigma}; \; \cdot \vdash_\NE t : \zeta_1 \oplus \zeta_2 \qquad {\bf\widetilde\Sigma}; \; \; x : \zeta_1 \vdash_X u : \sigma \qquad {\bf\widetilde\Sigma}; \; s : \tau, \; y : \zeta_2 \vdash_X v : \sigma}{{\bf\widetilde\Sigma};  \; s : \tau \vdash_X \case(t,x.u,y.v) : \sigma}\]
\item If the last rule applied is an elimination of $\unit$
\[\dfrac{{\bf\widetilde\Sigma}; \; s : \tau \vdash_\NE t : \unit \qquad {\bf\widetilde\Sigma}; \; \cdot \; \vdash_\NE u : \sigma}{{\bf\widetilde\Sigma};  \; \cdot \; \vdash_\NE \tlet{()}{t}{u} : \sigma}\]
then, the induction hypothesis (applicable because of Lemma~\ref{lem:neutral-subtype-var}) yields 
terms $o'$ and $d_1, \ldots, d_k$ such that $t =_\lamlequiv o' \; (s \; d_1 \ldots d_k)$. We may set $o = \lam z.\;\tlet{()}{o' \; z}{u}$ to conclude.
\item The last rule applied cannot be one of the other instances of $\unit$ because of Lemma~\ref{lem:neutral-subtype-var}.
\item Finally, since the context under consideration are assumed to be consistent, the last rule applied cannot be an elimination of $0$.
\end{itemize}
\end{proof}

\begin{lem}
\label{lem:inconsistent-empty}
Let ${\bf \Sigma}$ and ${\bf \Gamma}$ be ranked alphabets.
If the context ${\bf \widetilde\Gamma};\; s : \Treety_{\bf \Sigma}[\kappa]$ is inconsistent, then
$\Tree({\bf \Sigma}) = \varnothing$.
\end{lem}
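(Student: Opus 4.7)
The plan is to prove the contrapositive: if $\Tree({\bf \Sigma}) \neq \varnothing$, then the context $\widetilde{\bf \Gamma}; s : \Treety_{\bf \Sigma}[\kappa]$ is consistent. To do so, I would exhibit a set-theoretic denotational model of $\laml$ that makes $\interp{0} = \varnothing$ while giving both $\widetilde{\bf \Gamma}$ and $\Treety_{\bf \Sigma}[\kappa]$ nonempty interpretations; then no closed term of type $0$ can exist in that context, for it would produce a function from a nonempty set to the empty set.

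Concretely, I would rely on the translation of $\laml$ into the simply typed $\lambda$-calculus with products and sums (mentioned right after the typing rules of $\laml$) and use its standard set-theoretic semantics. Choose $\interp{\basety}$ to be a singleton $\{*\}$ and extend in the obvious way: $\interp{\tau\lin\sigma} = \interp{\tau\to\sigma} = \interp{\tau}\to\interp{\sigma}$, $\interp{\tau\otimes\sigma} = \interp{\tau\with\sigma} = \interp{\tau}\times\interp{\sigma}$, $\interp{\tau\oplus\sigma} = \interp{\tau}+\interp{\sigma}$, $\interp{\unit} = \interp{\top} = \{*\}$ and $\interp{0} = \varnothing$. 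Soundness then guarantees that any derivation $\Psi;\Delta \vdash t : \tau$ yields a function $\interp{\Psi}\times\interp{\Delta}\to\interp{\tau}$.

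Two nonemptiness checks remain. First, every type occurring in $\widetilde{\bf \Gamma}$ is of the form $\basety\lin\dots\lin\basety$, whose interpretation is a function space $\{*\}^n \to \{*\}$, hence a singleton. Second, for the type $\Treety_{\bf \Sigma}[\kappa]$: pick any $u \in \Tree({\bf \Sigma})$ (which exists by assumption), form its Church encoding $\underline{u}$, a closed $\laml$-term of type $\Treety_{\bf \Sigma}$, and apply the type substitution lemma $\basety \mapsto \kappa$ to obtain a closed derivation of $\underline{u} : \Treety_{\bf \Sigma}[\kappa]$; its denotation is then an element of $\interp{\Treety_{\bf \Sigma}[\kappa]}$, witnessing that this set is inhabited. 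Assuming for contradiction that $\widetilde{\bf \Gamma}; s : \Treety_{\bf \Sigma}[\kappa] \vdash t : 0$, the interpretation of $t$ would be a function from the nonempty set $\interp{\widetilde{\bf \Gamma}}\times\interp{\Treety_{\bf \Sigma}[\kappa]}$ into $\varnothing$, impossible.

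The main (quite mild) obstacle is simply to justify soundness of this semantic interpretation for $\laml$: one must check that the set-theoretic constructions make each typing rule of Figure~\ref{fig:laml} derivable, which is standard once one observes that linearity restrictions can be ignored when targeting a cartesian model. This is precisely the content of the translation ``forgetting linearity'' already alluded to in the preliminaries, and is also the same kind of set-theoretic model used in the proof sketch of Proposition~\ref{prop:laml-tree-nf} to exhibit a left-inverse to the Church encoding, so no genuinely new machinery is required.
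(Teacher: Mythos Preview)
Your proof is correct and follows essentially the same approach as the paper: both argue by (contraposed) soundness of a simple denotational semantics in which $0$ is uninhabited, the context $\widetilde{\bf \Gamma}$ is trivially satisfiable, and the Church encoding of some $u \in \Tree({\bf \Sigma})$ witnesses that $\Treety_{\bf \Sigma}[\kappa]$ is inhabited. The only cosmetic difference is that the paper interprets types as classical propositions (with $\interp{\basety} \Leftrightarrow \Tree({\bf \Gamma}) \neq \varnothing$) rather than as sets with $\interp{\basety} = \{*\}$; your choice is if anything slightly more direct.
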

\begin{proof}
Let us write $\interp{-}$ for a semantic interpretation of $\laml$ types as (classical) propositions
following the usual type/proposition mapping (i.e., $\interp{\tau \lin \sigma} = \interp{\tau \to \sigma} = \interp{\tau} \Rightarrow \interp{\sigma}$,
$\interp{\tau \tensor \sigma} = \interp{\tau \with \sigma} = \interp{\tau} \wedge \interp{\sigma}$, $\interp{0} = \bot$, \ldots) and such that
\[\interp{\basety} ~~\Leftrightarrow~~ \Tree({\bf \Gamma}) \neq \varnothing\]
It is easy to check that, under the usual conjunctive interpretation of contexts, if $\Psi; \; \Delta \vdash t : \tau$, then
$\interp{\Psi} \wedge \interp{\Delta} \Rightarrow \interp{\tau}$ holds.
Further, our choice for $\interp{\basety}$ means that $\interp{\bf \widetilde\Gamma}$ holds, so, if our
context ${\bf \widetilde\Gamma};\; s : \Treety_{\bf \Sigma}[\kappa]$ is inconsistent, $\neg \interp{\Treety_{\bf \Sigma}[\kappa]}$ holds.
Now, assume that $\Tree({\bf \Sigma})$ has an inhabitant $t$. There is a corresponding Church encoding $\underline{t}$
of type $\Treety_{\bf \Sigma}$, which has also type $\Treety_{\bf \Sigma}[\kappa]$. Hence, $\interp{\Treety_{\bf \Sigma}[\kappa]}$ also holds,
leading to a contradiction.
\end{proof}

\begin{proof}[Proof of Lemma~\ref{lem:laml-niceshape}]
Assume that we have a closed $\laml$-term $t$ of type $\Treety_{\bf \Sigma}[\kappa] \lin \Treety_{\bf \Gamma}$ with
$\Tree({\bf \Sigma}) \neq \varnothing$, so that we may safely assume ${\bf\widetilde\Gamma};\; s : \Treety_{\bf \Sigma}[\kappa]$ to be consistent.
By $\eta$-expansion, we have  is of the shape
\[t =_\lamlequiv \lam w. \lam^\oc b_1. \ldots \lam^\oc b_k. t \; w \; b_1 \ldots b_k\]
and ${\bf \widetilde\Gamma}; \; s : \Treety_{\bf \Sigma}[\kappa] \vdash t \; w \; b_1 \ldots b_k : \basety$.
By normalization of $\laml$, there is $t'$ such that
\[t' =_\lamlequiv t \; w \; b_1 \ldots b_k \qquad \text{and} \qquad {\bf \widetilde\Gamma}; \; s : \Treety_{\bf \Sigma}[\kappa] \vdash_\NF t' : \basety\]
For the latter, note that an easy case analysis shows that every (open) term of type $\basety$ is in fact neutral, so we may conclude by applying
Lemma~\ref{lem:laml-niceshape-invariant} and the fact that $=_\lamlequiv$ is a congruence over terms.
\end{proof}

\section{Proof of Theorem~\ref{thm:dial-haslin} (monoidal closure)}
\label{sec:app-dial}
Here we give a detailed proof of Theorem~\ref{thm:dial-haslin} by establishing
a tensor/hom adjunction
\[\Hom{\cC_{\oplus\with}}{C \tensor A}{B} \cong \Hom{\cC_{\oplus\with}}{C}{A \lin B}\]
\emph{natural in $C$}.
While we will not devote too much space to
checking naturality down to low-level details, we have attempted to break down
this verification into manageable chunks. In particular, our notations are going
to stress the functoriality of the various intermediate expressions in our
computation. Some lemmas will be useful for this.

\begin{lem}
  \label{lem:lift-nat-with}
  We can lift every functor $F : \cC^{\op} \to \Set$ to a functor
  $\langle\with\rangle F : (\cC_{\with})^{\op} \to \Set$ such that for
  $(C_{i}) \in \Obj(C)^{I}$,
  \[ \langle\with\rangle F\left(\bigwith_{i \in I} C_{i}\right) \quad=\quad \sum_{i \in I} F(C_{i}) \]
  and in such a way that if $F, G : \cC^{\op} \to \Set$ are naturally
  isomorphic, so are $\langle\with\rangle F$ and $\langle\with\rangle G$.
\end{lem}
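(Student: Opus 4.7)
The plan is to give a construction of $\langle\with\rangle F$ on morphisms that is essentially forced by the description of $\cC_\with$ and then check the functoriality axioms and preservation of natural isomorphisms by routine computations.

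First I would define the action on morphisms. Recall that a morphism in $\cC_\with$ from $\bigwith_{x \in X} C_{x}$ to $\bigwith_{y \in Y} C_{y}$ consists of a family $(x_y, g_y)_{y \in Y}$ with $x_y \in X$ and $g_y : C_{x_y} \to C_y$ in $\cC$; contravariance reverses this, so for such an $f$ I need a set-theoretic map
\[ \langle\with\rangle F(f) \;:\; \sum_{y \in Y} F(C_y) \;\longrightarrow\; \sum_{x \in X} F(C_x). \]
The forced definition is $\langle\with\rangle F(f)(y,a) = (x_y,\; F(g_y)(a))$, where $F(g_y) : F(C_y) \to F(C_{x_y})$ uses the contravariance of $F$.

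Next I would verify functoriality. For the identity, $x_y = y$ and $g_y = \id_{C_y}$, so $\langle\with\rangle F(\id)(y,a) = (y, F(\id)(a)) = (y,a)$ by functoriality of $F$. For composition, given a second morphism $f' : \bigwith_{y \in Y} C_y \to \bigwith_{z \in Z} C_z$ represented by $(y_z, h_z)_{z \in Z}$, the composite $f' \circ f$ in $\cC_\with$ is represented by $(x_{y_z},\; g_{y_z} \circ h_z)_{z \in Z}$; applying $\langle\with\rangle F$ to this and to $\langle\with\rangle F(f) \circ \langle\with\rangle F(f')$ both yields $(z,a) \mapsto (x_{y_z},\; F(g_{y_z} \circ h_z)(a))$, which agree by the contravariant functoriality of $F$. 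So $\langle\with\rangle F$ is indeed a functor $(\cC_\with)^{\op} \to \Set$.

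Finally I would lift a natural transformation $\alpha : F \Rightarrow G$ between functors $\cC^{\op} \to \Set$ to a natural transformation $\langle\with\rangle\alpha : \langle\with\rangle F \Rightarrow \langle\with\rangle G$, with components
\[ (\langle\with\rangle\alpha)_{\bigwith_i C_i} \;:\; \sum_{i \in I} F(C_i) \;\longrightarrow\; \sum_{i \in I} G(C_i), \qquad (i,a) \longmapsto (i,\; \alpha_{C_i}(a)). \]
Naturality at a morphism $f$ as above reduces pointwise to the naturality squares of $\alpha$ at the underlying $g_y$. Since $\sum_{i \in I}(-)$ preserves bijections componentwise, $\langle\with\rangle\alpha$ is a (natural) isomorphism whenever $\alpha$ is, yielding the final clause of the statement. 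No real obstacle is expected here: the construction is entirely dictated by the explicit description of the free product completion, and every verification is a direct unfolding.
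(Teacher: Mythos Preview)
Your proposal is correct and is essentially the same explicit construction as the paper's proof; the paper additionally remarks that, since $(\cC_\with)^{\op} = (\cC^{\op})_\oplus$, the existence of $\langle\with\rangle F$ is just the universal property of the free coproduct completion, but then writes down the identical formula on morphisms and the same componentwise lift of natural isomorphisms. One small slip: in your composition check the composite in $\cC_\with$ should carry the $\cC$-morphism $h_z \circ g_{y_z}$ (not $g_{y_z}\circ h_z$), matching $F(g_{y_z})\circ F(h_z)$ after applying contravariant $F$.
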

\begin{proof}
  Since $(\cC_{\with})^{\op} = (\cC^{\op})_{\oplus}$ by duality, the existence
  of $\langle\with\rangle F$ is simply the universal property of the
  \emph{coproduct} completion! Concretely, consider a morphism
  \[ f = \displaystyle \left(\varphi_{f} : I \to J,\; \left(f_{i} : C'_{\varphi_{f}(i)} \to C_{i}\right)_{{i \in I}}\right) \in \varHom{\cC_{\with}}{\bigwith_{j \in J} C'_{j}}{\bigwith_{i \in I} C_{i}} \]
  Its functorial image is given by
  \[ \langle\with\rangle F(f) \quad:\quad (i,x) \in \sum_{i \in I} F(C_{i}) \quad\mapsto\quad (\varphi_{f}(i), F(f_{i})(x)) \in \sum_{j \in J} F(C'_{j}) \]
  Using this explicit expression, one can check that if
  $\eta_X : F(X) \xrightarrow{\;\sim\;} G(X)$ is a natural isomorphism (for $X$
  varying over $\cC^{\op}$), then
  \[ (i,x) \in \sum_{i \in I} F(C_{i}) \quad\mapsto\quad (i,\eta_{C_{i}}(x)) \in \sum_{i \in I} G(C_{i}) \]
  defines a natural isomorphism between $\langle\with\rangle F$ and
  $\langle\with\rangle G$.
\end{proof}
\begin{lem}
  \label{lem:lift-nat-oplus}
  We can lift every functor $F : (\cC_{\with})^{\op} \to \Set$ to
  $\langle\oplus\rangle F : (\cC_{\oplus\with})^{\op} \to \Set$ in such a way
  that if $F, G : (\cC_{\with})^{\op} \to \Set$ are naturally isomorphic, so are
  $\langle\oplus\rangle F$ and $\langle\oplus\rangle G$, and with the action on
  objects
  \[ \langle\oplus\rangle F\left(\bigoplus_{u \in U} \bigwith_{x \in X_{u}} C_{x}\right) \quad=\quad \prod_{u \in U} F\left(\bigwith_{x \in X_{u}} C_{x} \right) \]
\end{lem}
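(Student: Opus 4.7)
The strategy is completely dual to that of Lemma~\ref{lem:lift-nat-with}. Since the coproduct and product completions are opposite constructions, the equivalence $\cC_{\oplus\with} \cong (\cC_{\with})_{\oplus}$ gives rise by duality to $(\cC_{\oplus\with})^{\op} \cong ((\cC_{\with})^{\op})_{\with}$. Under this identification, an object $\bigoplus_{u \in U} \bigwith_{x \in X_u} C_x$ of $\cC_{\oplus\with}$ corresponds to the formal product $\bigwith_{u \in U} (\bigwith_{x \in X_u} C_x)$ in $((\cC_{\with})^{\op})_{\with}$. So I will obtain $\langle\oplus\rangle F$ by applying the universal property of the product completion to $F$: since $\Set$ has all finite products, the functor $F : (\cC_{\with})^{\op} \to \Set$ extends to a (unique up to natural iso) product-preserving functor $\langle\oplus\rangle F : ((\cC_{\with})^{\op})_{\with} \to \Set$, and products in $\Set$ are cartesian products, which is exactly the formula announced.

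In more concrete terms, I will define $\langle\oplus\rangle F$ directly on objects by the prescribed formula, and on a typical morphism
\[ f = \Bigl(\varphi_{f} : V \to U,\; \bigl(f_{u} : \bigwith_{y \in Y'_{\varphi_f(u)}} C'_{y} \to \bigwith_{x \in X_{u}} C_{x}\bigr)_{u \in U}\Bigr) \in \varHom{\cC_{\oplus\with}}{\bigoplus_{v \in V} \bigwith_{y \in Y'_v} C'_{y}}{\bigoplus_{u \in U} \bigwith_{x \in X_u} C_{x}} \]
(recovered by unfolding the definition of $\cC_{\oplus\with}$-morphisms and remembering that each $f_u$ is a $\cC_\with$-morphism) by the rule
\[ \langle\oplus\rangle F(f) \;:\; (a_u)_{u \in U} \in \prod_{u \in U} F\Bigl(\bigwith_{x \in X_u} C_x\Bigr) \;\mapsto\; \Bigl( F(f_{\varphi_f^{-1}(?) \text{...}}) \Bigr) \]
— more cleanly, using the duality to regard $f$ as a $((\cC_\with)^{\op})_\with$-morphism in the other direction, it is the obvious tupling induced by the family $(F(f_u))_u$ along the reindexing $\varphi_f$. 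Functoriality reduces to the functoriality of $F$ combined with the bookkeeping of composition in $\cC_{\oplus\with}$ (spelled out in Figure~\ref{fig:dialcomp}).

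For the preservation of natural isomorphisms, given $\eta_X : F(X) \xrightarrow{\sim} G(X)$ natural in $X \in (\cC_\with)^{\op}$, I will define
\[ \bigl(\langle\oplus\rangle\eta\bigr)_{\bigoplus_u \bigwith_x C_x} \;:\; (a_u)_{u \in U} \in \prod_{u \in U} F\Bigl(\bigwith_{x \in X_u} C_x\Bigr) \;\mapsto\; \bigl(\eta_{\bigwith_{x \in X_u} C_x}(a_u)\bigr)_{u \in U} \]
and check naturality in the $\cC_{\oplus\with}$-variable from the naturality of $\eta$ applied componentwise to each $f_u$; that each component is a bijection is immediate from each $\eta_X$ being one.

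The entire proof is essentially bookkeeping, so the only real obstacle is keeping track of the contravariance conventions between $\cC_{\oplus\with}$ and $((\cC_\with)^{\op})_\with$ — in particular making sure that the direction of $\varphi_f : V \to U$ in the definition of $\cC_{\oplus\with}$-morphisms corresponds correctly to the projection/reindexing direction in the product $\prod_{u \in U} F(\cdots)$. Once this identification is set up, nothing more is needed.
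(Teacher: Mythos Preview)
Your approach is essentially the paper's own: identify $(\cC_{\oplus\with})^{\op}\cong((\cC_\with)^{\op})_\with$ and invoke the universal property of the product completion, dual to Lemma~\ref{lem:lift-nat-with}. One small slip: in your explicit description of a $\cC_{\oplus\with}$-morphism the family should be indexed by the \emph{source} index set $V$, not $U$ (so $f_v : \bigwith_{y\in Y'_v} C'_y \to \bigwith_{x\in X_{\varphi_f(v)}} C_x$ for $v\in V$), and accordingly $\langle\oplus\rangle F(f)$ sends $(a_u)_{u\in U}$ to $(F(f_v)(a_{\varphi_f(v)}))_{v\in V}$; but since you correctly defer to the duality argument this does not affect the substance.
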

\begin{proof}
  Write $\cC_{\oplus\with} \cong \cD_{\oplus}$ with $\cD = \cC_{\with}$. The
  statement is a special case of a property that works for any category $\cD$;
  analogously (and dually) to the previous lemma, it corresponds to the
  universal property of the product completion.
\end{proof}

Before we can state the next lemma, we first need a few definitions. Let
\[\iota_\with^\oplus : \cC_\oplus \to \cC_{\oplus\with} \quad\text{such that}\quad  \iota_{\oplus\with} = \iota_\with^\oplus \circ \iota_{\oplus} \]
be the full and faithful embedding be obtained by applying the universal
property of $\cC_\oplus$ to the coproduct-preserving functor
$\iota_{\oplus\with}$. Furthermore, the isomorphism
$\cC_{\oplus\with} \cong (\cC_{\with})_{\oplus}$ entails the existence of
another full and faithful functor
\[ \iota'_{\oplus} : \cC_{\with} \to \cC_{\oplus\with} \quad\text{such
    that}\quad \iota_{\oplus\with} = \iota'_\oplus \circ \iota_{\with} \]
corresponding to the canonical embedding $\cD \to \cD_{\oplus}$ for
$\cD = \cC_\with$. Both those functors have straightforward explicit
constructions, whose actions on objects are
\[ \iota_\with^\oplus\left( \bigoplus_{i \in I} C_i \right) \quad=\quad \bigoplus_{i\in I} \bigwith_{j \in \{*\}} C_i \qquad\qquad \iota'_\oplus\left( \bigwith_{i \in I} C_i \right) \quad=\quad \bigoplus_{j \in \{*\}} \bigwith_{i \in I} C_i \]
(where the (co)product notation is syntactic sugar for indexed families).
\begin{lem}
  \label{lem:with-left-sum}
  For any $D \in \Obj(\cC_\oplus)$, there is a natural isomorphism (with domain
  $\cC_{\with}$)
  \[ \Hom{\cC_{\oplus\with}}{\iota'_{\oplus}(-)}{\iota^{\oplus}_{\with}(D)} \quad\cong\quad \langle\with\rangle\left[ \Hom{\cC_{\oplus}}{\iota_{\oplus}(-)}{D} \right] \]
\end{lem}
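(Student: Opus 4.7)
The plan is to establish this natural isomorphism by a direct calculation, unfolding both sides using the explicit definitions of hom-sets in the various completions, and then appealing to the auxiliary Lemma~\ref{lem:lift-nat-with} to upgrade the bijection into a genuine natural isomorphism.

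First, I would fix an arbitrary object $A = \bigwith_{i \in I} C_i$ of $\cC_{\with}$ and $D = \bigoplus_{j \in J} E_j$ of $\cC_{\oplus}$, and compute both sides. For the left-hand side, the explicit descriptions of $\iota'_{\oplus}$ and $\iota^{\oplus}_{\with}$ given just before the lemma yield
\[ \iota'_{\oplus}(A) \;=\; \bigoplus_{* \in \{*\}} \bigwith_{i \in I} C_i \qquad \iota^{\oplus}_{\with}(D) \;=\; \bigoplus_{j \in J} \bigwith_{* \in \{*\}} E_j \]
Plugging these into the formula defining morphisms in $\cC_{\oplus\with}$ and simplifying the trivial singleton products and sums gives
\[ \Hom{\cC_{\oplus\with}}{\iota'_{\oplus}(A)}{\iota^{\oplus}_{\with}(D)} \;\cong\; \sum_{j \in J} \sum_{i \in I} \Hom{\cC}{C_i}{E_j} \]
For the right-hand side, first unfold $\Hom{\cC_{\oplus}}{\iota_{\oplus}(C_i)}{D} \cong \sum_{j \in J} \Hom{\cC}{C_i}{E_j}$ using the definition of hom-sets in $\cC_{\oplus}$; then apply the formula for $\langle\with\rangle F$ from Lemma~\ref{lem:lift-nat-with} to obtain
\[ \langle\with\rangle\!\left[\Hom{\cC_{\oplus}}{\iota_{\oplus}(-)}{D}\right]\!(A) \;=\; \sum_{i \in I} \sum_{j \in J} \Hom{\cC}{C_i}{E_j} \]

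The bijection between the two expressions is then just the swap of the two indexing summations $(j,i,f) \leftrightarrow (i,j,f)$, which is manifestly a set-theoretic isomorphism. I would present this as the candidate natural isomorphism at each object $A$.

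The only remaining task is to check naturality in $A$ (varying contravariantly over $\cC_{\with}$). Here the benefit of the explicit formulas pays off: a morphism $f : A' \to A$ in $\cC_{\with}$ consists of a partial function on the indexing sets together with a family of $\cC$-morphisms between the corresponding components, and its action on both sides reduces to reindexing the outer sum (on the right) or the inner sum (on the left) together with precomposition in the $\Hom{\cC}{-}{-}$ factor. Since the swap of summations commutes with both operations in an evident way, the two functors $(\cC_{\with})^{\op} \to \Set$ coincide up to the obvious bijection. The routine diagram chase can be omitted; the main conceptual point is simply that the summation swap is structure-preserving, so there is no real obstacle here.
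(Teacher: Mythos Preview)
Your proposal is correct and follows essentially the same approach as the paper: unfold both hom-sets explicitly, observe they agree up to a swap of the two summations, and verify naturality by a direct computation showing that both paths amount to $(i,j,h) \mapsto (\varphi_f(i), j, h \circ f_i)$. One small slip: a morphism in $\cC_\with$ is given by a \emph{total} function between indexing sets (not a partial one), but this does not affect your argument.
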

\begin{proof}
  Let $\displaystyle C = \bigwith_{i \in I} C_i$ and
  $\displaystyle D = \bigoplus_{k \in K} D_{k}$ for $C_{i},D_{k} \in \Obj(\cC)$.
  By definition,
  \begin{align*}
    \Hom{\cC_{\oplus}}{\iota'_{\oplus}(C)}{\iota_{\with}^{\oplus}(D)}
    &= \prod_{p \in \{*\}} \sum_{k\in K} \prod_{q \in \{*\}} \sum_{i \in I} \Hom{\cC}{C_{i}}{D_{k}} \\
    &\cong \sum_{i \in I} \sum_{k\in K}  \Hom{\cC}{C_{i}}{D_{k}} \\
    &\cong \sum_{i \in I} \Hom{\cC_{\oplus}}{\iota_{\oplus}(C_{i})}{D}
  \end{align*}
  This establishes the isomorphism for each object of $\cC_{\with}$. The
  naturality condition can be recast as the commutation of the following
  diagram:
  \[
  \xymatrix@C=10mm{
    \Hom{\cC_{\oplus\with}}{\iota'_{\oplus}(C)}{\iota^{\oplus}_{\with}(D)}
    \ar[d]_{- \circ f} & \ar[l]_{\qquad\sim} \displaystyle\sum_{i,k}
    \Hom{\cC}{C_{i}}{D_{k}} \ar[r]^{\sim\qquad} & \displaystyle\sum_{i \in I}
  \Hom{\cC_{\oplus}}{\iota_{\oplus}(C_{i})}{D}
  \ar[d]^{(i,g) \mapsto (\varphi_{f}(i), g \circ f_{i})} \\
  \Hom{\cC_{\oplus\with}}{\iota'_{\oplus}(C')}{\iota^{\oplus}_{\with}(D)}
  \ar[r]^{\sim} & \displaystyle\sum_{j,k} \Hom{\cC}{C'_{j}}{D_{k}}
  & \ar[l]_{\sim\qquad} \displaystyle\sum_{j \in J}
   \Hom{\cC_{\oplus}}{\iota_{\oplus}(C'_{j})}{D}
 }\]
Fortunately, it can be checked from the definition of composition in
$\cC_{\oplus}$ and $\cC_{\oplus\with}$ that both paths in this diagrams denote
the same map, namely $(i,k,h) \mapsto (\varphi_{f}(i),k,h\circ f_{i})$.
\end{proof}

\begin{proof}[Proof of Theorem~\ref{thm:dial-haslin}]
  Let $A,B,C \in \Obj(\cC_{\oplus\with})$. By definition, we can write
\[ A = \displaystyle\bigoplus_{u \in U} \bigwith_{x \in X_u} A_x \qquad
  B = \displaystyle\bigoplus_{v \in V} \bigwith_{y \in Y_v} B_{y} \qquad
    C = \displaystyle\bigoplus_{w \in W} \bigwith_{z \in Z_{w}} C_{z}
  \]
  where $A_{x},B_{y},C_{z}\in\Obj(\cC)$. Using the definition of $\otimes$ in $\cC_{\oplus\with}$ and of $\Hom{\cC_{\oplus\with}}{-}{-}$,
  \begin{align*}
    \Hom{\cC_{\oplus\with}}{C\tensor A}{B}
    &= {\displaystyle \varHom{\cC_{\oplus\with}}{\bigoplus_{(w,u) \in W \times U} \bigwith_{(z,x) \in Z_w \times X_u} C_{z} \tensor A_{x}}{\bigoplus_{v \in V} \bigwith_{y \in Y_v} B_{y}}}
    \\
    &= {\displaystyle \prod_{w, u} \sum_{v \in V} \prod_{y \in Y_v} \sum_{z, x} \Hom{\cC}{C_{z} \tensor A_{x}}{\;B_{y}}}
    \\
    &\cong {\displaystyle \prod_{w \in W} \prod_{u \in U}  \sum_{v \in V} \prod_{y \in Y_v} \sum_{x \in X_{u}} \sum_{z \in Z_{w}} \Hom{\cC}{C_{z} \tensor A_{x}}{\;B_{y}}}
  \end{align*}
  Using the operators from Lemmas~\ref{lem:lift-nat-with}
  and~\ref{lem:lift-nat-oplus}, the last expression above can be turned into a
  functor in $C$, and the isomorphism is then natural in $C$, that is:
  \[ \Hom{\cC_{\oplus\with}}{-\tensor A}{B} \quad\cong\quad \langle\oplus\rangle \left[ \prod_{u \in U} \sum_{v \in V} \prod_{y \in Y_v} \sum_{x \in X_{u}} \langle\with\rangle \left[ \Hom{\cC}{- \tensor A_{x}}{\;B_{y}} \right] \right] \]
  (Any reader who is not convinced that rearranging the indexing of
  sums/products is innocuous with respect to the morphisms in
  $\cC_{\oplus\with}$ can check the naturality by a brute-force computation.)

  For any $A',B' \in \Obj(\cC)$, we have a sequence of natural isomorphisms
  \begin{align*}
    \Hom{\cC}{- \tensor A'}{B'} &\cong \Hom{\cC_{\oplus}}{\iota_{\oplus}(- \tensor A')}{\iota_{\oplus}(B')}\\
  &\cong \Hom{\cC_{\oplus}}{\iota_{\oplus}(-) \tensor \iota_{\oplus}(A')}{\iota_{\oplus}(B')}\\
  &\cong \Hom{\cC_{\oplus}}{\iota_{\oplus}(-)}{ \iota_{\oplus}(A')\lin\iota_{\oplus}(B')}
  \end{align*}
  by using the strong monoidality and full faithfulness of $\iota_{\oplus}$ as
  well as the assumption on internal homsets in $\cC_{\oplus}$.
  Lemma~\ref{lem:lift-nat-with} allows us to lift natural isomorphisms through
  $\langle\with\rangle$; combined with Lemma~\ref{lem:with-left-sum}, this gives
  us
  \begin{align*}
    \langle\with\rangle \left[ \Hom{\cC}{- \tensor A'}{B'} \right]
    &\cong \langle\with\rangle \left[ \Hom{\cC_{\oplus}}{\iota_{\oplus}(-)}{\iota_{\oplus}(A')\lin\iota_{\oplus}(B')} \right]\\
    &\cong \Hom{\cC_{\oplus\with}}{\iota'_{\oplus}(-)}{\iota^{\oplus}_{\with} (\iota_{\oplus}(A')\lin\iota_{\oplus}(B'))}
  \end{align*}
  Let us introduce the abbreviation
  $A' \rightrightarrows B' = \iota_{\with}^{\oplus}(\iota_{\oplus}(A')\lin\iota_{\oplus}(B'))$.
  Using Proposition~\ref{prop:yondistr} and the dual of
  Remark~\ref{rem:coprod-univ} for products, we have
\begin{align*}
  \Hom{\cC_{\oplus\with}}{- \tensor A}{B}
  &\cong \langle\oplus\rangle\left[  \prod_{u \in U} \sum_{v \in V} \prod_{y \in Y_v} \sum_{x \in X_u} \varHom{\cC_{\oplus\with}}{\iota'_\oplus(-)}{A_x \rightrightarrows B_y} \right] \\
  &\cong \langle\oplus\rangle\left[  \prod_{u \in U} \sum_{v \in V} \prod_{y \in Y_v} \varHom{\cC_{\oplus\with}}{\iota'_\oplus(-)}{\bigoplus_{x \in X_u} A_x \rightrightarrows B_y} \right] \\
      &\cong \;\cdots\\
  &\cong \langle\oplus\rangle\left[  \varHom{\cC_{\oplus\with}}{\iota'_\oplus(-)}{\overbrace{\bigwith_{u \in U} \bigoplus_{v \in V} \bigwith_{y \in Y_v} \bigoplus_{x \in X_u} A_x \rightrightarrows B_y}^{
A \lin B
}} \right] \\
\end{align*}

To conclude, it suffices to show that the last functor in the above computation
is naturally isomorphic to $\Hom{\cC_{\oplus\with}}{-}{A \lin B}$. The
isomorphism can seen as an instance of the universal property of the coproduct,
cf.\ Remark~\ref{rem:coprod-univ} which unfortunately only states naturality in
the \enquote{wrong} argument for our purposes. The remaining naturality
condition can be verified routinely; we leave it to the reader.
\end{proof}

\section{Proof of Theorem~\ref{thm:functor-from-sr}
  (building functors from $\Sr$)}
\label{sec:coherence-sr}

Let us fix a an symmetric monoidal category $(\cC,\otimes,\unit)$ with an
internal monoid $(M,\mu,\eta)$.
We call the functors built from the monoidal structure on $\cC$
\enquote{tensorial functors} since the monoidal product $\otimes$ is sometimes
called a tensor product. They play a central role in our proof and are more
precisely defined as follows.
\begin{defi}
  The \emph{tensorial expressions} over a finite indexing set $I$ are freely
  inductively generated as follows:
  \begin{itemize}
  \item $\unit$ is a tensorial expression over $\varnothing$;
  \item $i$ is a tensorial expression over $\{i\}$;
  \item if $e,e'$ are tensorial expressions over $I$ and $I'$ respectively, with
    $I \cap I' = \varnothing$, then $(e \otimes e')$ is a tensorial expression
    over $I \cup I'$.
  \end{itemize}
  In other words, a tensorial expression over $I$ is a binary tree whose leaves
  are labeled either by $\unit$ or by $i \in I$, such that each element $i$
  appears exactly once.

  A tensorial expression $e$ over $I$ induces a functor $F_e : \cC^I \to \cC$ in
  an obvious way. The functors thus obtained are called \emph{tensorial
    functors}. A tensorial functor $F_e : \cC^I \to \cC$ is \emph{ordered} when
  $I$ is endowed with the total order that corresponds to the infix order on the
  $I$-labeled leaves of the expression $e$.
\end{defi}

Of course, the basic intuition is that over a given totally ordered indexing
set, two ordered tensorial functors express \enquote{the same thing} up to
inessential bracketing, while two unordered tensorial functors should be morally
the same \enquote{up to permutation}. This is expressed formally as a natural
isomorphism between functors, but \emph{Mac Lane's coherence theorem} gives us
something stronger: the natural isomorphisms can be chosen \emph{canonically}.

\begin{thm}[Coherence for symmetric monoidal categories~{\cite[\S{}XI.1
    (Theorem 1)]{CWM}}]
  There exists a map that sends each triple $(I,e,e')$, where $I$ is a finite
  set and $e,e'$ are tensorial expressions over $I$, to a
  natural isomorphism from the tensorial functor $F_e$ to $F_{e'}$, which we
  call a \emph{canonical isomorphism}, such that:
  \begin{itemize}
  \item identities, associators, unitors and symmetries are canonical
    isomorphisms, i.e.\ are in the image of this map;
  \item composing the image of $(I,e,e')$ by this map with the image of
    $(I,e',e'')$ yields the image of $(I,e,e'')$ -- in other words, canonical
    isomorphisms are closed under composition;
  \item canonical isomorphisms are also closed under monoidal product.
  \end{itemize}
\end{thm}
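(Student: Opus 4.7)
The plan is to follow the standard normal-form approach to Mac Lane's coherence theorem. Fix, for each finite set $I$, an arbitrary total order on $I$; let $i_1 < \cdots < i_n$ denote its elements. Define the \emph{normal-form expression} $\mathsf{nf}(I)$ to be the right-bracketed tensorial expression $i_1 \tensor (i_2 \tensor (\cdots \tensor i_n))$ when $I$ is non-empty, and $\unit$ when $I = \varnothing$. I will then produce, for every tensorial expression $e$ over $I$, a specific natural isomorphism $\nu_e : F_e \to F_{\mathsf{nf}(I)}$, and set the value of the canonical-isomorphism map on the triple $(I, e, e')$ to be $\nu_{e'}^{-1} \circ \nu_e : F_e \to F_{e'}$.

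The isomorphism $\nu_e$ will be defined by induction on $e$. For $e = \unit$ over $\varnothing$ or $e = i$ over $\{i\}$, it is the identity. For $e = e_1 \tensor e_2$ over $I_1 \sqcup I_2 = I$, I first form the tensor $\nu_{e_1} \tensor \nu_{e_2}$ and then post-compose with a \emph{merge} natural isomorphism $F_{\mathsf{nf}(I_1)} \tensor F_{\mathsf{nf}(I_2)} \to F_{\mathsf{nf}(I)}$, itself constructed by a secondary induction from associators, symmetries, and unitors that pulls each leaf of the left factor into its correct slot in the right factor, and strips any $\unit$s via the unitors at the base case.

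The main obstacle is to verify that $\nu_e$ is genuinely well-defined, that is, that alternative choices made during the recursive construction of the merge yield the same natural transformation. This is the substantive content of Mac Lane's coherence theorem. The verification reduces to checking that finitely many diagrams of structural morphisms commute; each such diagram is a consequence of the pentagon axiom (for reassociation), the triangle axiom (for unit insertion/deletion), and the hexagon axioms (for the interaction of symmetries with associators), together with the involutivity of the symmetry. A clean way to organize the argument is to view merges as inducing an action of the symmetric group $\mathrm{Sym}(I)$ on the set of normal-form expressions for various orderings, and to invoke its Coxeter presentation by adjacent transpositions: its defining relations (involutivity and the braid relation) are precisely what the symmetry and hexagon axioms force on $\cC$.

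Once $\nu_e$ is well-defined, the three closure properties follow with little additional work. Identities correspond to the triples $(I, e, e)$ and yield $\nu_e^{-1} \circ \nu_e = \id$. Each named structural morphism arises as the canonical isomorphism for a well-chosen triple: e.g., the associator $\alpha_{A,B,C}$ is the image of the triple $(\{1<2<3\},\, (1 \tensor 2) \tensor 3,\, 1 \tensor (2 \tensor 3))$, since $F_{1 \tensor (2 \tensor 3)} = F_{\mathsf{nf}(\{1,2,3\})}$ and the merge computation for $(1\tensor 2)\tensor 3$ is exactly $\alpha$; the unitors and symmetries are handled analogously on small index sets. Closure under composition is the immediate cancellation $(\nu_{e''}^{-1} \circ \nu_{e'}) \circ (\nu_{e'}^{-1} \circ \nu_e) = \nu_{e''}^{-1} \circ \nu_e$, and closure under tensor product follows from the bifunctoriality of $\tensor$ together with the inductive clause defining $\nu_{e_1 \tensor e_2}$ in terms of $\nu_{e_1} \tensor \nu_{e_2}$ and a merge. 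Since the hard step is the well-definedness of $\nu_e$, which is classical and lengthy, the pragmatic choice is to invoke Mac Lane's original argument from \cite{CWM} rather than reproduce the combinatorics of the pentagon/triangle/hexagon reductions in full.
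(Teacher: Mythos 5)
Your proposal is correct in substance, but note that the paper does not prove this statement at all: it is quoted verbatim as Mac Lane's coherence theorem with a citation to \cite[\S{}XI.1, Theorem~1]{CWM}, and the surrounding text only \emph{uses} the resulting canonical isomorphisms (together with the general associativity law) in the proof of Theorem~\ref{thm:functor-from-sr}. Your normal-form reduction --- fix a linear order on $I$, send every expression $e$ to a chosen $\nu_e : F_e \to F_{\mathsf{nf}(I)}$, and define the canonical isomorphism for $(I,e,e')$ as $\nu_{e'}^{-1}\circ\nu_e$ --- is the standard skeleton of the classical proof, and it does buy something the bare citation does not: the three closure properties (identities, composition, tensor) become formal consequences of the definition rather than separate appeals to coherence, with the tensor case following from bifunctoriality and cancellation of the merge maps exactly as you say. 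One point deserves to be made explicit when you defer the well-definedness of $\nu_e$ to Mac Lane: for \emph{symmetric} monoidal categories it is false that all diagrams of structural morphisms commute (e.g.\ $\gamma_{A,A} \neq \id_{A\tensor A}$ in general); the correct statement is that two canonical maps agree iff they induce the same underlying permutation of the leaves. What rescues your construction is precisely the paper's convention that each $i \in I$ occurs exactly once in a tensorial expression, so that any natural isomorphism $F_e \to F_{e'}$ built from the structural data must match the leaf labelled $i$ in $e$ with the leaf labelled $i$ in $e'$, pinning the permutation down and making the canonical isomorphism unique. With that caveat stated, your argument is a faithful and complete reduction of the paper's packaging of coherence to the theorem as proved in \cite{CWM}.
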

If $F, G : \cC^I \to \cC$ are \emph{ordered} tensorial functors for the same
total order on $I$, the construction of the canonical isomorphism between $F$
and $G$ works in any monoidal category, not necessarily symmetric. Indeed,
general monoidal categories enjoy a coherence theorem of their
own~\cite[\S{}VII.2]{CWM}, which involves its own canonical isomorphisms; thanks
to the uniqueness clauses in the coherence theorems with and without symmetry,
one can check that the two notions of canonical isomorphism coincide for ordered
tensorial functors. The ordered case is important for us because even though our
monoidal category is symmetric, the internal monoid $(M,\mu,\eta)$ that we are
given need not be \emph{commutative}. What the axioms for monoid objects
\emph{do} state, however, is a suitable form of associativity, a consequence of
which is that $n$-ary products are somehow \enquote{independent of bracketing}.
Formally speaking:
\begin{defi}
  To any tensorial expression $e$, we associate an \enquote{$e$-fold monoid
    multiplication} morphism $\mu^{(e)} : F_e((M)_{i \in I}) \to M$ inductively:
  \[ \mu^{(\unit)} = \eta \qquad \mu^{(i)} = \id\ \text{for}\ I = \{i\} \qquad
    \mu^{(e \otimes e')} = \mu \circ \left(\mu^{(e)} \otimes \mu^{(e')}\right)\]
\end{defi}
\begin{thm}[General associativity law~{\cite[\S{}VII.3]{CWM}}]
  \label{thm:assoc-internal-monoid}
  Let $I$ be a totally ordered finite set, $F_e, F_{e'} : \cC^I \to \cC$ be two
  \emph{ordered} tensorial functors and $\Xi : F_e \Rightarrow F_{e'}$ be their
  canonical natural isomorphism. Then $\mu^{(e)} = \mu^{(e')} \circ
  \Xi_{\vec{M}}$ where $\vec{M} = (M)_{i \in I}$.
\end{thm}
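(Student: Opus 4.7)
The plan is to prove the statement by reducing every $\mu^{(e)}$ to a single canonical form. For each totally ordered finite set $I$, I will fix a canonical right-nested expression $r(I)$: for $I = \{i_1 < \dots < i_n\}$ with $n \geq 1$, set $r(I) = i_1 \otimes (i_2 \otimes (\dots \otimes i_n))$, and $r(\varnothing) = \unit$. Write $\Xi^{e,e'} : F_e \Rightarrow F_{e'}$ for the canonical natural isomorphism. The target reduces to showing, by structural induction on $e$, the equation
\[ \mu^{(e)} = \mu^{(r(I))} \circ \Xi^{e,r(I)}_{\vec M}. \tag{$\ast$} \]
Once $(\ast)$ is established for all $e$, the theorem follows by composing: $\mu^{(e)} = \mu^{(r(I))} \circ \Xi^{e,r(I)}_{\vec M} = \mu^{(e')} \circ (\Xi^{e',r(I)}_{\vec M})^{-1} \circ \Xi^{e,r(I)}_{\vec M} = \mu^{(e')} \circ \Xi^{e,e'}_{\vec M}$, where the last equality uses that canonical isomorphisms are closed under composition.

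For $(\ast)$, the base cases $e = i$ and $e = \unit$ are immediate from the definitions of $\mu^{(-)}$. For the inductive step, decompose $e = e_1 \otimes e_2$ over $I_1 \sqcup I_2 = I$ (where by the ordering constraint all indices of $I_1$ precede those of $I_2$) and expand $\mu^{(e)} = \mu \circ (\mu^{(e_1)} \otimes \mu^{(e_2)})$. I would first handle the degenerate situations where one of $I_1, I_2$ is empty by invoking the monoid unit laws $\mu \circ (\eta \otimes \id_M) = \lambda_M$ and $\mu \circ (\id_M \otimes \eta) = \rho_M$ (together with the inductive hypothesis on the non-empty factor). When both $I_1$ and $I_2$ are non-empty, the inductive hypothesis gives $\mu^{(e_k)} = \mu^{(r(I_k))} \circ \Xi^{e_k,r(I_k)}$, so by functoriality of $\otimes$ and the fact that $\Xi^{e_1,r(I_1)} \otimes \Xi^{e_2,r(I_2)}$ is itself canonical, it suffices to prove $(\ast)$ when $e_k = r(I_k)$ for $k = 1, 2$. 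The latter reduces to an iterated use of monoid associativity $\mu \circ (\mu \otimes \id) = \mu \circ (\id \otimes \mu) \circ \alpha$, \emph{i.e.} threading one right-nested product into the head of another.

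The main obstacle is the bureaucracy around the unit object: tensorial expressions may contain $\unit$ leaves in arbitrary positions, and eliminating them commutes past $\mu^{(-)}$ only via the unitors. To manage this cleanly I would perform a secondary induction on the number of $\unit$-leaves of $e$, first reducing to a $\unit$-free expression by repeatedly replacing a subexpression $\unit \otimes e'$ (resp. $e' \otimes \unit$) by $e'$, paying a $\lambda$ (resp. $\rho$) that precisely cancels the corresponding unit law for $\mu$; uniqueness of canonical isomorphisms (and their closure under $\otimes$ and composition) then ensures that the $\unit$-inserting canonical iso matches the accumulated unitors. Once $e$ has no $\unit$-leaves, the outer structural induction proceeds purely on associativity as described above, and Mac Lane's coherence theorem guarantees that all intermediate reassociations collapse into a single canonical morphism, concluding the proof.
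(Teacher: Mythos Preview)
The paper does not give its own proof of this statement: it is cited from Mac Lane~\cite[\S{}VII.3]{CWM} and invoked as a black box in the proof of Lemma~\ref{lem:coherence-sr-is-functor}. There is therefore nothing in the paper to compare your argument against.

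For what it is worth, your approach --- normalising every $\mu^{(e)}$ to a fixed right-nested form $\mu^{(r(I))}$ via the canonical isomorphism, and then concluding by uniqueness and closure of canonical isomorphisms under composition --- is the standard one and is essentially how Mac Lane proceeds. The outline is sound. One simplification: the ``secondary induction on the number of $\unit$-leaves'' is not needed as a separate mechanism. The primary structural induction on $e$ already absorbs $\unit$-leaves: in the step $e = e_1 \otimes e_2$ with, say, $I_1 = \varnothing$, the inductive hypothesis on $e_1$ gives $\mu^{(e_1)} = \eta \circ \Xi^{e_1,\unit}$, and then the monoid unit law $\mu \circ (\eta \otimes \id_M) = \lambda_M$ together with naturality of $\lambda$ does the rest. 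The only substantive sub-claim is $\mu^{(r(I_1)\otimes r(I_2))} = \mu^{(r(I))} \circ \Xi^{r(I_1)\otimes r(I_2),\,r(I)}$ for non-empty $I_1,I_2$, which is a straightforward induction on $|I_1|$ using the monoid associativity axiom.
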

We find it convenient to work directly with tensorial functors in the rest of
this appendix, leaving the tensorial expressions that define them implicit.
Therefore, we write $\mu^{(F)}$ instead of $\mu^{(e)}$ when $F = F_e$; strictly
speaking, two expressions could define the same functor for accidental reasons,
but for our purposes, the right choice of $e$ can always be inferred from the
context. Similarly, we shall speak of \emph{the} canonical isomorphism between
two tensorial functors $\cC^I \to \cC$.
In accordance with \Cref{subsec:notations}, we write
\[ \bigotimes_{i=1}^n Y_i = (\dots(Y_1 \otimes Y_2) \otimes \dots) \otimes
  Y_n \qquad M^{\otimes n} = \bigotimes_{i=1}^n M \]
and this will be used to define ordered tensorial functors. Furthermore, for
each finite set $I$, we fix an arbitrary choice of tensorial functor
$\bigotimes_{i \in I} (-) : \cC^I \to \cC$, and denote by $\bigotimes_{i \in I}
Y_i$ the image of $(Y_i)_{i \in I}$ by this functor.

After these general preliminaries, let us focus on the specific study of the
category $\Sr(\Gamma)$ for a fixed finite alphabet $\Gamma$.

\begin{defi}
  For $t \in [R \to_\Sr R']$ -- recall that this implies $t : R' \to (\Gamma +
  R)^*$, let $\partial(t) \subseteq R$ be the set of register variables that do
  not occur in any $t(r')$ for $r' \in R'$.
\end{defi}

\begin{defi}
  Given $w \in (\Gamma + R)^*$, we write
\[ \bigotimes_{r \looparrowleft w} Y_r \;=\; \bigotimes_{i=1}^{\len{w}}
  \begin{cases}
    \unit & \text{when}\ w[i] \in \inl(\Gamma)\\
    Y_r & \text{when}\ w[i] = \inr(r)\ \text{for}\ r \in R
  \end{cases}
  \qquad\text{and}\qquad M^{\looparrowleft w} \;=\; \bigotimes_{r \looparrowleft w} M
\]
\end{defi}
\begin{lem}
  \label{lem:coherence-sr-dep}
  For $t \in [R \to_\Sr R']$, we have a canonical isomorphism
  \[ \bigotimes_{r \in R} Y_{r} \quad\cong\quad \bigotimes_{r\in\partial(t)}
    Y_{r} \otimes \bigotimes_{r' \in R'} \bigotimes_{r\looparrowleft t(r')}
    Y_{r} \]
\end{lem}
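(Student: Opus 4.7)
The plan is to identify both sides of the claimed isomorphism as the evaluation at $\vec{Y} = (Y_r)_{r \in R}$ of two tensorial functors $\cC^R \to \cC$, and then invoke the coherence theorem for symmetric monoidal categories to produce the canonical natural isomorphism between them. The left-hand side is evidently given by such a tensorial functor, namely the fixed choice $\bigotimes_{r \in R}(-)$. The work is thus to recognize the right-hand side as a tensorial expression over the index set $R$.

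First I would use the copylessness of $t$ to obtain the disjoint decomposition
\[ R \;=\; \partial(t) \;\sqcup\; \bigsqcup_{r' \in R'} \{\, r \in R \mid r \text{ occurs in } t(r') \,\}, \]
every element on the right-hand side appearing exactly once (since copylessness forbids two occurrences even across different $t(r')$). Unfolding the definition of $\bigotimes_{r \looparrowleft t(r')} Y_r$, its leaves are either $\unit$ (for positions labeled by a letter of $\Gamma$) or $Y_r$ for a register occurrence. Combining over all $r' \in R'$ and tensoring on the outside with $\bigotimes_{r \in \partial(t)} Y_r$, the $R$-labeled leaves of the whole right-hand expression list every $r \in R$ exactly once. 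Hence the right-hand side is the value at $\vec{Y}$ of a tensorial expression $e'$ over $R$ (with some additional $\unit$-leaves), and similarly the left-hand side is the value at $\vec{Y}$ of a tensorial expression $e$ over $R$.

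Applying Mac Lane's coherence theorem to the triple $(R, e, e')$ yields a distinguished natural isomorphism $F_e \Rightarrow F_{e'}$, which, evaluated at $\vec{Y}$, provides the desired isomorphism. The only subtlety is bookkeeping: one is implicitly making a particular choice of tensorial expression on each side (the prescribed $\bigotimes_{r \in R} Y_r$ on the left, and the bracketing induced by the inductive definition of $\bigotimes_{r \looparrowleft w}$ and the outer $\bigotimes_{r' \in R'}$ on the right), but the coherence theorem furnishes the map irrespective of these choices and guarantees closure under composition and tensor, which is what makes calling the resulting isomorphism \emph{canonical} unambiguous.

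I do not anticipate any genuine obstacle; the argument is purely bureaucratic. The main effort is notational, namely flattening the nested tensors of units and the outer tensor indexed by $R'$ into a single tensorial expression over $R$, which is justified by the closure of canonical isomorphisms under the monoidal product.
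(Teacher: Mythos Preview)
Your proposal is correct and follows essentially the same approach as the paper: both arguments reduce to checking that the right-hand side is a tensorial functor over the indexing set $R$ (each $r \in R$ appearing exactly once), which amounts to the disjoint decomposition $R = \partial(t) \sqcup \bigsqcup_{r' \in R'} \{r \mid \inr(r) \text{ occurs in } t(r')\}$ --- disjointness by copylessness, covering by the definition of $\partial(t)$ --- and then invoke Mac Lane's coherence theorem.
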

\begin{proof}
  To apply Mac Lane's coherence theorem, we just have to check that the
  right-hand side defines a tensorial functor with indexing set $R$. This
  amounts to the equality
  \[ R = \partial(t) \cup \bigcup_{r' \in R'} \{ r \mid \exists i.\, t(r')[i] =
    \inr(r) \} \]
  where, to ensure that no index in $R$ is repeated, the union must be
  \emph{disjoint} and the letters of $t(r')$ that are in $\inr(R)$ must all be
  distinct. Those conditions are consequences of copylessness, while the
  equality itself is essentially the definition of $\partial(t)$.
\end{proof}

We are now in a position to build the functor promised in
Theorem~\ref{thm:functor-from-sr}. Following the statement of this theorem, we
fix a family of morphisms $(m_c) \in \Hom{\cC}{\unit}{M}^\Gamma$, and assume
that $(\cC,\otimes,\unit)$ is affine. Thus, we may write $\tuple{}_A : A \to
\unit$ for the terminal morphism from $A$, omiting the subscript $A$ when it can
be inferred from the context.

\begin{defi}
  \label{def:coherence-sr}
  We define a map on objects $F : R \in \Obj(\Sr) \mapsto M^{\otimes R} \in
  \Obj(\cC)$.

  As for morphisms, given a register transition $t \in [R \to_\Sr R']$, we set
  $F(t)$ to be
  \[ F(R) = M^{\otimes R} \xrightarrow{\;\sim\;} M^{\otimes\partial(t)}
    \otimes \bigotimes_{r'\in R'} \bigotimes_{i=1}^{\len{t(r')}} X_{r',i}
    \xrightarrow{\;\tuple{}\otimes\widetilde{F}(t)\;} \unit \otimes M^{\otimes R'}
    \xrightarrow{\;\sim\;} M^{\otimes R'} = F'(R)\]
  where the left arrow instantiates the canonical isomorphism of
  Lemma~\ref{lem:coherence-sr-dep}, with
  \[ X_{r',i} =
    \begin{cases}
      \unit & \text{when}\ t(r')[i] \in \inl(\Gamma)\\
      M & \text{otherwise, i.e.}\ t(r')[i] \in \inr(R)
    \end{cases}
    \qquad\text{so that}\qquad
    \bigotimes_{i=1}^{\len{t(r')}} X_{r',i} = M^{\looparrowleft t(r')} \]
  and in the middle arrow, $\tuple{} : M^{\otimes\partial(t)} \to \unit$ is the
  terminal morphism and
  \[ \widetilde{F}(t) = \bigotimes_{r' \in R'} \left( \widetilde{F}(t)_{r'} :
      \bigotimes_{i=1}^{\len{t(r')}} X_{r',i}
      \xrightarrow{\;\;\bigotimes_i f_{r',i} \;\;} \bigotimes_{i=1}^{\len{t(r')}}
      M = M^{\otimes\len{t(r')}} \xrightarrow{\;\;\mu^{(\len{t(r')})}\;\;} M
    \right) \]
  where for $i \in \{1,\ldots,\len{t(r')}\}$, we pick
  \[ f_{r',i} =
    \begin{cases}
      m_c & \text{when}\ t(r')[i] = \inl(c)\ \text{for}\ c \in \Gamma\\
      \id_M & \text{otherwise, i.e.}\ t(r')[i] \in \inr(R)
    \end{cases}
  \]
  (recall that $m_c : \unit \to M$ is the prescribed functorial image, by the
  $F$ that we are defining, of the register transition $\widehat{c} \in [\varnothing
  \to_\Sr \{\bullet\}]$).
\end{defi}

The tedious part in proving Theorem~\ref{thm:functor-from-sr} is checking that
the above definition works.

\begin{prop}
  \label{prop:coherence-sr-is-functor}
  The operation $F$ introduced in Definition~\ref{def:coherence-sr} is a functor.
\end{prop}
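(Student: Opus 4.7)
The proof splits naturally into two checks: preservation of identities and preservation of composition.

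For the identity case, observe that for $\id_R \in [R \to_\Sr R]$ we have $\id_R(r) = \inr(r)$ for each $r \in R$, so $\partial(\id_R) = \varnothing$, every $\len{\id_R(r)} = 1$, and each factor $\widetilde{F}(\id_R)_r$ is $\mu^{(1)} \circ \id_M = \id_M$. The canonical isomorphism of Lemma~\ref{lem:coherence-sr-dep} is thus a canonical isomorphism between two ordered tensorial expressions for $M^{\otimes R}$, and composing it with $\tuple{}_\unit \otimes \bigotimes_r \id_M$ and the inverse unitor yields the identity by the (non-symmetric) coherence theorem. This step is routine.

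The bulk of the work is verifying $F(t'\circ t) = F(t') \circ F(t)$ for $t \in [R\to_\Sr S]$ and $t' \in [S \to_\Sr T]$. The plan is to unfold the composite $F(t') \circ F(t)$ into a single large diagram on $M^{\otimes R}$ and to recognize it as the corresponding diagram for $F(t'\circ t)$. The key conceptual idea is that substituting $t(s)$ for each occurrence of $\inr(s)$ inside $t'(\tau) \in (\Gamma + S)^*$ corresponds, on the semantic side, to replacing the factor $M$ sitting at position $s$ in the tensor $M^{\looparrowleft t'(\tau)}$ by the tensor $M^{\looparrowleft t(s)}$ followed by $\widetilde{F}(t)_s$. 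Summing over $\tau \in T$, the morphism $\widetilde{F}(t'\circ t)_\tau$ can be re-bracketed as the pasting of an \enquote{outer} multiplication $\mu^{(\len{t'(\tau)})}$ with \enquote{inner} multiplications $\mu^{(\len{t(s)})}$ for each $s$ appearing in $t'(\tau)$, preceded by the insertion of the $m_c$'s. The general associativity law (Theorem~\ref{thm:assoc-internal-monoid}) then identifies this iterated pasting with the single $\mu^{(\len{(t'\circ t)(\tau)})}$ used to define $\widetilde{F}(t'\circ t)_\tau$, up to the appropriate canonical isomorphism.

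Concretely, I would proceed in the following steps. First, observe the set-theoretic decomposition $R = \partial(t'\circ t) \,\cup\, \bigcup_{s \in \partial(t')} \{r \mid r \text{ appears in } t(s)\} \,\cup\, \bigcup_{\tau \in T,\, s \looparrowleft t'(\tau)} \{r \mid r \looparrowleft t(s)\}$ with all unions disjoint by copylessness, and lift it through Lemma~\ref{lem:coherence-sr-dep} to a canonical isomorphism on $M^{\otimes R}$. Second, factor $F(t') \circ F(t)$ through this refined tensor decomposition by naturality of the canonical isomorphisms and the bifunctoriality of $\otimes$; this exposes three blocks of factors, the first two of which are discarded by terminal morphisms and coalesce with $\partial(t'\circ t)$ because $\cC$ is affine. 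Third, on the \enquote{active} block indexed by occurrences of registers in $(t'\circ t)(\tau)$, what remains is the composite of the $f_{r',i}$'s for $t$, a reshuffling into the bracketing induced by $t'$, the $\widetilde{F}(t)_s$'s, and then $\widetilde{F}(t')_\tau$. This is precisely the tensor-expression pasting described above; Theorem~\ref{thm:assoc-internal-monoid} and the definition of $\mu^{(-)}$ collapse it to $\widetilde{F}(t'\circ t)_\tau$.

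The main obstacle will be the bookkeeping: making sure that the several canonical isomorphisms produced by the coherence theorem, the naturality squares involving $\tuple{}$ and the $m_c$'s, and the total ordering of factors induced by the \enquote{flattening} of $t'(\tau)$ after substitution, all line up so that a single global use of the general associativity law closes the argument. No new idea beyond Mac Lane's coherence and Theorem~\ref{thm:assoc-internal-monoid} is needed; the difficulty is purely combinatorial organization, which is why a careful treatment fits naturally in an appendix rather than the main text.
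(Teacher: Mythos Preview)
Your approach is essentially the same as the paper's: decompose $R$ via the combinatorial identity for $\partial(t'\circ t)$, use affineness to discard the dead blocks, and invoke the general associativity law (Theorem~\ref{thm:assoc-internal-monoid}) to collapse the nested $\mu^{(-)}$'s on the active block. The paper organizes this into two commutative diagrams (one reducing to a lemma about $\widetilde{F}$, the other proving that lemma), but the ingredients and the order in which they are applied match yours.

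One small slip: in your displayed decomposition of $R$, the first block should be $\partial(t)$ rather than $\partial(t'\circ t)$, since your second block $\bigcup_{s\in\partial(t')}\{r\mid r\looparrowleft t(s)\}$ is already contained in $\partial(t'\circ t)$ and the union is meant to be disjoint. Your later sentence \enquote{the first two of which \ldots\ coalesce with $\partial(t'\circ t)$} makes clear you had the right picture in mind; just fix the formula.
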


\begin{figure}
  \centering
  \[\renewcommand{\labelstyle}{\textstyle}
    \xymatrix@C=18mm@R=12mm{
      M^{\tensor R} \ar[r]^-{F(t)} \ar[d]^{\sim}
      \ar@{}[dr] | {\text{definition of}\ F(t)}
      & M^{\tensor R'} \ar[r]^-{F(t')} \ar@{}[dr] | {\text{naturality}}
      & M^{\tensor R''} \\
      \displaystyle M^{\otimes\partial(t)} \otimes \bigotimes_{r'\in R'}
      M^{\looparrowleft t(r')} \ar@{=}[d]
      \ar@{}[drr] | {\text{bifunctoriality of $\otimes$}}
      \ar[r]^-{\tuple{} \otimes \widetilde{F}(t)}
      & \unit \otimes M^{\otimes R'} \ar[r]^-{\unit\otimes F(t')} \ar[u]^-{\sim}
      & \unit \otimes M^{\otimes R''} \ar[u]^-{\sim} \ar@{=}[d] \\
      \displaystyle M^{\otimes\partial(t)} \otimes \bigotimes_{r'\in R'}
      M^{\looparrowleft t(r')} \ar[d]^-{\sim}
      \ar@{}[drr] |
      {\tuple{}\otimes\text{(Lemma~\ref{lem:coherence-sr-is-functor} /
          \Cref{fig:coherence-sr-secondary})}}
      \ar[rr]^-{\tuple{}\otimes(F(t')\circ \widetilde{F}(t))}
      & & \unit \otimes M^{\otimes R''}\\
      \displaystyle M^{\otimes\partial(t)} \otimes \left(
        N \otimes \bigotimes_{r'' \in R''} M^{\looparrowleft (t \circ t')(r'')}
      \right)
      \ar[d]^-{\sim} \ar@{}[drr] | {\text{naturality of the associator}}
      \ar[rr]^-{\tuple{}\otimes(\tuple{}\otimes\widetilde{F}(t' \circ t))} & &
      \displaystyle \unit\otimes\left(\unit \otimes M^{\otimes R''}\right)
      \ar[u]^-{\sim}\\
      \displaystyle \left( M^{\otimes\partial(t)} \otimes N  \right) \otimes
      \bigotimes_{r'' \in R''} M^{\looparrowleft (t \circ t')(r'')}
      \ar[d]^-{\sim} \ar@{}[drr] | {\text{$\unit\otimes\unit$ and $\unit$ are
          terminal (affineness assumption)}}
      \ar[rr]^-{(\tuple{}\otimes\tuple{})\otimes\widetilde{F}(t' \circ t)} & &
      \displaystyle (\unit\otimes\unit) \otimes M^{\otimes R''} \ar[u]^-{\sim}\\
      \displaystyle M^{\otimes\partial(t' \circ t)} \otimes
      \bigotimes_{r'' \in R''} M^{\looparrowleft (t \circ t')(r'')}
      \ar[rr]^-{\tuple{}\otimes\widetilde{F}(t' \circ t)} & &
      \displaystyle \unit \otimes M^{\otimes R''} \ar[u]^-{\sim}
    }\]
  \caption{The commutativity of the outer square of this diagram establishes
    Proposition~\ref{prop:coherence-sr-is-functor}. The text in the inner
    squares explains why they commute; the proof text for
    Proposition~\ref{prop:coherence-sr-is-functor} defines $N$ and gives further
    justifications (in particular for the existence of the canonical
    isomorphisms denoted by $\xrightarrow{\;\sim\;}$).}
  \label{fig:coherence-sr-primary}
\end{figure}

\begin{proof}
  Let $t \in [R \to_\Sr R']$ and $t' \in [R' \to_\Sr R'']$; we want to reason on
  $F(t') \circ F(t)$ to show that it is equal to $F(t' \circ t)$. Beware: we
  write $t' \circ t$ for composition of (copyless) register transitions in the
  category $\Sr$, and will employ the notation $t(r')$ for set-theoretic
  application ($t : R' \to (\Gamma + R)^*$), but we do \emph{not} have $(t'
  \circ t)(r') = t'(t(r'))$ -- indeed, the two sides of the equality are not
  even well-defined for $r' \in R'$! Since $t' \circ t$ is in $[R \to_\Sr R'']$,
  it is a set-theoretic map $R'' \to (\Gamma + R)^*$.

  To prove this, we first reduce our goal to
  Lemma~\ref{lem:coherence-sr-is-functor}, and then prove that lemma. The
  reduction is given by the commutative diagram of
  \Cref{fig:coherence-sr-primary} (with $N$ to be defined later). Indeed, while
  the morphism on the top is $F(t') \circ F(t)$, the one defined by the three
  other sides of the outermost square is
  \[ M^{\otimes R} \xrightarrow{\;\sim\;} M^{\otimes\partial(t'\circ t)}
    \otimes \bigotimes_{r'' \in R''} M^{\looparrowleft (t' \circ t)(r'')}
    \xrightarrow{\;\tuple{}\otimes\widetilde{F}(t' \circ t)\;} \unit \otimes
    M^{\otimes R''} \xrightarrow{\;\sim\;} M^{\otimes R''} \]
  thanks to the closure of canonical isomorphisms of tensorial functors under
  composition. Since these canonical isomorphisms are also unique, we can equate
  that with the expression for $F(t' \circ t)$ given in
  Definition~\ref{def:coherence-sr}, hence $F(t') \circ F(t) = F(t' \circ t)$.

  The main justifications that are missing from \Cref{fig:coherence-sr-primary}
  are the commutation of one square treated in
  Lemma~\ref{lem:coherence-sr-is-functor}, and the existence of the various
  canonical isomorphisms involved. The ones in the right column are
  just unitors, so let us focus on the left column.
  \begin{itemize}
  \item The top left isomorphism is an instance of Lemma~\ref{lem:coherence-sr-dep}.
  \item The next one (skipping the equality) is provided by
    Lemma~\ref{lem:coherence-sr-is-functor}.
  \item Then the penultimate one is just the inverse of an associator (see
    \Cref{subsec:prelim-cat}), of the form
    \[ \alpha^{-1}_{A,B,C} \;:\; A \otimes (B \otimes C) \;\to\; (A \otimes B)
      \otimes C\]
  \item Finally, the last one reduces to $M^{\otimes\partial(t)} \otimes N \cong
    M^{\otimes\partial(t' \circ t)}$.
  \end{itemize}
  To prove the latter, we must first clarify it by defining $N$ consistently
  with Lemma~\ref{lem:coherence-sr-is-functor}:
  \[ N = \bigotimes_{r' \in \partial(t')} M^{\looparrowleft t(r')} \]
  We then see that the canonical isomorphism that we are looking for is the
  instantiation to the constant family $(M)_{r \in \partial(t' \circ t)}$ of the
  following natural isomorphism in $Y_r$ for $r \in \partial(t' \circ t)$:
  \[ \bigotimes_{r\in\partial(t)} Y_r \;\otimes\;
    \bigotimes_{r'\in\partial(t')} \bigotimes_{r \looparrowleft t(r')} Y_r
    \quad\cong\quad \bigotimes_{r\in\partial(t' \circ t)} Y_r \]
  This is a consequence of an elementary combinatorial fact:
  \[ \partial(t) \cup \bigcup_{r'\in\partial(t')} \{r \mid \exists i \in
    \{1,\ldots,\len{t(r')}\}.\; t(r')[i] = \inr(r)\} \quad=\quad
    \partial(t'\circ t) \]
  Informally speaking, this means that a register is thrown away by $t' \circ t$
  if and only if is either thrown away by $t$, or used by $t$ to compute a value
  that is then discarded by $t'$. Furthermore, thanks to copylessness, the union
  is \emph{disjoint} and for each $r$ in the union over $r' \in \partial(t')$,
  there is a \emph{single} $(r',i)$ such that $t(r')[i] = \inr(r)$; those
  properties are necessary to get the natural isomorphism. We leave a formal
  proof of this combinatorial identity to the reader.

  This being done, all that remains to conclude our proof is to show
  Lemma~\ref{lem:coherence-sr-is-functor} below.
\end{proof}

\begin{lem}
  \label{lem:coherence-sr-is-functor}

  The diagram of \Cref{fig:coherence-sr-secondary} commutes, with the following
  definitions:
  \[ N = \bigotimes_{r' \in \partial(t')} M^{\looparrowleft t(r')} \qquad O_{r''} =
    \bigotimes_{r' \looparrowleft t'(r'')} M^{\looparrowleft t(r')} \qquad
    \psi_{r''} = \bigotimes_{r'\looparrowleft t'(r'')} \widetilde{F}(t)_{r'} \]
\end{lem}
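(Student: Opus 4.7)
The plan is to reduce the commutativity of the diagram in Figure~\ref{fig:coherence-sr-secondary} to a pointwise application (over $r'' \in R''$) of the general associativity law for internal monoids (Theorem~\ref{thm:assoc-internal-monoid}), after suitable bookkeeping. The diagram compares two ways of computing what should become the $r''$-th component of $\widetilde{F}(t' \circ t)$: either by applying $\widetilde{F}(t)$ first and then $\widetilde{F}(t')$, or by recognizing the composite as a single iterated multiplication indexed by the letters of $(t' \circ t)(r'')$.

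First, I would unfold $(t' \circ t)(r'')$ using Definition~\ref{def:reg-trans-compo}: it is the word obtained from $t'(r'')$ by substituting each register variable $\inr(r')$ by the word $t(r')$ (and each letter $\inl(c)$ by $\inl(c)$). Correspondingly, $M^{\looparrowleft (t' \circ t)(r'')}$ is canonically isomorphic to the nested tensor expression obtained from $\bigotimes_{r' \looparrowleft t'(r'')} M^{\looparrowleft t(r')}$ by inserting $\unit$'s at the positions where $t'(r'')$ has $\Gamma$-letters; call this isomorphism $\Xi_{r''}$. By Mac Lane's coherence theorem applied to \emph{ordered} tensorial functors, $\Xi_{r''}$ is uniquely determined (the relevant total order being the left-to-right order of leaves in $(t'\circ t)(r'')$).

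Second, using bifunctoriality of $\otimes$ and naturality of the canonical isomorphisms, I would decompose the big square of Figure~\ref{fig:coherence-sr-secondary} along $R'' = \partial(t') \sqcup \dots$ and then along $r'' \in R'' \setminus \partial(t')$ (the $\partial(t')$ part being handled trivially by affineness, since all outputs on that part collapse to $\unit$ through $\tuple{}$). This reduces the problem to showing, for each individual $r'' \notin \partial(t')$, that the composite $\widetilde{F}(t')_{r''} \circ \psi_{r''}$ agrees with $\widetilde{F}(t'\circ t)_{r''}$ modulo $\Xi_{r''}$. Spelling out both sides, the left one is the composite of (i) the letter-morphisms $m_c$ and identities determined by $t(r')$, for each $r' \looparrowleft t'(r'')$, tensored together and each followed by $\mu^{(\len{t(r')})}$; (ii) the letter-morphisms $m_c$ and identities determined by the $\Gamma$-letters of $t'(r'')$, together with the resulting $M$'s, tensored with the results of (i); (iii) a final $\mu^{(\len{t'(r'')})}$. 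The right-hand side is the single composite of all letter-morphisms of $(t'\circ t)(r'')$ followed by $\mu^{(\len{(t'\circ t)(r'')})}$.

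The agreement of the letter-morphism parts is immediate from the substitutional definition of $t' \circ t$. For the monoid multiplication parts, the statement becomes precisely that two ordered tensorial expressions yielding $\mu^{(\len{(t'\circ t)(r'')})}$ (one flat, the other nested according to the bracketing induced by $t'(r'')$) are intertwined by their canonical isomorphism, which is exactly the content of Theorem~\ref{thm:assoc-internal-monoid}. The main obstacle is the bookkeeping: one must ensure that all canonical isomorphisms introduced to shuffle tensor factors act between \emph{ordered} tensorial functors (for the same total order on leaves) so that Mac Lane coherence without symmetry applies and the general associativity law can be invoked without spurious use of $\gamma$. Once this is set up, there is no further substantive categorical content beyond coherence and general associativity.
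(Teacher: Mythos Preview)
Your proposal is correct and follows essentially the same approach as the paper: reduce to a per-$r''$ statement via bifunctoriality, unfold both sides using the substitutional definition of $t'\circ t$, match the letter-morphisms, and invoke the general associativity law (Theorem~\ref{thm:assoc-internal-monoid}) for the nested versus flat iterated multiplications, taking care that only ordered canonical isomorphisms are involved. One small slip: you write ``decompose along $R'' = \partial(t') \sqcup \dots$'' and ``for each $r'' \notin \partial(t')$'', but $\partial(t') \subseteq R'$, not $R''$; the $N$-factor already isolates the $\partial(t')$-indexed piece of $R'$, and the per-component reduction is over all $r'' \in R''$.
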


\begin{figure}
  \centering
  \[\renewcommand{\labelstyle}{\textstyle}
    \xymatrix@C=22mm@R=15mm{
      \displaystyle \bigotimes_{r'\in R'} M^{\looparrowleft t(r')} \ar[d]^{\sim}
      \ar[r]^-{\widetilde{F}(t) = \bigotimes_{r' \in R'} \widetilde{F}(t)_{r'}}
      \ar@{}[dr] | {\text{naturality / Lemma~\ref{lem:coherence-sr-dep}}}
      & M^{\tensor R'} \ar[d]^-{\sim}
      \ar[r]^-{F(t')} \ar@{}[dr] | {\quad\text{definition of}\ F(t')}
      & M^{\tensor R''} \\
      \displaystyle N \otimes \bigotimes_{r'' \in R''} O_{r''}
      \ar[r]^-{(\ldots)\otimes\bigotimes_{r''} \psi_{r''}} \ar@{=}[d]
      \ar@{}[drr] | {\text{$\otimes$ is a bifunctor and $\unit$ is a terminal object}}
      & \displaystyle M^{\otimes\partial(t')} \otimes \bigotimes_{r''\in R''}
      M^{\looparrowleft t'(r'')} \ar[r]^-{\tuple{}\otimes\widetilde{F}(t')}
      & \unit \otimes M^{\otimes R''} \ar[u]^-{\sim} \ar@{=}[d] \\
      \displaystyle N \otimes \bigotimes_{r'' \in R''} O_{r''} \ar[d]^-{\sim}
      \ar[rr]^-{\tuple{}\otimes\bigotimes_{r'' \in R''}
        (\widetilde{F}(t')_{r''}\circ\psi_{r''})}
      \ar@{}[drr] | {\text{manipulations on monoid objects, explained in main text}}
      & & \unit \otimes M^{\otimes R''} \ar@{=}[d] \\
      \displaystyle N \otimes \bigotimes_{r'' \in R''} M^{\looparrowleft (t \circ t')(r'')}
      \ar[rr]^-{\tuple{}\otimes\bigotimes_{r'' \in R''}
        \widetilde{F}(t'\circ t)_{r''} \;=\; \tuple{}\otimes F(t'\circ t)}
      & & \unit \otimes M^{\otimes R''}
    }\]
  \caption{Lemma~\ref{lem:coherence-sr-is-functor} defines $N$, $O_{r''}$,
    and $\psi_{r''}$, and proves that this diagram commutes.}
  \label{fig:coherence-sr-secondary}
\end{figure}

\begin{proof}
  In addition to the information already given in
  \Cref{fig:coherence-sr-secondary}, there are two things to justify about this
  figure: the top-left naturality square, and the commutativity of the bottom
  square. Let us tackle the former. Recall that Lemma~\ref{lem:coherence-sr-dep}
  gives us the canonical isomorphism
  \[ \bigotimes_{r' \in R'} Y_{r'} \quad\cong\quad \bigotimes_{r'\in\partial(t')}
    Y_{r'} \;\otimes\; \bigotimes_{r'' \in R''}
    \bigotimes_{r'\looparrowleft t'(r'')} Y_{r'} \]
  which is natural in $Y_{r'}$ for $r ' \in R'$. By instantiating with $Y_{r'} =
  M^{\looparrowleft t(r')}$, we get
  \[ \bigotimes_{r' \in R'} M^{\looparrowleft t(r')} \quad\cong\quad
    \bigotimes_{r'\in\partial(t')} M^{\looparrowleft t(r')} \;\otimes\;
    \bigotimes_{r''\in R''} \bigotimes_{r'\looparrowleft t'(r'')}
    M^{\looparrowleft t(r')} \quad=\quad N \otimes \bigotimes_{r'' \in R''}
    O_{r''} \]
  by definition of $N$ and $O_{r''}$. Hence the vertical canonical isomorphism
  at the top left of \Cref{fig:coherence-sr-secondary}; as for the top middle,
  it is the instantiation of the same natural isomorphism with $Y_{r'} = M$, as
  already observed in Definition~\ref{def:coherence-sr}. To make the top-left
  square a naturality square, we should then have
  \[ (\ldots) \otimes \bigotimes_{r'' \in R''} \psi_{r''} \quad=\quad
    \bigotimes_{r'\in\partial(t')} \widetilde{F}(t)_{r'} \;\otimes\;
    \bigotimes_{r'' \in R''} \bigotimes_{r'\looparrowleft t'(r'')}
    \widetilde{F}(t)_{r'} \]
  which is indeed the case with our definition of $\psi_{r''}$.

  Our next and final task is the commutativity of the bottom square of
  \Cref{fig:coherence-sr-secondary}. Thanks to the bifunctoriality of $\otimes$,
  it reduces to a simpler commutative diagram:
  \[\renewcommand{\labelstyle}{\textstyle}
    \xymatrix@C=30mm@R=10mm{
      \displaystyle
      O_{r''} = \bigotimes_{r'\looparrowleft t'(r'')}M^{\looparrowleft t(r')}
      \ar[d]_-{\sim} \ar[r]^-{\widetilde{F}(t')_{r''}\circ\psi_{r''}} & M\\
      \displaystyle M^{\looparrowleft (t \circ t')(r'')}
      \ar[ur]_-{\quad\widetilde{F}(t'\circ t)_{r''}}
    } \qquad\text{for}\ r'' \in R''\]
  Let $r'' \in R''$. To show that this indeed commutes, we start by writing out
  $\widetilde{F}(t')_{r''}\circ\psi_{r''}$ as
  \[ \bigotimes_{r'\looparrowleft t'(r'')}M^{\looparrowleft t(r')}
    \xrightarrow{\;\bigotimes_{r'}\widetilde{F}(t)_{r'}\;} M^{\looparrowleft t'(r'')}
    = \bigotimes_{j=1}^{\len{t'(r'')}} X'_{r'',j}
    \xrightarrow{\;\bigotimes_j f'_{r'',j}\;} M^{\otimes\len{t'(r'')}}
    \xrightarrow{\;\mu^{(\len{t'(r'')})}\;} M \]
  where, analogously to the $X_{r',i}$ and $f_{r',i}$ involved in the
  definition of $\widetilde{F}(t)_{r'}$, we take
  \[ (X'_{r'',j},\; (f'_{r'',j} : X'_{r'',j} \to M)) =
    \left\{\begin{array}{lll}
             (\unit, & \!\!\!m_c) & \text{when}\ t'(r'')[j] = \inl(c)\
                                    \text{for}\ c\in\Gamma\\
             (M, & \!\!\!\id_M) & \text{otherwise, i.e.}\ t'(r'')[j] \in \inr(R')
           \end{array}\right.
  \]
  The leftmost arrow in the above sequence is how we defined $\psi_{r''}$ in the
  lemma statement, while the composition of the two others equals
  $\widetilde{F}(t')_{r''}$ by definition.

  To manipulate this, let us introduce a new notation:
  \[ \bigotimes^{\mathtt{foldMap}}_{w} f \quad=\quad \bigotimes_{i=1}^{|w|}
    f(w_i) \qquad\text{for}\ w = w_1 \dots w_n\
    \text{and}\ f\ \text{a function} \]
  Note that in general, this may lead to monoidal products with repeated
  factors: there is no a priori guarantee that this defines a tensorial functor.
  The function $f$ will often be expressed as a copairing using the following
  notation:
  \[ f : \begin{cases}
      \inl(y) &\!\!\!\mapsto f_1(y)\\
      \inr(z) &\!\!\!\mapsto f_2(z)
    \end{cases} \quad\iff\quad f : x \in Y + Z \mapsto
    \begin{cases}
      f_1(y) & \text{when}\ x = \inl(y)\ \text{for}\ y \in Y\\
      f_2(z) & \text{when}\ x = \inr(z)\ \text{for}\ z \in Z
    \end{cases}
  \]
  In the case that we are interested in right now, we have by functoriality of
  $\otimes$:
  \[ \widetilde{F}(t')_{r''} \circ \psi_{r''} \;=\;
    \mu^{(\len{t'(r'')})} \circ \bigotimes^{\mathtt{foldMap}}_{t'(r'')}
    \begin{cases}
      \inl(c)&\!\!\!\!\mapsto m_c\\
      \inr(r')&\!\!\!\!\mapsto \widetilde{F}(t)_{r'} = \mu^{(\len{t(r')})} \circ
      \displaystyle\bigotimes_{t(r')}^{\mathtt{foldMap}}
              \begin{cases}
                \inl(a) &\!\!\!\!\mapsto m_{a} \\
                \inr(r) &\!\!\!\!\mapsto \id_M
              \end{cases}
    \end{cases}
  \]
  To simplify this, we introduce the following tensorial functor:
  \[  L_{r''}((Y_p)_{p \in P_{r''}}) \quad=\quad \bigotimes_{j=1}^{\len{t'(r'')}}
    \begin{cases}
      \qquad\;\, Y_{j,1} & \text{when}\ t'(r'')[j] \in \inl(\Gamma)\\
      \displaystyle\bigotimes_{i=1}^{\len{t(r')}} Y_{j,i} & \text{when}\
      t'(r'')[j] = \inr(r')
    \end{cases}\]
  \[ \text{where}\qquad P_{r''} \quad=\quad \sum_{j=1}^{\len{t'(r'')}} \begin{cases}
      \{1\} & \text{when}\ t'(r'')[j] \in \inl(\Gamma)\\
      \{1,\ldots,\len{t(r')}\} & \text{when}\ t'(r'')[j] = \inr(r')
    \end{cases} \]
  (using the dependent sum operation, cf.\ \Cref{subsec:notations}, so that
  $P_{r''} \subset \bN^2$). There is a unique total order on $P_{r''}$ that
  makes $L_{r''}$ into an \emph{ordered} tensorial functor: the
  \emph{lexicographical order} inherited from $\bN^2$. Thus, we may meaningfully
  speak of $\mu^{(L_{r''})} : L_{r''}((M)_{p \in P_{r''}}) \to M$, the
  $L_{r''}$-ary monoid multiplication, whose inductive definition leads to:
  \[ \mu^{(L_{r''})} \quad=\quad  \mu^{(\len{t'(r'')})} \circ
    \bigotimes^{\mathtt{foldMap}}_{t'(r'')}\begin{cases}
      \inl(c)&\!\!\!\mapsto \mu^{(1)} = \id_M\\
      \inr(r')&\!\!\!\mapsto \mu^{(\len{t(r')})}
    \end{cases}\]
  Now that this is defined, we can state the following equation, that directly
  follows by functoriality from the previous expressions of
  $\widetilde{F}(t')_{r''} \circ \psi_{r''}$ and of $\mu^{(L_{r''})}$:
  \[ \widetilde{F}(t')_{r''} \circ \psi_{r''} \quad=\quad \mu^{(L_{r''})} \circ
    \bigotimes^{\mathtt{foldMap}}_{t'(r'')} \begin{cases}
      \inl(c)&\!\!\!\!\mapsto\quad m_c\\
      \inr(r')&\!\!\!\!\mapsto
      \displaystyle\bigotimes_{t(r')}^{\mathtt{foldMap}}
                \begin{cases}
                  \inl(a) &\!\!\!\!\mapsto m_{a} \\
                  \inr(r) &\!\!\!\!\mapsto \id_M
                \end{cases}
    \end{cases}
  \]
  For the next step, recall that by Definition~\ref{def:reg-trans-compo}, $(t'
  \circ t)(r'') = t^\ddagger(t'(r''))$ (which is \emph{not} $t'(t(r''))$, as
  previously emphasized, since this `$\circ$' is composition in $\Sr(\Gamma)$),
  where $t^\ddagger$ is the morphism of free monoids such that
  $t^\dagger(\inl(c)) = \inl(c)$ and $t^\dagger(\inr(r')) = t(r')$. Thus, the
  $j$-th term of the dependent sum above is equal to
  $\{1,\ldots,\len{t^\dagger(t'(r'')[j])}\}$, and from the morphism property
  \[ (t' \circ t)(r'') = t^\ddagger(t'(r'')) = t^\ddagger(t'(r'')[1]) \cdot
    \ldots \cdot t^\ddagger(t'(r'')[\len{t'(r'')}])\]
  we obtain an bijection $\xi_{r''} : P_{r''} \xrightarrow{\sim} \{1,\ldots,(t'
  \circ t)(r'')\}$ such that for all $(j,i) \in P_{r''}$,
  \[ (t' \circ t)(r'')[\xi(j,i)]
    = t^\ddagger(t'(r'')[j])[i] \\
    = \begin{cases}
      \inl(c) & \text{when}\ t'(r'')[j] = \inl(c)\\
      t(r')[i] & \text{when}\ t'(r'')[j] = \inr(r')
    \end{cases}
  \]
  Thanks to this, we have
  \[ \widetilde{F}(t')_{r''} \circ \psi_{r''} \;=\;
    \mu^{(L_{r''})} \circ L_{r''} \left( \left(
        \begin{cases}
          m_c & \text{when}\ (t' \circ t)(r'')[\xi(p)] = \inl(c)\ \text{for}\
          c\in\Gamma \\
          \id_M & \text{otherwise, i.e.}\ (t' \circ t)(r'')[\xi(p)] \in \inr(R)
        \end{cases}
      \right)_{p \in P_{r''}}  \right) \]
  It is not hard to see that the bijection $\xi$ is \emph{monotone} (taking the
  lexicographical order on $P_{r''}$ as we did earlier). Therefore, the natural
  isomorphism
  \[ \Xi_{(Y_p)_{p \in P_{r''}}} : L\left( (Y_p)_{p \in P_{r''}} \right)
    \xrightarrow{\;\sim\;} \bigotimes_{k=1}^{\len{(t'\circ t)(r'')}}
    Y_{\xi^{-1}(k)}\]
  induced by $\xi$ is a \emph{canonical isomorphism of ordered tensorial
    functors} (i.e.\ it only uses associators and unitors, not symmetries).
  Thus, we may apply the general associativity law for internal monoids
  (Theorem~\ref{thm:assoc-internal-monoid}):
  $\mu^{(L_{r''})} = \mu^{(\len{(t' \circ t)(r'')})} \circ \Xi$ (we omit the
  subscript of $\Xi$ for convenience). By naturality of $\Xi$, we then have
  \[ \widetilde{F}(t')_{r''} \circ \psi_{r''}
    \quad=\quad \mu^{(\len{(t' \circ t)(r'')})} \;\circ\; \left(
      \bigotimes_{(t' \circ t)(r'')}^{\mathtt{foldMap}}
      \begin{cases}
        \inl(c) &\!\!\!\mapsto m_c\\
        \inr(r) &\!\!\!\mapsto \id_M
      \end{cases}
    \right) \;\circ\; \Xi \]
  By definition, this is equal to $\widetilde{F}(t' \circ t) \circ \Xi$, which
  is what we needed to conclude the proof.
\end{proof}

\begin{prop}
  The functor $F$ is strong monoidal.
\end{prop}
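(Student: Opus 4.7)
To equip $F$ with a strong monoidal structure, we must provide isomorphisms
\[ m_0 : \unit \to F(\varnothing) \qquad m_{R,S} : F(R) \otimes F(S) \to F(R + S) \]
natural in $R,S \in \Obj(\Sr)$, and verify the three coherence diagrams from Definition~\ref{def:monfunctor}. Both $F(\varnothing)$ and $F(R) \otimes F(S) = M^{\otimes R} \otimes M^{\otimes S}$, $F(R+S) = M^{\otimes(R+S)}$ are values of tensorial functors in the family $(M)$, so Mac Lane's coherence theorem supplies canonical isomorphisms between them. We take $m_0$ and $m_{R,S}$ to be precisely those canonical isomorphisms. By choosing the tensorial-expression representatives for $\bigotimes_{r \in R}(-)$ judiciously (namely, taking $\bigotimes_{r \in \varnothing}$ to be the expression $\unit$ and $\bigotimes_{r \in \{\bullet\}+\{\bullet\}}$ to unfold as $M \otimes M$), we can further arrange $m_0 = \id_\unit$ and $m_{\{\bullet\},\{\bullet\}} = \id_{M \otimes M}$, as required by the third bullet of Theorem~\ref{thm:functor-from-sr}.

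The coherence diagrams (associativity and the two unit laws) each assert the equality of two morphisms between values of tensorial functors on the constant family $(M)$, with every arrow involved either a canonical isomorphism itself or composed of associators, unitors and instances of $m_0, m_{R,S}$. Since canonical isomorphisms of tensorial functors are closed under composition and monoidal product, and are unique once source and target expressions are fixed, each of the three diagrams collapses to a uniqueness statement in Mac Lane's coherence theorem; this takes care of the coherence conditions with essentially no computation.

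The substantive content is \emph{naturality} of $m_{R,S}$ in $R$ and $S$, i.e.\ for $t \in [R \to_\Sr R']$ and $u \in [S \to_\Sr S']$, the square
\[ \xymatrix@C=20mm{ M^{\otimes R} \otimes M^{\otimes S} \ar[r]^-{m_{R,S}} \ar[d]_{F(t) \otimes F(u)} & M^{\otimes(R+S)} \ar[d]^{F(t \otimes u)} \\ M^{\otimes R'} \otimes M^{\otimes S'} \ar[r]^-{m_{R',S'}} & M^{\otimes(R'+S')} } \]
must commute. The plan is to unfold both $F(t) \otimes F(u)$ and $F(t \otimes u)$ using Definition~\ref{def:coherence-sr}, exactly as in the proof of Proposition~\ref{prop:coherence-sr-is-functor}. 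Both reduce to the same composite of (i) a canonical isomorphism splitting $M^{\otimes(R+S)}$ into factors indexed by $\partial(t) \sqcup \partial(u)$ and by the words $(t \otimes u)(r')$ for $r' \in R' + S'$, (ii) the terminal map on the discarded factors tensored with $\widetilde{F}(t \otimes u)$, and (iii) a unitor. The fact that $(t \otimes u)(\inl(r')) = t(r')$ up to the evident injection and symmetrically for $u$ -- together with $\partial(t \otimes u) = \partial(t) \sqcup \partial(u)$ -- ensures that both paths around the square coincide after collapsing the two canonical isomorphisms via uniqueness.

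The main obstacle is purely bookkeeping: keeping track of which tensorial-expression representatives are used for the various instances of $M^{\otimes(-)}$ and checking that the canonical isomorphisms line up. No new conceptual ingredient beyond Mac Lane's coherence theorem, Lemma~\ref{lem:coherence-sr-dep}, and the general associativity law for the internal monoid (Theorem~\ref{thm:assoc-internal-monoid}) -- all of which are already deployed in the proof of Proposition~\ref{prop:coherence-sr-is-functor} -- is required.
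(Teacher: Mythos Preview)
Your proposal is correct and follows essentially the same approach as the paper: both take $m_0$ and $m_{R,S}$ to be the canonical isomorphisms supplied by Mac Lane's coherence theorem, and both defer the verification of naturality in $R,S$ to a routine unfolding of Definition~\ref{def:coherence-sr}. Your sketch is simply more explicit than the paper's, which compresses everything into three lines. One small remark: the general associativity law for the internal monoid (Theorem~\ref{thm:assoc-internal-monoid}) is not actually needed for the monoidal structure---since $t \otimes u$ acts on disjoint register sets without interaction, no recombination of iterated multiplications occurs, and coherence plus Lemma~\ref{lem:coherence-sr-dep} suffice; that law was needed only for functoriality in Proposition~\ref{prop:coherence-sr-is-functor}.
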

\begin{proof}
  From the definition, we immediately get
  \[ F(R + R') = M^{\otimes(R+R')} \cong M^{\otimes R} \otimes
    M^{\otimes R'} =  F(R) \otimes F(R') \]
  as an instance of a canonical isomorphism
  \[ \bigotimes_{x \in R + R'} Y_x  \quad\cong\quad \bigotimes_{r \in R} Y_{\inl(r)}
    \otimes \bigotimes_{r' \in R'} Y_{\inr(r')} \]
  which is natural in $Y_x$ for $x \in R + R'$. One can then verify that this
  family of isomorphisms $F(R + R') \cong F(R) \otimes F(R')$ is natural in $R$
  and $R'$.
\end{proof}

To finish proving Theorem~\ref{thm:functor-from-sr}, it suffices to carry out
some short explicit computations to check that this functor $F$ satisfies the
claimed equalities. We leave this to the reader.

\section{Equivalence with $\lam\ell^{\with}$-definable tree functions}
\label{sec:app-with}

One natural question is whether of Theorems~\ref{thm:main-string} and~\ref{thm:main-tree} for variations
of $\laml$. Indeed, we expect that for strings, the equivalence between $\laml$-definability and regular
functions still holds if we forbid $\oplus$ and $\with$ in the former (see our discussion in~\cite[Claim 6.2]{aperiodic}). We do not attempt to prove this here, as we expect this would require a notable development exploiting the
Krohn-Rhodes theorem.
However, calling $\lam\ell^\with$ the restriction of $\laml$ forbidding the use of sum types, we have
a straightforward proof of the following.

\begin{thm}
\label{thm:main-with}
Tree and string functions definable in $\lam\ell^\with$ are exactly the regular functions.
\end{thm}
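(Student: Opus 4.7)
The easy direction of Theorem~\ref{thm:main-with} is immediate: every $\lam\ell^\with$-term is a $\laml$-term, so $\lam\ell^\with$-definable functions are $\laml$-definable, hence regular by Theorems~\ref{thm:main-string} and~\ref{thm:main-tree}. The case of string functions then follows from the tree case via the encoding of Remark~\ref{rem:strings-as-trees-lambda}, so only the \enquote{regular $\Rightarrow$ $\lam\ell^\with$-definable} direction for tree functions needs attention.

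The plan is to carry out a continuation-passing-style (CPS) argument entirely at the level of streaming settings, as advertised at the end of the conclusion. I would first define $\mfLam^{\with}$, the analog of $\mfLam$ in which the underlying syntactic category is built from purely linear $\lam\ell^{\with}$-types and terms (mimicking Definitions~\ref{def:syntacticcat} and~\ref{def:syntacticstreaming}). An analog of Lemma~\ref{lem:lamlbrtt} relates single-state $\mfLam^\with$-BRTTs to $\lam\ell^\with$-definability; its proof goes through mutatis mutandis, since Lemma~\ref{lem:laml-niceshape} did not fundamentally rely on `$\oplus$' being present in the calculus. The category underlying $\mfLam^\with$ is symmetric monoidal closed with finite products, but lacks general finite coproducts.

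The substantive step is to construct a morphism of streaming settings $\REGISTERleoneconflict_\oplus \to \mfLam^\with$, which, combined with the chain established at the end of Section~\ref{subsec:main-trees}, yields $\lam\ell^\with$-definability of every regular tree function. The key idea is that the output of the computation lives at type $\basety$, so we can use $\basety$ itself as the \enquote{answer type} for a CPS translation: a finite coproduct $\bigoplus_{u \in U} A_u$ of $\lam\ell^\with$-types can be encoded as $\left(\bigwith_{u \in U} (A_u \lin \basety)\right) \lin \basety$, whose left/right injections and case-analysis are routinely definable in $\lam\ell^\with$. I would build the required morphism by first extending Lemma~\ref{lem:register-to-laml-tree} to a morphism $\REGISTER \to \mfLam^\with$ (unchanged, since $\oplus$ already did not appear in its target), and then handling the coproduct completion $(-)_\oplus$ by dispatching the finite-state component of a BRTT through this CPS encoding at the single distinguished type $\basety$.

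The main obstacle is the bookkeeping at step three: verifying that the CPS encoding of the coproduct completion actually assembles into a lax monoidal functor and that the initial/output data of the streaming setting behave correctly under it. In particular, although every intermediate coproduct can be locally CPSed into a product, one must check that the resulting interpretation commutes with the BRTT semantics defined in Section~\ref{sec:trees} and indeed picks out, from the \enquote{continuation space} $(\bigwith_{q \in Q}(C_q \lin \basety)) \lin \basety$, the correct branch corresponding to the actual run of the automaton on a given input tree. Once this is set up, the equivalence $\lam\ell^\with\text{-definable} = \text{regular}$ follows by composing with the morphisms $\mfLam \to \REGISTER_{\oplus\with} \to (\REGISTERleone_\oplus)_{\oplus\with} \to \REGISTERleoneconflict_\oplus$ already used to prove Theorem~\ref{thm:main-tree}.
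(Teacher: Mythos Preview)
Your approach is correct and essentially the same as the paper's: both use a double-negation/CPS encoding with answer type $\retty = \basety$ to simulate coproducts in $\mfLam^{\with}$. The paper isolates this as a generic lemma --- for any symmetric monoidal closed streaming setting $\mfC$ with finite products and $\initty = \unit$, the functor $\bigoplus_u C_u \mapsto \left(\bigwith_u (C_u \lin \retty)\right)\lin\retty$ underlies a morphism of streaming settings $\mfC_\oplus \to \mfC$ --- and instantiates it at $\mfC = \mfLam^{\with}$, which is cleaner than building $\REGISTERleoneconflict_\oplus \to \mfLam^{\with}$ by hand and avoids the small gap in your write-up, where the $\with$/conflict layer between $\REGISTER$ and $\REGISTERleoneconflict$ is not explicitly handled (your final sentence cites the chain in the direction $\mfLam \to \REGISTERleoneconflict_\oplus$, which does not supply that piece).
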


Of course, since $\laml$ is more expressive than $\lam\ell^\with$, one inclusion follows from our main theorems.
The converse can be done at the level of streaming settings, by first considering the analogue of $\Lamcat$ and
$\mfLam$ for $\lam\ell^\with$ and using a trick reminiscent of \emph{continuation passing style
transformation}~\cite{discoveryCPS} in order to simulate coproducts with a combination of products and higher-order functions.

We do not spell out the full details of this proof, but give the key lemmas, both in the case of strings and trees.

\begin{lem}
\label{lem:string-cps}
Let $\mfC$ be a symmetric monoidal closed string streaming setting with products such that $\initty$ is the monoidal unit.
There is a streaming setting morphism $\mfC_\oplus \to \mfC$. 
\end{lem}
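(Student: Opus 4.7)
The plan is to perform a continuation-passing-style translation, taking $R = \retty_\mfC$ as a universal ``answer type'' to simulate the coproducts of $\cC_\oplus$ using only the products and internal homsets available in~$\cC$. The underlying functor $F : \cC_\oplus \to \cC$ would be defined on objects by
\[ F\Bigl(\bigoplus_{u \in U} C_u\Bigr) \;=\; \Bigl(\bigwith_{u \in U} (C_u \lin R)\Bigr) \lin R, \]
which makes sense under the assumptions of the lemma. For a morphism $f = (v_u, g_u)_{u \in U} : \bigoplus_{u \in U} C_u \to \bigoplus_{v \in V} D_v$, one would first construct the auxiliary ``dual'' map $f^\sharp : \bigwith_{v \in V}(D_v \lin R) \to \bigwith_{u \in U}(C_u \lin R)$ whose $u$-th component is the projection to $D_{v_u} \lin R$ followed by precomposition with $g_u$; a direct computation then gives $\id^\sharp = \id$ and $(g \circ f)^\sharp = f^\sharp \circ g^\sharp$. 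One would then take $F(f)$ to be the image of $f^\sharp$ under the contravariant functor $(-) \lin R$, i.e.\ precomposition by $f^\sharp$; functoriality of $F$ follows from the previous equations.

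For the remaining data of the streaming setting morphism, since $\initty_\mfC = \unit$ by assumption, we have
\[ F(\initty_{\mfC_\oplus}) = F(\iota_\oplus(\unit)) \;\cong\; R \lin R \quad \text{and} \quad F(\retty_{\mfC_\oplus}) = F(\iota_\oplus(R)) = (R \lin R) \lin R, \]
using the natural iso $\unit \lin R \cong R$. I would take $i : \unit \to R \lin R$ to be the curried identity $\Lambda'(\id_R)$ of Proposition~\ref{prop:internal-endo-monoid}, and $o : (R \lin R) \lin R \to R$ to be the ``evaluate at identity'' map $\mathsf{appto}(\Lambda'(\id_R))$ from the proof of Theorem~\ref{thm:precomp-string}.

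To verify the morphism condition $\curlyinterp{o \circ F(f) \circ i}_\mfC = \curlyinterp{f}_{\mfC_\oplus}$ for $f \in \Hom{\cC_\oplus}{\iota_\oplus(\unit)}{\iota_\oplus(R)}$, I would argue as follows. Since $\iota_\oplus$ is fully faithful, $f$ factors uniquely as $\iota_\oplus(f')$ for some $f' : \unit \to R$ in $\cC$, and $\curlyinterp{f}_{\mfC_\oplus} = \curlyinterp{f'}_\mfC$ by definition. The key observation is that $T(A) := (A \lin R) \lin R$ extends to a monad on $\cC$ (the \emph{continuation monad}), with unit $\eta_A = \Lambda(\ev_{A,R} \circ \gamma_{A, A \lin R})$, and $R$ carries a canonical $T$-algebra structure whose structure map is precisely $o$; the algebra identity then yields $o \circ T(f') \circ \eta_A = f'$ for every such $f'$. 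It remains to check that $F \circ \iota_\oplus$ coincides with $T$ on both objects and morphisms, and that $i$ corresponds to $\eta_\unit$ under the iso $\unit \lin R \cong R$; both verifications are routine.

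The main obstacle will be purely administrative: threading together the canonical isomorphisms involved in $\unit \lin R \cong R$ and in the various structural maps, and checking that the $\sharp$ operation on $\cC_\oplus$-morphisms really does agree with the contravariant action of $(-) \lin R$ used to define $F(f)$. The conceptual core, namely that the CPS transform encodes finite coproducts using only products and function spaces, is entirely classical.
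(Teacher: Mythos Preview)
Your proposal is correct and follows essentially the same approach as the paper: the functor $F$ is the double-negation/CPS translation $\bigoplus_u C_u \mapsto (\bigwith_u (C_u \lin \retty)) \lin \retty$, and your $i$ and $o$ coincide with the paper's choices (the paper obtains $F$ more abstractly as the composite $N \circ (\text{univ.\ prop.}) \circ N_\oplus$ with $N = (-)\lin\retty$, which is exactly your $f \mapsto (f^\sharp \lin \retty)$ unfolded). The only presentational difference is that you phrase the verification via the continuation monad and its canonical algebra on $\retty$, whereas the paper does the equivalent direct diagram chase.
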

\begin{proof}
First note that the ``negation'' $N(C) = C \lin \retty$ is a contravariant functor $\cC \to \cC^\op$ and that the coproduct
completion $(-)_\oplus$ is itself functorial over $\Cat$.
We thus define the underlying functor $NN : \cC_\oplus \to \cC$ of our streaming setting morphism as a composite
\[
\xymatrix@C=20mm{
\cC_\oplus \ar[r]^{N_\oplus} & (\cC^\op)_\oplus \ar[r] & \cC^\op \ar[r]^N & \cC}
\]
where the middle arrow is obtained with by the universal property of the free completion (recall that products of $\cC$
become coproducts of $\cC^\op$). On objects we thus have
\[N \quad: \qquad \bigoplus_{u \in U} C_u \quad \longmapsto \quad \left(\bigwith_{u \in U} C_u \lin \retty\right) \lin \retty\]
The map $i : \unit \to \iota_\oplus((\unit \lin \retty) \lin \retty)$ is induced by the transposition of the evaluation map $\Lambda(\ev \circ \gamma)$ and $o : \iota_\oplus((\retty \lin \retty) \lin \retty) \to \retty$
is the evaluating of its argument on the constant map $\tilde{\id} : \unit \to \retty \lin \retty$.
To check that this triple is indeed a morphism of streaming settings,
it suffices to prove that the following diagram commutes in $\cC$ for any $f \in \Hom{\cC}{\unit}{\retty}$:
\[
\xymatrix@C=30mm{
\unit \ar[r]^f \ar[d]_{\Lambda(\ev \circ \gamma)} & \retty \\
(\unit \lin \retty) \lin \retty \ar[r]^{N(N(f))} & (\retty \lin \retty) \lin \retty \ar[u]_{\ev \circ (\id \tensor \tilde{\id}) \circ \rho^{-1}}
}
\]
Unravelling the defininition of $N(N(f))$, this follows from the elementary equational properties
of symmetric monoidal categories.
\end{proof}

\begin{lem}
Let $\mfC$ be a symmetric monoidal closed \emph{tree} streaming setting with products such that $\initty$ is the monoidal unit.
There is a streaming setting morphism $\mfC_\oplus \to \mfC$. 
\end{lem}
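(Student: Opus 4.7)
The plan is to mimic the continuation-passing-style construction from Lemma~\ref{lem:string-cps}, with the main additional work being to equip the resulting functor with a lax monoidal structure (which was not required for string streaming settings but is part of the definition of a morphism of tree streaming settings).

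First, I would build the underlying functor $NN \colon \cC_\oplus \to \cC$ exactly as before: letting $N(C) = C \lin \retty$ define a contravariant functor $\cC \to \cC^{\op}$, the universal property of the coproduct completion extends $N$ to a coproduct-preserving functor $N_\oplus \colon \cC_\oplus \to (\cC^{\op})_\oplus \to \cC^{\op}$ (the second arrow turns coproducts into products of $\cC$), and I set $NN = N \circ N_\oplus$. On objects this yields
\[ NN\!\left(\bigoplus_{u \in U} C_u\right) \;=\; \left(\bigwith_{u \in U}\, (C_u \lin \retty)\right) \lin \retty . \]
The output map $o \colon NN(\iota_\oplus(\retty)) = (\retty \lin \retty) \lin \retty \to \retty$ is defined, as in the string case, by evaluating against $\widetilde{\id} : \unit \to \retty \lin \retty$ obtained from $\id_\retty$ by currying.

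The new ingredient is the lax monoidal structure on $NN$. I would obtain it from the fact that, in any symmetric monoidal closed category $\cC$, the continuation endofunctor $T(A) = (A \lin \retty) \lin \retty$ carries a canonical \emph{commutative strong monad} structure, and such monads are automatically lax monoidal. Concretely, the unit of the continuation monad gives
\[ m_0 \;:\; \unit \;\xrightarrow{\;\Lambda(\ev \circ \gamma)\;}\; (\unit \lin \retty) \lin \retty \;=\; NN(\iota_\oplus(\unit)) \]
and the double strength of $T$ gives a natural transformation
\[ m_{A,B} \;:\; NN(A) \otimes NN(B) \;\longrightarrow\; NN(A \otimes B) \]
obtained by currying an evident map that feeds a continuation $k \colon A \otimes B \lin \retty$ into the two continuation arguments successively. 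Compatibility with the coproducts inside $A$ and $B$ is free from the fact that $NN$ is already defined via the universal property of $(-)_\oplus$ and the tensor of $\cC_\oplus$ is obtained by distributing $\otimes$ over $\oplus$.

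The only step requiring real care is the verification of the three coherence diagrams for lax monoidal functors (Definition~\ref{def:monfunctor}) together with the equation
\[ \curlyinterp{o \circ NN(f) \circ i}_\cC \;=\; \curlyinterp{f}_{\cC_\oplus} \qquad (f \in \Hom{\cC_\oplus}{\iota_\oplus(\unit)}{\iota_\oplus(\retty)}) \]
witnessing that $(NN, o)$ is a morphism of tree streaming settings, where $i \colon \unit \to NN(\iota_\oplus(\unit))$ is the one provided by $m_0$. The coherence diagrams reduce, after unfolding, to standard equations in any SMCC governing the interaction of the evaluation map with associators, unitors, and the symmetry $\gamma$; the compatibility with $\curlyinterp{-}$ is a direct computation of the form carried out at the end of the proof of Lemma~\ref{lem:string-cps}. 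The main obstacle is thus essentially bookkeeping: writing out the strength-based definition of $m_{A,B}$ in a form convenient enough to check the hexagon and triangle axioms without drowning in diagrams.
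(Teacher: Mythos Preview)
Your plan matches the paper's approach: the same functor $NN$, the same unit $m_0 = \Lambda(\ev \circ \gamma)$, and the same idea for $m_{A,B}$ obtained by feeding a continuation through the two arguments in succession. The paper's own proof is likewise only a sketch, so the level of detail is comparable.

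There is one factual slip you should fix: the continuation monad $T(A) = (A \lin \retty) \lin \retty$ is \emph{not} commutative in general. This is precisely what the paper alludes to when it says ``there are two distinct options\ldots{} which correspond to the two ways of proving $\neg\neg A \otimes \neg\neg B \vdash \neg\neg(A \otimes B)$ in linear logic; one can proceed for instance with the one which is biased towards the left.'' A strong monad on a symmetric monoidal category always yields a lax monoidal structure (via either the left- or right-biased double strength), but commutativity is what would make the two coincide and make the structure \emph{symmetric} lax monoidal. Since morphisms of tree streaming settings are only required to be lax monoidal---the paper explicitly remarks that compatibility with the symmetry is not demanded---your argument goes through once you drop the word ``commutative'' and simply pick one of the two strengths. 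So the error is terminological rather than structural, but it is worth correcting because the non-commutativity is exactly the point the paper chooses to emphasize.
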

\begin{proof}[Proof idea]
The underlying functor $NN$ is exactly the same. To complete the construction in the case of trees, one needs to append a suitable lax monoidal structure
to $NN$, that is, maps
\[
\xymatrix@C=15mm{
\unit \ar[r]^-{m^0} & NN(\unit) & \text{and} & NN(A) \tensor NN(B) \ar[r]^-{m^1_{A,B}} & NN(A \tensor B)
}
\]
with $m^1_{A,B}$ natural in $A$ and $B$ and such that the diagrams in~\cite[Section 5.1]{mellies09ps} commute;
we only sketch the definitions of those maps. For $m^0 : \unit \to ((\unit \lin \retty) \lin \retty)$,
we may take the usual $\Lambda(\ev \circ \gamma)$. As for $m^2_{-,-}$ for the objects $\bigoplus_{i \in I} A_i$ and
$\bigoplus_{j \in J} B_j$, it should be a map
\[\left(\left(\bigwith_{i} A_i \lin \retty\right) \lin \retty \right) \tensor
\left(\left(\bigwith_{j} B_j \lin \retty\right) \lin \retty \right) \longto 
\left(\left(\bigwith_{i,j} A_i \tensor B_j \lin \retty\right) \lin \retty \right)\]
There are two distinct options one might take, which intuitively correspond to the two
intuitive way of proving $\neg\neg A \tensor \neg\neg B \vdash \neg \neg (A \tensor B)$ in linear
logic; one can proceed for instance with the one which is biased towards the left.
\end{proof}

\end{document}